\setlist{noitemsep,topsep=0pt,parsep=0pt} 
\tikzset{every fit/.append style=text badly centered}
\newcommand{\Holant}{\operatorname{Holant}}
\newcommand{\PlHolant}{\operatorname{Pl-Holant}}
\newcommand{\holant}[2]{\ensuremath{\Holant\left(#1\mid #2\right)}}
\newcommand{\plholant}[2]{\ensuremath{\PlHolant\left(#1\mid #2\right)}}
\newcommand{\arity}{\operatorname{arity}}
\newcommand{\CSP}{\operatorname{\#CSP}}
\newcommand{\PlCSP}{\operatorname{Pl-\#CSP}}
\newcommand{\vd}{\operatorname{vd}}
\newcommand{\rd}{\operatorname{rd}}
\newcommand{\Sym}{\operatorname{Sym}}
\newcommand{\EQ}{\mathcal{EQ}}
\newcommand{\Stab}[1]{\operatorname{Stab}(\mathscr{#1})}
\newcommand{\tbcolvec}[2]{\ensuremath{\left[\begin{smallmatrix} #1 \\ #2 \end{smallmatrix}\right]}}
\newcommand{\eblock}[1]{\ensuremath{E_{#1}}-block}
\newcommand{\eoeqblock}{\ensuremath{EO}-\ensuremath{Eq}-\ensuremath{4}-block}
\newcommand{\exactone}[1]{\ensuremath{\text{\sc ExactOne}_{#1}}}
\newcommand{\allbutone}[1]{\ensuremath{\text{\sc AllButOne}_{#1}}}
\newcommand{\nondeg}[1]{\ensuremath{\mathcal{#1}_{nd}^{\ge 3}}}
\newcommand{\numP}{{\rm \#P}}
\newcommand{\trans}[4]{\ensuremath{\left[\begin{smallmatrix} #1 & #2 \\ #3 & #4 \end{smallmatrix}\right]}}
\renewcommand{\CC}{\ensuremath{\mathbb{C}}}
\theoremstyle{remark}
\newtheorem{remark}{Remark}
\def\borderColor{blue!60}
\def\scale{0.6}
\def\nodeDist{1.4cm}
\tikzstyle{internal} = [draw, fill, shape=circle]
\tikzstyle{external} = [shape=circle]
\tikzstyle{square}   = [draw, fill, rectangle]
\tikzstyle{triangle} = [draw, fill, regular polygon, regular polygon sides=3, inner sep=3pt]
\tikzstyle{pentagon} = [draw, fill, regular polygon, regular polygon sides=5, inner sep=2pt, minimum size=14pt]
\title{A Holant Dichotomy: Is the FKT Algorithm Universal?}
\author{
 Jin-Yi Cai%
 \thanks{University of Wisconsin--Madison. Supported by NSF CCF-1217549.}\\
 \footnotesize \texttt{jyc@cs.wisc.edu}
 \and
 Zhiguo Fu\footnotemark[1]\
 \thanks{School of Mathematics, Jilin University}\\
 \footnotesize \texttt{zfu8@wisc.edu}
 \and
 Heng Guo\footnotemark[1]\
 \thanks{Also supported by a Simons Award for Graduate Students in Theoretical Computer Science from the Simons Foundation.}\\
 \footnotesize \texttt{hguo@cs.wisc.edu}
 \and
 Tyson Williams\footnotemark[1]\
 \thanks{Also supported by a Cisco Systems Distinguished Graduate Fellowship.}\\
 \footnotesize \texttt{tdw@cs.wisc.edu}
}
\date{} 
\begin{document}
\maketitle

\begin{abstract}
 We prove a complexity dichotomy for complex-weighted Holant problems
 with an arbitrary set of symmetric constraint functions on Boolean variables.
 This dichotomy is specifically to answer the question:
 Is the FKT algorithm under a holographic transformation~\cite{Val08} a \emph{universal} strategy to obtain
 polynomial-time algorithms for problems over planar graphs that are intractable in general?
 This dichotomy is a culmination of previous ones,
 including those for Spin Systems~\cite{Kow10},
 Holant~\cite{HL12,CGW13},
 and \#CSP~\cite{GW13}.

 In the study of counting complexity,
 such as \#CSP,
 there are problems which are \#P-hard over general graphs but polynomial-time solvable over planar graphs.
 A recurring theme has been that a holographic reduction to FKT precisely captures these problems.
 Surprisingly,
 for planar Holant,
 we discover new planar tractable problems that are not expressible by a holographic reduction to FKT.
 In particular,
 a straightforward formulation of a dichotomy for planar Holant problems along the above recurring theme is false.

 In previous work,
 an important tool was a dichotomy for \#CSP$^d$,
 which denotes \#CSP where every variable appears a multiple of $d$ times.
 However the very first step in the \#CSP$^d$ dichotomy proof fundamentally violates planarity.
 In fact,
 due to our newly discovered tractable problems,
 the putative form of a planar \#CSP$^d$ dichotomy is false when $d \ge 5$.
 Nevertheless,
 we prove a dichotomy for planar \#CSP$^2$.
 In this case,
 the putative form of the dichotomy is true.
 We manage to prove the planar Holant dichotomy without relying on
 a planar \#CSP$^d$ dichotomy for $d \ge 3$,
 while the dichotomy for planar \#CSP$^2$ plays an essential role.

 As a special case of our new planar tractable problems,
 counting perfect matchings (\#PM) over $k$-uniform hypergraphs is polynomial-time computable
 when the incidence graph is planar and $k \ge 5$.
 The same problem is \#P-hard when $k=3$ or $k=4$,
 which is also a consequence of our dichotomy.
 When $k=2$,
 it becomes \#PM over planar graphs and is tractable again.
 More generally,
 over hypergraphs with specified hyperedge sizes and the same planarity assumption,
 \#PM is polynomial-time computable if the greatest common divisor (gcd) of all hyperedge sizes is at least~$5$.
 It is worth noting that it is the gcd,
 and not a bound on hyperedge sizes,
 that is the criterion for tractability.
\end{abstract}

\section{Introduction}

The Fisher-Kasteleyn-Temperley (FKT) algorithm~\cite{TF61, Kas61, Kas67}
is a classical gem that counts perfect matchings over planar graphs in polynomial time.
This was an important milestone in a decades-long research program by physicists in statistical mechanics to determine what is known as
Exactly Solved Models~\cite{Bax82, Isi25, Ons44, Yan52, YL52, LY52, TF61, Kas61, Kas67, Lie67, LS81, Wel93}.

For four decades,
the FKT algorithm stood as \emph{the} polynomial-time algorithm for any counting problem over planar graphs that is \#P-hard over general graphs.
Then Valiant introduced \emph{matchgates}~\cite{Val02a, Val02b} and \emph{holographic} reductions to the FKT algorithm~\cite{Val08, Val06}.
These reductions differ from classical ones by introducing quantum-like superpositions.
This novel technique extended the reach of the FKT algorithm
and produced polynomial-time algorithms for a number of problems for which only exponential-time algorithms were previously known.

Since the new polynomial-time algorithms appear so exotic and unexpected,
and since they solve problems that appear so close to being \#P-hard,
they challenge our faith in the well-accepted conjecture that P $\neq$ NP.
Quoting Valiant~\cite{Val06}:
``The objects enumerated are sets of polynomial systems
such that the solvability of any one member would give
a polynomial time algorithm for a specific problem.
$\ldots$
the situation with the P = NP question is not dissimilar to that of
other unresolved enumerative conjectures in mathematics.
The possibility that accidental or freak objects in the enumeration exist cannot be discounted
if the objects in the enumeration have not been studied systematically.''
Indeed,
if any ``freak'' object exists in this framework,
it would collapse \#P to P.

Therefore,
over the past 10 to 15 years,
this technique has been intensely studied in order to gain a systematic understanding
to the limit of the trio of holographic reductions, matchgates, and the FKT algorithm~\cite{Val02b, CC07b, CCL09, CL10, Val10, CL11a, LMN13, Mor11, MM13}.
Without settling the P versus \#P question,
the best hope is to achieve a complexity classification.
This program finds its sharpest expression in a complexity dichotomy theorem,
which classifies \emph{every} problem expressible in a framework as either solvable in P or \#P-hard,
with nothing in between.

Out of this work,
a strong theme has emerged.
For a wide variety of problems,
such as those expressible as a \#CSP,
holographic reductions to the FKT algorithm is a \emph{universal} technique
for turning problems that are \#P-hard in general to P-time solvable over planar graphs.
In fact,
a preponderance of evidence suggests the following putative classification
of all counting problems defined by local constraints into \emph{exactly} three categories:
(1) those that are P-time solvable over general graphs;
(2) those that are P-time solvable over planar graphs but \#P-hard over general graphs; and
(3) those that remain \#P-hard over planar graphs.
Moreover,
category (2) consists precisely of those problems that are holographically reducible to the FKT algorithm.
This theme is so strong that it has become an intuitive and trusty guide for us when we investigate unknown problems and plan proof strategies.
In fact,
many of the results in the present paper were proved in this way.
However,
one is still left wondering whether the FKT algorithm is \emph{universal},
or more precisely,
is the combined algorithmic power of the trio sufficient to capture all tractable problems over planar graphs that are intractable in general?

We list some of the supporting evidence for this putative classification.
These date back to the classification of the complexity of the Tutte polynomial~\cite{Ver91, Ver05}.
It has also been an unfailing theme in the classification of spin systems and \#CSP~\cite{Kow10, CLX10, CKW12, GW13}.
However,
these frameworks do not capture all locally specified counting problems.
Some natural problems,
such as counting perfect matchings (\#PM),
are not expressible as a point on the Tutte polynomial or a $\CSP$,
and \#PM is provably not expressible within the special case of vertex assignment
models~\cite{FLS07, DGLRS12, Sch13}.
However, this is the problem for which FKT was designed,
and is the basis of Valiant's matchgates and holographic reductions.

A refined framework,
called Holant problems~\cite{CLX11d},
was proposed to address this issue.
It is an edge assignment model.
It naturally encodes and expresses \#PM as well as Valiant's matchgates and holographic reductions.
Thus,
Holant is the proper framework in which to study the power of holographic algorithms.
It is also more general than $\CSP$ in the sense that
a complete complexity classification for Holant problems implies one for $\CSP$.

In this paper,
we classify for the first time the complexity of Holant problems over planar graphs.
Our result generalizes both the dichotomy for Holant~\cite{HL12, CGW13} and the dichotomy for planar $\CSP$~\cite{CLX10, GW13}.
Surprisingly,
we discover new planar tractable problems that are not expressible by a holographic reduction to matchgates and FKT.
To the best of our knowledge,
this is the first \emph{primitive} extension since FKT to a problem solvable in P over planar instances but \#P-hard in general.
Furthermore,
our dichotomy theorem says that this completes the picture:
there are no more undiscovered extensions for problems expressible in this framework,
unless \#P collapses to P.
In particular,
the putative form of the planar Holant dichotomy is \emph{false}.

Before stating our main theorem,
we give a brief description of the Holant framework~\cite{CLX11d}.
Fix a set of local constraint functions $\mathcal{F}$.
A \emph{signature grid} $\Omega = (G, \pi)$ is a tuple,
where $G = (V,E)$ is a graph,
$\pi$ labels each $v \in V$ with a function $f_v \in \mathcal{F}$
with input variables from the incident edges $E(v)$ at $v$.
Each $f_v$ maps $\{0,1\}^{\deg(v)}$ to $\mathbb{C}$.
We consider all 0-1 edge assignments.
An assignment $\sigma$ for every $e \in E$ gives an evaluation $\prod_{v \in V} f_v(\sigma \mid_{E(v)})$,
where $\sigma \mid_{E(v)}$ denotes the restriction of $\sigma$ to $E(v)$.
The counting problem on the instance $\Omega$ is to compute
\begin{equation} \label{eqn:holant-sum}
 \Holant(\Omega; \mathcal{F}) = \sum_{\sigma : E \to \{0,1\}} \prod_{v \in V} f_v\left(\sigma \mid_{E(v)}\right).
\end{equation}
For example,
\#PM,
the problem of counting perfect matchings in $G$,
corresponds to assigning the \textsc{ExactOne} function at every vertex of $G$.
The Holant problem parameterized by the set $\mathcal{F}$ is denoted by $\Holant(\mathcal{F})$.

At a high level,
we can state our main theorem as follows.
\begin{theorem} \label{thn:main:intro}
 Let $\mathcal{F}$ be a set of complex-valued, symmetric functions on Boolean variables.
 Then there is an effective classification for all possible $\mathcal{F}$,
 according to which,
 $\Holant(\mathcal{F})$ is either
 (1) P-time computable over general graphs, or
 (2) P-time computable over planar graphs but \numP-hard over general graphs, or
 (3) \numP-hard over planar graphs.
\end{theorem}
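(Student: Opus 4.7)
The plan is to classify each $\mathcal{F}$ by reducing the question to one of three situations: an algorithm already known over general graphs, an algorithm that genuinely uses planarity, or a planar \numP-hardness reduction. First I would invoke the general-graph Holant dichotomy of~\cite{HL12, CGW13} to place into category~(1) every $\mathcal{F}$ for which $\Holant(\mathcal{F})$ is already P-time in general. The remaining work concerns $\mathcal{F}$ that is \numP-hard in general, and the task becomes: decide for each such $\mathcal{F}$ whether planarity restores tractability (category~(2)) or not (category~(3)).

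On the tractable side, I would exhibit an explicit enlarged tractable list $\mathcal{T}_{\mathrm{pl}}$ with two kinds of entries. The first consists of all $\mathcal{F}$ for which some holographic transformation $T$ maps $\mathcal{F}$ into the matchgate signatures, so that planar $\Holant(\mathcal{F})$ reduces to the FKT algorithm. The second is the newly discovered family whose paradigmatic example is planar \#PM on $k$-uniform hypergraphs when the gcd of hyperedge sizes is at least $5$; here I would give a direct polynomial-time algorithm that does not factor through FKT, exploiting the high arity and the interplay between \textsc{ExactOne} and \textsc{Equality} of arity $k$. For each subfamily, polynomial-time computability on planar inputs must then be verified, typically by a gadget-level reduction to FKT for the first kind and by a specialised combinatorial algorithm for the second.

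The heart of the proof is the planar hardness direction: for every $\mathcal{F} \notin \mathcal{T}_{\mathrm{pl}}$ that is \numP-hard in general, show that $\Holant(\mathcal{F})$ remains \numP-hard on planar graphs. The natural tool---a planar $\CSP^d$ dichotomy---is provably unavailable for $d \ge 3$ precisely because of the new tractable class, as the abstract stresses. I would instead use the planar $\CSP^2$ dichotomy proved earlier in this paper, together with a structural case analysis on $\mathcal{F}$ (non-degenerate vs.\ degenerate, affine, product-type, parity-preserving, and so on, organised by arity parity and by the eigenstructure of the symmetric signatures), to construct planar gadgets---via polynomial interpolation, pinning (using $\pinning$), and further holographic transformations---that realise either a $\CSP^2$-hard signature set or a known planar-\numP-hard Holant fragment such as planar \#PM on $3$- or $4$-uniform hypergraph incidence graphs.

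The main obstacle is precisely this navigation around the newly tractable family: one must prove that for every $\mathcal{F}$ outside $\mathcal{T}_{\mathrm{pl}}$ \emph{some} planar gadget construction escapes the new tractable structure and routes the reduction through planar $\CSP^2$ (or FKT-hardness in general), while never inadvertently producing a gadget that unveils an unexpected tractable fragment. Sharpness of the boundary of $\mathcal{T}_{\mathrm{pl}}$---so that any deviation from a tractable $\mathcal{F}$ collapses planar tractability---is the central technical burden, and verifying this, signature by signature, around the boundary of the new tractable class is where I expect the bulk of the paper's case analysis to concentrate.
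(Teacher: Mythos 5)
Your outline matches the paper's strategy at every essential juncture: category (1) is settled by the known general-graph Holant dichotomy, category (2) is split into $\mathscr{M}$-transformable signatures (planar tractable via FKT) and the new $Z\mathscr{P}\cup\mathscr{M}_4^\sigma$ class with gcd $\ge 5$ (whose paradigm is indeed planar \#PM on hypergraphs), and the hardness side is driven by the planar $\CSP^2$ dichotomy fed through a case analysis using holographic transformations, interpolation, pinning, and gadget constructions that realize known planar-\numP-hard fragments. This is essentially the same approach the paper takes in proving Theorem~\ref{thm:main}.
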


The complete statement is given in Theorem~\ref{thm:main}.
The classification is explicit.
The tractability criterion is decidable in polynomial time due to~\cite{CL11a, CGW14a}.
Tractable problems over general graphs have been previously studied in~\cite{CGW13}.
The planar tractable class includes both those solvable by holographic reductions to FKT and those newly discovered.
Explicit criteria for these are also proved in this paper.

Let us meet some new tractable problems.
They can be described as orientation problems,
which are Holant problems after a complex-valued holographic transformation.%
\footnote{This transformation is $Z = \trans{1}{1}{i}{-i}$.
It is common that one problem can be transformed to another over $\mathbb{C}$
while one or both problems are specified by
real-valued constraint functions,
and provably no transformation exists over $\mathbb{R}$.
Thus to study the classification question over complex-valued constraint functions is natural and proper.
For example,
the integer-valued orientation problem studied here is complex weighted if expressed directly as Holant.}
Given a planar graph,
we allow two kinds of vertices.
The first kind can be either a sink or a source while the second kind only allow one incoming edge.
The goal is to compute the number of orientations satisfying these constraints.
This problem can be expressed in the Holant framework under a $Z$-transformation.
It can be shown that this is equivalent to the Holant problem on the edge-vertex incidence graph
where we assign the \textsc{Disequality} function to every edge,
and to each vertex,
we assign either the \textsc{Equality} function or the \textsc{ExactOne} function.
Suppose vertices assigned \textsc{Equality} functions all have degree $k$.
If $k=2$,
then this problem can be solved by FKT.
We show that this problem is \numP-hard if $k=3$ or $k=4$,
but is tractable again if $k \ge 5$.
The algorithm involves a recursive procedure that simplifies the instance until it can be solved by known algorithms,
including FKT.
The algorithm crucially uses global topological properties of a planar graph,
in particular Euler's characteristic formula.
If the graph is not planar,
then this algorithm does not work,
and indeed the problem is \numP-hard over general graphs.

More generally,
we allow vertices of arbitrary degrees to be assigned \textsc{Equality}.
If all the degrees are at most~$2$,
then the problem is tractable by the FKT algorithm.
Otherwise,
the complexity depends on the greatest common divisor (gcd) of the degrees.
The problem is tractable if $\gcd \ge 5$ and \numP-hard if $\gcd \le 4$.
It is worth noting that the criterion for tractability is not a degree lower bound.
Moreover,
the planarity assumption and the degree rigidity pose a formidable challenge in the hardness proofs for $\gcd \le 4$.

If the graph is bipartite with \textsc{Equality} functions assigned on one side
and \textsc{ExactOne} functions on the other,
then this is the problem of \#PM over hypergraphs with planar incidence graphs.
Our results imply that the complexity of this problem depends on the gcd of the hyperedge sizes.
The problem is computable in polynomial time when $\gcd \ge 5$ and is \numP-hard when $\gcd \le 4$
(assuming there are hyperedges of size at least~$3$).
For a formal statement,
see Theorem \ref{thm:hypergraph}.

Most of the reductions in previous Holant dichotomy theorems~\cite{HL12, CGW13} do not hold for planar graphs,
so we are forced to develop new techniques and formulate new proof strategies.
In particular,
an important ingredient in previous proofs is the $\CSP^d$ dichotomy by Huang and Lu~\cite{HL12}.
Here $\CSP^d$ denotes \#CSP where every variable appears a multiple of $d$ times.
The very first step in the \#CSP$^d$ dichotomy proof uses the popular pinning technique.
Multiple copies of an instance are created and vertices are connected across different copies.
But this construction fundamentally violates planarity.
Moreover,
this violation of planarity is unavoidable,
a consequence of the new dichotomy.
Due to our newly discovered tractable problems,
the putative form of a planar \#CSP$^d$ dichotomy is \emph{false} when $d \ge 5$.
Nevertheless,
we prove a dichotomy for planar \#CSP$^2$ for which the putative form is,
luckily for us,
true
(but not obvious in hindsight).
Obtaining a dichotomy for planar \#CSP$^2$ is essential because
it captures a significant fraction of planar Holant problems either directly or through reductions.
We manage to prove the planar Holant dichotomy without appealing to planar $\CSP^d$ for $d \ge 3$.

The proof of the planar \#CSP$^2$ dichotomy comprises the entire Part~II of this paper that starts on page~\pageref{partII:sec:prelim}.
A brief outline of the proof is given in Section~\ref{sec:PlCSP2} of Part~I.
Among the concepts and techniques introduced are some special tractable families of constraint functions specific to the \#CSP$^2$ framework.
We also introduce a \emph{derivative} $\partial$ and its inverse operator \emph{integral} $\int$ to streamline the proof argument.
There is also an application of the theory of \emph{cyclotomic fields}.

We began this project expecting to prove the putative form of the planar Holant dichotomy.
It was determined that a planar \#CSP$^d$ dichotomy in the putative form would be both a more modest,
and thus hopefully more attainable,
intermediate step as well as a good launch station for the final goal.
However after some attempt,
even the planar \#CSP$^d$ dichotomy appeared too difficult to achieve,
and so we scaled back the ambition to prove just a planar \#CSP$^2$ dichotomy.
Luckily,
a successful \#CSP$^2$ dichotomy can carry most of the weight of a full \#CSP$^d$ dichotomy,
\emph{and},
as it turned out,
the putative form of the planar \#CSP$^2$ dichotomy is \emph{true} while that for planar \#CSP$^d$ is not.
Ironically,
many steps of our proof in this paper were guided by the putative form of the complexity classification.
The discovery of the new tractable problems changed the original plan,
but also helped complete the picture.

Coming back to the challenge of the P vs.~NP question posed by Valiant's holographic algorithms,
we venture the opinion that the dichotomy theorem provides a satisfactory answer.
Indeed,
it would be difficult to conceive a world where \#P is P,
and yet all this algebraic theory can somehow maintain a consistent, sharp but faux division where there is none.

\section{Preliminaries} \label{sec:preliminaries}

\subsection{Problems and Definitions}

The framework of Holant problems is defined for functions mapping any $[q]^n \to R$ for a finite $q$ and some commutative semiring $R$.
In this paper,
we investigate complex-weighted Boolean $\Holant$ problems,
that is,
all functions are of the form $[2]^n \to \mathbb{C}$.
For consideration of models of computation,
functions take complex algebraic numbers.

Graphs may have self-loops and parallel edges.
A graph without self-loops or parallel edges is a \emph{simple} graph.
Fix a set of local constraint functions $\mathcal{F}$.
A \emph{signature grid} $\Omega = (G, \pi)$ consists of a graph $G = (V,E)$,
where $\pi$ assigns to each vertex $v \in V$ and its incident edges some $f_v \in \mathcal{F}$ and its input variables.
We say that $\Omega$ is a \emph{planar signature grid} if $G$ is planar,
where the variables of $f_v$ are ordered counterclockwise starting from an edge specified by $\pi$.
The Holant problem on instance $\Omega$ is to evaluate $\Holant(\Omega; \mathcal{F}) = \sum_{\sigma} \prod_{v \in V} f_v(\sigma \mid_{E(v)})$,
a sum over all edge assignments $\sigma: E \to \{0,1\}$,
where $E(v)$ denotes the incident edges of $v$ and $\sigma \mid_{E(v)}$ denotes the restriction of $\sigma$ to $E(v)$.
We write $G$ in place of $\Omega$ when $\pi$ is clear from context.

A function $f_v$ can be represented by listing its values in lexicographical order as in a truth table,
which is a vector in $\mathbb{C}^{2^{\deg(v)}}$,
or as a tensor in $(\mathbb{C}^{2})^{\otimes \deg(v)}$.
A function $f \in \mathcal{F}$ is also called a \emph{signature}.
A symmetric signature $f$ on $n$ Boolean variables can be expressed as $[f_0,f_1,\dotsc,f_n]$,
where $f_w$ is the value of $f$ on inputs of Hamming weight $w$.
In this paper,
we consider symmetric signatures.
An example is the \textsc{Equality} signature $=_n$ of arity $n$.

A Holant problem is parametrized by a set of signatures.

\begin{definition}
 Given a set of signatures $\mathcal{F}$,
 we define the counting problem $\Holant(\mathcal{F})$ as:

 Input: A \emph{signature grid} $\Omega = (G, \pi)$;

 Output: $\Holant(\Omega; \mathcal{F})$.
\end{definition}

\noindent
The problem $\PlHolant(\mathcal{F})$ is defined similarly using a planar signature grid.

A signature $f$ of arity $n$ is \emph{degenerate} if there exist unary signatures $u_j \in \mathbb{C}^2$ ($1 \le j \le n$)
such that $f = u_1 \otimes \cdots \otimes u_n$.
A symmetric degenerate signature has the form $u^{\otimes n}$.
Replacing such signatures by $n$ copies of the corresponding unary signature does not change the Holant value.
Replacing a signature $f \in \mathcal{F}$ by a constant multiple $c f$,
where $c \ne 0$,
does not change the complexity of $\Holant(\mathcal{F})$.
In this paper, we may say we obtain a signature $f$
when in fact we have obtained a signature $cf$ for some $c \ne 0$.
It introduces a global nonzero factor to $\Holant(\Omega; \mathcal{F})$.

We allow $\mathcal{F}$ to be an infinite set.
For $\PlHolant(\mathcal{F})$ to be tractable,
the problem must be computable in polynomial time even when the description of the signatures in the input $\Omega$ are included in the input size.
In contrast,
we say $\PlHolant(\mathcal{F})$ is $\SHARPP$-hard if there exists a finite subset of $\mathcal{F}$ for which the problem is $\SHARPP$-hard.
We say a signature set $\mathcal{F}$ is tractable (resp.~$\SHARPP$-hard)
if the corresponding counting problem $\PlHolant(\mathcal{F})$ is tractable (resp.~$\SHARPP$-hard).
Similarly for a signature $f$,
we say $f$ is tractable (resp.~$\SHARPP$-hard) if $\{f\}$ is.
We follow the usual conventions about polynomial time Turing reduction $\le_T$ and polynomial time Turing equivalence $\equiv_T$.

\subsection{Holographic Reduction}

To introduce the idea of holographic reductions,
it is convenient to consider bipartite graphs.
For a general graph,
we can always transform it into a bipartite graph while preserving the Holant value,
as follows.
For each edge in the graph,
we replace it by a path of length two.
(This operation is called the \emph{2-stretch} of the graph and yields the edge-vertex incidence graph.)
Each new vertex is assigned the binary \textsc{Equality} signature $(=_2) = [1,0,1]$.

We use $\holant{\mathcal{F}}{\mathcal{G}}$ to denote the Holant problem over signature grids with a bipartite graph $H = (U,V,E)$,
where each vertex in $U$ or $V$ is assigned a signature in $\mathcal{F}$ or $\mathcal{G}$,
respectively.
Signatures in $\mathcal{F}$ are considered as row vectors (or covariant tensors);
signatures in $\mathcal{G}$ are considered as column vectors (or contravariant tensors)~\cite{DP91}.
Similarly,
$\plholant{\mathcal{F}}{\mathcal{G}}$ denotes the Holant problem over signature grids with a planar bipartite graph.

For a $2$-by-$2$ matrix $T$ and a signature set $\mathcal{F}$,
define $T \mathcal{F} = \{g \mid \exists f \in \mathcal{F}$ of arity $n,~g = T^{\otimes n} f\}$,
and similarly for $\mathcal{F} T$.
Whenever we write $T^{\otimes n} f$ or $T \mathcal{F}$,
we view the signatures as column vectors;
similarly for $f T^{\otimes n} $ or $\mathcal{F} T$ as row vectors.
In the special case that $T = \left[\begin{smallmatrix} 1 & 1 \\ 1 & -1 \end{smallmatrix}\right]$,
we also define $T \mathcal{F} = \widehat{\mathcal{F}}$.

Let $T$ be an invertible $2$-by-$2$ matrix.
The holographic transformation defined by $T$ is the following operation:
given a signature grid $\Omega = (H, \pi)$ of $\holant{\mathcal{F}}{\mathcal{G}}$,
for the same bipartite graph $H$,
we get a new grid $\Omega' = (H, \pi')$ of $\holant{\mathcal{F} T}{T^{-1} \mathcal{G}}$ by replacing each signature in
$\mathcal{F}$ or $\mathcal{G}$ with the corresponding signature in $\mathcal{F} T$ or $T^{-1} \mathcal{G}$.

\begin{theorem}[Valiant's Holant Theorem~\cite{Val08}]
 If $T \in \mathbb{C}^{2 \times 2}$ is an invertible matrix,
 then we have $\Holant(\Omega; \mathcal{F} \mid \mathcal{G}) = \Holant(\Omega'; \mathcal{F} T \mid T^{-1} \mathcal{G})$.
\end{theorem}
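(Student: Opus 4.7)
The plan is to view the Holant sum as a global tensor contraction over the bipartite graph $H=(U,V,E)$, and then insert the identity $T T^{-1} = I$ on every edge before redistributing the factors to the two sides.

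Concretely, I would first rewrite
\[
 \Holant(\Omega;\mathcal{F}\mid\mathcal{G}) \;=\; \sum_{\sigma:E\to\{0,1\}} \prod_{u\in U} f_u\!\left(\sigma\mid_{E(u)}\right) \prod_{v\in V} g_v\!\left(\sigma\mid_{E(v)}\right)
\]
as an edge-wise contraction. Each edge $e=(u,v)$ contributes a factor $\sum_{a_e\in\{0,1\}}$ in which the $e$-th tensor index of $f_u$ and the $e$-th tensor index of $g_v$ are both set to $a_e$. In tensor language this is the contraction of the $e$-th covariant index of $f_u$ with the $e$-th contravariant index of $g_v$ through the Kronecker delta $\delta_{a_e,b_e}$.

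Next, for every edge $e$ I would insert the identity matrix in the form $\delta_{a_e,b_e}=\sum_{c_e}T_{a_e,c_e}(T^{-1})_{c_e,b_e}$, which is legitimate since $T$ is invertible. After this substitution, the summand at the vertex $u$ couples $f_u$ to a factor $T$ on each of its incident edges (on the $a$-side), and the summand at the vertex $v$ couples $g_v$ to a factor $T^{-1}$ on each of its incident edges (on the $b$-side). Grouping the $T$'s into $f_u$ and the $T^{-1}$'s into $g_v$ produces exactly $f_u T^{\otimes \deg(u)}$ and $(T^{-1})^{\otimes \deg(v)}g_v$, i.e.\ the signatures in $\mathcal{F}T$ and $T^{-1}\mathcal{G}$ attached to the same bipartite graph $H$ in $\Omega'$. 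Summing over the new variables $c_e$ (now playing the role of the edge assignment) yields $\Holant(\Omega';\mathcal{F}T\mid T^{-1}\mathcal{G})$.

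The only real care needed is bookkeeping: making sure that the row/column (covariant/contravariant) convention is consistent so that $T$ is absorbed on the $\mathcal{F}$-side as $f\mapsto fT^{\otimes n}$ (row action) and $T^{-1}$ is absorbed on the $\mathcal{G}$-side as $g\mapsto (T^{-1})^{\otimes n} g$ (column action), and that the bipartite structure $H$ is what guarantees each edge has exactly one endpoint in $U$ and one in $V$, so the $T$ and $T^{-1}$ pair up along every edge with no mismatched or double-counted factors. This is the only mildly subtle step; everything else reduces to linearity of the sum and the tensor identity $I=T T^{-1}$.
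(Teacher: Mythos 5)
The paper states Valiant's Holant Theorem without proof, citing \cite{Val08}; your argument is precisely the standard proof---write the Holant as a tensor contraction along the edges of the bipartite graph, insert $I = T T^{-1}$ on each edge, and absorb the $T$ factors into the covariant ($\mathcal{F}$) side and the $T^{-1}$ factors into the contravariant ($\mathcal{G}$) side, using bipartiteness to guarantee exactly one $T,T^{-1}$ pair per edge. This is correct and matches the canonical argument.
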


Therefore,
an invertible holographic transformation does not change the complexity of the Holant problem in the bipartite setting.
Furthermore,
there is a special kind of holographic transformation,
the orthogonal transformation,
that preserves the binary equality and thus can be used freely in the standard setting.

\begin{theorem}[Theorem~2.6 in~\cite{CLX11d}] \label{thm:orthogonal}
 If $T \in \mathbf{O}_2(\mathbb{C})$ is an orthogonal matrix (i.e.~$T \transpose{T} = I_2$),
 then $\Holant(\Omega; \mathcal{F}) = \Holant(\Omega'; T \mathcal{F})$.
\end{theorem}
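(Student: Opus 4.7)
My plan is to reduce the statement to Valiant's Holant Theorem (the bipartite version stated just above) by routing through the 2-stretch bipartization of $\Omega$ described in the preceding subsection. The only nontrivial ingredient is the fact that the binary equality signature $(=_2) = [1,0,1]$ is fixed by any orthogonal holographic transformation.

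First, I would bipartize $\Omega$ to form $\Omega_b$: subdivide each edge of the underlying graph with a new degree-$2$ vertex, keep the $\mathcal{F}$-signatures on the original vertices, and place $(=_2)$ on each new vertex. Since $[1,0,1]$ forces its two inputs to agree, summing out the new vertices recovers the original Holant, giving $\Holant(\Omega;\mathcal{F}) = \holant{(=_2)}{\mathcal{F}}$ over $\Omega_b$, with $\mathcal{F}$ placed on the column-vector side so that the subsequent transformation produces $T\mathcal{F}$ in the notation of the paper. Next, I would apply Valiant's Holant Theorem with the invertible matrix $T^{-1}$ to obtain $\holant{(=_2)}{\mathcal{F}} = \holant{(=_2)\cdot T^{-1}}{T\mathcal{F}}$.

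Now comes the key step: showing $(=_2)\cdot T^{-1} = (=_2)$ when $T$ is orthogonal. Viewing $(=_2)$ as the tensor $\sum_i e_i \otimes e_i$, equivalently the identity matrix $I_2$ reshaped as a length-$4$ vector, a direct computation shows that the row-vector action of $(T^{-1})^{\otimes 2}$ sends it to the tensor corresponding to the matrix $(T^{-1})^\top I_2\, T^{-1} = (T^{-1})^\top T^{-1}$. Since $T\transpose{T} = I_2$ gives $T^{-1} = \transpose{T}$ and hence $(T^{-1})^\top T^{-1} = T\transpose{T} = I_2$, invariance follows. Finally, undoing the 2-stretch on the transformed grid yields $\holant{(=_2)}{T\mathcal{F}} = \Holant(\Omega';T\mathcal{F})$, where $\Omega'$ is $\Omega$ with each $f_v$ replaced by $T^{\otimes \deg(v)} f_v$. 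Chaining the equalities gives the theorem.

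There is no real obstacle here: the entire argument hinges on the one-line identity $T\transpose{T} = I_2$, and everything else is bookkeeping with the bipartite conventions. The only thing to be careful about is aligning the side of the bipartization on which $\mathcal{F}$ sits with the convention that $T\mathcal{F}$ denotes the column-vector action $T^{\otimes n}f$, so that Valiant's Theorem delivers $T\mathcal{F}$ rather than $\mathcal{F} T$.
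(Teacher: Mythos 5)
Your proof is correct and complete. The paper states Theorem~\ref{thm:orthogonal} as a cited result from~\cite{CLX11d} and gives no proof of its own, so there is nothing internal to compare against; your argument — bipartize via 2-stretch, apply Valiant's Holant Theorem with $T^{-1}$, and observe that $[1,0,1](T^{-1})^{\otimes 2}$ corresponds to the matrix $(T^{-1})^{\top} T^{-1} = T\transpose{T} = I_2$, so $(=_2)$ is fixed and the 2-stretch can be undone — is exactly the standard derivation of this fact.
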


We frequently apply a holographic transformation defined by the matrix $Z = \frac{1}{\sqrt{2}} \trans{1}{1}{i}{-i}$
(or sometimes without the nonzero factor of $\frac{1}{\sqrt{2}}$ since this does not affect the complexity).
This matrix has the property that the binary \textsc{Equality} signature $(=_2) = [1,0,1]$
is transformed to $[1,0,1] Z^{\otimes 2} = [0,1,0] = (\neq_2)$,
the binary \textsc{Disequality} signature.

An important definition involving a holographic transformation is the notion of a signature set being transformable.

\begin{definition} \label{def:prelim:trans}
 We say a signature set $\mathcal{F}$ is $\mathscr{C}$-transformable
 if there exists a $T \in \mathbf{GL}_2(\mathbb{C})$ such that
 $[1,0,1] T^{\otimes 2} \in \mathscr{C}$ and $\mathcal{F} \subseteq T \mathscr{C}$.
\end{definition}

This definition is important because if $\PlHolant(\mathscr{C})$ is tractable,
then $\PlHolant(\mathcal{F})$ is tractable for any $\mathscr{C}$-transformable set $\mathcal{F}$.

\subsection{Counting Constraint Satisfaction Problems}

We can define the framework of counting constraint satisfaction problems (\#CSP) in terms of the Holant framework.
An instance of $\CSP(\mathcal{F})$ has the following bipartite view.
Create a vertex for each variable and each constraint.
Connect a variable vertex to a constraint vertex if the variable appears in the constraint.
This bipartite graph is also known as the \emph{constraint graph}.
Moreover, each variable can be viewed as an \textsc{Equality} function,
as it takes two values.
Under this view,
we see that $\CSP(\mathcal{F}) \equiv_T \holant{\mathcal{EQ}}{\mathcal{F}}$,
where $\mathcal{EQ} = \{{=}_1, {=}_2, {=}_3, \dotsc\}$ is the set of \textsc{Equality} signatures of all arities.
By restricting to planar constraint graphs,
we have the planar \#CSP framework,
which we denote by $\PlCSP$.
The construction above also shows that $\PlCSP(\mathcal{F}) \equiv_T \plholant{\mathcal{EQ}}{\mathcal{F}}$.

For any positive integer $d$,
the problem $\CSP^d(\mathcal{F})$ is the same as $\CSP(\mathcal{F})$ except that every variable appears a multiple of $d$ times.
Thus,
$\PlCSP^d(\mathcal{F}) \equiv_T \plholant{\mathcal{EQ}_d}{\mathcal{F}}$,
where $\mathcal{EQ}_d = \{{=}_d, {=}_{2 d}, {=}_{3 d}, \dotsc\}$ is the set of \textsc{Equality} signatures of arities that are a multiple of $d$.
If $d \in \{1,2\}$,
then we further have
\begin{equation} \label{eqn:prelim:PlCSPd_equiv_Holant}
 \PlCSP^d(\mathcal{F}) \equiv_T \plholant{\mathcal{EQ}_d}{\mathcal{F}} \equiv_T \PlHolant(\mathcal{EQ}_d \cup \mathcal{F}).
\end{equation}
The reduction from left to right in the second equivalence is trivial.
For the other direction,
we take a signature grid for the problem on the right and create a bipartite signature grid for the problem on the left
such that both signature grids have the same Holant value up to an easily computable factor.
If two signatures in $\mathcal{F}$ are assigned to adjacent vertices,
then we subdivide all edges between them and assign the binary \textsc{Equality} signature ${=}_2 \in \mathcal{EQ}_d$ to all new vertices.
Suppose \textsc{Equality} signatures ${=}_n, {=}_m \in \mathcal{EQ}_d$ are assigned to adjacent vertices connected by $k$ edges.
If $n = m = k$,
then we simply remove these two vertices.
The Holant of the resulting signature grid differs from the original by a factor of~$2$.
Otherwise,
we contract all $k$ edges and assign ${=}_{n+m-2k} \in \mathcal{EQ}_d$ to the new vertex.

\subsection{Realization}

One basic notion used throughout the paper is realization.
We say a signature $f$ is \emph{realizable} or \emph{constructible} from a signature set $\mathcal{F}$
if there is a gadget with some dangling edges such that each vertex is assigned a signature from $\mathcal{F}$,
and the resulting graph,
when viewed as a black-box signature with inputs on the dangling edges,
is exactly $f$.
If $f$ is realizable from a set $\mathcal{F}$,
then we can freely add $f$ into $\mathcal{F}$ while preserving the complexity.

\begin{figure}[t]
 \centering
 \begin{tikzpicture}[scale=\scale,transform shape,node distance=\nodeDist,semithick]
  \node[external]  (0)                     {};
  \node[internal]  (1) [below right of=0]  {};
  \node[external]  (2) [below left  of=1]  {};
  \node[internal]  (3) [above       of=1]  {};
  \node[internal]  (4) [right       of=3]  {};
  \node[internal]  (5) [below       of=4]  {};
  \node[internal]  (6) [below right of=5]  {};
  \node[internal]  (7) [right       of=6]  {};
  \node[internal]  (8) [below       of=6]  {};
  \node[internal]  (9) [below       of=8]  {};
  \node[internal] (10) [right       of=9]  {};
  \node[internal] (11) [above right of=6]  {};
  \node[internal] (12) [below left  of=8]  {};
  \node[internal] (13) [left        of=8]  {};
  \node[internal] (14) [below left  of=13] {};
  \node[external] (15) [left        of=14] {};
  \node[internal] (16) [below left  of=5]  {};
  \path let
         \p1 = (15),
         \p2 = (0)
        in
         node[external] (17) at (\x1, \y2) {};
  \path let
         \p1 = (15),
         \p2 = (2)
        in
         node[external] (18) at (\x1, \y2) {};
  \node[external] (19) [right of=7]  {};
  \node[external] (20) [right of=10] {};
  \path (1) edge                             (5)
            edge[bend left]                 (11)
            edge[bend right]                (13)
            edge node[near start] (e1) {}   (17)
            edge node[near start] (e2) {}   (18)
        (3) edge                             (4)
        (4) edge[out=-45,in=45]              (8)
        (5) edge[bend right, looseness=0.5] (13)
            edge[bend right, looseness=0.5]  (6)
        (6) edge[bend left]                  (8)
            edge[bend left]                  (7)
            edge[bend left]                 (14)
        (7) edge node[near start] (e3) {}   (19)
       (10) edge[bend right, looseness=0.5] (12)
            edge[bend left,  looseness=0.5] (11)
            edge node[near start] (e4) {}   (20)
       (12) edge[bend left]                 (16)
       (14) edge node[near start] (e5) {}   (15)
            edge[bend right]                (12)
       (16) edge[bend left,  looseness=0.5]  (9)
            edge[bend right, looseness=0.5]  (3);
  \begin{pgfonlayer}{background}
   \node[draw=\borderColor,thick,rounded corners,fit = (3) (4) (9) (e1) (e2) (e3) (e4) (e5)] {};
  \end{pgfonlayer}
 \end{tikzpicture}
 \caption{An $\mathcal{F}$-gate with 5 dangling edges.}
 \label{fig:Fgate}
\end{figure}
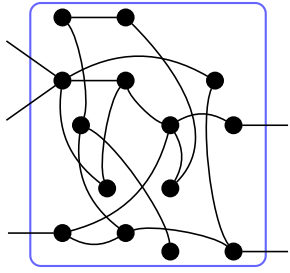

Formally,
such a notion is defined by an $\mathcal{F}$-gate~\cite{CLX10}.
An $\mathcal{F}$-gate is similar to a signature grid $(G, \pi)$ for $\Holant(\mathcal{F})$ except that $G = (V,E,D)$ is a graph with some dangling edges $D$.
The dangling edges define external variables for the $\mathcal{F}$-gate.
(See Figure~\ref{fig:Fgate} for an example.)
We denote the regular edges in $E$ by $1, 2, \dotsc, m$ and the dangling edges in $D$ by $m+1, \dotsc, m+n$.
Then we can define a function $\Gamma$ for this $\mathcal{F}$-gate as
\[
 \Gamma(y_1, \dotsc, y_n) = \sum_{x_1, \dotsc, x_m \in \{0, 1\}} H(x_1, \dotsc, x_m, y_1, \dotsc, y_n),
\]
where $(y_1, \dotsc, y_n) \in \{0, 1\}^n$ is an assignment on the dangling edges
and $H(x_1, \dotsc, x_m, y_1, \dotsc, y_n)$ is the value of the signature grid on an assignment of all edges in $G$,
which is the product of evaluations at all internal vertices.
We also call this function $\Gamma$ the signature of the $\mathcal{F}$-gate.

An $\mathcal{F}$-gate is planar if the underlying graph $G$ is a planar graph,
and the dangling edges,
ordered counterclockwise corresponding to the order of the input variables,
are in the outer face in a planar embedding.
A planar $\mathcal{F}$-gate can be used in a planar signature grid as if it is just a single vertex with the particular signature.

Using the idea of planar $\mathcal{F}$-gates,
we can reduce one planar Holant problem to another.
Suppose $g$ is the signature of some planar $\mathcal{F}$-gate.
Then $\PlHolant(\mathcal{F} \cup \{g\}) \leq_T \PlHolant(\mathcal{F})$.
The reduction is simple.
Given an instance of $\PlHolant(\mathcal{F} \cup \{g\})$,
by replacing every appearance of $g$ by the $\mathcal{F}$-gate,
we get an instance of $\PlHolant(\mathcal{F})$.
Since the signature of the $\mathcal{F}$-gate is $g$,
the Holant values for these two signature grids are identical.

Although our main result is about symmetric signatures,
some of our proofs utilize asymmetric signatures.
When a gadget has an asymmetric signature,
we place a diamond on the edge corresponding to the first input.
The remaining inputs are ordered counterclockwise around the vertex.
(See Figure~\ref{fig:two_gadgets_same_sig} for two examples.)

We note that even for a very simple signature set $\mathcal{F}$,
the signatures for all $\mathcal{F}$-gates can be quite complicated and expressive.

\subsection{Tractable Signature Sets}

We define the sets of signatures that were previously known to be tractable.
All quotations of results and definitions from~\cite{CGW13,GW13,CGW14a},
both in this section and throughout the paper,
refer to the full versions of these papers.

\paragraph{Affine Signatures}

\begin{definition}[Definition~3.1 in~\cite{CLX14}]
 A $k$-ary function $f(x_1, \dotsc, x_k)$ is \emph{affine} if it has the form
 \[
  \lambda \cdot \chi_{A x = 0} \cdot i^{\sum_{j=1}^n \langle \mathbf{v}_j, x \rangle},
 \]
 where $\lambda \in \mathbb{C}$,
 $x = \transpose{(x_1, x_2, \dotsc, x_k, 1)}$,
 $A$ is a matrix over $\mathbb{F}_2$,
 $\mathbf{v}_j$ is a vector over $\mathbb{F}_2$,
 and $\chi$ is a 0-1 indicator function such that $\chi_{Ax = 0}$ is~$1$ iff $A x = 0$.
 Note that the dot product $\langle \mathbf{v}_j, x \rangle$ is calculated over $\mathbb{F}_2$,
 while the summation $\sum_{j=1}^n$ on the exponent of $i = \sqrt{-1}$ is evaluated as a sum mod~$4$ of 0-1 terms.
 We use $\mathscr{A}$ to denote the set of all affine functions.
\end{definition}

Notice that there is no restriction on the number of rows in the matrix $A$.
It is permissible that $A$ is the zero matrix so that $\chi_{A x = 0} = 1$ holds for all $x$.
An equivalent way to express the exponent of $i$ is as a quadratic polynomial where all cross terms have an even coefficient
(cf.~\cite{CCLL10}).

It is known that the set of non-degenerate symmetric signatures in $\mathscr{A}$ is precisely the nonzero signatures
($\lambda \neq 0$) in $\mathscr{F}_1 \union \mathscr{F}_2 \union \mathscr{F}_3$ with arity at least~$2$,
where $\mathscr{F}_1$, $\mathscr{F}_2$, and $\mathscr{F}_3$ are three families of signatures defined as
\begin{align*}
 \mathscr{F}_1 &= \left\{\lambda \left([1,0]^{\otimes k} + i^r [0, 1]^{\otimes k}\right) \st \lambda \in \mathbb{C}, k = 1, 2, \dotsc, r = 0, 1, 2, 3\right\},\\
 \mathscr{F}_2 &= \left\{\lambda \left([1,1]^{\otimes k} + i^r [1,-1]^{\otimes k}\right) \st \lambda \in \mathbb{C}, k = 1, 2, \dotsc, r = 0, 1, 2, 3\right\}, \text{ and}\\
 \mathscr{F}_3 &= \left\{\lambda \left([1,i]^{\otimes k} + i^r [1,-i]^{\otimes k}\right) \st \lambda \in \mathbb{C}, k = 1, 2, \dotsc, r = 0, 1, 2, 3\right\}.
\end{align*}
We explicitly list these signatures up to an arbitrary constant multiple from $\mathbb{C}$:\\
\parbox{0.61\textwidth}{
 \begin{enumerate}
  \item $[1, 0, \dotsc, 0, \pm 1]$; \hfill $(\mathscr{F}_1, r=0,2)$
  \item $[1, 0, \dotsc, 0, \pm i]$; \hfill $(\mathscr{F}_1, r=1,3)$
  \item $[1,  0, 1,  0, \dotsc,   0  \text{ or } 1]$; \hfill $(\mathscr{F}_2, r=0)$
  \item $[1, -i, 1, -i, \dotsc, (-i) \text{ or } 1]$; \hfill $(\mathscr{F}_2, r=1)$
  \item $[0,  1, 0,  1, \dotsc,   0  \text{ or } 1]$; \hfill $(\mathscr{F}_2, r=2)$
  \item $[1,  i, 1,  i, \dotsc,   i  \text{ or } 1]$; \hfill $(\mathscr{F}_2, r=3)$
  \item $[1,  0, -1,  0, 1,  0, -1,  0, \dotsc, 0 \text{ or } 1 \text{ or } (-1)]$; \hfill $(\mathscr{F}_3, r=0)$
  \item $[1,  1, -1, -1, 1,  1, -1, -1, \dotsc,               1 \text{ or } (-1)]$; \hfill $(\mathscr{F}_3, r=1)$
  \item $[0,  1,  0, -1, 0,  1,  0, -1, \dotsc, 0 \text{ or } 1 \text{ or } (-1)]$; \hfill $(\mathscr{F}_3, r=2)$
  \item $[1, -1, -1,  1, 1, -1, -1,  1, \dotsc,               1 \text{ or } (-1)]$. \hfill $(\mathscr{F}_3, r=3)$
 \end{enumerate}}

\paragraph{Product-Type Signatures}

\begin{definition}[Definition~3.3 in~\cite{CLX14}]
 A function is of \emph{product type} if it can be expressed as a product of unary functions,
 binary equality functions $([1,0,1])$, and binary disequality functions $([0,1,0])$.
 We use $\mathscr{P}$ to denote the set of product-type functions.
\end{definition}

An alternate definition for $\mathscr{P}$,
implicit in~\cite{CLX11a},
is the tensor closure of signatures with support on two complementary bit vectors.
It can be shown
(cf.~Lemma~A.1 in the full version of~\cite{HL12})
that if $f$ is a symmetric signature in $\mathscr{P}$,
then $f$ is either degenerate,
binary \textsc{Disequality} $\neq_2$,
or $[a,0,\dotsc,0,b]$ for some $a, b \in \mathbb{C}$.

\paragraph{Matchgate Signatures}

Matchgates were introduced by Valiant~\cite{Val02a, Val02b} to give polynomial-time algorithms for a collection of counting problems over planar graphs.
As the name suggests,
problems expressible by matchgates can be reduced to computing a weighted sum of perfect matchings.
The latter problem is tractable over planar graphs by Kasteleyn's algorithm~\cite{Kas67},
a.k.a.~the FKT algorithm~\cite{TF61,Kas61}.
These counting problems are naturally expressed in the Holant framework using \emph{matchgate signatures}.
We use $\mathscr{M}$ to denote the set of all matchgate signatures;
thus $\PlHolant(\mathscr{M})$ is tractable.
Holographic transformations extend the reach of the FKT algorithm even further,
as stated below.

\begin{theorem} \label{thm:tractable:M}
 Let $\mathcal{F}$ be any set of symmetric, complex-valued signatures in Boolean variables.
 If $\mathcal{F}$ is $\mathscr{M}$-transformable,
 then $\PlHolant(\mathcal{F})$ is computable in polynomial time.
\end{theorem}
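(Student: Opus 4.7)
The plan is to reduce any instance of $\PlHolant(\mathcal{F})$ to an instance of $\PlHolant(\mathscr{M})$, which is tractable via the FKT algorithm. Two tools from the preliminaries suffice: the 2-stretch (to enter the bipartite setting) and Valiant's Holant Theorem (to rewrite the problem under $T$).

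First, given a planar signature grid $\Omega$ for $\PlHolant(\mathcal{F})$, I would apply the 2-stretch: subdivide each edge of the underlying graph into a path of length two, and assign the binary Equality signature $[1,0,1]$ to each new degree-$2$ vertex. Because each such vertex merely enforces that its two incident edges receive the same value, the resulting bipartite signature grid $\Omega'$ satisfies $\Holant(\Omega;\mathcal{F}) = \Holant(\Omega'; \{{=}_2\} \mid \mathcal{F})$; and edge subdivision preserves planarity, so $\Omega'$ is a valid instance of $\plholant{\{{=}_2\}}{\mathcal{F}}$.

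Next, since $\mathcal{F}$ is $\mathscr{M}$-transformable, Definition~\ref{def:prelim:trans} provides an invertible $T \in \mathbf{GL}_2(\mathbb{C})$ with $[1,0,1]\, T^{\otimes 2} \in \mathscr{M}$ and $\mathcal{F} \subseteq T \mathscr{M}$. Applying Valiant's Holant Theorem with this $T$ yields
\[
\Holant(\Omega'; \{{=}_2\} \mid \mathcal{F}) \;=\; \Holant(\Omega''; \{{=}_2\}\, T \mid T^{-1} \mathcal{F}),
\]
where $\Omega''$ has the same (planar bipartite) underlying graph as $\Omega'$ but with signatures replaced accordingly. By the two defining properties of $T$, we have $\{{=}_2\}\, T = \{[1,0,1] T^{\otimes 2}\} \subseteq \mathscr{M}$ and $T^{-1} \mathcal{F} \subseteq \mathscr{M}$; hence every vertex of $\Omega''$ is labeled by a matchgate signature. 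Viewing $\Omega''$ without the bipartite constraint, it is an instance of $\PlHolant(\mathscr{M})$, which is computable in polynomial time by FKT, and its value equals $\Holant(\Omega; \mathcal{F})$.

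I do not anticipate a real obstacle in this argument beyond correctly bookkeeping the holographic transformation; the theorem essentially packages the FKT algorithm together with Valiant's framework, with the 2-stretch serving as the bridge from the general-graph Holant setting to the bipartite setting in which holographic transformations are defined.
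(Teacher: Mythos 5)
Your proof is correct and is essentially the standard argument the paper relies on (the paper states Theorem~\ref{thm:tractable:M} without an explicit proof block, taking it as the well-known packaging of Valiant's holographic framework with FKT). Your bookkeeping of the 2-stretch, the direction of the holographic transformation ($\{{=}_2\}T$ on the left, $T^{-1}\mathcal{F}$ on the right), and the final observation that a bipartite grid labeled entirely by $\mathscr{M}$ signatures is an instance of $\PlHolant(\mathscr{M})$ are all accurate.
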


Matchgate signatures are characterized by the matchgate identities
(see~\cite{CG14} for the identities and a self-contained proof).
The parity of a matchgate signature is even (resp.~odd) if its support is on entries of even (resp.~odd) Hamming weight.
We explicitly list all the symmetric signatures in $\mathscr{M}$ (see~\cite{CG14}).

\begin{proposition} \label{prop:match:symmetric}
 Let $f$ be a symmetric signature in $\mathscr{M}$.
 Then there exists $a,b \in \mathbb{C}$ and $n \in \N$ such that $f$ takes one of the following forms:\\
 \parbox{0.68\textwidth}{
  \begin{enumerate}
   \item $[a^n, 0, a^{n-1} b, 0, \dotsc, 0, a b^{n-1}, 0, b^n]$       \hfill (of arity $2 n \phantom{{} + 1} \ge 2$);
   \item $[a^n, 0, a^{n-1} b, 0, \dotsc, 0, a b^{n-1}, 0, b^n, 0]$    \hfill (of arity $2 n             + 1  \ge 1$);
   \item $[0, a^n, 0, a^{n-1} b, 0, \dotsc, 0, a b^{n-1}, 0, b^n]$    \hfill (of arity $2 n             + 1  \ge 1$);
   \item $[0, a^n, 0, a^{n-1} b, 0, \dotsc, 0, a b^{n-1}, 0, b^n, 0]$ \hfill (of arity $2 n             + 2  \ge 2$).
  \end{enumerate}
 }\\
 In the last three cases with $n = 0$,
 the signatures are $[1,0]$, $[0,1]$, and $[0,1,0]$.
 Any multiple of these is also a matchgate signature.
\end{proposition}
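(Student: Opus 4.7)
The plan is to invoke the matchgate identities, which are both necessary and sufficient for membership in $\mathscr{M}$ (as cited above from \cite{CG14}). I would proceed in three stages: pin down parity, reduce the identities to a clean recurrence on the nonzero entries of a symmetric signature, and then solve that recurrence.

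\emph{Stage 1: parity.} Every matchgate signature has definite parity---its support lies entirely within inputs of even Hamming weight, or entirely within those of odd Hamming weight. This is a classical consequence of the construction (the parity of the number of matched external edges is constrained once the underlying planar graph is fixed). Applied to a symmetric signature $f = [f_0, f_1, \dots, f_m]$, this immediately forces the alternating zero pattern visible in all four forms of the proposition: the odd-indexed $f_k$ vanish in the even-parity case, and the even-indexed $f_k$ vanish in the odd-parity case. The four cases in the statement correspond exactly to the four combinations of arity parity (even or odd) with signature parity (even or odd).

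\emph{Stage 2: reducing the matchgate identities.} For each fixed parity, I would specialize the matchgate identities to symmetric signatures. The raw identities are relations on $f$-values at sets of the form $S \triangle \{i\}$ and $T \triangle \{i\}$ for various subsets $S, T$; for a symmetric signature only the Hamming weights matter, so after collecting like terms the identities collapse to scalar equations in $f_0, f_1, \dots, f_m$. A judicious choice of $S$ and $T$ (differing in four positions, chosen so that all symmetric-difference terms hit just three distinct Hamming weights of the prescribed parity) produces the quadratic relation $f_{k}^{\,2} = f_{k-2}\, f_{k+2}$ among the nonzero entries of that parity, i.e.\ a ``rank-one'' constraint on the vector of nonzero entries.

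\emph{Stage 3: solving and degenerate cases.} The relation $f_k^{\,2} = f_{k-2} f_{k+2}$, restricted to indices of a single parity, forces the ratios $f_{k+2}/f_k$ to be constant wherever defined. Hence the nonzero entries form a geometric progression, and one can write them as $a^{n-j} b^{j}$ for suitable $a, b \in \mathbb{C}$ and $j \in \{0, 1, \dots, n\}$ with $n$ determined by the arity and parity. Matching against the four parity cases of Stage~1 yields precisely the four families in the proposition. The degenerate situations $a = 0$ or $b = 0$ recover signatures supported on a single nonzero entry, including the low-arity base cases $[1,0]$, $[0,1]$, and $[0,1,0]$; closure under scalar multiples is automatic since matchgate signatures are closed under scaling. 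The main obstacle I anticipate is Stage~2: verifying that the full matchgate identities, which a priori involve subsets of arbitrary size, really do collapse on symmetric signatures to the clean quadratic recurrence $f_k^{\,2} = f_{k-2} f_{k+2}$. This amounts to exhibiting, for each triple of target indices $(k{-}2, k, k{+}2)$, a matchgate identity whose surviving terms on a symmetric signature are exactly these three $f$-values (up to signs), and then confirming that symmetry cancels every other contribution.
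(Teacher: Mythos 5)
Stage 3 has a genuine gap: the local recurrence $f_k^2 = f_{k-2}f_{k+2}$ does \emph{not} force the geometric-progression form. Writing $g_j = f_{2j}$ for the parity-restricted subsequence, the recurrence $g_j^2 = g_{j-1}g_{j+1}$ is satisfied vacuously by $(g_0, 0, \dotsc, 0, g_n)$ for any $g_0, g_n$ once $n \ge 3$ (each instance reads $0 = 0$). Concretely, $f = [1,0,0,0,0,0,1]$ of arity~$6$ passes $f_2^2 = f_0 f_4$ and $f_4^2 = f_2 f_6$, yet is not of the form $[a^3,0,a^2 b,0,a b^2,0,b^3]$ (one would need $a^3 = b^3 = 1$ and $a^2 b = 0$), and it is genuinely not a matchgate signature. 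The step ``ratios constant wherever defined, hence the nonzero entries form a geometric progression'' is the false inference: with interior zeros, ``wherever defined'' can be vacuous and the two endpoints decouple. So re-targeting Stage~2 at only the three-term relation, as you plan, is aiming too low.

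The remedy is available within your own outline. The matchgate identities on a symmetric signature collapse, by exactly the telescoping cancellation you describe, not merely to the three-term quadratic but to the full family $f_i f_j = f_k f_l$ whenever $i+j = k+l$ with all four indices of the common parity. For instance $\alpha = (1,0,0,0,0,0)$, $\beta = (0,1,1,1,1,1)$ gives, after grouping the six terms by the Hamming weights they produce and summing the alternating signs, $f_0 f_6 = f_2 f_4$, which your $[1,0,0,0,0,0,1]$ violates. That full family is precisely the condition that the Hankel matrix of $(g_0, \dotsc, g_n)$ has rank at most~$1$, and rank~$\le 1$ Hankel \emph{does} characterize $g_j = a^{n-j} b^j$ (with $a=0$ or $b=0$ covering the degenerate and low-arity cases). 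Stage~1 stands; re-aim Stage~2 at the Hankel rank condition and Stage~3 becomes the standard rank-one analysis. For context, the paper does not prove this proposition itself but cites the explicit listing from \cite{CG14}, so a completed version of your argument would be a genuine self-contained derivation.
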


Roughly speaking,
the symmetric matchgate signatures have~$0$ for every other entry
(which is called the \emph{parity condition}),
and form a geometric progression with the remaining entries.

Another useful way to view the symmetric signature in $\mathscr{M}$ is via a low tensor rank decomposition.
To state these low rank decompositions,
we use the following definition.

\begin{definition} \label{def:sym}
 Let $S_n$ be the symmetric group of degree $n$.
 Then for positive integers $t$ and $n$ with $t \le n$ and unary signatures $v, v_1, \dotsc, v_{n-t}$,
 we define
 \[\Sym_n^t(v; v_1, \dotsc, v_{n-t}) = \sum_{\pi \in S_n} \bigtensor_{k=1}^n u_{\pi(k)},\]
 where the ordered sequence
 $(u_1, u_2, \dotsc, u_n) = ( \underbrace{v, \dotsc, v}_{t \text{ copies}}, v_1, \dotsc, v_{n-t} )$.
\end{definition}

\begin{proposition} \label{prop:match:symmetric:decomp}
 Let $f$ be a symmetric signature in $\mathscr{M}$ of arity $n$.
 Then there exist $a,b,\lambda \in \mathbb{C}$ such that $f$ takes one of the following forms:
 \begin{enumerate}
  \item $[a, b]^{\otimes n} + [a, -b]^{\otimes n}
  = \begin{cases*}
     2 [a^n, 0, a^{n-2} b^2, 0, \dotsc, 0,   b^n]        & $n$ is even,\\
     2 [a^n, 0, a^{n-2} b^2, 0, \dotsc, 0, a b^{n-1}, 0] & $n$ is odd;
    \end{cases*}$
  \item $[a, b]^{\otimes n} - [a, -b]^{\otimes n}
  = \begin{cases*}
     2 [0, a^{n-1} b, 0, a^{n-3} b^3, 0, \dotsc, 0, a b^{n-1}, 0] & $n$ is even,\\
     2 [0, a^{n-1} b, 0, a^{n-3} b^3, 0, \dotsc, 0,   b^n]        & $n$ is odd;
    \end{cases*}$
  \item $\lambda \Sym_n^{n-1}([1, 0]; [0, 1])
  = [0,\lambda ,0,\dotsc,0]$;
  \item $\lambda \Sym_n^{n-1}([0, 1]; [1, 0])
  = [0,\dotsc,0,\lambda ,0]$.
 \end{enumerate}
\end{proposition}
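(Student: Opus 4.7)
The plan is to derive this proposition directly from the explicit forms already pinned down in Proposition~\ref{prop:match:symmetric}. As a preliminary step, I would verify the four ``equals'' identities on the right-hand sides. For items~1 and~2, the $k$-th entry of $[a,b]^{\otimes n}$ as a symmetric signature is $a^{n-k}b^k$, so $[a,b]^{\otimes n} \pm [a,-b]^{\otimes n}$ has $k$-th entry $a^{n-k}b^k\bigl(1 \pm (-1)^k\bigr)$, which is nonzero exactly when $k$ has the prescribed parity; the subcases $n$ even and $n$ odd then yield the two displayed tables. For items~3 and~4, I would unfold Definition~\ref{def:sym}: there are exactly $(n-1)!$ permutations in $S_n$ that place the distinguished factor in any fixed position, so $\Sym_n^{n-1}([1,0];[0,1])$ is supported on weight-one inputs with common value $(n-1)!$, and analogously for item~4 on weight-$(n-1)$ inputs. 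The scalar $\lambda$ absorbs the $(n-1)!$ factor.

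The substance of the proof is existence of parameters. Given a symmetric $f \in \mathscr{M}$ of arity $n$, Proposition~\ref{prop:match:symmetric} puts $f$ in one of four explicit forms in some parameters $(p,q)$. In the generic case where both $p \ne 0$ and $q \ne 0$, I would match $f$ against item~1 (for even-parity $f$) or item~2 (for odd-parity $f$) via the ansatz $\alpha^2 = p\xi$, $\beta^2 = q\xi$. Substituting into the target equations $2\alpha^{n-2k}\beta^{2k} = p^{m-k}q^k$ (or the analogous odd-parity equation) reduces the system to a single algebraic equation in $\xi$, solvable over $\mathbb{C}$ by an $n$-th root extraction. When the arity $n$ is odd in item~1, an extra factor of $\alpha$ appears on every entry, which forces a consistency condition of the form $2\alpha\xi^m = 1$ in addition to $\alpha^2 = p\xi$; this still admits a complex solution. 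The analogous bookkeeping handles item~2 with an extra factor of $\beta$ when appropriate.

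The main obstacle is the degenerate case where $p = 0$ or $q = 0$, which collapses the geometric progression to a single nonzero entry. If that entry lies at Hamming weight $0$ or $n$, then items~1 or~2 still succeed with $\beta = 0$ or $\alpha = 0$, respectively, yielding a degenerate signature. But if it lies at weight $1$ or weight $n-1$---which occurs in cases~3 and~4 of Proposition~\ref{prop:match:symmetric} when the corresponding parameter is set to zero---then items~1 and~2 cannot produce such an $f$, because items~1 and~2 force at least two nonzero entries unless the surviving entry sits at the extremal weight $0$ or $n$. In these situations I would invoke items~3 or~4 respectively, choosing $\lambda$ to match the single nonzero entry (with the factor $(n-1)!$ absorbed). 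A final sanity check on the small-arity cases $n = 1$ and $n = 2$ in the last three items of Proposition~\ref{prop:match:symmetric} (where the generic forms specialize to $[1,0]$, $[0,1]$, and $[0,1,0]$) confirms that each is captured by items~1 or~2 via elementary choices of the parameters. All the arithmetic is routine; the care is in the case analysis of which parameter vanishes.
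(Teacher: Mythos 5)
Your strategy---matching parameters in the generic case via the quadratic ansatz $\alpha^2 = p\xi$, $\beta^2 = q\xi$, and falling back on items~3 or~4 when the geometric progression degenerates to a single nonzero entry at Hamming weight $1$ or $n-1$---is sound and is the natural route from Proposition~\ref{prop:match:symmetric} to Proposition~\ref{prop:match:symmetric:decomp}. The one point to watch is your enumeration of which cases of Proposition~\ref{prop:match:symmetric} produce the problematic singletons: you cite only cases~3 and~4, but case~2 also degenerates to a weight-$(n-1)$ singleton when its first parameter vanishes. Concretely, for $f = [a^\ell, 0, a^{\ell-1}b, 0, \dotsc, 0, b^\ell, 0]$ of arity $n = 2\ell+1$, setting $a = 0$ leaves $[0, \dotsc, 0, b^\ell, 0]$, which cannot be written as $[\alpha,\beta]^{\otimes n} \pm [\alpha,-\beta]^{\otimes n}$ (with one of $\alpha,\beta$ zero, these are identically zero or supported only at weight $0$ or $n$, never $n-1$); item~4 must be invoked here as well. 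A related small slip: for a singleton supported exactly at weight $n$, whether item~1 or item~2 succeeds with $\alpha = 0$ depends on the parity of $n$, since $[0,\beta]^{\otimes n} + [0,-\beta]^{\otimes n}$ vanishes identically for odd $n$ while $[0,\beta]^{\otimes n} - [0,-\beta]^{\otimes n}$ vanishes for even $n$, so the assignment is not uniformly item~2. Neither issue derails the argument; just run the degenerate-case analysis against each of the four forms in Proposition~\ref{prop:match:symmetric} separately rather than trying to shortcut the enumeration.
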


The understanding of matchgates was further developed in~\cite{CL11a},
which characterized,
for every symmetric signature,
the set of holographic transformations under which the transformed signature becomes a matchgate signature.

\paragraph{Vanishing Signatures}

Vanishing signatures were first introduced in~\cite{GLV13} in the parity setting to denote signatures for which the Holant value is always~$0$ modulo~$2$.

\begin{definition}
 A set of signatures $\mathcal{F}$ is called \emph{vanishing} if the value $\Holant_\Omega(\mathcal{F})$ is~$0$ for every signature grid $\Omega$.
 A signature $f$ is called \emph{vanishing} if the singleton set $\{f\}$ is vanishing.
\end{definition}

A Holant problem defined only by vanishing signatures is trivially tractable by definition.
Question is how to determine which sets of signatures are vanishing?
We introduce the following definitions to answer this question.

\begin{definition}[Definition~4.4 in~\cite{CGW13}]
 A nonzero symmetric signature $f$ of arity $n$ has \emph{positive vanishing degree} $k \ge 1$, which is denoted by $\vd^+(f) = k$,
 if $k \le n$ is the largest positive integer such that there exists $n-k$ unary signatures $v_1, \dotsc, v_{n-k}$ satisfying
 \begin{align*}
  f = \Sym_n^{k}([1,i]; v_1, \dots, v_{n-k}).
 \end{align*}
 If $f$ cannot be expressed as such a symmetrization form, we define $\vd^+(f) = 0$.
 If $f$ is the all zero signature, define $\vd^+(f) = n + 1$.
 
 We define \emph{negative vanishing degree} $\vd^-$ similarly, using $-i$ instead of $i$.
\end{definition}

\begin{definition}[Definition~4.5 in~\cite{CGW13}]
 For $\sigma \in \{+, -\}$,
 we define $\mathscr{V}^\sigma = \{f \st 2 \vd^\sigma(f) > \arity(f)\}$.
\end{definition}
Furthermore,
we let $\mathscr{V} = \mathscr{V}^+ \cup \mathscr{V}^-$.
The fact that $\mathscr{V}$ is closed under orthogonal transformations follows directly from the next lemma.

\begin{lemma} \label{lem:van:orth}
 For a symmetric signature $f$ of arity $n$, $\sigma \in \{+,-\}$, and an orthogonal matrix $T \in \mathbb{C}^{2 \times 2}$,
 either $\vd^\sigma(f) = \vd^\sigma(T^{\otimes n} f)$ or $\vd^\sigma(f) = \vd^{-\sigma}(T^{\otimes n} f)$.
\end{lemma}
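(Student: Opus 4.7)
The plan is to exploit the two-dimensional structure of complex orthogonal matrices: every $T \in \mathbf{O}_2(\mathbb{C})$ either fixes or swaps the two lines spanned by $[1,i]^{\transpose{}}$ and $[1,-i]^{\transpose{}}$, corresponding exactly to $\det T = +1$ (rotations) and $\det T = -1$ (reflections). This should match the dichotomy in the conclusion: rotations preserve $\vd^\sigma$, while reflections send $\vd^\sigma$ to $\vd^{-\sigma}$.

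First I would verify the eigenvector claim by direct calculation. Writing $T = \trans{a}{b}{c}{d}$ with $a^2+c^2=b^2+d^2=1$ and $ab+cd=0$, the rotation case ($d=a$, $b=-c$) gives $T[1,i]^{\transpose{}} = (a-ci)[1,i]^{\transpose{}}$ and $T[1,-i]^{\transpose{}} = (a+ci)[1,-i]^{\transpose{}}$, while the reflection case ($d=-a$, $b=c$) gives $T[1,i]^{\transpose{}} = (a+ci)[1,-i]^{\transpose{}}$ and $T[1,-i]^{\transpose{}} = (a-ci)[1,i]^{\transpose{}}$. Since $T$ is invertible, the eigenvalues $a \pm ci$ are nonzero. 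Next I would observe the key equivariance of the symmetrization operator, namely
\[
 T^{\otimes n} \Sym_n^k(v;\, v_1, \dotsc, v_{n-k}) = \Sym_n^k(Tv;\, Tv_1, \dotsc, Tv_{n-k}),
\]
which is immediate because $T^{\otimes n}$ commutes with every permutation of tensor factors and distributes across sums and tensor products.

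Combining these, suppose $\vd^+(f) = k \ge 1$ so that $f = \Sym_n^k([1,i]; v_1, \dotsc, v_{n-k})$. In the rotation case, pulling the scalar $(a-ci)^k$ out yields
\[
 T^{\otimes n} f = (a-ci)^k\, \Sym_n^k([1,i];\, Tv_1, \dotsc, Tv_{n-k}),
\]
so $\vd^+(T^{\otimes n} f) \ge k$. In the reflection case the same computation produces $\Sym_n^k([1,-i]; \dotsc)$, so $\vd^-(T^{\otimes n} f) \ge k$. Because $T^{-1} = \transpose{T}$ is also orthogonal and of the same rotation/reflection type, applying the identical argument to $T^{-1}$ and $T^{\otimes n} f$ gives the reverse inequality, hence equality: $\vd^+(f) = \vd^+(T^{\otimes n} f)$ in the rotation case and $\vd^+(f) = \vd^-(T^{\otimes n} f)$ in the reflection case. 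The case $\sigma = -$ is analogous by interchanging the roles of $[1,i]^{\transpose{}}$ and $[1,-i]^{\transpose{}}$.

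Finally I would dispose of the boundary cases: if $f$ is the all-zero signature, then so is $T^{\otimes n} f$ and both degrees are $n+1$; if $\vd^+(f) = 0$, the contrapositive of the inequality above (applied to $T^{-1}$) shows that neither $\vd^+(T^{\otimes n}f) \ge 1$ (rotation case) nor $\vd^-(T^{\otimes n}f) \ge 1$ (reflection case) can hold, matching the claim with value $0$. The main technical delicacy I anticipate is just being careful with the definition at the two extremes ($0$ and $n+1$) so that the equality in the lemma holds literally and not merely up to a monotone inequality; once the rotation/reflection characterization and the $\Sym$-equivariance are in hand, everything else is bookkeeping.
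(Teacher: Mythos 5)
The paper states Lemma~\ref{lem:van:orth} without proof---it is imported from the earlier Holant dichotomy of~\cite{CGW13}---so there is no in-text argument to compare against. Your proof is correct: the eigenvector computation for the two components of $\mathbf{O}_2(\mathbb{C})$ ($T[1,\pm i]^{\transpose{}} = (a\mp ci)[1,\pm i]^{\transpose{}}$ for rotations, with the two lines swapped for reflections, and the eigenvalues $a\pm ci$ nonzero since $(a+ci)(a-ci)=a^2+c^2=1$) combined with the $\Sym$-equivariance $T^{\otimes n}\Sym_n^k(v;v_1,\dotsc) = \Sym_n^k(Tv;Tv_1,\dotsc)$ gives the one-sided inequality, and applying the same to $T^{-1}$ (which has the same rotation/reflection type) upgrades it to equality; the boundary values $0$ and $n+1$ are handled as you describe. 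One small point worth making explicit: when $\vd^\sigma(f)=n$ (so there are no $v_j$'s into which to absorb the scalar $(a\mp ci)^n$), the argument requires that $\vd^\sigma$ is invariant under nonzero scalar multiples of $f$; this is implicit in how Definition~4.4 of~\cite{CGW13} must be read (and is forced by Corollary~\ref{cor:van:degree}, since $\rd^\sigma$ manifestly depends on $f$ only up to scalar), but the definition as quoted literally asks for exact equality $f=\Sym_n^n([1,i])$, so a careful write-up should say a word about this convention.
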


The following characterization of vanishing signature sets holds.

\begin{theorem}[Theorem~4.13 in~\cite{CGW13}] \label{thm:van}
 Let $\mathcal{F}$ be a set of symmetric signatures.
 Then $\mathcal{F}$ is vanishing
 if and only if
 $\mathcal{F} \subseteq \mathscr{V}^{+}$ or $\mathcal{F} \subseteq \mathscr{V}^{-}$.
\end{theorem}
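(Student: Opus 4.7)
The plan is to prove the two directions separately. For sufficiency, the argument is a clean pigeonhole on the symmetrization decomposition. Assume without loss of generality that $\mathcal{F} \subseteq \mathscr{V}^+$; the case $\mathcal{F} \subseteq \mathscr{V}^-$ follows by an orthogonal conjugation using Lemma~\ref{lem:van:orth}. For each vertex $v$ of the grid $\Omega=(G,\pi)$, the signature $f_v \in \mathcal{F}$ has arity $n_v$ and $\vd^+(f_v) = k_v$ with $2 k_v > n_v$, so we may write $f_v = \Sym_{n_v}^{k_v}([1,i];\, v_{v,1}, \dotsc, v_{v,n_v - k_v})$. Substituting these decompositions into $\Holant(\Omega;\mathcal{F})$ and interchanging summations, the Holant rewrites as $\sum_{(\pi_v)} \prod_{e \in E} \langle \alpha_e, \beta_e\rangle$, where $\alpha_e, \beta_e$ are the unary signatures the permutations $\pi_v$ place at the two half-edges of $e$ and $\langle\alpha,\beta\rangle = \alpha(0)\beta(0) + \alpha(1)\beta(1)$. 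The critical identity $\langle[1,i],[1,i]\rangle = 1 + i^2 = 0$ means any edge with $[1,i]$ on both of its half-edges kills the whole term. In any summand, the total count of half-edge slots labelled $[1,i]$ is $\sum_v k_v > \sum_v n_v/2 = |E|$; since there are only $|E|$ edges, some edge must receive $[1,i]$ on both of its half-edges, forcing that term to vanish. Hence $\Holant(\Omega;\mathcal{F}) = 0$.

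For necessity I would prove the contrapositive and construct a signature grid with nonzero Holant whenever $\mathcal{F} \not\subseteq \mathscr{V}^+$ and $\mathcal{F} \not\subseteq \mathscr{V}^-$. The natural framework is the holographic transformation by $Z = \trans{1}{1}{i}{-i}$: after the 2-stretch, this sends $[1,i] \mapsto [1,0]$, $[1,-i] \mapsto [0,1]$, and the binary equality that appears on the stretched edges to a nonzero multiple of the binary disequality. A symmetric $f$ of arity $n$ thus becomes $\hat f = (Z^{-1})^{\otimes n} f$, and one checks that $f \in \mathscr{V}^+$ iff $\hat f$ is supported on Hamming weights strictly less than $n/2$, while $f \in \mathscr{V}^-$ iff $\hat f$ is supported on Hamming weights strictly greater than $n/2$. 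The hypothesis then supplies either \emph{Case A}, a single $f \in \mathcal{F}$ whose transform $\hat f$ has nonzero entries on both sides of the midpoint $n/2$, or \emph{Case B}, two signatures $f_+, f_- \in \mathcal{F}$ with $\hat f_\pm$ strictly on the low and high sides respectively. In Case A, I would apply self-loops to $f$ (each self-loop pairs two inputs through a disequality, trimming two from the arity and shifting support indices in step) to isolate a low-arity signature with a nonzero entry at exactly its middle weight, whose Holant on a single vertex closed up by further self-loops is that nonzero entry. In Case B, I would connect a vertex carrying $f_+$ to one carrying $f_-$ through suitably many parallel edges, aligning arities by auxiliary self-loops if needed, so that the low-weight input required by $\hat f_+$ pairs through disequalities with the high-weight input required by $\hat f_-$; freedom in the number of connecting edges and in which entries to target lets one pick a configuration with a nonzero leading contribution.

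The sufficiency direction reduces immediately to the pigeonhole count above, so the main obstacle is the necessity direction, especially Case B. There the real difficulty is not constructing a signature grid, but ruling out accidental cancellation across the several edge assignments that the disequality constraints leave open, and this is exactly where the $Z$-transform is indispensable: it converts the algebraic inequality $2\,\vd^\pm(f) > n$ into the transparent support condition on $\hat f$ and converts the edge constraints from equality to disequality, so the nonvanishing question becomes a matching-of-supports problem in which one can directly pick out a single monomial in $\hat f_+ \otimes \hat f_-$ with nonzero coefficient.
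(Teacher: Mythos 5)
Your sufficiency argument is correct and is essentially the standard proof: expand each $f_v$ via the symmetrization decomposition, factor the Holant over edges into inner products of unary signatures, and pigeonhole the $[1,i]$-labeled half-edge slots. Since $\sum_v k_v > \sum_v n_v/2 = |E|$ while there are only $|E|$ edges, some edge receives $[1,i]$ on both ends and $\langle[1,i],[1,i]\rangle = 0$ annihilates that term; as this happens in every term, the Holant vanishes. This is exactly the mechanism used in \cite{CGW13}.

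The necessity direction, however, has a concrete gap in Case~A. You claim that when a single $f\in\mathcal{F}$ lies outside both $\mathscr{V}^+$ and $\mathscr{V}^-$, iterated self-loops on $f$ will ``isolate a low-arity signature with a nonzero entry at exactly its middle weight.'' But a disequality self-loop on $\hat f = [\hat f_0,\dots,\hat f_n]$ produces $2[\hat f_1,\dots,\hat f_{n-1}]$: it trims one index from each end, so the midpoint of the signature and the midpoint entry travel together. If $\hat f_{n/2}=0$ to begin with, every iterate also has a zero middle entry, and closing up with self-loops always yields $0$. A concrete counterexample is $f = [1,i]^{\otimes 4} - [1,-i]^{\otimes 4}$, i.e.\ $\hat f = [1,0,0,0,-1]$: the conditions $f\notin\mathscr{V}^+$ and $f\notin\mathscr{V}^-$ hold (nonzero at indices $0$ and $4$), yet $\hat f_2=0$ and every self-loop construction gives $0$. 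Exhibiting a nonzero Holant here genuinely requires at least two vertices (e.g.\ two copies of $\hat f$ joined by four $\neq_2$ edges gives $\sum_k\binom{4}{k}\hat f_k\hat f_{4-k}=-2$), so Case~A must be reworked with a multi-vertex gadget and the potential cancellations you already flag for Case~B must be addressed there as well. You've correctly identified the $Z$-basis support characterization as the right lens, and your Case~B idea of pairing a low-support $\hat f_+$ against a high-support $\hat f_-$ is sound in spirit, but both branches of the necessity argument remain unproven sketches; Case~A as stated would fail.
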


To prove this theorem,
two more definitions were made,
which complement the previous two definitions because of Corollary~\ref{cor:van:degree}.

\begin{definition}[Definition~4.7 in~\cite{CGW13}]
 A symmetric signature $f = [f_0, f_1, \dotsc, f_n]$ of arity $n$ is in $\mathscr{R}^{+}_t$ for a nonnegative integer $t \ge 0$
 if $t > n$ or for any $0 \le k \le n - t$, $f_k, \dotsc, f_{k+t}$ satisfy the recurrence relation
 \begin{align}
  \binom{t}{t} i^t f_{k+t} + \binom{t}{t-1} i^{t-1} f_{k+t-1} + \dotsb + \binom{t}{0} i^0 f_k = 0. \label{eqn:recurrence}
 \end{align}
 We define $\mathscr{R}^{-}_t$ similarly but with $-i$ in place of $i$ in~(\ref{eqn:recurrence}).
\end{definition}

\begin{definition}[Definition~4.8 in~\cite{CGW13}] \label{def:Rt}
 For a nonzero symmetric signature $f$ of arity $n$,
 it is of \emph{positive} (resp.~\emph{negative}) \emph{recurrence degree} $t \le n$, denoted by $\rd^+(f) = t$ (resp.~$\rd^-(f) = t$),
 if and only if $f \in \mathscr{R}^+_{t+1} - \mathscr{R}^+_{t}$ (resp.~$f \in \mathscr{R}^-_{t+1} - \mathscr{R}^-_{t}$).
 If $f$ is the all zero signature, we define $\rd^+(f) = \rd^-(f)= -1$.
\end{definition}

\begin{corollary}[Corollary~4.16 in~\cite{CGW13}] \label{cor:van:degree}
 If $f$ is a symmetric signature and $\sigma \in \{+,-\}$,
 then $\vd^\sigma(f) + \rd^\sigma(f) = \arity(f)$.
\end{corollary}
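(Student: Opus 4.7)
The plan is to associate to every symmetric signature $f = [f_0, \dots, f_n]$ of arity $n$ the binary homogeneous polynomial $P_f(x,y) = \sum_{w=0}^n \binom{n}{w} f_w x^{n-w} y^w$ of degree $n$, and to show that both $\vd^+(f)$ and $\rd^+(f)$ encode the multiplicity of the linear factor $(x + iy)$ in $P_f$, in a complementary way that forces their sum to equal $n$.

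First I would establish that $\vd^+(f) \ge k$ iff $(x + iy)^k \mid P_f$. The forward direction is a direct computation: if $f = \Sym_n^k([1,i]; v_1, \dots, v_{n-k})$ with $v_j = [\alpha_j, \beta_j]^{T}$, then $P_f = (x + iy)^k \prod_{j=1}^{n-k} (\alpha_j x + \beta_j y)$. The converse uses the fundamental theorem of algebra on the homogeneous quotient: a degree-$(n-k)$ form over $\mathbb{C}$ factors into $n-k$ linear forms, and reading off each factor recovers a unary signature $v_j$. So for nonzero $f$, the quantity $\vd^+(f)$ equals the exact multiplicity of $(x + iy)$ as a linear factor in $P_f$.

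Next I would show that $f \in \mathscr{R}^+_t$ iff $(x + iy)^{n - t + 1} \mid P_f$ (for $t \le n + 1$). The linear recurrence $\sum_{j=0}^t \binom{t}{j} i^j f_{k+j} = 0$ has characteristic polynomial $(1 + i\lambda)^t$ with the single root $\lambda = i$ of multiplicity $t$; its solution space is thus $t$-dimensional, spanned by the sequences $w \mapsto w^j i^w$ for $0 \le j < t$. A brief induction using the operator identity $(y \partial_y)^j (x + iy)^n = \sum_w \binom{n}{w} w^j (iy)^w x^{n-w}$ shows the polynomial attached to $w^j i^w$ equals $(y \partial_y)^j (x + iy)^n$, which is divisible by $(x + iy)^{n-j}$ (by the product rule, each application of $y \partial_y$ reduces the power of $(x+iy)$ by at most one). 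Hence $\mathscr{R}^+_t$ maps into the space of forms divisible by $(x + iy)^{n - t + 1}$. Both spaces have dimension exactly $t$---the polynomial side because the quotient is a homogeneous form of degree $t - 1$ with $t$ coefficients---so the inclusion is an equality. Consequently $\rd^+(f) = t$ means the exact multiplicity of $(x + iy)$ in $P_f$ is $n - t$.

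Combining the two characterizations gives $\vd^+(f) + \rd^+(f) = n$ for nonzero $f$, and the conventions $\vd^+(0) = n + 1$ and $\rd^+(0) = -1$ handle the all-zero case. The $\sigma = -$ statement follows by the same argument with $i$ replaced by $-i$ and $(x + iy)$ replaced by $(x - iy)$ throughout. The main obstacle is the polynomial--recurrence identification in step two: one must carefully verify the forward inclusion (tracking binomial coefficients through the $(y\partial_y)^j$ manipulation) and handle the boundary case $t = n + 1$, where the divisibility condition becomes vacuous and the recurrence imposes no constraints. The dimension count sidesteps the more painful direct verification of the reverse inclusion (a coefficient-by-coefficient expansion of $(x + iy)^{n-t+1} Q(x,y)$ and checking the recurrence), which would otherwise be the main source of tedium.
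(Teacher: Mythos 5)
Your proof is correct and takes essentially the same route as the source paper [CGW13]: the polynomial $P_f(x,y) = \sum_w \binom{n}{w} f_w x^{n-w}y^w$ is the $Z$-transformed signature $\hat f = (Z^{-1})^{\otimes n} f$ in disguise, via $P_f(x,y) = P_{\hat f}(x+iy,\,x-iy)$, so divisibility of $P_f$ by $(x+iy)^k$ is exactly the vanishing of the last $k$ entries of $\hat f$ that Lemma~\ref{lem:prelim:vanishing_form_in_Z_basis} records. The dimension count you use for the $\mathscr{R}^+_t$ characterization is a clean way to package the corresponding step.
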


An observation was made in Section~4.3 of~\cite{CGW13} that we utilize.
We state it here as a lemma.

\begin{lemma} \label{lem:prelim:vanishing_form_in_Z_basis}
 Suppose $f$ is a symmetric signature of arity $n$.
 Let $\hat{f} = (Z^{-1})^{\otimes n} f$.
 If $\rd^+(f) = d$,
 then $\hat{f} = [\hat{f}_0, \hat{f}_1, \dotsc, \hat{f}_d, 0, \dotsc, 0]$ and $\hat{f}_d \ne 0$.
 Also $f \in \mathscr{R}_d^+$ iff 
 all nonzero entries of $\hat{f}$ are among the first $d$ entries in its symmetric signature notation.
 
 Similarly,
 if $\rd^-(f) = d$,
 then $\hat{f} = [0, \dotsc, 0, \hat{f}_{n-d}, \dotsc, \hat{f}_n]$ and $\hat{f}_{n-d} \ne 0$.
 Also $f\in\mathscr{R}_d^-$ iff
 all nonzero entries of $\hat{f}$ are among the last $d$ entries in its symmetric signature notation.
\end{lemma}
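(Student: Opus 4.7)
The idea is to use the fact (Corollary~\ref{cor:van:degree}) that $\rd^+(f)+\vd^+(f)=n$, together with the elementary computation
\[Z^{-1}\tbcolvec{1}{i}=\sqrt{2}\,\tbcolvec{1}{0},\qquad Z^{-1}\tbcolvec{1}{-i}=\sqrt{2}\,\tbcolvec{0}{1},\]
which is the whole reason the $Z$-basis is adapted to vanishing problems. If $\rd^+(f)=d$, then $\vd^+(f)=n-d$, so by definition of positive vanishing degree there exist unary signatures $v_1,\dots,v_d$ with
\[f=\Sym_n^{n-d}\!\bigl([1,i];\,v_1,\dots,v_d\bigr).\]
Applying $(Z^{-1})^{\otimes n}$ and pulling the transformation through each tensor factor of the symmetrization yields
\[\hat f \;=\; c\cdot\Sym_n^{n-d}\!\bigl([1,0];\,w_1,\dots,w_d\bigr)\]
for some nonzero $c\in\mathbb{C}$ and unaries $w_j=Z^{-1}v_j$. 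Each summand in this symmetrization contains $n-d$ tensor factors $[1,0]$, which forces the output to vanish on any input whose Hamming weight exceeds $d$. Hence $\hat f_s=0$ for $s>d$, which is the first half of the first claim.

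For the strict nonvanishing $\hat f_d\neq 0$, I would argue by contrapositive using a tensor-rank/polynomial factoring observation: a symmetric signature $g$ of arity $n$ has support contained in strings of weight $\le s$ if and only if its symbol polynomial $p_g(x,y)=\sum_w\binom{n}{w}g_w x^{n-w}y^w$ is divisible by $x^{n-s}$, and over $\mathbb{C}$ the remaining degree-$s$ factor always splits into linear forms, giving a decomposition $g=c\Sym_n^{n-s}([1,0];u_1,\dots,u_s)$. If $\hat f_d=0$, then $\hat f$ has support on weight $\le d-1$, so this decomposition applies with $s=d-1$. Transforming back by $Z^{\otimes n}$ and using $Z[1,0]^\mathsf{T}=\frac{1}{\sqrt2}[1,i]^\mathsf{T}$, we obtain $f=c'\Sym_n^{n-(d-1)}([1,i];\,\cdots)$, so $\vd^+(f)\ge n-d+1$ and $\rd^+(f)\le d-1$, contradicting $\rd^+(f)=d$.

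The ``iff'' characterization of $f\in\mathscr{R}_d^+$ is the same argument read off at the threshold $d-1$. By Definition~\ref{def:Rt}, $f\in\mathscr{R}_d^+$ means $\rd^+(f)\le d-1$, equivalently $\vd^+(f)\ge n-d+1$, equivalently $f=\Sym_n^{n-d+1}([1,i];\,\cdots)$, equivalently (by the same push-through-by-$Z^{-1}$ computation and the polynomial-factorization equivalence above) $\hat f$ has support contained in weight $\le d-1$, i.e.\ all nonzero entries of $\hat f$ lie among the first $d$ entries $\hat f_0,\dots,\hat f_{d-1}$. The negative case is entirely symmetric: replace $[1,i]$ by $[1,-i]$ throughout; then $Z^{-1}[1,-i]^\mathsf{T}$ is a scalar multiple of $[0,1]^\mathsf{T}$, so the same argument pins support to the top end of the weight spectrum.

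\textbf{Anticipated obstacle.} Everything hinges on the two-way correspondence ``support of the symmetric tensor in a given weight window $\Longleftrightarrow$ symmetrization form with a prescribed number of $[1,0]$ (or $[0,1]$) factors.'' The easy direction is immediate from the tensor structure, but the reverse direction needs the polynomial-symbol factorization over $\mathbb{C}$ and the compatibility of $\Sym_n^t$ with tensor transformations; making this step airtight without circularity (in particular, avoiding reusing Lemma~\ref{lem:prelim:vanishing_form_in_Z_basis} itself when invoking the definition of $\vd^+$) is the only delicate point, and it is already implicit in the development of $\mathscr{V}^\pm$ in~\cite{CGW13}.
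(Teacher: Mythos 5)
Your proof is correct. The strategy you take — read off $\vd^+(f)=n-d$ from Corollary~\ref{cor:van:degree}, expand $f$ in the symmetrization form $\Sym_n^{n-d}([1,i];\cdots)$, push $Z^{-1}$ through the tensor factors using $Z^{-1}\tbcolvec{1}{i}\propto\tbcolvec{1}{0}$, and then argue the converse via the splitting of homogeneous binary forms over $\mathbb{C}$ — is sound, and the ``anticipated obstacle'' you flag is not an actual gap: the bijection between symmetric tensors in $(\mathbb{C}^2)^{\otimes n}$ and degree-$n$ binary forms sends $\Sym_n^t(v;v_1,\dots,v_{n-t})$ to a nonzero multiple of $\ell_v^{\,t}\prod_j\ell_{v_j}$, so both directions of the support-window/$\Sym$-form correspondence are immediate from unique factorization. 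Two small points you should state explicitly to make this airtight: (i) the nesting $\mathscr{R}_t^+\subseteq\mathscr{R}_{t+1}^+$, which justifies the step ``$f\in\mathscr{R}_d^+\Leftrightarrow\rd^+(f)\le d-1$''; it follows from Pascal's identity, since the length-$(t+1)$ recurrence at position $k$ equals $g_k+ig_{k+1}$ where $g$ is the length-$t$ recurrence; and (ii) the boundary case $\vd^+(f)=0$ (i.e.\ $d=n$), where the $\Sym$ decomposition with $[1,i]$ does not exist and the forward support claim is vacuous, so only the contrapositive piece is needed there.

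Since the paper cites \cite{CGW13} for this lemma rather than reproducing the argument, your proof cannot be compared to the paper's verbatim, but note a difference in logical direction: the natural way to establish this observation in a self-contained development is to show \emph{directly} that the order-$t$ recurrence on $f$ is equivalent to $\hat f$ being supported on Hamming weight $<t$, and then deduce $\vd^\sigma+\rd^\sigma=n$ as a corollary. You run the implication the other way, treating the duality as given. That is legitimate within the present paper, which imports Corollary~\ref{cor:van:degree} as a black box; but it would be circular in a from-scratch reconstruction of the \cite{CGW13} machinery, which is exactly the concern you raise. A fully self-contained version would replace the appeal to Corollary~\ref{cor:van:degree} with a direct translation of the recurrence into a statement about $\hat f$.
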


The following lemma is a reduction involving binary signatures in the $Z$ basis.
It is used in Section~\ref{sec:mixing:vanishing} to help determine what binary signatures can mix with vanishing signatures.
The original statement is for general graphs,
but the proof clearly holds for planar graphs as well.

\begin{lemma}[Lemma A.1 in \cite{CGW13}] \label{lem:simple_interpolation:van:bin}
 Let $x \in \mathbb{C}$.
 If $x \ne 0$,
 then for any set $\mathcal{F}$ containing $[x,1,0]$,
 we have
 \[
  \plholant{{\neq}_2}{\mathcal{F} \union \{[v,1,0]\}}
  \le_T
  \plholant{{\neq}_2}{\mathcal{F}}
 \]
 for any $v \in \mathbb{C}$.
\end{lemma}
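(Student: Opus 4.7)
The goal is a polynomial interpolation reduction: simulate the signature $[v,1,0]$ on the right-hand side from the given $[x,1,0]$ with $x\neq 0$, using $\neq_2$ as the ``glue'' between copies.

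My plan is to build, for each positive integer $k$, a planar $\{\neq_2,[x,1,0]\}$-gate $G_k$ of arity~$2$ whose signature is $[kx,1,0]$, and then interpolate. Concretely, take a chain of $k$ copies of $[x,1,0]$ on the right side, linked by $k-1$ copies of $\neq_2$ on the left, with the two dangling edges at the two end-vertices of the chain. Writing $M=\bigl(\begin{smallmatrix}x&1\\1&0\end{smallmatrix}\bigr)$ and $N=\bigl(\begin{smallmatrix}0&1\\1&0\end{smallmatrix}\bigr)$, the gadget's signature is the matrix $M(NM)^{k-1}$. Since
\[
NM=\begin{pmatrix}1&0\\x&1\end{pmatrix},\qquad (NM)^{k-1}=\begin{pmatrix}1&0\\(k-1)x&1\end{pmatrix},
\]
we obtain $M(NM)^{k-1}=\bigl(\begin{smallmatrix}kx&1\\1&0\end{smallmatrix}\bigr)$, i.e.\ the symmetric signature $[kx,1,0]$. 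The chain is obviously planar and its two dangling edges are on the outer face, so $G_k$ is a legitimate planar gate whose external edges attach to $\neq_2$-vertices just as $[v,1,0]$ would.

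Now take any instance $\Omega$ of $\plholant{{\neq}_2}{\mathcal{F}\cup\{[v,1,0]\}}$, and let $n$ be the number of occurrences of the signature $[v,1,0]$ in $\Omega$. Because $[v,1,0]$ contributes the factor $v$ exactly when both of its incident edges are assigned~$0$, the Holant value viewed as a function of $v$ is a polynomial
\[
H(v)\;=\;\sum_{j=0}^{n} c_j v^{j},
\]
where $c_j$ depends only on the rest of $\Omega$ (and counts edge assignments in which exactly $j$ of the $[v,1,0]$-vertices receive the all-zero input). For $k=1,2,\dots,n+1$, let $\Omega_k$ be the planar signature grid obtained from $\Omega$ by replacing every occurrence of $[v,1,0]$ with the gate $G_k$. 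Then $\Omega_k$ is an instance of $\plholant{{\neq}_2}{\mathcal{F}}$ and its Holant equals $H(kx)$, which the oracle computes.

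Since $x\neq 0$, the numbers $\{kx : 1\le k\le n+1\}$ are $n+1$ distinct points, so the resulting Vandermonde system is invertible and we recover $c_0,\dots,c_n$ in polynomial time; then $H(v)=\sum_j c_j v^{j}$ is the desired value. The only substantive step is the gadget construction and the verification of its signature; once those are in hand, the interpolation is entirely standard. The statement is explicitly about $\plholant{}{}$, so I should note that planarity is preserved at every step because $G_k$ is a planar path that plugs into the two incident $\neq_2$-edges in the same cyclic order as the arity-$2$ vertex it replaces.
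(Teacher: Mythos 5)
Your proof is correct. The gadget computation checks out: with $M=\bigl(\begin{smallmatrix}x&1\\1&0\end{smallmatrix}\bigr)$ and $N=\bigl(\begin{smallmatrix}0&1\\1&0\end{smallmatrix}\bigr)$ one has $NM=\bigl(\begin{smallmatrix}1&0\\x&1\end{smallmatrix}\bigr)$, hence $M(NM)^{k-1}=\bigl(\begin{smallmatrix}kx&1\\1&0\end{smallmatrix}\bigr)$, giving the symmetric signature $[kx,1,0]$; the path gadget is planar and slots into the same two $\neq_2$-edges as the vertex it replaces; and since $[v,1,0]$ vanishes on the $(1,1)$ input, the Holant is a polynomial of degree at most $n$ in $v$, so evaluating at the $n+1$ distinct points $x,2x,\dots,(n+1)x$ (distinct because $x\neq 0$) yields a nonsingular Vandermonde system. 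This is the standard chain-gadget interpolation, which is exactly how Lemma~A.1 of~\cite{CGW13} is proved (the present paper only cites it without reproducing the argument).
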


\subsection{Some Known Dichotomies}

Here we list several known dichotomies.
The first is the dichotomy for Holant.

\begin{theorem}[Theorem~5.1 in~\cite{CGW13}] \label{thm:Holant:set}
 Let $\mathcal{F}$ be any set of symmetric, complex-valued signatures in Boolean variables.
 Then $\Holant(\mathcal{F})$ is $\SHARPP$-hard unless $\mathcal{F}$ satisfies one of the following conditions, in which case the problem is in $\P$:
 \begin{enumerate}
  \item All non-degenerate signatures in $\mathcal{F}$ are of arity at most~2;
  \item $\mathcal{F}$ is $\mathscr{A}$-transformable;
  \item $\mathcal{F}$ is $\mathscr{P}$-transformable;
  \item $\mathcal{F} \subseteq \mathscr{V}^\sigma \union \{f \in \mathscr{R}_2^\sigma \st \arity(f) = 2\}$ for $\sigma \in \{+,-\}$;
  \item All non-degenerate signatures in $\mathcal{F}$ are in $\mathscr{R}_2^\sigma$ for $\sigma \in \{+,-\}$.
 \end{enumerate}
\end{theorem}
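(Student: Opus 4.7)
The plan is to split the argument into tractability and \#P-hardness. For tractability, each of the five conditions corresponds to a previously established tractable class, possibly after a holographic transformation. In Case 1, a signature grid with all non-degenerate signatures of arity at most 2 reduces to products of $2 \times 2$ matrices along paths and cycles, which is in P. Cases 2 and 3 follow from Valiant's Holant Theorem together with the known tractability of $\Holant(\mathscr{A})$ and $\Holant(\mathscr{P})$: the transformation $T$ witnessing $\mathscr{C}$-transformability converts the signature grid into one over $\mathscr{C}$. Case 4 uses Theorem~\ref{thm:van}, which makes grids over pure vanishing signatures evaluate to $0$, together with Lemma~\ref{lem:prelim:vanishing_form_in_Z_basis} to show that binary signatures in $\mathscr{R}_2^\sigma$, when transformed by $Z$, compose harmlessly with vanishing signatures. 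Case 5 invokes the low tensor rank structure implied by $\mathscr{R}_2^\sigma$ to exhibit $\mathscr{M}$-transformability and then applies Theorem~\ref{thm:tractable:M}.

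For hardness, I would first settle the single-signature case. Given one non-degenerate symmetric $f$ of arity $n \ge 3$ failing all five conditions, consider $\hat{f} = (Z^{-1})^{\otimes n} f$. Failure of Cases 4 and 5 forces $\hat{f}$ to have nonzero entries near both ends of its symmetric notation, so both $\rd^+(f)$ and $\rd^-(f)$ are substantial, and failure of Cases 2 and 3 prevents $\hat{f}$ from matching any of the normal forms $\mathscr{F}_1$, $\mathscr{F}_2$, $\mathscr{F}_3$ or the product-type skeleton under any compatible transformation. Polynomial interpolation through tensor powers of $f$, combined with gadget constructions, then realizes a ternary signature to which a previously established single-signature dichotomy applies, or yields a direct reduction from \#SAT.

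For general $\mathcal{F}$, if some single $f \in \mathcal{F}$ already fails all five conditions, the single-signature case concludes the argument. Otherwise every $f \in \mathcal{F}$ individually satisfies some condition but $\mathcal{F}$ as a whole does not, which forces two signatures $f_1, f_2 \in \mathcal{F}$ whose witness classes disagree. One then builds an $\mathcal{F}$-gate combining $f_1$ and $f_2$---by contracting along shared edges, by taking tensor powers, or by composing through an equality signature---to realize a new signature $g$ that itself fails all five conditions, and applies the single-signature case to $g$.

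The main obstacle is the mixing step. Each tractable class is specified only up to a holographic transformation, and the witnessing transformations for different signatures need not coincide. One must rule out the possibility that two signatures certified by different transformations can still be compressed, via some gadget, into a common tractable class. The most delicate subcases involve signatures lying on the boundary of several classes---for example, binary signatures simultaneously in $\mathscr{A}$, $\mathscr{P}$, $\mathscr{M}$, and $\mathscr{R}_2^\sigma$ under appropriate transformations---whose behavior when mixed with others must be traced through the canonical forms $\mathscr{F}_1$, $\mathscr{F}_2$, $\mathscr{F}_3$ and the matchgate list of Proposition~\ref{prop:match:symmetric}. Handling them demands a careful, signature-by-signature analysis guided by the algebraic structure of each transformation class.
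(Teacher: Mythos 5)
This theorem is not proved in the present paper; it is quoted verbatim as Theorem~5.1 of~\cite{CGW13} under ``Some Known Dichotomies.'' So there is no internal proof to compare against, but the proposal can still be assessed on its own terms, and it contains a genuine error.

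The treatment of Case~5 cannot be right. You propose to ``exhibit $\mathscr{M}$-transformability and then apply Theorem~\ref{thm:tractable:M}'', but Theorem~\ref{thm:tractable:M} gives tractability only for $\PlHolant$ --- it is the holographic-reduction-to-FKT algorithm, which is inherently planar. The theorem you are trying to prove is about $\Holant$ on general graphs. In fact, the paper explicitly separates these regimes: the last sentence of Theorem~\ref{thm:main} states that if $\mathcal{F}$ satisfies condition 1--5 then $\Holant(\mathcal{F})$ is computable in polynomial time \emph{without planarity}, while $\mathscr{M}$-transformability (its Case~6) is tractable only on planar instances and remains $\SHARPP$-hard on general graphs. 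Were Case~5 genuinely $\mathscr{M}$-transformable, it would fall on the wrong side of that divide. The actual tractability of Case~5 in~\cite{CGW13} comes from the vanishing-style structure of $\mathscr{R}_2^\sigma$ after the $Z$-transformation (all non-degenerate signatures of arity $\ge 3$ in $\mathscr{R}_2^\sigma$ are already vanishing by Corollary~\ref{cor:van:degree}; what remains are binary $\mathscr{R}_2^\sigma$ signatures and arbitrary degenerate ones), not from a matchgate reduction. Your Case~4 description is closer to the truth but is also only a gesture --- Theorem~\ref{thm:van} handles the pure-vanishing subcase, but the real content of Case~4 is showing the binary $\mathscr{R}_2^\sigma$ signatures do not break tractability, which Lemma~\ref{lem:prelim:vanishing_form_in_Z_basis} alone does not deliver.

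On the hardness side, the proposal identifies the correct high-level architecture (single-signature dichotomy, then a mixing/no-mixing argument for sets whose members are individually tractable under incompatible transformations), and it correctly flags the mixing step as the crux. But the single-signature step is asserted rather than argued --- ``polynomial interpolation \ldots realizes a ternary signature to which a previously established single-signature dichotomy applies, or yields a direct reduction from \#SAT'' names two possible endpoints without constructing either the interpolation or the gadget, and the mixing step is left as a description of the difficulty rather than a resolution of it. That structure is sound as an outline of what~\cite{CGW13} does, but as written it is a plan, not a proof.
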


We also use several dichotomy theorems for planar Holant problems with additional restrictions.
The first of these is a dichotomy theorem for a single signature of small arity.
It is a combination of Theorem~V.1 in~\cite{CLX10} and Theorem~14 in~\cite{GW13} for arity~$3$ and~$4$, respectively.
This theorem forms the base case of an inductive proof of Theorem~\ref{thm:dic:single},
our single signature dichotomy.

\begin{theorem} \label{thm:PlHolant:arity34}
 If $f$ is a non-degenerate, symmetric, complex-valued signature of arity~$3$ or $4$ in Boolean variables,
 then $\PlHolant(f)$ is $\SHARPP$-hard unless
 $f$ satisfies one of the following conditions,
 in which case,
 the problem is computable in polynomial time:
 \begin{enumerate}
  \item $\Holant(f)$ is tractable (i.e. $f$ is $\mathscr{A}$-transformable, $\mathscr{P}$-transformable, or vanishing);
  \item $f$ is $\mathscr{M}$-transformable.
 \end{enumerate}
\end{theorem}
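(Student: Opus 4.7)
The plan is to dispose of the two arities separately by invoking the established planar Holant dichotomies cited immediately before the statement: Theorem~V.1 of \cite{CLX10} for arity~$3$ and Theorem~$14$ of \cite{GW13} for arity~$4$. The real work is to verify that the unified tractable class listed here---general-graph tractability via $\mathscr{A}$- or $\mathscr{P}$-transformability or membership in $\mathscr{V}$, together with $\mathscr{M}$-transformability---exhausts the tractable signatures in both of those dichotomies, and that every other non-degenerate ternary or quaternary symmetric signature is $\SHARPP$-hard over planar graphs by the hardness reductions in those papers.

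First I would pin down condition~(1). Specializing Theorem~\ref{thm:Holant:set} to a single non-degenerate signature $f$ of arity $n \in \{3,4\}$, its five general-graph tractability conditions collapse to exactly three: $f$ is $\mathscr{A}$-transformable, $\mathscr{P}$-transformable, or vanishing (i.e.\ $f \in \mathscr{V}^+ \cup \mathscr{V}^-$). The arity-$\le 2$ clause is vacuous by hypothesis; clauses~(4) and~(5) of Theorem~\ref{thm:Holant:set} either coincide with $f \in \mathscr{V}^\sigma$ or, for a non-vanishing $f \in \mathscr{R}_2^\sigma$ of arity $\ge 3$, force $f$ to be degenerate by Lemma~\ref{lem:prelim:vanishing_form_in_Z_basis} together with Corollary~\ref{cor:van:degree}. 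Whenever any of these three conditions holds, $\PlHolant(f) \le_T \Holant(f)$ is tractable, so condition~(1) is established.

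Next, for arity~$3$, I would read off the planar-only tractable family from Theorem~V.1 of \cite{CLX10}, which is stated there in explicit matchgate-identity form. The characterization of \cite{CL11a} shows that the symmetric ternary signatures in this family are precisely the $\mathscr{M}$-transformable ones, giving condition~(2). For arity~$4$, the same translation applies to Theorem~$14$ of \cite{GW13}, using Proposition~\ref{prop:match:symmetric} to identify the $\mathscr{M}$-transformable quaternary signatures. Every non-degenerate signature lying outside these two families is $\SHARPP$-hard over planar graphs by the planar gadget constructions and planar interpolation arguments already carried out in \cite{CLX10,GW13}.

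The main obstacle is purely organizational: the tractable classes in \cite{CLX10,GW13} were originally stated with more granular parameterizations than the uniform four-condition list used here, so the principal work is a careful case check that those parameterizations collapse to exactly $\mathscr{A}$-transformability, $\mathscr{P}$-transformability, vanishing, or $\mathscr{M}$-transformability, with nothing tractable lost or spuriously added. None of the underlying planar \#P-hardness reductions need to be revisited.
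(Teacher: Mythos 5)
Your proposal is correct and takes essentially the same route as the paper, which itself presents Theorem~\ref{thm:PlHolant:arity34} as a direct combination of Theorem~V.1 in~\cite{CLX10} (arity~$3$) and Theorem~$14$ in~\cite{GW13} (arity~$4$), with the translation between parameterizations left implicit. Your elaboration of why clauses~(4) and~(5) of Theorem~\ref{thm:Holant:set} collapse to the vanishing case for a single non-degenerate signature of arity~$3$ or~$4$ is sound and fills in a small gap the paper glosses over.
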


We also state a corollary of this result,
which shows that counting weighted matchings
in $4$-regular planar graphs is \numP-hard.
This is easier to apply than Theorem \ref{thm:PlHolant:arity34}.

\begin{corollary}[Lemma~5.5 in~\cite{GW13}] \label{cor:arity4:double_root}
 Let $v \in \mathbb{C}$.
 If $v \ne 0$,
 then $\PlHolant([v,1,0,0,0])$ is \numP-hard.
\end{corollary}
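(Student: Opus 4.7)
The plan is to invoke Theorem~\ref{thm:PlHolant:arity34} applied to $f = [v, 1, 0, 0, 0]$. First, $f$ is non-degenerate: if $f = [a,b]^{\otimes 4}$, then $f_4 = b^4 = 0$ forces $b = 0$, contradicting $f_1 = 4 a^3 b = 1$. It therefore suffices to verify that $f$ is neither $\mathscr{A}$-transformable, $\mathscr{P}$-transformable, vanishing, nor $\mathscr{M}$-transformable; Theorem~\ref{thm:PlHolant:arity34} then delivers $\numP$-hardness.

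Ruling out vanishing is direct. By Corollary~\ref{cor:van:degree}, since $\arity(f) = 4$, we have $f \in \mathscr{V}^{\pm}$ iff $\rd^{\pm}(f) \le 1$, so we need only check the order-$1$ and order-$2$ recurrences of Definition~\ref{def:Rt}. The order-$1$ recurrences $\pm i\, f_{k+1} + f_k = 0$ at $k=0$ force $v = \mp i$, but at $k=1$ evaluate to $\pm i \cdot 0 + 1 = 1 \ne 0$. The order-$2$ recurrences $-f_{k+2} \pm 2 i\, f_{k+1} + f_k = 0$ at $k=0$ force $v = \mp 2 i$, but at $k=1$ evaluate to $1$. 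Hence $\rd^{\pm}(f) \ge 2$, so $\vd^{\pm}(f) \le 2$, and $f \notin \mathscr{V}^+ \cup \mathscr{V}^-$.

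The substantive step---and the main obstacle---is ruling out the three transformable classes: showing that no $T \in \mathbf{GL}_2(\mathbb{C})$ simultaneously satisfies $[1,0,1]\, T^{\otimes 2} \in \mathscr{C}$ and $(T^{-1})^{\otimes 4} f \in \mathscr{C}$ for $\mathscr{C} \in \{\mathscr{M}, \mathscr{A}, \mathscr{P}\}$. Writing $T = \trans{a}{b}{c}{d}$, one expands $[1,0,1]\, T^{\otimes 2}$ as an explicit arity-$2$ symmetric signature in $a,b,c,d$, and the condition that it lies in the respective target class---via Proposition~\ref{prop:match:symmetric} for $\mathscr{M}$, and the explicit lists for $\mathscr{A}$ and $\mathscr{P}$---restricts $T$ to a small union of parameterized families: essentially $T$ orthogonal up to diagonal scaling, or $T$ conjugate to the $Z$-transformation, together with a few discrete twists (the degenerate case is excluded by $\det T \neq 0$). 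For each such family one computes $(T^{-1})^{\otimes 4} f$ as a rational expression in the remaining free parameters of $T$ and in $v$, and checks that it fails the defining structural condition of the target class---the parity condition for $\mathscr{M}$, the explicit list structure for $\mathscr{A}$, and the tensor factorization into unary, $(=_2)$, and $(\ne_2)$ components for $\mathscr{P}$. The rigid support pattern $[v, 1, 0, 0, 0]$---in particular the three trailing zeros together with the nonzero off-boundary entry $f_1 = 1$---forces easy coordinate violations of these conditions for every $v \ne 0$.

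The main difficulty is purely bookkeeping: the case analysis is finite but somewhat tedious, especially for $\mathscr{M}$-transformability, where the parameter space of candidate $T$ is largest. Organizing the candidate transformations via the stabilizer group characterizations developed in~\cite{CL11a, CGW14a} keeps the work manageable. Once every case is eliminated, Theorem~\ref{thm:PlHolant:arity34} yields that $\PlHolant([v,1,0,0,0])$ is $\numP$-hard for every $v \ne 0$.
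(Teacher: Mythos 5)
The outline — invoke Theorem~\ref{thm:PlHolant:arity34} and verify that $f=[v,1,0,0,0]$ falls outside every exceptional tractable class — is logically sound within the scope of this paper, and your treatment of non-degeneracy and non-vanishing is correct (the recurrence checks via Definition~\ref{def:Rt} and Corollary~\ref{cor:van:degree} are carried out and work). However, the substantive step, ruling out $\mathscr{A}$-, $\mathscr{P}$-, and $\mathscr{M}$-transformability, is not actually done. You describe a plan (parameterize the candidate $T$, reduce to families via stabilizer considerations, compute $(T^{-1})^{\otimes 4} f$, check violations) and then declare it ``purely bookkeeping,'' but none of the bookkeeping appears. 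That step is the entire content of the claim: the non-degeneracy and non-vanishing checks are easy warm-ups, while the transformability exclusion is exactly where effort is required. As written, the proposal is an incomplete proof with the hard core replaced by a promise. For a sense of what it actually costs to carry out such an exclusion, compare Lemma~\ref{lem:Zg:not-transformable} in this paper, which rules out $\mathscr{A}$-, $\mathscr{P}$-, $\mathscr{M}$-transformability for a related family of arity-$n$ signatures and occupies more than a page of casework via the characterization Lemmas~\ref{lem:cha:affine}, \ref{lem:cha:product}, and \ref{lem:cha:Mtrans}; that is the level of detail your step needs.

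One further note on the comparison itself: this paper does not prove the corollary, it only cites it as Lemma~5.5 of~\cite{GW13}. So there is no ``paper's own proof'' here. Inside~\cite{GW13} the lemma almost certainly does \emph{not} go through the arity-4 dichotomy (Theorem~14 there), because the dichotomy is proved with the aid of hardness lemmas like Lemma~5.5; it is more likely established directly via a gadget or interpolation argument. Your route is not circular from the standpoint of \emph{this} paper, since both results are imported as black boxes — but it is not the route the original source takes, and if you want a self-contained derivation you cannot leave the exclusion casework unstated. Also be aware that the quick shortcut via the redundancy/compressed-matrix criterion (Lemma~\ref{lem:arity4:nonsingular_compressed_hard}) does \emph{not} apply here: $\widetilde{M_f}$ for $f=[v,1,0,0,0]$ has a zero bottom row and is singular, so the transformability exclusion really must be done by hand.
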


Next is a dichotomy theorem about counting complex weighted graph homomorphisms over degree prescribed graphs.

\begin{theorem}[Theorem~3 in~\cite{CK12}] \label{thm:k-reg_homomorphism}
 Let $S \subseteq \mathbb{Z}^+$ containing some $r \ge 3$,
 let $\mathcal{G} = \{=_k \st k \in S\}$,
 and let $d = \gcd(S)$.
 Further suppose that $f_0, f_1, f_2 \in \mathbb{C}$.
 Then $\plholant{[f_0, f_1, f_2]}{\mathcal{G}}$ is $\SHARPP$-hard unless
 one of the following conditions holds:
 \begin{enumerate}
  \item $f_0 f_2 = f_1^2$;
  \item $f_0 = f_2 = 0$;
  \item $f_1 = 0$;
  \item $f_0 f_2 = -f_1^2$ and $f_0^d = -f_2^d\neq0$;
  \item $f_0^d = f_2^d\neq0$.
 \end{enumerate}
 In all exceptional cases,
 the problem is computable in polynomial time.
\end{theorem}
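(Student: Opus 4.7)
My plan is to prove the dichotomy by establishing tractability for each of the five conditions and $\numP$-hardness whenever all five fail.

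For tractability, conditions (1)--(3) are handled directly. Condition (1) makes $[f_0, f_1, f_2]$ a degenerate signature $u^{\otimes 2}$, so the Holant factors through unary weights placed at edge-endpoints. Condition (2) gives (up to a scalar) the disequality $[0,1,0]$, reducing the problem to a weighted count of proper $2$-colorings that decomposes by connected component. Condition (3) gives the diagonal signature $[f_0, 0, f_2]$, under which the Holant factors componentwise into easily computable sums of powers.

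For cases (4) and (5), I would apply a diagonal holographic transformation $T = \mathrm{diag}(\alpha, \beta)$. Each equality $(=_k)$ becomes $[\alpha^{-k}, 0, \dotsc, 0, \beta^{-k}]$ and $[f_0, f_1, f_2]$ becomes $[\alpha^2 f_0, \alpha \beta f_1, \beta^2 f_2]$. Setting $\gamma = \alpha/\beta$, every transformed equality side is a scalar multiple of $(=_k)$ exactly when $\gamma^d = 1$, because all $k \in S$ are multiples of $d$. The hypothesis $f_0^d = f_2^d \ne 0$ of case (5) makes $f_2/f_0$ a $d$-th root of unity, so I would pick $\gamma$ appropriately so that the transformed binary signature lies in an $\mathscr{M}$- or $\mathscr{A}$-transformable family (tractable by Theorem~\ref{thm:tractable:M} or the Holant dichotomy). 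Case (4) proceeds analogously, using $f_0 f_2 = -f_1^2$ together with $f_0^d = -f_2^d$ to reach a transformable tractable family.

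For hardness, my strategy is to reduce to the arity-$3$/$4$ planar single-signature dichotomy (Theorem~\ref{thm:PlHolant:arity34}). Fix $r \in S$ with $r \ge 3$ and form a planar $\mathcal{F}$-gate by attaching $[f_0, f_1, f_2]$ to each dangling edge of $(=_r)$; this yields a symmetric arity-$r$ signature whose entries are polynomials in $f_0, f_1, f_2$. By composing such gates with additional binaries and equalities in a planar fashion, I would produce symmetric signatures of arity $3$ or $4$. The key verification is that whenever $(f_0, f_1, f_2, d)$ satisfies none of (1)--(5), some such constructed signature is non-degenerate and fails every tractability criterion of Theorem~\ref{thm:PlHolant:arity34}, yielding $\numP$-hardness; Corollary~\ref{cor:arity4:double_root} can handle some corner subcases directly.

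I expect the main obstacle to be the case analysis matching the reduced arity-$3$ or $4$ signatures back to the precise list (1)--(5). The divisibility by $d = \gcd(S)$ makes the boundary delicate: conditions (4) and (5) describe exactly the pairs $(f_0, f_2)$ for which a $d$-th-root diagonal transformation cancels the asymmetry between the two sides. Planarity precludes the standard non-planar interpolation gluings, so every reduction step must be realized by a planar $\mathcal{F}$-gate, and verifying that enough planar gadgets exist to hit Theorem~\ref{thm:PlHolant:arity34}'s hard regime in each of the remaining cases is the technical heart of the proof.
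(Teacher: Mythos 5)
The paper does not actually prove Theorem~\ref{thm:k-reg_homomorphism}; it is quoted verbatim as Theorem~3 of~\cite{CK12} and used as a black box. So there is no ``paper's own proof'' here to compare against, and your proposal should be assessed as a self-contained attempt.

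On that basis, the hardness direction has a genuine gap. The target problem $\plholant{[f_0,f_1,f_2]}{\mathcal{G}}$ is a bipartite Holant: binary signatures on one side, equalities on the other. Theorem~\ref{thm:PlHolant:arity34} concerns $\PlHolant(f)$ for a single signature placed at \emph{every} vertex with no bipartite restriction. The gadget you describe---attach a copy of $[f_0,f_1,f_2]$ to each leg of $(=_r)$---does produce an arity-$r$ symmetric signature $g$ with $g_w = f_0^{r-w} f_1^w + f_1^{r-w} f_2^w$, but an edge between two copies of $g$ in $\PlHolant(g)$ corresponds to \emph{two} copies of $[f_0,f_1,f_2]$ in series, not one. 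Worse, the reverse inclusion is what you need: you must show $\PlHolant(g) \le_T \plholant{[f_0,f_1,f_2]}{\mathcal{G}}$, and it is not clear the planar bipartite instances of the latter can simulate arbitrary $g$-grids. The equivalence $\PlCSP^d(\mathcal{F}) \equiv_T \PlHolant(\mathcal{EQ}_d \cup \mathcal{F})$ given in~\eqref{eqn:prelim:PlCSPd_equiv_Holant} is stated only for $d \in \{1,2\}$, precisely because $(=_2)$ is needed for edge subdivision; for $d = \gcd(S) \ge 3$ no such clean reduction is available, and this is where condition (5)'s dependence on $d$ lives. You acknowledge the case analysis is ``the technical heart,'' but this structural obstruction is prior to any case analysis.

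The tractability direction for (4) and (5) is also underdeveloped. You want a diagonal $T = \mathrm{diag}(\alpha,\beta)$ with $\gamma = \alpha/\beta$ satisfying $\gamma^d = 1$ (so every $(=_k)$, $k \in S$, is preserved up to scalar) \emph{and} $\gamma^2 = f_2/f_0$ (to symmetrize the binary). When $d$ is even, $\gamma \mapsto \gamma^2$ is two-to-one on $d$-th roots of unity and only hits the index-two subgroup, so $f_2/f_0$ being a $d$-th root of unity does not guarantee such a $\gamma$ exists. Conditions (4) and (5) are crafted exactly to make the correct transformation available, and justifying this parity bookkeeping is not optional. Your treatment of (1)--(3) is fine.
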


Theorem~\ref{thm:k-reg_homomorphism} is the original statement as in~\cite{CK12}.
It is explicit and easy to apply.
Conceptually, it can be restated as Theorem~\ref{thm:k-reg_homomorphism}$'$,
which supports the putative form of the $\PlCSP^d$ dichotomy.

\theoremstyle{plain}
\newtheorem*{specialtheorem}{Theorem {\thetheorem}$'$}


\begin{specialtheorem}[Theorem~3 in~\cite{CK12}]
 Let $S \subseteq \mathbb{Z}^+$ contain $k \ge 3$,
 let $\mathcal{G} = \{=_k \st k \in S\}$,
 and let $d = \gcd(S)$.
 Further suppose that $f$ is a non-degenerate, symmetric, complex-valued binary signature in Boolean variables.
 Then $\plholant{f}{\mathcal{G}}$ is \numP-hard unless
 $f$ satisfies one of the following conditions,
 in which case,
 the problem is computable in polynomial time:
 \begin{enumerate}
  \item there exists $T \in \mathcal{T}_{4d}$ such that $T^{\otimes 2} f \in \mathscr{A}$;
  \item                                                 $f               \in \mathscr{P}$;
  \item there exists $T \in \mathcal{T}_{2d}$ such that $T^{\otimes 2} f \in \widehat{\mathscr{M}}$.
 \end{enumerate}
\end{specialtheorem}

Lastly,
we quote the $\PlCSP$ dichotomy.
It also supports the putative form of a dichotomy,
which states that holographic algorithms using matchgates followed by the FKT algorithm is a universal strategy.

\begin{theorem}[Theorem~19 in~\cite{GW13}] \label{thm:PlCSP}
 Let $\mathcal{F}$ be any set of symmetric, complex-valued signatures in Boolean variables.
 Then $\PlCSP(\mathcal{F})$ is $\SHARPP$-hard unless
 $\mathcal{F} \subseteq \mathscr{A}$,
 $\mathcal{F} \subseteq \mathscr{P}$, or
 $\mathcal{F} \subseteq \widehat{\mathscr{M}}$,
 in which case the problem is computable in polynomial time.
\end{theorem}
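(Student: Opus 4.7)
The plan is to prove the two directions separately.

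\emph{Tractability.} For $\mathcal{F} \subseteq \mathscr{A}$ or $\mathcal{F} \subseteq \mathscr{P}$, I would invoke the standard $\CSP$ algorithms, which apply without any planarity assumption. For $\mathcal{F} \subseteq \widehat{\mathscr{M}}$, I would start from $\PlCSP(\mathcal{F}) \equiv_T \plholant{\EQ}{\mathcal{F}}$ and apply Valiant's Holant theorem with $T = \trans{1}{1}{1}{-1}$ to obtain $\plholant{\EQ T}{T^{-1}\mathcal{F}}$. By hypothesis $T^{-1}\mathcal{F} \subseteq \mathscr{M}$, and a direct calculation using Proposition~\ref{prop:match:symmetric:decomp} shows $\EQ T \subseteq \mathscr{M}$ as well. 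The FKT algorithm (via Theorem~\ref{thm:tractable:M}) then solves the transformed planar Holant problem in polynomial time.

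\emph{Hardness: planar pinning and reduction to small arity.} Assume $\mathcal{F}$ lies in none of the three tractable classes. I would first establish a planar pinning lemma: $\PlCSP(\mathcal{F} \cup \{[1,0],[0,1]\}) \le_T \PlCSP(\mathcal{F})$. Because $\EQ$ is freely available on one side of the bipartite view, the standard pinning gadget can be routed through an equality vertex placed on the outer face without creating edge crossings. With $[1,0]$ and $[0,1]$ in hand, for each $f \in \mathcal{F}$ of arity $n \ge 5$ I would attach $n-4$ pinnings to $n-4$ \emph{consecutive} edges of $f$ in its cyclic order; this yields a non-degenerate symmetric sub-signature $f'$ of arity $3$ or $4$, and consecutive placement preserves planarity.

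\emph{Base case and inductive case analysis.} I would then apply Theorem~\ref{thm:PlHolant:arity34} to $f'$. Either $\PlHolant(f')$ is $\SHARPP$-hard, whence $\PlCSP(\mathcal{F})$ is also hard; or $f'$ is $\mathscr{A}$-, $\mathscr{P}$-, or $\mathscr{M}$-transformable, or vanishing. In the $\CSP$ setting the witnessing $T$ must satisfy $[1,0,1]T^{\otimes 2} \in \mathscr{C}$ (Definition~\ref{def:prelim:trans}), which restricts $T$ to a small finite union of cosets; in particular, $\mathscr{M}$-transformability of $f'$ via such a $T$ is equivalent (up to a nonzero scalar) to $f' \in \widehat{\mathscr{M}}$, because the matrices sending $[1,0,1]$ to an arity-$2$ matchgate signature lie in $\trans{1}{1}{1}{-1} \cdot \mathbf{O}_2(\mathbb{C})$. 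Vanishing signatures are excluded via Lemma~\ref{lem:prelim:vanishing_form_in_Z_basis}: the recurrence of Definition~\ref{def:Rt} cannot be simultaneously satisfied when $\EQ$ is present and a non-degenerate pinning is allowed. An induction on the maximum arity would then propagate the conclusion from each $f'$ back to the corresponding $f$ and finally to all of $\mathcal{F}$; the residual case where every $f \in \mathcal{F}$ has arity at most~$2$ reduces to the non-planar $\CSP$ dichotomy together with a small case check.

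\emph{Main obstacle.} The hardest step will be the \emph{mixing} analysis: different signatures in $\mathcal{F}$ may each admit a tractable-class witness $T$, yet no single witness works simultaneously. One must then build a planar gadget combining two such signatures whose resulting low-arity signature falls outside all three tractable classes. This is the planar analogue of the mixing lemmas used in the non-planar dichotomies of~\cite{HL12, CGW13}, but the loss of crossing freedom forces the gadgets to be designed by attaching signatures along a single shared edge or at consecutive outer-face edges, and checking planarity at each composition step will be the chief source of the case work.
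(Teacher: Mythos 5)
The paper only \emph{cites} this result (it is Theorem~19 of~\cite{GW13}); it gives no proof of its own, so there is no in-paper argument to compare yours against. Reviewing the proposal on its merits, there are two genuine gaps beyond the mixing analysis you already flag as unresolved.

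First, you assert the planar pinning lemma $\PlCSP(\mathcal{F}\cup\{[1,0],[0,1]\}) \le_T \PlCSP(\mathcal{F})$ with the justification that ``the standard pinning gadget can be routed through an equality vertex placed on the outer face.'' This is not a reduction. The signatures $[1,0]$ and $[0,1]$ cannot in general be realized as $(\EQ\cup\mathcal{F})$-gates (whether they can depends on $\mathcal{F}$), so pinning must be \emph{simulated} by a counting or interpolation argument, and the standard nonplanar argument --- chaining a distinguished variable through several copies of the instance --- destroys planarity unless that variable lies on the outer face, which cannot be arranged for every vertex simultaneously. The present paper's introduction flags precisely this tension between pinning and planarity (there for $\PlCSP^d$); the planar pinning lemma for $\PlCSP$ does hold, but its proof in~\cite{GW13} is a separate, careful argument, not an observation about where to place a gadget.

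Second, the arity-reduction step is incorrect as stated. Attaching $n-4$ pins to a signature of arity $n\ge 5$ need not produce a non-degenerate $f'$: pinning $[1,0,\dots,0,1]$ by $n-4$ copies of $[1,0]$ gives $[1,0,0,0,0]=[1,0]^{\otimes 4}$, which is degenerate, and it is easy to construct signatures lying outside all three tractable classes whose every pinned arity-$4$ sub-signature is degenerate or tractable. So Theorem~\ref{thm:PlHolant:arity34} cannot simply be applied to $f'$; one must consider multiple pin patterns, self-loops via $[1,0,1]$, and other gadgets, and show that at least one yields a hard low-arity signature. Your further claim that the relevant witnessing transformations for $\mathscr{M}$-transformability reduce to $f'\in\widehat{\mathscr{M}}$ conflates the Holant condition $[1,0,1]T^{\otimes 2}\in\mathscr{M}$ with the $\CSP$ condition that \emph{all} of $\EQ$ be carried into $\mathscr{M}$, which is strictly stronger and is what actually forces the Hadamard basis. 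As written, the proposal is a plausible table of contents rather than a proof, and the three difficulties above are precisely where the bulk of the work in~\cite{GW13} lives.
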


\subsection{Redundant Signature Matrices and Related Hardness Results}
\label{sec:redundant}

\begin{definition}[Definition~6.1 in~\cite{CGW13}]
 A 4-by-4 matrix is \emph{redundant} if its middle two rows and middle two columns are the same.
\end{definition}

An example of a redundant matrix is the signature matrix of a symmetric arity~$4$ signature.

\begin{definition}[Definition~6.2 in~\cite{CGW13}]
 The \emph{signature matrix} of a symmetric arity~$4$ signature $f = [f_0, f_1, f_2, f_3, f_4]$ is
 \begin{align*}
  M_f =
  \begin{bmatrix}
   f_0 & f_1 & f_1 & f_2\\
   f_1 & f_2 & f_2 & f_3\\
   f_1 & f_2 & f_2 & f_3\\
   f_2 & f_3 & f_3 & f_4
  \end{bmatrix}.
 \end{align*}
 This definition extends to an asymmetric signature $g$ as
 \begin{align*}
  M_g =
  \begin{bmatrix}
   g^{0000} & g^{0010} & g^{0001} & g^{0011}\\
   g^{0100} & g^{0110} & g^{0101} & g^{0111}\\
   g^{1000} & g^{1010} & g^{1001} & g^{1011}\\
   g^{1100} & g^{1110} & g^{1101} & g^{1111}
  \end{bmatrix},
 \end{align*}
 where $g^{wxyz}$ is the output of $g$ on input $wxyz$.
 When we present $g$ as an $\mathcal{F}$-gate, we order the four external edges ABCD counterclockwise.
 In $M_g$,
 the row index bits are ordered AB and the column index bits are ordered DC,
 in reverse order.
 This is for convenience so that the signature matrix of the linking of two arity~4 $\mathcal{F}$-gates
 is the matrix product of the signature matrices of the two $\mathcal{F}$-gates.

 If $M_g$ is redundant, we also define the \emph{compressed signature matrix} of $g$ as
 \[
  \widetilde{M_g}
  =
  \begin{bmatrix}
   1 & 0 & 0 & 0\\
   0 & \frac{1}{2} & \frac{1}{2} & 0\\
   0 & 0 & 0 & 1
  \end{bmatrix}
  M_g
  \begin{bmatrix}
   1 & 0 & 0\\
   0 & 1 & 0\\
   0 & 1 & 0\\
   0 & 0 & 1
  \end{bmatrix}.
 \]
\end{definition}

\begin{lemma} [Corollary 3.8 in \cite{GW13}] \label{lem:arity4:nonsingular_compressed_hard}
 Let $f$ be an arity $4$ signature with complex weights.
 If $M_f$ is redundant and $\widetilde{M_f}$ is nonsingular,
 then $\PlHolant(f)$ is $\numP$-hard.
\end{lemma}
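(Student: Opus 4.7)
The natural approach is to invoke Theorem~\ref{thm:PlHolant:arity34} and show that the hypotheses --- $M_f$ redundant with $\widetilde{M_f}$ nonsingular --- preclude every tractable class in that dichotomy. I would first handle the case of symmetric $f$, for which $M_f$ is automatically redundant and Theorem~\ref{thm:PlHolant:arity34} applies directly. The compressed matrix then has the explicit form
\[
\widetilde{M_f} = \begin{bmatrix} f_0 & 2f_1 & f_2 \\ f_1 & 2f_2 & f_3 \\ f_2 & 2f_3 & f_4 \end{bmatrix},
\]
and the task reduces to showing that membership in any tractable class forces $\det \widetilde{M_f} = 0$.

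A crucial observation that streamlines the case analysis is that nonsingularity of $\widetilde{M_f}$ is invariant under invertible holographic transformations. Writing $M_{T^{\otimes 4} f} = (T \otimes T) M_f (T \otimes T)^{\top}$, a short calculation shows that both $L(T \otimes T)$ and $(T \otimes T)^{\top} R$ factor through the redundant slot, yielding $\widetilde{M_{T^{\otimes 4} f}} = A\, \widetilde{M_f}\, C$ where $A, C$ are the $3 \times 3$ compressions of $T \otimes T$ and its transpose; both are invertible because $T \otimes T$ restricted to the symmetric tensors $\mathrm{Sym}^2(\mathbb{C}^2)$ is invertible whenever $T$ is. This collapses the $\mathscr{A}$-, $\mathscr{P}$-, and $\mathscr{M}$-transformable subcases to their untransformed prototypes. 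For matchgates, Proposition~\ref{prop:match:symmetric} restricts arity-$4$ symmetric forms to $[a^2, 0, ab, 0, b^2]$ (where the parity condition and geometric progression give $f_0 f_4 = f_2^2$) or $[0, a, 0, b, 0]$; direct computation gives $\det \widetilde{M_f} = 0$ in both. Symmetric arity-$4$ signatures in $\mathscr{P}$ are degenerate or of the form $[a, 0, 0, 0, b]$, and are clearly singular; degenerate $f = u^{\otimes 4}$ gives a rank-one compressed matrix. For $\mathscr{A}$, one walks through the explicit ten-family list in the tractable-signatures subsection and verifies $\det \widetilde{M_f} = 0$ for each arity-$4$ representative. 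For the vanishing class, $f \in \mathscr{V}^\sigma$ of arity $4$ forces $\vd^\sigma(f) \ge 3$, hence $\rd^\sigma(f) \le 1$, so by Lemma~\ref{lem:prelim:vanishing_form_in_Z_basis} only the first (or last) two entries of $(Z^{-1})^{\otimes 4} f$ are nonzero; combined with the holographic invariance above, $\widetilde{M_f}$ is again singular.

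For $f$ with $M_f$ redundant but not fully symmetric, I would reduce to the symmetric case. Linking two copies of $f$ along two shared edges in a planar way produces an arity-$4$ signature with matrix $M_f^2$, which remains redundant and has compressed matrix $\widetilde{M_f}^2$; iterating gives signatures $f_k$ with compressed matrices $\widetilde{M_f}^k$, and Vandermonde interpolation using the nonzero eigenvalues of $\widetilde{M_f}$ yields any signature whose compressed matrix lies in the polynomial span of $\widetilde{M_f}$. The main obstacle is this symmetrization step, because the polynomial span has dimension at most~$3$ while the space of redundant compressed matrices is $9$-dimensional; extracting a genuinely symmetric signature may require auxiliary gadgets such as a $180^{\circ}$ planar rotation (which realizes the signature with matrix $M_f^{\top}$) together with $[0,1,0]$ attachments on selected pairs of edges to average over the remaining orbit of inputs. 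Once a symmetric signature $g$ with nonsingular $\widetilde{M_g}$ is realized, the first part of the argument shows $\PlHolant(g)$ is $\SHARPP$-hard, which lifts back to $\PlHolant(f)$.
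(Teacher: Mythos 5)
This lemma is quoted from~\cite{GW13} (Corollary~3.8); the present paper supplies no proof of its own, so your attempt must be judged on its internal soundness rather than against a proof in the text.

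The symmetric branch of your argument is plausible in outline, and your observation that $\widetilde{M_{T^{\otimes 4}f}}$ factors as $A\,\widetilde{M_f}\,C$ with $A, C$ invertible (the compressions of $T\otimes T$ to the symmetric subspace) is correct and useful. But there is a genuine gap in the asymmetric branch, and you flag it yourself: ``the main obstacle is this symmetrization step.'' The polynomial span of $\widetilde{M_f}$ has dimension at most~$3$, while the space of $3\times 3$ compressed matrices has dimension~$9$, so chains of $f$-links cannot by themselves reach a symmetric target. The auxiliary gadgets you propose do not fill the hole. In particular, a planar $180^{\circ}$ rotation of a quaternary gadget does \emph{not} realize $M_f^{\top}$. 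With the paper's convention $M_f[(a,b),(d,c)]=f(a,b,c,d)$, the $180^{\circ}$ rotation corresponds to the cyclic shift $f(c,d,a,b)$, whereas $M_f^{\top}$ corresponds to the full reversal $f(d,c,b,a)$; these agree only for special signatures. (Reversal reverses the cyclic order, i.e.\ it is a reflection, not a rotation, and is not realizable by a planar gadget construction without additional assumptions.) So the orbit of gadget-realizable compressed matrices is not what you assume, and the symmetrization argument is left unsubstantiated. The standard route in the literature — and almost certainly what~\cite{GW13} does — avoids symmetrization entirely: one first proves $\PlHolant$ hardness directly for a fixed asymmetric redundant signature (essentially a weighted Eulerian-orientation problem) and then uses interpolation through the compressed matrix powers $\widetilde{M_f}^k$ to realize that fixed target from any $f$ with nonsingular $\widetilde{M_f}$.

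There is also a latent circularity concern worth noting: your symmetric case appeals to Theorem~\ref{thm:PlHolant:arity34}, whose arity-$4$ component is Theorem~14 of~\cite{GW13}, a result stated \emph{after} Corollary~3.8 in that paper and very likely proved using it. A blind proof of Corollary~3.8 cannot safely rest on the arity-$4$ dichotomy without verifying that no such dependence exists.
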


Furthermore,
by combining Lemma~\ref{lem:arity4:nonsingular_compressed_hard} with Lemma~6.8 in~\cite{CGW13},
we obtain the planar version of Corollary~6.9 in~\cite{CGW13}.

\begin{corollary} \label{cor:prelim:nonsingular_compressed_hard_trans}
 Let $f$ be an arity~$4$ signature with complex weights.
 If there exists a nonsingular matrix $T \in \mathbb{C}^{2 \times 2}$ such that $\hat{f} = T^{\otimes 4} f$,
 where $M_{\hat{f}}$ is redundant and $\widetilde{M_{\hat{f}}}$ is nonsingular,
 then $\PlHolant(f)$ is $\SHARPP$-hard.
\end{corollary}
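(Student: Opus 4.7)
The plan is to reduce the claim to Lemma~\ref{lem:arity4:nonsingular_compressed_hard} applied to $\hat f$ via a holographic transformation, and then transfer the resulting $\SHARPP$-hardness back to $f$ in a planarity-preserving way. By hypothesis, $\hat f = T^{\otimes 4} f$ satisfies the premises of Lemma~\ref{lem:arity4:nonsingular_compressed_hard}, so $\PlHolant(\hat f)$ is $\SHARPP$-hard. Hence it suffices to show $\PlHolant(\hat f) \le_T \PlHolant(f)$.

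To set up the reduction, I would first use the 2-stretch to pass to the bipartite formulation, $\PlHolant(\hat f) \equiv_T \plholant{=_2}{\hat f}$, and then invoke Valiant's Holant Theorem with the invertible matrix $T$ to obtain
\[
\plholant{=_2}{\hat f} \;=\; \plholant{(=_2) T^{\otimes 2}}{(T^{-1})^{\otimes 4} \hat f} \;=\; \plholant{h}{f},
\]
where $h := (=_2) T^{\otimes 2}$ is a symmetric binary signature whose $2 \times 2$ matrix form is $T^{\top} T$, and is nondegenerate because $T$ is nonsingular. Thus it remains to prove $\plholant{h}{f} \le_T \PlHolant(f)$ while preserving planarity; this last reduction is precisely the content of the planar analogue of Lemma~6.8 of~\cite{CGW13} that we wish to invoke.

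The main obstacle is this last step: given only $f$, we must planarly realize the binary signature $h$ (or recover $=_2$ from $h$) so that a planar instance of $\plholant{h}{f}$ can be converted into an instance of $\PlHolant(f)$. The natural strategy is an interpolation argument using chains of $f$-gadgets. Concretely, by taking an $f$-vertex and closing off two of its four dangling edges in various planar ways, one obtains an $\{f\}$-gate of arity $2$ whose $2\times 2$ matrix is an explicit expression in the entries of $f$; chaining $k$ such arity-$2$ gates, which is planar by construction, produces a family of binary signatures whose matrices are the $k$-th powers of a fixed matrix. The nonsingularity of $\widetilde{M_{\hat f}}$ is exactly what ensures (after conjugating by $T$) that this matrix has two distinct, nonzero eigenvalues, so that the span of its powers recovers all symmetric binary signatures via polynomial interpolation, including a nonzero scalar multiple of $h$. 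Substituting these gadget-built copies of $h$ into an instance of $\plholant{h}{f}$ produces a planar instance of $\PlHolant(f)$ whose Holant value, together with a finite list of interpolation queries, determines the original; this yields the desired Turing reduction and finishes the proof.
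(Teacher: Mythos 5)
There is a genuine gap, and the proposed route also misreads what Lemma~6.8 of \cite{CGW13} actually provides.

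You correctly observe that Valiant's Holant theorem gives $\plholant{=_2}{\hat f} = \plholant{h}{f}$ with $h = (=_2)T^{\otimes 2}$, but the remaining step $\plholant{h}{f} \le_T \PlHolant(f)$ is not what Lemma~6.8 of \cite{CGW13} says, and the interpolation argument you sketch to fill it in does not hold up. The key claimed inference --- that nonsingularity of $\widetilde{M_{\hat f}}$ forces the chain matrix built from $f$-self-loops to have two distinct nonzero eigenvalues --- is false. For instance $\hat f = [3,0,1,0,3]$ has $\widetilde{M_{\hat f}} = \left[\begin{smallmatrix}3&0&1\\0&1&0\\1&0&3\end{smallmatrix}\right]$ nonsingular, yet the self-loop signature $\partial(\hat f) = [4,0,4]$ is a multiple of $I_2$, with repeated eigenvalue. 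Nonsingularity of the $3\times 3$ compressed matrix is simply independent of the spectrum of any particular $2\times 2$ gadget. Even when the eigenvalues happen to be distinct, the interpolation you invoke needs their ratio to avoid roots of unity (or a separate argument for the lattice case), and nothing in the hypotheses addresses this. So the final reduction as written does not go through.

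The actual proof is shorter and needs neither the holographic transformation nor any interpolation. Lemma~6.8 of \cite{CGW13} is a transformation law for the compressed signature matrix: since $M_{\hat f} = (T\otimes T)\,M_f\,(T\otimes T)^{\top}$ and $T\otimes T$ preserves the symmetric/antisymmetric splitting, one gets $\widetilde{M_{\hat f}} = \widetilde{T}\,\widetilde{M_f}\,\widetilde{T^{\top}}$, where $\widetilde{T} = \left[\begin{smallmatrix}a^2 & 2ab & b^2\\ ac & ad+bc & bd\\ c^2 & 2cd & d^2\end{smallmatrix}\right]$ is the symmetric square of $T = \left[\begin{smallmatrix}a&b\\c&d\end{smallmatrix}\right]$, with $\det\widetilde{T} = (\det T)^3 \ne 0$. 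Redundancy of the signature matrix is likewise preserved under $T^{\otimes 4}$ (and under $(T^{-1})^{\otimes 4}$), because $T^{\otimes 4}$ respects the permutation symmetry in the two middle index bits. Hence $M_f$ is redundant and $\widetilde{M_f}$ is nonsingular, and Lemma~\ref{lem:arity4:nonsingular_compressed_hard} applies to $f$ directly. In effect, the corollary simply records that the hypothesis of Lemma~\ref{lem:arity4:nonsingular_compressed_hard} is invariant under nonsingular holographic transformations.
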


\begin{figure}[t]
 \centering
 \def\capWidth{6cm}
 \captionsetup[subfigure]{width=\capWidth}
 \tikzstyle{entry} = [internal, inner sep=2pt]
 \subfloat[A counterclockwise rotation]{
  \begin{tikzpicture}[scale=\scale,transform shape,node distance=1.7 * \nodeDist,semithick]
   \node[internal]  (0)                    {};
   \node[external]  (1) [above  left of=0] {};
   \node[external]  (2) [above right of=0] {};
   \node[external]  (3) [below  left of=0] {};
   \node[external]  (4) [below right of=0] {};
   \node[external]  (5) [      right of=0] {};
   \node[external]  (6) [      right of=5] {};
   \node[internal]  (7) [      right of=6] {};
   \node[external]  (8) [above  left of=7] {};
   \node[external]  (9) [above right of=7] {};
   \node[external] (10) [below  left of=7] {};
   \node[external] (11) [below right of=7] {};
   \path (0) edge[postaction={decorate, decoration={
                                         markings,
                                         mark=at position 0.25 with {\arrow[>=diamond,white] {>}; },
                                         mark=at position 0.25 with {\arrow[>=open diamond]  {>}; },
                                         mark=at position 0.65 with {\arrow[>=diamond,white] {>}; },
                                         mark=at position 0.65 with {\arrow[>=open diamond]  {>}; } } }] (1)
             edge (2)
             edge (3)
             edge (4)
    (5.west) edge[->, >=stealth] (6.east)
         (7) edge[postaction={decorate, decoration={
                                         markings,
                                         mark=at position 0.65 with {\arrow[>=diamond,white] {>}; },
                                         mark=at position 0.65 with {\arrow[>=open diamond]  {>}; } } }] (8)
             edge (9)
             edge [postaction={decorate, decoration={
                                         markings,
                                         mark=at position 0.25 with {\arrow[>=diamond,white] {>}; },
                                         mark=at position 0.25 with {\arrow[>=open diamond]  {>}; } } }] (10)
             edge (11);
   \begin{pgfonlayer}{background}
    \node[draw=\borderColor,thick,rounded corners,fit = (0),inner sep=16pt] {};
    \node[draw=\borderColor,thick,rounded corners,fit = (7),inner sep=16pt] {};
   \end{pgfonlayer}
  \end{tikzpicture}}
 \qquad
 \subfloat[Movement of signature matrix entries]{
  \makebox[\capWidth][c]{
   \begin{tikzpicture}[scale=\scale,transform shape,>=stealth,node distance=\nodeDist,semithick]
    \node[entry] (11)               {};
    \node[entry] (12) [right of=11] {};
    \node[entry] (13) [right of=12] {};
    \node[entry] (14) [right of=13] {};
    \node[entry] (21) [below of=11] {};
    \node[entry] (22) [right of=21] {};
    \node[entry] (23) [right of=22] {};
    \node[entry] (24) [right of=23] {};
    \node[entry] (31) [below of=21] {};
    \node[entry] (32) [right of=31] {};
    \node[entry] (33) [right of=32] {};
    \node[entry] (34) [right of=33] {};
    \node[entry] (41) [below of=31] {};
    \node[entry] (42) [right of=41] {};
    \node[entry] (43) [right of=42] {};
    \node[entry] (44) [right of=43] {};
    \node[external] (nw) [above left  of=11] {};
    \node[external] (ne) [above right of=14] {};
    \node[external] (sw) [below left  of=41] {};
    \node[external] (se) [below right of=44] {};
    \path (13) edge[<-, dotted]                (12)
          (12) edge[<-, dotted]                (21)
          (21) edge[<-, dotted]                (31)
          (31) edge[<-, dotted,out=65,in=-155] (13)
          (42) edge[<-, dashed]                (43)
          (43) edge[<-, dashed]                (34)
          (34) edge[<-, dashed]                (24)
          (24) edge[<-, dashed,out=-115,in=25] (42)
          (14) edge[<-, very thick]            (22)
          (22) edge[<-, very thick]            (41)
          (41) edge[<-, very thick]            (33)
          (33) edge[<-, very thick]            (14)
          (23) edge[<->]                      (32);
    \path (nw.west) edge (sw.west)
          (ne.east) edge (se.east)
          (nw.west) edge (nw.east)
          (sw.west) edge (sw.east)
          (ne.west) edge (ne.east)
          (se.west) edge (se.east);
   \end{tikzpicture}}}
 \caption{The movement of the entries in the signature matrix of a quaternary signature under a counterclockwise rotation of the input edges.
  Entires of Hamming weight~$1$ are in the dotted cycle,
  entires of Hamming weight~$2$ are in the two solid cycles (one has length~$4$ and the other one is a swap),
  and entries of Hamming weight~$3$ are in the dashed cycle.}
 \label{fig:rotate_asymmetric_signature}
\end{figure}
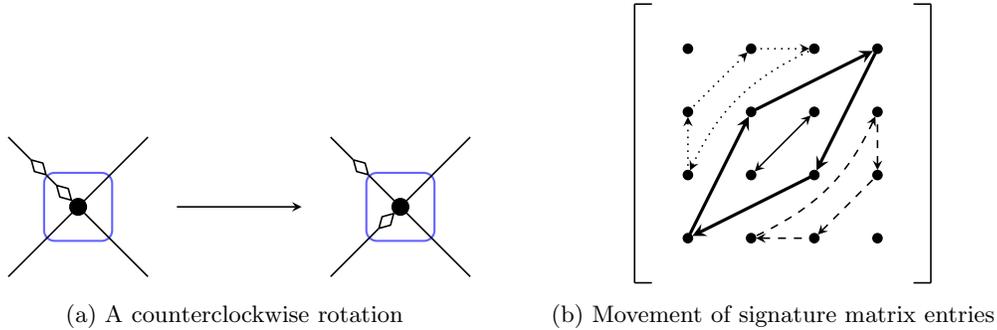

In the course of working with symmetric signature,
we sometimes construct gadgets with signatures that are not symmetric.
The power of Lemma~\ref{lem:arity4:nonsingular_compressed_hard} and Corollary~\ref{cor:prelim:nonsingular_compressed_hard_trans}
is that they apply to such signatures provided the corresponding signature matrix is redundant.
Sometimes one can apply a rotation to obtain a signature with a redundant signature matrix (see Figure~\ref{fig:rotate_asymmetric_signature}).

\section{\texorpdfstring{$\mathscr{A}$}{A}-, \texorpdfstring{$\mathscr{P}$}{P}-, and \texorpdfstring{$\mathscr{M}$}{M}-transformable Signatures} \label{sec:APMtrans}

In this section,
we investigate the properties of $\mathscr{A}$-, $\mathscr{P}$-, and $\mathscr{M}$-transformable signatures.
Throughout,
we define $\alpha = \frac{1+i}{\sqrt{2}} = \sqrt{i} = e^{\frac{\pi i}{4}}$
and use $\mathbf{O}_2(\mathbb{C})$ to denote the group of $2$-by-$2$ orthogonal matrices over $\mathbb{C}$.
While the main results in this section assume that the signatures involved are symmetric,
we note that some of the lemmas also hold without this assumption.

\subsection{Characterization of \texorpdfstring{$\mathscr{A}$}{A}- and \texorpdfstring{$\mathscr{P}$}{P}-transformable Signatures}

$\mathscr{A}$- and $\mathscr{M}$-transformable signatures have been well studied in previous work \cite{CGW13,CGW14a}.
We summarize some useful notions and lemmas here.
The three sets $\mathscr{A}_1$,
$\mathscr{A}_2$,
and $\mathscr{A}_3$ capture all symmetric $\mathscr{A}$-transformable signatures.

\begin{definition} \label{def:A123}
 A symmetric signature $f$ of arity $n$ is in, respectively, $\mathscr{A}_1$, or $\mathscr{A}_2$, or $\mathscr{A}_3$
 if there exist an $H \in \mathbf{O}_2(\mathbb{C})$ and a nonzero constant $c \in \mathbb{C}$ such that $f$ has the form, respectively,
 $c H^{\otimes n} \left(\tbcolvec{1}{1}^{\otimes n}      + \beta \tbcolvec{1}{-1}^{\otimes n}\right)$, or
 $c H^{\otimes n} \left(\tbcolvec{1}{i}^{\otimes n}      +       \tbcolvec{1}{-i}^{\otimes n}\right)$, or
 $c H^{\otimes n} \left(\tbcolvec{1}{\alpha}^{\otimes n} +   i^r \tbcolvec{1}{-\alpha}^{\otimes n}\right)$,
 where $\beta = \alpha^{tn+2r}$, $r \in \{0,1,2,3\}$, and $t \in \{0,1\}$.
\end{definition}

For $k \in \{1,2,3\}$,
when such an orthogonal $H$ exists,
we say that $f \in \mathscr{A}_k$ with transformation $H$.
If $f \in \mathscr{A}_k$ with $I_2$,
then we say $f$ is in the canonical form of $\mathscr{A}_k$.

The following lemma characterizes the signatures in $\mathscr{A}_2$.

\begin{lemma}[Lemma~8.8 in~\cite{CGW13}] \label{lem:single:A2}
 Let $f$ be a symmetric signature of arity $n$.
 Then $f \in \mathscr{A}_2$ if and only if
    $f = c \left(\left[\begin{smallmatrix} 1 \\  i \end{smallmatrix}\right]^{\otimes n}
         + \beta \left[\begin{smallmatrix} 1 \\ -i \end{smallmatrix}\right]^{\otimes n}\right)$
 for some nonzero constants $c, \beta \in \mathbb{C}$.
\end{lemma}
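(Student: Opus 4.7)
The plan is to exploit the eigenvector structure of $\mathbf{O}_2(\CC)$ acting on the distinguished vectors $\tbcolvec{1}{i}$ and $\tbcolvec{1}{-i}$. The key observation is that every rotation $H = \trans{a}{-b}{b}{a} \in \mathbf{O}_2(\CC)$ (with $a^2 + b^2 = 1$) has these vectors as common eigenvectors: a direct computation gives $H\tbcolvec{1}{i} = (a-bi)\tbcolvec{1}{i}$ and $H\tbcolvec{1}{-i} = (a+bi)\tbcolvec{1}{-i}$, where $(a-bi)(a+bi) = 1$. A reflection $H = \trans{a}{b}{b}{-a}$ with $a^2+b^2=1$ instead interchanges the two spans (up to scalar). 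Hence, after taking $n$-fold tensor powers, both cases yield signatures lying in the two-dimensional subspace spanned by $\tbcolvec{1}{i}^{\otimes n}$ and $\tbcolvec{1}{-i}^{\otimes n}$.

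For the forward direction, assume $f \in \mathscr{A}_2$ with orthogonal $H$ and constant $c$, so $f = c\, H^{\otimes n}\bigl(\tbcolvec{1}{i}^{\otimes n} + \tbcolvec{1}{-i}^{\otimes n}\bigr)$. Setting $\lambda = a - bi$ (which is nonzero because otherwise $a^2+b^2=0 \ne 1$), the rotation case gives $f = c\lambda^n\bigl(\tbcolvec{1}{i}^{\otimes n} + \lambda^{-2n}\tbcolvec{1}{-i}^{\otimes n}\bigr)$, which has the desired form with $c' = c\lambda^n \ne 0$ and $\beta = \lambda^{-2n} \ne 0$. The reflection case gives the same form after swapping the two summands, yielding $c' = c(a+bi)^n$ and $\beta = (a-bi)^n/(a+bi)^n$, again both nonzero.

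For the reverse direction, suppose $f = c\bigl(\tbcolvec{1}{i}^{\otimes n} + \beta\,\tbcolvec{1}{-i}^{\otimes n}\bigr)$ with $c, \beta$ nonzero. Choose any $\lambda \in \CC^{\times}$ with $\lambda^{-2n} = \beta$ (available because $\beta \ne 0$ and $\CC$ is algebraically closed), and set $a = (\lambda + \lambda^{-1})/2$ and $b = (\lambda^{-1} - \lambda)/(2i)$. A short calculation verifies $a^2 + b^2 = 1$ and $a - bi = \lambda$, so $H = \trans{a}{-b}{b}{a}$ is a legitimate rotation in $\mathbf{O}_2(\CC)$. By the forward computation, $c\lambda^{-n} H^{\otimes n}\bigl(\tbcolvec{1}{i}^{\otimes n} + \tbcolvec{1}{-i}^{\otimes n}\bigr) = c\bigl(\tbcolvec{1}{i}^{\otimes n} + \lambda^{-2n}\tbcolvec{1}{-i}^{\otimes n}\bigr) = f$, witnessing $f \in \mathscr{A}_2$.

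I do not anticipate any real obstacles: the entire argument reduces to a $2 \times 2$ linear-algebra computation plus the existence of $2n$-th roots in $\CC$. The one subtle point worth making explicit in the write-up is that the reflection branch of $\mathbf{O}_2(\CC)$ produces exactly the same family of signatures as the rotation branch (after relabeling), so no discrete parameter analogous to $r$ or $t$ in the definitions of $\mathscr{A}_1$ or $\mathscr{A}_3$ is needed—the single continuous parameter $\beta \in \CC^{\times}$ already parameterizes $\mathscr{A}_2$ completely.
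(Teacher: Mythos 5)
Your proof is correct and follows the standard eigenvector argument that the cited Lemma~8.8 in~\cite{CGW13} uses: rotations in $\mathbf{O}_2(\mathbb{C})$ scale $\tbcolvec{1}{i}$ and $\tbcolvec{1}{-i}$ by reciprocal nonzero factors, reflections swap their spans, and every nonzero $\beta$ arises by choosing a $2n$-th root. One small slip in the reflection subcase: since $H^{\otimes n}\bigl(\tbcolvec{1}{i}^{\otimes n}+\tbcolvec{1}{-i}^{\otimes n}\bigr)=(a-bi)^n\tbcolvec{1}{i}^{\otimes n}+(a+bi)^n\tbcolvec{1}{-i}^{\otimes n}$, one should have $c'=c(a-bi)^n$ and $\beta=\bigl((a+bi)/(a-bi)\bigr)^n$ rather than the reversed expressions you wrote, but both are nonzero either way, so the conclusion stands.
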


Membership in these three sets characterize the $\mathscr{A}$-transformable signatures.

\begin{lemma}[Lemma~8.10 in~\cite{CGW13}] \label{lem:cha:affine}
 Let $f$ be a non-degenerate symmetric signature.
 Then $f$ is $\mathscr{A}$-transformable if and only if $f \in \mathscr{A}_1 \union \mathscr{A}_2 \union \mathscr{A}_3$.
\end{lemma}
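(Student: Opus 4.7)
The plan is to use Definition~\ref{def:prelim:trans} directly: $f$ is $\mathscr{A}$-transformable iff there is a $T \in \mathbf{GL}_2(\CC)$ such that the binary signature $[1,0,1]T^{\otimes 2}$ (which is precisely $T^{\top}T$ read as a symmetric $2$-by-$2$ matrix) belongs to $\mathscr{A}$ and $(T^{-1})^{\otimes n}f \in \mathscr{A}$. The key input is the explicit classification of non-degenerate symmetric signatures in $\mathscr{A}$ as $\mathscr{F}_1 \cup \mathscr{F}_2 \cup \mathscr{F}_3$ (up to scalar) recorded in Section~\ref{sec:preliminaries}.

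For the ``if'' direction, I would exhibit a witnessing $T$ for each $\mathscr{A}_i$. For $f = cH^{\otimes n}\!\left(\tbcolvec{1}{1}^{\otimes n}+\beta\tbcolvec{1}{-1}^{\otimes n}\right) \in \mathscr{A}_1$, take $T = H\trans{1}{1}{1}{-1}$, so that $T^{\top}T = 2I_2$ gives $[1,0,1]T^{\otimes 2} = 2[1,0,1] \in \mathscr{A}$, while the preimage $(T^{-1})^{\otimes n}f$ is a scalar multiple of $[1,0,\dots,0,\beta]$. One then checks that $\beta = \alpha^{tn+2r}$ is exactly the range of values for which this preimage lies in $\mathscr{F}_1$, where the case $t=1$ with $n$ odd is handled by absorbing a diagonal factor involving $\alpha$ into $T$ and compensating the scalar on the other side. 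The analogous arguments work for $\mathscr{A}_2$ (via $T = HZ$, so that $T^\top T \propto [0,1,0]\in\mathscr{A}$ and the preimage lands in $\mathscr{F}_2$) and for $\mathscr{A}_3$ (via $T = H\trans{1}{1}{\alpha}{-\alpha}$, reducing to $\mathscr{F}_3$ by a parallel computation).

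For the ``only if'' direction, start from an arbitrary witnessing $T$ and classify it using the constraint $[1,0,1]T^{\otimes 2} \in \mathscr{A}$. Specializing the $\mathscr{F}_1 \cup \mathscr{F}_2 \cup \mathscr{F}_3$ list to arity~$2$ shows that the non-degenerate symmetric binary signatures in $\mathscr{A}$ are, up to scalar, exactly $[1,0,\pm 1]$, $[1,0,\pm i]$, and $[0,1,0]$; these enumerate all possible $T^\top T$. For each such value $T^{\top}T = \lambda\, D_0^{\top}D_0$ with $D_0$ a chosen representative (e.g.\ $I_2$, $\mathrm{diag}(1,\alpha^{r})$, $\trans{1}{1}{1}{-1}$, or $Z$), the identity $(TD_0^{-1})^\top(TD_0^{-1}) \propto I_2$ yields a factorization $T = \sqrt{\lambda}\,HD_0$ with $H \in \mathbf{O}_2(\CC)$. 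For each $D_0$ I would compute $D_0^{\otimes n}g$ as $g$ runs through the generators of $\mathscr{F}_1,\mathscr{F}_2,\mathscr{F}_3$, and then apply $H^{\otimes n}$ to verify that every product $T^{\otimes n}g$ lies in exactly one of $\mathscr{A}_1,\mathscr{A}_2,\mathscr{A}_3$ after absorbing scalars and phases into the parameters $c,\beta,r$ of Definition~\ref{def:A123}.

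The main obstacle I anticipate is the bookkeeping in the ``only if'' direction: several $(D_0,g)$ combinations land in the same $\mathscr{A}_i$ family, while other combinations produce degenerate signatures that must be excluded by the non-degeneracy hypothesis on $f$. In particular, tracking the parity-dependent exponent $\alpha^{tn}$ in the $\mathscr{A}_1$ definition of $\beta$ forces a case split on the parity of $n$ together with a split on which $\mathscr{F}_j$ the preimage belongs to. Lemma~\ref{lem:single:A2} already gives an intrinsic characterization of $\mathscr{A}_2$ and should serve as a template for streamlining the $\mathscr{A}_1$ and $\mathscr{A}_3$ cases.
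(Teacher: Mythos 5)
Your plan is the right one, but the ``only if'' direction as written has a concrete gap. You assert that the non-degenerate symmetric binary signatures in $\mathscr{A}$ are, up to scalar, exactly $[1,0,\pm 1]$, $[1,0,\pm i]$, and $[0,1,0]$, but this is the classification for $\mathscr{P}\cap\mathscr{A}$, not for $\mathscr{A}$. Specializing items~4, 6, 8, 10 of the explicit $\mathscr{F}_1\cup\mathscr{F}_2\cup\mathscr{F}_3$ list to arity~$2$ also gives $[1,-i,1]$, $[1,i,1]$, $[1,1,-1]$, and $[1,-1,-1]$ (from $\mathscr{F}_2$ and $\mathscr{F}_3$ with $r=1,3$), all of which are non-degenerate and lie in $\mathscr{A}$ but have no zero entries, so they cannot be written as $[a,0,c]$ or $[0,1,0]$. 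Every symmetric non-degenerate complex $2\times 2$ matrix arises as $T^{\top}T$ for some invertible $T$, so a witnessing $T$ for $\mathscr{A}$-transformability may well satisfy $[1,0,1]T^{\otimes 2}\propto[1,i,1]$, and your enumeration of $D_0$'s covers none of those four cases.

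The gap is fixable, but requires an additional step that you do not mention: before factoring $T$, normalize it by post-multiplication with an element $N$ of the stabilizer of $\mathscr{A}$. Since $N\mathscr{A}=\mathscr{A}$, replacing $T$ by $TN$ keeps $f\in (TN)\mathscr{A}$ while transforming $[1,0,1]T^{\otimes 2}$ to $[1,0,1]T^{\otimes 2}N^{\otimes 2}$. Taking $N=\trans{1}{1}{1}{-1}$ sends $[1,i,1]$ to a scalar multiple of $[1,0,-i]$ and $[1,-i,1]$ to a scalar multiple of $[1,0,i]$; prepending $N=\trans{1}{0}{0}{i}$ similarly reduces $[1,\pm 1,-1]$ to these. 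Only after this normalization does your intended $D_0$ case split (and Proposition~\ref{prop:matrix_cha}, extended to $[1,0,-1]$ and $[1,0,-i]$) become exhaustive. Everything else in your outline---the witnesses for the ``if'' direction, the parity/phase bookkeeping in $\mathscr{A}_1$, and the use of Lemma~\ref{lem:single:A2} as a template---is sound.
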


There is a similar characterization for $\mathscr{P}$-transformable signatures.

\begin{definition}\label{definition:P1}
 A symmetric signature $f$ of arity $n$ is in $\mathscr{P}_1$ if
 there exist an $H \in \mathbf{O}_2(\mathbb{C})$ and a nonzero $c \in \mathbb{C}$ such that
 $f = c H^{\otimes n} \left(\tbcolvec{1}{ 1}^{\otimes n}
                    + \beta \tbcolvec{1}{-1}^{\otimes n}\right)$,
 where $\beta \neq 0$.
\end{definition}

We define $\mathscr{P}_2 = \mathscr{A}_2$.
For $k \in \{1,2\}$,
when such an $H$ exists,
we say that $f \in \mathscr{P}_k$ with transformation $H$.
If $f \in \mathscr{P}_k$ with $I_2$,
then we say $f$ is in the canonical form of $\mathscr{P}_k$.

\begin{lemma}[Lemma~8.13 in~\cite{CGW13}] \label{lem:cha:product}
 Let $f$ be a non-degenerate symmetric signature.
 Then $f$ is $\mathscr{P}$-transformable if and only if $f \in \mathscr{P}_1 \union \mathscr{P}_2$.
\end{lemma}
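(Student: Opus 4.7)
The plan is to handle the two directions separately, with the ``if'' direction being a routine explicit construction and the ``only if'' direction requiring a careful case analysis that exploits the isotropic vectors of the standard bilinear form on $\mathbb{C}^2$.

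For the ``if'' direction, I will exhibit an explicit witnessing $T \in \mathbf{GL}_2(\mathbb{C})$ for each family. If $f \in \mathscr{P}_1$, written as $f = cH^{\otimes n}([1,1]^{\otimes n} + \beta[1,-1]^{\otimes n})$ with $H \in \mathbf{O}_2(\mathbb{C})$ and $\beta \neq 0$, the choice $T = HH_2$ with $H_2 = \tfrac{1}{\sqrt{2}}\trans{1}{1}{1}{-1}$ works: since $H_2$ maps the standard basis to scalar multiples of $\tbcolvec{1}{1}$ and $\tbcolvec{1}{-1}$, one obtains $f \propto T^{\otimes n}[1,0,\dots,0,\beta]$, and $T$ is orthogonal so $[1,0,1]T^{\otimes 2} = [1,0,1] \in \mathscr{P}$. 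If $f \in \mathscr{P}_2 = \mathscr{A}_2$, Lemma~\ref{lem:single:A2} gives $f = c([1,i]^{\otimes n} + \beta[1,-i]^{\otimes n})$ with $\beta \neq 0$, and the witness is $T = Z = \tfrac{1}{\sqrt{2}}\trans{1}{1}{i}{-i}$, since $Ze_0 \propto \tbcolvec{1}{i}$ and $Ze_1 \propto \tbcolvec{1}{-i}$; a direct computation shows $[1,0,1]Z^{\otimes 2} = [0,1,0] \in \mathscr{P}$.

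For the ``only if'' direction, write $f = T^{\otimes n}g$ with $T \in \mathbf{GL}_2(\mathbb{C})$, $g \in \mathscr{P}$, and $[1,0,1]T^{\otimes 2} \in \mathscr{P}$. Since $f$ is non-degenerate so is $g$, and the classification of symmetric product-type signatures forces $g = [a,0,\dots,0,b]$ with $a,b \neq 0$, or $g = [0,1,0]$ at arity~$2$. With $u = Te_0$ and $v = Te_1$ the columns of $T$, we have $f = a\,u^{\otimes n} + b\,v^{\otimes n}$ in the first subcase and $f = u\otimes v + v\otimes u$ in the second. The constraint $[1,0,1]T^{\otimes 2} = [u^{\top}u,\, u^{\top}v,\, v^{\top}v] \in \mathscr{P}$ splits into three structural cases on the pair $(u,v)$: (a) $u^{\top}v = 0$ with $u^{\top}u, v^{\top}v \neq 0$; (b) $u^{\top}u = v^{\top}v = 0$ with $u^{\top}v \neq 0$; and (c) the binary signature is degenerate, i.e.~$(u^{\top}u)(v^{\top}v) = (u^{\top}v)^2$. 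In case (a), extracting square roots of $u^{\top}u$ and $v^{\top}v$ yields a factorization $T = HD$ with $H \in \mathbf{O}_2(\mathbb{C})$ and $D$ diagonal nonsingular, placing $f$ in $\mathscr{P}_1$ (for $g = [a,0,\dots,0,b]$) or in $\mathscr{P}_2$ (for $g = [0,1,0]$, using that every $H \in \mathbf{O}_2(\mathbb{C})$ either fixes or swaps the isotropic lines $\mathbb{C}\tbcolvec{1}{i}$ and $\mathbb{C}\tbcolvec{1}{-i}$). In case (b), those are the only isotropic lines in $\mathbb{C}^2$, and linear independence of $u,v$ combined with $u^{\top}v \neq 0$ forces them onto the two distinct lines, placing $f$ directly in $\mathscr{P}_2$ (or in $\mathscr{P}_1$ in the arity-$2$ subcase via a short expansion).

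The main obstacle is ruling out case (c) for invertible $T$. If $u^{\top}u = 0$, then $u^{\top}v = 0$ too, and because an isotropic vector $u \in \mathbb{C}^2$ spans its own orthogonal complement, $v \in \mathbb{C}u$, contradicting $\det T \neq 0$ (symmetrically for $v^{\top}v = 0$). Otherwise, expanding $u = c_1\tbcolvec{1}{i} + c_2\tbcolvec{1}{-i}$ and $v = d_1\tbcolvec{1}{i} + d_2\tbcolvec{1}{-i}$ in the isotropic basis, and using that each of $\tbcolvec{1}{i}$ and $\tbcolvec{1}{-i}$ is isotropic while their mutual bilinear pairing equals $2$, the degeneracy condition $(u^{\top}u)(v^{\top}v) = (u^{\top}v)^2$ collapses to $(c_1 d_2 - c_2 d_1)^2 = 0$, which up to a nonzero scalar is $(\det T)^2$, again contradicting invertibility. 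Eliminating case (c) closes the case analysis.
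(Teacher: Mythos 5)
Your proof is correct and follows the natural route — analyze the constraint $[1,0,1]T^{\otimes 2} = [u^{\top}u,\, u^{\top}v,\, v^{\top}v] \in \mathscr{P}$ (where $u,v$ are the columns of $T$), and exploit that the isotropic vectors of the standard bilinear form span exactly the lines through $\tbcolvec{1}{i}$ and $\tbcolvec{1}{-i}$. This is essentially the same underlying argument as in~\cite{CGW13}, which packages the relevant classification of $T$ into Proposition~\ref{prop:matrix_cha} (quoted here as Proposition~3.10): after normalizing by $\Stab{P}$ one reduces to $[1,0,1]T^{\otimes 2} \in \{[1,0,1],[0,1,0]\}$, which gives $T \in \mathbf{O}_2(\CC)$ or $T = \tfrac{1}{\sqrt{2}}H\trans{1}{1}{i}{-i}$ directly; you rederive those two shapes inline as your cases (a) and (b). The one place where your argument is more laborious than necessary is case (c): the degeneracy condition $(u^{\top}u)(v^{\top}v) = (u^{\top}v)^2$ says precisely that $T^{\top}T$ is singular, and $\det(T^{\top}T) = (\det T)^2 \neq 0$ kills this immediately, without needing to expand $u$ and $v$ in the isotropic basis (your computation that the obstruction is $(c_1d_2-c_2d_1)^2 \propto (\det T)^2$ is of course the same fact, reached by hand). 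With that observation, case (c) can be dispatched in one line before the case split even begins.
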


\subsection{Characterization of \texorpdfstring{$\mathscr{M}$}{M}-transformable Signatures}

Now we develop a similar theory for the $\mathscr{M}$-transformable signatures.
Recall from Definition~\ref{def:prelim:trans} that for a signature set $\mathcal{F}$ to be $\mathscr{M}$-transformable,
it must be that there exists a $T \in \mathbf{GL}_2(\mathbb{C})$ such that $[1,0,1] T^{\otimes 2} \in \mathscr{M}$.
Since $[1,0,1]$ is symmetric,
$[1,0,1] T^{\otimes 2}$ is also symmetric.
However,
it is unnecessary to consider all binary signatures in $\mathscr{M}$.
We can normalize via \emph{right} multiplication by elements in
\[
 \Stab{M} = \{T \in \mathbf{GL}_2(\mathbb{C}) \st T \mathscr{M} \subseteq \mathscr{M}\},
\]
the stabilizer group of $\mathscr{M}$.
Technically this set is the left stabilizer group of $\mathscr{M}$,
but it is easy to see that the left and right stabilizer groups of $\mathscr{M}$ coincide
and that they are generated by nonzero scalar multiples of matrices of the form $\left[\begin{smallmatrix} 1 & 0 \\ 0 & \nu \end{smallmatrix}\right]$
for any nonzero $\nu \in \mathbb{C}$ and $X = \left[\begin{smallmatrix} 0 & 1 \\ 1 & 0 \end{smallmatrix}\right]$.

After this normalization,
it is enough to consider cases~\ref{prop:matrix_cha:case_101} and~\ref{prop:matrix_cha:case_010} in the following proposition.

\begin{proposition}[Proposition~8.1 in~\cite{CGW13}] \label{prop:matrix_cha}
 Let $T \in \mathbb{C}^{2 \times 2}$ be a matrix.
 Then the following hold:
 \begin{enumerate}
  \item $[1,0,1] T^{\otimes 2} = [1,0,1]$ if and only if $T \in \mathbf{O}_2(\mathbb{C})$; \label{prop:matrix_cha:case_101}
  \item $[1,0,1] T^{\otimes 2} = [1,0,i]$ if and only if there exists an $H \in \mathbf{O}_2(\mathbb{C})$
  such that $T = H \left[\begin{smallmatrix} 1 & 0 \\ 0 & \alpha \end{smallmatrix}\right]$; \label{prop:matrix_cha:case_10a}
  \item $[1,0,1] T^{\otimes 2} = [0,1,0]$ if and only if there exists an $H \in \mathbf{O}_2(\mathbb{C})$
  such that $T = \frac{1}{\sqrt{2}} H \left[\begin{smallmatrix} 1 & 1 \\ i & -i \end{smallmatrix}\right]$. \label{prop:matrix_cha:case_010}
 \end{enumerate}
\end{proposition}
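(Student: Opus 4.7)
The plan is to translate the binary tensor equation $[1,0,1]\,T^{\otimes 2} = g$ into a matrix equation and then solve for $T$ in each case. Under the standard identification of a symmetric binary signature $[f_0, f_1, f_2]$ with the matrix $F = \left[\begin{smallmatrix} f_0 & f_1 \\ f_1 & f_2 \end{smallmatrix}\right]$, the operation $f \mapsto f\, T^{\otimes 2}$ (in the row-vector convention) corresponds to $F \mapsto T^{\top} F\, T$. Since $[1,0,1]$ corresponds to $I_2$, the three cases reduce to the matrix equations $T^{\top} T = I_2$, $T^{\top} T = \mathrm{diag}(1,i)$, and $T^{\top} T = \left[\begin{smallmatrix} 0 & 1 \\ 1 & 0 \end{smallmatrix}\right]$, respectively.

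Next, I would handle each case uniformly by exhibiting a reference matrix $M$ with the required value of $M^{\top} M$ and invoking the following observation: if $M$ is invertible and $T^{\top} T = M^{\top} M$, then $(T M^{-1})^{\top} (T M^{-1}) = I_2$, so $T M^{-1} \in \mathbf{O}_2(\mathbb{C})$, i.e., $T = H M$ for some orthogonal $H$. The forward direction is then immediate: for $T = HM$ with $H \in \mathbf{O}_2(\mathbb{C})$, we have $T^{\top} T = M^{\top} H^{\top} H M = M^{\top} M$.

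Concretely, the reference matrices I would use are $M = I_2$ for case~(\ref{prop:matrix_cha:case_101}); $M = \left[\begin{smallmatrix} 1 & 0 \\ 0 & \alpha \end{smallmatrix}\right]$ for case~(\ref{prop:matrix_cha:case_10a}), which gives $M^{\top} M = \mathrm{diag}(1, \alpha^2) = \mathrm{diag}(1, i)$; and $M = \tfrac{1}{\sqrt{2}} \left[\begin{smallmatrix} 1 & 1 \\ i & -i \end{smallmatrix}\right]$ for case~(\ref{prop:matrix_cha:case_010}), for which a direct $2\times 2$ computation yields $M^{\top} M = \tfrac{1}{2}\left[\begin{smallmatrix} 1 & i \\ 1 & -i \end{smallmatrix}\right]\left[\begin{smallmatrix} 1 & 1 \\ i & -i \end{smallmatrix}\right] = \left[\begin{smallmatrix} 0 & 1 \\ 1 & 0 \end{smallmatrix}\right]$. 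Each $M$ is manifestly invertible (the last has determinant $-i$), so the observation applies.

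There is no real obstacle here; the entire argument is a direct $2\times 2$ matrix computation. The only points requiring care are (i) keeping the conventions straight so that one obtains $T^{\top} T$ rather than $T T^{\top}$ from the row-vector action $f\, T^{\otimes 2}$, and (ii) recognizing that the characterization $T = HM$ reflects exactly the freedom in choosing a ``square root'' of the symmetric matrix $T^{\top} T$ modulo left multiplication by $\mathbf{O}_2(\mathbb{C})$.
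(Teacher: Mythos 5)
The paper quotes this as Proposition~8.1 from~\cite{CGW13} without reproducing a proof, so there is no in-text proof to compare against. Your argument is correct and self-contained: the identification of the symmetric binary signature $[f_0,f_1,f_2]$ with the symmetric matrix $F=\left[\begin{smallmatrix} f_0 & f_1 \\ f_1 & f_2\end{smallmatrix}\right]$ does indeed turn the row-vector action $f\mapsto f\,T^{\otimes 2}$ into $F\mapsto T^{\top}FT$, so the three cases become $T^{\top}T=I_2$, $T^{\top}T=\operatorname{diag}(1,i)$, and $T^{\top}T=\left[\begin{smallmatrix}0&1\\1&0\end{smallmatrix}\right]$, and the ``left coset of $\mathbf{O}_2(\mathbb{C})$'' observation then yields the stated characterizations, with your explicit reference matrices giving the right Gram matrices ($M^{\top}M=\operatorname{diag}(1,\alpha^2)=\operatorname{diag}(1,i)$ in case~2, and $M^{\top}M=\left[\begin{smallmatrix}0&1\\1&0\end{smallmatrix}\right]$ in case~3, both invertible). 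This is the natural argument and almost certainly the one the original reference uses; there is no gap.
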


\begin{lemma} \label{lem:match:trans}
 Let $\mathcal{F}$ be a set of signatures.
 Then $\mathcal{F}$ is $\mathscr{M}$-transformable if and only if
 $\mathcal{F} \subseteq \left[\begin{smallmatrix} 1 & 1 \\ i & -i \end{smallmatrix}\right] \mathscr{M}$ or
 there exists an $H \in \mathbf{SO}_2(\mathbb{C})$ such that $\mathcal{F} \subseteq H \mathscr{M}$.
\end{lemma}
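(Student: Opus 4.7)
The plan is to extract the form of $T$ from Proposition~\ref{prop:matrix_cha} and then absorb the leftover freedom into the stabilizer group $\Stab{M}$, which is generated by $X = \trans{0}{1}{1}{0}$ and scalar multiples of $\trans{1}{0}{0}{\nu}$ for nonzero $\nu$. Since right-multiplying $T$ by any element of $\Stab{M}$ leaves $T\mathscr{M}$ unchanged, this absorption is free. The ``if'' direction is then immediate: for $T = H \in \mathbf{SO}_2(\mathbb{C})$ one has $[1,0,1]H^{\otimes 2} = [1,0,1] \in \mathscr{M}$, while for $T = Z = \trans{1}{1}{i}{-i}$ a direct computation gives $[1,0,1]Z^{\otimes 2} = 2[0,1,0] \in \mathscr{M}$.

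For the ``only if'' direction, suppose $T \in \mathbf{GL}_2(\mathbb{C})$ witnesses $\mathscr{M}$-transformability. By Proposition~\ref{prop:match:symmetric}, every non-degenerate symmetric binary matchgate signature is a scalar multiple of either $[1,0,c]$ with $c\ne 0$ or $[0,1,0]$. Over $\mathbb{C}$ we can always choose $\nu$ with $\nu^2 = 1/c$, so right-multiplying $T$ by a suitable diagonal element of $\Stab{M}$ (and absorbing the overall scalar) lets us assume that $[1,0,1]T^{\otimes 2}$ is exactly $[1,0,1]$ or $[0,1,0]$. Proposition~\ref{prop:matrix_cha} then forces either $T \in \mathbf{O}_2(\mathbb{C})$ or $T = HZ/\sqrt{2}$ for some $H \in \mathbf{O}_2(\mathbb{C})$.

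It remains to put the orthogonal factor into the form claimed by the lemma. In the first case, any $H \in \mathbf{O}_2(\mathbb{C})$ factors as $H = H'X^k$ with $H' \in \mathbf{SO}_2(\mathbb{C})$ and $k \in \{0,1\}$, and since $X \in \Stab{M}$ we have $H\mathscr{M} = H'\mathscr{M}$, giving $\mathcal{F} \subseteq H'\mathscr{M}$ as required. In the second case, the main technical step is to show that $HZ\mathscr{M} = Z\mathscr{M}$ for every $H \in \mathbf{O}_2(\mathbb{C})$, equivalently $Z^{-1}HZ \in \Stab{M}$. For $H = \trans{a}{-b}{b}{a} \in \mathbf{SO}_2(\mathbb{C})$ with $a^2+b^2 = 1$, a short calculation yields $Z^{-1}HZ = \trans{a-ib}{0}{0}{a+ib}$, a diagonal matrix of determinant $(a-ib)(a+ib) = 1$ and hence a scalar multiple of $\trans{1}{0}{0}{\nu}$, which lies in $\Stab{M}$. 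The orientation-reversing case $H = H'X$ reduces to the $\mathbf{SO}_2(\mathbb{C})$ case once one checks, by a second short computation, that $Z^{-1}XZ$ also lies in $\Stab{M}$ (it expresses as the product of $X$ with a diagonal matrix).

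The main obstacle is this final absorption step: verifying that conjugation by $Z$ carries all of $\mathbf{O}_2(\mathbb{C})$ into $\Stab{M}$, which is what collapses the two-parameter family $HZ$ down to the single representative $Z$ in the statement. Everything else is either cited from earlier results or a routine bookkeeping step.
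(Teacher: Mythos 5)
Your proposal is correct and follows essentially the same route as the paper: both normalize the target binary signature to $[1,0,1]$ or $[0,1,0]$ via $\Stab{M}$, invoke Proposition~\ref{prop:matrix_cha} to pin down $T$, and then absorb the remaining orthogonal freedom into $\Stab{M}$ (the paper computes $HZ = Z\trans{a+bi}{0}{0}{a-bi}$ for $H\in\mathbf{SO}_2$ and $HZ = Z\trans{0}{a-bi}{a+bi}{0}$ for the orientation-reversing case, which is exactly your conjugation calculation). Your factorization $H = H'X$ is an equivalent, slightly more modular way of handling the orientation-reversing case than the paper's direct computation, but it is the same idea.
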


\begin{proof}
 Sufficiency is easily verified by checking that $=_2$ is transformed into $\mathscr{M}$ in both cases.
 In particular,
 $H$ leaves $=_2$ unchanged.
 
 If $\mathcal{F}$ is $\mathscr{M}$-transformable,
 then by definition,
 there exists a matrix $T$ such that $(=_2) T^{\otimes 2} \in \mathscr{M}$ and $\mathcal{F} \subseteq T \mathscr{M}$.
 The non-degenerate binary signatures in $\mathscr{M}$ are either $[0,1,0]$ or of the form $[1,0,\nu]$,
 up to a scalar.
 However,
 notice that $[1,0,1] = [1,0,\nu]
 \left[\begin{smallmatrix} 1 & 0 \\ 0 & \nu^{-\frac{1}{2}} \end{smallmatrix}\right]^{\otimes 2}$
 and $\left[\begin{smallmatrix} 1 & 0 \\ 0 & \nu^{-\frac{1}{2}} \end{smallmatrix}\right] \in \Stab{M}$.
 Thus,
 we only need to consider $[1,0,1]$ and $[0,1,0]$.
 Now we apply Proposition~\ref{prop:matrix_cha}.
 \begin{enumerate}
  \item If $(=_2) T^{\otimes 2} = [1,0,1]$, 
  then by case~\ref{prop:matrix_cha:case_101} of Proposition~\ref{prop:matrix_cha}, 
  we have $T \in \mathbf{O}_2(\mathbb{C})$.
  If $T \in \mathbf{SO}_2(\mathbb{C})$,
  then we are done with $H = T$.
  Otherwise,
  $T \in \mathbf{O}_2(\mathbb{C}) - \mathbf{SO}_2(\mathbb{C})$.
  We want to find an $H \in \mathbf{SO}_2(\mathbb{C})$ such that $\mathcal{F} \subseteq H \mathscr{M}$.
  Let $H = T \trans{1}{0}{0}{-1} \in \mathbf{SO}_2(\mathbb{C})$.
  Then
  \[
   \mathcal{F}
   \subseteq T \mathscr{M}
   = T \begin{bmatrix} 1 & 0 \\ 0 & -1 \end{bmatrix} \mathscr{M}
   = H \mathscr{M}
  \]
  since $\trans{1}{0}{0}{-1} \in \Stab{M}$.
  
  \item If $(=_2) T^{\otimes 2} = [0,1,0]$, 
  then by case~\ref{prop:matrix_cha:case_010} of Proposition~\ref{prop:matrix_cha}, 
  there exists an $H \in \mathbf{O}_2(\mathbb{C})$ such that $T = \frac{1}{\sqrt{2}} H \trans{1}{1}{i}{-i}$.
  Therefore $\mathcal{F} \subseteq H \trans{1}{1}{i}{-i} \mathscr{M}$.
  Furthermore,
  if $H = \trans{a}{b}{-b}{a} \in \mathbf{SO}_2(\mathbb{C})$,
  then $a^2 + b^2 = 1$ and
  \[
   \mathcal{F}
   \subseteq H \begin{bmatrix} 1 & 1 \\ i & -i \end{bmatrix} \mathscr{M}
   = \begin{bmatrix} 1 & 1 \\ i & -i \end{bmatrix} \begin{bmatrix} a + b i & 0 \\ 0 & a - b i \end{bmatrix} \mathscr{M}
   = \begin{bmatrix} 1 & 1 \\ i & -i \end{bmatrix} \mathscr{M}
  \]
  since $H \trans{1}{1}{i}{-i} = \trans{1}{1}{i}{-i} \trans{a+bi}{0}{0}{a-bi}$ and $\trans{a+bi}{0}{0}{a-bi} \in \Stab{M}$.
  Otherwise,
  $H = \trans{a}{b}{b}{-a} \in \mathbf{O}_2(\mathbb{C}) - \mathbf{SO}_2(\mathbb{C})$,
  so $a^2 + b^2 = 1$ and
  \[
   \mathcal{F}
   \subseteq H \begin{bmatrix} 1 & 1 \\ i & -i \end{bmatrix} \mathscr{M}
   = \begin{bmatrix} 1 & 1 \\ i & -i \end{bmatrix} \begin{bmatrix} 0 & a - b i \\ a + b i & 0 \end{bmatrix} \mathscr{M}
   = \begin{bmatrix} 1 & 1 \\ i & -i \end{bmatrix} \mathscr{M}
  \]
  since $H \trans{1}{1}{i}{-i} = \trans{1}{1}{i}{-i} \trans{0}{a-bi}{a+bi}{0}$ and $\trans{0}{a-bi}{a+bi}{0} \in \Stab{M}$.
  \qedhere
 \end{enumerate}
\end{proof}

We use four sets to characterize the $\mathscr{M}$-transformable signatures.
The notation $\Sym$ is from Definition~\ref{def:sym}.

\begin{definition} \label{def:trans:M}
 A symmetric signature $f$ of arity $n$ is in, respectively, $\mathscr{M}_1$, or $\mathscr{M}_2$, or $\mathscr{M}_3$, or $\mathscr{M}_4$
 if there exist an $H \in \mathbf{O}_2(\mathbb{C})$ and nonzero constants $c, \gamma \in \mathbb{C}$ such that $f$ has the form, respectively,
 $c H^{\otimes n} \left(\left[\begin{smallmatrix} 1 \\  1 \end{smallmatrix}\right]^{\otimes n}
                \pm i^n \left[\begin{smallmatrix} 1 \\ -1 \end{smallmatrix}\right]^{\otimes n}\right)$, or
 $c H^{\otimes n} \left(\left[\begin{smallmatrix} 1 \\  \gamma \end{smallmatrix}\right]^{\otimes n}
                    \pm \left[\begin{smallmatrix} 1 \\ -\gamma \end{smallmatrix}\right]^{\otimes n}\right)$, or
 $c H^{\otimes n} \Sym_n^{n-1}(\left[\begin{smallmatrix} 1 \\ 0 \end{smallmatrix}\right]; \left[\begin{smallmatrix} 0 \\  1 \end{smallmatrix}\right])$, or
 $c H^{\otimes n} \Sym_n^{n-1} (\left[\begin{smallmatrix} 1 \\ i \end{smallmatrix}\right];$ $\left[\begin{smallmatrix} 1 \\ -i \end{smallmatrix}\right])$.
\end{definition}

For $k \in \{1,2,3,4\}$,
when such an $H$ exists,
we say that $f \in \mathscr{M}_k$ with transformation $H$.
If $f \in \mathscr{M}_k$ with $I_2$,
then we say $f$ is in the canonical form of $\mathscr{M}_k$.

Notice that
$\{\left[\begin{smallmatrix} 1 \\  i \end{smallmatrix}\right],
   \left[\begin{smallmatrix} 1 \\ -i \end{smallmatrix}\right]\}$
is set-wise invariant under any transformation in $\mathbf{O}_2(\mathbb{C})$ up to nonzero constants.
Using this fact,
the following lemma gives a characterization of $\mathscr{M}_4$.
It says that any signature in $\mathscr{M}_4$ is essentially in canonical form.

\begin{lemma} \label{lem:trans:M4}
 Let $f$ be a symmetric signature of arity $n$.
 Then $f \in \mathscr{M}_4$ if and only if
 $f = c \Sym_n^{n-1}(\left[\begin{smallmatrix} 1 \\  i \end{smallmatrix}\right]; \left[\begin{smallmatrix} 1 \\ -i \end{smallmatrix}\right])$ or
 $f = c \Sym_n^{n-1}(\left[\begin{smallmatrix} 1 \\ -i \end{smallmatrix}\right]; \left[\begin{smallmatrix} 1 \\  i \end{smallmatrix}\right])$
 for some nonzero constant $c \in \mathbb{C}$.
\end{lemma}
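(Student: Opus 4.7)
The plan is to leverage the fact (noted just before the lemma) that the set $\{\tbcolvec{1}{i}, \tbcolvec{1}{-i}\}$ is set-wise invariant under $\mathbf{O}_2(\CC)$ up to nonzero scalars. Concretely, I would verify by direct $2\times 2$ computation that for any $H \in \mathbf{O}_2(\CC)$, exactly one of the following holds:
\begin{enumerate}
\item (Rotation case) $H \tbcolvec{1}{i} = \lambda \tbcolvec{1}{i}$ and $H \tbcolvec{1}{-i} = \mu \tbcolvec{1}{-i}$ for some nonzero $\lambda,\mu \in \CC$; this occurs when $H = \trans{a}{b}{-b}{a}$ with $a^2+b^2=1$, giving $\lambda = a-bi$, $\mu = a+bi$.
\item (Reflection case) $H \tbcolvec{1}{i} = \lambda \tbcolvec{1}{-i}$ and $H \tbcolvec{1}{-i} = \mu \tbcolvec{1}{i}$ for some nonzero $\lambda,\mu \in \CC$; this occurs when $H = \trans{a}{b}{b}{-a}$ with $a^2+b^2=1$, and one computes $\lambda = b+ai$, $\mu = a-bi$.
\end{enumerate}

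For the forward direction, suppose $f \in \mathscr{M}_4$, so that $f = c\, H^{\otimes n}\, \Sym_n^{n-1}(\tbcolvec{1}{i};\tbcolvec{1}{-i})$ for some $H \in \mathbf{O}_2(\CC)$ and nonzero $c$. From the definition of $\Sym$, tensor-linearity yields $H^{\otimes n} \Sym_n^{n-1}(v;w) = \Sym_n^{n-1}(Hv;Hw)$, and the symmetrization is degree $n-1$ in its first argument and degree $1$ in its second, so scaling $v$ by $\lambda$ and $w$ by $\mu$ introduces a factor $\lambda^{n-1}\mu$. Applying this in case~(1) gives $f = c\lambda^{n-1}\mu\, \Sym_n^{n-1}(\tbcolvec{1}{i};\tbcolvec{1}{-i})$, and in case~(2) gives $f = c\lambda^{n-1}\mu\, \Sym_n^{n-1}(\tbcolvec{1}{-i};\tbcolvec{1}{i})$. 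Since all scalar factors are nonzero, $f$ has one of the two claimed canonical forms after absorbing constants.

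For the backward direction, the form $c\,\Sym_n^{n-1}(\tbcolvec{1}{i};\tbcolvec{1}{-i})$ lies in $\mathscr{M}_4$ with transformation $H = I_2 \in \mathbf{O}_2(\CC)$ by definition. For the other form, I would exhibit an explicit $H \in \mathbf{O}_2(\CC)$ realizing the swap; the simplest choice is $H = \trans{1}{0}{0}{-1}$, which satisfies $H\tbcolvec{1}{i}=\tbcolvec{1}{-i}$ and $H\tbcolvec{1}{-i}=\tbcolvec{1}{i}$ exactly (with scalars $\lambda=\mu=1$), so $H^{\otimes n}\Sym_n^{n-1}(\tbcolvec{1}{i};\tbcolvec{1}{-i}) = \Sym_n^{n-1}(\tbcolvec{1}{-i};\tbcolvec{1}{i})$, which gives the second canonical form with the same constant $c$.

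There is no genuine obstacle here; the entire proof is a two-line argument once the stabilizer calculation for $\{\tbcolvec{1}{i},\tbcolvec{1}{-i}\}$ under $\mathbf{O}_2(\CC)$ is in hand, combined with multilinearity of $\Sym_n^{n-1}$. The only mild care needed is to keep track of the $\lambda^{n-1}\mu$ factor arising from the asymmetric degrees of the two arguments of $\Sym_n^{n-1}$, and to note that the two canonical forms are genuinely distinct (they correspond to $\mathbf{SO}_2(\CC)$-orbit vs.~its reflected coset), which is why the lemma lists both.
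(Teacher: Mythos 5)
Your proposal takes essentially the same route as the paper: split $H \in \mathbf{O}_2(\CC)$ into the $\mathbf{SO}_2$ and non-$\mathbf{SO}_2$ cases, observe that $H$ fixes or swaps the lines spanned by $\tbcolvec{1}{i}$ and $\tbcolvec{1}{-i}$, and push the resulting nonzero scalars through the degree-$(n{-}1,1)$ homogeneity of $\Sym_n^{n-1}$. The only blemish is that your explicit eigenvalues are slightly off --- for $H = \trans{a}{b}{-b}{a}$ one gets $H\tbcolvec{1}{i} = (a+bi)\tbcolvec{1}{i}$ (not $a-bi$), and for $H = \trans{a}{b}{b}{-a}$ one gets $H\tbcolvec{1}{i} = (a+bi)\tbcolvec{1}{-i}$ (not $b+ai$) --- but since the only thing used is that these scalars are nonzero (guaranteed by $a^2+b^2=1$), the argument is unaffected and the proof is correct.
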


\begin{proof}
 Suppose $f \in \mathscr{M}_4$,
 so that $f = c H^{\otimes n} \Sym_n^{n-1}(\left[\begin{smallmatrix} 1 \\ i \end{smallmatrix}\right]; \left[\begin{smallmatrix} 1 \\ -i \end{smallmatrix}\right])$.
 If $H \in \mathbf{SO}_2(\mathbb{C})$,
 then $H = \left[\begin{smallmatrix} a & b \\ -b & a\end{smallmatrix}\right]$ for some $a,b \in \mathbb{C}$ such that $a^2 + b^2 = 1$.
 Since $H \left[\begin{smallmatrix} 1 \\ i \end{smallmatrix}\right] = (a + b i) \left[\begin{smallmatrix} 1 \\ i \end{smallmatrix}\right]$
 and $H \left[\begin{smallmatrix} 1 \\ -i \end{smallmatrix}\right] = (a - b i) \left[\begin{smallmatrix} 1 \\ -i \end{smallmatrix}\right]$,
 it follows that
 $f = c (a + b i)^{n-1} (a - b i) \Sym_n^{n-1}(\left[\begin{smallmatrix} 1 \\ i \end{smallmatrix}\right]; \left[\begin{smallmatrix} 1 \\ -i \end{smallmatrix}\right])$.
 Otherwise,
 $H \in \mathbf{O}_2(\mathbb{C}) - \mathbf{SO}_2(\mathbb{C})$,
 so $H = \left[\begin{smallmatrix} a & b \\ b & -a\end{smallmatrix}\right]$ for some $a,b \in \mathbb{C}$ such that $a^2 + b^2 = 1$.
 Then
 $f = c (a + b i) (a - b i)^{n-1} \Sym_n^{n-1}(\left[\begin{smallmatrix} 1 \\ -i \end{smallmatrix}\right]; \left[\begin{smallmatrix} 1 \\ i \end{smallmatrix}\right])$.
 
 Now suppose
 $f = c \Sym_n^{n-1}(\left[\begin{smallmatrix} 1 \\ i \end{smallmatrix}\right]; \left[\begin{smallmatrix} 1 \\ -i \end{smallmatrix}\right])$
 or $f = c \Sym_n^{n-1}(\left[\begin{smallmatrix} 1 \\ -i \end{smallmatrix}\right]; \left[\begin{smallmatrix} 1 \\ i \end{smallmatrix}\right])$.
 The first case is already in the standard form of $\mathscr{M}_4$.
 In the second case,
 we pick $H = \left[\begin{smallmatrix} 1 & 0 \\ 0 & -1 \end{smallmatrix}\right] \in \mathbf{O}_2(\mathbb{C})$.
 Then $H^{\otimes n} f$ is in the standard form of $\mathscr{M}_4$.
\end{proof}

We further split $\mathscr{M}_4$ into $\mathscr{M}_4^\pm$ for future use.
Define $\mathscr{M}_4^\pm=\{f|f=c \Sym_n^{n-1}(\left[\begin{smallmatrix} 1 \\ \pm i \end{smallmatrix}\right];
\left[\begin{smallmatrix} 1 \\ \mp i \end{smallmatrix}\right])\}$.
In other words, $\mathscr{M}_4^+$ contains signatures of the form $Z^{\otimes n}[0,1,0,\dots,0]$
and $\mathscr{M}_4^-$ contains signatures of the form $Z^{\otimes n}[0,\dots,0,1,0]$ up to a scalar,
where $Z = \trans{1}{1}{i}{-i}$.
We will denote $[0,1,0,\dots,0]$ of arity $k$ by \exactone{k},
and $[0,\dots,0,1,0]$ of arity $k$ by \allbutone{k}.
Note that these are precisely the \textsc{Perfect Matching}
signatures and corresponding reversals.

\begin{figure}[t]
 \centering
 \begin{tikzpicture}[scale=1.8, transform shape]
  \def\P1Height{120pt}
  \node[draw, rectangle, rounded corners, very thick, fill=blue!20, fill opacity=1, text opacity=1, minimum height=\P1Height, minimum width=100pt]
  at (0,0)
  (P1)
  {
   \begin{minipage}[t][\P1Height]{16pt}
    $\mathscr{P}_1$
   \end{minipage}
  };
  
  \def\A1Height{85pt}
  \node[draw, rectangle, rounded corners, very thick, fill=red!40, fill opacity=1, text opacity=1, minimum height=\A1Height, minimum width=100pt]
  at (0,0)
  (A1)
  {
   \begin{minipage}[t][\A1Height]{15pt}
    $\mathscr{A}_1$
   \end{minipage}
  };
  
  \node[draw, rectangle, rounded corners, very thick, fill=violet!60, fill opacity=1, text opacity=1, minimum height=60pt, minimum width=60pt]
  at (0,0)
  (A1)
  {$\mathscr{M}_1$};
  
  \node[draw, rectangle, rounded corners, very thick, fill=blue!20, fill opacity=1, text opacity=1, minimum height=60pt, minimum width=60pt]
  at (6,0)
  (A1)
  {$\mathscr{A}_3$};
  
  \def\M2Height{40pt}
  \node[draw, rectangle, rounded corners, very thick, fill=yellow!40, fill opacity=0.5, text opacity=1, minimum height=\M2Height, minimum width=130pt]
  at (3,0)
  (A1)
  {
   \begin{minipage}[t][\M2Height]{17pt}
    $\mathscr{M}_2$
   \end{minipage}
  };
  
  \node[draw, rectangle, rounded corners, very thick, fill=blue!20, fill opacity=1, text opacity=1, minimum height=15pt, minimum width=50pt]
  at (3,0)
  (A2P2)
  {$\mathscr{A}_2 = \mathscr{P}_2$};
  
  \node[draw, rectangle, rounded corners, very thick, fill=green!20, fill opacity=1, text opacity=1, minimum height=30pt, minimum width=30pt]
  at (3,-2)
  (A1)
  {$\mathscr{M}_3$};
  
  \node[draw, rectangle, rounded corners, very thick, fill=green!20, fill opacity=1, text opacity=1, minimum height=30pt, minimum width=30pt]
  at (5,-2)
  (A1)
  {$\mathscr{M}_4$};
 \end{tikzpicture}
 \caption{Relationships among $\mathscr{A}_1$, $\mathscr{A}_2$, $\mathscr{A}_3$, $\mathscr{P}_1$, $\mathscr{P}_2$, $\mathscr{M}_1$, $\mathscr{M}_2$, $\mathscr{M}_3$, and $\mathscr{M}_4$.
 Note that $\mathscr{P}_1 \intersect \mathscr{M}_2 \subseteq \mathscr{A}_1$.}
 \label{fig:Mtrans:venn_diagram}
\end{figure}

Notice that $\mathscr{M}_1 \subset \mathscr{A}_1 \subset \mathscr{P}_1$ and $\mathscr{A}_2 = \mathscr{P}_2 \subset \mathscr{M}_2$.
See Figure~\ref{fig:Mtrans:venn_diagram} for a visual description of the relationships among sets.

Next we show that $\mathscr{M}_k$ for $k=1,2,3,4$ captures all $\mathscr{M}$-transformable signatures.

\begin{lemma} \label{lem:cha:Mtrans}
 Let $f$ be a non-degenerate symmetric signature.
 Then $f$ is $\mathscr{M}$-transformable if and only if $f \in \mathscr{M}_1 \union \mathscr{M}_2 \union \mathscr{M}_3 \union \mathscr{M}_4$.
\end{lemma}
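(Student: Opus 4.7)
By Lemma~\ref{lem:match:trans} applied to $\mathcal{F} = \{f\}$, the signature $f$ is $\mathscr{M}$-transformable if and only if either $f \in H\mathscr{M}$ for some $H \in \mathbf{SO}_2(\mathbb{C})$, or $f \in Z\mathscr{M}$ where $Z = \trans{1}{1}{i}{-i}$. In both directions the plan is to write $f = c\,T^{\otimes n} g$ with $T \in \{H, Z\}$ and $g \in \mathscr{M}$ symmetric and non-degenerate, then enumerate the four forms of $g$ from Proposition~\ref{prop:match:symmetric:decomp} and match them against the families $\mathscr{M}_1, \mathscr{M}_2, \mathscr{M}_3, \mathscr{M}_4$.

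For the $(\Leftarrow)$ direction I verify each family is $\mathscr{M}$-transformable. For $\mathscr{M}_2$ and $\mathscr{M}_3$, the inner signatures $\tbcolvec{1}{\gamma}^{\otimes n} \pm \tbcolvec{1}{-\gamma}^{\otimes n}$ and $\Sym_n^{n-1}(\tbcolvec{1}{0};\tbcolvec{0}{1}) = [0,*,0,\dotsc,0]$ are already matchgates (forms 1/2 and form 3 of Proposition~\ref{prop:match:symmetric:decomp}), while the outer orthogonal $H$ fixes $(=_2)$. For $\mathscr{M}_4$, Lemma~\ref{lem:trans:M4} reduces to the canonical form, which equals $Z^{\otimes n}[0,1,0,\dotsc,0]$ since $Z\tbcolvec{1}{0} = \tbcolvec{1}{i}$ and $Z\tbcolvec{0}{1} = \tbcolvec{1}{-i}$. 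For $\mathscr{M}_1$, a direct computation of $(Z^{-1})^{\otimes n}\bigl(\tbcolvec{1}{1}^{\otimes n} \pm i^n \tbcolvec{1}{-1}^{\otimes n}\bigr)$, using $Z^{-1}\tbcolvec{1}{1} \propto \tbcolvec{1}{i}$ and $Z^{-1}\tbcolvec{1}{-1} \propto \tbcolvec{1}{-i}$, yields a scalar multiple of $\tbcolvec{1}{i}^{\otimes n} \pm (-1)^n \tbcolvec{1}{-i}^{\otimes n}$, which is a matchgate of form 1 or 2.

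For $(\Rightarrow)$ I handle the two cases of Lemma~\ref{lem:match:trans}. When $T = H \in \mathbf{SO}_2(\mathbb{C})$: forms 1 and 2 of $g$ factor as $\lambda a^n\bigl(\tbcolvec{1}{\gamma}^{\otimes n} \pm \tbcolvec{1}{-\gamma}^{\otimes n}\bigr)$ with $\gamma = b/a \neq 0$ (non-degeneracy forces $a,b \neq 0$), placing $f \in \mathscr{M}_2$; forms 3 and 4 place $f \in \mathscr{M}_3$, using the reflection $\trans{0}{1}{1}{0}$ in form 4 to swap $\tbcolvec{1}{0}$ and $\tbcolvec{0}{1}$. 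When $T = Z$: forms 3 and 4 map under $Z$ directly to $\Sym_n^{n-1}(\tbcolvec{1}{i};\tbcolvec{1}{-i})$ and $\Sym_n^{n-1}(\tbcolvec{1}{-i};\tbcolvec{1}{i})$, landing in $\mathscr{M}_4^{\pm}$. The technical heart, which I expect to be the main obstacle, is $T = Z$ with $g = \lambda\bigl(\tbcolvec{a}{b}^{\otimes n} \pm \tbcolvec{a}{-b}^{\otimes n}\bigr)$, $a, b \neq 0$: setting $u = Z\tbcolvec{a}{b}$ and $v = Z\tbcolvec{a}{-b}$, a calculation that exploits the $i$ in the bottom row of $Z$ shows $u \cdot u = 4ab = -(v \cdot v) \neq 0$ and $u \cdot v = 0$, so $u,v$ are non-isotropic and bilinear-orthogonal. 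Normalizing gives $\{\tilde u, \tilde v\}$ with $\tilde u \cdot \tilde u = \tilde v \cdot \tilde v = 1$ and $\tilde u \cdot \tilde v = 0$, so $H = [\tilde u \mid \tilde v] \in \mathbf{O}_2(\mathbb{C})$; tracking scalars via $(is)^n = i^n s^n$ with $s^2 = 4ab$ yields $f \propto H^{\otimes n}[1,0,\dotsc,0,\pm i^n]$, and $[1,0,\dotsc,0,\pm i^n]$ itself lies in $\mathscr{M}_1$ via the orthogonal $\tfrac{1}{\sqrt{2}}\trans{1}{1}{1}{-1}$. The delicate points will be the bilinear-orthogonality identities (which rely on the $i$ in $Z$), the branch choice when normalizing by $\sqrt{4ab}$, and verifying that the sign $\pm i^n$ produced is exactly the $\pm i^n$ in Definition~\ref{def:trans:M}.
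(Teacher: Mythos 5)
Your proof is correct and follows essentially the same route as the paper: it invokes Lemma~\ref{lem:match:trans} to split into the $H\mathscr{M}$ ($H\in\mathbf{SO}_2(\mathbb{C})$) and $Z\mathscr{M}$ cases, then matches each of the four forms from Proposition~\ref{prop:match:symmetric:decomp} to one of $\mathscr{M}_1,\dotsc,\mathscr{M}_4$. The only difference is in the $\mathscr{M}_1$ case, where the paper writes down an explicit orthogonal-up-to-scalar matrix $\frac{1-i}{2}\trans{u}{v}{v}{-u}$ with $u=a+bi$, $v=i(a-bi)$, whereas you derive the required orthogonal transformation by checking that $Z\tbcolvec{a}{b}$ and $Z\tbcolvec{a}{-b}$ are bilinear-orthogonal of norms $\pm 4ab$, normalizing them, and composing with the Hadamard $\frac{1}{\sqrt{2}}\trans{1}{1}{1}{-1}$ — a slightly more conceptual route to the same conclusion.
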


\begin{proof}
 Assume that $f$ is $\mathscr{M}$-transformable of arity $n$.
 By applying Lemma~\ref{lem:match:trans} to $\{f\}$,
 we have $f \in \left[\begin{smallmatrix} 1 & 1 \\ i & -i \end{smallmatrix}\right] \mathscr{M}$ or
 there exists an $H \in \mathbf{SO}_2(\mathbb{C})$ such that $f \in H \mathscr{M}$.
 Proposition~\ref{prop:match:symmetric:decomp} lists the symmetric signatures in $\mathscr{M}$.
 Since we are only interested in non-degenerate signatures,
 we only consider $a$, $b$, and $\lambda$ that are nonzero.
 Now we consider the possible cases.
 \begin{enumerate}
  \item Suppose $f \in \trans{1}{1}{i}{-i} \mathscr{M}$.
  \begin{itemize}
   \item Further suppose
   $f = \trans{1}{1}{i}{-i}^{\otimes n} \left(\tbcolvec{a}{b}^{\otimes n} \pm \tbcolvec{a}{-b}^{\otimes n}\right)$
   for some nonzero $a,b \in \mathbb{C}$.
   Let $T = \frac{1-i}{2} \trans{u}{v}{v}{-u}$,
   where $u = a + b i$ and $v = i (a - b i)$.
   Then $f = T^{\otimes n} \left(\tbcolvec{1}{1}^{\otimes n} \pm i^n \tbcolvec{1}{-1}^{\otimes n}\right)$.
   Since $T \in \mathbf{O}_2(\mathbb{C})$ up to a nonzero factor of $\sqrt{2 a b}$,
   we have $f \in \mathscr{M}_1$.
   
   \item Further suppose
   $f = \lambda \trans{1}{1}{i}{-i}^{\otimes n} \Sym_n^{n-1}(\tbcolvec{1}{0}; \tbcolvec{0}{1})$
   for some nonzero $\lambda \in \mathbb{C}$.
   Then we have $f = \lambda \Sym_n^{n-1}(\tbcolvec{1}{i}; \tbcolvec{1}{-i})$,
   so $f \in \mathscr{M}_4$.
   
   \item Further suppose
   $f = \lambda \trans{1}{1}{i}{-i}^{\otimes n} \Sym_n^{n-1}(\tbcolvec{0}{1}; \tbcolvec{1}{0})$
   for some nonzero $\lambda \in \mathbb{C}$.
   Then we have $f = \lambda \Sym_n^{n-1}(\tbcolvec{1}{-i}; \tbcolvec{1}{i})$,
   so $f \in \mathscr{M}_4$ by Lemma~\ref{lem:trans:M4}.
  \end{itemize}
  
  \item Suppose $f \in H \mathscr{M}$.
  \begin{itemize}
   \item Further suppose
   $f = H^{\otimes n} \left(\tbcolvec{a}{b}^{\otimes n} \pm \tbcolvec{a}{-b}^{\otimes n}\right)$
   for some nonzero $a,b \in \mathbb{C}$.
   Then we have
   $f = a^n H^{\otimes n} \left(\tbcolvec{1}{\gamma}^{\otimes n} \pm \tbcolvec{1}{-\gamma}^{\otimes n}\right)$,
   where $\gamma = \frac{b}{a}$,
   so $f \in \mathscr{M}_2$.
   
   \item Further suppose
   $f = \lambda H^{\otimes n} \Sym_n^{n-1}(\tbcolvec{1}{0}; \tbcolvec{0}{1})$
   for some nonzero $\lambda \in \mathbb{C}$.
   Then $f \in \mathscr{M}_3$.
   
   \item Further suppose
   $f = \lambda H^{\otimes n} \Sym_n^{n-1}(\tbcolvec{0}{1}; \tbcolvec{1}{0})$
   for some nonzero $\lambda \in \mathbb{C}$.
   Let $H' = H \trans{0}{1}{1}{0} \in \mathbf{O}_2(\mathbb{C})$.
   Then we have $f = \lambda H'^{\otimes n} \Sym_n^{n-1}(\tbcolvec{1}{0}; \tbcolvec{0}{1})$,
   so $f \in \mathscr{M}_3$.
  \end{itemize}
 \end{enumerate}

 Conversely, if there exists a matrix $H \in \mathbf{O}_2(\mathbb{C})$ such that
 $H^{\otimes n} f$ is in one of the canonical forms of $\mathscr{M}_1$, $\mathscr{M}_2$, $\mathscr{M}_3$, or $\mathscr{M}_4$,
 then one can directly check that $f$ is $\mathscr{M}$-transformable by Definition \ref{def:prelim:trans}.
 In fact, the transformations that we applied above are all invertible,
 except for $\mathscr{M}_1$,
 if the given orthogonal transformation is of the form $\trans{u}{-v}{v}{u}$,
 we do $\trans{1}{0}{0}{-1}$ first followed by $\trans{u}{v}{v}{-u}$.
\end{proof}

Furthermore, we show that a nontrivial signature $f$
in the set $\mathscr{M}_3$ is not $\mathscr{A}$- or $\mathscr{P}$-transformable.
Moreover, the only transformation to make $f$ in $\mathscr{M}$ is very restricted.
This is for future use.

\begin{lemma} \label{lem:M3:transformation}
  Let $f\in\mathscr{M}_3$ be a non-degenerate signature of arity $n \ge 3$
  with $H\in\mathbf{O}_2(\mathbb{C})$.
  Then $f$ is not $\mathscr{A}$- or $\mathscr{P}$-transformable.
  Moreover, $f$ is $\mathscr{M}$-transformable with only $HD$ or $H\trans{0}{1}{1}{0}D$
  for some diagonal matrix $D$.
\end{lemma}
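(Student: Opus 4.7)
The plan is to first reduce to canonical form. Since $H \in \mathbf{O}_2(\mathbb{C})$ preserves $=_2$, one checks that $T \mapsto H^{-1}T$ is a bijection between transformations witnessing $\mathscr{M}$-transformability (respectively $\mathscr{A}$- or $\mathscr{P}$-transformability) of $f$ and of $g := \Sym_n^{n-1}([1,0]^{\mathsf{T}};[0,1]^{\mathsf{T}})$, which in symmetric signature notation is (up to a nonzero scalar) the arity-$n$ signature $[0,1,0,\ldots,0]$. Hence it suffices to prove that $g$ is not $\mathscr{A}$- or $\mathscr{P}$-transformable, and that every $T$ witnessing $\mathscr{M}$-transformability of $g$ has the form $D$ or $XD$ with $D$ diagonal and $X = \trans{0}{1}{1}{0}$; this translates back to $HD$ or $HXD$ for $f$.

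For the first part, by Lemmas~\ref{lem:cha:affine} and~\ref{lem:cha:product} (and $\mathscr{P}_2 = \mathscr{A}_2$) it suffices to show $g \notin \mathscr{A}_1 \cup \mathscr{A}_2 \cup \mathscr{A}_3 \cup \mathscr{P}_1$. Each signature in this union has the form $c K^{\otimes n}(u^{\otimes n} + \lambda v^{\otimes n})$ with $K \in \mathbf{O}_2(\mathbb{C})$, whose symmetric tensor rank is at most $2$. Since $g$ (corresponding to the binary form $n x^{n-1} y$) has symmetric tensor rank $n \ge 3$, this is a contradiction. A self-contained alternative: assuming $[0,1,0,\ldots,0] = [a,b]^{\otimes n} \pm [c,d]^{\otimes n}$ and matching the entries at Hamming weights $0$ and $2$ forces $c = a/\omega$ and $d = \mp b/\omega$ with $\omega^n = \pm 1$; the entry at weight $3$ then yields a contradiction for $n \ge 3$.

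For the second part, I first classify all $T \in \mathbf{GL}_2(\mathbb{C})$ with $(=_2) T^{\otimes 2} \in \mathscr{M}$. The non-degenerate symmetric binary matchgate signatures are $[a,0,b]$ and $[0,1,0]$ up to scalar; combining Proposition~\ref{prop:matrix_cha} with right-multiplication by a diagonal element of $\Stab{M}$ leaves two cases: (i) $T = H'D$ with $H' \in \mathbf{O}_2(\mathbb{C})$ and $D$ diagonal, or (ii) $T = \tfrac{1}{\sqrt{2}} H' Z$ with $Z = \trans{1}{1}{i}{-i}$. Next I impose $T^{-1} g \in \mathscr{M}$. After absorbing $\Stab{M}$-factors, the condition becomes $K^{\otimes n} g \in \mathscr{M}$ for an appropriate invertible $K$. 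Since $g$ is non-degenerate of symmetric tensor rank $n$, so is $K^{\otimes n} g$, which rules out cases~1 and~2 of Proposition~\ref{prop:match:symmetric:decomp} and forces $K^{\otimes n} g = \lambda \Sym_n^{n-1}([1,0];[0,1])$ or $\lambda \Sym_n^{n-1}([0,1];[1,0])$. Writing $K = \trans{p}{q}{r}{s}$ and using $K^{\otimes n} g = \Sym_n^{n-1}([p,r];[q,s])$, these force $K$ to be diagonal or anti-diagonal. In case~(i), orthogonality of $H'$ then gives exactly $T \in \{D_0, X D_0\}$, the desired conclusion.

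The main obstacle is ruling out case~(ii). Here we would need $Z^{-1} H'^{-1}$ to be diagonal or anti-diagonal, equivalently $H'^{-1} = Z M$ for such an $M$, while simultaneously $H'^{-1} \in \mathbf{O}_2(\mathbb{C})$. A direct computation with $M = \trans{d_1}{0}{0}{d_2}$ gives
\[
(ZM)(ZM)^{\mathsf{T}} = \begin{bmatrix} d_1^2 + d_2^2 & i(d_1^2 - d_2^2) \\ i(d_1^2 - d_2^2) & -(d_1^2 + d_2^2) \end{bmatrix},
\]
whose $(1,1)$ and $(2,2)$ entries are negatives of each other, so this matrix cannot equal $I_2$; the anti-diagonal subcase gives the same obstruction. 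Hence case~(ii) yields no valid $T$, completing the characterization.
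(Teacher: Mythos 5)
Your proof is correct, but it takes a genuinely different route from the paper's on the key step, and it introduces a detour that the paper avoids.

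For the first part, the paper argues that every $\mathscr{A}$- or $\mathscr{P}$-transformable signature satisfies a second-order recurrence with distinct eigenvalues (citing Lemmas 6.2, 6.15, 7.2 of~\cite{CGW14a}), that this property is preserved by holographic transformations, and that $\exactone{n}$ fails it for $n\geq 3$. You instead argue via symmetric tensor (Waring) rank: $\exactone{n}$ corresponds to the binary form $n x^{n-1}y$, whose Waring rank is $n\geq 3$, while any signature in $\mathscr{A}_1\cup\mathscr{A}_2\cup\mathscr{A}_3\cup\mathscr{P}_1$ has rank at most $2$. Both invocations are of standard but nontrivial facts; your version has the advantage of being self-contained relative to this paper (it needs no lemma from~\cite{CGW14a}), and the same rank argument is cleanly reused in the second part to discard cases~1 and~2 of Proposition~\ref{prop:match:symmetric:decomp}, where the paper again relies on the recurrence characterization. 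Your ``self-contained alternative'' sketch is essentially right but slightly under-specified (it does not separately dispatch the $b=0$ subcase, where weight-2 matching is vacuous); since the rank argument is your main one, this is cosmetic.

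For the second part, once you reduce to $(T^{-1})^{\otimes n}g$ being proportional to $\Sym_n^{n-1}([1,0];[0,1])$ or its reversal, your computation forcing $T^{-1}$ (equivalently $K$) to be diagonal or anti-diagonal is exactly the first/last-entry argument the paper uses with $J=T^{-1}H$. However, you first run the Proposition~\ref{prop:matrix_cha} classification of matrices $T$ with $(=_2)T^{\otimes 2}\in\mathscr{M}$, which creates the extraneous case~(ii) $T=\tfrac{1}{\sqrt{2}}H'Z$ that you then rule out by showing $(ZM)(ZM)^{\mathsf T}\neq I_2$. This computation is correct, but the detour is unnecessary: the rank argument alone already pins $T^{-1}$ down to diagonal or anti-diagonal, and for such $T$ one verifies the binary condition $(=_2)T^{\otimes 2}\in\mathscr{M}$ in one line (it produces $[a,0,b]$ or $[0,b,0]$). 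The paper's proof never touches Proposition~\ref{prop:matrix_cha} here and is accordingly shorter. So the conclusion is the same, but your route spends extra effort first constraining $T$ by the binary condition and then by the $n$-ary one, rather than the other way around.
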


\begin{proof}
  Suppose $f=[f_0,f_1,\dots,f_n]$.
  If $f$ is $\mathscr{A}$- or $\mathscr{P}$-transformable,
  then $f$ has to satisfy a second order recurrence relation
  that $a f_i+ bf_{i+1} +c f_{i+2}=0$,
  for $a,b,c\in\mathbb{C}$ such that not all $a,b,c$ are $0$ and $b^2-4ac\neq 0$.
  In other words, the second order recurrence relation has to have distinct eigenvalues.
  This is due to Lemma 6.15 or Lemma 7.2 in \cite{CGW14a}.
  Moreover, this property is preserved by holographic transformations (cf.\ Lemma 6.2 in \cite{CGW14a}).
  However, $f$ is in $\mathscr{M}_3$.
  Hence $f=H^{\otimes n}\exactone{n}$ for some $H\in\mathbf{O}_2(\mathbb{C})$
  up to a nonzer factor.
  On the other hand,
  $\exactone{n}$ does not satisfy a second recurrence with distinct eigenvalues if $n \ge 3$,
  a contradiction.

  Moreover, notice that the only signatures in $\mathscr{M}$
  that do not satisfy such second order recurrence relations
  are \exactone{k} and \allbutone{k} functions.
  If $f$ is $\mathscr{M}$-transformable,
  then there exists a transformation $T$ such that $f=T^{\otimes n}g$ for some $g\in\mathscr{M}$
  and $[1,0,1]T^{\otimes 2}\in\mathscr{M}$.
  Hence $g=\exactone{n}$ or $\allbutone{n}$.
  On the other hand $f=H^{\otimes n}\exactone{n}$ up to a nonzer factor.
  Therefore $(T^{-1}H)^{\otimes n}\exactone{n}=\exactone{n}$ or $\allbutone{n}$ up to a nonzer factor.

  Let $J = T^{-1} H = \trans{x}{y}{z}{w}$ and let $h = J^{\otimes n} \exactone{n}$.  
  As $\exactone{n}=\Sym_n^{n-1}(\tbcolvec{1}{0}; \tbcolvec{0}{1})$,
  $h=(\trans{x}{y}{z}{w})^{\otimes n}\exactone{n}=\Sym_n^{n-1}(\tbcolvec{x}{z}; \tbcolvec{y}{w})$.
  The first and last entries of $h$ are $x^{n-1}y$ and $z^{n-1}w$.
  As $h=\exactone{n}$ or $\allbutone{n}$,
  we have that $x^{n-1}y=z^{n-1}w=0$.
  It is easy to see that $x$ and $z$, or $y$ and $w$ cannot be both $0$.
  Then $x=w=0$ or $y=z=0$.
  This implies that $J=D$ or $J=D\trans{0}{1}{1}{0}$
  for some diagonal matrix $D$.
  Thus $T=HJ^{-1}=HD^{-1}$ or $H\trans{0}{1}{1}{0} D^{-1}$.
\end{proof}

Let $g = [x,y,0, \dotsc, 0,z]$ have arity $n \ge 3$,
where $xyz \neq 0$.
As an example of the theory developed in this section,
we discuss the signature $Z^{\otimes n} g$
in the following lemma, which will be used in
Lemma~\ref{lem:a1000b} in  the proof of the single signature
dichotomy  Theorem~\ref{thm:dic:single}.

\begin{lemma} \label{lem:Zg:not-transformable}
Let $n \ge 3$, $g = [x,y,0, \dotsc, 0,z]$ have arity $n$
and $xyz \neq 0$.
Then the signature $Z^{\otimes n}g$
is neither $\mathscr{A}$-, $\mathscr{P}$-, $\mathscr{M}$-transformable,
nor vanishing.
\end{lemma}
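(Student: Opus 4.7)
The plan is to pull $f = Z^{\otimes n} g$ back via $Z^{-1}$, so that $\hat{f} := (Z^{-1})^{\otimes n} f = g = [x,y,0,\dots,0,z]$, and then rule out each tractable form by checking incompatibility with this explicit shape of $g$.

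For vanishing, I apply Lemma~\ref{lem:prelim:vanishing_form_in_Z_basis} to $\hat{f} = g$: since $\hat{f}_0 = x \neq 0$ and $\hat{f}_n = z \neq 0$, we get $\rd^+(f) = \rd^-(f) = n$, hence by Corollary~\ref{cor:van:degree} $\vd^+(f) = \vd^-(f) = 0$. Thus $f \notin \mathscr{V}^+ \cup \mathscr{V}^-$, and Theorem~\ref{thm:van} rules out $\{f\}$ being vanishing.

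For transformability, Lemmas~\ref{lem:cha:affine}, \ref{lem:cha:product}, and~\ref{lem:cha:Mtrans} reduce the task to ruling out membership in each of $\mathscr{A}_1, \mathscr{A}_2, \mathscr{A}_3, \mathscr{P}_1, \mathscr{M}_1, \mathscr{M}_2, \mathscr{M}_3, \mathscr{M}_4$. Three classes are immediate: Lemma~\ref{lem:single:A2} forces $g = c[1,0,\dots,0,\beta]$ under $\mathscr{A}_2$, contradicting $y \neq 0$; Lemma~\ref{lem:trans:M4} forces $g \propto \exactone{n}$ or $\allbutone{n}$ under $\mathscr{M}_4$, contradicting $x \neq 0$; and under $\mathscr{M}_3$, writing $f = c H^{\otimes n}\exactone{n}$ gives $g = c \Sym_n^{n-1}(v_1; v_2)$ with $v_i = Z^{-1} H e_i$, whose explicit entries yield $k v_{2,1} v_{1,0} + (n-k) v_{1,1} v_{2,0} = 0$ for every $2 \leq k \leq n-1$. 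For $n \geq 4$, subtracting the relations at two distinct values of $k$ forces $v_1, v_2$ proportional (hence $g$ degenerate); for $n = 3$, the lone relation at $k = 2$ combined with the orthogonality of $H$---which via $Z^{T} Z = \trans{0}{2}{2}{0}$ gives $v_{i,0} v_{i,1} = 1/4$ and $v_{1,0} v_{2,1} + v_{1,1} v_{2,0} = 0$---forces $v_{1,0} v_{2,1} = 0$, contradicting $v_{i,0} v_{i,1} = 1/4$.

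The remaining five classes $\mathscr{A}_1, \mathscr{A}_3, \mathscr{P}_1, \mathscr{M}_1, \mathscr{M}_2$ all take the form $f = c H^{\otimes n}(\tbcolvec{1}{a}^{\otimes n} + \beta \tbcolvec{1}{-a}^{\otimes n})$, so $g = c(v_1^{\otimes n} + \beta v_2^{\otimes n})$ with $v_i = Z^{-1} H \tbcolvec{1}{\pm a}$. For $n \geq 4$, a short case analysis rules out any $v_{i,j} = 0$ using $g_0, g_1, g_n \neq 0$, and then $g_2 = g_3 = 0$ forces the ratios $v_{1,0}/v_{2,0}$ and $v_{1,1}/v_{2,1}$ to coincide, making $v_1, v_2$ proportional and $g$ degenerate, a contradiction. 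The main obstacle is $n = 3$, where only $g_2 = 0$ holds in the middle and $v_1, v_2$ can be independent. Here I set $r_i = v_{i,1}/v_{i,0}$ and combine the four equations $g_0 = x, g_1 = y, g_2 = 0, g_3 = z$ with the orthogonality $H^T H = I_2$---which via $Z^T Z$ yields $v_{i,0}^2 r_i = (1+a^2)/4$ and $v_{1,0} v_{2,0}(r_1 + r_2) = (1-a^2)/2$---to derive $r_1 + r_2 = -xz/y^2$, $r_1 r_2 = -z/y$, and $\beta^2 = r_1/r_2$. For $a^2 = 1$ (classes $\mathscr{A}_1, \mathscr{P}_1, \mathscr{M}_1$) the orthogonality identity forces $r_1 + r_2 = 0$, hence $xz = 0$, contradicting $xz \neq 0$. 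For $\mathscr{A}_3$ (where $\beta = i^r$, so $\beta^2 \in \{\pm 1\}$) and $\mathscr{M}_2$ (where $\beta = \pm 1$, so $\beta^2 = 1$), combining $\beta^2 = r_1/r_2$ with the allowed $\beta$-values yields $r_1/r_2 \in \{\pm 1\}$: the case $r_1 = r_2$ makes $v_1, v_2$ proportional and $g$ degenerate, while $r_1 = -r_2$ again forces $r_1 + r_2 = 0$ and $xz = 0$; both subcases give a contradiction.
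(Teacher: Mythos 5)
Your proof is correct, but it takes a genuinely different route from the paper's.

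The paper's proof pivots on a structural observation: if $H \in \mathbf{O}_2(\mathbb{C})$, then $H^{-1}Z$ equals $ZD$ or $ZD\left[\begin{smallmatrix}0&1\\1&0\end{smallmatrix}\right]$ for some nonsingular diagonal $D$. Consequently, whenever $Z^{\otimes n}g = cH^{\otimes n}f$ for $f$ in canonical form, the signature $(Z^{-1})^{\otimes n}f$ must equal a $g'$ of the same zero-pattern as $g$ (or its reversal) up to scalars. Since any $f \in \mathscr{P}_1 \cup \mathscr{M}_2 \cup \mathscr{A}_3$ satisfies a second order recurrence, and that property is preserved under holographic transformation, $g'$ would have to satisfy one too — which for $n\ge 4$ is instantly contradicted by the zero pattern $[x',y',0,\dotsc,0,z']$ with $x'y'z'\neq 0$. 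The cases $n=3$ (where a second order recurrence always exists) and $\mathscr{M}_3$, $\mathscr{M}_4$ are then dispatched by short ad hoc computations with the specific canonical forms.

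You instead parametrize directly: you set $v_i = Z^{-1}Hw_i$, expand $g = c(v_1^{\otimes n} + \beta v_2^{\otimes n})$ (or $c\Sym_n^{n-1}(v_1;v_2)$ for $\mathscr{M}_3$) entry by entry, and pull $H^TH = I_2$ through $Z^TZ \propto \left[\begin{smallmatrix}0&1\\1&0\end{smallmatrix}\right]$ to obtain the bilinear identities $v_{i,0}v_{i,1} = (1+a^2)/4$ and $v_{1,0}v_{2,1}+v_{1,1}v_{2,0} = (1-a^2)/2$. For $n\ge 4$ the two interior zeros $g_2 = g_3 = 0$ directly force $v_1 \parallel v_2$, contradicting non-degeneracy of $g$ (visible from $g_0g_2 = 0 \neq g_1^2$). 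For $n=3$ you extract, via Newton-type identities from $g_0,\dotsc,g_3$, the relations $r_1+r_2 = -xz/y^2$, $r_1r_2 = -z/y$, and $\beta^2 = r_1/r_2$, and then the orthogonality constraints close off each tractable class. Both routes are valid; the paper's $H^{-1}Z$ factorization is slicker (it is also re-used in the $\mathscr{M}_3$ and $\mathscr{M}_4$ arguments) and makes the $n\ge 4$ case essentially one line, while your approach treats all the tensor-sum classes uniformly through one parametrization at the cost of more algebra at $n=3$. One small point of hygiene: your treatment of the five classes $\mathscr{A}_1, \mathscr{A}_3, \mathscr{P}_1, \mathscr{M}_1, \mathscr{M}_2$ is slightly redundant, since $\mathscr{M}_1 \subset \mathscr{A}_1 \subset \mathscr{P}_1$ lets you collapse the $a^2=1$ cases to $\mathscr{P}_1$ alone, as the paper notes; but you do group them in the $a^2=1$ branch, so in practice this costs nothing.
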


\begin{remark}
By Theorem~\ref{thm:PlHolant:arity34}, for arity $n = 3$ or $4$,
Lemma~\ref{lem:Zg:not-transformable} implies that $\PlHolant(Z^{\otimes n}g)$
is $\SHARPP$-hard. After we have proved Theorem~\ref{thm:dic:single},
this lemma will imply that $\PlHolant(Z^{\otimes n}g)$
is $\SHARPP$-hard for all $n \ge 3$.
\end{remark}

\begin{proof}
That $Z^{\otimes n}g$ is not vanishing follows from Lemma~\ref{lem:prelim:vanishing_form_in_Z_basis} combined with Corollary~\ref{cor:van:degree}
and Theorem~\ref{thm:van}.
To show that $Z^{\otimes n}g$ is not 
$\mathscr{A}$-, $\mathscr{P}$-, $\mathscr{M}$-transformable,
we only need to show that $Z^{\otimes n}g \not \in 
\mathscr{P}_1 \cup \mathscr{M}_2 \cup \mathscr{A}_3 \cup \mathscr{M}_3 \cup \mathscr{M}_4$
by Lemma~\ref{lem:cha:affine},~\ref{lem:cha:product} and \ref{lem:cha:Mtrans},
and the fact that $\mathscr{M}_1 \subset \mathscr{A}_1 \subset \mathscr{P}_1$ and $\mathscr{A}_2 = \mathscr{P}_2 \subset \mathscr{M}_2$. 
See Figure~\ref{fig:Mtrans:venn_diagram}.

We first show that $Z^{\otimes n}g \not \in
\mathscr{P}_1 \cup \mathscr{M}_2 \cup \mathscr{A}_3$.
We say a signature $f=[f_0, f_1, \dotsc,f_n]$
satisfies a second order recurrence of type $\langle a,b,c \rangle$
if $af_k-bf_{k+1}+cf_{k+2}=0$  for $1\leq k\leq n-2$,
for some $a, b$ and $c$ not all zero.
Suppose $Z^{\otimes n}g$ is a nonzer constant multiple of 
$H f \in \mathscr{P}_1 \cup \mathscr{M}_2 \cup \mathscr{A}_3$
in the forms given in Definitions~\ref{definition:P1}, \ref{def:trans:M} 
and \ref{def:A123}, then
 $f$, and hence also $(Z^{-1})^{\otimes n}f$,
satisfies a second order recurrence.
We have $H^{-1}Z = Z D$ or $Z D \trans{0}{1}{1}{0}$ 
 for some non-singular diagonal $D$ since $H 
\in \mathbf{O}_2(\mathbb{C})$.
Thus $f  = Z^{\otimes n} g'$ for some $g' =[x',y',0, \dotsc, 0,z']$
or $[x',0, \dotsc, 0,y',z']$,
with $x'y'z' \not =0$.
We assume the former; the proof is similar for the latter.

However, for $n \ge 4$, $g'$
does not satisfy any second order recurrence.
For a contradiction suppose $g'$ does.
By $x'y'z' \not =0$,  $ay' - b 0 + c 0 =0$ gives $a=0$,
$a x' - b y' + c 0 =0$ gives $b=0$,
and $a 0 - b 0 + c z' =0$ gives $c=0$; but $a, b,c$ cannot be all zero.

Next suppose $n=3$, and we show that $g' = (Z^{-1})^{\otimes n} f$
is still impossible.
For $\mathscr{P}_1$, $f = 
\tbcolvec{1}{ 1}^{\otimes 3}
                    + \beta \tbcolvec{1}{-1}^{\otimes 3}$.
It is easy to check that $(Z^{-1})^{\otimes n}f$ satisfies a second order recurrence
with its two eigenvalues sum to zero. However
$g' = [x', y', 0, z']$ has type $\langle y'z', x'z', -y'^2 \rangle$,
the sum of its two eigenvalues is $-x'z'/y'^2 \not =0$.

For $\mathscr{M}_2$, $f = \left[\begin{smallmatrix} 1 \\  \gamma \end{smallmatrix}\right]^{\otimes 3}
                    \pm \left[\begin{smallmatrix} 1 \\ -\gamma \end{smallmatrix}\right]^{\otimes 3}$.
In $(Z^{-1})^{\otimes n} f$,  $Z^{-1} \trans{1}{1}{\gamma}{-\gamma}$ has the form
$\trans{u}{v}{v}{u}$,
and $(Z^{-1})^{\otimes n} f = \left[\begin{smallmatrix} u \\  v \end{smallmatrix}\right]^{\otimes 3}
                    \pm \left[\begin{smallmatrix} v \\ u \end{smallmatrix}\right]^{\otimes 3}$.
 Thus the weight 1 and weight 2 entries of $(Z^{-1})^{\otimes n} f$ are either equal
or negative of each other.  If $g' = (Z^{-1})^{\otimes n} f$ this would imply
$y' =0$, a contradiction.

For $\mathscr{A}_3$, $f = \tbcolvec{1}{\alpha}^{\otimes n} +   i^r \tbcolvec{1}{-\alpha}^{\otimes n}$.
$Z^{-1} \trans{1}{1}{\alpha}{-\alpha}
= \trans{u}{v}{v}{u}$, with $u = 1 - \alpha i$ and $v = 1 + \alpha i$.
The weight 2 entry of $(Z^{-1})^{\otimes n} f$ is
$uv^2 +   i^r vu^2 = (uv) (v + i^r u)$. This is nonzer
for all $r$.
However $g' = [x', y', 0, z']$ has this property.

It remains to show that $Z^{\otimes n}g \not \in \mathscr{M}_3 \cup \mathscr{M}_4$.
If $Z^{\otimes n}g \in \mathscr{M}_3$, then
$Z^{\otimes n} g = cH f$
for some $H \in \mathbf{O}_2(\mathbb{C})$
and $f = \Sym_n^{n-1}(\left[\begin{smallmatrix} 1 \\ 0 \end{smallmatrix}\right]; \left[\begin{smallmatrix} 0 \\  1 \end{smallmatrix}\right])$.
Again $f = (cH)^{-1} Z^{\otimes n} g = Z^{\otimes n} g'$ for some $g'$ 
having the same or its reversal form
as $g$. Then $g' = (Z^{-1})^{\otimes n} f$ is the signature
$[n, n-2,  \dotsc, -(n-2), -n]$.
The weight 1 entry and weight $n-1$ entry
have the same absolute value. By the form of $g'$ this is a contradiction.

Finally if $Z^{\otimes n}g \in \mathscr{M}_4$, then by Lemma~\ref{lem:trans:M4},
$Z^{\otimes n} g = c Z^{\otimes n}f$, for 
some nonzero constant $c \in \mathbb{C}$, and
$f = \Sym_n^{n-1}(\left[\begin{smallmatrix} 1 \\  0 \end{smallmatrix}\right]; \left[\begin{smallmatrix} 0 \\ 1 \end{smallmatrix}\right])$ or
its reversal
 $\Sym_n^{n-1}(\left[\begin{smallmatrix} 0 \\ 1 \end{smallmatrix}\right]; \left[\begin{smallmatrix} 1 \\  0 \end{smallmatrix}\right])$.
In either case, after canceling out $Z$, the weight 0 entry is
$0$ in the expression but not so in $g$; a contradiction.
\end{proof}

\section{Mixing with Vanishing Signatures} \label{sec:mixing:vanishing}

In this section,
we prove some hardness results for vanishing signature sets when augmented by other signatures.
We first consider the mixing of vanishing signatures with unary and binary signatures.
Over general graphs,
these cases are handled by Lemma~7.1 and Lemma~7.2 in~\cite{CGW13}.
One can check that the hardness in Lemma~7.1 in~\cite{CGW13} holds for planar graphs.
We state the planar version of Lemma~7.1 in~\cite{CGW13}
and provide a proof for completeness.
Specifically,
the reduction to obtain the signature $f''$ is planar 
and $\PlHolant(f'')$ is \numP-hard by Theorem~\ref{thm:PlHolant:arity34}.

\begin{lemma} \label{lem:van:deg}
 Let $f \in \mathscr{V}^\sigma$ be a symmetric signature of arity $n$ with $\rd^\sigma(f)=d\ge 2$ where $\sigma \in \{+,-\}$.
 Suppose $v = u^{\otimes m}$ is a symmetric degenerate signature for some unary signature $u$ and some integer $m \ge 1$.
 If $u$ is not a multiple of $[1, \sigma i]$,
 then $\PlHolant(f,v)$ is \numP-hard.
\end{lemma}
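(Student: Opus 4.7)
The plan is to reduce to a small-arity hard instance via a holographic $Z$-transformation and an iterated ``derivative'' construction. Without loss of generality take $\sigma = +$; the case $\sigma = -$ is symmetric. Let $Z = \trans{1}{1}{i}{-i}$, so $(=_2) Z^{\otimes 2} = (\neq_2)$. The problem $\PlHolant(f, v)$ becomes $\plholant{\neq_2}{\hat f, \hat v}$, where $\hat f = (Z^{-1})^{\otimes n} f$ and $\hat v = \hat u^{\otimes m}$ with $\hat u = Z^{-1} u$. By Lemma~\ref{lem:prelim:vanishing_form_in_Z_basis}, $\hat f = [\hat f_0, \ldots, \hat f_d, 0, \ldots, 0]$ with $\hat f_d \ne 0$. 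Since $u$ is not a multiple of $[1, i]$, $\hat u$ is not a multiple of $[1, 0]$; after rescaling, $\hat u = [a, 1]$ for some $a \in \mathbb{C}$.

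The central gadget is the ``derivative'' $\partial_{\hat u}$: connect a dangling edge of $\hat f$ to one of $\hat u$ through a $\neq_2$. The resulting arity-$(n-1)$ symmetric signature has $j$th entry $\hat f_j + a \hat f_{j+1}$, which preserves support on $[0, d]$ and leaves the top entry $\hat f_d \ne 0$ unchanged. A single copy of $\hat v$ supplies $m$ copies of $\hat u$ that can be distributed planarly around $\hat f$; leftover $\hat u$-edges are closed in pairs via $\neq_2$, each pair producing the nonzero scalar $2a$ when $a \ne 0$. The subcase $a = 0$ (where $\hat u = [0, 1]$ acts as pinning-to-$0$) is handled by a small divisibility argument, possibly involving extra copies of $\hat f$, so that no $\hat u$-edge is wasted in a pair that would collapse to $0$.

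Iterating $\partial_{\hat u}$ until the arity drops to $k = \max(d, 3) \le 4$ (when $d \le 4$) yields a symmetric signature $g = [g_0, \ldots, g_d, 0, \ldots, 0]$ with $g_d \ne 0$. Back in the standard basis, let $f'' = Z^{\otimes k} g$. By Lemma~\ref{lem:prelim:vanishing_form_in_Z_basis} and Corollary~\ref{cor:van:degree}, $\rd^+(f'') = d \ge 2$ and $\vd^+(f'') = k - d$, so $2\vd^+(f'') \le k$ and thus $f'' \notin \mathscr{V}^+$; a symmetric check using the reversal identity rules out $\mathscr{V}^-$. Checking the specific form of $g$ against the explicit canonical forms of $\mathscr{A}_1, \mathscr{A}_2, \mathscr{A}_3, \mathscr{P}_1$, and $\mathscr{M}_1, \ldots, \mathscr{M}_4$ collected in Section~\ref{sec:APMtrans} (using that $g$ has truncated support with nonzero top entry $g_d$, a pattern incompatible with the geometric-progression structure of matchgate signatures and the second-order recurrences of affine and product-type signatures of arity $\le 4$) rules out $\mathscr{A}$-, $\mathscr{P}$-, and $\mathscr{M}$-transformability of $f''$. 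Theorem~\ref{thm:PlHolant:arity34} then yields $\SHARPP$-hardness of $\PlHolant(f'')$, and by planarity of the gadget, of $\PlHolant(f, v)$ as well.

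The hard part will be the regime $d \ge 5$, where $\partial_{\hat u}$ alone cannot preserve the ``leading-entry'' structure $g_d \ne 0$ when forced down to arity $\le 4$. Handling this case requires an additional ingredient beyond $\partial_{\hat u}$, for instance combining derivatives with a $\hat f$-to-$\hat f$ collapse through several $\neq_2$'s to lower $d$, or an induction on the arity of $f$: once the arity drops to $2d$, the signature leaves $\mathscr{V}^+$ and becomes non-vanishing, after which further reductions should yield a small-arity signature amenable to Theorem~\ref{thm:PlHolant:arity34}. A secondary obstacle is confirming that, across all the sub-cases of $(d, a, \text{parity of } m)$, the planar arrangement of $\hat v$-copies around $\hat f$ can indeed be realized without wasted edges and without accidental cancellation of the leading entry.
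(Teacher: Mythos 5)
Your approach has a genuine gap that you yourself acknowledge, and it stems from choosing the wrong arity-reduction operation. You work exclusively with $\partial_{\hat u}$, which maps $\hat f_j \mapsto \hat f_j + a\hat f_{j+1}$ and therefore \emph{preserves} the index $d$ of the top nonzero entry. This is exactly why you get stuck at $d \geq 5$: no amount of iterating $\partial_{\hat u}$ can push the signature below arity $d$ while keeping $g_d \neq 0$. Your concluding paragraph hints at "a $\hat f$-to-$\hat f$ collapse through several $\neq_2$'s," but you do not develop it, and this is precisely the missing idea.

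The operation that does the work is a \emph{self-loop} on $\hat f$ via $\neq_2$: it yields the arity-$(n-2)$ signature with $j$-th entry $2\hat f_{j+1}$, i.e., it shifts the nonzero block $[\hat f_0,\dots,\hat f_d]$ down by one index while deleting two edges. Taking $d-2$ such self-loops gives $\widehat{f'} = [\hat f_{d-2}, \hat f_{d-1}, \hat f_d, 0,\dots,0]$ of arity $n-2d+4$, a signature whose only nonzero entries are at positions $0,1,2$. Two more self-loops produce $[1,0]^{\otimes n-2d}$ up to scalar. From $[1,0]^{\otimes n-2d}$ and $\hat v = [a,b]^{\otimes m}$ (here $b\neq 0$ is exactly the hypothesis that $u$ is not a multiple of $[1,\sigma i]$), a subtractive Euclidean argument with $\neq_2$-links produces $[a,b]^{\otimes n-2d}$; this is cleaner than your "close leftover $\hat u$-edges in pairs" step, which breaks when $a=0$ and which you leave to "a small divisibility argument." Connecting $[a,b]^{\otimes n-2d}$ back to $\widehat{f'}$ via $\neq_2$ kills $n-2d$ edges and leaves the arity-$4$ signature $\widehat{f''} = [\widehat{f''}_0, \widehat{f''}_1, b^{n-2d}\hat f_d, 0, 0]$; the trailing zeros come for free from the support structure of $\widehat{f'}$. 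This is equivalent to $\plholant{\neq_2}{[0,0,1,0,0]}$, the planar Eulerian-orientation problem, which is \numP-hard, and crucially this reduction works uniformly for all $d\geq 2$. Your fallback for $d\leq 4$ also leaves the non-transformability check at arity $\max(d,3)$ hand-waved; the paper sidesteps this entirely because the arity-$4$ signature it produces has a fixed recognizable form.
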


\begin{proof}
  We consider $\sigma = +$ since the other case is similar.
  Since $f \in \mathscr{V}^+$,
  we have $n > 2d \ge 4$.
  Under a holographic transformation by $Z$,
  we have
  \[
   \PlHolant(f,v) \equiv \plholant{{\neq}_2}{\hat{f},[a,b]^{\otimes m}},
  \]
  where $\hat{f} = \left(Z^{-1}\right)^{\otimes n} f$ and $[a,b]^{\otimes m} = \left(Z^{-1}\right)^{\otimes m} v$ with $b \neq 0$
  since $u$ is not a multiple of $[1,i]$.
  Moreover,
  $\hat{f} = [\hat{f}_0, \hat{f}_1, \dotsc, \hat{f}_d, 0, \dotsc, 0]$ with 
  $\hat{f}_d\neq 0$ by Lemma~\ref{lem:prelim:vanishing_form_in_Z_basis}.

  We get $\widehat{f'} = [\hat{f}_{d-2}, \hat{f}_{d-1}, \hat{f}_d, 0, \dotsc, 0]$ of arity $n-2d+4$
  by $d-2$ self-loops via $\neq_2$ on $\hat{f}$.
  This is on the right side.
  With two more self-loops,
  we get $[1,0]^{\otimes n-2d}$,
  also on the right.

  We claim that we can use $[1,0]^{\otimes n-2d}$ and $[a,b]^{\otimes m}$
  to create $[a,b]^{\otimes n-2d}$.
  Let $t=\gcd(m,n-2d)$.
  If $n-2d>m$,
  then we connect $[a,b]^{\otimes m}$ to $[1,0]^{\otimes n-2d}$ via $\neq_2$
  to get $[1,0]^{\otimes n-2d-m}$ up to a nonzero factor $b \neq 0$.
  We repeat this process until we get a tensor power $[1,0]^{\otimes \ell}$ for some $\ell \le m$.
  We can do a similar construction if $m>n-2d$.
  Repeat this process, which is a subtractive Euclidean algorithm.
  Halt upon getting both $[1,0]^{\otimes t}$ and $[a,b]^{\otimes t}$.
  Then we combine $\tfrac{n-2d}{t}$ copies of $[a,b]^{\otimes t}$ to get $[a,b]^{\otimes n-2d}$.

  Now connecting $[a,b]^{\otimes n-2d}$ back to $\widehat{f'}$ via $\neq_2$,
  gives $\widehat{f''}=[\widehat{f''}_0, \widehat{f''}_1, \widehat{f''}_2, 0, 0]$ of arity~$4$.
  Moreover,
  $\widehat{f''}_2 = b^{n-2d} \hat{f}_d \neq 0$.
  Notice that $\PlHolant({\neq}_2 \mid [\widehat{f''}_{0}, \widehat{f''}_{1}, \widehat{f''}_{2}, 0, 0]) \equiv \plholant{{\neq}_2}{[0,0,1,0,0]}$,
  the Eulerian Orientation problem over planar $4$-regular graphs,
  which is \numP-hard by Corollary~\ref{cor:prelim:nonsingular_compressed_hard_trans}
  (or more directly by~\cite[Theorem~3.7]{GW13}).
  Thus,
  $\PlHolant(f,v)$ is \numP-hard.
\end{proof}

Next come binary signatures.
The statement of Lemma~7.2 in~\cite{CGW13} must be modified to rule out a planar tractable case
(which is proved \numP-hard for general graphs in Lemma~7.2 in~\cite{CGW13}).
Excluding this planar tractable case, 
there is one more nonplanar reduction in the proof of Lemma~7.2 in~\cite{CGW13}.
This reduction is used to show that $\holant{{\ne}_2}{\{[t,1,0,0,0], [c,0,1]\}}$ is \numP-hard 
when $c \ne 0$ (since the gadget in Figure~12a of~\cite{CGW13} is nonplanar).
In the following lemma, we first show that this problem
$\holant{{\ne}_2}{\{[t,1,0,0,0], [c,0,1]\}}$ 
remains \numP-hard even restricted to planar graphs provided $t \ne 0$.
If $t=0$, then all signatures belong to $\mathscr{M}$ and the problem is tractable.

\begin{lemma} \label{lem:van:binary:planar_fix}
 Let $c, t \in \mathbb{C}$.
 If $c t \ne 0$,
 then $\plholant{{\ne}_2}{[t,1,0,0,0], [c,0,1]}$ is \numP-hard.
\end{lemma}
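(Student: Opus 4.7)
The plan is to reduce $\PlHolant([t/\sqrt c, 1, 0, 0, 0])$, which is \numP-hard by Corollary~\ref{cor:arity4:double_root} because $ct \neq 0$ implies $t/\sqrt c \neq 0$, to the target problem $\plholant{{\ne}_2}{[t,1,0,0,0], [c,0,1]}$. The reduction proceeds by first exhibiting a planar gadget simulating the binary diagonal signature $[1,0,c]$, and then using a holographic transformation to match signatures.

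First I would construct a planar $\mathcal{F}$-gate whose effective signature is $[1,0,c]$. The gadget places a single copy of $g = [c,0,1]$ on the right, with each of its two edges connected through a separate left-side ${\ne}_2$ to a dangling edge. The two ${\ne}_2$ constraints force each internal edge $y_i$ to equal $1-d_i$, so the effective signature evaluates to $g(1-d_1, 1-d_2)$; a direct check on the four inputs gives $[1,0,c]$. The gadget is clearly planar, with the two dangling edges on the outer face.

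Next, using the holographic transformation $T = \trans{1}{0}{0}{\sqrt c}$, I would establish the equivalence $\PlHolant([t/\sqrt c, 1, 0, 0, 0]) \equiv_T \plholant{[1,0,c]}{[t,1,0,0,0]}$. Indeed, $[1,0,1]\, T^{\otimes 2} = [1,0,c]$, and $(T^{-1})^{\otimes 4}[t/\sqrt c, 1, 0, 0, 0] = (1/\sqrt c)[t,1,0,0,0]$, so the two problems agree up to an easily computable nonzero scalar, which does not affect complexity. Holographic transformations preserve planarity.

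Finally, in any planar instance of $\plholant{[1,0,c]}{[t,1,0,0,0]}$, I replace each left vertex (signature $[1,0,c]$) by the gadget from the first step. Since the gadget has the same effective signature, the Holant value is preserved; since the substitution is local, planar, and bipartite-preserving (the gadget inserts two left-${\ne}_2$ vertices and one right-$g$ vertex in place of one left vertex), the resulting bipartite signature grid is a valid planar instance of $\plholant{{\ne}_2}{[t,1,0,0,0],[c,0,1]}$. Chaining these reductions yields $\PlHolant([t/\sqrt c, 1, 0, 0, 0]) \le_T \plholant{{\ne}_2}{[t,1,0,0,0], [c,0,1]}$, and \numP-hardness transfers. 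The main technical care is in the bookkeeping for the holographic transformation (and the choice of a branch for $\sqrt c$, which does not matter), while the gadget computation and planarity are routine.
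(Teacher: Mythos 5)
Your proof is correct and follows essentially the same route as the paper: realize $[1,0,c]$ on the left by sandwiching $[c,0,1]$ between two copies of ${\ne}_2$, then apply the diagonal holographic transformation $\trans{1}{0}{0}{\sqrt{c}}$ to reduce to $\PlHolant([t/\sqrt{c},1,0,0,0])$ and invoke Corollary~\ref{cor:arity4:double_root}. The only cosmetic difference is that the paper transforms $[t,1,0,0,0]$ forward to $[t,\sqrt{c},0,0,0]$ and normalizes at the end, whereas you start from the already-normalized $[t/\sqrt{c},1,0,0,0]$ — the same computation read in the opposite direction.
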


\begin{proof}
 By connecting two copies of $\ne_2$ to either side of $[c,0,1]$,
 we get the signature $[1,0,c]$ on the left.
 Clearly $\plholant{[1,0,c]}{[t,1,0,0,0]} \le_T \plholant{{\ne}_2}{[t,1,0,0,0], [c,0,1]}$.
 Then under a holographic transformation by $T^{-1}$,
 where $T = \left[\begin{smallmatrix} 1 & 0 \\ 0 & \sqrt{c} \end{smallmatrix}\right]$,
 we have
 \begin{align*}
  \plholant{[1,0,c]}{[t,1,0,0,0]}
  &\equiv \plholant{[1,0,c] (T^{-1})^{\otimes 2}}{T^{\otimes 4} [t,1,0,0,0]}\\
  &\equiv \plholant{[1,0,1]}{[t,\sqrt{c},0,0,0]}\\
  &\equiv \PlHolant([t,\sqrt{c},0,0,0]).
 \end{align*}
 The last problem is \numP-hard by Corollary~\ref{cor:arity4:double_root} after dividing by $\sqrt{c}$.
\end{proof}

Next we prove the planar version of Lemma~7.2 in~\cite{CGW13} 
using Lemma~\ref{lem:van:binary:planar_fix}.
We have to rule out the planar tractable case $f \in \mathscr{M}_4^\pm$.
Also note that if $f \in \mathscr{V}^\pm$ is a symmetric non-degenerate signature,
then $f$ has arity at least $3$.
This is because a unary signature is degenerate,
and if a binary symmetric signature $f$ is vanishing,
then its vanishing degree is greater than~$1$, 
hence at least~$2$, and therefore $f$ is also degenerate.
In the following lemma,
we explicitly state this condition $\arity(f) \ge 3$.

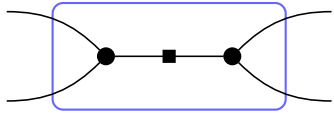
\begin{figure}[t]
 \centering
 \begin{tikzpicture}[scale=\scale,transform shape,node distance=\nodeDist,semithick]
  \node[internal]  (0)                    {};
  \node[external]  (1) [above left  of=0] {};
  \node[external]  (2) [below left  of=0] {};
  \node[external]  (3) [left        of=1] {};
  \node[external]  (4) [left        of=2] {};
  \node[square]    (5) [right       of=0] {};
  \node[internal]  (6) [right       of=5] {};
  \node[external]  (7) [above right of=6] {};
  \node[external]  (8) [below right of=6] {};
  \node[external]  (9) [right       of=7] {};
  \node[external] (10) [right       of=8] {};
  \path (0) edge[out= 135, in=   0]  (3)
            edge[out=-135, in=   0]  (4)
            edge                     (6)
        (6) edge[out=  45, in= 180]  (9)
            edge[out= -45, in= 180] (10);
  \begin{pgfonlayer}{background}
   \node[draw=\borderColor,thick,rounded corners,fit = (1) (2) (7) (8),inner sep=0pt] {};
  \end{pgfonlayer}
 \end{tikzpicture}
 \caption{Circle vertices are assigned $[t,1,0,0]$ and the square vertex is assigned $\neq_2$.}
 \label{fig:gadget:arity3_to_arity4}
\end{figure}

\begin{lemma} \label{lem:van:bin}
 Let $f \in \mathscr{V}^\sigma$ be a symmetric non-degenerate signature of arity $n\ge 3$
 for some $\sigma \in \{+,-\}$.
 Suppose $h$ is a non-degenerate binary signature.
 If $f\not \in \mathscr{M}_4^\sigma$ and $h \notin \mathscr{R}^\sigma_2$,
 then $\PlHolant(f,h)$ is \numP-hard.
\end{lemma}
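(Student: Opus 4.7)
The plan is to apply the $Z$-transformation and then reduce to a known hardness result, following the template of the proof of Lemma~\ref{lem:van:deg} but adapted to a binary rather than unary auxiliary signature. Assume WLOG $\sigma = +$ (the other case follows by symmetry). By the 2-stretch bipartization and the $Z$-transformation, $\PlHolant(f, h) \equiv_T \plholant{{\neq}_2}{\hat{f}, \hat{h}}$, where $\hat{f} = (Z^{-1})^{\otimes n} f$ and $\hat{h} = (Z^{-1})^{\otimes 2} h$. By Lemma~\ref{lem:prelim:vanishing_form_in_Z_basis}, $\hat{f} = [\hat{f}_0, \dotsc, \hat{f}_d, 0, \dotsc, 0]$ with $d = \rd^+(f)$ and $\hat{f}_d \neq 0$; since $f \in \mathscr{V}^+$ we have $d < n/2$, while non-degeneracy with $n \geq 3$ forces $d \geq 1$. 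The hypothesis $h \notin \mathscr{R}_2^+$ gives $\hat{h} = [p, q, r]$ with $r \neq 0$, and $f \notin \mathscr{M}_4^+$ means $\hat{f}$ is not a scalar multiple of $[0, 1, 0, \dotsc, 0]$, i.e.\ either $d \geq 2$, or $d = 1$ with $\hat{f}_0 \neq 0$.

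The first step is to build a small gadget bank. Performing $d$ successive self-loops on $\hat{f}$ via $\neq_2$ yields $\hat{f}_d [1, 0]^{\otimes (n-2d)}$, from which I extract the unary $[1, 0]$ on the right. A single $\neq_2$ vertex with one dangling edge, whose other input is attached to $[1, 0]$, yields the unary $[0, 1]$ on the right by a direct computation. Contracting $[1, 0]$ or $[0, 1]$ with one input of $\hat{h}$ through a $\neq_2$ produces the unaries $[q, r]$ and $[p, q]$ respectively, each with at least one nonzero entry.

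Equipped with these unaries and $\hat{h}$, I reduce the arity of $\hat{f}$ while preserving its essential structure. Contracting a right-side $[0, 1]$ with one input of $\hat{f}$ through a $\neq_2$ simply strips a trailing zero entry and leaves $[\hat{f}_0, \dotsc, \hat{f}_d, 0, \dotsc, 0]$ on one fewer variable. Iterating brings $\hat{f}$ down to arity $4$ with unchanged leading entries. In Case~1 ($d = 1$, $\hat{f}_0 \neq 0$), this produces the arity-$4$ signature $[\hat{f}_0, \hat{f}_1, 0, 0, 0]$. To finish I realize a binary of the form $[c, 0, 1]$ on the right for some $c \neq 0$, by interpolating over the family of binary signatures with matrices $M^{(k)} = M(NM)^{k-1}$ obtained by chaining $k$ copies of $\hat{h}$ through $\neq_2$ (where $M$ is the matrix form of $\hat{h}$ and $N$ that of $\neq_2$); the eigenstructure of $NM$ (distinct nonzero eigenvalues $q \pm \sqrt{pr}$ when $pr \neq 0$, a Jordan block otherwise), combined with Lemma~\ref{lem:simple_interpolation:van:bin} and its natural mirror, delivers the required binary. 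Lemma~\ref{lem:van:binary:planar_fix} then finishes the case. In Case~2 ($d \geq 2$), I first apply $d - 2$ self-loops to reach $[\hat{f}_{d-2}, \hat{f}_{d-1}, \hat{f}_d, 0, \dotsc, 0]$, use the unaries constructed above to drop more trailing zeros, and arrive at an arity-$4$ signature $[g_0, g_1, g_2, 0, 0]$ with $g_2 \neq 0$; Corollary~\ref{cor:prelim:nonsingular_compressed_hard_trans} concludes \numP-hardness via the planar Eulerian Orientations problem, mirroring the final step of Lemma~\ref{lem:van:deg}.

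The main obstacle I expect is the construction of the binary $[c, 0, 1]$ in Case~1 when $q \neq 0$: the chain powers of $\hat{h}$ live in a particular two-dimensional subspace of the three-dimensional space of symmetric binary signatures, and massaging them via unary contractions to realize exactly the form required by Lemma~\ref{lem:van:binary:planar_fix} calls for a careful interpolation argument, together with a diagonal normalization in $\Stab{M} \cap \mathbf{GL}_2(\mathbb{C})$ applied to the right side (which preserves $\neq_2$ up to a scalar). A secondary subtlety is ensuring that the reduction does not accidentally push a derived arity-$4$ signature into $\mathscr{M}_4^+$, where planar tractability holds; by tracking the linear recurrence governing the first entry under iterated contractions (its zeros form a discrete set), this degenerate case can always be avoided by varying the iteration count or inserting a corrective contraction.
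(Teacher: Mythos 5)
Your high-level blueprint (go to the $Z$-basis, aim for an arity-$4$ signature $[g_0,g_1,g_2,0,0]$ with $g_2 \ne 0$ or for $[c,0,1]$ with $c\ne 0$, then invoke Corollary~\ref{cor:prelim:nonsingular_compressed_hard_trans} or Lemma~\ref{lem:van:binary:planar_fix}) matches the paper's targets, but the intermediate machinery you propose does not deliver them, and this is not a matter of detail.

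\paragraph{The gadget bank of unaries cannot be built.}
After the $Z$-transformation and $2$-stretch, you are working in $\plholant{{\neq}_2}{\hat{f},\hat{h}}$, where $\hat{f}$ has arity $n$, $\hat{h}$ has arity $2$, and $\neq_2$ has arity $2$. Every $\{\hat{f},\hat{h}\}$-gate therefore has arity $\equiv n\cdot(\text{number of copies of }\hat{f}) \pmod 2$. When $n$ is even, every constructible signature (LHS or RHS) has even arity; in particular no unary $[1,0]$, $[0,1]$, $[q,r]$, $[p,q]$ can be produced, by a straightforward parity count. Even when $n$ is odd, the object that $d$ self-loops hand you is $\hat f_d [1,0]^{\otimes(n-2d)}$, an $(n-2d)$-ary signature, and ``extracting the unary $[1,0]$'' from a higher tensor power requires a nontrivial absorption trick (e.g.\ capping $(n-2d-1)/2$ adjacent pairs of its outputs with $\hat h$-loops) that your write-up does not supply. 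Since both Case~1 (``iterate down to arity $4$'') and Case~2 (``use the unaries constructed above to drop more trailing zeros'') rest on these unaries, neither case as written goes through; for $n$ even they provably cannot.

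\paragraph{The chain interpolation misses a live case.}
Writing $\hat h = [p,q,r]$ with $r\ne 0$ and $q^2\ne pr$, the signatures realizable by chaining copies of $\hat h$ through $\neq_2$ span $\{\alpha M + \beta MNM\}$, which is $[\,\alpha p + 2\beta pq,\; \alpha q + \beta(pr+q^2),\; \alpha r + 2\beta qr\,]$. Forcing the middle entry to $0$ and the last to $1$ gives first entry $p/r$; so the target $[c,0,1]$ with $c\ne 0$ is unreachable by this family exactly when $p=0$, i.e.\ when $\hat h = [0,q,r]$ with $q\ne0$. A diagonal transformation on the RHS keeps the first coordinate zero, so that does not rescue the situation. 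The paper escapes this case by a different route: first a $\hat h$-loop on $\hat f$ produces a seed $[t+2b,1,0]$ (and a backup construction if $t=-2b$), which feeds Lemma~\ref{lem:simple_interpolation:van:bin} to interpolate all $[v,1,0]$; then the chain gadget in Figure~\ref{fig:gadget:long_edge} yields $[a+2bv+v^2,\,b+v,\,1]$, and setting $v=-b$ gives $[a-b^2,0,1]$ with $a-b^2\ne0$ by non-degeneracy. None of these steps appears in your proposal, and the $p=0$ subcase you flag as the ``main obstacle'' is left genuinely unresolved.

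\paragraph{What the paper does instead.}
The paper avoids unaries altogether by inducting on $n$: at each step a $\neq_2$-self-loop or a $\hat h$-loop on $\hat f$ reduces arity by exactly $2$ while staying inside the tensor-parity class of $\hat f$; the only places unaries appear are base cases where $n-2d=1$ and they genuinely exist. The cases your proposal elides (the need to re-seed when the $\hat h$-loop annihilates the leading entry, the $d=2$ sub-case where a naive self-loop lands in $\mathscr{M}_4^+$, and the two small-$n$ endgames at $n=6$ and $n=5$) are exactly the places where the argument needs care, and they are not recoverable from the gadget-bank strategy as stated.
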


\begin{proof}
 We consider $\sigma = +$ since the other case is similar.
 Under a $Z$ transformation,
 \[
  \PlHolant(f,h)
  \equiv
  \plholant{{\neq}_2}{\hat{f},\hat{h}},
 \]
 where $\hat{f}=\left(Z^{-1}\right)^{\otimes n}f$ and $\hat{h}=\left(Z^{-1}\right)^{\otimes 2}h$.
 Since $h\not\in\mathscr{R}_2^+$,
 we may assume that $\hat{h}=[a,b,1]$ by Lemma~\ref{lem:prelim:vanishing_form_in_Z_basis} with a nonzero entry $\hat{h}_2$.
 Moreover since $h$ is non-degenerate,
 so is $\hat{h}$, and $b^2\neq a$.

 We prove the lemma by induction on the arity of $f$ (or equivalently $\hat{f}$).
 There are two base cases,
 $n = 3$ and $n = 4$.
 However, the arity $3$ case is easily reduced to the arity $4$ case.
 We show this first, 
 and then show that the lemma holds in the arity~$4$ case.

 Assume $n = 3$.
 Since $f \in \mathscr{V}^+$,
 we have $\hat{f}=[t,1,0,0]$ for some $t\neq 0$,
 by Lemma~\ref{lem:prelim:vanishing_form_in_Z_basis} and $f\not\in\mathscr{M}_4^+$.
 Consider the gadget in Figure~\ref{fig:gadget:arity3_to_arity4}.
 We assign $\hat{f}$ to the circle vertices and $\neq_2$ to the square vertex.
 Let $\hat{f}'$ be the signature of the resulting gadget.
 The signature $\hat{f}'$ may not seem symmetric by construction,
 but it is not hard to verify that indeed $\hat{f}'=[2t,1,0,0,0]$.
 The crucial observation is that it takes the same value~$0$ on inputs $1010$ and $1100$,
 where bits are ordered counterclockwise,
 starting from an arbitrary edge.
 This finishes our reduction to $n=4$.

 Now we consider the base case of $n = 4$.
 Since $f \in \mathscr{V}^+$,
 we have $\vd^+(f) > 2$ and $\rd^+(f) < 2$.
 As $f$ is not degenerate, $\rd^+(f)\not\in\{-1,0\}$.
 It implies that $\rd^+(f)=1$ and 
 by Lemma~\ref{lem:prelim:vanishing_form_in_Z_basis}, $\hat{f} = [t,1,0,0,0]$.

 Our next goal is to show that we can realize a signature of the form $[c,0,1]$ with $c \neq 0$.
 Then $\plholant{{\neq}_2}{[t,1,0,0,0],[c,0,1]}\le\PlHolant(f,h)$.
 Moreover, $t\neq 0$ since $f\notin\mathscr{M}_4^+$.
 Then by Lemma~\ref{lem:van:binary:planar_fix},
 $\plholant{{\neq}_2}{[t,1,0,0,0],[c,0,1]}$ is \numP-hard.

 If $b=0$,
 then $\hat{h}$ is what we want since in this case $a = a - b^2 \ne 0$.
 
 Otherwise $b \neq 0$.
 By connecting $\hat{h}$ to $\hat{f}$ via $\neq_2$,
 we get $[t + 2 b, 1, 0]$.
 If $t \neq -2b$,
 then by Lemma~\ref{lem:simple_interpolation:van:bin},
 we can interpolate any binary signature of the form $[v,1,0]$.
 Otherwise $t = - 2 b$.
 Then we connect two copies of $\hat{h}$ via $\neq_2$,
 and get $\widehat{h}' = [2 a b, a + b^2, 2 b]$.
 By connecting this $\widehat{h}'$ to $\hat{f}$ via $\neq_2$,
 we get $[2 (a - b^2), 2 b, 0]$,
 using $t = -2 b$.
 Since $a \neq b^2$ and $b \neq 0$,
 we can once again interpolate any $[v,1,0]$ by Lemma~\ref{lem:simple_interpolation:van:bin}.

\begin{figure}[t]
 \centering
 \begin{tikzpicture}[scale=\scale,transform shape,node distance=\nodeDist,semithick]
  \node[external] (0)                    {};
  \node[internal] (1) [right       of=0] {};
  \node[square]   (2) [right       of=1] {};
  \node[triangle] (3) [right       of=2] {};
  \node[square]   (4) [right       of=3] {};
  \node[internal] (5) [right       of=4] {};
  \node[external] (6) [right       of=5] {};
  \node[external] (7) [above right of=3] {};
  \node[external] (8) [below right of=3] {};
  \path (0) edge            node[near end]   (e1) {} (1)
        (1) edge                                     (2)
        (2) edge                                     (3)
        (3) edge                                     (4)
        (4) edge                                     (5)
        (5) edge            node[near start] (e2) {} (6);
  \path (3) edge[opacity=0] node[near start] (e3) {} (7)
            edge[opacity=0] node[near start] (e4) {} (8);
  \begin{pgfonlayer}{background}
   \node[draw=\borderColor,thick,rounded corners,fit = (e1) (e2) (e3) (e4)] {};
  \end{pgfonlayer}
 \end{tikzpicture}
 \caption{A sequence of binary gadgets that forms another binary gadget.
 The circles are assigned $[v, 1, 0]$,
 the squares are assigned $\neq_2$,
 and the triangle is assigned $[a,b,1]$.}
 \label{fig:gadget:long_edge}
\end{figure}
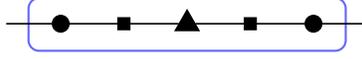

 Hence,
 we have the signature $[v,1,0]$,
 where $v \in \mathbb{C}$ is for us to choose.
 We construct the gadget in Figure~\ref{fig:gadget:long_edge} with the circles assigned $[v,1,0]$,
 the squares assigned $\neq_2$,
 and the triangle assigned $[a,b,1]$.
 The resulting gadget has signature $[a + 2 b v + v^2, b + v, 1]$,
 which can be verified by the matrix product
 \[
  \begin{bmatrix} v & 1 \\ 1 & 0 \end{bmatrix}
  \begin{bmatrix} 0 & 1 \\ 1 & 0 \end{bmatrix}
  \begin{bmatrix} a & b \\ b & 1 \end{bmatrix}
  \begin{bmatrix} 0 & 1 \\ 1 & 0 \end{bmatrix}
  \begin{bmatrix} v & 1 \\ 1 & 0 \end{bmatrix}
  =
  \begin{bmatrix} a + 2 b v + v^2 & b + v \\ b + v & 1 \end{bmatrix}.
 \]
 By setting $v = -b$,
 we get $[c,0,1]$,
 where $c = a - b^2 \neq 0$.

 Now we do the induction step.
 Assume $n \geq 5$.
 Since $f$ is non-degenerate, $\rd^+(f) \ge 1$.
 If $\rd^+(f) = 1$,
 then $\hat{f}=[t,1,0,\dots,0]$ for some $t\neq 0$.
 We connect $\hat{h}$ to $\hat{f}$ via $\neq_2$,
 getting $[t+2b,1,0,\dots,0]$ of arity $n-2\ge 3$.
 If $t+2b\neq 0$, then we are done by induction hypothesis.
 Otherwise $t=-2b$,
 and we connect two $\hat{h}$ together via $\neq_2$.
 The signature is $\hat{h}':=[2ab,b^2+a,2b]$.
 Connect $\hat{h}'$ to $\hat{f}$ via $\neq_2$.
 We get $[-4b^2+2(b^2+a),2b,0,\dots,0]=[2(a-b^2),2b,0,\dots,0]$.
 If $b=0$, then $t=0$. Contradiction.
 Hence $b\neq 0$, and $a-b^2\neq 0$ for $b$ is not degenerate.
 Then we can apply induction hypothesis on $[2(a-b^2),2b,0,\dots,0]$.

 The case left is that $\rd^+(f) = d \ge 2$.
 Then $\hat{f}=[\hat{f}_0,\hat{f}_1,\dots,\hat{f}_{d},0,\dots,0]$
 with $\hat{f}_d\neq 0$ by Lemma~\ref{lem:prelim:vanishing_form_in_Z_basis}.
 We do a self-loop of $\hat{f}$ via $\neq_2$,
 getting $\hat{f}'':=[\hat{f}_1,\dots,\hat{f}_{d},0,\dots,0]$ of arity $n-2\ge 3$.
 Since $d\ge 2$, $\hat{f}''$ is non-degenerate and $f''=Z^{\otimes (n-2)}\hat{f}''\in\mathscr{V}^+$.
 If $f''\not\in\mathscr{M}_4^+$,
 then apply the induction hypothesis and we are done.
 Otherwise $d=2$ and we may assume $\hat{f}=[\hat{f}_0,0,1,0,\dots,0]$
 since $\hat{f}_2\neq 0$.

 In this case, we connect $\hat{h}$ to $\hat{f}$ via $\neq_2$,
 getting $\hat{f}''':=[a+\hat{f}_0,2b,1,0,\dots,0]$ of arity $n-2\ge 3$.
 If $n\ge 7$, then we can apply the induction hypothesis.
 If $n=6$, then $\hat{f}'''=[a+\hat{f}_0,2b,1,0,0]$ of arity $4$.
 Notice that $\plholant{[0,1,0]}{[a+\hat{f}_0,2b,1,0,0]}$ is equivalent 
 to $\plholant{[0,1,0]}{[0,0,1,0,0]}$,
 which is counting Eulerian orientations in $4$-regular planar graphs.
 Then $\plholant{{\neq}_2}{\hat{f}'''}$ is \numP-hard by Corollary~\ref{cor:prelim:nonsingular_compressed_hard_trans}.

 The only case left now is when $n=5$ and $\hat{f}=[\hat{f}_0,0,1,0,0,0]$.
 We do two self-loops on $\hat{f}$ via $\neq_2$ to get $[1,0]$.
 Then connect $[1,0]$ to $\hat{h}$ via $\neq_2$ and get $[b,1]$.
 At last, connect $[b,1]$ to $\hat{f}$ via $\neq_2$,
 resulting in $[\hat{f}_0,b,1,0,0]$.
 Similar to the case above,
 this is counting Eulerian orientations in $4$-regular planar graphs,
 and is \numP-hard by Corollary~\ref{cor:prelim:nonsingular_compressed_hard_trans}.
\end{proof}

If $f \in \mathscr{M}_4^\pm$,
there is an additional case for the binary signature.

\begin{lemma} \label{lem:van:M4:bin}
 Let $f \in \mathscr{M}_4^\sigma$ be a symmetric non-degenerate signature with $\sigma \in \{+,-\}$ of arity $k \ge 3$.
 Suppose $h$ is a non-degenerate binary signature such that
 $h \notin \mathscr{R}^\sigma_2$ and $h$ is not a multiple of $Z^{\otimes 2}[a,0,1]$ for any $a \neq 0$.
 Then $\PlHolant(f,h)$ is \numP-hard.
\end{lemma}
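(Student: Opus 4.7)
The plan is to move to the $Z$-basis, where $\mathscr{M}_4^\sigma$ takes a very clean form. Applying the holographic transformation by $Z$ yields $\PlHolant(f,h) \equiv \plholant{{\neq}_2}{\hat f, \hat h}$, where $\hat f = (Z^{-1})^{\otimes k} f$ is (up to scaling) $\exactone{k}$ in the case $\sigma = +$; the $\sigma = -$ case is entirely dual, with $\allbutone{k}$. By Lemma~\ref{lem:prelim:vanishing_form_in_Z_basis}, the hypotheses on $h$ translate into the statement that $\hat h = (Z^{-1})^{\otimes 2} h$ can be normalized to $[a, b, 1]$, where $b \ne 0$ (since $b = 0$ would force either degeneracy or the excluded form $Z^{\otimes 2}[a',0,1]$) and $a \ne b^2$ (non-degeneracy). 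Henceforth I assume $\sigma = +$.

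For $k \ge 5$, I would attach $\hat h$ to a single input of $\hat f$ via $\neq_2$. A short direct calculation yields $\hat f_1 = [a, b, 1, 0, \dotsc, 0]$ of arity $k$, so that $f_1 = Z^{\otimes k}\hat f_1$ is non-degenerate, satisfies $\rd^+(f_1) = 2 < k/2$ (hence $f_1 \in \mathscr{V}^+$), and is not in $\mathscr{M}_4^+$ (its index-$2$ entry is nonzero while that of $\exactone{k}$ is zero). Lemma~\ref{lem:van:bin} applied to the pair $(f_1, h)$ then gives \numP-hardness of $\PlHolant(f_1, h)$, and hence of $\PlHolant(f,h)$.

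The main obstacle is the small-arity case $k \in \{3,4\}$, since for these values the same $\hat f_1$ just misses $\mathscr{V}^+$ (as $2\vd^+(f_1) = 2(k - 2) \not > k$), and Lemma~\ref{lem:van:bin} does not apply. My plan here is to reduce to an arity-$4$ gadget with nonsingular compressed signature matrix and invoke Corollary~\ref{cor:prelim:nonsingular_compressed_hard_trans}. If $k = 4$, take $\hat F := \hat f$; if $k = 3$, connect two copies of $\hat f = \exactone{3}$ by a single $\neq_2$ edge and verify that the result is $\exactone{4}$, then set $\hat F := \exactone{4}$. Attaching $\hat h$ to $\hat F$ via $\neq_2$ then yields $\hat F_1 = [a, b, 1, 0, 0]$ of arity $4$. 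The signature matrix $M_{\hat F_1}$ is redundant (since $\hat F_1$ is symmetric), and a short computation gives $\det \widetilde{M_{\hat F_1}} = -2 \ne 0$. Corollary~\ref{cor:prelim:nonsingular_compressed_hard_trans} applied with $T = Z^{-1}$ then forces \numP-hardness of $\PlHolant(F_1)$, and therefore of $\PlHolant(f,h)$.

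All of the gadget connections above use only single $\neq_2$ edges between previously planar subgadgets, so planarity is preserved throughout. The $\sigma = -$ case follows by the obvious mirror construction, replacing $\exactone{\cdot}$ with $\allbutone{\cdot}$ and normalizing $\hat h$ so that its index-$0$ entry (rather than its index-$2$ entry) equals $1$; the arithmetic for the recurrence degrees, the realized signature $\hat F_1 = [0,0,1,b,a]$, and the compressed-matrix determinant all go through by symmetry.
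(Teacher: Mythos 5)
The key step is wrong: attaching $\hat h$ to a \emph{single} input of $\hat f = \exactone{k}$ via one $\neq_2$ does not produce the symmetric signature $[a,b,1,0,\dotsc,0]$ you claim. Write the result as $g(y, x_2, \dotsc, x_k) = \sum_{x_1} \exactone{k}(x_1,\dotsc,x_k)\,\hat h(\bar x_1, y)$. Evaluating on Hamming-weight-$2$ inputs gives two different answers depending on whether the bit $y$ (the free end of $\hat h$) is one of the set bits: if $y=1$ and exactly one $x_i=1$ you get $\hat h(1,1)=1$, while if $y=0$ and exactly two $x_i=1$ you get $0$. So $g$ is not symmetric, and indeed writing out its $4\times 4$ signature matrix (in the $k=4$ case) shows the middle two rows are $(b,0,0,0)$ and $(b,1,1,0)$ --- not equal --- so $M_g$ is not even redundant, and no rotation fixes this (the asymmetry between the $\hat h$-edge and the other edges persists). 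Consequently the invocation of Lemma~\ref{lem:van:bin} for $k\ge 5$ (which needs a symmetric vanishing signature with $\rd^+ = 2$) and the invocation of Corollary~\ref{cor:prelim:nonsingular_compressed_hard_trans} for $k\in\{3,4\}$ (which needs a redundant matrix) both fail on the gadget you describe.

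What actually works is to connect \emph{both} ends of $\hat h$ to $\hat f$ through two copies of $\neq_2$, reducing the arity by $2$: a short computation gives $\partial_{[1,b,a]}(\exactone{k}) = [2b,1,0,\dotsc,0]$ of arity $k-2$, which is genuinely symmetric, has $\rd^+ = 1$, and (since $b \neq 0$) is not in $\mathscr{M}_4^+$. To make this land in $\mathscr{V}^+$ (i.e.\ $1 < (k-2)/2$) you need $k\ge 5$, so you first boost the arity by joining copies of $\exactone{k}$ via $\neq_2$ (two copies give $\exactone{2k-2}$; iterate once more if $k=3$). Then Lemma~\ref{lem:van:bin} applies to the pair $(g, h)$ with $g = Z^{\otimes(k-2)}[2b,1,0,\dotsc,0]$. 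Your small-arity fallback via the compressed-matrix determinant is then unnecessary, which is fortunate because the claimed form $\hat F_1 = [a,b,1,0,0]$ does not arise from the gadget.
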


\begin{proof}
 We assume $f \in \mathscr{M}_4^+$ since the other case is similar.
 Suppose $h = Z^{\otimes 2}[a,b,c]$ for some $a,b,c \in \mathbb{C}$.
 Since $h \notin \mathscr{R}^+_2$,
 we have $c \neq 0$,
 so we assume $c=1$.
 Moreover $b \neq 0$. 
 This is because,
 if $b=0$ then either $h$ is degenerate or 
 is a multiple of $Z^{\otimes 2} [a,0,1]$ for some $a\neq 0$.
 Either case is a contradiction.
 Then under a holographic transformation by $Z$,
 the problem becomes $\plholant{{\neq}_2}{\exactone{k},[a,b,1]}$.
 If we connect two copies of $\exactone{k}$ via $\neq_2$,
 we get $\exactone{2k-2}$.
 Hence we may assume that $k \ge 5$.
 Then we connect $[a,b,1]$ to $\exactone{k}$ via $\neq_2$,
 and get $[2b,1,0,\dots,0]$ of arity $k-2 \ge 3$.
 Since $b \neq 0$,
 $\PlHolant(f,h)$ is \numP-hard by Lemma~\ref{lem:van:bin}.
\end{proof}

Next we consider mixing signatures from $\mathscr{V}^+$ and $\mathscr{V}^-$.
This is a planar version of Lemma~7.3 in~\cite{CGW13}.
However,
for planar graphs,
there is a tractable case when one signature is in $\mathscr{M}_4^+$ and the other is in $\mathscr{M}_4^-$.
This case was shown to be \numP-hard over general graphs by Lemma~6.12 in~\cite{CGW13} using a nonplanar reduction.
One can check that the rest of the proof of Lemma~7.3 in~\cite{CGW13} holds for planar graphs.
For completeness we include a proof.

\begin{lemma} \label{lem:van:plus_and_minus}
 Let $f \in \mathscr{V}^+$ and $g \in \mathscr{V}^-$ be symmetric non-degenerate signatures of arities $\ge 3$ respectively.
 If $f \notin \mathscr{M}_4^+$ or $g \notin \mathscr{M}_4^-$
 then $\PlHolant(f,g)$ is \numP-hard.
\end{lemma}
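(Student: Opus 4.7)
After a holographic $Z$-transformation the problem is $\plholant{\neq_2}{\hat{f}, \hat{g}}$. By Lemma~\ref{lem:prelim:vanishing_form_in_Z_basis} the signature $\hat{f}$ is supported on the initial segment $[0,d]$ with $d = \rd^+(f) < n/2$ and $\hat{f}_d \neq 0$, while $\hat{g}$ is supported on the terminal segment $[m-e,m]$ with $e = \rd^-(g) < m/2$ and $\hat{g}_{m-e} \neq 0$. The matrix $\trans{1}{0}{0}{-1}$ preserves $\neq_2$ up to sign and interchanges the pairs $(\mathscr{V}^+,\mathscr{V}^-)$ and $(\mathscr{M}_4^+,\mathscr{M}_4^-)$ because it swaps $[1,i]$ and $[1,-i]$; hence, by applying it if necessary, I may assume without loss of generality that $f \notin \mathscr{M}_4^+$.

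The overall strategy is to reduce to one of three previously established hardness criteria via a sequence of planar gadget constructions: Lemma~\ref{lem:van:bin} (applied with a non-degenerate binary companion $h \notin \mathscr{R}_2^+$, i.e.\ with $\hat{h}_2 \neq 0$), Lemma~\ref{lem:van:deg} (applied with a degenerate companion $u^{\otimes p}$ where $u$ is not a multiple of $[1,i]$, provided $\rd^+(f) \ge 2$), or Corollary~\ref{cor:prelim:nonsingular_compressed_hard_trans} (applied with an arity-$4$ gadget whose compressed signature matrix is nonsingular). The case split is driven by whether $g \in \mathscr{M}_4^-$.

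If $g \notin \mathscr{M}_4^-$, then either $e \ge 2$ or $\hat{g}_m \neq 0$. In each scenario I build a non-degenerate binary $h$ with $\hat{h}_2 \neq 0$ using $\neq_2$-self-loops on $g$ (a $k$-fold self-loop sends $[\hat{g}_0,\ldots,\hat{g}_m]$ to $[\hat{g}_k,\ldots,\hat{g}_{m-k}]$), supplemented when necessary by gluing two copies of $g$ along suitable $\neq_2$-edges so as to push a back-end entry of $\hat{g}$ into position $2$ of the final arity-$2$ signature, and Lemma~\ref{lem:van:bin} applied to $f$ and $h$ concludes. If instead $g \in \mathscr{M}_4^-$, so $\hat{g} = c[0,\ldots,0,1,0]$, I split on $\rd^+(f)$. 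When $\rd^+(f) \ge 2$, contracting two copies of $g$ along $m-1$ pairs of variables via $\neq_2$ (after a preliminary self-loop if a parity adjustment is needed) produces an arity-$2$ signature proportional to $[0,0,1]$ in the $Z$-basis, i.e.\ $[1,-i]^{\otimes 2}$ in the original basis, and Lemma~\ref{lem:van:deg} concludes since $[1,-i]$ is not a multiple of $[1,i]$. When $\rd^+(f) = 1$, so $\hat{f} = [\hat{f}_0,\hat{f}_1,0,\ldots,0]$ with $\hat{f}_0\hat{f}_1 \neq 0$, I connect all $n$ dangling edges of $\hat{f}$ to $n$ of the $m$ edges of $\hat{g}$ via $\neq_2$ (arranged planarly, using copies and self-loops to adjust so that $m \ge n + 2$); a direct calculation shows the resulting arity-$(m-n)$ signature has only two nonzero entries, namely $c\hat{f}_0$ at weight $m-n-1$ and $nc\hat{f}_1$ at weight $m-n$. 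When $m-n = 2$ this is a non-degenerate binary $h$ with $\hat{h}_2 = nc\hat{f}_1 \neq 0$ and Lemma~\ref{lem:van:bin} finishes; when $m-n > 2$ the new signature still lies in $\mathscr{V}^- \setminus \mathscr{M}_4^-$ of strictly smaller arity, and I iterate.

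The technical heart lies in the last subcase (both $\rd^+(f) = 1$ and $g \in \mathscr{M}_4^-$): here the $Z$-basis supports of $\hat{f}$ and $\hat{g}$ are very sparse, so the admissible gadget operations are tightly constrained and naive self-loops collapse intermediate signatures to degenerate ones. The key point to verify carefully is that the iterative planar construction described above terminates in a non-degenerate binary signature with the required nonzero position-$2$ entry; this uses the fact that each iteration strictly decreases arity by $n \ge 3$ while preserving both non-degeneracy and the ``two-entries-at-the-end'' structure, so a fallback to Corollary~\ref{cor:prelim:nonsingular_compressed_hard_trans} on a small arity-$4$ gadget is available should the iteration fail to land exactly on arity $2$.
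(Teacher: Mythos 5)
Your case split on whether $g \in \mathscr{M}_4^-$ differs from the paper's split on the pair $(\rd^+(f), \rd^-(g))$, and this introduces a genuine gap in the branch $g \notin \mathscr{M}_4^-$.

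There you propose to construct a non-degenerate binary $h$ with $\hat{h}_2 \neq 0$ using gadgets built from $g$ \emph{alone} (self-loops and gluing copies of $g$ via $\neq_2$). This is impossible, for a simple support reason. Since $g \in \mathscr{V}^-$, the $Z$-transformed signature $\hat{g}$ is supported on positions $\{m-e, \dots, m\}$ where $m = \arity(g) > 2e$ and $e = \rd^-(g)$. In any planar $\{\hat{g}\}$-gadget on the right of $\plholant{\neq_2}{\hat{g}}$ with $p$ copies of $\hat{g}$, $q$ internal $\neq_2$'s, and $r = pm - 2q$ external edges, a contributing assignment must give each copy of $\hat{g}$ Hamming weight at least $m-e$, and each internal $\neq_2$ contributes exactly $1$ to the sum of these weights; so the external Hamming weight is at least $p(m-e) - q$. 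Setting $r = 2$ forces $q = (pm-2)/2$ and hence external weight $\ge \tfrac{p(m-2e)}{2} + 1 > 1$, i.e.\ $\ge 2$. Thus \emph{every} binary gadget from $g$ alone has its weight-$0$ and weight-$1$ entries equal to zero; it is degenerate $[0,0,\ast]$ or identically zero and can never satisfy the hypotheses of Lemma~\ref{lem:van:bin}. (This is not repaired by cleverer gadgets: the bound holds for all $p, q$.) The same obstruction rules out the subcase $e = 1$, $\hat{g}_m \neq 0$.

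The fix, as in the paper, is to not try to build the binary from $g$ alone. When $\rd^-(g) \ge 2$ you should instead take $d$ self-loops on $\hat{f}$ to obtain the degenerate $[1,i]^{\otimes(n-2d)}$ and invoke Lemma~\ref{lem:van:deg} applied to $g$; the symmetric case $\rd^+(f) \ge 2$ is dispatched the same way with Lemma~\ref{lem:van:deg} applied to $f$. When $\rd^+(f) = \rd^-(g) = 1$, the paper first equalizes the arities by a subtractive-Euclid argument on $[1,0]^{\otimes(n-2)}$ (from $\hat{f}$) and $[0,1]^{\otimes(n'-2)}$ (from $\hat{g}$), then pins $n-2$ inputs of $\hat{f}$ to $0$ using $[0,1]^{\otimes(n-2)}$ to produce the binary $[\hat{f}_0,\hat{f}_1,0]$, and applies Lemma~\ref{lem:van:bin} or Lemma~\ref{lem:van:M4:bin} \emph{to $g$} rather than to $f$; note the binary that works is constructed with the help of both $\hat{f}$ and $\hat{g}$, and has $\hat{h}_0 \neq 0$ (so $h \notin \mathscr{R}_2^-$) rather than $\hat{h}_2 \neq 0$.

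Two smaller points. In your subcase $g \in \mathscr{M}_4^-$, $\rd^+(f) \ge 2$, the idea (produce a degenerate $[1,-i]^{\otimes k}$, apply Lemma~\ref{lem:van:deg}) is right, but the proposed gadget (two copies of $g$ contracted along $m-1$ pairs) is identically zero at arity $2$ for $m \ge 4$; a single self-loop on $\hat{g} = c[0,\dots,0,1,0]$ already gives $2c[0,1]^{\otimes(m-2)}$, and Lemma~\ref{lem:van:deg} does not require arity $2$. And the iteration in your final subcase decreases arity along $m, m-n, m-2n, \dots$, which need not hit $2$; the paper's Euclid step avoids this, while your fallback to Corollary~\ref{cor:prelim:nonsingular_compressed_hard_trans} is asserted but not actually carried out.
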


\begin{proof}
 Let $\rd^+(f) = d$,
 $\rd^-(g) = d'$,
 $\arity(f) = n$ and $\arity(g) = n'$,
 then $2 d < n$ and $2 d' < n'$.
 Under a holographic transformation by $Z = \trans{1}{1}{i}{-i}$,
 we have
 \[
  \plholant{{=}_2}{f,g}
  \equiv_T
  \plholant{{\neq}_2}{\hat{f},\hat{g}},
 \]
 where $\hat{f}:=(Z^{-1})^{\otimes n} f = [\hat{f}_0, \dots,\hat{f}_d, 0, \dotsc, 0]$
 and $\hat{g}:=(Z^{-1})^{\otimes n'} g = [0, \dotsc, 0, \hat{g}_{d'},\dots,\hat{g}_0]$
 due to Lemma~\ref{lem:prelim:vanishing_form_in_Z_basis}.
 Moreover $\hat{f}_d \not = 0$ and $\hat{g}_{d'} \neq 0$.

 If $d\ge 2$,
 we can do $d'$ many self-loops of $\neq_2$ on $\hat{g}$,
 getting $\hat{g}':=[0,\dots,0,\hat{g}_{d'}]$ of arity $n'-2d'\ge 1$.
 Thus $g':=Z^{\otimes (n'-2d')}\hat{g}'=[1,-i]^{\otimes (n'-2d')}$ up to a nonzero constant.
 We apply Lemma~\ref{lem:van:deg} to derive that $\PlHolant(f,g)$ is \numP-hard.
 If $d'\ge 2$,
 we can similarly get $[1,i]^{\otimes (n-2d)}$ and apply Lemma~\ref{lem:van:deg}.
 Thus we can assume that $d = d' = 1$.

 So up to nonzero constants,
 we have $\hat{f} = [a, 1, 0, \dotsc, 0]$ and $\hat{g} = [0, \dotsc, 0, 1, b]$ for some $a, b \in \mathbb{C}$.
 We can assume that $f \notin \mathscr{M}_4^+$ and $a \neq 0$.
 The case of $b\neq 0$ is similar.
 We show that it is always possible to get two such signatures of the same arity $\min\{n, n'\}$.
 Suppose $n > n'$.
 We form a loop from $\hat{f}$ via $\neq_2$.
 It is easy to see that this signature is the degenerate signature $2 [1,0]^{\otimes (n-2)}$.
 Similarly,
 we can form a loop from $\hat{g}$ and can get $2 [0, 1]^{\otimes (n'-2)}$.
 Thus we have both $[1,0]^{\otimes (n-2)}$ and $[0, 1]^{\otimes (n'-2)}$.
 We can connect all $n'-2$ edges of the second to the first,
 connected by $\neq_2$.
 This gives $[1,0]^{\otimes (n-n')}$.
 We can continue subtracting the smaller arity from the larger one.
 We continue this process in a subtractive version of the Euclidean algorithm,
 and end up with both $[1,0]^{\otimes t}$ and $[0, 1]^{\otimes t}$,
 where $t = \gcd(n-2, n'-2) = \gcd(n-n', n'-2)$.
 In particular, $t \mid n-n'$ and by taking $\frac{n-n'}{t}$ copies of $[0, 1]^{\otimes t}$,
 we can get $[0, 1]^{\otimes (n-n')}$.
 Connecting this back to $\hat{f}$ via $\neq_2$,
 we get a symmetric signature of arity $n'$ consisting of the first $n' + 1$ entries of $\hat{f}$.
 A similar proof works when $n' > n$.

 Thus we may assume $n = n'$.
 Connecting $[0, 1]^{\otimes (n-2)}$ to $\hat{f} = [a, 1, 0, \dotsc, 0]$ via $\neq_2$ we get $\hat{h} = [a,1,0]$.
 Recall that $a\neq 0$.
 Translating this back by $Z$,
 we have a binary signature $h\notin\mathscr{R}^-_2$ and $h$ is not a multiple of $Z^{\otimes 2}[c,0,1]$ for any $c\neq 0$.
 Since  $g \in \mathscr{V}^-$,
 by Lemma~\ref{lem:van:bin} or Lemma~\ref{lem:van:M4:bin},
 $\PlHolant(g,h)$ is \numP-hard.
 Hence $\PlHolant(f,g)$ is also \numP-hard.
\end{proof}

When signatures in both $\mathscr{M}_4^+$ and $\mathscr{M}_4^-$ appear,
we show that the only degenerate signatures that mix must also be vanishing.

\begin{lemma} \label{lem:M4+:M4-:deg}
 Let $f \in\mathscr{M}_4^+$ and $g \in\mathscr{M}_4^-$ be two non-degenerate signatures of arity $\ge 3$.
 Let $v = u^{\otimes m}$ be a degenerate signature for some unary signature $u$ and some integer $m \ge 1$.
 If $u$ is not a multiple of $[1,\pm i]$,
 then $\PlHolant(f,g,v)$ is \numP-hard.
\end{lemma}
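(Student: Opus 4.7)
The plan is to apply a holographic transformation by $Z = \trans{1}{1}{i}{-i}$. Since $f \in \mathscr{M}_4^+$ of arity $k \ge 3$, we have $f = c \cdot Z^{\otimes k}\exactone{k}$ for some nonzero $c$; similarly $g = c' \cdot Z^{\otimes k'}\allbutone{k'}$ for some nonzero $c'$ with $k' = \arity(g) \ge 3$. Because $v = u^{\otimes m}$ is degenerate, we may use $u$ as a unary in gadgets, and the assumption that $u$ is not a multiple of $[1,\pm i]$ forces $\hat{u} = Z^{-1}u = [a,b]$ with $ab \ne 0$. Passing to an equivalent bipartite planar Holant problem and applying $Z$,
\[
 \PlHolant(f,g,v) \equiv_T \plholant{{\neq}_2}{\exactone{k},\allbutone{k'},[a,b]}
\]
up to global nonzero constants that do not affect complexity.

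I would then iteratively peel the $\exactone{k}$ signature down to arity $2$ by attaching the unary $[a,b]$ through $\neq_2$. Attaching $[a,b]$ via $\neq_2$ to one variable of a right-side signature is equivalent to replacing that variable by the unary $[b,a]$. Applied once to $\exactone{k}$ this yields $[a,b,0,\ldots,0]$ of arity $k-1$, and an induction shows that applying the same step to $[\alpha,\beta,0,\ldots,0]$ produces $[b\alpha + a\beta,\, b\beta,\, 0,\ldots,0]$. Solving the recurrence, after $j$ attachments the signature is $[\,j a b^{\,j-1},\, b^{\,j},\, 0,\ldots,0\,]$, whose first two entries are nonzero precisely because $a, b \ne 0$. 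After $k-2$ attachments we obtain a binary signature $[p,q,0]$ with $pq \ne 0$ on the right side in the $Z$-basis. Translating back via $Z$ gives a binary signature $h := Z^{\otimes 2}[p,q,0]$ in the original basis.

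Finally, I would verify that $h$ satisfies the hypotheses of Lemma~\ref{lem:van:M4:bin} against $g$. The signature $h$ is non-degenerate since $q \ne 0$; it is not in $\mathscr{R}_2^-$ by Lemma~\ref{lem:prelim:vanishing_form_in_Z_basis}, because its $Z$-transform $[p,q,0]$ has $p \ne 0$; and it is not a nonzero multiple of $Z^{\otimes 2}[c,0,1]$ for $c \ne 0$, because $q \ne 0$ while the middle entry of $[c,0,1]$ is $0$. Lemma~\ref{lem:van:M4:bin} then yields that $\PlHolant(g,h)$, and hence $\PlHolant(f,g,v)$, is \numP-hard. No step appears delicate; the only bookkeeping is verifying the recurrence for the peeling, which succeeds precisely because both entries of $\hat{u}$ are nonzero, i.e., precisely the hypothesis that $u$ is not a multiple of $[1,\pm i]$.
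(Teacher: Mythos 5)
Your proof has a genuine gap at the step ``Because $v = u^{\otimes m}$ is degenerate, we may use $u$ as a unary in gadgets.'' Having $u^{\otimes m}$ in the signature set does not give free access to the unary $u$: a vertex assigned $u^{\otimes m}$ has $m$ dangling edges which must all connect somewhere in the planar gadget. Your iterative peeling attaches $u$ exactly $k-2$ times to $\exactone{k}$, and there is no reason $k-2$ should be a multiple of $m$; the excess edges of any $u^{\otimes m}$-vertex you draw would have no home. The paper handles exactly this mismatch by first producing $[1,0]^{\otimes n-2}$ via a self-loop on $\exactone{n}$, and then running the subtractive Euclidean argument of Lemma~\ref{lem:van:deg} to manufacture $[a,b]^{\otimes n-2}$ out of $[a,b]^{\otimes m}$ and $[1,0]^{\otimes n-2}$. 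This gives a block of precisely $n-2$ copies of $[a,b]$, all of whose edges go to $\exactone{n}$, and the resulting binary $[(n-2)ab^{n-3},\,b^{n-2},\,0]$ is identical to the one you derive by peeling.

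Once that construction is in place, the rest of your argument is correct and follows the paper's route: the binary $h = Z^{\otimes 2}[p,q,0]$ with $pq \ne 0$ is non-degenerate, lies outside $\mathscr{R}_2^-$ since its $Z^{-1}$-transform has a nonzero first entry (Lemma~\ref{lem:prelim:vanishing_form_in_Z_basis}), and is not a multiple of $Z^{\otimes 2}[c,0,1]$ since $q \ne 0$; hence Lemma~\ref{lem:van:M4:bin} applies to the pair $(g, h)$ with $\sigma = -$. Repair the construction of the $[a,b]$-tensor block with the Euclidean argument and your proof goes through.
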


\begin{proof}
 Suppose $f$ is of arity $n$ and $g$ of arity $\ell$.
 Under a holographic transformation by $Z$,
 we have
 \[
  \PlHolant(f,g,v) \equiv \plholant{{\neq}_2}{\exactone{n},\allbutone{\ell},[a,b]^{\otimes m}},
 \]
 where $ab \neq 0$.
 Notice that $v$ is transformed to $(Z^{-1} u)^{\otimes m} = [a,b]^{\otimes m}$.
 We have $ab \neq 0$ since $u$ is not a multiple of $[1,\pm i]$.
 First we get $[1,0]^{\otimes n-2}$ by a self-loop via $\neq_2$ on $\exactone{n}$.
 By the same subtractive Euclidean argument as in Lemma~\ref{lem:van:deg},
 we can realize $[a,b]^{\otimes n-2}$ by $[1,0]^{\otimes n-2}$ and $[a,b]^{\otimes m}$.
 Connecting $[a,b]^{\otimes n-2}$ to \exactone{n} via $\neq_2$
 we get a binary signature $h=[(n-2)ab^{n-3},b^{n-2},0]$.
 After transforming back,
 we have
 \[
  \PlHolant(g, Z^{\otimes 2} h) \le_T \PlHolant(f,g,v).
 \]
 However $Z^{\otimes 2}h\notin\mathscr{R}_2^-$ by Lemma~\ref{lem:prelim:vanishing_form_in_Z_basis}
 and it is not a multiple of $Z^{\otimes 2}[c,0,1]$ for any $c\neq 0$.
 Hence $\PlHolant(f,g,v)$ is \numP-hard by Lemma~\ref{lem:van:M4:bin},
 where $(g, Z^{\otimes 2} h)$ plays the role of ``$(f,h)$'' in Lemma~\ref{lem:van:M4:bin}
 and $\sigma = -$.
\end{proof}

We also consider the mixing of vanishing signatures with those in $\mathscr{P}_2$.

\begin{lemma} \label{lem:vanishing:P2}
 Let $f \in \mathscr{V} \setminus \mathscr{M}_4$ and $g \in \mathscr{P}_2$ be two non-degenerate signatures with arities $m$ and $n$ respectively.
 If $m, n \ge 3$,
 then $\PlHolant(f,g)$ is \numP-hard.
\end{lemma}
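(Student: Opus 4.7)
The plan is to reduce $\PlHolant(f,g)$ to a problem handled by either Lemma~\ref{lem:van:deg} or Lemma~\ref{lem:van:bin}. Without loss of generality assume $f \in \mathscr{V}^+$; the case $f \in \mathscr{V}^-$ is symmetric. Under the holographic transformation by $Z$, $\PlHolant(f,g) \equiv_T \plholant{{\neq}_2}{\hat f, \hat g}$. Writing $d = \rd^+(f) < m/2$, Lemma~\ref{lem:prelim:vanishing_form_in_Z_basis} gives $\hat f = [\hat f_0, \dotsc, \hat f_d, 0, \dotsc, 0]$ with $\hat f_d \neq 0$, and Lemma~\ref{lem:single:A2} gives $\hat g = [1,0,\dotsc,0,\beta]$ (up to scalar) with $\beta \neq 0$. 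Since $f$ is non-degenerate and lies in $\mathscr{V}^+$, the hypothesis $f \notin \mathscr{M}_4$ reduces to $f \notin \mathscr{M}_4^+$, which forbids $\hat f$ from being a scalar multiple of $[0,1,0,\dotsc,0]$ and thus forces either $d \geq 2$, or $d = 1$ with $\hat f_0 \neq 0$.

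The first step, common to both cases, is to realize the unaries $[1,0]$ and $[0,1]$ in the $Z$-basis. Performing $d$ self-loops on $\hat f$ via $\neq_2$ yields a nonzero scalar multiple of $[1,0]^{\otimes(m-2d)}$, which (since $m - 2d \geq 1$) supplies the unary $[1,0]$. Attaching $n-1$ copies of $[1,0]$ to a single $\hat g$ via $\neq_2$ then forces $n-1$ of $\hat g$'s edges to value $1$; because $\hat g$ is supported only on $0^n$ and $1^n$, the remaining external edge is also forced to $1$, producing a nonzero multiple of $[0,1]$. In Case A ($d \geq 2$), I form the degenerate signature $v = [0,1]^{\otimes 2}$ in the $Z$-basis, whose image in the original basis is $[1,-i]^{\otimes 2}$; since $[1,-i]$ is not a multiple of $[1,i]$ and $\rd^+(f) \geq 2$, Lemma~\ref{lem:van:deg} immediately makes $\PlHolant(f,v)$, and hence $\PlHolant(f,g)$, \numP-hard.

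The main obstacle is Case B ($d = 1$ with $\hat f_0 \neq 0$), where Lemma~\ref{lem:van:deg} is unavailable. Here I aim for Lemma~\ref{lem:van:bin} by constructing a non-degenerate binary signature whose top $Z$-basis entry is nonzero. First I realize the unary $\hat u = [\hat f_0, \hat f_1]$ by attaching $m-1$ copies of $[0,1]$ to $\hat f$ via $\neq_2$, each forcing a $\hat f$-edge to $0$; the condition $\hat f_0 \neq 0$ coming from $f \notin \mathscr{M}_4^+$ is precisely what makes both entries of $\hat u$ nonzero. Next I attach $n-2$ copies of $\hat u$ to a single $\hat g$ via $\neq_2$, and once again exploit that $\hat g$ is nonzero only on $0^n$ and $1^n$ to obtain the binary signature $\hat h = [\hat f_1^{n-2},\, 0,\, \beta \hat f_0^{n-2}]$. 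This $\hat h$ is non-degenerate with $\hat h_2 \neq 0$, so by Lemma~\ref{lem:prelim:vanishing_form_in_Z_basis} the corresponding $h$ in the original basis lies outside $\mathscr{R}_2^+$, and Lemma~\ref{lem:van:bin} delivers \numP-hardness of $\PlHolant(f,h)$, and therefore of $\PlHolant(f,g)$. The key structural feature making Case B succeed is the ``equality-like'' support of $\hat g$, which lets any non-degenerate unary propagate through $\hat g$ into a binary signature with vanishing middle entry but nonzero outer entries---exactly the recipe for avoiding $\mathscr{R}_2^+$.
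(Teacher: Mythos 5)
Your overall plan is a genuinely different route from the paper: the paper finishes Case $d=1$ by reducing to $\plholant{=_n}{[\hat f_0,1,0]}$ and invoking Theorem~\ref{thm:k-reg_homomorphism}, and finishes Case $d\ge 2$ by reducing to Eulerian Orientation; you instead funnel both cases through Lemma~\ref{lem:van:deg} and Lemma~\ref{lem:van:bin}. The case split and the normalizations are the same, and Case~A would indeed be a cleaner argument than the paper's if the realization step were right. But there is a gap in the realizability claims for the unary signatures.

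The self-loops on $\hat f$ produce the signature $[1,0]^{\otimes(m-2d)}$ as a \emph{single} planar gadget with $m-2d$ dangling edges. Dangling edges cannot be discarded, so this does not ``supply the unary $[1,0]$'' of arity~$1$ unless $m-2d=1$; it only supplies $[1,0]$-pins in indivisible bundles of size $m-2d$. Consequently, pinning $\hat g$ produces only $[0,1]^{\otimes j}$ for certain $j$, not $j=1$ in general. The paper handles exactly this issue with a subtractive-Euclid argument, obtaining $[0,1]^{\otimes r}$ with $r = \gcd(n, m-2d)$, and $r$ need not be~$1$. In Case~A the gap is harmless: Lemma~\ref{lem:van:deg} accepts $v = u^{\otimes j}$ for \emph{any} $j\ge 1$, so $v=[1,-i]^{\otimes r}$ works just as well as $[1,-i]^{\otimes 2}$, and the argument goes through once the gcd step is inserted. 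In Case~B the gap is fatal to your construction: realizing $\hat u = [\hat f_0, \hat f_1]$ (arity~$1$) requires pinning $m-1$ edges of $\hat f$ to~$0$, hence $[0,1]^{\otimes(m-1)}$. But $r=\gcd(n,m-2)$ divides $m-2$, and since $\gcd(m-1,m-2)=1$, we have $r\mid(m-1)$ only if $r=1$. So whenever $\gcd(n,m-2)>1$ (e.g.\ $m=6$, $n=4$) your $\hat u$, and therefore your $\hat h = [\hat f_1^{n-2},0,\beta\hat f_0^{n-2}]$, is not realizable by this route. What the gcd step \emph{does} give is $[0,1]^{\otimes(m-2)}$, pinning $\hat f$ to the binary $[\hat f_0,1,0]$ — but this has vanishing top entry, so by Lemma~\ref{lem:prelim:vanishing_form_in_Z_basis} the corresponding signature lies in $\mathscr{R}_2^+$ and Lemma~\ref{lem:van:bin} does not apply. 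This is precisely why the paper switches to Theorem~\ref{thm:k-reg_homomorphism} for the $d=1$ case rather than appealing to a vanishing/binary mixing lemma.
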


\begin{proof}
 We claim that it suffices to consider $f \in \mathscr{V}^+ \setminus \mathscr{M}_4$ and $g = \tbcolvec{1}{i}^{\otimes n} + \tbcolvec{1}{-i}^{\otimes n}$.
 By Lemma~\ref{lem:single:A2},
 we know that $g = \tbcolvec{1}{i}^{\otimes n} + \beta \tbcolvec{1}{-i}^{\otimes n}$ for some $\beta \ne 0$ up to a nonzero scalar.
 Under a holographic transformation by $T = Z \trans{1}{0}{0}{\beta^{\frac{1}{n}}} Z^{-1}$,
 which is orthogonal up to a nonzero factor of $\beta^{\frac{1}{n}}$,
 we have $\hat{g} = (T^{-1})^{\otimes n} g = \tbcolvec{1}{i}^{\otimes n} + \tbcolvec{1}{-i}^{\otimes n}$.
 Now $\mathscr{M}_4$ is closed under orthogonal transformations by definition,
 and $\mathscr{V}$ is closed under orthogonal transformations by Lemma~\ref{lem:van:orth}.
 Thus,
 we still have a signature $\hat{f} = (T^{-1})^{\otimes n} f$ such that $\hat{f} \in \mathscr{V} \setminus \mathscr{M}_4$.
 If $\hat{f} \in \mathscr{V}^-$,
 then under a holographic transformation by $D = \trans{1}{0}{0}{-1}$,
 we have $\hat{f} \in \mathscr{V}^+$.
 Furthermore,
 $\hat{g}$ is invariant under $D$.
 This proves the claim.
 
 Now we assume that $f \in \mathscr{V}^+ \setminus \mathscr{M}_4$ and $g = \tbcolvec{1}{i}^{\otimes n} + \tbcolvec{1}{-i}^{\otimes n}$.
 By Corollary~\ref{cor:van:degree},
 we have $\rd^+(f) = d < \frac{m}{2}$.
 Under a holographic transformation by $Z$,
 we have
 \begin{align*}
  \plholant{{=}_2}{f, g}
  &\equiv \plholant{[1,0,1] Z^{\otimes 2}}{Z^{-1} \{f, g\}}\\
  &\equiv \plholant{{\neq}_2}{\hat{f}, {=}_n},
 \end{align*}
 where $\hat{f} = (Z^{-1})^{\otimes m} f$.
 By Lemma~\ref{lem:prelim:vanishing_form_in_Z_basis},
 the support of $\hat{f}$ is on entries with Hamming weight at most $d$
 and includes the entry of Hamming weight exactly $d$.
 Now $f \notin \mathscr{M}_4$,
 so by Lemma~\ref{lem:trans:M4},
 we either have $d = 1$ and $\hat{f} = [\hat{f}_0,1,0,\dotsc,0]$ with $\hat{f}_0 \ne 0$
 or $d \ge 2$ and $\hat{f} = [\hat{f}_0, \hat{f}_1, \dotsc, \hat{f}_{d-1}, 1, 0, \dots, 0]$
 (and up to a nonzero scalar in either case).

 In the first case,
 a self-loop on $\hat{f}$ via $\neq_2$ gives $[1,0]^{\otimes m-2}$ on the right side.
 Let $r = \gcd(n, m-2)$,
 and let $\ell_1, \ell_2$ be two positive integers such that $\ell_1 n - \ell_2 (m-2) = r$.
 We connect $\ell_1$ copies of $=_n$ with $\ell_2$ copies of $[1,0]^{\otimes m-2}$ via $\neq_2$'s to get $[0,1]^{\otimes r}$.
 Since $r \mid m-2$,
 we can also realize $[0,1]^{\otimes m-2}$ by putting $\frac{m-2}{r}$ copies of $[0,1]^{\otimes r}$ together.
 Now connect $[0,1]^{\otimes m-2}$ to $\hat{f}$ via $\neq_2$.
 The resulting signature is $[\hat{f}_0,1,0]$.
 We can also move $=_n$ to the left using $n$ copies of $\neq_2$.
 Hence,
 we have $\PlHolant(=_n \mid [\hat{f}_0,1,0]) \le_T \PlHolant({\neq}_2 \mid \hat{f}, {=}_n)$.
 The former problem is \numP-hard by Theorem~\ref{thm:k-reg_homomorphism} since $\hat{f}_0 \ne 0$,
 so the latter problem is \numP-hard as well.

 In the second case,
 we have $m \ge 5$ since $2 \le d < \frac{m}{2}$.
 Furthermore,
 we may assume that $d=2$,
 since otherwise can we do $d-2$ self-loops on $\hat{f}$ via $\neq_2$.
 With this assumption,
 we do two self-loops on $\hat{f}$ via $\neq_2$ to get $[1,0]^{\otimes m-4}$ on the right side.
 By a similar argument as in the previous case,
 we can construct $[0,1]^{\otimes m-4}$ by using $[1,0]^{\otimes m-4}$ and $=_n$ via $\neq_2$.
 Now connect $[0,1]^{\otimes m-4}$ back to $\hat{f}$ via $\neq_2$.
 We get the arity~$4$ signature $[\hat{f}_0, \hat{f}_1, 1, 0, 0]$.
 Hence, we have $\PlHolant({\neq}_2 \mid [\hat{f}_0, \hat{f}_1, 1, 0, 0]) \le_T \PlHolant({\neq}_2 \mid \hat{f}, {=}_n)$.
 Note that $\PlHolant({\neq}_2 \mid [\hat{f}_0, \hat{f}_1, 1, 0, 0])$
 is equivalent to $\PlHolant({\neq}_2 \mid [0, 0, 1, 0, 0])$,
 counting Eulerian Orientations in planar $4$-regular graphs,
 which is \numP-hard by Corollary~\ref{cor:prelim:nonsingular_compressed_hard_trans}.
 Thus $\PlHolant({\neq}_2 \mid \hat{f}, {=}_n)$ is \numP-hard as well.
\end{proof}

\section{Dichotomy for \texorpdfstring{$\PlCSP^2$}{Pl-Count-CSP2} and Related Lemmas}\label{sec:PlCSP2}

In this section,
we state the dichotomy for $\PlCSP^2$.
We defer the proof to Part II of this paper starting on page \pageref{partII:sec:prelim}.
We provide a sketch of the proof here.
Afterwards,
we discuss several related lemmas,
which are used for the full dichotomy of $\PlHolant$.
Let $\mathcal{T}_k = \left\{\left[\begin{smallmatrix} 1 & 0 \\ 0 & \omega \end{smallmatrix}\right] 
\in \mathbb{C}^{2 \times 2} \st \omega^k = 1\right\}$.

\begin{theorem} \label{thm:PlCSP2}
 Let $\mathcal{F}$ be a set of symmetric signatures.
 Then $\PlCSP^2(\mathcal{F})$ is \numP-hard unless
 $\mathcal{F}$ satisfies one of the following conditions:
 \begin{enumerate}
   \item \label{case:plcsp2:TA}
     there exists $T \in \mathcal{T}_8$ such that $\mathcal{F} \subseteq T \mathscr{A}$;
   \item \label{case:plcsp2:P}
     $\mathcal{F} \subseteq   \mathscr{P}$;
   \item \label{case:plcsp2:TM}
     there exists $T \in \mathcal{T}_4$ such that $\mathcal{F} \subseteq T \widehat{\mathscr{M}}$.
 \end{enumerate}
 In each exceptional case,
 $\PlCSP^2(\mathcal{F})$ is computable in polynomial time.
\end{theorem}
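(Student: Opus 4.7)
I first verify each exceptional case is polynomial-time computable. Since $\PlCSP^2(\mathcal{F}) \equiv_T \plholant{\mathcal{EQ}_2}{\mathcal{F}}$ by~(\ref{eqn:prelim:PlCSPd_equiv_Holant}), and since holographic transformations preserve planarity, applying the transformation by $T \in \mathcal{T}_k$ yields $\plholant{\mathcal{EQ}_2 T^{\otimes *}}{T^{-1}\mathcal{F}}$. Writing $T = \trans{1}{0}{0}{\omega}$, a direct computation gives $(=_{2k}) T^{\otimes 2k} = [1,0,\dotsc,0,\omega^{2k}]$. In case~\ref{case:plcsp2:TA}, $\omega^8=1$ forces $\omega^{2k}$ to be a fourth root of unity, so $(=_{2k})T^{\otimes 2k} \in \mathscr{F}_1 \subseteq \mathscr{A}$, reducing the problem to $\Holant(\mathscr{A})$, tractable by Theorem~\ref{thm:Holant:set}. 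In case~\ref{case:plcsp2:TM}, $\omega^4=1$ gives $\omega^{2k}=\pm 1$, and $[1,0,\dotsc,0,\pm 1] \in \widehat{\mathscr{M}}$ by Proposition~\ref{prop:match:symmetric:decomp}, so we reduce to $\Holant(\widehat{\mathscr{M}})$, tractable by Theorem~\ref{thm:tractable:M}. Case~\ref{case:plcsp2:P} is immediate since $\mathcal{EQ}_2 \cup \mathcal{F} \subseteq \mathscr{P}$.

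\textbf{Hardness via reduction to $\PlCSP$.} The overarching strategy is: whenever $\mathcal{F}$ falls outside all three exceptional classes, realize enough signatures from $\mathcal{EQ}_2 \cup \mathcal{F}$ to invoke a known hardness result, ideally Theorem~\ref{thm:PlCSP}. If one can planarly construct $=_1$ (pinning) or any odd-arity equality, then $\PlCSP^2(\mathcal{F})$ reduces to $\PlCSP(\mathcal{F})$, and Theorem~\ref{thm:PlCSP} forces $\mathcal{F}$ into $\mathscr{A}$, $\mathscr{P}$, or $\widehat{\mathscr{M}}$---which are precisely the three exceptional cases with $T = I_2$. The new cases with nontrivial $T \in \mathcal{T}_8 \cup \mathcal{T}_4$ correspond to signature sets for which no odd-arity gadget can be realized planarly; these are exactly the sets whose algebraic structure is invariant under a cyclic diagonal transformation of the orders $8$ or $4$.

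\textbf{Key technical tools.} The principal devices are: (i) a derivative operator $\partial f$ obtained by connecting two inputs of $f$ via $=_2 \in \mathcal{EQ}_2$, reducing arity by two while staying within the $\PlCSP^2$ framework; (ii) its formal inverse $\int$, lifting $f$ from $\partial f$ by attaching higher even-arity equalities; (iii) cyclotomic field analysis to pinpoint when a signature lies in $T\mathscr{A}$ or $T\widehat{\mathscr{M}}$ for $T \in \mathcal{T}_k$. I would proceed by induction on the maximum arity, using $\partial$ to descend to the arity-$2$, $3$, or $4$ base cases handled by Theorems~\ref{thm:k-reg_homomorphism} and~\ref{thm:PlHolant:arity34}, and then use $\int$-type lifting arguments together with the Holant dichotomy Theorem~\ref{thm:Holant:set} and the mixing lemmas of Section~\ref{sec:mixing:vanishing} to propagate hardness back up.

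\textbf{Main obstacle.} The principal difficulty is the loss of the pinning technique: the Huang--Lu $\CSP^d$ dichotomy pins variables by joining multiple disjoint copies of a gadget across copies, a construction that fundamentally violates planarity. Planar replacements must be designed case by case, and the analysis must further bifurcate around signatures whose entries lie in cyclotomic subfields of $\mathbb{Q}(\zeta_8)$, where the exotic cases governed by $\mathcal{T}_8$ and $\mathcal{T}_4$ appear as genuine tractable exceptions rather than proof artifacts. I anticipate that new tractable families specific to the $\PlCSP^2$ framework will emerge during the inductive analysis (beyond the three stated classes restricted to $T = I_2$), requiring independent tractability verifications before the hardness dichotomy can be closed; organizing the induction so that each encountered family is provably captured by one of the three conditions in the theorem is, I expect, the delicate combinatorial heart of the argument.
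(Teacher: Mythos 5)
Your tractability verification is essentially correct, and you correctly identify the overarching strategy (reduction to $\PlCSP$ via realizing odd-arity equalities), the pinning obstacle, the $\partial$/$\int$ calculus, and cyclotomic field analysis as key tools. But there are two genuine gaps that prevent this sketch from becoming a proof.

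First, the inductive hardness argument cannot be closed without a \emph{No-Mixing} lemma, which you never mention. The paper's actual argument (in Part II) proves, for every pair of signatures $f$, $g$ of even arity that individually belong to some tractable class but not both to a common one, that $\PlCSP^2(f,g)$ is \numP-hard; this pairwise statement is then bootstrapped to an entire set. Without it, the induction stalls at arity 4: showing that $\PlCSP^2(f)$ is hard for a single signature $f$ of higher even arity requires knowing that a ``nice'' signature $\partial(f)$ cannot mix with the integral correction terms, and this is exactly the conditional No-Mixing machinery. Your phrase ``use $\int$-type lifting arguments together with the Holant dichotomy Theorem~\ref{thm:Holant:set} and the mixing lemmas of Section~\ref{sec:mixing:vanishing}'' points to tools from the general $\PlHolant$ setting (vanishing signatures, arbitrary planar Holant) which do not transfer: Theorem~\ref{thm:PlHolant:arity34} certifies hardness of $\PlHolant(f)$, and while $\PlHolant(f) \leq_T \PlCSP^2(f)$ transfers hardness, the converse fails, so the tractable cases of $\PlHolant(f)$ give no information about $\PlCSP^2(f)$. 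The correct arity-4 base case requires a different and self-contained argument (the paper's domain-pairing ``Three Stooges'' trick and a redundant-matrix rank criterion).

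Second, the proof must split into the case where $\mathcal{F}$ contains a nonzero odd-arity signature and the case where all nonzero signatures have even arity, because only in the former is a unary $[a,b]$ with $ab\neq 0$ ever constructible (and hence $\PlCSP$ directly simulatable). In the all-even case one cannot produce any unary, and a \emph{global} simulation of $\PlCSP$ is required instead, driven by a degenerate binary $[1,\omega]^{\otimes 2}$ rather than a unary; this is a qualitatively different reduction and you do not address it. Finally, your anticipation that ``new tractable families specific to the $\PlCSP^2$ framework will emerge'' runs backwards: the entire point of $\PlCSP^2$ (in contrast to $\PlCSP^d$ for $d \geq 5$) is that the putative three-class dichotomy is \emph{true} and no additional tractable families arise, so the work consists of ruling them out rather than discovering them.
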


\begin{proof}[Proof Sketch]
 We first define some tractable families of signatures specific to the $\PlCSP^2$ framework.
 Let $\widetilde{\mathscr{A}} = \mathscr{A} \cup \trans{1}{0}{0}{e^{\pi i/4}}\mathscr{A}$ and
 $\widetilde{\mathscr{M}} = \widehat{\mathscr{M}} \cup \trans{1}{0}{0}{i} \widehat{\mathscr{M}}$.
 One can show that $\widetilde{\mathscr{A}}$ covers Case \ref{case:plcsp2:TA} above,
 and $\widetilde{\mathscr{M}}$ covers Case \ref{case:plcsp2:TM}.
 The proof will revolve around these tractable classes.

 The overall plan is to break the proof into two main steps.

 The first step is to prove the dichotomy theorem for $\PlCSP^2(\mathcal{F})$
 when there is at least one nonzero signature of \emph{odd} arity in $\mathcal{F}$.
 In this case, we can make use of a lemma showing
 that we can simulate $\PlCSP(\mathcal{F})$ by $\PlCSP^2(\mathcal{F})$
 if $\mathcal{F}$ includes a unary signature $[a, b]$ with $a b \neq 0$.
 Then we can apply the known dichotomy Theorem~\ref{thm:PlCSP} for $\PlCSP$.
 However this strategy (provably) \emph{cannot} work 
 when every signature in $\mathcal{F}$ satisfies the \emph{parity} constraint.
 In that case we employ other means.
 This first step of the proof is relatively uncomplicated.

The second step is to deal with the case when
all nonzero signatures in $\mathcal{F}$ have even arity. 
This is where the real difficulties lie. 
In this case it is impossible to directly construct \emph{any} unary signature.
So we cannot use that lemma pertaining to a unary signature.
But we prove another lemma which provides a way to simulate
$\PlCSP(\mathcal{F})$ by $\PlCSP^{2}(\mathcal{F})$ in a \emph{global} fashion,
\emph{if} $\mathcal{F}$ includes some tensor power of the form $[a, b]^{\otimes 2}$ where $ab\neq 0$.
Moreover, we have a lucky break (for the complexity of the proof) 
if $\mathcal{F}$ includes a signature that is 
in $\widehat{\mathscr{M}}\setminus(\mathscr{P}\cup\widetilde{\mathscr{A}})$.
In this case, we can construct a special binary signature,
and obtain $[1,1]^{\otimes 2}$ by interpolation.
This proof uses the theory of \emph{cyclotomic fields}.
This simplifies the proof greatly.
For all other cases 
(when $\mathcal{F}$ has only even arity signatures), 
the proof gets going in earnest---%
we will attempt an induction on the arity of signatures.

The lowest arity of this induction will be $2$.
We will try to reduce the arity to $2$ whenever possible; 
however for many cases an arity reduction to $2$ destroys the \#P-hardness at hand. 
Therefore the true basis of this induction proof of $\PlCSP^2$ starts with arity $4$. 
Consequently we will first prove a dichotomy theorem for $\PlCSP^2(f)$, 
where $f$ is a signature of arity $4$.
Several tools will be used. 
These include the rank criterion for redundant signatures, 
Theorem \ref{thm:k-reg_homomorphism} for arity $2$ signatures,
and a trick we call the \emph{Three Stooges} by domain pairing.

However, in the next step we do not attempt a general
$\PlCSP^2$ dichotomy for a \emph{single} signature of even arity. 
This would have been natural at this point, 
but it would have been too difficult.
We will need some additional leverage by proving a conditional
``No-Mixing'' Lemma for pairs of signatures of even arity.
So, seemingly taking a detour,
we prove that for two signatures $f$ and $g$ both of even arity,
that individually belong to some tractable class,
but do not belong to a single tractable class in the conjectured dichotomy
(that is yet to be proved),
the problem $\PlCSP^2(f,g)$ is \numP-hard.
We prove this No-Mixing Lemma for any pair of signatures $f$ and $g$ 
both of even arity, not restricted to arity $4$.
Even though at this point we only have a dichotomy for a single signature of arity 4, 
we prove this No-Mixing Lemma for higher even arity pairs $f$ and $g$ 
by simulating two signatures $f'$ and $g'$ of arity 4 
that belong to different tractable sets, from that of $\PlCSP^2(f,g)$.
After this arity reduction (within the No-Mixing Lemma),
we prove that $\PlCSP^2(f', g')$ is \numP-hard by the dichotomy 
for a \emph{single} signature of arity $4$.
After this, we prove a No-Mixing Lemma for a \emph{set} of signatures $\mathcal{F}$ of even arities, 
which states that \emph{if} $\mathcal{F}$ is contained in the union of all tractable classes, 
then it is still \numP-hard unless it is \emph{entirely} contained in one single tractable class.  
Note that at this point we still only have a \emph{conditional} No-Mixing Lemma 
in the sense that we have to assume every signature in $\mathcal{F}$ belongs to some tractable set.

We then attempt the proof of a $\PlCSP^2$ dichotomy 
for a \emph{single} signature of arbitrary even arity. 
This uses all the previous lemmas, 
in particular the (conditional) No-Mixing Lemma for a set of signatures. 
However, after completing the proof of this $\PlCSP^2$ dichotomy 
for a single signature of even arity, 
the No-Mixing Lemma becomes absolute.

Finally the dichotomy for a single signature of even arity 
is logically extended to a dichotomy theorem for $\PlCSP^2(\mathcal{F})$
where all signatures in $\mathcal{F}$ have even arity.
Together with the first main step 
when $\mathcal{F}$ contains some nonzero signature of odd arity,
this completes the proof of Theorem~\ref{thm:PlCSP2}.
\end{proof}

\subsection{Related Lemmas}

Now we give some consequences of Theorem~\ref{thm:PlCSP2}.
These are cases that can be reduced to $\PlCSP^2$.
We consider signatures in $\mathscr{P}_1$, 
$\mathscr{M}_2 \setminus \mathscr{P}_2$, $\mathscr{A}_3$, or $\mathscr{M}_3$.

We begin with the cases of $\mathscr{P}_1$ and $\mathscr{A}_3$.
The following two lemmas are rephrased from~\cite{CGW13}.
One can check that the reductions in these proofs are planar.

\begin{lemma}[Lemma 8.15 in~\cite{CGW13}] \label{lem:reductions:P1}
  Let $f\in\mathscr{P}_1$ be a non-degenerate signature of arity $n\ge 3$ with an orthogonal transformation $H$
  and $\mathcal{F}$ be a set of signatures containing $f$.
  Let $H_2$ be the $2$-by-$2$ matrix $\tfrac{1}{\sqrt{2}}\trans{1}{1}{1}{-1}$.
  Then $\PlCSP^2(H_2H\mathcal{F})\le_T\PlHolant(\mathcal{F})$.
\end{lemma}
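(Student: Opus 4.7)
The plan is to convert $\PlCSP^2$ into its bipartite Holant form and apply a holographic transformation. Since $\PlCSP^2(H_2 H \mathcal{F}) \equiv_T \plholant{\mathcal{EQ}_2}{H_2 H \mathcal{F}}$ and $T := H_2 H$ is orthogonal (as a product of orthogonal matrices $H_2$ and $H$), Valiant's Holant Theorem gives $\plholant{\mathcal{EQ}_2}{T \mathcal{F}} \equiv_T \plholant{\mathcal{EQ}_2 \cdot T}{\mathcal{F}}$. The binary equality ${=}_2$ is preserved by the orthogonal $T$ (so ${=}_2 T^{\otimes 2} = {=}_2$), and since $\PlHolant(\mathcal{F}) \equiv_T \plholant{{=}_2}{\mathcal{F}}$ via $2$-stretching, the task reduces to realizing each transformed equality $g_k := {=}_{2k}\, T^{\otimes 2k}$ for $k \ge 2$ as an $\mathcal{F}$-gate.

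To do so, I would exploit the $\mathscr{P}_1$ structure of $f = c H^{\otimes n}([1,1]^{\otimes n} + \beta [1,-1]^{\otimes n})$ with $\beta \ne 0$ and $n \ge 3$. The key computation $H_2 [1, \pm 1]^T = \sqrt{2}\, [1, 0]^T$ or $\sqrt{2}\, [0,1]^T$ shows that the $T$-related transformation converts $f$ into a generalized equality $[a, 0, \ldots, 0, b]$ with $a b \ne 0$. Using $f$ as a building block, chaining $j$ copies via a single shared edge keeps the signature in $\mathscr{P}_1$ with the same $H$ and new parameter $\beta^j$ (because $\langle u, v \rangle = 0$ and $\langle u, u \rangle = \langle v, v \rangle = 2$ for $u = H [1,1]^T$, $v = H [1,-1]^T$ cause cross-terms to vanish), and produces arity $j(n-2) + 2$; self-loops reduce arity by $2$ while preserving the parameter. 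Since $n \ge 3$, $\gcd(n-2, 2) \le 2$, so every even arity above some threshold is realizable.

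To match $g_k$ exactly (corresponding to parameter $1$ in the appropriate basis), a polynomial interpolation argument picks two combinations of the same arity $2k$ but different chain counts $j_1 \ne j_2$ giving distinct parameters $\beta^{j_1}, \beta^{j_2}$; solving a $2 \times 2$ linear system then yields $g_k$ as a linear combination. This requires only that $\beta^{j_1} \ne \beta^{j_2}$ for some valid pair, which holds because $\beta \ne 1$ (otherwise $f$ would be degenerate). The main obstacle will be handling the case when $\beta$ is a root of unity of small order, but one can always select $j_1 - j_2$ not divisible by that order, ensuring solvability. All gadget constructions use only chaining and self-loops within a planar embedding, so planarity is preserved throughout, completing the reduction.
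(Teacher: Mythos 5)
Your overall plan is the right one: pass to $T = H_2 H$ (orthogonal, so the binary equality survives), observe that the transformed $f$ is a nonzero-weight generalized equality $[1,0,\dotsc,0,\beta]$ of arity $n\ge 3$, then manufacture the higher-arity equalities by chaining copies of this gadget (cross-terms vanish since $\langle H[1,1],H[1,-1]\rangle=0$), reducing arity by self-loops, and interpolating away the residual weight $\beta^j$. This is the standard route for this family of lemmas. However, the interpolation step, as written, has genuine gaps.

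First, the claim that $\beta\neq 1$ ``otherwise $f$ would be degenerate'' is false. When $\beta=1$, the transformed signature is proportional to $[1,0,\dotsc,0,1]=({=}_n)$, which is non-degenerate for every $n\ge 2$; non-degeneracy of $\mathscr{P}_1$ signatures only rules out $\beta=0$. This error is harmless in effect (with $\beta=1$ you get true equalities directly and no interpolation is needed), but the argument as stated rests on a wrong premise. Second, ``solving a $2\times 2$ linear system then yields $g_k$ as a linear combination'' is not a valid gadget operation: you cannot add signatures of planar $\mathcal{F}$-gates and use the sum as a new gadget. The correct interpolation is on the Holant \emph{value}: for an instance with $m$ occurrences of $({=}_{2k})$, replacing each by $[1,0,\dotsc,0,t]$ makes the Holant a polynomial of degree at most $m$ in $t$; one must evaluate at $m+1$ distinct values $t\in\{\beta^{j}\}$ (accounting for the explicit scalar factor introduced by each chain) and solve the resulting Vandermonde system to recover the value at $t=1$. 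Two evaluation points do not suffice, and the size of the linear system grows with the instance. Third, the root-of-unity case is mishandled. If $\beta$ has finite multiplicative order $d$, then $\{\beta^j\}$ has only $d$ elements, so the Vandermonde system above cannot be made full rank once $m\ge d$; choosing $j_1-j_2$ not divisible by $d$ only guarantees two distinct points, which doesn't help. The correct handling of this case is to abandon interpolation and realize a true equality directly by taking chain lengths $j$ divisible by $d$ (and self-loops of suitable parity), giving $[1,0,\dotsc,0,\beta^j]=[1,0,\dotsc,0,1]$ exactly. With those three repairs the argument goes through.
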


\begin{lemma}[Lemma 8.17 in~\cite{CGW13}] \label{lem:reductions:A3}
  Let $f\in\mathscr{A}_3$ be a non-degenerate signature of arity $n\ge 3$ with an orthogonal transformation $H$
  and $\mathcal{F}$ be a set of signatures containing $f$.
  Let $\alpha=e^{\pi i/4}$ 
  and $Y$ be the $2$-by-$2$ matrix \trans{\alpha}{1}{-\alpha}{1}.
  Then $\PlCSP^2(YH\mathcal{F}\cup\{[1,-i,1]\})\le_T\PlHolant(\mathcal{F})$. 
\end{lemma}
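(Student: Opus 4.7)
Plan: The proof parallels that of Lemma~\ref{lem:reductions:P1} (the $\mathscr{P}_1$ companion), adapted to the $\mathscr{A}_3$ case. First I would translate via $\PlCSP^2(\mathcal{G})\equiv_T\plholant{\mathcal{EQ}_2}{\mathcal{G}}$ and apply Valiant's Holant theorem with holographic transformation $T=YH$:
\begin{equation*}
\plholant{\mathcal{EQ}_2}{YH\mathcal{F}\cup\{[1,-i,1]\}}\;\equiv_T\;\plholant{\mathcal{EQ}_2(YH)^{\otimes *}}{\mathcal{F}\cup\{(YH)^{-1}[1,-i,1]\}}.
\end{equation*}
A short matrix computation---using $YY^{\top}=\trans{1+i}{1-i}{1-i}{1+i}$ together with $HH^{\top}=I$---yields $Y^{-1}\trans{1}{-i}{-i}{1}Y^{-\top}=\tfrac{1+i}{2i}I$, whence $(YH)^{-1}[1,-i,1]=\tfrac{1+i}{2i}[1,0,1]$. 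Hence the right-hand side collapses to $\mathcal{F}\cup\{[1,0,1]\}$, and the extra $[1,0,1]$ is simulable inside $\plholant{[1,0,1]}{\mathcal{F}}$ by a single $[1,0,1]$-vertex with two dangling edges.

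The remaining task is to realize each transformed even equality $(=_{2k})(YH)^{\otimes 2k}$ on the left-hand side inside $\plholant{[1,0,1]}{\mathcal{F}}$, using planar gadgets built from $f$ and $[1,0,1]$. The key calculation is that $Y\tbcolvec{1}{\alpha}=2\alpha\tbcolvec{1}{0}$ and $Y\tbcolvec{1}{-\alpha}=-2\alpha\tbcolvec{0}{1}$; after using the orthogonal invariance of $\PlHolant$ from Theorem~\ref{thm:orthogonal} to reduce to the case where $f$ is in its canonical form $c(\tbcolvec{1}{\alpha}^{\otimes n}+i^r\tbcolvec{1}{-\alpha}^{\otimes n})$, one computes $(YH)^{\otimes n}f = c(2\alpha)^n[1,0,\dots,0,(-1)^n i^r]$. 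That is, in the $YH$-transformed picture, $f$ plays the role of a disguised arity-$n$ equality with twist $\tau=(-1)^n i^r\in\{\pm 1,\pm i\}$, a fourth root of unity.

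I would then construct each $(=_{2k})(YH)^{\otimes 2k}$ by combining planar copies of $f$ joined through $[1,0,1]$-bridges on the LHS. In the transformed picture, a linear chain of $m$ disguised equalities gives a twisted equality of arity $m(n-2)+2$ with twist $\tau^m$, while two copies sharing $b$ bridges ($1\le b\le n-1$) give arity $2n-2b$ with twist $\tau^2$; together these generate every even arity $\ge 2$ for $n\ge 3$. Since $\mathrm{ord}(\tau)\le 4$, for each target arity one can choose the chain length and bridge multiplicity so that the cumulative twist is $\tau^m=1$, producing the exact equality $=_{2k}$ up to a nonzero scalar. Transporting back through the holographic transformation gives the required $(=_{2k})(YH)^{\otimes 2k}$ on the LHS, completing the reduction.

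The main obstacle I anticipate is the combinatorial case analysis: verifying for every $r\in\{0,1,2,3\}$, every parity of $n$, and every $2k\ge 2$, that a planar gadget of exactly the correct arity with trivial cumulative twist exists \emph{and} yields a genuinely symmetric signature rather than an asymmetric tensor with the right marginals (a direct product of two copies of $f$ chained naively through a single bridge already fails to be symmetric, as a small weight-$2$ computation for $n=3$ shows). The planarity of the gadget families is routine, since all chain and ladder constructions embed without crossings; the matrix computations and holographic bookkeeping are likewise routine.
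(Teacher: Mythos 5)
Your setup is correct as far as it goes: the holographic transformation by $T=YH$, the matrix identity $Y^{-1}\left[\begin{smallmatrix}1 & -i\\ -i & 1\end{smallmatrix}\right](Y^{\top})^{-1}=\tfrac{1+i}{2i}I$ (so $(YH)^{-1}[1,-i,1]$ is a scalar multiple of $[1,0,1]$), and the identity $(YH)^{\otimes n}f = c(2\alpha)^n[1,0,\dots,0,(-1)^n i^r]$ are all right, and any proof of Lemma~\ref{lem:reductions:A3} must indeed account for the residual left-hand-side signatures $(=_{2k})(YH)^{\otimes 2k}$.

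The chain-and-ladder construction for those signatures, however, fails for a structural reason that is exactly what separates Lemma~\ref{lem:reductions:A3} from Lemma~\ref{lem:reductions:P1}. Your ``transformed picture'' reasoning treats a chain of copies of $f$ joined by $=_2$-bridges in $\PlHolant(\mathcal{F})$ as though it were a chain of the weighted equalities $[1,0,\dots,0,\tau]$ joined by $=_2$. That identification is legitimate only when the internal $=_2$-contractions transport to $=_2$ under the basis change, which happens iff $YH$ is orthogonal --- equivalently, iff the two eigenvectors satisfy $\tbcolvec{1}{\alpha}\cdot\tbcolvec{1}{-\alpha}=0$, since $1-\alpha^2$ is precisely the weight picked up when an internal edge is contracted. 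For $\mathscr{P}_1$ the eigenvectors are $\tbcolvec{1}{\pm 1}$, orthogonal, so $H_2H$ is orthogonal and chains compose exactly as you describe; this is where your mental model comes from. For $\mathscr{A}_3$, $\tbcolvec{1}{\alpha}\cdot\tbcolvec{1}{-\alpha}=1-\alpha^2=1-i\neq 0$ (equivalently $Y^{\top}Y=2\,\mathrm{diag}(i,1)\ne\lambda I$), so contracting one internal edge of a chain produces cross-terms $i^r(1-i)\,\tbcolvec{1}{\alpha}^{\otimes(n-1)}\!\otimes\tbcolvec{1}{-\alpha}^{\otimes(n-1)}$ with a nonvanishing coefficient. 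Concretely, for $n=3$ and $i^r=1$ (so $f=2c[1,0,i,0]$), the two-copy single-bridge chain takes the value $4c^2 i$ at the weight-$2$ input $0011$ but $-4c^2$ at $0101$; it is not symmetric, and hence not any scalar multiple of the symmetric target $(=_4)Y^{\otimes 4}=2[-1,0,i,0,1]$. You noticed this asymmetry and filed it as a combinatorial hiccup, but it is an invariant obstruction: the cross-terms persist for every chain length, every bridge multiplicity, and every $r\in\{0,1,2,3\}$, so no member of your proposed gadget family realizes any $(=_{2k})(YH)^{\otimes 2k}$. A correct proof has to confront the non-orthogonality of $Y$ directly; this is precisely why the lemma carries the extra signature $[1,-i,1]$ and cannot be proved by re-running the argument for Lemma~\ref{lem:reductions:P1}.
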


With these reductions, 
we can apply Theorem~\ref{thm:PlCSP2}
to get the following corollaries.
The next one follows directly from Lemma~\ref{lem:reductions:P1} and Theorem~\ref{thm:PlCSP2}
as $H_2$ is orthogonal and every $\PlCSP^2$ tractable case is also tractable for $\PlHolant$.

\begin{corollary}\label{cor:dichotomy:P1}
  Let $\mathcal{F}$ be a set of signatures.
  Suppose there exists $f\in\mathcal{F}$ which is
  a non-degenerate signature of arity $n\ge 3$ in $\mathscr{P}_1$
  with $H\in\mathbf{O}_2(\mathbb{C})$.
  Then $\PlHolant(\mathcal{F})$ is \numP-hard unless $\mathcal{F}$ is $\mathscr{A}$-, $\mathscr{P}$-, or $\mathscr{M}$-transformable,
  in which case $\PlHolant(\mathcal{F})$ is tractable.
\end{corollary}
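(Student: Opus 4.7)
The plan is to chain Lemma~\ref{lem:reductions:P1} with the $\PlCSP^2$ dichotomy Theorem~\ref{thm:PlCSP2}. Since $f \in \mathcal{F}$ is in $\mathscr{P}_1$ with orthogonal transformation $H$, Lemma~\ref{lem:reductions:P1} immediately gives the reduction
\[
 \PlCSP^2(H_2 H \mathcal{F}) \le_T \PlHolant(\mathcal{F}),
\]
where $H_2 = \tfrac{1}{\sqrt{2}}\trans{1}{1}{1}{-1}$. Now I would apply Theorem~\ref{thm:PlCSP2} to the transformed set $H_2 H \mathcal{F}$. If none of its three conditions holds, then $\PlCSP^2(H_2 H \mathcal{F})$ is $\SHARPP$-hard, so $\PlHolant(\mathcal{F})$ is $\SHARPP$-hard, giving the hardness side.

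Otherwise, I need to translate each tractable case of Theorem~\ref{thm:PlCSP2} back to transformability of $\mathcal{F}$ itself. The key observation is that $H_2 H$ is orthogonal (as a product of two orthogonal matrices), so $(H_2 H)^{-1}$ is also orthogonal, and hence $[1,0,1]\bigl((H_2 H)^{-1}\bigr)^{\otimes 2} = [1,0,1]$. Thus for any transformation $T$ supplied by Theorem~\ref{thm:PlCSP2}, if I let $S = (H_2 H)^{-1} T$ then
\[
 [1,0,1] S^{\otimes 2} = [1,0,1] T^{\otimes 2}.
\]
If $H_2 H \mathcal{F} \subseteq T\mathscr{A}$ with $T \in \mathcal{T}_8$, then $[1,0,1] T^{\otimes 2} = [1,0,\omega^2]$ with $\omega^8 = 1$, and $\omega^2$ is a fourth root of unity, so $[1,0,\omega^2] \in \mathscr{A}$; hence $\mathcal{F} \subseteq S\mathscr{A}$ witnesses that $\mathcal{F}$ is $\mathscr{A}$-transformable. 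If $H_2 H \mathcal{F} \subseteq \mathscr{P}$, then taking $T = I$ gives $[1,0,1] \in \mathscr{P}$, so $\mathcal{F}$ is $\mathscr{P}$-transformable with transformation $(H_2 H)^{-1}$.

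For the last case $H_2 H \mathcal{F} \subseteq T\widehat{\mathscr{M}}$ with $T \in \mathcal{T}_4$, I would unwind $\widehat{\mathscr{M}} = \trans{1}{1}{1}{-1} \mathscr{M}$ and set $S = (H_2 H)^{-1} T \trans{1}{1}{1}{-1}$. A direct computation shows $[1,0,1] T^{\otimes 2} \trans{1}{1}{1}{-1}^{\otimes 2}$ is either $2[1,0,1]$ (when $T^2 = I$) or $[0,2,0]$ (when $T^2 = -I$), both of which are matchgate signatures, so $\mathcal{F}$ is $\mathscr{M}$-transformable. Tractability then follows from Theorem~\ref{thm:tractable:M} and the corresponding tractability results for $\mathscr{A}$- and $\mathscr{P}$-transformable sets. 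The only minor obstacle is this last matchgate verification, which is routine bookkeeping rather than a conceptual hurdle; the rest is a direct application of the reduction lemma and the $\PlCSP^2$ dichotomy.
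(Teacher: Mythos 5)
Your proof is correct and follows essentially the same route as the paper: apply Lemma~\ref{lem:reductions:P1}, invoke Theorem~\ref{thm:PlCSP2}, and translate each tractable case back to transformability of $\mathcal{F}$ using the fact that $(H_2 H)^{-1}$ is orthogonal and hence fixes $[1,0,1]$. One small slip in the matchgate case: for diagonal $T \in \mathcal{T}_4$ you never have $T^2 = -I$; the intended condition is $\omega^2 = -1$, i.e., $T = \trans{1}{0}{0}{\pm i}$, but the stated output $[0,2,0]$ is correct, so this is cosmetic.
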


The proof of this corollary is straightforward. To illustrate the
power of Theorem~\ref{thm:PlCSP2}, we give a short proof here.

\begin{proof}
Let $H' = (H_2 H)^{-1} \in \mathbf{O}_2(\mathbb{C})$.
By Lemma~\ref{lem:reductions:P1} and Theorem~\ref{thm:PlCSP2},
 $\PlHolant(\mathcal{F})$ is \numP-hard
unless either (1) $\mathcal{F} \subseteq H' \mathscr{P}$,
or (2) $\mathcal{F} \subseteq H' T \mathscr{A}$,
or (3) $\mathcal{F} \subseteq H' T' \trans{1}{1}{1}{-1} \mathscr{M}$,
where $T \in \mathcal{T}_8$ and $T' \in \mathcal{T}_4$. In case (1), $\mathcal{F}$ is $\mathscr{P}$-transformable
since $(=_2) H'^{\otimes 2} = (=_2) \in  \mathscr{P}$.
In case (2), $\mathcal{F}$ is $\mathscr{A}$-transformable
since $(=_2) (H'T)^{\otimes 2} = (=_2)T^{\otimes 2}  \in \mathscr{A}$.
In case (3),  $\mathcal{F}$ is $\mathscr{M}$-transformable.
If $T' = \trans{1}{0}{0}{\pm 1}$,
then $T' \in \mathbf{O}_2(\mathbb{C})$.
So $(=_2) (H'T' \trans{1}{1}{1}{-1})^{\otimes 2}  = (=_2) \in \mathscr{M}$.
If $T' = \trans{1}{0}{0}{\pm i}$,
then $T' \trans{1}{1}{1}{-1} = \trans{1}{1}{i}{-i}$,
and $(=_2) (H'T' \trans{1}{1}{1}{-1})^{\otimes 2}  = 2[0,1,0] \in
\mathscr{M}$.
\end{proof}

Corollary~\ref{cor:dichotomy:P1} is useful in Section~\ref{sec:full:dic}.
In Section~\ref{sec:single:dichotomy},
we need the following further specialization.

\begin{corollary} \label{cor:induction:P1}
 Suppose $f$ is a non-degenerate signature of arity $n \ge 5$.
 Let $f'$ be $f$ with a self loop,
 and assume that $f' \in\mathscr{P}_1$ is non-degenerate.
 Then $\PlHolant(f)$ is \numP-hard unless $f$ is $\mathscr{A}$-, $\mathscr{P}$-, or $\mathscr{M}$-transformable,
 in which case $\PlHolant(f)$ is tractable.
\end{corollary}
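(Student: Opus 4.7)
The plan is to deduce this as an almost immediate consequence of Corollary~\ref{cor:dichotomy:P1}. The key observation is that $f'$ is the signature of a planar $\{f\}$-gate: one obtains $f'$ from $f$ by a single self-loop using $=_2$, which can always be routed along the outer face of a planar embedding. Hence $f'$ is realizable from $\{f\}$ as a planar gadget, and therefore $\PlHolant(\{f, f'\}) \equiv_T \PlHolant(f)$.

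Next, since $n \ge 5$, the arity of $f'$ is $n - 2 \ge 3$, and by hypothesis $f'$ is a non-degenerate signature in $\mathscr{P}_1$. Applying Corollary~\ref{cor:dichotomy:P1} with $\mathcal{F} = \{f, f'\}$, using $f'$ as the required member of $\mathscr{P}_1$ of arity at least $3$, we conclude that $\PlHolant(\{f, f'\})$ is \numP-hard unless $\{f, f'\}$ is $\mathscr{A}$-, $\mathscr{P}$-, or $\mathscr{M}$-transformable, in which case it is tractable.

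Finally, I verify that $\{f, f'\}$ is $\mathscr{C}$-transformable iff $f$ alone is, for $\mathscr{C} \in \{\mathscr{A}, \mathscr{P}, \mathscr{M}\}$. One direction is immediate from Definition~\ref{def:prelim:trans}. For the other, suppose $f = T^{\otimes n} g$ with $g \in \mathscr{C}$ and $[1,0,1] T^{\otimes 2} \in \mathscr{C}$. Then the self-loop on $f$ via $=_2$ translates, in the $T$-frame, to a self-loop on $g$ via the binary signature $[1,0,1] T^{\otimes 2} \in \mathscr{C}$; since $\mathscr{C}$ is closed under its own gadget constructions, the resulting $g'$ lies in $\mathscr{C}$, whence $f' = T^{\otimes (n-2)} g' \in T \mathscr{C}$ with the same $T$. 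Tractability in the transformable case then follows from Theorem~\ref{thm:tractable:M} (and the corresponding tractability statements for $\mathscr{A}$ and $\mathscr{P}$ implicit in Definition~\ref{def:prelim:trans}). The proof is essentially a one-line reduction; the only minor point requiring attention is the transfer of transformability between $f$ and $\{f, f'\}$ above.
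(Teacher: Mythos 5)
Your proof is correct and takes the intended route: realize $f'$ as a planar $\{f\}$-gate via a self-loop, apply Corollary~\ref{cor:dichotomy:P1} to $\mathcal{F} = \{f, f'\}$, and translate the transformability condition from $\{f,f'\}$ back to $f$. The paper does not spell out a proof for this corollary, but this is plainly what is meant. One small economy you could make: for the hardness direction you only need the trivial implication ($\{f,f'\}$ transformable $\Rightarrow$ $f$ transformable, being a subset), since tractability when $f$ is transformable is given directly by Theorem~\ref{thm:tractable:M} and items 2--3 of Theorem~\ref{thm:Holant:set}; your verification of the non-trivial converse (that a self-loop on $g \in \mathscr{C}$ via $[1,0,1]T^{\otimes 2} \in \mathscr{C}$ stays in $\mathscr{C}$) is correct but not strictly needed.
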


For the other case of $\mathscr{A}_3$,
some case analysis is required.

\begin{corollary}\label{cor:dichotomy:A3}
  Let $\mathcal{F}$ be a set of signatures.
  Suppose there exists $f\in\mathcal{F}$ which is
  a non-degenerate signature of arity $n\ge 3$ in $\mathscr{A}_3$.
  Then $\PlHolant(\mathcal{F})$ is \numP-hard unless $\mathcal{F}$ is $\mathscr{A}$- or $\mathscr{M}$-transformable,
  in which case $\PlHolant(\mathcal{F})$ is tractable.
\end{corollary}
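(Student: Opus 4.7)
The overall plan is to mirror the proof of Corollary~\ref{cor:dichotomy:P1}: invoke Lemma~\ref{lem:reductions:A3} to pass to a $\PlCSP^2$ instance, then apply Theorem~\ref{thm:PlCSP2} and examine each of its tractable cases. Concretely, Lemma~\ref{lem:reductions:A3} gives $\PlCSP^2(YH\mathcal{F}\cup\{[1,-i,1]\})\le_T\PlHolant(\mathcal{F})$, so by Theorem~\ref{thm:PlCSP2} either $\PlHolant(\mathcal{F})$ is \numP-hard, or the set $YH\mathcal{F}\cup\{[1,-i,1]\}$ falls into one of the three tractable classes listed in Theorem~\ref{thm:PlCSP2}.

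The case analysis would proceed as follows. Case~\ref{case:plcsp2:P} is eliminated immediately: $[1,-i,1]$ is symmetric, non-degenerate (determinant $1-(-i)^2=2\neq 0$), and has all nonzero entries, so it is neither $[0,1,0]$ nor of the form $[a,0,\dots,0,b]$, hence not in $\mathscr{P}$. In Case~\ref{case:plcsp2:TA}, writing $T=\mathrm{diag}(1,\omega)$ with $\omega^8=1$, the constraint $(T^{-1})^{\otimes 2}[1,-i,1]=[1,-i\omega^{-1},\omega^{-2}]\in\mathscr{A}$ forces the nonzero-entry ratios to be fourth roots of unity, hence $\omega^4=1$; so only $T\in\mathcal{T}_4\subseteq\mathcal{T}_8$ can occur. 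For each such admissible $T$, I would take the witnessing matrix $M=(YH)^{-1}T$ and verify the two conditions of Definition~\ref{def:prelim:trans}: $\mathcal{F}\subseteq M\mathscr{A}$ is immediate, and $[1,0,1]M^{\otimes 2}\in\mathscr{A}$ is a direct check, yielding $\mathscr{A}$-transformability. Case~\ref{case:plcsp2:TM} is handled analogously: the matchgate parity condition on the binary signature obtained by pulling $[1,-i,1]$ back through $T$ and $T_H=\trans{1}{1}{1}{-1}$ forces $T=\mathrm{diag}(1,\pm 1)$, and the witnessing matrix $M=(YH)^{-1}T\,T_H$ then yields $\mathscr{M}$-transformability. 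Tractability on the algorithmic side follows from the $\PlCSP$ tractability of $\mathscr{A}$ and from Theorem~\ref{thm:tractable:M}.

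The main obstacle, relative to Corollary~\ref{cor:dichotomy:P1}, is that $Y=\trans{\alpha}{1}{-\alpha}{1}$ is not orthogonal; a direct computation gives $Y^TY=2\,\mathrm{diag}(i,1)$. Consequently the pleasant shortcut $[1,0,1]M^{\otimes 2}=[1,0,1]T^{\otimes 2}$ used in the proof of Corollary~\ref{cor:dichotomy:P1} is not available, and for each of the (finitely many) admissible $T$ the condition $[1,0,1]M^{\otimes 2}\in\mathscr{A}$ (respectively $\mathscr{M}$) must be confirmed by explicit calculation. Once this bookkeeping is dispatched, the proof closes exactly as in Corollary~\ref{cor:dichotomy:P1}, with the only difference that no $\mathscr{P}$-transformable case survives because $[1,-i,1]\notin\mathscr{P}$.
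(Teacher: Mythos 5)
Your proposal is correct and follows essentially the same route as the paper: reduce via Lemma~\ref{lem:reductions:A3}, apply Theorem~\ref{thm:PlCSP2}, discard the $\mathscr{P}$ case using $[1,-i,1]\notin\mathscr{P}$, restrict $T$ via the membership constraints on $[1,-i,1]$, and verify $\mathscr{A}$- or $\mathscr{M}$-transformability. The only cosmetic difference is that the paper sidesteps the per-case computation you describe by factoring $Y = \trans{1}{1}{-1}{1}\trans{\alpha}{0}{0}{1}$ and exploiting $\trans{1}{0}{0}{-1}\widehat{\mathscr{M}} = \widehat{\mathscr{M}}$ and $\mathcal{T}_4 \subseteq \Stab{A}$, arriving at $H\mathcal{F}\subseteq\mathscr{M}$ and $H\mathcal{F}\subseteq\trans{\alpha^{-1}}{0}{0}{1}\mathscr{A}$ directly.
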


\begin{proof}
  Assume that $f\in\mathscr{A}_3$ with an orthogonal transformation $H$.
  By Lemma~\ref{lem:reductions:A3},
  we have $\PlCSP^2(Y H \mathcal{F} \cup \{[1,-i,1]\}) \le_T \PlHolant(\mathcal{F})$,
  where $Y = \trans{\alpha}{1}{-\alpha}{1}$ and $\alpha = e^{\pi i/4}$.
  Let $g = [1,-i,1]$ and $\mathcal{F}' = Y H \mathcal{F} \cup \{g\}$.

  We apply Theorem~\ref{thm:PlCSP2} to $\PlCSP^2(\mathcal{F}')$.
  The consequence is that $\PlCSP^2(\mathcal{F}')$ (and hence $\PlHolant\mathcal({F})$) is \numP-hard unless
  $\mathcal{F}'\subseteq\mathscr{P}$, 
  $\mathcal{F}'\subseteq\trans{1}{0}{0}{i^r}\widehat{\mathscr{M}}$ for some integer $0\le r \le 3$,
  or $\mathcal{F}'\subseteq \trans{1}{0}{0}{\alpha^r}\mathscr{A}$ for some integer $0\le r\le 7$ where $\alpha=e^{i\pi/4}$.
  Notice that $g\not\in\mathscr{P}$ and hence
  the first case is impossible.

  Suppose $\mathcal{F}'\subseteq\trans{1}{0}{0}{i^r}\widehat{\mathscr{M}}$ for some integer $0\le r \le 3$.
  Then as $g\not\in\trans{1}{0}{0}{i^r}\widehat{\mathscr{M}}$ for $r=1,3$,
  we have that $YH\mathcal{F}\subseteq\trans{1}{0}{0}{\pm1}\widehat{\mathscr{M}}$.
  Moreover, notice that 
  $\trans{1}{0}{0}{-1}\widehat{\mathscr{M}}
  = \trans{1}{1}{1}{-1} \trans{0}{1}{1}{0} \mathscr{M}
  = \trans{1}{1}{1}{-1} \mathscr{M}
  = \widehat{\mathscr{M}}$.
  Hence $YH\mathcal{F}\subseteq\widehat{\mathscr{M}}$.
  Rewrite $Y$ as $Y=\trans{1}{1}{-1}{1}\trans{\alpha}{0}{0}{1}$.
  We deduce that
  \begin{align*}
   H \mathcal{F}
   &\subseteq
   \tfrac{1}{2}
   \trans{\alpha^{-1}}{0}{0}{1}
   \trans{1}{-1}{1}{1}
   \widehat{\mathscr{M}}
   =
   \tfrac{1}{2}
   \trans{\alpha^{-1}}{0}{0}{1}
   \trans{1}{-1}{1}{1}
   \trans{1}{1}{1}{-1}
   \mathscr{M}\\
   &=
   \trans{\alpha^{-1}}{0}{0}{1}
   \trans{0}{1}{1}{0}
   \mathscr{M}
   =
   \mathscr{M}.
  \end{align*}
  Hence $\mathcal{F}$ is $\mathscr{M}$-transformable in this case.

  The last case is when $\mathcal{F}'\subseteq \trans{1}{0}{0}{\alpha^r}\mathscr{A}$ for some integer $0\le r\le 7$.
  It implies that $r=0,2,4,6$ as $g\in\trans{1}{0}{0}{\alpha^r}\mathscr{A}$
  and $g \not \in\trans{1}{0}{0}{\alpha}\mathscr{A}$.
  That is, $\mathcal{F}'\subseteq \trans{1}{0}{0}{i^l}\mathscr{A}$ for some integer $0\le l\le 3$.
  Notice that $\trans{1}{0}{0}{i^l}\in\Stab{A}$.
  It implies that $YH\mathcal{F}\subseteq\mathscr{A}$.
  Again, rewriting $Y$ as $Y=\trans{1}{1}{-1}{1}\trans{\alpha}{0}{0}{1}$,
  we have 
  \[
   H \mathcal{F}
   \subseteq
   \tfrac{1}{2}
   \trans{\alpha^{-1}}{0}{0}{1}
   \trans{1}{-1}{1}{1}
   \mathscr{A}
   =\tfrac{1}{2}
   \trans{\alpha^{-1}}{0}{0}{1}
   \mathscr{A}.
  \]
  Therefore $\mathcal{F}$ is $\mathscr{A}$-transformable.
  This finishes the proof.
\end{proof}

Again, we specialize Corollary~\ref{cor:dichotomy:A3} to our need.

\begin{corollary}\label{cor:induction:A3}
  Let $f$ be a non-degenerate signature of arity $n\ge5$.
  Let $f'$ be $f$ with a self loop, and $f'$ is non-degenerate and $f'\in\mathscr{A}_3$ with an orthogonal transformation $H$.
  Then $\PlHolant(f)$ is \numP-hard unless $f$ is $\mathscr{A}$- or $\mathscr{M}$-transformable,
  in which case $\PlHolant(f)$ is tractable.
\end{corollary}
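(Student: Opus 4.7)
The plan is to reduce this corollary directly to Corollary~\ref{cor:dichotomy:A3} by treating $\{f, f'\}$ as a signature set. The key observation is that since $f'$ is obtained from $f$ by a single self-loop (a planar gadget construction on one vertex), we have $\PlHolant(\{f, f'\}) \equiv_T \PlHolant(f)$, so any complexity conclusion for the pair transfers to $f$ alone. Note also that $f'$ has arity $n - 2 \ge 3$, so $f'$ meets the arity hypothesis of Corollary~\ref{cor:dichotomy:A3}, and by assumption $f' \in \mathscr{A}_3$ is non-degenerate.

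The main step is then to invoke Corollary~\ref{cor:dichotomy:A3} with $\mathcal{F} = \{f, f'\}$, using $f'$ as the required member of $\mathscr{A}_3$. That corollary yields a dichotomy: either $\PlHolant(\mathcal{F})$ is \numP-hard, or $\mathcal{F}$ is $\mathscr{A}$- or $\mathscr{M}$-transformable. In the hard case, the Turing equivalence immediately gives \numP-hardness of $\PlHolant(f)$. In the tractable case, a single invertible transformation $T \in \mathbf{GL}_2(\CC)$ simultaneously witnesses transformability for both $f$ and $f'$, so in particular $\{f\}$ is $\mathscr{A}$- or $\mathscr{M}$-transformable; then $\PlHolant(f)$ is polynomial-time computable by the known tractability results (Theorem~\ref{thm:tractable:M} for the $\mathscr{M}$-transformable case, and the analogous statement for $\mathscr{A}$-transformable signatures, since $\PlHolant(\mathscr{A})$ is tractable).

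There is no real obstacle here beyond verifying that the hypotheses of Corollary~\ref{cor:dichotomy:A3} are satisfied: we need the arity bound $n - 2 \ge 3$ (given by $n \ge 5$), the non-degeneracy of $f'$ (given), and the membership $f' \in \mathscr{A}_3$ with orthogonal transformation $H$ (given). The reduction via self-loop preserves planarity trivially, so the equivalence $\PlHolant(\{f, f'\}) \equiv_T \PlHolant(f)$ is immediate from the realizability framework described in Section~\ref{sec:preliminaries}. The proof is essentially a direct specialization of Corollary~\ref{cor:dichotomy:A3}, parallel in structure to the proof of Corollary~\ref{cor:induction:P1}, and the only content beyond citation is the observation that adding the realizable signature $f'$ to $\{f\}$ does not change the complexity.
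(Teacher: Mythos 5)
Your proposal is correct and matches the paper's intended approach exactly: the paper introduces Corollary~\ref{cor:induction:A3} with the remark ``Again, we specialize Corollary~\ref{cor:dichotomy:A3} to our need'' and gives no further proof, so the implicit argument is precisely the one you spell out — take $\mathcal{F} = \{f, f'\}$, observe $\PlHolant(\{f, f'\}) \equiv_T \PlHolant(f)$ since $f'$ is planar-realizable from $f$, check that $f'$ satisfies the hypotheses of Corollary~\ref{cor:dichotomy:A3} (non-degenerate, arity $n-2 \ge 3$, in $\mathscr{A}_3$), and note that $\mathscr{A}$- or $\mathscr{M}$-transformability of a set passes to any subset via the same witness $T$.
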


The next case is when $f$ is in $\mathscr{M}_2$ but not $\mathscr{P}_2$.

\begin{lemma}  \label{lem:dichotomy:M2}
 Let $\mathcal{F}$ be a set of signatures.
 Suppose there exists $f\in\mathcal{F}$ which is
 a non-degenerate signature of arity $n\ge 3$ in $\mathscr{M}_2\setminus\mathscr{P}_2$.
 Then $\PlHolant\left( \mathcal{F} \right)$ is \numP-hard 
 unless $\mathcal{F}$ is $\mathscr{A}$-, $\mathscr{P}$-, or $\mathscr{M}$-transformable,
 in which case $\PlHolant\left( \mathcal{F} \right)$ is tractable.
\end{lemma}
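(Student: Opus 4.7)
Since $f\in\mathscr{M}_2\setminus\mathscr{P}_2$, by Definition~\ref{def:trans:M} and Lemma~\ref{lem:single:A2} we may write $f=c\,H^{\otimes n}\!\left([1,\gamma]^{\otimes n}\pm[1,-\gamma]^{\otimes n}\right)$ for some $H\in\mathbf{O}_2(\mathbb{C})$ and nonzero $c,\gamma\in\mathbb{C}$ with $\gamma^2\neq -1$. The overall plan is to mimic the proofs of Corollaries~\ref{cor:dichotomy:P1} and~\ref{cor:dichotomy:A3}: set up a reduction $\PlCSP^2(\mathcal{G})\le_T\PlHolant(\mathcal{F})$ for an appropriate transformation $\mathcal{G}$ of $\mathcal{F}$, apply Theorem~\ref{thm:PlCSP2} to $\mathcal{G}$, and pull each tractable case back to a transformability condition on $\mathcal{F}$.

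First I would dispose of the case $f\in\mathscr{P}_1$, which is handled directly by Corollary~\ref{cor:dichotomy:P1}. A comparison of canonical forms shows this happens exactly when $\gamma^2=1$, so henceforth we may assume $\gamma^2\notin\{-1,0,1\}$. Next, let $T_\gamma=\tfrac12\trans{1}{\gamma^{-1}}{1}{-\gamma^{-1}}$, so $T_\gamma$ maps $[1,\gamma]\mapsto[1,0]$ and $[1,-\gamma]\mapsto[0,1]$. Applying the holographic transformation by $T=H\,T_\gamma^{-1}$ to $\plholant{=_2}{\mathcal{F}}$ carries $f$ to a nonzero multiple of the generalized equality $[1,0,\dots,0,\pm 1]$ of arity $n$, and carries $=_2$ to a nonzero multiple of $[1,\beta,1]$ with $\beta=(1-\gamma^2)/(1+\gamma^2)\in\mathbb{C}\setminus\{0,\pm1\}$. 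The binary $[1,\beta,1]$ is then nonsingular. A further holographic transformation, diagonalizing $[1,\beta,1]$ by $H_2=\tfrac{1}{\sqrt2}\trans{1}{1}{1}{-1}$ and rescaling by $\trans{(1+\beta)^{-1/2}}{0}{0}{(1-\beta)^{-1/2}}$, turns $[1,\beta,1]$ into $=_2$ and sends the generalized equality to a canonical matchgate signature of the form $[a,b]^{\otimes n}\pm[a,-b]^{\otimes n}$ with $ab\neq 0$.

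The main technical step is to use this matchgate-form signature of arity $n\ge 3$ together with $=_2$ to realize every signature in $\mathcal{EQ}_2$ in this transformed basis, thereby producing the desired reduction $\PlCSP^2(\mathcal{G})\le_T\PlHolant(\mathcal{F})$ for the composite transformation $\mathcal{G}$ of $\mathcal{F}$. The construction is by standard planar gadgets: linking copies of the transformed $f$ by single edges, taking self-loops, and (in the subcase when the derived parameters are generic) a short polynomial interpolation. Once the reduction is in place, Theorem~\ref{thm:PlCSP2} yields either \numP-hardness or one of the three tractable cases $\mathcal{G}\subseteq\mathscr{P}$, $\mathcal{G}\subseteq T'\mathscr{A}$ with $T'\in\mathcal{T}_8$, or $\mathcal{G}\subseteq T'\widehat{\mathscr{M}}$ with $T'\in\mathcal{T}_4$. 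Pulling each case back through the composition of our holographic transformations---using $\mathcal{T}_8\subseteq\Stab{A}$ and the fact that $\mathcal{T}_4\widehat{\mathscr{M}}$ lies in the $\mathscr{M}$-transformable orbit, exactly as in the last step of Corollary~\ref{cor:dichotomy:P1}---shows that $\mathcal{F}$ is respectively $\mathscr{P}$-, $\mathscr{A}$-, or $\mathscr{M}$-transformable, each of which is polynomial-time tractable. The hard part is the realization of $\mathcal{EQ}_2$ from the transformed $f$ and $=_2$ in planar form; this is where the passage to matchgate-canonical form via the sequence of transformations above is essential, enabling gadget constructions parallel to those implicit in Lemmas~\ref{lem:reductions:P1} and~\ref{lem:reductions:A3}.
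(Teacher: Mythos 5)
Your proposal goes astray at the second holographic transformation. After the first transformation by $T = H\trans{1}{1}{\gamma}{-\gamma}$ you correctly obtain a bipartite planar Holant problem with $g=[1,\beta,1]$ (where $\beta = (1-\gamma^2)/(1+\gamma^2)$, with $\beta\neq 0,\pm 1$) on the left and the generalized equality $[1,0,\dots,0,\pm 1]$ on the right---and the latter is \emph{not} a matchgate signature for arity $n\ge 3$, which is what makes $\mathcal{EQ}_2$ realizable in the analogous situations of Lemmas~\ref{lem:reductions:P1} and~\ref{lem:reductions:A3}. But the further transformation you introduce, which turns $[1,\beta,1]$ into $=_2$, simultaneously turns the generalized equality into $[a,b]^{\otimes n}\pm[a,-b]^{\otimes n}$, which \emph{is} a matchgate signature (Proposition~\ref{prop:match:symmetric:decomp}). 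Since $=_2$ is also a matchgate signature and $\mathscr{M}$ is closed under planar gadget constructions, no planar gadget built from $=_2$ and this signature can have signature $=_4\notin\mathscr{M}$. So your ``main technical step''---realizing $\mathcal{EQ}_2$ from the twice-transformed $f$ and $=_2$ by self-loops, linking, and interpolation---cannot be carried out: the self-loops and links yield only matchgates (e.g.~$\partial^{(n-2)/2}$ gives $[a^2,0,b^2]$ or $[0,1,0]$), and you give no interpolation argument that escapes the matchgate orbit. You have transformed away the very degree of freedom that the argument needs.

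The paper's proof stops after the first transformation. It never diagonalizes $g$ into $=_2$; instead it proves $\PlCSP^2(T^{-1}\mathcal{F},g)\le_T\PlHolant(\mathcal{F})$, with $g$ appended to the $\PlCSP^2$ instance, exactly as Lemma~\ref{lem:reductions:A3} appends $[1,-i,1]$. The crucial interpolation is done with $g$ (not $=_2$) on the left-hand side of the bipartite problem, and it succeeds because the signature matrix of $g$ has eigenvalue ratio $\gamma^2$: if $\gamma^2$ is a root of unity, $=_2$ is built directly by composing copies of $g$; otherwise, the powers of $g$ furnish a full-rank Vandermonde system. The resulting reduction remains correct even when $T^{-1}\mathcal{F}\cup\{g\}$ is wholly $\mathscr{M}$-transformable---the target problem is then simply tractable, so no contradiction arises. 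Your version demands that the auxiliary binary be absorbed into the equality basis, and that is what makes it impossible. (As a secondary matter: even granting the reduction, the analysis of the three tractable cases of Theorem~\ref{thm:PlCSP2} requires checking membership of $g$ in each class to constrain $\gamma$---e.g., $g\notin\mathscr{P}$ unless $\gamma^2=1$, $g\notin\trans{1}{0}{0}{i^r}\widehat{\mathscr{M}}$ for odd $r$---a case analysis your proposal does not address.)
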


\begin{proof}
  As $f\in\mathscr{M}_2\setminus\mathscr{P}_2$,
  assume $f= H^{\otimes n} \left(\tbcolvec{1}{\gamma}^{\otimes n} \pm \tbcolvec{1}{-\gamma}^{\otimes n}\right)$,
  where $H$ is an orthogonal $2$-by-$2$ matrix and $\gamma\neq 0, \pm i$.

  We first show that
  \begin{align} \label{eqn:M2:CSP2}    
    \PlCSP^2(T^{-1}\mathcal{F},g)\le_T\PlHolant\left( \{f\}\cup\mathcal{F} \right),
  \end{align}  
  where $T= H\trans{1}{1}{\gamma}{-\gamma}$ and $g= (=_2)T^{\otimes 2} = [1+\gamma^2,1-\gamma^2,1+\gamma^2]$

  Assume that $f = H^{\otimes n} \left(\tbcolvec{1}{\gamma}^{\otimes n} + \tbcolvec{1}{-\gamma}^{\otimes n}\right)$ with the $+$ sign.
  In this case, we do the transformation $T$:
  \begin{align*}
   \plholant{{=}_2}{f,\mathcal{F}}
   &\equiv_T
   \PlHolant
   \left(
    [1,0,1] H^{\otimes 2}
    \trans{1}{1}{\gamma}{-\gamma}^{\otimes 2}
    \enspace \middle| \enspace
    \left(\trans{1}{1}{\gamma}{-\gamma}^{-1}\right)^{\otimes n}
    \left(H^{-1}\right)^{\otimes n} f, T^{-1}\mathcal{F}
   \right)\\
   &\equiv_T
   \plholant{g}{{=}_n,T^{-1}\mathcal{F}}.
  \end{align*}
  By connecting $g$ to $=_n$,
  we get $=_{n-2}$ up to a constant factor of $1+\gamma^2\neq 0$ as $\gamma\neq\pm i$.
  We repeat this process.
  If $n$ is even, then we get $=_2$ eventually, which is on the right hand side.
  If $n$ is odd, then eventually we get $=_3$ and $(=_1)=[1,1]$ on the right.
  Connecting $[1,1]$ to $g$ we get $2[1,1]$ on the left.
  Then connecting $[1,1]$ to $=_3$ we get $=_2$ on the right.
  To summarize, we get that 
  \begin{align} 
   \plholant{g}{{=}_2,{=}_n,T^{-1}\mathcal{F}}
   &\le_T
   \plholant{g}{{=}_n,T^{-1}\mathcal{F}} \notag\\
   &\le_T \PlHolant\left( f,\mathcal{F} \right).\label{eqn:M2:reductions}
  \end{align}
  Next we show that 
  \begin{align} \label{eqn:binary:interpolation}
   \plholant{{=}_2,g}{{=}_2,{=}_n,T^{-1}\mathcal{F}}
   &\le_T
   \plholant{g}{{=}_2,{=}_n,T^{-1}\mathcal{F}}.
  \end{align}
  Let $N=\trans{1+\gamma^2}{1-\gamma^2}{1-\gamma^2}{1+\gamma^2}$ be the signature matrix of $g$.
  If there is a positive integer $k$ and a nonzero constant $c$ such that $N^k=c I_2$,
  where $I_2$ is the $2$-by-$2$ identity matrix,
  then we may directly implement $=_2$ on the left by connecting $k$ copies of $[1+\gamma^2,1-\gamma^2,1+\gamma^2]$ via $=_2$ on the right.
  It implies \eqref{eqn:binary:interpolation} holds.

  Otherwise such $k$ and $c$ do not exist.
  The two eigenvalues of $N$ are $\lambda_1=2$ and $\lambda_2=2\gamma^2$.
  If $\lambda_1=\lambda_2$, then $\gamma^2=1$ and $N=\trans{2}{0}{0}{2}$.
  Contradiction.
  Hence $\lambda_1 \not = \lambda_2$, and $N$ is diagonalizable.
  Let $N=P\trans{\lambda_1}{0}{0}{\lambda_2}P^{-1}$, for some non-singular matrix $P$.
  By connecting $l$ many copies of $N$ on the left via $=_2$ on the right, where $l$ is a positive integer, 
  we can implement $N^l=P\trans{\lambda_1^l}{0}{0}{\lambda^l_2}P^{-1}$.
  Since $N$ does not have finite order up to a scalar,
  for any positive integer $l$, $\left(\lambda_1/\lambda_2\right)^l\neq 1$.

  Consider an instance $\Omega$ of $\plholant{=_2,g}{=_2,=_n,T^{-1}\mathcal{F}}$.
  Suppose that the left $=_2$ appears $t$ times.
  Let $l$ be a positive integer.
  We obtain $\Omega_l$ from $\Omega$ by replacing each occurrence of $=_2$ on the left with $N^l$.

  Since $N^l=P\trans{\lambda_1^l}{0}{0}{\lambda_2^l}P^{-1}$,
  we can view our construction of $\Omega_l$ as replacing $N^l$ by $3$ signatures,
  with matrix $P$, $\trans{\lambda_1^l}{0}{0}{\lambda_2^l}$, and $P^{-1}$, respectively.
  This does not change the Holant value, 

  We stratify the assignments in $\Omega_l$ based on the assignments 
  to the $t$ occurrences of the signature whose matrix 
  is the diagonal matrix $\trans{\lambda_1^l}{0}{0}{\lambda_2^l}$.
  Suppose there are $i$ many times it was assigned  $00$ with function value $\lambda_1^l$,
  and $j$ times $11$ with function value $\lambda_2^l$.
  Clearly $i+j=t$ if the assignment has a nonzero evaluation.
  Let $c_{ij}$ be the sum over all such assignments of the products of evaluations of all signatures
  (including the signatures corresponding to matrices $P$ and $P^{-1}$)
  in $\Omega_l$ except for this diagonal one.
  Then
  \begin{align*}
   \Holant_{\Omega_l} & = \sum_{i + j = t} \left(\lambda_1^l\right)^i \left(\lambda_2^l\right)^j c_{ij}\\
   & = \lambda_2^{lt} \sum_{0\leq i \leq t} \left(\left(\frac{\lambda_1}{\lambda_2}\right)^l\right)^i c_{i,t-i}.
  \end{align*}
  By an oracle of $\plholant{g}{=_2,=_n,T^{-1}\mathcal{F}}$,
  we can get $\Holant_{\Omega_l}$ for any $1\le l\le t+1$.
  Recall that for any positive integer $l$, 
  $\left(\lambda_1/\lambda_2\right)^l\neq 1$.
  This implies that for any two distinct integers $i,j\geq 0$,
  $\left(\lambda_1/\lambda_2\right)^i\neq \left(\lambda_1/\lambda_2\right)^j$.
  Therefore we get a non-singular Vandermonde system.
  We can solve all $c_{ij}$ for $i+j=t$ given $\Holant_{\Omega_l}$ for all $1\le l\le t+1$.
  Then notice that $\sum_{i+j=t}c_{ij}$ is the Holant value of $\Omega_l$ 
  by replacing both $\lambda_1^l$ and $\lambda_2^l$ with $1$,
  which is the instance $\Omega$ as $PI_2P^{-1}=I_2$.
  Therefore we may compute $\Holant_\Omega$ via $t+1$ many oracle calls to $\plholant{g}{{=}_2,{=}_n,T^{-1}\mathcal{F}}$.
  This finishes the reduction in \eqref{eqn:binary:interpolation}.

  In the left hand side of \eqref{eqn:binary:interpolation}
  we have $=_2$ on both sides.
  Therefore we may lift the bipartite restriction.
  Combining it with \eqref{eqn:M2:reductions},
  we get \[\PlHolant\left( =_n,g,T^{-1}\mathcal{F} \right)\le_T \PlHolant\left(f, \mathcal{F} \right).\]
  Notice that given an equality of arity $n\ge 3$,
  we can always construct all equalities of even arity, 
  regardless of the parity of $n$, in the $\PlHolant$ setting.
  Therefore, we have $\PlCSP^2(T^{-1}\mathcal{F},g)\le_T\PlHolant\left( f,\mathcal{F} \right)$.

  To prove \eqref{eqn:M2:CSP2}, there is another case 
  that $f = H^{\otimes n} \left(\tbcolvec{1}{\gamma}^{\otimes n} - \tbcolvec{1}{-\gamma}^{\otimes n}\right)$,
  with the $-$ sign.
  Again we do a $T$ transformation, where $(T^{-1})^{\otimes}f=[1,0,\dots,0,-1]$ has arity $n$:
  \begin{align*}
    \plholant{{=}_2}{f,\mathcal{F}}
    &\equiv_T \plholant{g}{[1,0,\dots,0,-1],T^{-1}\mathcal{F}}.
  \end{align*}
  We then do the same construction as in the previous case of connecting $g$ to $[1,0,\dots,0,-1]$ repeatedly.
  Depending on the parity of $n$, we have two cases.
  \begin{enumerate}
    \item If $n$ is odd, then eventually we get $[1,0,0,-1]$ and $[1,-1]$ on the right as $\gamma\neq\pm i$, 
      and therefore $2 \gamma^2 [1,-1]$, i.e., $[1,-1]$ on the left as $\gamma\neq 0$.
      Then connecting $[1,-1]$ to $[1,0,0,-1]$ we get $=_2$ on the right.
      Thus, for odd $n$,
      \begin{align*} 
       \plholant{g}{{=}_2,[1,0,\dots,0,-1],T^{-1}\mathcal{F}}
       &\le_T
       \plholant{g}{[1,0,\dots,0,-1],T^{-1}\mathcal{F}}\\
       &\le_T
       \PlHolant\left( f,\mathcal{F} \right).
      \end{align*}
      Notice that our previous binary interpolation proof only relies on $g$ and $=_2$.
      Hence we get
      \begin{align*}
       \plholant{g}{{=}_2,[1,0,\dots,0,-1],T^{-1}\mathcal{F}}
       &\ge_T
       \plholant{{=}_2,g}{{=}_2,[1,0,\dots,0,-1],T^{-1}\mathcal{F}}\\
       &\equiv_T
       \PlHolant([1,0,\dots,0,-1],g,T^{-1}\mathcal{F}).
      \end{align*}
      Moreover it is straightforward to construct all even equalities from $[1,0,\dots,0,-1]$ in the normal $\PlHolant$ setting as $n\ge 5$.
      Combining everything together gives us 
      \begin{align*}
        \PlCSP^2(g,T^{-1}\mathcal{F})\le_T\PlHolant\left( f,\mathcal{F} \right).
      \end{align*}
    \item Otherwise $n$ is even.
      By the same construction of connecting $g$ to $[1,0,\dots,0,-1]$ repeatedly,
      we get $[1,0,0,0,-1]$ and $[1,0,-1]$ on the right eventually.
      Then we connect two copies of $g$ via $[1,0,-1]$, resulting in 
      $\trans{1+\gamma^2}{1-\gamma^2}{1-\gamma^2}{1+\gamma^2}
      \trans{1}{0}{0}{-1}
      \trans{1+\gamma^2}{1-\gamma^2}{1-\gamma^2}{1+\gamma^2}
      =4\gamma^2\trans{1}{0}{0}{-1}$ on the left.
      Then connect $[1,0,-1]$ to $[1,0,0,0,-1]$ to get $[1,0,1]$ on the right.
      At last we connect two $[1,0,-1]$'s on the left via $[1,0,1]$ on the right to get $[1,0,1]$ on the left.
      Then it reduces to the previous case.
  \end{enumerate}
  This concludes the proof of \eqref{eqn:M2:CSP2}.

  We apply Theorem~\ref{thm:PlCSP2} to $\PlCSP^2(T^{-1}\mathcal{F},g)$.
  Then we have that $\PlCSP^2(T^{-1}\mathcal{F},g)$ 
  (and hence $\PlHolant(f,\mathcal{F})$) is \numP-hard unless
  $T^{-1}\mathcal{F}\cup \{g\}\subseteq\mathscr{P}$, 
  or $T^{-1}\mathcal{F}\cup \{g\}\subseteq\trans{1}{0}{0}{i^r}\widehat{\mathscr{M}}$ for some integer $0\le r \le 3$,
  or $T^{-1}\mathcal{F}\cup \{g\}\subseteq \trans{1}{0}{0}{\alpha^r}\mathscr{A}$ for some integer $0\le r\le 7$ where $\alpha=e^{i\pi/4}$.
  We have three cases.
  \begin{enumerate}
    \item The first case is that $T^{-1}\mathcal{F}\cup \{g\}\subseteq\mathscr{P}$.
      Recall that $\gamma\neq 0$ or $\pm i$,
      it can be verified that $g\not\in\mathscr{P}$ unless $\gamma^2=1$.
      Hence $\gamma=\pm 1$.
      In either case we have that $\trans{1}{1}{\gamma}{-\gamma}$ is an orthogonal matrix up to a nonzero scalar,
      and hence so is $T$.
      It implies that $\mathcal{F}$ is $\mathscr{P}$-transformable.
    \item Next suppose $T^{-1}\mathcal{F}\cup \{g\}\subseteq\trans{1}{0}{0}{i^r}\widehat{\mathscr{M}}$ for some integer $0\le r \le 3$.
      If $\gamma=\pm1$, then $T$ is an orthogonal matrix as $\trans{1}{1}{\gamma}{-\gamma}$ is, up to a factor of $\tfrac{1}{\sqrt{2}}$.
      Hence $\mathcal{F}$ is $\mathscr{M}$-transformable,
      as $\mathcal{F}\subseteq T\trans{1}{0}{0}{i^r}\trans{1}{1}{1}{-1}\mathscr{M}$
      and $(=_2)\left(T\trans{1}{0}{0}{i^r}\trans{1}{1}{1}{-1}\right)^{\otimes 2}$
      is either $[1,0,1]$ when $r=0,2$, or $[0,1,0]$ when $r=1,3$,
      up to a nonzero factor.

      Otherwise $\gamma^2\neq 1$ and it is straightforward to verify that $g\not\in\trans{1}{0}{0}{i^r}\widehat{\mathscr{M}}$ for $r= 1,3$.
      Hence we may assume that $T^{-1}\mathcal{F}\subseteq\trans{1}{0}{0}{\pm1}\widehat{\mathscr{M}}$.
      Moreover, $\trans{1}{0}{0}{-1}\widehat{\mathscr{M}}
      = \trans{1}{1}{1}{-1} \trans{0}{1}{1}{0} \mathscr{M}
      = \trans{1}{1}{1}{-1} \mathscr{M}
      = \widehat{\mathscr{M}}$.
      Then $T^{-1}\mathcal{F}\subseteq\widehat{\mathscr{M}}$.
      As $T^{-1}=\trans{1}{1}{\gamma}{-\gamma}^{-1}H^{-1}$,
      it implies that
      \begin{align*}
        H^{-1}\mathcal{F}&\subseteq\trans{1}{1}{\gamma}{-\gamma}\widehat{\mathscr{M}}
        =\trans{1}{0}{0}{\gamma}\trans{1}{1}{1}{-1}\trans{1}{1}{1}{-1}\mathscr{M}\\
        &=\trans{1}{0}{0}{\gamma}\mathscr{M}=\mathscr{M}.
      \end{align*}
      Hence $\mathcal{F}\subseteq H\mathscr{M}$
      and $\mathcal{F}$ is $\mathscr{M}$-transformable.
    \item In the last case, $T^{-1}\mathcal{F}\cup \{g\}\subseteq \trans{1}{0}{0}{\alpha^r}\mathscr{A}$ for some integer $0\le r\le 7$.
      If $\gamma=\pm 1$, then $T$ is an orthogonal matrix as $\trans{1}{1}{\gamma}{-\gamma}$ is, up to a factor of $\tfrac{1}{\sqrt{2}}$.
      Hence $\mathcal{F}$ is $\mathscr{A}$-transformable,
      as $\mathcal{F}\subseteq T\trans{1}{0}{0}{\alpha^r}\mathscr{A}$
      and $(=_2)\left(T\trans{1}{0}{0}{\alpha^r}\right)^{\otimes 2}$
      is $[1,0,i^r]\in\mathscr{A}$, up to a nonzero factor.

      Otherwise $\gamma^2\neq 1$ and $g\not\in \trans{1}{0}{0}{\alpha^r}\mathscr{A}$ for any integer $r=1,3,5,7$.
      Hence $T^{-1}\mathcal{F}\cup \{g\}\subseteq \mathscr{A}$
      as $\trans{1}{0}{0}{i^r}\mathscr{A}=\mathscr{A}$ for any integer $0\le r\le 3$.
      If $\tfrac{1+\gamma^2}{1-\gamma^2}\neq \pm i$, 
      then one can check that $g\not\in\mathscr{A}$.
      A contradiction.
      Otherwise $\tfrac{1+\gamma^2}{1-\gamma^2}= \pm i$.
      It implies that $\gamma=\alpha^l$ for some integer $l=1,3,5,7$.
      We may assume $l=1$ as other cases are similar.
      In this case it is possible that $T^{-1}\mathcal{F}\cup \{g\}\subseteq\mathscr{A}$.
      As $T^{-1}=\trans{1}{1}{\gamma}{-\gamma}^{-1}H^{-1}=\trans{1}{1}{\alpha}{-\alpha}^{-1}H^{-1}$,
      it implies that
      \[
       H^{-1} \mathcal{F}
       \subseteq
       \trans{1}{1}{\alpha}{-\alpha}
       \mathscr{A}
       =
       \trans{1}{0}{0}{\alpha}
       \trans{1}{1}{1}{-1}
       \mathscr{A}
       =
       \trans{1}{0}{0}{\alpha}
       \mathscr{A}.
      \]
      Hence, $\mathcal{F}$ is $\mathscr{A}$-transformable, so $\PlHolant(\mathcal{F})$ is tractable.
      This finishes the proof. \qedhere
  \end{enumerate}
\end{proof}

Lemma~\ref{lem:dichotomy:M2} leads to the following specialization.

\begin{corollary} \label{cor:induction:M2}
 Let $f$ be a non-degenerate signature of arity $n\ge5$.
 Let $f'$ be $f$ with a self loop, and $f'$ is non-degenerate and $f'\in\mathscr{M}_2\setminus\mathscr{P}_2$.
 Then $\PlHolant(f)$ is \numP-hard unless $f$ is $\mathscr{A}$-, $\mathscr{P}$-, or $\mathscr{M}$-transformable,
 in which case $\PlHolant(f)$ is tractable.
\end{corollary}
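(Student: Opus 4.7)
The plan is to reduce this corollary to Lemma~\ref{lem:dichotomy:M2} by taking $f'$ — which is constructible from $f$ by a single self-loop — as the distinguished signature required by that lemma. This mirrors exactly the relationship between Corollary~\ref{cor:induction:P1} (resp.~\ref{cor:induction:A3}) and Corollary~\ref{cor:dichotomy:P1} (resp.~\ref{cor:dichotomy:A3}).

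First I would observe that $f'$ arises as an $\{f\}$-gate: contract two of $f$'s input edges through the binary equality $=_2$. Hence adding $f'$ to the signature set is free, and $\PlHolant(f, f') \equiv_T \PlHolant(f)$. Since $n \ge 5$, the arity of $f'$ is $n - 2 \ge 3$, so $f'$ meets the arity hypothesis of Lemma~\ref{lem:dichotomy:M2}. By assumption $f'$ is non-degenerate and lies in $\mathscr{M}_2 \setminus \mathscr{P}_2$, so $f'$ can play the role of the distinguished $\mathscr{M}_2 \setminus \mathscr{P}_2$ signature required by that lemma.

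Next I would apply Lemma~\ref{lem:dichotomy:M2} to $\mathcal{F} = \{f, f'\}$. The lemma yields a clean dichotomy: either $\PlHolant(f, f')$ is $\numP$-hard — in which case $\PlHolant(f)$ is $\numP$-hard by the equivalence above — or $\mathcal{F}$ is $\mathscr{A}$-, $\mathscr{P}$-, or $\mathscr{M}$-transformable via some common matrix $T \in \mathbf{GL}_2(\mathbb{C})$.

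Finally I would note that $\mathscr{C}$-transformability (for $\mathscr{C} \in \{\mathscr{A}, \mathscr{P}, \mathscr{M}\}$) is inherited by subsets: by Definition~\ref{def:prelim:trans}, $\mathcal{F} \subseteq T\mathscr{C}$ together with $[1,0,1] T^{\otimes 2} \in \mathscr{C}$ immediately implies the singleton $\{f\} \subseteq \mathcal{F}$ is itself $\mathscr{C}$-transformable. Hence $\PlHolant(f)$ is tractable (by Theorem~\ref{thm:tractable:M} in the $\mathscr{M}$ case, and by the analogous known tractability statements for $\mathscr{A}$- and $\mathscr{P}$-transformable sets). There is no substantive obstacle in this argument — all the real work is already encapsulated in Lemma~\ref{lem:dichotomy:M2}, and the present corollary is simply the specialization to a single signature $f$ whose one-loop descendant $f'$ supplies the hypothesis of that lemma.
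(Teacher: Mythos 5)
Your proof is correct and matches the paper's intended argument: the paper simply notes that Corollary~\ref{cor:induction:M2} is a specialization of Lemma~\ref{lem:dichotomy:M2}, and your proof spells out exactly that specialization by taking $\mathcal{F}=\{f,f'\}$, noting $f'$ is a realizable $\{f\}$-gate of arity $n-2\ge 3$, and observing that $\mathscr{C}$-transformability passes to the subset $\{f\}$.
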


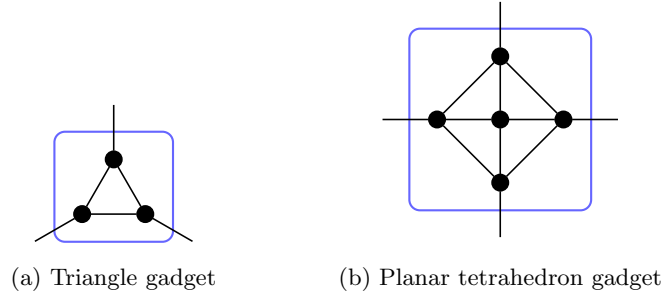
\begin{figure}[t]
 \centering
 \def\capWidth{4.5cm}
 \captionsetup[subfigure]{width=\capWidth}
 \tikzstyle{entry} = [internal, inner sep=2pt]
 \subfloat[Triangle gadget]{
  \makebox[\capWidth][c]{
   \begin{tikzpicture}[scale=\scale,transform shape,node distance=\nodeDist,semithick]
    \node [external] (0)              {};
    \node [internal] (1) [below of=0] {};
    \path (1) ++(-120:\nodeDist) node [internal] (2) {} ++(-150:\nodeDist) node [external] (3) {};
    \path (1) ++( -60:\nodeDist) node [internal] (4) {} ++( -30:\nodeDist) node [external] (5) {};
    \path (0) edge (1)
          (1) edge (2)
              edge (4)
          (2) edge (3)
              edge (4)
          (4) edge (5);
    \begin{pgfonlayer}{background}
     \node[draw=\borderColor,thick,rounded corners,fit = (1) (2) (4),inner sep=12pt] {};
    \end{pgfonlayer}
   \end{tikzpicture}
   \label{subfig:gadget:triangle}
  }
 }
 \quad
 \subfloat[Planar tetrahedron gadget]{
  \makebox[\capWidth][c]{
   \begin{tikzpicture}[scale=\scale,transform shape,node distance=\nodeDist,semithick]
    \node[external] (0)                    {};
    \node[internal] (1) [right       of=0] {};
    \node[internal] (2) [right       of=1] {};
    \node[internal] (3) [above       of=2] {};
    \node[external] (4) [above       of=3] {};
    \node[internal] (5) [below       of=2] {};
    \node[external] (6) [below       of=5] {};
    \node[internal] (7) [right       of=2] {};
    \node[external] (8) [right       of=7] {};
     \path (1) edge (2)
              edge (3)
              edge (5)
          (2) edge (3)
              edge (5)
              edge (7)
          (3) edge (7)
          (5) edge (7);
    \path (0) edge (1)
          (3) edge (4)
          (5) edge (6)
          (7) edge (8);
    \begin{pgfonlayer}{background}
     \node[draw=\borderColor,thick,rounded corners,fit = (1) (3) (5) (7),inner sep=12pt] {};
    \end{pgfonlayer}
   \end{tikzpicture}
   \label{subfig:gadget:planar_tetrahedron}
  }
 }
 \caption{Two gadgets used to create a signature in $\mathscr{M}_2 \setminus \mathscr{P}_2$.}
 \label{fig:label}
\end{figure}

We can reduce the case of $f\in\mathscr{M}_3$ to the previous case.

\begin{lemma} \label{lem:dichotomy:M3}
  Let $\mathcal{F}$ be a set of signatures.
  Suppose there exists $f\in\mathcal{F}$ which is
  a non-degenerate signature of arity $n\ge 3$ in $\mathscr{M}_3$ with $H\in\mathbf{O}_2(\mathbb{C})$.
  Then $\PlHolant(\mathcal{F})$ is \numP-hard unless $\mathcal{F}\subseteq H \mathscr{M}$,
  in which case $\mathcal{F}$ is $\mathscr{M}$-transformable
  and $\PlHolant(\mathcal{F})$ is tractable.
\end{lemma}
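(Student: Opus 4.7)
The strategy is to construct from $f$ an auxiliary signature $f' \in \mathscr{M}_2 \setminus \mathscr{P}_2$ of arity at least $3$, and then invoke Lemma~\ref{lem:dichotomy:M2}. First I would normalize by the orthogonal $H$: since $H \in \mathbf{O}_2(\mathbb{C})$ preserves $=_2$, we have $\PlHolant(\mathcal{F}) \equiv_T \PlHolant(H^{-1}\mathcal{F})$, and $(H^{-1})^{\otimes n} f$ is a nonzero scalar multiple of $\exactone{n}$. So I may assume $H = I_2$ and $f = \exactone{n}$.

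Next, I would reduce the arity to $n \in \{3,4\}$. A single self-loop of $\exactone{k}$ via $=_2$ yields $\exactone{k-2}$ (the contribution from both looped edges set to $1$ vanishes), so iterating brings us to $\exactone{3}$ or $\exactone{4}$ depending on the parity of $n$, each still in $\mathscr{M}_3$ with $H=I_2$. In the odd case I would apply the triangle gadget of Figure~\ref{subfig:gadget:triangle} with $\exactone{3}$ at each internal vertex; a direct case analysis on the eight Boolean inputs $(y_1,y_2,y_3)$ yields the signature $[0,1,0,1] = \tfrac{1}{2}\bigl([1,1]^{\otimes 3} - [1,-1]^{\otimes 3}\bigr)$, which lies in $\mathscr{M}_2 \setminus \mathscr{P}_2$. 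In the even case, an analogous but more involved calculation on the planar tetrahedron gadget of Figure~\ref{subfig:gadget:planar_tetrahedron} with $\exactone{4}$ at each vertex should yield an arity-$4$ signature $f' \in \mathscr{M}_2 \setminus \mathscr{P}_2$.

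With $f'$ in hand, I would apply Lemma~\ref{lem:dichotomy:M2} to $H^{-1}\mathcal{F} \cup \{f'\}$: the problem is \numP-hard unless this set is $\mathscr{A}$-, $\mathscr{P}$-, or $\mathscr{M}$-transformable. Since $\exactone{n} \in \mathscr{M}_3$ (with trivial transformation) is non-degenerate of arity at least $3$, Lemma~\ref{lem:M3:transformation} rules out the $\mathscr{A}$- and $\mathscr{P}$-transformable cases, and further restricts any admissible $\mathscr{M}$-transformation $T$ to the form $D$ or $\trans{0}{1}{1}{0}D$ for some nonsingular diagonal $D$. Since both $D$ and $\trans{0}{1}{1}{0}$ lie in $\Stab{M}$, this forces $H^{-1}\mathcal{F} \subseteq \mathscr{M}$, hence $\mathcal{F} \subseteq H\mathscr{M}$, which is $\mathscr{M}$-transformable, and tractability follows from Theorem~\ref{thm:tractable:M}.

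The main obstacle is the combinatorial verification that the planar tetrahedron gadget actually realizes a signature in $\mathscr{M}_2 \setminus \mathscr{P}_2$ in the even-arity base case; this amounts to a finite but tedious symbolic computation over the $2^8$ assignments to its internal edges, showing in particular that the resulting arity-$4$ signature $f'$ is symmetric, non-degenerate, of the form $c\bigl([1,\gamma]^{\otimes 4} \pm [1,-\gamma]^{\otimes 4}\bigr)$ for some $\gamma \notin \{0,\pm i\}$ with $\gamma^2 \ne 1$ (so that $f' \notin \mathscr{P}_2$). Everything else is essentially a clean application of Lemmas~\ref{lem:dichotomy:M2} and~\ref{lem:M3:transformation}.
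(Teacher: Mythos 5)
Your proposal reproduces the paper's proof essentially step for step: normalize by the orthogonal $H$, self-loop down to $\exactone{3}$ or $\exactone{4}$, feed those into the triangle and planar-tetrahedron gadgets to produce a signature in $\mathscr{M}_2 \setminus \mathscr{P}_2$, invoke Lemma~\ref{lem:dichotomy:M2}, and then use Lemma~\ref{lem:M3:transformation} to collapse the three transformability cases to $\mathcal{F} \subseteq H\mathscr{M}$. The one computation you deferred does work out exactly as you hope --- the tetrahedron gadget with $\exactone{4}$ at every vertex yields $[0,2,0,1,0] = \sqrt{2}\bigl([1,1/\sqrt{2}]^{\otimes 4} - [1,-1/\sqrt{2}]^{\otimes 4}\bigr)$, so $\gamma = 1/\sqrt{2} \notin \{0,\pm i\}$ as required; note however that your extra stipulation $\gamma^2 \neq 1$ is not actually part of the $\mathscr{M}_2 \setminus \mathscr{P}_2$ criterion (only $\gamma \neq 0, \pm i$ matters, and indeed your own triangle case has $\gamma = 1$).
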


\begin{proof}
  We first claim that $\PlHolant(\mathcal{F})$ is \numP-hard 
  unless $\mathcal{F}$ is $\mathscr{A}$-, $\mathscr{P}$-, or $\mathscr{M}$-transformable.

  By the definition of $\mathscr{M}_3$,
  we may assume that $f = \exactone{n}$ is of arity $n$
  after an orthogonal transformation $H$.
  After zero or more self loops,
  we can further assume that either $f = \exactone{3}$
  or $f = \exactone{4}$ depending on the parity of $n$.
 
  Suppose $f = \exactone{3}$.
  Consider the gadget in Figure~\ref{subfig:gadget:triangle}.
  We assign $f$ to all vertices.
  The signature of the resulting gadget is $g = [0,1,0,1]$,
  which is in $\mathscr{M}_2$ and not in $\mathscr{P}_2=\mathscr{A}_2$ by Lemma~\ref{lem:single:A2}.
  Thus, the claim follows from Lemma~\ref{lem:dichotomy:M2}.
  
  Otherwise, $f = \exactone{4}$.
  Consider the gadget in Figure~\ref{subfig:gadget:planar_tetrahedron}.
  We assign $f$ to all vertices. Note that this is a matchgate.
  The signature of the resulting gadget is $[0,2,0,1,0]$,
  which is in $\mathscr{M}_2$ and not in $\mathscr{P}_2 = \mathscr{A}_2$ by Lemma~\ref{lem:single:A2}.
  Thus, the claim follows from Lemma~\ref{lem:dichotomy:M2}.

  However, as $f\in\mathcal{F}$ and $f\in\mathscr{M}_3$,
  $\mathcal{F}$ cannot be $\mathscr{A}$- or $\mathscr{P}$-transformable
  by Lemma~\ref{lem:M3:transformation}.
  Also by Lemma~\ref{lem:M3:transformation}, 
  if $\mathcal{F}$ is $\mathscr{M}$-transformable,
  then $\mathcal{F}\subseteq HD\mathscr{M}$ or $H\trans{0}{1}{1}{0}D\mathscr{M}$
  for some diagonal matrix $D$.
  Notice that $D\in\Stab{M}$ and $\trans{0}{1}{1}{0}D\in\Stab{M}$.
  It implies that $\mathcal{F}\subseteq H\mathscr{M}$.
\end{proof}

Once again, we specialize Lemma~\ref{lem:dichotomy:M3} to our needs.

\begin{corollary} \label{cor:induction:M3}
 Let $f$ be a non-degenerate signature of arity $n\ge5$.
 Let $f'$ be $f$ with a self loop, and $f'$ is non-degenerate and $f'\in\mathscr{M}_3$.
 Then $\PlHolant(f)$ is \numP-hard unless $f$ is $\mathscr{M}$-transformable,
 in which case $\PlHolant(f)$ is tractable.
\end{corollary}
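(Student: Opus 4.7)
The plan is to reduce Corollary~\ref{cor:induction:M3} to Lemma~\ref{lem:dichotomy:M3} by using the signature set $\mathcal{F} = \{f, f'\}$. Since $f$ has arity $n \ge 5$, the self-loop $f'$ has arity $n-2 \ge 3$, so the hypothesis of Lemma~\ref{lem:dichotomy:M3} (a non-degenerate signature in $\mathscr{M}_3$ of arity at least $3$) is satisfied by $f'$. A self-loop is a planar operation realizable by a single planar gadget, so $\PlHolant(f,f') \le_T \PlHolant(f)$; this lets us freely add $f'$ to the signature set without changing the complexity of $\PlHolant(f)$.

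Applying Lemma~\ref{lem:dichotomy:M3} to $\mathcal{F} = \{f, f'\}$, we obtain that $\PlHolant(f, f')$ is \numP-hard unless $\{f, f'\} \subseteq H\mathscr{M}$ for the orthogonal matrix $H \in \mathbf{O}_2(\mathbb{C})$ that witnesses $f' \in \mathscr{M}_3$. In the hard case we immediately conclude $\PlHolant(f)$ is \numP-hard. In the remaining case, $f \in H\mathscr{M}$, and since $H$ is orthogonal we have $[1,0,1] H^{\otimes 2} = [1,0,1] \in \mathscr{M}$, so by Definition~\ref{def:prelim:trans} the singleton $\{f\}$ is $\mathscr{M}$-transformable, hence $f$ is $\mathscr{M}$-transformable. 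Tractability of $\PlHolant(f)$ then follows from Theorem~\ref{thm:tractable:M}.

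There is essentially no obstacle beyond checking that the reduction from $f$ to $f'$ preserves planarity (which it does, as adding a self-loop to a single vertex is planar) and that the arity bookkeeping works out so that Lemma~\ref{lem:dichotomy:M3} is applicable (which requires exactly $n-2 \ge 3$, i.e.\ $n \ge 5$, matching the hypothesis). The non-degeneracy of $f$ is not used in the reduction itself but is part of the statement being proved, and the non-degeneracy of $f'$ is precisely the hypothesis that lets us invoke Lemma~\ref{lem:dichotomy:M3}. Thus the corollary is just the specialization of that lemma to the single-signature induction setting used in Section~\ref{sec:single:dichotomy}.
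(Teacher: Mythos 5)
Your proof is correct and matches the paper's intended (implicit) derivation: the paper explicitly labels Corollary~\ref{cor:induction:M3} as a "specialization" of Lemma~\ref{lem:dichotomy:M3} without further proof, and applying that lemma with $\mathcal{F}=\{f,f'\}$ exactly as you do is the intended argument, including the observation that $f\in H\mathscr{M}$ with $H$ orthogonal already makes $\{f\}$ $\mathscr{M}$-transformable.
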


\section{Single Signature Dichotomy} \label{sec:single:dichotomy}

Theorem~\ref{thm:dic:single} is the single signature dichotomy for $\PlHolant$ problems. 

\begin{theorem} \label{thm:dic:single}
 If $f$ is a non-degenerate symmetric signature of arity $n \ge 3$ with complex weights in Boolean variables,
 then $\PlHolant(f)$ is $\SHARPP$-hard unless
 $f \in \mathscr{P}_1 \cup \mathscr{M}_2 \cup \mathscr{A}_3 \cup \mathscr{M}_3 \cup \mathscr{M}_4 \cup \mathscr{V}$,
 in which case the problem is computable in polynomial time.
\end{theorem}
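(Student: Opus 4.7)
\smallskip

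\noindent\textbf{Proof proposal.} The plan is to induct on the arity $n$. The base cases $n=3,4$ are settled by Theorem~\ref{thm:PlHolant:arity34}: the tractable conditions there ($\mathscr{A}$-, $\mathscr{P}$-, $\mathscr{M}$-transformable, or vanishing) are all captured by the list $\mathscr{P}_1 \cup \mathscr{M}_2 \cup \mathscr{A}_3 \cup \mathscr{M}_3 \cup \mathscr{M}_4 \cup \mathscr{V}$ via Lemmas~\ref{lem:cha:affine}, \ref{lem:cha:product}, and~\ref{lem:cha:Mtrans}. For the inductive step, assume $n \ge 5$ and the theorem holds for all non-degenerate symmetric signatures of arity in $\{3,\ldots,n-1\}$. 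The primary tool is the self-loop: write $f' = \partial f$, which is a symmetric signature of arity $n-2 \ge 3$, realizable from $f$, so that $\PlHolant(f') \le_T \PlHolant(f)$.

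The core strategy is a case analysis on $f'$. First suppose $f'$ is non-degenerate. If $f'$ lies in one of the classes $\mathscr{P}_1$, $\mathscr{M}_2 \setminus \mathscr{P}_2$, $\mathscr{A}_3$, or $\mathscr{M}_3$, then the corresponding induction Corollaries~\ref{cor:induction:P1}, \ref{cor:induction:M2}, \ref{cor:induction:A3}, and~\ref{cor:induction:M3} apply directly and either force $f$ itself into a tractable class from the target list or give \numP-hardness of $\PlHolant(f)$. If $f'$ is non-degenerate and lies in \emph{none} of $\mathscr{P}_1 \cup \mathscr{M}_2 \cup \mathscr{A}_3 \cup \mathscr{M}_3 \cup \mathscr{M}_4 \cup \mathscr{V}$, the inductive hypothesis immediately yields $\PlHolant(f')$ is \numP-hard, hence so is $\PlHolant(f)$. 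So the genuinely new work is confined to the three residual situations: (i) $f' \in \mathscr{V}$; (ii) $f' \in \mathscr{M}_4$; and (iii) $f'$ is degenerate.

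For (i), the right gadget is the $Z$-transformation. If $\rd^{\sigma}(f') = d$, then by Lemma~\ref{lem:prelim:vanishing_form_in_Z_basis} the transformed signature $(Z^{-1})^{\otimes (n-2)} f'$ is supported on the first (or last) $d+1$ coordinates, and one may read off the corresponding support structure of $\hat{f} = (Z^{-1})^{\otimes n} f$. Either $\hat{f}$ itself has sufficiently restricted support to place $f$ in $\mathscr{V}$, or some nonzero entry appears outside the vanishing support, in which case one constructs a non-degenerate binary signature $h$ with $h \notin \mathscr{R}^{\sigma}_2$ from $f$, and invokes Lemma~\ref{lem:van:bin} (or Lemma~\ref{lem:van:M4:bin} when $f \in \mathscr{M}_4^\sigma$) to obtain \numP-hardness. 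For (ii), since $\mathscr{M}_4 \subset \mathscr{V}$ up to orthogonal transformation, an analogous $Z$-basis analysis applies; the relevant check is whether $\hat{f}$ has the exactly-one support structure of $\mathscr{M}_4$, and if not, Lemma~\ref{lem:van:M4:bin} and Lemma~\ref{lem:van:plus_and_minus} close the case. For (iii), if $f' = u^{\otimes (n-2)}$, then $f$ is forced into the very constrained shape $Z^{\otimes n}[x,y,0,\ldots,0,z]$ or its reversal (up to the $Z$-basis, so that the self-loop yields a degenerate signature), possibly together with a few sporadic low-arity forms; when $xyz \neq 0$ Lemma~\ref{lem:Zg:not-transformable} guarantees $f$ is neither transformable nor vanishing, and hardness must be extracted directly using the arity-$4$ reduction of Theorem~\ref{thm:PlHolant:arity34} on a suitable gadget, while the degenerate sub-cases with a zero among $x,y,z$ fall back to Lemma~\ref{lem:van:deg} or the matchgate/vanishing classes.

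The main obstacle I foresee is the degenerate-$f'$ analysis in case (iii): the self-loop can genuinely destroy information, and one must prove that the few surviving shapes of $f$ either land in one of the six tractable classes or admit a planar gadget producing a concrete hard sub-problem (counting Eulerian orientations on planar $4$-regular graphs, or the arity-$4$ hardness from Corollary~\ref{cor:arity4:double_root}). A secondary difficulty is that, in cases (i) and (ii), the mere fact that a self-loop lies in $\mathscr{V}$ or $\mathscr{M}_4$ does not by itself determine the $Z$-basis support of $\hat{f}$; one has to control the off-support entries carefully, reducing them via further self-loops and domain pairings so that the binary signature $h$ produced to invoke Lemma~\ref{lem:van:bin} or Lemma~\ref{lem:van:M4:bin} actually satisfies the non-$\mathscr{R}_2^\sigma$ hypothesis. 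Once these technical hurdles are cleared, the induction closes and yields the stated dichotomy.
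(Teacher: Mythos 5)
Your overall strategy — induction on arity, taking a self-loop $f' = \partial f$, then casing on whether $f'$ is degenerate and what tractable class it lies in — is precisely the strategy the paper follows, and most of your case breakdown matches the paper's Lemma~\ref{lem:selfloop}. However, there is a genuine gap in the non-degenerate branch: you never address $f' \in \mathscr{P}_2$.

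Concretely, you invoke Corollary~\ref{cor:induction:M2} only for $f' \in \mathscr{M}_2 \setminus \mathscr{P}_2$ (correctly, since that corollary excludes $\mathscr{P}_2$), and then you declare that the only residual non-degenerate situations are $f' \in \mathscr{V}$ and $f' \in \mathscr{M}_4$. But a non-degenerate $f' \in \mathscr{P}_2$ of arity at least $3$ has, in the $Z$ basis, the form $[1,0,\dotsc,0,c]$ with $c \neq 0$; both end entries are nonzero, so by Lemma~\ref{lem:prelim:vanishing_form_in_Z_basis} and Corollary~\ref{cor:van:degree} one computes $\rd^\sigma(f') = \arity(f')$ and $\vd^\sigma(f') = 0$, hence $f' \notin \mathscr{V}$. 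It is also not in $\mathscr{M}_4$ (the $\mathscr{M}_4$ signatures are the \textsc{ExactOne} functions in the $Z$ basis, which have zero end entries). So $f' \in \mathscr{P}_2$ falls through all of your cases. The paper handles this with a dedicated hardness argument, Lemma~\ref{lem:induction:P2} and Corollary~\ref{cor:induction:P2}: when $f = Z^{\otimes n}[a,1,0,\dotsc,0,1,b]$ with $n \ge 5$, one constructs arity-$4$ gadgets in the $Z$ basis and verifies nonsingularity of compressed signature matrices (or applies Corollary~\ref{cor:prelim:nonsingular_compressed_hard_trans}), showing $\PlHolant(f)$ is $\numP$-hard unconditionally. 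Without this piece, the induction does not close.

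A secondary, smaller inaccuracy: in the degenerate case $f' = [a,b]^{\otimes(n-2)}$ with $a^2+b^2 \neq 0$ (and $ab\neq 0$), the paper does not pass to the $Z$ basis at all. Instead it applies an orthogonal transformation taking $[a,b]$ to $[1,0]$, trims $f$ down to an arity-$4$ signature $\hat{g}$ satisfying $\hat{g}_k = -\hat{g}_{k+2}$ for $k\ge 1$, and either gets hardness from a nonsingular compressed signature matrix (Lemma~\ref{lem:arity4:nonsingular_compressed_hard}) or, when the determinant vanishes, rewrites $\hat{g}$ as $[1,0]^{\otimes 4} - \hat{f}_2[1,i]^{\otimes 4}$ and applies Theorem~\ref{thm:k-reg_homomorphism}. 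Your description folds all degenerate $f'$ into a $Z$-basis analysis with Lemma~\ref{lem:Zg:not-transformable}, which only matches the sub-case $a^2+b^2 = 0$, $(a,b)\neq(0,0)$; the $a^2+b^2\neq 0$ sub-case needs the separate orthogonal-transform argument.
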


We prove Theorem~\ref{thm:dic:single} by induction on the arity.
Before proceeding to the proof,
we first introduce several lemmas involved in the inductive step.

\subsection{Lemmas applied to Non-Degenerate Signatures in the Inductive Step}

The single signature dichotomy relies on the following key lemma.
The important assumption here is that $f'$ is non-degenerate.

\begin{lemma} \label{lem:selfloop}
 Suppose $f$ is a non-degenerate signature of arity $n \ge 5$.
 Let $f'$ be $f$ with a self loop.
 If $f'\in \mathscr{P}_1 \cup \mathscr{M}_2 \cup \mathscr{A}_3 \cup \mathscr{M}_3 \cup \mathscr{V}$ is non-degenerate,
 then $\PlHolant(f)$ is \numP-hard unless
 $f \in \mathscr{P}_1 \cup \mathscr{M}_2 \cup \mathscr{A}_3 \cup \mathscr{M}_3 \cup \mathscr{V}$.
\end{lemma}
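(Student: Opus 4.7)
The plan is to case-split on which of the five tractable classes contains $f'$, invoking an induction-style corollary from Section~\ref{sec:PlCSP2} wherever one applies, and otherwise working directly in the $Z$-basis with the mixing lemmas of Section~\ref{sec:mixing:vanishing}.

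In the four cases $f' \in \mathscr{P}_1$, $f' \in \mathscr{M}_2 \setminus \mathscr{P}_2$, $f' \in \mathscr{A}_3$, and $f' \in \mathscr{M}_3$, Corollaries~\ref{cor:induction:P1}, \ref{cor:induction:M2}, \ref{cor:induction:A3}, and \ref{cor:induction:M3} respectively immediately give that either $\PlHolant(f)$ is \numP-hard or $f$ is $\mathscr{A}$-, $\mathscr{P}$-, or $\mathscr{M}$-transformable. Combining Lemmas~\ref{lem:cha:affine}, \ref{lem:cha:product}, and \ref{lem:cha:Mtrans} with the containments $\mathscr{M}_1 \subset \mathscr{A}_1 \subset \mathscr{P}_1$, $\mathscr{A}_2 = \mathscr{P}_2 \subset \mathscr{M}_2$, and $\mathscr{M}_4 \subset \mathscr{V}$ visible in Figure~\ref{fig:Mtrans:venn_diagram}, any such $f$ must lie in the target set $\mathscr{P}_1 \cup \mathscr{M}_2 \cup \mathscr{A}_3 \cup \mathscr{M}_3 \cup \mathscr{V}$.

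The two subcases left uncovered by this uniform treatment are $f' \in \mathscr{P}_2 = \mathscr{A}_2$ (which lies in $\mathscr{M}_2$ but is explicitly excluded from Corollary~\ref{cor:induction:M2}) and $f' \in \mathscr{V}$. For both, I pass to the $Z$-basis and write $\hat{f} = (Z^{-1})^{\otimes n} f$. Since $[1,0,1] Z^{\otimes 2} = [0,1,0]$, a self-loop of $f$ by $=_2$ becomes a self-loop of $\hat{f}$ by $\neq_2$, which for a symmetric signature simply shifts indices: $\hat{f'}_k = 2\hat{f}_{k+1}$. Thus $\hat{f'}$ exposes $\hat{f}_1, \dots, \hat{f}_{n-1}$ completely and leaves only $\hat{f}_0, \hat{f}_n$ undetermined. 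When $f' \in \mathscr{V}^+$ with $\rd^+(f') = d'$, Lemma~\ref{lem:prelim:vanishing_form_in_Z_basis} then forces $\hat{f} = [\hat{f}_0, \hat{f}_1, \dotsc, \hat{f}_{d'+1}, 0, \dotsc, 0, \hat{f}_n]$ with $\hat{f}_{d'+1} \neq 0$; if $\hat{f}_n = 0$ then $\rd^+(f) \le d'+1 < n/2$ so $f \in \mathscr{V}^+$ and we are done. The $f' \in \mathscr{V}^-$ subcase is symmetric. For $f' \in \mathscr{P}_2 = \mathscr{A}_2$, Lemma~\ref{lem:single:A2} gives $\hat{f'} = c[1,0,\dotsc,0,\beta]$ with $c\beta \neq 0$, forcing $\hat{f} = [\hat{f}_0, c/2, 0, \dotsc, 0, c\beta/2, \hat{f}_n]$, a shape of exactly the type addressed by Lemma~\ref{lem:Zg:not-transformable} once the two endpoints are pinned down.

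The main obstacle will be the remaining residues of these $Z$-basis subcases: $f' \in \mathscr{V}$ with $\hat{f}_n \neq 0$ (and its $\mathscr{V}^-$ mirror), and $f' \in \mathscr{P}_2$ when $\hat{f}_0 \hat{f}_n \neq 0$. In each, $\hat{f}$ decomposes formally as a vanishing piece plus a degenerate tensor power $\propto [1,-i]^{\otimes n}$ (respectively $[1,i]^{\otimes n}$), yet only their sum $f$ is directly accessible, so the hypotheses of Lemmas~\ref{lem:van:deg}--\ref{lem:van:plus_and_minus} cannot be applied to the two summands separately. To pry them apart I plan to use iterated planar self-loops of $f$ together with paired gadget constructions (two copies of $f$ joined through $n-1$ edges via $\neq_2$, etc.) to realize a non-degenerate binary signature $h$ with $\hat{h}_2 \neq 0$, i.e.\ $h \notin \mathscr{R}^+_2$, taking care to avoid the exceptional form $h = Z^{\otimes 2}[a,0,1]$ demanded by Lemma~\ref{lem:van:M4:bin} when $f' \in \mathscr{M}_4^+$. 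Once such an $h$ is produced, Lemma~\ref{lem:van:bin} (or Lemma~\ref{lem:van:M4:bin} in the $\mathscr{M}_4$ subsubcase, or Lemma~\ref{lem:van:plus_and_minus} if the hardness is better extracted by mixing the two halves of $\mathscr{V}$) yields \numP-hardness of $\PlHolant(f)$.
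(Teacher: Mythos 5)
Your handling of the four cases $f'\in\mathscr{P}_1,\ \mathscr{M}_2\setminus\mathscr{P}_2,\ \mathscr{A}_3,\ \mathscr{M}_3$ via Corollaries~\ref{cor:induction:P1}, \ref{cor:induction:M2}, \ref{cor:induction:A3}, \ref{cor:induction:M3} matches the paper exactly, and the reduction of ``$\mathscr{A}$/$\mathscr{P}$/$\mathscr{M}$-transformable'' back to the target union via Lemmas~\ref{lem:cha:affine}, \ref{lem:cha:product}, \ref{lem:cha:Mtrans} and the containments among $\mathscr{M}_1,\mathscr{A}_1,\mathscr{P}_1,\mathscr{A}_2,\mathscr{M}_2,\mathscr{M}_4,\mathscr{V}$ is fine. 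But both of your remaining subcases have gaps. For $f'\in\mathscr{P}_2$: your $\hat f = [\hat f_0, c/2, 0,\dotsc, 0, c\beta/2, \hat f_n]$ has (generically) \emph{four} nonzero positions $0,1,n-1,n$, whereas Lemma~\ref{lem:Zg:not-transformable} is stated for the three-position shape $[x,y,0,\dotsc,0,z]$ — the two do not coincide, and I don't see what ``once the two endpoints are pinned down'' could mean that would bridge them. Worse, Lemma~\ref{lem:Zg:not-transformable} only asserts non-transformability; converting that into \numP-hardness at arity $n\ge 5$ requires the very single-signature dichotomy (Theorem~\ref{thm:dic:single}) you are in the middle of proving, so that route is circular. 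The paper instead handles $\mathscr{P}_2$ via Corollary~\ref{cor:induction:P2}/Lemma~\ref{lem:induction:P2}, which concludes unconditional hardness by a separate induction using the quaternary gadget of Figure~\ref{subfig:gadget:van-arity4} and nonsingularity of compressed signature matrices — a genuinely different and non-trivial argument.

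For $f'\in\mathscr{V}$, your index bookkeeping up through ``if $\hat f_n=0$ then $f\in\mathscr{V}^+$'' is correct, but the remaining hard subcase ($\hat f_n\neq 0$) is left as a plan rather than a proof, and the specific gadget you float — two copies of $f$ joined through $n-1$ edges via $\neq_2$ — does not in general give a binary $h$ with $\hat h_2\ne 0$: a short computation shows $\hat h(1,1)$ is proportional to $\hat f_1\hat f_n$, which vanishes when $\hat f'_0 = 0$, so this cannot be the uniform construction. The paper's Lemma~\ref{lem:induction:vanishing} must instead iterate self-loops to isolate $[1,i]^{\otimes k}$ and $[1,-i]^{\otimes k'}$, run a subtractive Euclidean process to equalize the two tensor powers, and then pin $\hat f$ down to the arity-4 signature $[\hat f_{d-2},\hat f_{d-1},\hat f_d,0,0]$, whose hardness comes from the Eulerian-orientation reduction (Corollary~\ref{cor:prelim:nonsingular_compressed_hard_trans}) rather than from Lemma~\ref{lem:van:bin}. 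So the overall case structure of your proof is the right one, but the two cases you chose to re-derive from scratch are exactly the ones that require the real work, and as written neither is established.
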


Lemma~\ref{lem:selfloop} depends on several results,
each of which handles a different case.
In fact,
the proof of Lemma~\ref{lem:selfloop} is a straightforward combination of
Corollary~\ref{cor:induction:P1} (for $\mathscr{P}_1$),
Corollary~\ref{cor:induction:A3} (for $\mathscr{A}_3$),
Corollary~\ref{cor:induction:M2} (for $\mathscr{M}_2 \setminus \mathscr{P}_2$),
and Corollary~\ref{cor:induction:M3} (for $\mathscr{M}_3$) 
from Section~\ref{sec:PlCSP2},
as well as Corollary~\ref{cor:induction:P2} (for $\mathscr{P}_2$) 
and Lemma \ref{lem:induction:vanishing} (for $\mathscr{V}$),
which we will prove shortly.
These last two results handle the cases $f' \in \mathscr{P}_2$ and $f' \in \mathscr{V}$ respectively.
First we consider the case of $f'\in\mathscr{P}_2$ and show the following lemma.

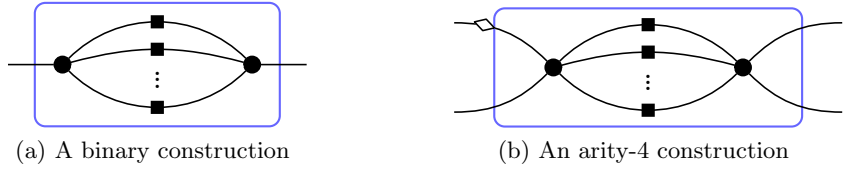
\begin{figure}[t]
 \centering
 \def\capWidth{5.5cm}
 \captionsetup[subfigure]{width=\capWidth}
 \subfloat[A binary construction]{
  \makebox[\capWidth][c]{
   \begin{tikzpicture}[scale=\scale,transform shape,node distance=\nodeDist,semithick]
    \node[external] (0)              {};
    \node[internal] (1) [right of=0] {};
    \node[external] (2) [right of=1] {};
    \node[external] (3) [right of=2] {};
    \node[internal] (4) [right of=3] {};
    \node[external] (5) [right of=4] {};
    \path (0) edge                                                                 (1)
          (1) edge[out= 45, in= 135]        node[square]     (e1) {}               (4)
              edge[out= 15, in= 165]        node[square]          {}               (4)
              edge[out=-10, in=-170, white] node[black]           {\Huge $\vdots$} (4)
              edge[out=-45, in=-135]        node[square]     (e2) {}               (4)
          (4) edge                                                                 (5);
    \begin{pgfonlayer}{background}
     \node[draw=\borderColor,thick,rounded corners,fit = (1) (4) (e1) (e2),inner xsep=12pt,inner ysep=8pt] {};
    \end{pgfonlayer}
  \end{tikzpicture}} \label{subfig:gadget:van-binary}}
 \qquad
 \subfloat[An arity-4 construction]{
  \makebox[\capWidth][c]{
   \begin{tikzpicture}[scale=\scale,transform shape,node distance=\nodeDist,semithick]
    \node[internal]  (0)                    {};
    \node[external]  (1) [above left  of=0] {};
    \node[external]  (2) [below left  of=0] {};
    \node[external]  (3) [left        of=1] {};
    \node[external]  (4) [left        of=2] {};
    \node[external]  (5) [right       of=0] {};
    \node[external]  (6) [right       of=5] {};
    \node[internal]  (7) [right       of=6] {};
    \node[external]  (8) [above right of=7] {};
    \node[external]  (9) [below right of=7] {};
    \node[external] (10) [right       of=8] {};
    \node[external] (11) [right       of=9] {};
    \path (3) edge[in= 135, out=   0,postaction={decorate, decoration={
                                                 markings,
                                                 mark=at position 0.4 with {\arrow[>=diamond, white] {>}; },
                                                 mark=at position 0.4 with {\arrow[>=open diamond]   {>}; } } }] (0)
          (0) edge[out=-135, in=   0]                                     (4)
              edge[out=  45, in= 135]        node[square] {}              (7)
              edge[out=  15, in= 165]        node[square] {}              (7)
              edge[out= -10, in=-170, white] node[black] {\Huge $\vdots$} (7)
              edge[out= -45, in=-135]        node[square] {}              (7)
          (7) edge[out=  45, in= 180]                                    (10)
              edge[out= -45, in= 180]                                    (11);
    \begin{pgfonlayer}{background}
     \node[draw=\borderColor,thick,rounded corners,fit = (1) (2) (8) (9)] {};
    \end{pgfonlayer}
   \end{tikzpicture}} \label{subfig:gadget:van-arity4}}
 \caption{Two gadgets used. 
 In the normal basis, circles are assigned $f$ and squares are assigned $=_2$.
 In the $Z$ basis, circles are assigned $\hat{f}$ and squares are assigned $\neq_2$.}
\end{figure}

\begin{lemma} \label{lem:induction:P2}
 Let $f$ be a non-degenerate signature of arity $n \ge 5$.
 If $f = Z^{\otimes n} [a,1,0,\dots,0,1,b]$ for some $a,b \in \CC$, where the number of $0$'s is $n-3$.
 Then $\PlHolant(f)$ is \numP-hard.
\end{lemma}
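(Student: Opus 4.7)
}
The plan is to work in the $Z$-basis, where $\PlHolant(f) \equiv_T \plholant{{\neq}_2}{\hat{f}}$ with $\hat{f} = [a, 1, 0, \dots, 0, 1, b]$, and to realize an arity-$4$ gadget that falls within the scope of Corollary~\ref{cor:prelim:nonsingular_compressed_hard_trans}. The primary gadget is the one in Figure~\ref{subfig:gadget:van-arity4}: two copies of $f$ in the normal basis joined by $n-2$ parallel edges, which under the $Z$-transformation is exactly two copies of $\hat{f}$ connected by $n-2$ disequality wires. Call the resulting signature $G$ in the normal basis and $\hat{G} = (Z^{-1})^{\otimes 4} G$ its $Z$-basis form.

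Since $\hat{f}$ is supported on Hamming weights $\{0, 1, n-1, n\}$, the sum defining $\hat{G}$ collapses to a short finite sum. A direct bookkeeping computation shows that for input weights $(s,t) \in \{0,1,2\}^2$ on the left pair and right pair of dangling edges, only the six entries $\hat{G}_{1,1} = 2$, $\hat{G}_{0,1} = \hat{G}_{1,0} = a$, $\hat{G}_{1,2} = \hat{G}_{2,1} = b$, $\hat{G}_{0,2} = \hat{G}_{2,0} = ab + n - 2$ are nonzero, while the corner entries $\hat{G}_{0,0} = \hat{G}_{2,2} = 0$. Hence the middle two rows and columns of $M_{\hat G}$ coincide, so $M_{\hat G}$ is redundant; computing $\widetilde{M_{\hat G}}$ by the definition in Section~\ref{sec:redundant} and expanding along the first row gives $\det\widetilde{M_{\hat G}} = -4(n-2)(ab + n - 2)$. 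Because $n \ge 5$, the factor $n-2$ is nonzero, so whenever $ab + n - 2 \neq 0$, the compressed matrix $\widetilde{M_{\hat G}}$ is nonsingular, and Corollary~\ref{cor:prelim:nonsingular_compressed_hard_trans} applied with $T = Z^{-1}$ immediately yields that $\PlHolant(G)$, and hence $\PlHolant(f)$, is $\SHARPP$-hard.

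The main obstacle is the exceptional case $ab = 2 - n$, where the compressed matrix drops to rank $2$ and this gadget alone is insufficient. In this case one still has available several auxiliary signatures constructible from $\hat{f}$: a self-loop of $\hat{f}$ via ${\neq}_2$ yields $2[1, 0, \dots, 0, 1]$ of arity $n-2$; the binary gadget of Figure~\ref{subfig:gadget:van-binary} (two copies of $\hat{f}$ joined by $n-1$ disequality wires) yields $[2a, ab+n, 2b] = 2[a,1,b]$; and the construction of connecting the self-loop signature to $\hat{f}$ by $n-2$ disequalities yields a binary $[a,2,b]$. For this case I would use these binary signatures, noting that the transfer matrix $N = [a,1,b] \cdot ({\neq}_2) = \left[\begin{smallmatrix}1 & a \\ b & 1\end{smallmatrix}\right]$ has distinct eigenvalues $1 \pm \sqrt{2-n}$ (since $n \ge 5$), to perform a polynomial interpolation realizing a one-parameter family of edge signatures of the form $\alpha N + \beta I$. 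The objective would be to interpolate a binary signature whose insertion into the primary arity-$4$ gadget perturbs the off-corner entries and restores non-singularity of the compressed matrix (equivalently, modifies $ab + n - 2$ away from zero), thus reducing the exceptional case back to the generic case already handled; the hard part is choosing the interpolated binary so that the algebraic collapse forced by $ab = 2 - n$ is broken without destroying the redundancy of the $4$-by-$4$ signature matrix.
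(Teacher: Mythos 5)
Your analysis of the generic case is correct and follows the paper's own argument: the arity-$4$ gadget of Figure~\ref{subfig:gadget:van-arity4} gives, in the $Z$-basis, the redundant matrix with entries as you computed, and $\det\widetilde{M_{\hat G}} = -4(n-2)(ab+n-2)$, so Corollary~\ref{cor:prelim:nonsingular_compressed_hard_trans} kills the case $ab \neq 2-n$. The auxiliary binaries $[a,2,b]$ and the arity-$(n-2)$ signature $[1,0,\dots,0,1]$ that you identify are also the ones the paper uses.

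However, your treatment of the exceptional case $ab = 2-n$ is not a proof, and the interpolation strategy you sketch would in fact fail at exactly one of the required arities. You observe that $N = \left[\begin{smallmatrix}1 & a \\ b & 1\end{smallmatrix}\right]$ has distinct eigenvalues $1 \pm i\sqrt{n-2}$, but distinctness alone is not sufficient for interpolation; what is needed is that the ratio of eigenvalues is not a root of unity. The ratio is $(1+i\sqrt{n-2})/(1-i\sqrt{n-2})$, with real part $(1-(n-2))/(1+(n-2)) = (3-n)/(n-1)$. By Niven's theorem this is the cosine of a rational multiple of $\pi$ only when it lies in $\{0,\pm 1/2,\pm 1\}$; the value $-1/2$ occurs precisely at $n=5$, where the ratio is $(-1+\sqrt{3}i)/2 = e^{2\pi i/3}$, a primitive cube root of unity. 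Thus for $n=5$ the powers $N^k$ yield only three distinct signatures up to scalar, and the Vandermonde system you would need is not of full rank — interpolation collapses exactly at the smallest arity in the lemma's range. Even for $n \geq 6$ where the ratio is not a root of unity, the proposal stops at ``the hard part is choosing the interpolated binary,'' leaving the key step — exhibiting a target whose insertion restores non-singularity while preserving redundancy — unresolved. The paper instead handles $ab = 2-n$ by chaining $[a,2,b]$ and $[1,0,\dots,0,1]$ through $\neq_2$ to produce $\hat{g} = [ab+4,b,0,\dots,0,a,ab+4]$ of arity $n-2$, then re-runs the arity-$4$ gadget on $g$ for $n \geq 7$ (where a determinant computation again gives non-singularity), and carries out separate explicit constructions for $n=6$ and for $n=5$, the latter requiring two more derived quaternary signatures and a final Diophantine-style consistency argument on $(a,b)$.
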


\begin{proof}
  First we use the gadget in Figure~\ref{subfig:gadget:van-arity4},
  where we put $f$ on both vertices.
  Let the resulting signature be $h=Z^{\otimes 4}\hat{h}$.
  It is easier to calculate $\hat{h}$, that is, $h$ in the $Z$ basis.
  Indeed, $\hat{h}$ is not symmetric, but $\hat{h}$ has the following matrix representation as $n\ge 5$:
  \begin{align*}
    M_{\hat{h}}=
    \begin{bmatrix}
      0 & a & a & ab+(n-2) \\
      a & 2 & 2 & b \\
      a & 2 & 2 & b \\
      ab+(n-2) & b & b & 0
    \end{bmatrix}.
  \end{align*}
  Notice that this matrix is redundant,
  and $\det(\widetilde{M_{\hat{h}}})=-4 (n-2) (a b + n-2)$.
  If $a b\neq 2-n$, 
  then by Corollary~\ref{cor:prelim:nonsingular_compressed_hard_trans} 
  $\PlHolant(h)$ is \numP-hard, and so is $\PlHolant(f)$.
  Hence in the following we assume $ab=2-n$.

  Let $f'$ be $f$ with a self loop.
  Then apply the $Z$ transformation as follows:
     \begin{align*}
      \plholant{{=}_2}{f,f'}
      &\equiv_T \plholant{[0,1,0]}{\hat{f},\widehat{f'}}\\
      &\equiv_T \plholant{[0,1,0]}{\hat{f},[1,0,\dots,0,1]},
     \end{align*}
  where $\widehat{f'}=[1,0,\dots,0,1]$ and $\hat{f}=[a,1,0,\dots,0,1,b]$ for some $a,b\in\CC$.
  We get this expression of $\widehat{f'}$ 
  because doing a self loop commutes
  with the operation of holographic transformations.
  
  We connect $\widehat{f'}$ to $\hat{f}$ via $[0,1,0]$,
  getting $[a,2,b]$.
  Then we connect $[a,2,b]$ to $\hat{f}$ via $[0,1,0]$ again, 
  getting $\hat{g}=[ab+4,b,0,\dots,0,a,ab+4]$ of arity $n-2$.

  If $n\geq 7$, then we use the gadget in Figure~\ref{subfig:gadget:van-arity4} again,
  where we put $g$ on both vertices this time.
  We get some signature $h'$, which in $Z$ basis has the following matrix representation as $n-2\ge 5$:
  \begin{align*}
    M_{\widehat{h'}}=
    \begin{bmatrix}
                     0 & a(ab+4) & a(ab+4) & (n-4)ab+(ab+4)^2 \\
               a(ab+4) &     2ab &     2ab & b(ab+4) \\
               a(ab+4) &     2ab &     2ab & b(ab+4) \\
      (n-4)ab+(ab+4)^2 & b(ab+4) & b(ab+4) & 0
    \end{bmatrix}.
  \end{align*}
  Once again this matrix is redundant.
  It can be simplified as $ab=2-n$.
  The compressed matrix is 
  \begin{align*}
    \widetilde{M_{\widehat{h'}}}=
    \begin{bmatrix}
            0 & -2(n-6)a & -6n+28 \\
      -(n-6)a &     8-4n & -(n-6)b\\
     -6n+28  & -2(n-6)b & 0
    \end{bmatrix}.
  \end{align*}
  It is easy to compute that $\det(\widetilde{M_{\widehat{h'}}})=-8(3n-14)(ab(n-6)^2-6n^2+40n-56)
  =8(n-4)(n-2)^2(3n-14)$.
  Since $n\geq 7$, $\det(\widetilde{M_{\widehat{h'}}})>0$.
  Then by Corollary~\ref{cor:prelim:nonsingular_compressed_hard_trans} 
  $\PlHolant(h')$ is \numP-hard, and so is $\PlHolant(f)$.

  The remaining cases are $n=6$ and $n=5$.
  When $n=6$, $ab=2-n=-4$.
  Moreover, $\hat{g}$ is of arity $4$ and $\hat{g}=[ab+4,b,0,a,ab+4]=[0,b,0,a,0]$.
  We do one more self loop on $g$ via $[0,1,0]$ in the $Z$ basis,
  resulting in $\widehat{g'}=[b,0,a]$.
  Connecting $\widehat{g'}$ to $\hat{f}$ via $[0,1,0]$,
  we get $\widehat{g_1}=[a^2,a,0,b,b^2]$.
  Hence $\det(\widetilde{M_{\widehat{g_1}}})=-4a^2b^2=-64\neq 0$.
  Then by Corollary~\ref{cor:prelim:nonsingular_compressed_hard_trans} 
  $\PlHolant(g_1)$ is \numP-hard, and so is $\PlHolant(f)$.  

  At last, $n=5$ and $ab=2-n=-3$.
  We also have $\hat{g}=[ab+4,b,a,ab+4]=[1,b,a,1]$.
  One more self loop on $g$ via $[0,1,0]$ in the $Z$ basis
  results in $\widehat{g''}=[b,a]$.
  Connecting $\widehat{g''}$ to $\hat{f}$ via $[0,1,0]$,
  we get $\widehat{g_2}=[a^2+b,a,0,b,b^2+a]$.
  Hence $\det(\widetilde{M_{\widehat{g_2}}})=-2 (a^3 + 2 a^2 b^2 + b^3)=-2 (a^3 + b^3 + 18)$.
  If $a^3 + b^3 + 18\neq 0$, then we are done by Corollary~\ref{cor:prelim:nonsingular_compressed_hard_trans}.
  Otherwise $a^3 + b^3 = - 18$, 
  and we construct a binary signature $[a,0,b]$ by doing a self-loop on $\widehat{g_2}$ in $Z$ basis.
  Then we construct another unary signature 
  by connecting $\widehat{g''}=[b,a]$ to $[a,0,b]$ via $[0,1,0]$, 
  which gives $[a^2,b^2]$.
  Connecting $[a^2,b^2]$ to $\hat{f}$ via $[0,1,0]$, 
  we have another arity-4 signature $\widehat{g_3}=[ab^2+a^2,b^2,0,a^2,a^2 b +b^2]$.
  We compute $\det(\widetilde{M_{\widehat{g_3}}})=-2 (a^6 + a^5 b^2 + a^2 b^5 + b^6)=-2(a^6+b^6-162)$.
  If $a^6+b^6-162\neq 0$, again we are done by Corollary~\ref{cor:prelim:nonsingular_compressed_hard_trans}.
  Otherwise $a^6+b^6=162$.
  Together with $a^3 + b^3 = - 18$ and $a b=-3$, 
  there is no solution of $a$ and $b$.
  This finishes the proof.
\end{proof}

This lemma essentially handles the case of $f' \in \mathscr{P}_2$ due to the following corollary.

\begin{corollary} \label{cor:induction:P2}
 Suppose $f$ be a non-degenerate signature of arity $n \ge 5$.
 Let $f'$ be $f$ with a self loop.
 If $f' \in \mathscr{P}_2$ is non-degenerate,
 then $\PlHolant(f)$ is \numP-hard.
\end{corollary}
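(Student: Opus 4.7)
The plan is to reduce $\PlHolant(f)$ to the problem already handled by Lemma~\ref{lem:induction:P2} via a diagonal holographic transformation in the $Z$ basis. Throughout, I work in the $Z$ basis: setting $\hat{f} = (Z^{-1})^{\otimes n} f$, a self-loop via $=_2$ on $f$ corresponds to a self-loop via $\neq_2$ on $\hat{f}$, since $[1,0,1]\,Z^{\otimes 2} = 2[0,1,0]$.

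I would first pin down the shape of $\hat{f}$. By Lemma~\ref{lem:single:A2} applied to the non-degenerate $f' \in \mathscr{P}_2 = \mathscr{A}_2$, there exist nonzero constants $c, \beta \in \mathbb{C}$ such that $f' = c\bigl(\tbcolvec{1}{i}^{\otimes(n-2)} + \beta\tbcolvec{1}{-i}^{\otimes(n-2)}\bigr)$, hence $\widehat{f'} = c[1,0,\dots,0,\beta]$. A direct calculation of the self-loop via $\neq_2$ on a symmetric $\hat{f}$ of arity $n$ yields $\widehat{f'}_k = 2\hat{f}_{k+1}$ for $0 \le k \le n-2$. Matching entries forces $\hat{f} = [\hat{f}_0,\, c/2,\, 0,\dots,0,\, c\beta/2,\, \hat{f}_n]$, with the middle $n-3$ entries all zero and both $\hat{f}_1$ and $\hat{f}_{n-1}$ nonzero.

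Next, I would normalize by a diagonal holographic transformation $T = \trans{1}{0}{0}{d}$, choosing $d \in \mathbb{C}$ with $d^{n-2} = \beta$ (possible since $\beta \neq 0$). Because $[0,1,0]\,T^{\otimes 2} = d[0,1,0]$, the disequality on the left of the bipartite $Z$-basis problem is preserved up to a global scalar. On the right, $(T^{-1})^{\otimes n}\hat{f}$ has $k$-th entry $d^{-k}\hat{f}_k$; at positions $1$ and $n-1$ these become $c/(2d)$ and $c\beta/(2d^{n-1}) = c/(2d)$ respectively, using $d^{n-2}=\beta$. After factoring out this common scalar, the transformed signature takes the form $[a,1,0,\dots,0,1,b]$ for some $a,b\in\mathbb{C}$. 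Pulling back then gives $\PlHolant(f) \equiv_T \PlHolant(\tilde{f})$ with $\tilde{f} = Z^{\otimes n}[a,1,0,\dots,0,1,b]$, and $\tilde{f}$ is non-degenerate of arity $n \ge 5$ as a holographic image of $f$. Lemma~\ref{lem:induction:P2} then delivers \numP-hardness. The only mildly delicate point is the bookkeeping of the diagonal transformation and the verification that $\neq_2$ is preserved up to a global scalar; no substantive new hardness argument is needed, since Lemma~\ref{lem:induction:P2} already supplies it.
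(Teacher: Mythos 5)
Your proof is correct and follows the same route as the paper: normalize $f'$ to the canonical form $Z^{\otimes(n-2)}[1,0,\ldots,0,1]$ so that $f$ becomes $Z^{\otimes n}[a,1,0,\ldots,0,1,b]$, then invoke Lemma~\ref{lem:induction:P2}. The paper performs the normalization by appealing to the orthogonal $H$ built into the definition of $\mathscr{P}_2$ together with Theorem~\ref{thm:orthogonal}, whereas you use a diagonal $T$ in the $Z$ basis---but this is the same move in different coordinates, since $Z T Z^{-1}$ is a scalar multiple of an orthogonal matrix.
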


\begin{proof}
  Since $f'\in\mathscr{P}_2$, 
  we have that $f'=Z^{\otimes n-2}[1,0,\dots,0,1]$ 
  up to an orthogonal transformation $H$.
  Since $H$ does not change the complexity, 
  we may assume we are under this transformation.
  Then $f$ is of the form $Z^{\otimes n}[a,1,0,\dots,0,1,b]$.
  The claim follows by Lemma~\ref{lem:induction:P2}.
\end{proof}

The next lemma handles the case when $f'$ is a non-degenerate vanishing signature.
Its proof is partly contained in the proof of Theorem~9.1 in~\cite{CGW13}.
We include this part here for completeness.
As we shall see,
the case of $f' \in \mathscr{M}_4$ is a special case of this result.

\begin{lemma} \label{lem:induction:vanishing}
 Suppose $f$ is a non-degenerate signature of arity $n \ge 5$.
 Let $f'$ be $f$ with a self loop.
 If $f'$ is non-degenerate and vanishing,
 then $\PlHolant(f)$ is \numP-hard unless $\{f,f'\}$ is vanishing,
 in which case $\PlHolant(f)$ is tractable.
\end{lemma}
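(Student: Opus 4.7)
My plan is to perform the analysis in the $Z$-basis. Writing $\hat{f} := (Z^{-1})^{\otimes n} f$, we have $\PlHolant(f) \equiv \plholant{\neq_2}{\hat{f}}$, and a self-loop via $=_2$ on $f$ corresponds to a self-loop via $\neq_2$ on $\hat{f}$, so $\widehat{f'}_j = 2\hat{f}_{j+1}$. Without loss of generality take $f' \in \mathscr{V}^+$; the $\mathscr{V}^-$ case is symmetric. Lemma~\ref{lem:prelim:vanishing_form_in_Z_basis} applied to $f'$ with $d := \rd^+(f') \ge 1$ forces $\hat{f}_j = 0$ for $d+2 \le j \le n-1$ and $\hat{f}_{d+1} \neq 0$, while $c := \hat{f}_n$ is unconstrained. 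Using Lemma~\ref{lem:prelim:vanishing_form_in_Z_basis}, Corollary~\ref{cor:van:degree}, and Theorem~\ref{thm:van}, I would verify that $\{f,f'\}$ is vanishing exactly when $c=0$: indeed $\hat{f}_{d+1} \neq 0$ together with $2d < n-2$ rules out $f \in \mathscr{V}^-$, and then $c = 0$ means $f \in \mathscr{V}^+$ whereas $c \neq 0$ means $f \notin \mathscr{V}^+$. The $c = 0$ case is tractable by inspection (the Holant is identically zero on a vanishing set), so the remaining work is to prove $\numP$-hardness when $c \neq 0$.

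Assume $c \neq 0$. Using the shift rule $\widehat{f^{(k)}}_j = 2^k \hat{f}_{j+k}$, $(d+1)$-fold self-looping on $f$ realizes in the normal basis the degenerate gadget $[1,i]^{\otimes m}$, equivalently $[1,0]^{\otimes m}$ in the $Z$-basis, where $m = n - 2d - 2$. Joining $\hat{f}$ to this gadget by all $m$ of its edges via $\neq_2$ yields the arity-$(2d+2)$ signature $[\hat{f}_{n-2d-2}, 0, \dotsc, 0, c]$, which is the degenerate $c[0,1]^{\otimes(2d+2)}$ whenever $n \ge 3d+4$. Interleaving these two tensor-power families via a planar subtractive Euclidean procedure, one may fix any prescribed number of edges of $\hat{f}$ to $1$ (using $[1,0]$-powers) or to $0$ (using $[0,1]$-powers); fixing exactly $d+1$ edges to $1$ produces $\hat{g} = [\hat{f}_{d+1}, 0, \dotsc, 0, c]$ of arity $n-d-1 \ge 3$, whose normal-basis form $g = \hat{f}_{d+1}[1,i]^{\otimes(n-d-1)} + c[1,-i]^{\otimes(n-d-1)}$ is a non-degenerate element of $\mathscr{P}_2$ by Lemma~\ref{lem:single:A2}.

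If $f' \notin \mathscr{M}_4^+$ then $f' \in \mathscr{V} \setminus \mathscr{M}_4$, so Lemma~\ref{lem:vanishing:P2} applied to $f'$ and $g$ gives $\numP$-hardness of $\PlHolant(f',g) \le_T \PlHolant(f)$ and closes this case. Otherwise $f' \in \mathscr{M}_4^+$, which in view of $\widehat{f'}_j = 2\hat{f}_{j+1}$ forces $d = 1$ and $\hat{f}_1 = 0$, so $\hat{f} = [\hat{f}_0, 0, \hat{f}_2, 0, \dotsc, 0, c]$ with $\hat{f}_2, c \neq 0$. Here Lemma~\ref{lem:vanishing:P2} is unavailable, and my plan is instead to build an arity-$4$ planar gadget $h$ whose redundant signature matrix has nonsingular compressed form and invoke Corollary~\ref{cor:prelim:nonsingular_compressed_hard_trans}. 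The base case $n = 5$ is prototypical: joining the unary $\widehat{f^{(2)}} = 4\hat{f}_2[1,0]$ to $\hat{f}$ by one $\neq_2$-edge yields $\hat{h} = [0, \hat{f}_2, 0, 0, c]$, for which $\det \widetilde{M_{\hat{h}}} = -\hat{f}_2^2 c \neq 0$. For $n = 6$, first extract $[0,1]^{\otimes 2}$ by two successive applications of $\widehat{f^{(2)}}$ to $\hat{f}$ and then use it to fix two edges of $\hat{f}$ to $0$, realizing $\hat{g}' = [\hat{f}_0, 0, \hat{f}_2, 0, 0]$ with $\det \widetilde{M_{\hat{g}'}} = -\hat{f}_2^3 \neq 0$; an analogous strategy using $[0,1]^{\otimes k}$ gadgets of the correct multiplicity handles each larger $n$.

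The main obstacle will be completing this $f' \in \mathscr{M}_4^+$ sub-case uniformly across all $n \ge 7$, because building $[0,1]^{\otimes(n-4)}$ of the exact multiplicity $n-4$ may require combining $[0,1]^{\otimes 4}$ (obtained for $n \ge 7$ as the degenerate output $c[0,1]^{\otimes 4}$ of joining $\hat{f}$ to $\widehat{f^{(2)}}$) with a unary $[0,1]$ extracted by fixing $n-1$ edges of $\hat{f}$ to $1$, and various parity issues can make the arity-$4$ output degenerate or the compressed matrix singular for specific sparse sub-sub-cases (notably $\hat{f}_0 = 0$, where $\hat{f}$ is supported only on $\{2, n\}$). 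In those residual sub-sub-cases my fallback is Lemma~\ref{lem:van:M4:bin} --- applied to $f' \in \mathscr{M}_4^+$ paired with a non-degenerate binary $h$ satisfying $\hat{h}_2 \neq 0$ and $\hat{h}_1 \neq 0$, constructed by a three-copy-of-$\hat{f}$ chain gadget --- or Lemma~\ref{lem:van:plus_and_minus} after explicitly constructing some $g \in \mathscr{V}^- \setminus \mathscr{M}_4^-$ via a symmetric tail-extraction construction.
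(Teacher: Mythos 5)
Your setup — putting everything in the $Z$-basis, reading off $\hat f=[\hat f_0,\dots,\hat f_{d+1},0,\dots,0,c]$ with $\hat f_{d+1}\neq 0$ where $d=\rd^+(f')$, and observing that $c=0$ is exactly the tractable case — matches the paper (up to the index shift $d\mapsto d-1$). The tractability/hardness dividing line is identified correctly.

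The gap is the sentence ``one may fix any prescribed number of edges of $\hat f$ to $1$ (using $[1,0]$-powers) or to $0$ (using $[0,1]$-powers); fixing exactly $d+1$ edges to $1$ produces $\hat g=[\hat f_{d+1},0,\dots,0,c]$.'' The subtractive Euclidean procedure in a planar $\plholant{\neq_2}{\cdot}$ setting does \emph{not} hand you unary pins: starting from $[1,0]^{\otimes m}$ (arity $m=n-2d-2$) and the degenerate $[0,1]^{\otimes k'}$ obtained by iteratively joining it back to $\hat f$, the procedure bottoms out at $[1,0]^{\otimes t}$ and $[0,1]^{\otimes t}$ with $t=\gcd(m,k')=\gcd(m,n)=\gcd(2(d+1),n)$, and all tensor-power gadgets you can assemble have arity a multiple of $t$. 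Your construction of $\hat g$ needs a $[1,0]^{\otimes(d+1)}$ gadget, i.e.\ $t\mid d+1$, and this fails in general. Concretely, take $n=8$ and $d=\rd^+(f')=1$ with $\hat f_1\neq 0$ (so $f'\notin\mathscr M_4^+$ and you are in the $\mathscr P_2$ branch): then $m=4$, $t=\gcd(4,8)=4$, and you cannot realize $[1,0]^{\otimes 2}$, so $\hat g$ is unconstructible and Lemma~\ref{lem:vanishing:P2} cannot be invoked. When $n$ is even the parity of gadget arities is always even, which already rules out $[1,0]^{\otimes(d+1)}$ whenever $d$ is even, and even for $d$ odd the gcd obstruction above persists.

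The paper's proof sidesteps this entirely by not aiming for a fixed target arity $d+1$. It produces $[0,1]^{\otimes k}$ where $k$ is one of the arities it started from, so $t\mid k$ automatically; it first shifts $\hat f$ by $d-1$ self-loops to $\hat h=[\hat f_{d-1},\hat f_d,\hat f_{d+1},0,\dots]$ of arity $k+4$, then uses $[0,1]^{\otimes k}$ to truncate $\hat h$ to the arity-$4$ signature $[\hat f_{d-1},\hat f_d,\hat f_{d+1},0,0]$, which is an Eulerian-orientation instance and $\numP$-hard — uniformly, with no case split on whether $f'\in\mathscr M_4^+$. Your second branch (the $\mathscr M_4^+$/$d=1$ case) independently hits on essentially this arity-$4$ strategy and, modulo a couple of factors of $2$ in the determinants, is sound; the fix for the first branch is to adopt the same strategy there and abandon the detour through $\mathscr P_2$ and Lemma~\ref{lem:vanishing:P2}.
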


\begin{proof}
  Since $f'$ is vanishing, $f' \in \mathscr{V}^\sigma$ 
  for some $\sigma \in \{+,-\}$ by Theorem~\ref{thm:van}.
  For simplicity, assume that $f' \in \mathscr{V}^+$.
  The other case is similar.
  Let $\rd^+(f') = d - 1$, where $2 d < n$ and $d \ge 2$ since $f'$ is non-degenerate.
  Then the entries of $f'$ can be expressed as
  \[f_k' = i^k q(k),\]
  where $q(x)$ is a polynomial of degree exactly $d-1$.
  However, notice that if $f'$ satisfies some recurrence relation with characteristic polynomial $t(x)$,
  then $f$ satisfies a recurrence relation with characteristic polynomial $(x^2 + 1) t(x)$.
  In this case, $t(x) = (x-i)^{d}$.
  Then the corresponding characteristic polynomial of $f$ is $(x-i)^{d+1} (x+i)$, and thus the entries of $f$ are
  \[f_k = i^k p(k) + c (-i)^k\]
  for some constant $c$ and a polynomial $p(x)$ of degree at most $d$.
  However, the degree of $p(x)$ is exactly $d$, otherwise the polynomial $q(x)$ for $f'$ would have degree less than $d-1$.
  If $c=0$, then $\{f,f'\}$ is vanishing, the tractable case.
  Now assume $c \neq 0$, and we want to show that $\PlHolant(f)$ is $\SHARPP$-hard.

  Thus, under the transformation $Z = \frac{1}{\sqrt{2}} \left[\begin{smallmatrix} 1 & 1 \\ i & -i \end{smallmatrix}\right]$, we have
  \begin{align*}
   \plholant{{=}_2}{f}
   &\equiv_T \plholant{[1,0,1] Z^{\otimes 2}}{(Z^{-1})^{\otimes n} f}\\
   &\equiv_T \plholant{[0,1,0]}{\hat{f}},
  \end{align*}
  where $\hat{f} = [\hat{f}_0, \hat{f}_1, \dotsc, \hat{f}_d, 0, \dotsc, 0, c]$, with $\hat{f}_d \neq 0$.
  Taking a self loop in the original setting is equivalent to connecting $[0,1,0]$ to a signature after this transformation.
  Thus, doing this once on $\hat{f}$, 
  we get $\widehat{f'} = [\hat{f}_1, \dotsc, \hat{f}_{d}, 0, \dotsc, 0]$,
  corresponding to $f'$ transformed,
  and doing this $d-2$ times on $\hat{f}$,
  we get a signature $\hat{h} = [\hat{f}_{d-2}, \hat{f}_{d-1}, \hat{f}_{d}, 0, \dotsc, 0, 0 / c]$ of arity $n - 2 (d-2) = n - 2 d + 4$.
  The last entry is $c$ when $d=2$ and is 0 when $d>2$.
  As $n > 2 d$, we may do two more self loops and get $[\hat{f}_{d}, 0, \dotsc, 0]$ of arity $k = n - 2 d$.
  Now connect this signature back to $\hat{f}$ via $[0,1,0]$.
  It is the same as getting the last $n - k + 1 = 2 d + 1$ signature entries of $\hat{f}$ up to a nonzero scalar.
  We may repeat this operation zero or more times until the arity $k'$ of the resulting signature is less than or equal to $k$.
  We claim that this signature has the form $\hat{g} = [0, \dotsc, 0, c]$.
  In other words, the $k'+1$ entries of $\hat{g}$ consist of the last $c$ and $k'$ many 0's from the signature $\hat{f}$, 
  all appearing after $\hat{f}_{d}$.
  This is because there are $n - d - 1$ many $0$ entries in the signature $\hat{f}$ after $\hat{f}_{d}$,
  and $n - d - 1 \ge k \ge k'$.

  Having both $[\hat{f}_{d}, 0, \dotsc, 0]$ of arity $k$ and $\hat{g} = [0, \dotsc, 0, c]$ of arity $k'$ in the Z basis is equivalent to
  having both $[1,i]^{\otimes k}$ and $[1,-i]^{\otimes k'}$ in the standard basis.
  If $k > k'$,
  then we can connect $[1,-i]^{\otimes k'}$ to $[1,i]^{\otimes  k}$ and get $[1,i]^{\otimes (k-k')}$.
  Replacing $k$ by $k-k'$,
  we can repeat this process until the new $k \le k'$.
  If the new $k < k'$,
  then we can continue as in the subtractive Euclid algorithm.
  We continue this procedure and eventually we get $[1,i]^{\otimes t}$ and  $[1,-i]^{\otimes  t}$,
  where $t = \gcd(k,k')$
  Now putting $k / t$ many copies of $[1,-i]^{\otimes  t}$ together,
  we get $[1,-i]^{\otimes k}$.

  In the transformed setting,
  $[1,-i]^{\otimes k}$ is $[0, \dotsc, 0, 1]$ of arity $k$.
  Then we connect this back to $\hat{h}$ via $[0,1,0]$.
  Doing this is the same as forcing $k$ connected edges of $\hat{h}$ be assigned $0$,
  because $[0,1,0]$ flips $[0, \dotsc, 0, 1]$.
  Thus we get a signature of arity $n - 2 d + 4 - k = 4$,
  which is $[\hat{f}_{d-2}, \hat{f}_{d-1}, \hat{f}_{d}, 0, 0]$.
  Note that the last entry is~$0$ (and not $c$),
  because $k \ge 1$ and $\arity(\hat{h})\ge 5$.

  However,
  $\PlHolant([0,1,0] \: | \: [\hat{f}_{d-2}, \hat{f}_{d-1}, \hat{f}_{d}, 0, 0])$
  is equivalent to $\PlHolant([0,1,0] \: | \: [0,0,1,0,0])$ when $\hat{f}_{d} \neq 0$,
  which is transformed back by $Z$ to $\PlHolant([3,0,1,0,3])$.
  This is the Eulerian Orientation problem on planar $4$-regular graphs
  and is \numP-hard by Theorem~\ref{thm:PlHolant:arity34}.
\end{proof}

\subsection{Lemmas applied to Degenerate Signatures in the Inductive Step}

\begin{figure}[p]
 \centering
 \def\capWidth{7.5cm}
 \captionsetup[subfigure]{width=\capWidth}
 \subfloat[$({\ne}_2 \mid \lbrack0,1,0,0,0\rbrack,\lbrack0,0,0,1,0\rbrack,\hat{g})$-gate on right side]{
  \makebox[\capWidth][c]{
   \begin{tikzpicture}[scale=1.2 * \scale,transform shape,node distance=1.5 * \nodeDist,semithick]
    \node[pentagon] (0)                    {};
    \node[internal] (1) [above  left of=0] {};
    \node[triangle] (2) [above right of=0] {};
    \node[triangle] (3) [below  left of=0] {};
    \node[internal] (4) [below right of=0] {};
    \node[external] (5) [above  left of=1] {};
    \node[external] (6) [above right of=2] {};
    \node[external] (7) [below  left of=3] {};
    \node[external] (8) [below right of=4] {};
    \path (0) edge node[square] {} (1)
              edge node[square] {} (2)
              edge node[square] {} (3)
              edge node[square] {} (4)
          (1) edge node[square] {} (2)
              edge[postaction={decorate, decoration={
                                          markings,
                                          mark=at position 0.52 with {\arrow[>=diamond, white] {>}; },
                                          mark=at position 0.52 with {\arrow[>=open diamond]   {>}; } } }] (5)
          (2) edge node[square] {} (4)
              edge (6)
          (3) edge node[square] {} (1)
              edge (7)
          (4) edge node[square] {} (3)
              edge (8);
    \begin{pgfonlayer}{background}
     \node[draw=\borderColor,thick,rounded corners,fit = (1) (4),inner sep=10pt] {};
    \end{pgfonlayer}
  \end{tikzpicture}} \label{subfig:gadget:valid-bipartite-gadet}}
 \qquad
 \subfloat[Simpler construction with the same signature]{
  \makebox[\capWidth][c]{
   \begin{tikzpicture}[scale=1.2 * \scale,transform shape,node distance=1.5 * \nodeDist,semithick]
    \node[diamond,fill,draw,aspect=0.5]  (0) {};
    \node[internal] (1) [above  left of=0] {};
    \node[internal] (2) [above right of=0] {};
    \node[internal] (3) [below  left of=0] {};
    \node[internal] (4) [below right of=0] {};
    \node[external] (5) [above  left of=1] {};
    \node[external] (6) [above right of=2] {};
    \node[external] (7) [below  left of=3] {};
    \node[external] (8) [below right of=4] {};
    \path (0) edge (1)
              edge (2)
              edge (3)
              edge (4)
          (1) edge (2)
              edge[postaction={decorate, decoration={
                                          markings,
                                          mark=at position 0.72 with {\arrow[>=diamond, white] {>}; },
                                          mark=at position 0.72 with {\arrow[>=open diamond]   {>}; } } }] (5)
          (2) edge (4)
              edge node[square, pos=0.3] (e1) {} (6)
          (3) edge (1)
              edge node[square, pos=0.3] (e2) {} (7)
          (4) edge (3)
              edge (8);
    \begin{pgfonlayer}{background}
     \node[draw=\borderColor,thick,rounded corners,fit = (e1) (e2),inner sep=4pt] {};
     \node[draw=\borderColor,thick,rounded corners,dashed,fit = (2) (3),inner sep=3pt] {};
    \end{pgfonlayer}
   \end{tikzpicture}
  }
  \label{subfig:gadget:bipartite-simplified}
 }
 \caption{Two gadgets with the same signature used in Lemma~\ref{lem:PM-InvPM-g}.}
 \label{fig:two_gadgets_same_sig}
\end{figure}
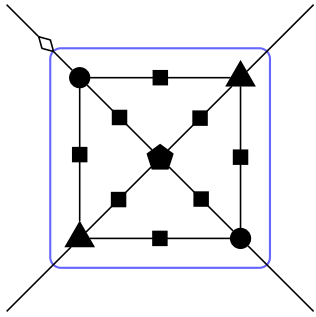
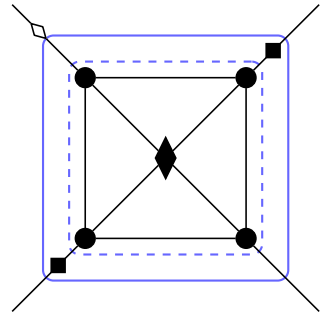

\begin{figure}[p]
 \centering
 \def\capWidth{6.5cm}
 \captionsetup[subfigure]{width=\capWidth}
 \tikzstyle{entry} = [internal, inner sep=2pt]
 \subfloat[Negating the second and fourth inputs]{
  \begin{tikzpicture}[scale=\scale,transform shape,node distance=1.7 * \nodeDist,semithick]
   \node[internal]  (0)                    {};
   \node[external]  (1) [above  left of=0] {};
   \node[external]  (2) [above right of=0] {};
   \node[external]  (3) [below  left of=0] {};
   \node[external]  (4) [below right of=0] {};
   \node[external]  (5) [      right of=0] {};
   \node[external]  (6) [      right of=5] {};
   \node[internal]  (7) [      right of=6] {};
   \node[external]  (8) [above  left of=7] {};
   \node[external]  (9) [above right of=7] {};
   \node[external] (10) [below  left of=7] {};
   \node[external] (11) [below right of=7] {};
   \path (0) edge[postaction={decorate, decoration={
                                         markings,
                                         mark=at position 0.25 with {\arrow[>=diamond,white] {>}; },
                                         mark=at position 0.25 with {\arrow[>=open diamond]  {>}; },
                                         mark=at position 0.75 with {\arrow[>=diamond,white] {>}; },
                                         mark=at position 0.75 with {\arrow[>=open diamond]  {>}; } } }] (1)
             edge node[square, pos=0.3, opacity=0] (e1) {} (2)
             edge node[square, pos=0.3, opacity=0] (e2) {} (3)
             edge (4)
    (5.west) edge[->, >=stealth] (6.east)
         (7) edge[postaction={decorate, decoration={
                                         markings,
                                         mark=at position 0.25 with {\arrow[>=diamond,white] {>}; },
                                         mark=at position 0.25 with {\arrow[>=open diamond]  {>}; },
                                         mark=at position 0.75 with {\arrow[>=diamond,white] {>}; },
                                         mark=at position 0.75 with {\arrow[>=open diamond]  {>}; } } }] (8)
             edge node[square, pos=0.3] (e3) {} (9)
             edge node[square, pos=0.3] (e4) {} (10)
             edge (11);
   \begin{pgfonlayer}{background}
    \node[draw=\borderColor,thick,rounded corners,fit = (e1) (e2),inner sep=6pt] {};
    \node[draw=\borderColor,thick,rounded corners,fit = (e3) (e4),inner sep=6pt] {};
   \end{pgfonlayer}
  \end{tikzpicture}}
 \qquad
 \subfloat[Movement of even Hamming weight entries]{
  \makebox[\capWidth][c]{
   \begin{tikzpicture}[scale=\scale,transform shape,>=stealth,node distance=\nodeDist,semithick]
    \node[entry] (11)               {};
    \node[entry] (12) [right of=11] {};
    \node[entry] (13) [right of=12] {};
    \node[entry] (14) [right of=13] {};
    \node[entry] (21) [below of=11] {};
    \node[entry] (22) [right of=21] {};
    \node[entry] (23) [right of=22] {};
    \node[entry] (24) [right of=23] {};
    \node[entry] (31) [below of=21] {};
    \node[entry] (32) [right of=31] {};
    \node[entry] (33) [right of=32] {};
    \node[entry] (34) [right of=33] {};
    \node[entry] (41) [below of=31] {};
    \node[entry] (42) [right of=41] {};
    \node[entry] (43) [right of=42] {};
    \node[entry] (44) [right of=43] {};
    \node[external] (nw) [above left  of=11] {};
    \node[external] (ne) [above right of=14] {};
    \node[external] (sw) [below left  of=41] {};
    \node[external] (se) [below right of=44] {};
    \path (11) edge[<->] (23)
          (22) edge[<->] (14)
          (32) edge[<->] (44)
          (41) edge[<->] (33);
    \path (nw.west) edge (sw.west)
          (ne.east) edge (se.east)
          (nw.west) edge (nw.east)
          (sw.west) edge (sw.east)
          (ne.west) edge (ne.east)
          (se.west) edge (se.east);
   \end{tikzpicture}}}
 \caption{The movement of the even Hamming weight entries in the signature matrix of a quaternary signature under the negation of the second and fourth inputs
 (i.e.~the square vertices are assigned $[0,1,0]$).}
 \label{fig:negate_second_fourth}
\end{figure}
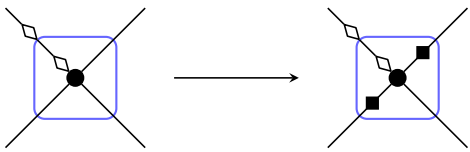
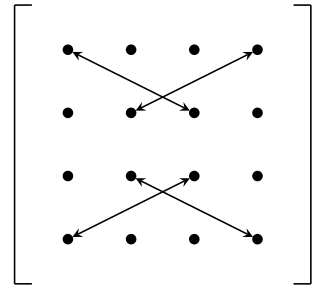

\begin{figure}[p]
 \centering
 \def\capWidth{8cm}
 \subfloat[Gadget that realizes a partial crossover]{
  \makebox[\capWidth][c]{
   \begin{tikzpicture}[scale=\scale,transform shape,node distance=2 * \nodeDist,semithick]
    \node[internal]  (0)                    {};
    \node[external]  (1) [above  left of=0] {};
    \node[internal]  (2) [      right of=0] {};
    \node[external]  (3) [above right of=2] {};
    \node[internal]  (4) [below       of=2] {};
    \node[external]  (5) [below right of=4] {};
    \node[internal]  (6) [       left of=4] {};
    \node[external]  (7) [below  left of=6] {};
    \path (0) edge (1)
              edge[bend  left]           node[square] (e1)               {} (2)
              edge[out=8, in=172, white] node[black]       {\Huge $\vdots$} (2)
              edge[bend right]           node[square]                    {} (2)
          (2) edge (3)
              edge[bend  left]           node[square, pos=0.2]           {} (4)
              edge[bend  left]           node[triangle] (e2)             {} (4)
              edge[bend  left]           node[square, pos=0.8]           {} (4)
              edge[bend right]           node[square]                    {} (4)
          (4) edge (5)
              edge[bend  left]           node[square] (e3)               {} (6)
              edge[out=172, in=8, white] node[black]       {\Huge $\vdots$} (6)
              edge[bend right]           node[square]                    {} (6)
          (6) edge (7)
              edge[bend  left]           node[square, pos=0.2]           {} (0)
              edge[bend  left]           node[triangle] (e4)             {} (0)
              edge[bend  left]           node[square, pos=0.8]           {} (0)
              edge[bend right]           node[square]                    {} (0);
    \begin{pgfonlayer}{background}
     \node[draw=\borderColor,thick,rounded corners,fit = (e1) (e2) (e3) (e4),inner sep=8pt] {};
    \end{pgfonlayer}
   \end{tikzpicture}
  }
  \label{subfig:gadget:partial-crossover}
 }
 \subfloat[Gadget with a useful signature matrix]{
  \makebox[\capWidth][c]{
   \begin{tikzpicture}[scale=\scale,transform shape,node distance=\nodeDist,semithick]
    \node[internal]  (0)                     {};
    \node[external]  (1) [above  left of=0]  {};
    \node[external]  (2) [below  left of=0]  {};
    \node[external]  (3) [       left of=1]  {};
    \node[external]  (4) [       left of=2]  {};
    \node[external]  (5) [above  left of=3]  {};
    \node[external]  (8) [      right of=0]  {};
    \node[external]  (9) [      right of=8]  {};
    \node[internal] (10) [      right of=9]  {};
    \node[external] (11) [above right of=10] {};
    \node[external] (12) [below right of=10] {};
    \node[external] (13) [      right of=11] {};
    \node[external] (14) [      right of=12] {};
    \path (3) edge[out=   0, in= 135, postaction={decorate, decoration={
                                                             markings,
                                                             mark=at position 0.4   with {\arrow[>=diamond,white] {>}; },
                                                             mark=at position 0.4   with {\arrow[>=open diamond]  {>}; },
                                                             mark=at position 0.999 with {\arrow[>=diamond,white] {>}; },
                                                             mark=at position 1     with {\arrow[>=open diamond]  {>}; } } }] (0)
          (0) edge[out=-135, in=   0]  (4)
         (10) edge[out=  45, in= 180] (13)
              edge[out= -45, in= 180] (14);
    \path (0) edge[out=  45, in= 135] node[square] {} (10)
              edge[out= -45, in=-135, postaction={decorate, decoration={
                                                             markings,
                                                             mark=at position 0.999 with {\arrow[>=diamond,white] {>}; },
                                                             mark=at position 0.999 with {\arrow[>=open diamond]  {>}; } } }] node[square] {} (10);
    \begin{pgfonlayer}{background}
     \node[draw=\borderColor,thick,rounded corners,fit = (1) (2) (11) (12)] {};
    \end{pgfonlayer}
   \end{tikzpicture}
  }
  \label{subfig:gadget:symmetric_signature_matrix}
 }
 \caption{Two quaternary gadgets used in the proof of Lemma~\ref{lem:PM-InvPM-g} and~\ref{lem:01000b}.}
\end{figure}
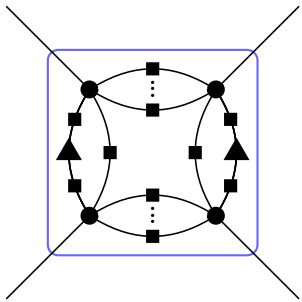
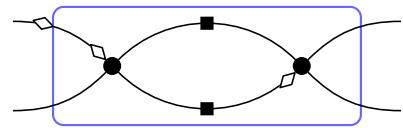

Lemma~\ref{lem:selfloop} does not solve the case when $f'$ is degenerate.
In general, when $f'$ is degenerate, the inductive step is straightforward 
unless $f'$ is also vanishing.
Lemma~\ref{lem:a1000b} and~\ref{lem:01000b} are the two missing pieces to this end.

\begin{lemma} \label{lem:a1000b}
 Let $a, b \in \mathbb{C}$.
 Suppose $f$ is a signature of the form $\trans{1}{1}{i}{-i}^{\otimes n}[a,1,0,\dotsc,0,b]$ with arity $n \ge 3$.
 If $a b \neq 0$,
 then $\PlHolant(f)$ is \numP-hard.
\end{lemma}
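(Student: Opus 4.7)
The plan is to avoid induction on $n$ and instead handle all cases by a direct construction together with known hardness results. For the base cases $n \in \{3,4\}$, apply Lemma~\ref{lem:Zg:not-transformable} with $(x,y,z) = (a,1,b)$, so $xyz = ab \neq 0$; this shows that $f$ lies outside every tractable class ($\mathscr{A}$-, $\mathscr{P}$-, $\mathscr{M}$-transformable, and vanishing). Theorem~\ref{thm:PlHolant:arity34} then yields $\SHARPP$-hardness of $\PlHolant(f)$ immediately.

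For $n \ge 5$, I would construct a planar $\{f\}$-gate of arity~$4$ by taking two copies of $f$ and linking $n-2$ pairs of their edges via $=_2$, leaving four dangling edges (two from each copy), in the spirit of the construction used in Lemma~\ref{lem:induction:P2} and Figure~\ref{subfig:gadget:van-arity4}. By the left/right symmetry of the gadget, the signature matrix $M_g$ of the resulting arity-4 signature $g$ is redundant. The computation is most naturally carried out in the $Z$-basis, where $\hat f = [a,1,0,\dots,0,b]$ is sparse and the internal $=_2$ edges become $\neq_2$; one obtains $\hat g$ explicitly as a sum of two rank-one tensors of the form $\hat u \otimes \hat v + \hat v \otimes \hat u$ with $\hat u = [0,0,1]$ and $\hat v = [ab,b,0]$, and then transforms back by $Z^{\otimes 4}$. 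The compressed matrix $\widetilde{M_g}$ is a $3\times 3$ matrix whose determinant is a polynomial in $a,b$ (independent of $n$ for this specific choice of gadget, roughly of the shape $c\, a b^3 (a-1)^2(b-1)$ up to sign). Whenever this determinant is nonzero, Corollary~\ref{cor:prelim:nonsingular_compressed_hard_trans} produces the desired $\SHARPP$-hardness.

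The main obstacle is to cover the exceptional sub-locus on which the compressed determinant of the primary gadget vanishes---essentially the cases $a=1$ or $b=1$ (together with $ab=0$, which is excluded by hypothesis). For these cases I would build auxiliary gadgets: a self-loop of $\hat f$ via $\neq_2$ in the $Z$-basis yields the tensor power $2[1,0]^{\otimes(n-2)}$, and restrictions of $\hat f$ obtained by connecting this tensor (or further binary signatures built from two copies of $\hat f$ joined via $n-1$ edges) give access to a family of auxiliary signatures from $\{f\}$ alone. For each exceptional $(a,b)$, the goal is to produce either (i) an arity-$4$ signature of the form $[v,1,0,0,0]$ with $v\neq 0$, which is $\SHARPP$-hard by Corollary~\ref{cor:arity4:double_root}, or (ii) a different redundant arity-$4$ gadget whose compressed matrix is nonsingular so that Corollary~\ref{cor:prelim:nonsingular_compressed_hard_trans} applies. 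The delicate point is to make these alternate gadgets exhaust the exceptional locus uniformly; as in the proof of Lemma~\ref{lem:induction:P2}, the small cases $n=5$ and $n=6$ are likely to require bespoke constructions, since there the arity-reduction does not have room to produce a generic arity-$4$ gadget.
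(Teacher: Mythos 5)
Your base cases $n\in\{3,4\}$ are exactly as in the paper, but the main gadget for $n\ge 5$ does not work. You correctly identify that, in the $Z$-basis, the two-copies-of-$\hat f$ gadget has signature $\hat g=\hat u\otimes\hat v+\hat v\otimes\hat u$ with $\hat u=[0,0,1]$ and $\hat v=[ab,b,0]$, and this is precisely the problem: the signature matrix of such a sum of two symmetric rank-one tensors is $M_{\hat g}=uv^{\mathsf T}+vu^{\mathsf T}$ (where $u=(0,0,0,1)$ and $v=(ab,b,b,0)$ are the corresponding $4$-vectors), which has rank at most $2$. Consequently the compressed $3\times 3$ matrix $\widetilde{M_{\hat g}}=AM_{\hat g}B$ has rank at most $2$ and is therefore singular for \emph{all} values of $a,b$. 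Concretely, $\widetilde{M_{\hat g}}=\left[\begin{smallmatrix}0&0&ab\\0&0&b\\ab&2b&0\end{smallmatrix}\right]$ has determinant identically $0$. Since rank is invariant under the holographic transformation back by $Z^{\otimes 4}$, Corollary~\ref{cor:prelim:nonsingular_compressed_hard_trans} can never be applied to $g$; the determinant is not of the shape $c\,ab^3(a-1)^2(b-1)$ as you conjectured. (The reason the analogous gadget succeeds in Lemma~\ref{lem:induction:P2} is that there $\hat f$ is supported on four indices $\{0,1,n-1,n\}$, so the gadget signature has rank greater than $2$; your $\hat f=[a,1,0,\dots,0,b]$ has support of size $3$.) Your "exceptional sub-locus" is therefore the entire parameter space, and the auxiliary-gadget plan is left entirely unspecified.

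For comparison, the paper's argument for $n\ge 5$ does not use a compressed-matrix criterion at all. It constructs the binary signature $g_1=[a+1,-i,a-1]$ from the gadget of Figure~\ref{subfig:gadget:van-binary} (in $Z$-basis $\widehat{g_1}=[0,ab,2b]$, so $g_1\notin\mathscr{R}_2^+$ and is non-degenerate as $a\ne 0$), and separately builds $h=Z^{\otimes(n-2)}[a,1,0,\dots,0]$ of arity $n-2\ge 3$ (via a self-loop and then connecting $g_2=Z^{\otimes 2}[0,0,b]$ back to $f$), observing that $h\in\mathscr{V}^+$ is non-degenerate. Then \numP-hardness of $\PlHolant(h,g_1)$, and hence of $\PlHolant(f)$, follows from Lemma~\ref{lem:van:bin}. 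This route exploits the vanishing/binary mixing machinery instead of a rank criterion, and avoids any case split on $(a,b)$. Pursuing a compressed-matrix proof for this lemma would require a gadget whose $Z$-basis signature has rank $\ge 3$, which the two-copies construction cannot provide when $\hat f$ has only three nonzero entries.
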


\begin{proof}
 We prove by induction on $n$. 
 For $n=3$ or $4$, it follows from Lemma~\ref{lem:Zg:not-transformable} and Theorem~\ref{thm:PlHolant:arity34}
 that $\PlHolant(f)$ is \numP-hard.

 Now assume $n \ge 5$.
 Under a holographic transformation by $Z = \trans{1}{1}{i}{-i}$,
 we have
 \begin{align*}
   \plholant{{=}_2}{f}
   &\equiv_T \plholant{[1,0,1] Z^{\otimes 2}}{(Z^{-1})^{\otimes n} f}\\
   &\equiv_T \plholant{[0,1,0]}{\hat{f}},
 \end{align*}
 where $\hat{f}=[a, 1, 0, \dotsc, 0, b]$.
 Now consider the gadget in Figure~\ref{subfig:gadget:van-binary} with $\hat{f}$ assigned to both vertices.
 This gadget has the binary signature $\widehat{g_1}=[0, ab, 2b]$, which is equivalent to $[0, a, 2]$ since $b \neq 0$.
 Translating back by $Z$ to the original setting, this signature is $g_1 = [a + 1, -i, a - 1]$.
 This can be verified as
 \[
   \begin{bmatrix} 1 & 1 \\ i & -i \end{bmatrix}
   \begin{bmatrix} 0 & a \\ a & 2  \end{bmatrix}
   \transpose{\begin{bmatrix} 1 & 1 \\ i & -i \end{bmatrix}}
   =
   2
   \begin{bmatrix} a + 1 & -i \\ -i & a - 1 \end{bmatrix}.
 \]
 By the form of $\widehat{g_1} =[0, ab, 2b]$ and $b \neq 0$,
 it follows from Lemma~\ref{lem:prelim:vanishing_form_in_Z_basis} 
 that $g_1 \not\in \mathscr{R}^+_2$.
 Moreover, since $a\neq 0$, $g_1$ is non-degenerate.

 Doing a self loop on $f$ yields $f'=Z^{\otimes n-2}[1,0,\dots,0]$.
 Connecting $f'$ back to $f$, we get a binary signature $g_2=Z^{\otimes 2}[0,0,b]$.
 Once again we connect $g_2$ to $f$, 
 the resulting signature is $h=Z^{\otimes n-2}[a,1,0,\dots,0]$ of arity $n-2\ge 3$ 
 up to the constant factor of $b\neq 0$.

 Notice that $h$ is non-degenerate and $h\in\mathscr{V}^+$.
 By Lemma~\ref{lem:van:bin}, 
 $\PlHolant(h, g_1)$ is $\numP$-hard, 
 hence $\PlHolant(f)$ is also $\numP$-hard.  
\end{proof}

\begin{figure}[t]
 \centering
 \captionsetup[subfigure]{labelformat=empty}
 \subfloat[$N_1$]{
  \begin{tikzpicture}[scale=\scale,transform shape,node distance=\nodeDist,semithick]
   \node[external] (0)                    {};
   \node[external] (1) [right       of=0] {};
   \node[internal] (2) [below right of=1] {};
   \node[external] (3) [below left  of=2] {};
   \node[external] (4) [left        of=3] {};
   \node[external] (5) [above right of=2] {};
   \node[external] (6) [right       of=5] {};
   \node[external] (7) [below right of=2] {};
   \node[external] (8) [right       of=7] {};
   \path (0) edge[out=   0, in=135, postaction={decorate, decoration={
                                                           markings,
                                                           mark=at position 0.45  with {\arrow[>=diamond, white] {>}; },
                                                           mark=at position 0.45  with {\arrow[>=open diamond]   {>}; },
                                                           mark=at position 0.999 with {\arrow[>=diamond, white] {>}; },
                                                           mark=at position 1.0   with {\arrow[>=open diamond]   {>}; } } }] (2)
         (2) edge[out=-135, in=  0] (4)
             edge[out=  45, in=180] (6)
             edge[out= -45, in=180] (8);
   \begin{pgfonlayer}{background}
    \node[draw=\borderColor,thick,rounded corners,fit = (1) (3) (5) (7),inner sep=0pt] {};
   \end{pgfonlayer}
  \end{tikzpicture}
 }
 \qquad
 \subfloat[$N_2$]{
  \begin{tikzpicture}[scale=\scale,transform shape,node distance=\nodeDist,semithick]
   \node[external]  (0)                    {};
   \node[external]  (1) [right       of=0] {};
   \node[internal]  (2) [below right of=1] {};
   \node[external]  (3) [below left  of=2] {};
   \node[external]  (4) [left        of=3] {};
   \node[external]  (5) [right       of=2] {};
   \node[internal]  (6) [right       of=5] {};
   \node[external]  (7) [above right of=6] {};
   \node[external]  (8) [right       of=7] {};
   \node[external]  (9) [below right of=6] {};
   \node[external] (10) [right       of=9] {};
   \path (0) edge[out=   0, in=135, postaction={decorate, decoration={
                                                           markings,
                                                           mark=at position 0.45  with {\arrow[>=diamond, white] {>}; },
                                                           mark=at position 0.45  with {\arrow[>=open diamond]   {>}; },
                                                           mark=at position 0.999 with {\arrow[>=diamond, white] {>}; },
                                                           mark=at position 1.0   with {\arrow[>=open diamond]   {>}; } } }] (2)
         (2) edge[out=-135, in=  0]  (4)
             edge[bend left,        postaction={decorate, decoration={
                                                           markings,
                                                           mark=at position 0.999 with {\arrow[>=diamond, white] {>}; },
                                                           mark=at position 1.0   with {\arrow[>=open diamond]   {>}; } } }] node[square] {} (6)
             edge[bend right] node[square] {} (6)
         (6) edge[out=  45, in=180]  (8)
             edge[out= -45, in=180] (10);
   \begin{pgfonlayer}{background}
    \node[draw=\borderColor,thick,rounded corners,fit = (1) (3) (7) (9),inner sep=0pt] {};
   \end{pgfonlayer}
  \end{tikzpicture}
 }
 \qquad
 \subfloat[$N_{s+1}$]{
  \begin{tikzpicture}[scale=\scale,transform shape,node distance=\nodeDist,semithick]
   \node[external]  (0)                     {};
   \node[external]  (1) [above left  of=0]  {};
   \node[external]  (2) [below left  of=0]  {};
   \node[external]  (3) [below left  of=1]  {};
   \node[external]  (4) [below left  of=3]  {};
   \node[external]  (5) [above left  of=3]  {};
   \node[external]  (6) [left        of=4]  {};
   \node[external]  (7) [left        of=5]  {};
   \node[external]  (8) [right       of=0]  {};
   \node[internal]  (9) [right       of=8]  {};
   \node[external] (10) [above right of=9]  {};
   \node[external] (11) [below right of=9]  {};
   \node[external] (12) [right       of=10] {};
   \node[external] (13) [right       of=11] {};
   \path let
          \p1 = (1),
          \p2 = (2)
         in
          node[external] at (\x1, \y1 / 2 + \y2 / 2) {\Huge $N_s$};
   \path let
          \p1 = (0)
         in
          node[external] (14) at (\x1 + 2, \y1 + 10) {};
   \path let
          \p1 = (0)
         in
          node[external] (15) at (\x1 + 2, \y1 - 10) {};
   \path let
          \p1 = (3)
         in
          node[external] (16) at (\x1 - 2, \y1 + 10) {};
   \path let
          \p1 = (3)
         in
          node[external] (17) at (\x1 - 2, \y1 - 10) {};
   \path (7) edge[out=   0, in=135, postaction={decorate, decoration={
                                                           markings,
                                                           mark=at position 0.48  with {\arrow[>=diamond, white] {>}; },
                                                           mark=at position 0.48  with {\arrow[>=open diamond]   {>}; },
                                                           mark=at position 0.999 with {\arrow[>=diamond, white] {>}; },
                                                           mark=at position 1.0   with {\arrow[>=open diamond]   {>}; } } }] (16)
        (17) edge[out=-135, in=  0]  (6)
        (14) edge[out=  35, in=135, postaction={decorate, decoration={
                                                           markings,
                                                           mark=at position 0.999 with {\arrow[>=diamond, white] {>}; },
                                                           mark=at position 1.0   with {\arrow[>=open diamond]   {>}; } } }] node[square, pos=0.45] {} (9)
         (9) edge[out=-135, in=-35] node[square, pos=0.55] {} (15)
             edge[out=  45, in=180] (12)
             edge[out= -45, in=180] (13);
   \begin{pgfonlayer}{background}
    \node[draw=\borderColor,thick,densely dashed,rounded corners,fit = (0) (1.south) (2.north) (3),inner sep=0pt] {};
    \node[draw=\borderColor,thick,rounded corners,fit = (4) (5) (10) (11),inner sep=0pt] {};
   \end{pgfonlayer}
  \end{tikzpicture}
 }
 \caption{Linear recursive construction used for interpolation in a nonstandard basis.}
 \label{fig:gadget:arity4:linear_interpolation}
\end{figure}
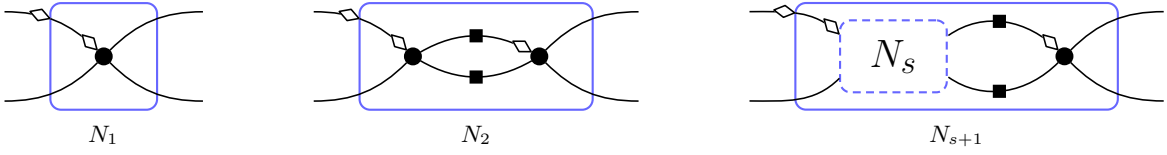

The next case uses the following technical lemma.
It is also applied more than once in Section~\ref{sec:mixing}.

\begin{lemma} \label{lem:PM-InvPM-g}
  Let $\hat{g}$ be the arity~$4$ signature whose matrix is
  \begin{equation}
    M_{\hat{g}} =
    \begin{bmatrix}
      0 & 0 & 0 & 0 \\
      0 & 0 & 1 & 0 \\
      0 & 1 & 0 & 0 \\
      0 & 0 & 0 & 0
    \end{bmatrix}.
    \label{eqn:signature-matrix:partial-crossover}
  \end{equation}
  Then \plholant{{\neq}_2}{[0,1,0,0,0],[0,0,0,1,0],\hat{g}} is \numP-hard.
\end{lemma}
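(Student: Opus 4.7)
The plan is to exhibit a planar $\mathcal{F}$-gate (with $\mathcal{F}=\{{\neq}_2,\exactone{4},\allbutone{4},\hat{g}\}$) whose signature $h$ is an arity-$4$ function with redundant signature matrix $M_h$ and nonsingular compressed matrix $\widetilde{M_h}$. Invoking Corollary~\ref{cor:prelim:nonsingular_compressed_hard_trans} (with trivial transformation) then gives \numP-hardness of $\PlHolant(h)$, and hence of $\plholant{{\neq}_2}{[0,1,0,0,0],[0,0,0,1,0],\hat{g}}$.

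The driving observation is that $\hat{g}$ is supported on exactly $\{0101,1010\}$, so, labelling its inputs $A,B,C,D$ counterclockwise, $\hat{g}$ enforces $A=C$, $B=D$, and $A\neq B$. Thus $\hat{g}$ behaves as a \emph{partial crossover}: it faithfully routes two complementary constant values along its two diagonals. Coupled with $\exactone{4}$ and $\allbutone{4}$ (which constrain their four inputs to Hamming weight $1$ and $3$, respectively) and with the bipartite $\neq_2$ edges, this is enough to emulate a true crossover in a planar environment, which is the standard route to realizing arity-$4$ signatures with nontrivial compressed signature matrix.

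The construction is in two steps, using the gadgets of Figure~\ref{subfig:gadget:partial-crossover} and Figure~\ref{subfig:gadget:symmetric_signature_matrix}. First I would use Figure~\ref{subfig:gadget:partial-crossover}: four copies of $\exactone{4}$ (with the diamond-marked first variable oriented as drawn) are placed at the corners of a square, consecutive corners are joined by bundles of parallel $\neq_2$-edges, and one $\hat{g}$-vertex is attached across each diagonal. Stratifying over the Hamming-weight-$1$ assignments forced by $\exactone{4}$ at each interior vertex, one verifies that the signature on the four dangling edges agrees with a genuine crossover on all inputs of Hamming weight $2$. Second, I would plug this partial crossover into Figure~\ref{subfig:gadget:symmetric_signature_matrix}, whose internal $\neq_2$-linkages identify the two middle external edges; the cyclic symmetry of the wiring then forces the signature matrix $M_h$ to be redundant.

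Finally, a direct (but routine) evaluation yields $\widetilde{M_h}$ explicitly and shows $\det(\widetilde{M_h})\neq 0$, completing the proof via Corollary~\ref{cor:prelim:nonsingular_compressed_hard_trans}. The main obstacle is in the first step: carefully tracking the orientation of the asymmetric first input of $\hat{g}$ (and of the asymmetric signatures $\exactone{4}$, $\allbutone{4}$) through the cyclic arrangement and the subsequent rotations of Figure~\ref{fig:rotate_asymmetric_signature} is where all the care is needed, so as to confirm that no cancellations occur and that the partial crossover genuinely realizes a crossover on the weight-$2$ inputs. Once this is established, the remaining matrix algebra is short.
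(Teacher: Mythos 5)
Your proposal omits the central technical step of the paper's proof, and the gap is not cosmetic: the direct gadget-and-rotate strategy you describe does not actually yield a signature to which Corollary~\ref{cor:prelim:nonsingular_compressed_hard_trans} applies. The paper builds a gadget whose signature $\hat{h}$ has matrix
\[
M_{\hat h}=\begin{bmatrix}0&0&0&1\\0&1&3&0\\0&1&1&0\\1&0&0&0\end{bmatrix},
\]
then runs it through Figure~\ref{subfig:gadget:symmetric_signature_matrix} to get $\hat r$ with
\[
M_{\hat r}=\begin{bmatrix}0&0&0&1\\0&6&4&0\\0&4&2&0\\1&0&0&0\end{bmatrix}.
\]
This matrix is \emph{not} redundant, and neither is any rotation of it: a $90^\circ$ rotation sends it to $\left[\begin{smallmatrix}0&0&0&6\\0&1&4&0\\0&4&1&0\\2&0&0&0\end{smallmatrix}\right]$, whose middle two rows are $(0,1,4,0)$ and $(0,4,1,0)$. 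So the assertion that ``the cyclic symmetry of the wiring forces the signature matrix $M_h$ to be redundant'' is false for this family of gadgets, and the subsequent step ``a direct (but routine) evaluation yields $\widetilde{M_h}$ explicitly and shows $\det(\widetilde{M_h})\neq 0$'' is unsupported --- there is no redundant matrix to compress.

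What the paper actually does at this point is \emph{polynomial interpolation}: it uses chains of $\hat r$-gadgets (linked by $[0,1,0]$ as in Figure~\ref{fig:gadget:arity4:linear_interpolation}) to set up a full-rank Vandermonde system, thereby realizing a new signature $\hat r'$ with matrix $\left[\begin{smallmatrix}0&0&0&1\\0&3&1&0\\0&1&1&0\\1&0&0&0\end{smallmatrix}\right]$; rotating \emph{this} signature gives the redundant matrix $\left[\begin{smallmatrix}0&0&0&3\\0&1&1&0\\0&1&1&0\\1&0&0&0\end{smallmatrix}\right]$, whose compressed form has nonzero determinant, and only then does Corollary~\ref{cor:prelim:nonsingular_compressed_hard_trans} close the argument. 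Interpolation is precisely the mechanism that breaks the asymmetry $(6,4,4,2)$ in the middle block into the symmetric pattern $(3,1,1,1)$. Your blind proposal has no analogue of this step; it jumps from the gadget to the determinant check. A secondary issue is that Figure~\ref{subfig:gadget:partial-crossover} belongs to Lemma~\ref{lem:01000b} and has a quite different topology (no diagonal-crossing $\hat g$-vertices are present), so the construction you sketch from it is not well specified; the relevant figures for this lemma are those of Figure~\ref{fig:two_gadgets_same_sig}.
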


\begin{proof}
  Consider the gadget in Figure~\ref{subfig:gadget:valid-bipartite-gadet}.
  We assign $[0,0,0,1,0]$ to the triangle vertices,
  $[0,1,0,0,0]$ to the circle vertices,
  $\hat{g}$ to the pentagon vertex,
  and $[0,1,0]$ to the square vertices.
Let $\hat{h}$ be the signature of this gadget.
  By adding two more disequality signatures and then grouping appropriately,
  it is clear that the gadget in Figure~\ref{subfig:gadget:bipartite-simplified} 
  has the same signature of the gadget in Figure~\ref{subfig:gadget:valid-bipartite-gadet},
  where the circle vertices are still assigned $[0,1,0,0,0]$,
  the square vertices are still assigned $[0,1,0]$,
  and the diamond vertex is assigned the quaternary equality signature.
  To compute the signature $\hat{h}$,
  first compute the signature $\hat{h}'$ of the inner gadget enclosed by the dashed line,
  which has signature matrix
  \[  M_{\hat{h}'} =
    \begin{bmatrix}
      3 & 0 & 0 & 1 \\
      0 & 1 & 0 & 0 \\
      0 & 0 & 1 & 0 \\
      1 & 0 & 0 & 1
    \end{bmatrix}.
    \text{\quad Then by Figure~\ref{fig:negate_second_fourth},
    the signature matrix of $\hat{h}$ is }
    M_{\hat{h}} =
    \begin{bmatrix}
      0 & 0 & 0 & 1 \\
      0 & 1 & 3 & 0 \\
      0 & 1 & 1 & 0 \\
      1 & 0 & 0 & 0
    \end{bmatrix}. \]
  One more gadget before we finish the proof using interpolation.
  Consider the gadget in Figure~\ref{subfig:gadget:symmetric_signature_matrix}.
  We assign $\hat{h}$ to the circle vertices and $[0,1,0]$ to the square vertices.
  The signature of the resulting gadget is $\hat{r}$ with signature matrix
$M_{\hat{r}}$
(see Figure~\ref{fig:rotate_asymmetric_signature} for the signature of a
 rotated copy of $\hat{h}$ that appears as the second circle vertex in 
Figure~\ref{subfig:gadget:symmetric_signature_matrix}),
where
  \[  M_{\hat{r}} =
    \begin{bmatrix}
      0 & 0 & 0 & 1 \\
      0 & 1 & 3 & 0 \\
      0 & 1 & 1 & 0 \\
      1 & 0 & 0 & 0
    \end{bmatrix}
    \left(
    \begin{bmatrix}
      0 & 1 \\
      1 & 0
    \end{bmatrix}
    \otimes
    \begin{bmatrix}
      0 & 1 \\
      1 & 0
    \end{bmatrix}
    \right)
    \begin{bmatrix}
      0 & 0 & 0 & 1 \\
      0 & 1 & 1 & 0 \\
      0 & 3 & 1 & 0 \\
      1 & 0 & 0 & 0
    \end{bmatrix}
    =
    \begin{bmatrix}
      0 & 0 & 0 & 1 \\
      0 & 6 & 4 & 0 \\
      0 & 4 & 2 & 0 \\
      1 & 0 & 0 & 0
    \end{bmatrix}. \]
  Consider an instance $\Omega$ of $\plholant{{\neq}_2}{\mathcal{F} \union \{\hat{r}'\}}$ with $\hat{r} \in \mathcal{F}$,
  where the signature matrix of $\hat{r}'$ is
  \[ M_{\hat{r}'} =
    \begin{bmatrix}
      0 & 0 & 0 & 1 \\
      0 & 3 & 1 & 0 \\
      0 & 1 & 1 & 0 \\
      1 & 0 & 0 & 0
    \end{bmatrix}. \]
  Suppose that $\hat{r}'$ appears $n$ times in $\Omega$.
  We construct from $\Omega$ a sequence of instances $\Omega_s$ of $\plholant{{\neq}_2}{\mathcal{F}}$ indexed by $s \ge 1$.
  We obtain $\Omega_s$ from $\Omega$ by replacing each occurrence of $\hat{r}'$
  with the gadget $N_s$ in Figure~\ref{fig:gadget:arity4:linear_interpolation}
  with $\hat{r}$ assigned to the circle vertices and $[0,1,0]$ assigned to the square vertices.
  In $\Omega_s$, the edge corresponding to the $i$th significant index bit of $N_s$ connects to
  the same location as the edge corresponding to the $i$th significant index bit of $\hat{r}'$ in $\Omega$.

  We can express the signature matrix of $N_s$ as
  \[ M_{N_s} = X (X M_{\hat{r}})^s = X P \diag\left(1, 4 + 2 \sqrt{3}, 4 - 2 \sqrt{3}, 1\right)^s P^{-1}, \]
  where
  \[ X =
    \begin{bmatrix}
      0 & 0 & 0 & 1 \\
      0 & 0 & 1 & 0 \\
      0 & 1 & 0 & 0 \\
      1 & 0 & 0 & 0
    \end{bmatrix}
    \qquad \text{and} \qquad
    P =
    \begin{bmatrix}
      1 &        0 &         0 & 0 \\
      0 &        1 &         1 & 0 \\
      0 & \sqrt{3} & -\sqrt{3} & 0 \\
      0 &        0 &         0 & 1
    \end{bmatrix}. \] 
  Since $M_{\hat{r}'} = X P \diag\left(1, 1 + \sqrt{3},  1 - \sqrt{3}, 1\right) P^{-1}$,
  we can view our construction of $\Omega_s$ 
  as first replacing $M_{\hat{r}'}$ with $X P \diag\left(1, 1 + \sqrt{3},  1 - \sqrt{3}, 1\right) P^{-1}$,
  which does not change the Holant value, 
  and then replacing the diagonal matrix with the diagonal matrix $\diag\left(1, 4 + 2 \sqrt{3}, 4 - 2 \sqrt{3}, 1\right)^s$.

 We stratify the assignments in $\Omega$ based on the assignments to the $n$ occurrences of the signature whose signature matrix is the diagonal matrix
 \begin{equation} \label{eqn:rPrime_jnf_signature_matrix}
  \begin{bmatrix}
   1 &            0 &            0 & 0 \\
   0 & 1 + \sqrt{3} &            0 & 0 \\
   0 &            0 & 1 - \sqrt{3} & 0 \\
   0 &            0 &            0 & 1
  \end{bmatrix}.
 \end{equation}
 We only need to consider the assignments that assign
 \begin{itemize}
  \item $i$ many times the bit patterns $0000$ or $1111$,
  \item $j$ many times the bit pattern  $0110$, and
  \item $k$ many times the bit pattern  $1001$,
 \end{itemize}
 since any other assignment contributes a factor of~$0$.
 Let $c_{ijk}$ be the sum over all such assignments of the products of evaluations of all signatures
 (including the signatures corresponding to the signature matrices $X$, $P$, and $P^{-1}$)
 in $\Omega$ except for signature corresponding to the signature matrix in~(\ref{eqn:rPrime_jnf_signature_matrix}).
 Then
 \[
  \Holant_\Omega = \sum_{i + j + k = n} \left(1 + \sqrt{3}\right)^j \left(1 - \sqrt{3}\right)^k c_{ijk}
 \]
 and the value of the Holant on $\Omega_s$, for $s \ge 1$, is
 \[
  \Holant_{\Omega_s}
  = \sum_{i + j + k = n} \left(\left(4 + 2 \sqrt{3}\right)^j \left(4 - 2 \sqrt{3}\right)^k\right)^s c_{ijk}
  = \sum_{i + j + k = n} \left(\left(4 + 2 \sqrt{3}\right)^{j-k} 4^k\right)^s c_{ijk}.
 \]
 We argue that this Vandermonde system has full rank,
 which is to say that $\left(4 + 2 \sqrt{3}\right)^{j-k} 4^k \ne \left(4 + 2 \sqrt{3}\right)^{j'-k'} 4^{k'}$ unless $(j,k) = (j',k')$.
 If $\left(4 + 2 \sqrt{3}\right)^{j-k} 4^k = \left(4 + 2 \sqrt{3}\right)^{j'-k'} 4^{k'}$,
 then we have $\left(4 + 2 \sqrt{3}\right)^{j-k - (j'-k')} 4^{k-k'} = 1$.
 Since any nonzero integer power of $4 + 2 \sqrt{3}$ is not rational,
 we must have $j-k = j'-k'$. And in this case, $4^{k-k'} = 1$, and hence
$k = k'$ and $j = j'$.
 
 Therefore,
 we can solve for the unknown $c_{ijk}$'s and obtain the value of $\Holant_\Omega$.
 Then after a counterclockwise rotation of $\hat{r}'$ (c.f.\ Figure~\ref{fig:rotate_asymmetric_signature}),
 we are done by Corollary~\ref{cor:prelim:nonsingular_compressed_hard_trans}.
\end{proof}

With Lemma~\ref{lem:PM-InvPM-g} at hand,
we continue to prove Lemma~\ref{lem:01000b}.

\begin{lemma} \label{lem:01000b}
 Let $b \in \mathbb{C}$.
 Suppose $f$ is a signature of the form $\trans{1}{1}{i}{-i}^{\otimes n}[0,1,0,\dotsc,0,b]$ with arity $n \ge 4$.
 If $b \neq 0$,
 then $\PlHolant(f)$ is \numP-hard.
\end{lemma}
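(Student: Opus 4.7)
I would induct on $n$, working throughout in the bipartite $Z$-basis, where the problem becomes $\plholant{\neq_2}{\hat f}$ with $\hat f = [0,1,0,\dots,0,b]$. For the base case $n=4$, my plan is to assign $\hat f$ to both circle vertices of the gadget in Figure~\ref{subfig:gadget:van-arity4}, with $n-2 = 2$ internal $\neq_2$-edges. A direct expansion using $\hat f_j = 0$ except for $\hat f_1 = 1,\hat f_4 = b$ shows that the resulting arity-4 signature is symmetric with truth table $[2,0,0,b,0]$, so its signature matrix is redundant, and a short calculation yields
\[
 \widetilde{M_h} \;=\; \begin{bmatrix} 2 & 0 & 0 \\ 0 & 0 & b \\ 0 & 2b & 0 \end{bmatrix},\qquad \det \widetilde{M_h} = -4b^2 \neq 0.
\]
Corollary~\ref{cor:prelim:nonsingular_compressed_hard_trans} then gives $\SHARPP$-hardness.

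For the inductive step $n \ge 5$, the same Figure~\ref{subfig:gadget:van-arity4} gadget now produces $h = b\cdot\allbutone{4} \in \mathscr M_4^-$, whose compressed matrix is singular, so the direct rank argument fails. The plan is instead to assemble enough auxiliary signatures from $\hat f$ to trigger Lemma~\ref{lem:PM-InvPM-g}. First, a self-loop via $\neq_2$ on $\hat f$ gives the degenerate tensor $2\cdot [1,0]^{\otimes(n-2)}$, and two copies of $\hat f$ joined by $n-1$ parallel $\neq_2$-edges (the gadget of Figure~\ref{subfig:gadget:van-binary}) give $2b\cdot[0,1]^{\otimes 2}$. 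Contracting these two tensor powers against one another through $\neq_2$-edges in a subtractive-Euclidean fashion, as in the proof of Lemma~\ref{lem:van:deg}, produces the unary signatures $[1,0]$ and $[0,1]$ in the $Z$-basis, which then act as pin-to-$1$ and pin-to-$0$ operators on the inputs of a fresh copy of $\hat f$.

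Pinning two inputs of $\hat f$ to zero produces $\exactone{n-2}\in\mathscr M_4^+$; peeling this signature by additional $[0,1]$-pins yields $\exactone{k}$ for every $3\le k\le n-2$, and in particular $\exactone{4}$ whenever $n\ge 6$. Together with $\allbutone{4}$ and $\hat f$ itself, this gives two of the three ingredients for Lemma~\ref{lem:PM-InvPM-g}. The last step, and what I expect to be the hardest, is to realize the partial-crossover signature $\hat g$ of equation~(\ref{eqn:signature-matrix:partial-crossover}) as the signature of a planar $\hat f$-gate; I would search for such a gate among small planar arrangements of $\hat f$ together with the already-constructed $\exactone{k}$ and $\allbutone{4}$, check that its support (after rotation if necessary) is exactly the two bit-patterns $0101$ and $1010$, and then invoke Lemma~\ref{lem:PM-InvPM-g}. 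As an alternative path avoiding the delicate construction of $\hat g$, I would try to build a non-degenerate binary signature $h$ in the $Z$-basis with $\hat h_0\neq 0$ and $\hat h_1\neq 0$ (so $h\notin\mathscr R^-_2$ and $h$ is not a multiple of $Z^{\otimes 2}[a,0,1]$), and then apply Lemma~\ref{lem:van:M4:bin} with $\allbutone{4}\in\mathscr M_4^-$. Finally, the borderline value $n=5$, where the $\mathscr M_4^+$-signature produced by the argument above has arity $3$ and falls below the hypotheses of Lemmas~\ref{lem:van:M4:bin} and~\ref{lem:PM-InvPM-g}, I would handle separately by directly exhibiting an arity-$4$ $f$-gate whose redundant compressed signature matrix is nonsingular, invoking Corollary~\ref{cor:prelim:nonsingular_compressed_hard_trans}.
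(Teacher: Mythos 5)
Your overall strategy matches the paper's: in the $Z$-basis realize $\exactone{4}$, $\allbutone{4}$, and the partial-crossover signature $\hat g$ of~\eqref{eqn:signature-matrix:partial-crossover}, then invoke Lemma~\ref{lem:PM-InvPM-g}. You correctly identify that the Figure~\ref{subfig:gadget:van-arity4} gadget yields $b\cdot\allbutone{4}$ for $n\ge5$, and that Figure~\ref{subfig:gadget:van-binary} gives $2b\cdot[0,1]^{\otimes 2}$; these, together with $[1,0]^{\otimes(n-2)}$ from a self-loop and pinning, give $\exactone{4}$ (for $n \ge 6$). However the proposal leaves the decisive step unproved: you say you would \emph{search} for a planar $\{\hat f,\neq_2\}$-gate realizing $\hat g$, whereas the paper supplies an explicit one (Figure~\ref{subfig:gadget:partial-crossover}, assigning $\hat f$ to the circles, $[1,0]^{\otimes 2}$ to the triangles, $\neq_2$ to the squares). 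Since $\hat g$ is precisely the nontrivial ingredient in this lemma, announcing a search is not a proof; a blind reader cannot assume such a gate exists.

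The fallback you offer also has a defect you do not notice. For odd $n$ both $\hat f=[0,1,0,\dots,0,b]$ (support at Hamming weights $1$ and $n$, both odd) and $\neq_2=[0,1,0]$ satisfy the odd-parity constraint, so \emph{every} $\{\hat f,\neq_2\}$-gate satisfies a parity constraint; in particular no constructible binary $\hat h$ can have $\hat h_0\neq 0$ and $\hat h_1\neq 0$ simultaneously. Thus the route through Lemma~\ref{lem:van:M4:bin} is blocked exactly when $n$ is odd, and you would still have to produce $\hat g$ there. Two smaller inaccuracies: the subtractive-Euclidean contraction of $[1,0]^{\otimes(n-2)}$ against $[0,1]^{\otimes 2}$ produces $[1,0]^{\otimes\gcd(n-2,2)}$ and $[0,1]^{\otimes\gcd(n-2,2)}$, which for even $n$ is only $[1,0]^{\otimes 2}$ and $[0,1]^{\otimes 2}$, not the unaries you claim (this happens not to harm the later steps, since $n-6$ is then even); and for $n=5$ the paper does not resort to an ad hoc nonsingular compressed matrix — it simply combines two copies of $\exactone{3}$ through $\neq_2$ to get $\exactone{4}$, a step you overlook before detouring.
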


\begin{remark}
For $n=3$, $Z^{\otimes 3} [0,1,0,b]$ is tractable, as
it is $\mathscr{M}$-transformable.
\end{remark}

\begin{proof}
 If $n = 4$,
 then we are done by Corollary~\ref{cor:prelim:nonsingular_compressed_hard_trans}.
 Thus,
 assume that $n \ge 5$.
 
 Under a holographic transformation by $Z = \trans{1}{1}{i}{-i}$,
 we have
 \begin{align*}
   \plholant{{=}_2}{f}
   &\equiv_T \plholant{[1,0,1] Z^{\otimes 2}}{(Z^{-1})^{\otimes n} f}\\
   &\equiv_T \plholant{[0,1,0]}{\hat{f}},
 \end{align*}
 where $\hat{f} = [0, 1, 0, \dotsc, 0, b]$.
 We show how to construct the following three signatures:
 $[0,0,0,1,0]$, $[0,1,0,0,0]$, and $\hat{g}$,
 where $\hat{g}$ is defined by \eqref{eqn:signature-matrix:partial-crossover}.
 Then we are done by Lemma~\ref{lem:PM-InvPM-g}.
 
 Consider the gadget in Figure~\ref{subfig:gadget:van-arity4}.
 We assign $\hat{f}$ to the circle vertices and $[0,1,0]$ to the square vertices.
 The signature of the resulting gadget is $[0,0,0,1,0]$ up to a nonzero factor of $b$.
 
 Taking a $[0,1,0]$ self loop on $[0,0,0,1,0]$ gives $[0,0,1] = [0,1]^{\otimes 2}$.
 We connect this back to $\hat{f}$ through $[0,1,0]$ until the arity of the resulting signature is either~$4$ or~$5$,
 depending on the parity of $n$.
 If $n$ is even,
 then we have $[0,1,0,0,0]$ as desired.
 Otherwise,
 $n$ is odd and we have $[0,1,0,0,0,b/0]$,
 where the last entry is $b$ if $n = 5$ and~$0$ if $n > 5$.
 Connection $[0,1]^{\otimes 2}$ through $[0,1,0]$ to $\hat{f}$ twice more gives $[0,1]$.
 We connect this through $[0,1,0]$ to $[0,1,0,0,0,b/0]$ to get $[0,1,0,0,0]$ as desired.
 
 Taking a $[0,1,0]$ self loop on $[0,1,0,0,0]$ gives $[1,0,0] = [1,0]^{\otimes 2}$.
 Now consider the gadget in Figure~\ref{subfig:gadget:partial-crossover}.
 We assign $\hat{f}$ to the circle vertices,
 $[1,0]^{\otimes 2}$ to the triangle vertices,
 and $[0,1,0]$ to the square vertices.
 Up to a factor of $b^2$,
 the signature of the resulting gadget is $\hat{g}$ with signature matrix 
$M_{\hat{g}}$ given in~(\ref{eqn:signature-matrix:partial-crossover}).
 To see this,
 first replace the two copies of the signatures $[1,0]^{\otimes 2}$ assigned to the triangle vertices with two copies of $[1,0]$ each.
 Then notice that $\hat{f}$ simplifies to a weighted equality signature when connected to $[1,0]$ through $[0,1,0]$.
\end{proof}

\subsection{Proof of the Single Signature Dichotomy}

Now we are ready to prove the dichotomy for a single signature.
Recall that $\mathscr{M}_1 \subset \mathscr{A}_1 \subset \mathscr{P}_1$ and $\mathscr{A}_2 = \mathscr{P}_2 \subset \mathscr{M}_2$.
Thus $f \in \mathscr{P}_1 \cup \mathscr{M}_2 \cup \mathscr{A}_3 \cup \mathscr{M}_3 \cup \mathscr{M}_4$ 
if and only if $f$ is $\mathscr{A}$-, $\mathscr{P}$-, or $\mathscr{M}$-transformable by
Lemma~\ref{lem:cha:affine},
Lemma~\ref{lem:cha:product}, or
Lemma~\ref{lem:cha:Mtrans}.

\begin{proof}[Proof of Theorem~\ref{thm:dic:single}]
 The proof is by induction on $n$.
 The base cases of $n=3$ and $n=4$ are proved in Theorem~\ref{thm:PlHolant:arity34}.
 Now assume $n \ge 5$.

 With the signature $f$,
 we form a self loop to get a signature $f'$ of arity at least~$3$.
 In general we use prime to denote the signature with a self loop.
 We consider separately whether or not $f'$ is degenerate.
 \begin{itemize}
  \item Suppose $f' = [a,b]^{\otimes(n-2)}$ is degenerate.
   Then there are three cases to consider.
   \begin{enumerate}
    \item If $a = b = 0,$ then $f'$ is the all zero signature.
     For $f$, this means $f_{k+2} = -f_k$ for $0 \le k \le n-2$,
     so $f \in \mathscr{P}_2$ by Lemma~\ref{lem:single:A2}, and therefore $\PlHolant(f)$ is tractable.
     
    \item If $a^2 + b^2 \ne 0$, then $f'$ is nonzero and $[a,b]$ is not a constant multiple of either $[1,i]$ or $[1,-i]$.
     We may normalize so that $a^2 + b^2 = 1$.
     Then the orthogonal transformation $\left[\begin{smallmatrix} a & b \\ -b & a \end{smallmatrix}\right]$ transforms the column vector $[a,b]$ to $[1,0]$.
     Let $\hat{f}$ be the transformed signature from $f$, 
     and $\widehat{f'} = [1,0]^{\otimes(n-2)}$ the transformed signature from $f'$.

     Since an orthogonal transformation keeps $=_2$ invariant, 
     this transformation commutes with the operation of taking a self loop, i.e., $\widehat{f'} = (\hat{f})'$.
     Here $(\hat{f})'$ is the function obtained from $\hat{f}$ by taking a self loop.
     As $(\hat{f})' = [1,0]^{\otimes(n-2)}$, 
     we have $\hat{f}_0 + \hat{f}_2 = 1$ and 
     for every integer $1 \le k \le n-2$, we have $\hat{f}_k = -\hat{f}_{k+2}$.
     With one or more self loops on $(\hat{f})'$, we eventually obtain either $[1,0]$ when $n$ is odd or $[1,0,0]$ when $n$ is even.
     In either case, we connect $[1,0]$ or $[1,0,0]$ to $\hat{f}$ 
     until we get an arity $4$ signature, which is $\hat{g} = [\hat{f}_0, \hat{f}_1, \hat{f}_2, -\hat{f}_1, -\hat{f}_2]$.
     This is possible because that the parity matches and the arity of $\hat{f}$ is at least $5$.
     We show that $\PlHolant(\hat{g})$ is $\SHARPP$-hard.
     To see this, 
     we first compute $\det(\widetilde{M_g}) = -2 (\hat{f}_0 + \hat{f}_2) (\hat{f}_1^2 + \hat{f}_2^2)=-2(\hat{f}_1^2 + \hat{f}_2^2)$, 
     since $\hat{f}_0 + \hat{f}_2 = 1$.
     Therefore if $\hat{f}_1^2 + \hat{f}_2^2 \ne 0$,
     $\PlHolant(\hat{g})$ is $\numP$-hard by Lemma~\ref{lem:arity4:nonsingular_compressed_hard}.
     Otherwise $\hat{f}_1^2 + \hat{f}_2^2 = 0$, and we assume $\hat{f}_2 = i \hat{f}_1$ since the other case is similar.
     Since $f$ is non-degenerate, $\hat{f}$ is non-degenerate, which implies $\hat{f}_2 \neq 0$.
     We can rewrite $\hat{g}$ as $[1,0]^{\otimes 4} -\hat{f}_2 [1,i]^{\otimes 4}$.
     Under the holographic transformation by $T = \left[\begin{smallmatrix} 1 & (-\hat{f}_2)^{1/4} \\ 0 & i (-\hat{f}_2)^{1/4} \end{smallmatrix}\right]$,
     we have
     \begin{align*}
      \plholant{{=}_2}{\hat{g}}
      &\equiv_T \plholant{[1,0,1] T^{\otimes 2}}{(T^{-1})^{\otimes 4} \hat{g}}\\
      &\equiv_T \plholant{\hat{h}}{{=}_4},
     \end{align*}
     where
     \[\hat{h} = [1,0,1] T^{\otimes 2} = [1,(-\hat{f}_2)^{1/4},0]\]
     and $\hat{g}$ is transformed by $T^{-1}$ into the arity 4 equality $=_4$, since
     \[
      T^{\otimes 4} \left( \begin{bmatrix} 1 \\ 0 \end{bmatrix}^{\otimes 4} + \begin{bmatrix} 0 \\ 1 \end{bmatrix}^{\otimes 4} \right)
      = \begin{bmatrix} 1 \\ 0 \end{bmatrix}^{\otimes 4} - \hat{f}_2 \begin{bmatrix} 1 \\ i \end{bmatrix}^{\otimes 4}
      = \hat{g}.
     \]
     By Theorem~\ref{thm:k-reg_homomorphism}, $\plholant{\hat{h}}{{=}_4}$ is $\SHARPP$-hard as $\hat{f}_2 \ne 0$.

    \item If $a^2 + b^2 = 0$ but $(a,b) \ne (0,0)$, then $[a,b]$ is a nonzero multiple of $[1, \pm i]$.
     Ignoring the constant multiple, we have $f' = [1,i]^{\otimes (n-2)}$ or $[1,-i]^{\otimes (n-2)}$.
     We consider the first case since the other case is similar.

     In the first case,
     the characteristic polynomial of the recurrence relation of $f'$ is $x-i$,
     so that of $f$ is $(x-i) (x^2 + 1) = (x-i)^2 (x+i)$.
     Hence there exist $a_0, a_1$, and $c$ such that
     \[
      f_k = (a_0 + a_1 k) i^k + c (-i)^k
     \]
     for every integer $0 \le k \le n$.
     Let $f^+$ and $f^-$ be two signatures of arity $n$
     such that $f^+_k=(a_0 + a_1 k) i^k$ and $f^-_k=c (-i)^k$ for every $0\le k\le n$.
     Hence $f_k=f^+_k+f^-_k$ and we write $f=f^++f^-$.
     If $a_1 = 0$, then $f'$ is the all zero signature, a contradiction.
     If $c=0$, then $f$ is vanishing, one of the tractable cases.
     Now we assume $a_1 c \neq 0$ and show that $\PlHolant(f)$ is $\SHARPP$-hard.
     Hence $\rd^+ (f^+) = 1$ and $\rd^- (f^-) = 0$.
     Under the holographic transformation $Z = \frac{1}{\sqrt{2}} \left[\begin{smallmatrix} 1 & 1 \\ i & -i \end{smallmatrix}\right]$,
     we have
     \begin{align*}
      \plholant{{=}_2}{f}
      &\equiv_T \plholant{[1,0,1] Z^{\otimes 2}}{(Z^{-1})^{\otimes n} f}\\
      &\equiv_T \plholant{[0,1,0]}{\hat{f}},
     \end{align*}
     where $\hat{f}$ takes the form $[\hat{f_0}, \hat{f_1}, 0, \dotsc, 0, c']$ with $c' = 2^{n/2} c \ne 0$ and $\hat{f_1} \neq 0$,
     since $\hat{f}$ is the $Z^{-1}$-transformation of the sum of $f^+$ and $f^-$,
     with $\rd^+ (f^+) = 1$ and $\rd^- (f^-) = 0$ respectively.
     On the other side, $(=_2) = [1,0,1]$ is transformed into $(\neq_2) = [0,1,0]$.
     Depending on whether $\hat{f_0}=0$ or not, 
     we apply Lemma~\ref{lem:01000b} or
     Lemma~\ref{lem:a1000b} 
     and $\PlHolant(f)$ is $\numP$-hard.
   \end{enumerate}
  \item Suppose $f'$ is non-degenerate.
    By inductive hypothesis, $\PlHolant(f)$ is \numP-hard,
    unless $f' \in \mathscr{P}_1 \cup \mathscr{M}_2 \cup \mathscr{A}_3 \cup \mathscr{M}_3 \cup \mathscr{M}_4\cup\mathscr{V}$.
    Note that $f'$ has arity $n-2 \ge 3$,
    and every signature in $\mathscr{M}_4$ of arity at least $3$ is also in $\mathscr{V}$.
   Hence the exceptional case is equivalent to $f' \in \mathscr{P}_1 \cup \mathscr{M}_2 \cup \mathscr{A}_3 \cup \mathscr{M}_3\cup\mathscr{V}$.
   In this case, we apply Lemma \ref{lem:selfloop} to $f'$ and $f$.
   Hence $\PlHolant(f)$ is \numP-hard, 
   unless $f\in \mathscr{P}_1 \cup \mathscr{M}_2 \cup \mathscr{A}_3 \cup \mathscr{M}_3\cup\mathscr{V}$.
   The exceptional cases imply that
   $f$ is $\mathscr{A}$- or $\mathscr{P}$- or $\mathscr{M}$-transformable or vanishing,
   and $\PlHolant(f)$ is tractable.
   \qedhere
 \end{itemize}
\end{proof}

\section{Mixing \texorpdfstring{$\mathscr{P}_2$}{P2} 
and \texorpdfstring{$\mathscr{M}_4$}{M4}---Equalities and 
Matchgates in the \texorpdfstring{$Z$}{Z} Basis} \label{sec:mixing}

Given a set $\mathcal{F}$ of symmetric signatures,
by Theorem~\ref{thm:dic:single},
$\PlHolant(\mathcal{F})$ is \numP-hard unless
every single non-degenerate signature $f$ of arity at least~$3$ in $\mathcal{F}$ is
in $\mathscr{P}_1 \cup \mathscr{M}_2 \cup \mathscr{A}_3 \cup \mathscr{M}_3 \cup \mathscr{M}_4
\cup \mathscr{V}$.
We have already proved that
the desired full dichotomy holds if $\mathcal{F}$ contains such an $f$ 
in $\mathscr{P}_1$, $\mathscr{A}_3$, $\mathscr{M}_2 \setminus \mathscr{P}_2$,
or $\mathscr{M}_3$ due to Corollary~\ref{cor:dichotomy:P1},
Corollary~\ref{cor:dichotomy:A3},
Lemma~\ref{lem:dichotomy:M2},
or Lemma~\ref{lem:dichotomy:M3},
respectively.

The remaining cases are when  all
non-degenerate signatures of arity at least~$3$ in $\mathcal{F}$ are
contained in $\mathscr{P}_2 \cup
\mathscr{M}_4 \cup \mathscr{V}$.
In this section,
we consider the mixing of $\mathscr{P}_2$ and $\mathscr{M}_4$.
For this,
we do a holographic transformation by $Z$.
Then the problem becomes $\plholant{{\neq}_2}{{=}_k,\exactone{d}}$ with various arities $k$ and $d$.
Recall that $\exactone{d}$ denotes the exact one function $[0,1,0,\dots,0]$ of arity $d$.
These are the signatures  for {\sc Perfect Matching} and they are the
basic components of \emph{Matchgates}.

\emph{A big surprise},
 against the putative form of a complexity classification
for planar counting problems, is that we found 
the complexity of 
 $\plholant{{\neq}_2}{{=}_k,\exactone{d}}$ 
depends on the values of $d$ and $k$, and the problem is
tractable for all large $k$.  This result has the consequence that,
for the first time since Kasteleyn's algorithm, we have discovered
some new \emph{primitive} tractable family  of  counting problems
on planar graphs.
These problems \emph{cannot} be captured by a holographic reduction
to  Kasteleyn's algorithm, or any other known algorithm.
Thus for planar problems the paradigm of holographic algorithms using matchgates
(i.e., being $\mathscr{M}$-transformable)
\emph{is not universal}.

Let $\mathcal{EO} = \{\exactone{d} \mid d \ge 3\}$.

\subsection{Hardness when \texorpdfstring{$k = 3$ or $4$}{k=3,4}}

We begin with some hardness results.

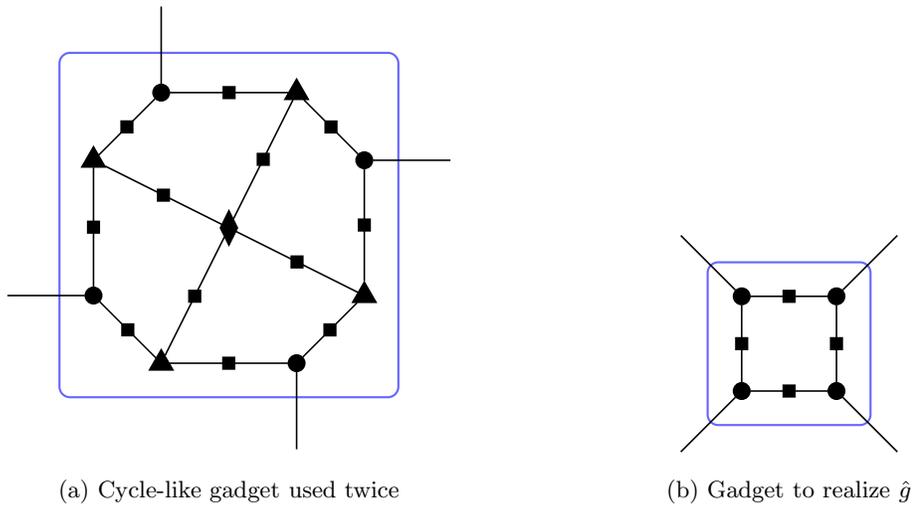
\begin{figure}[htpb]
 \centering
 \def\capWidth{6.5cm}
 \captionsetup[subfigure]{width=\capWidth}
 \subfloat[Cycle-like gadget used twice]{
 \begin{tikzpicture}[scale=\scale,transform shape,node distance=1.5*\nodeDist,semithick]
  \draw    (0,0)    node[diamond,fill,draw,aspect=0.5]   (0) {};
  \draw  (1.5,3)    node[triangle] (p1) {};
  \draw    (3,1.5)  node[internal] (e1) {};
  \draw    (3,-1.5) node[triangle] (p2) {};
  \draw  (1.5,-3)   node[internal] (e2) {};
  \draw (-1.5,-3)   node[triangle] (p3) {};
  \draw   (-3,-1.5) node[internal] (e3) {};
  \draw   (-3,1.5)  node[triangle] (p4) {};
  \draw (-1.5,3)    node[internal] (e4) {};
  
  \node[external, right of=e1] (ex1) {};
  \node[external, below of=e2] (ex2) {};
  \node[external, left  of=e3] (ex3) {};
  \node[external, above of=e4] (ex4) {};    
  
  \path (e1) edge                (ex1);
  \path (e2) edge                (ex2);
  \path (e3) edge                (ex3);
  \path (e4) edge                (ex4);    
  
  \path (p1) edge node[square] {} (e1)
             edge node[square] {} (e4);
  \path (p2) edge node[square] {} (e1)
             edge node[square] {} (e2);
  \path (p3) edge node[square] {} (e2)
             edge node[square] {} (e3);
  \path (p4) edge node[square] {} (e3)
             edge node[square] {} (e4);
  \path (0)  edge node[square] {} (p1)
             edge node[square] {} (p2)
             edge node[square] {} (p3)
             edge node[square] {} (p4);
  \begin{pgfonlayer}{background}
   \node[draw=\borderColor,thick,rounded corners,fit = (e1) (e2) (e3) (e4) (p1) (p2) (p3) (p4),inner sep=16pt] {};
  \end{pgfonlayer}               
 \end{tikzpicture}
 \label{subfig:Eq3-PM3:cycle}}
 \qquad
 \subfloat[Gadget to realize $\hat{g}$]{
  \makebox[\capWidth][c]{
   \begin{tikzpicture}[scale=\scale,transform shape,node distance=1.5*\nodeDist,semithick]
    \node[internal]               (e1) {};
    \node[internal, below of=e1]  (e2) {};
    \node[internal, left  of=e2]  (e3) {};
    \node[internal, above of=e3]  (e4) {};
    
    \node[external, above right of=e1] (ex1) {};
    \node[external, below right of=e2] (ex2) {};
    \node[external, below left  of=e3] (ex3) {};
    \node[external, above left  of=e4] (ex4) {};
    
    \path (e1) edge node[square] {} (e2)
               edge node[square] {} (e4);
    \path (e3) edge node[square] {} (e2)
               edge node[square] {} (e4);
    
    \path (e1) edge                (ex1);
    \path (e2) edge                (ex2);
    \path (e3) edge                (ex3);
    \path (e4) edge                (ex4);
    \begin{pgfonlayer}{background}
     \node[draw=\borderColor,thick,rounded corners,fit = (e1) (e2) (e3) (e4),inner sep=16pt] {};
    \end{pgfonlayer}
   \end{tikzpicture}
   \label{subfig:Eq3-PM3:g}
  }
 }
 \caption{Two gadgets used in the proof of Lemma~\ref{lem:Z:Eq3PM3}.}
 \label{fig:Eq3-PM3}
\end{figure}

\begin{figure}[htpb]
 \centering
 \begin{tikzpicture}[scale=\scale,transform shape,node distance=1.5*\nodeDist,semithick]
  \draw    (1,0)    node[internal] (o1) {};
  \draw   (-1,0)    node[internal] (o2) {};    
  \draw  (1.5,3)    node[triangle] (p1) {};
  \draw    (3,1.5)  node[internal] (e1) {};
  \draw    (3,-1.5) node[triangle] (p2) {};
  \draw  (1.5,-3)   node[internal] (e2) {};
  \draw (-1.5,-3)   node[triangle] (p3) {};
  \draw   (-3,-1.5) node[internal] (e3) {};
  \draw   (-3,1.5)  node[triangle] (p4) {};
  \draw (-1.5,3)    node[internal] (e4) {};
  
  \path (p1) edge node[square] {} (e1)
             edge node[square] {} (e4);
  \path (p2) edge node[square] {} (e1)
             edge node[square] {} (e2);
  \path (p3) edge node[square] {} (e2)
             edge node[square] {} (e3);
  \path (p4) edge node[square] {} (e3)
             edge node[square] {} (e4);
  \path (o1) edge node[square] {} (p1)
             edge node[square] {} (p2)
             edge node[square] {} (o2)
        (o2) edge node[square] {} (p3)
             edge node[square] {} (p4);
  
  \draw  (4.5,4.5)  node[internal] (ex-p1) {};
  \draw  (4.5,-4.5) node[internal] (ex-p2) {};
  \draw (-4.5,-4.5) node[internal] (ex-p3) {};
  \draw (-4.5,4.5)  node[internal] (ex-p4) {};
  
  \path (e1) edge node[square] {} (ex-p1);
  \path (e2) edge node[square] {} (ex-p2);
  \path (e3) edge node[square] {} (ex-p3);
  \path (e4) edge node[square] {} (ex-p4);    
  
  \draw  (4.5,0)    node[triangle] (ex-e1) {};
  \draw    (0,-4.5) node[triangle] (ex-e2) {};
  \draw (-4.5,0)    node[triangle] (ex-e3) {};
  \draw     (0,4.5) node[triangle] (ex-e4) {};
  
  \path (ex-p1) edge node[square] {} (ex-e1)
                edge node[square] {} (ex-e4);
  \path (ex-p2) edge node[square] {} (ex-e2)
                edge node[square] {} (ex-e1);
  \path (ex-p3) edge node[square] {} (ex-e3)
                edge node[square] {} (ex-e2);
  \path (ex-p4) edge node[square] {} (ex-e4)
                edge node[square] {} (ex-e3);    
  
  \draw  (6,0)  node[external] (ex1) {};
  \draw  (0,-6) node[external] (ex2) {};
  \draw (-6,0)  node[external] (ex3) {};
  \draw  (0,6)  node[external] (ex4) {};
  
  \path (ex-e1) edge (ex1);
  \path (ex-e2) edge (ex2);
  \path (ex-e3) edge (ex3);
  \path (ex-e4) edge (ex4);
  \begin{pgfonlayer}{background}
   \node[draw=\borderColor,thick,rounded corners,fit = (ex-p1) (ex-p2) (ex-p3) (ex-p4),inner sep=14pt] {};
  \end{pgfonlayer}               
 \end{tikzpicture}
 \caption{The whole gadget to realize $[0,0,0,1,0]$.}
 \label{fig:Eq3-PM3:InvPM}
\end{figure}
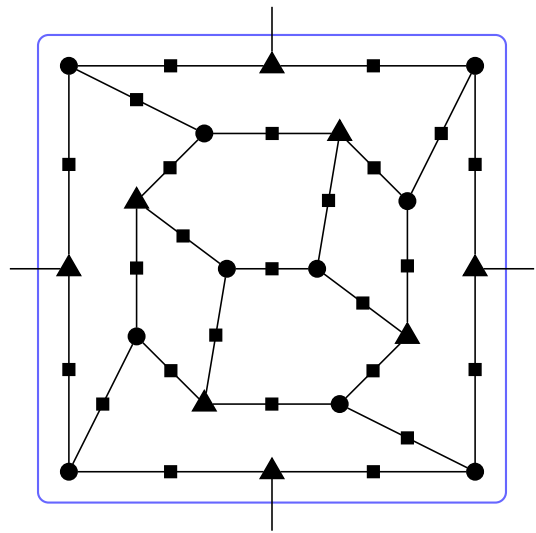

\begin{lemma} \label{lem:Z:Eq3PM3}
 $\plholant{{\neq}_2}{{=}_3,[0,1,0,0]}$ is \numP-hard.
\end{lemma}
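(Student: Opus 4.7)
The plan is to reduce from Lemma~\ref{lem:PM-InvPM-g}, which already establishes that $\plholant{{\neq}_2}{\exactone{4}, \allbutone{4}, \hat{g}}$ is \numP-hard, where $\hat{g}$ is the arity-$4$ signature whose signature matrix is given in~\eqref{eqn:signature-matrix:partial-crossover}. Thus it suffices to realize each of $\exactone{4}=[0,1,0,0,0]$, $\allbutone{4}=[0,0,0,1,0]$, and $\hat{g}$ as planar $\mathcal{F}$-gates over the right-hand-side signature set $\{{=}_3,\exactone{3}\}$ connected by $\neq_2$ on the left.

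First I would build $\exactone{4}$ from two copies of $\exactone{3}$ joined through a single $\neq_2$ edge. A brief case analysis on the value of the connecting edge shows that the resulting gadget is nonzero exactly on external assignments of Hamming weight one, yielding $\exactone{4}$ up to a nonzero scalar. Second, I would realize $\hat{g}$ directly from the small gadget of Figure~\ref{subfig:Eq3-PM3:g}: four copies of $\exactone{3}$ arranged so that each pair of opposite vertices is joined through $\neq_2$, whose signature matrix, computed as a product of the constituent $2$-by-$2$ blocks, matches~\eqref{eqn:signature-matrix:partial-crossover}.

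Third, and most delicately, I would realize $\allbutone{4}$ using the nested construction of Figure~\ref{fig:Eq3-PM3:InvPM}. The inner ingredient is the cycle-like gadget of Figure~\ref{subfig:Eq3-PM3:cycle}, in which $=_3$ vertices (triangles) alternate with $\exactone{3}$ vertices (circles) around a $4$-cycle joined through $\neq_2$ edges, with a central diamond vertex tying them together. Because each $=_3$ forces its three incident edges to share a common value and each $\exactone{3}$ forces its three incident edges to have Hamming weight exactly one, the surviving internal assignments are highly constrained; computing the induced signature on the four dangling edges gives a specific arity-$4$ function that can then be fed (via $\neq_2$ connections) into a second copy of the same cycle to produce $\allbutone{4}$ up to a nonzero scalar. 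Once all three signatures are in hand, Lemma~\ref{lem:PM-InvPM-g} immediately yields the hardness.

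The main obstacle is the verification for $\allbutone{4}$: one must simultaneously ensure planarity (which rules out the most natural non-planar reductions used in previous Holant dichotomies), kill off the all-zero and all-one external assignments through the combined action of $=_3$ and $\exactone{3}$, and arrange that precisely the four weight-$3$ external configurations survive with equal nonzero weight. The two-level nesting in Figure~\ref{fig:Eq3-PM3:InvPM} is what makes this balance work, and writing out the stratified assignment count over the internal edges of both cycle copies is the step that requires real care.
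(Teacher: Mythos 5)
Your overall strategy matches the paper's: reduce from Lemma~\ref{lem:PM-InvPM-g} by building $\exactone{4}$, $\allbutone{4}$, and $\hat{g}$ as planar gadgets over $\{{=}_3, \exactone{3}\}$. Your construction of $\exactone{4}$ (two copies of $\exactone{3}$ joined through one $\neq_2$) is correct and is exactly what the paper does. However, the other two constructions as you state them are wrong or fatally vague.

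For $\hat{g}$, you propose putting $\exactone{3}$ on the four cycle vertices of Figure~\ref{subfig:Eq3-PM3:g}. This does not give $\hat{g}$. If you write out the eight half-edges around the $4$-cycle alternately incident to $\exactone{3}$ vertices and $\neq_2$ edges, the only satisfying internal assignments force all four dangling edges to $0$ (two such assignments, the two orientations of the cycle), so the gadget realizes $2\,[1,0]^{\otimes 4}$, a degenerate \textsc{Gen-Eq}, not the partial crossover with support $\{0101,1010\}$. The correct assignment is $=_3$ on the circle vertices: then each $=_3$ forces its dangling edge to agree with both of its cycle half-edges, the $\neq_2$'s force strict alternation of dangling values around the $4$-cycle, and precisely the two alternating patterns $0101$ and $1010$ survive, which is $\hat{g}$.

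For $\allbutone{4}$, you describe feeding the first cycle gadget's output into ``a second copy of the same cycle.'' If taken literally (same assignment of $=_3$ to triangles and $\exactone{3}$ to circles), this fails: the intermediate signature $f$ has support $\{0011,0110,1100,1001\}$, and tracing any such support vector through a second identical cycle yields a contradiction at one of the $\exactone{3}$ circles (it ends up seeing two $1$'s on its cycle half-edges, so no dangling value is consistent), and the gadget is identically zero. The crucial step in the paper is that the second pass \emph{swaps} the roles: $\exactone{3}$ on the triangles and $=_3$ on the circles. Under that swap, each support vector of $f$ forces three of the dangling edges to $1$ and one to $0$, producing exactly $\allbutone{4}$. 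Without noting this swap, your proof has a real gap.
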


\begin{proof}
 By connecting two copies of $[0,1,0,0]$ together via $\neq_2$,
 we have $[0,1,0,0,0]$ on the right.
 Consider the gadget in Figure~\ref{subfig:Eq3-PM3:cycle}.
 We assign $=_3$ to the triangle vertices,
 $[0,1,0,0]$ to the circle vertices,
 $\neq_2$ to the square vertices,
 and $[0,1,0,0,0]$ on the diamond vertex in the middle.
 Let $f$ be the signature of this gadget.

 We claim that the support of $f$ is $\{0011,0110,1100,1001\}$.
 To see this,
 notice that $[0,1,0,0,0]$ in the middle must match exactly one of the half edges,
 which forces the corresponding equality signature to take the value~$0$ and all other equality signatures to take value~$1$.
 The two $[0,1,0,0]$'s adjacent to the equality assigned~$0$ must have~$0$ going out,
 and the other two $[0,1,0,0]$'s have~$1$ going out.

  Now we consider the gadget in Figure~\ref{subfig:Eq3-PM3:cycle} again.
  This time we place $[0,1,0,0]$ on each triangle,
  $=_3$ on each circle,
  $f$ on the middle diamond,
  and again $\neq_2$ on each square.
  Now notice that each support of $f$ makes two $[0,1,0,0]$'s
  that are cyclically adjacent on the outer cycle to become $[0,1,0]$
  and the other two $[1,0,0]$.
  It is easy to see that the support of the resulting signature is $\{0111,1011,1101,1110\}$.
  Therefore it is the reversed \exactone{4} signature $[0,0,0,1,0]$ (namely \allbutone{4}).
  The whole gadget is illustrated in Figure~\ref{fig:Eq3-PM3:InvPM},
  where each circle is assigned $[0,1,0,0]$,
  triangle $=_3$, and square $\neq_2$.
  
  Finally, we build the gadget in Figure~\ref{subfig:Eq3-PM3:g}.
  We place $=_3$ on each circle and $\neq_2$ on each square.
  It is easy to see that there are only two support vectors of the resulting signature,
  which are $0101$ and $1010$.
  Recall the definition~\eqref{eqn:signature-matrix:partial-crossover} of the partial crossover $\hat{g}$.
  This gadget realizes exactly $\hat{g}$.

  By Lemma~\ref{lem:PM-InvPM-g}, 
  $\plholant{{\neq}_2}{[0,1,0,0,0],[0,0,0,1,0],\hat{g}}$ is \numP-hard.
  We have constructed $[0,1,0,0,0]$, $[0,0,0,1,0]$, and $\hat{g}$ on the right side.
  Therefore $\plholant{{\neq}_2}{{=}_3,[0,1,0,0]}$ is \numP-hard.
\end{proof}

\begin{figure}[t]
 \centering
 \subfloat[Step one: Degree~$4$ vertex example]{
  \begin{tikzpicture}[scale=\scale,transform shape,node distance=\nodeDist,semithick]
   \node[internal]  (0)             {};
   \node[external]  (1) [left of=0] {};
   \node[external]  (2) [above of=0] {};
   \node[external]  (3) [below of=0] {};
   \node[external]  (4) [right of=0] {};
   \node[external]  (5) [right of=4] {};
   \node[external]  (6) [right of=5] {};
   \node[external]  (7) [right of=6] {};
   \node[internal]  (8) [right of=7] {};
   \node[internal]  (9) [above right of=8] {};
   \node[internal] (10) [below right of=8] {};
   \node[external] (11) [above of=9] {};
   \node[external] (12) [below of=10] {};
   \node[internal] (13) [below right of=9] {};
   \node[external] (14) [right of=13] {};
   \path (0) edge (1)
             edge (2)
             edge (3)
             edge (4)
         (5.west) edge[->,very thick] (6.east)
         (7) edge (8)
         (8) edge (9)
             edge (10)
         (9) edge (11)
             edge (13)
        (10) edge (12)
             edge (13)
        (13) edge (14);
   \begin{pgfonlayer}{background}
    \node[draw=\borderColor,thick,rounded corners,fit = (0),inner sep=8pt] {};
    \node[draw=\borderColor,thick,rounded corners,fit = (8) (9) (10) (13),inner sep=8pt] {};
   \end{pgfonlayer}
  \end{tikzpicture}
  \label{subfig:CSP_to_4ary_Hol:step1}
 }
 \qquad
 \subfloat[Step two: Contract edges]{
  \begin{tikzpicture}[scale=\scale,transform shape,node distance=\nodeDist,semithick]
   \node[internal]  (0)             {};
   \node[external]  (1) [left of=0] {};
   \node[external]  (2) [right of=0] {};
   \node[triangle]  (3) [below of=0] {};
   \node[internal]  (4) [below of=3] {};
   \node[external]  (5) [left of=4] {};
   \node[external]  (6) [right of=4] {};
   \node[external]  (7) [right of=2] {};
   \node[external]  (8) [below of=7] {};
   \node[external]  (9) [right of=8] {};
   \node[external] (10) [above of=9] {};
   \node[external] (11) [right of=10] {};
   \node[external] (12) [right of=11] {};
   \node[external] (13) [right of=12] {};
   \node[square]   (14) [below of=12] {};
   \node[external] (15) [below of=14] {};
   \node[external] (16) [left of=15] {};
   \node[external] (17) [right of=15] {};
   \path (0) edge[postaction={decorate, decoration={
                                        markings,
                                        mark=at position 0.78 with {\arrow[>=diamond,white] {>}; },
                                        mark=at position 0.78 with {\arrow[>=open diamond]  {>}; } } }] (1)
             edge (2)
             edge (3)
         (3) edge (4)
         (4) edge (5)
             edge (6)
         (8.west) edge[->,very thick] (9.east)
        (14) edge[postaction={decorate, decoration={
                                        markings,
                                        mark=at position 0.6 with {\arrow[>=diamond,white] {>}; },
                                        mark=at position 0.6 with {\arrow[>=open diamond]  {>}; } } }] (11)
             edge (13)
             edge (16)
             edge (17);
   \begin{pgfonlayer}{background}
    \node[draw=\borderColor,thick,rounded corners,fit = (0) (4),inner sep=8pt] {};
    \node[draw=\borderColor,thick,rounded corners,fit = (14),inner sep=12pt] {};
   \end{pgfonlayer}
  \end{tikzpicture}
  \label{subfig:CSP_to_4ary_Hol:step2}
 }
 \caption{A reduction from $\plholant{\EQ}{h}$ to $\PlHolant(g)$ for any binary signature $h$ and a quaternary signature $g$ that depends on $h$.
 The circle vertices are assigned $=_4$ or $=_3$ respectively,
 the triangle vertex is assigned $h$,
 and the square vertex is assigned the signature of the gadget to its left.}
 \label{subfig:CSP_to_4ary_Hol}
\end{figure}
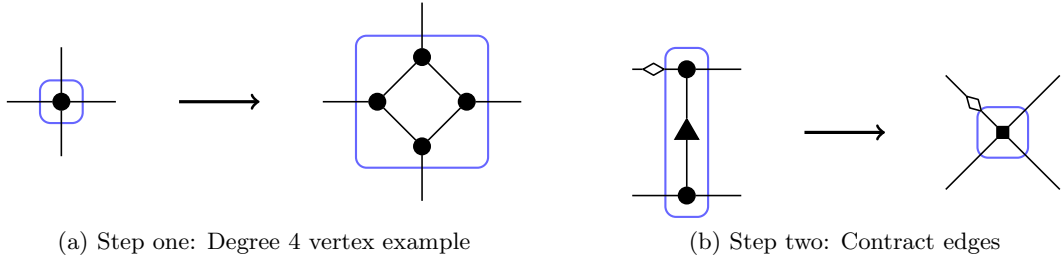

For $k = 4$,
we need the following lemma.

\begin{lemma} \label{lem:arity4:2-spin}
 Let $g$ be the arity~$4$ signature whose matrix is
 \begin{align*}
  M_{g} =
  \begin{bmatrix}
   2 & 0 & 0 & 0 \\
   0 & 1 & 0 & 0 \\
   0 & 0 & 1 & 0 \\
   0 & 0 & 0 & 1
  \end{bmatrix}.
 \end{align*}
 Then $\PlHolant(g)$ is \numP-hard.
\end{lemma}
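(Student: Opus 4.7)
The plan is to identify the combinatorial structure of $g$ and then reduce from the graph-homomorphism problem shown hard by Theorem~\ref{thm:k-reg_homomorphism}. The four nonzero entries of $M_g$ correspond to the inputs $\{0000, 0110, 1001, 1111\}$ when read as $(A, B, C, D)$ counterclockwise; these are precisely the patterns satisfying $A = D$ and $B = C$. Reading off the values gives
\[
 g(A, B, C, D) = [A = D]\,[B = C]\,w(A, B),
\]
where $w = [2, 1, 1]$ is the symmetric binary signature with $w(0, 0) = 2$ and $w$ equal to $1$ elsewhere. In particular, $g$ is precisely the signature realized by the gadget of Figure~\ref{subfig:CSP_to_4ary_Hol:step2} when $h = [2, 1, 1]$, which computes $[A = D][B = C]\,h(A, B)$ with external edges labeled counterclockwise.

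Using this observation, I would apply the two-step construction of Figure~\ref{subfig:CSP_to_4ary_Hol} to show $\plholant{[2, 1, 1]}{{=}_3} \le_T \PlHolant(g)$. Given a planar bipartite instance of $\plholant{[2, 1, 1]}{{=}_3}$ (equivalently, a planar $3$-regular multigraph with $=_3$ at each vertex and $[2, 1, 1]$ on each edge), I first replace each $=_3$-vertex by a triangle of three $=_3$-vertices, the arity-$3$ analogue of the gadget in Figure~\ref{subfig:CSP_to_4ary_Hol:step1}. The internal equalities around the triangle force the three external edges to carry a common value, so the triangle indeed realizes $=_3$, and after this step every $=_3$-vertex has exactly one $[2, 1, 1]$-neighbor and two internal $=_3$-neighbors. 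Then, for each $(=_3, [2, 1, 1], =_3)$ triple, I contract it to a single $g$-vertex using Figure~\ref{subfig:CSP_to_4ary_Hol:step2}. The result is a planar $4$-regular multigraph of $g$-vertices with the same Holant value as the original. Planarity is preserved because both replacements are local, and the counterclockwise ABCD ordering at each $g$-vertex matches the $(A = D, B = C)$ pairing since positions $1, 4$ and positions $2, 3$ are each adjacent in a cyclic order of four.

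Finally, I would invoke Theorem~\ref{thm:k-reg_homomorphism} with $S = \{3\}$, $d = 3$, and $(f_0, f_1, f_2) = (2, 1, 1)$. All five exceptional conditions fail: $f_0 f_2 = 2 \ne 1 = f_1^2$; none of $f_0, f_1, f_2$ is zero; $f_0 f_2 = 2 \ne -1 = -f_1^2$; and $f_0^3 = 8 \ne 1 = f_2^3$. Hence $\plholant{[2, 1, 1]}{{=}_3}$ is \numP-hard, so by the reduction $\PlHolant(g)$ is \numP-hard as well. The main thing to verify carefully is the planarity and cyclic-ordering bookkeeping in the gadget construction, which is routine once the \emph{two-pair equality with a $[2, 1, 1]$ weight} structure of $g$ is identified.
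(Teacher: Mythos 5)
Your proof is correct and essentially matches the paper's: both identify $g(A,B,C,D)=[A{=}D][B{=}C]\,w(A,B)$ with $w=[2,1,1]$, reduce via the medial-graph gadget of Figure~\ref{subfig:CSP_to_4ary_Hol}, and finish with Theorem~\ref{thm:k-reg_homomorphism}. The only cosmetic difference is that you reduce from the $3$-regular problem $\plholant{[2,1,1]}{{=}_3}$ (so step~1 is just the triangle replacement), whereas the paper reduces from $\PlCSP([2,1,1])\equiv\plholant{\mathcal{EQ}}{[2,1,1]}$ with equalities of all arities; both choices satisfy the hypotheses of Theorem~\ref{thm:k-reg_homomorphism}.
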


\begin{proof}
 Let $h = [2,1,1]$.
 We show that $\PlCSP(h) \le_T \PlHolant(g)$ in two steps.
 In each step,
 we begin with a signature grid and end with a new signature grid such that the Holants of both signature grids are the same.
 Then we are done by Theorem~\ref{thm:PlCSP}.
 Or more explicitly,
 since $\PlCSP(h) \equiv \plholant{\EQ}{h}$ by~\eqref{eqn:prelim:PlCSPd_equiv_Holant},
 we are done by Theorem~\ref{thm:k-reg_homomorphism}.
 
 For step one, let $G = (U,V,E)$ be an instance of $\plholant{\EQ}{h}$.
 Fix an embedding of $G$ in the plane.
 This defines a cyclic ordering of the edges incident to each vertex.
 Consider a vertex $u \in U$ of degree $k$.
 It is assigned the signature $=_k$.
 We decompose $u$ into $k$ vertices.
 Then we connect the $k$ edges originally incident to $u$ to these $k$ new vertices so that each vertex is incident to exactly one edge.
 We also connect these $k$ new vertices in a cycle according to
the  cyclic ordering induced on them by their incident edges.
 Each of these vertices has degree~$3$,
 and we assign them $=_3$.
 Clearly the Holant value is unchanged.
 This completes step one.
 An example of this step applied to a vertex of degree~$4$ is given in Figure~\ref{subfig:CSP_to_4ary_Hol:step1}.
 The resulting graph has the following properties:
 (1) it is planar;
 (2) every vertex is either degree~$2$ (in $V$ and assigned $h$) or degree~$3$ (newly created and assigned $=_3$);
 (3) each degree $2$ vertex is connected to two degree~$3$ vertices; and
 (4) each degree $3$ vertex is connected to one degree~$2$ vertex and two other degree~$3$ vertices.
 
 Now step two.
 For every $v \in V$, $v$ has degree~$2$.
 We contract the two edges incident to $v$,
 or equivalently, we replace the two circle vertices and one triangle vertex boxed in Figure~\ref{subfig:CSP_to_4ary_Hol:step2}
 with a single (square) vertex of degree~$4$.
 The resulting graph $G' = (V',E')$ is planar and $4$-regular.
 
 Next we determine what is the signature on $v'\in V'$ after this contraction.
 Clearly the two inputs to each original circle have to be the same.
 Therefore its support is $0000,0110,1001,1111$,
 listed starting from the diamond and going counterclockwise.
 Moreover, due to the triangle assigned $h$ in the middle, the weight on $0000$ is $2$,
 and every other weight is $1$.
 Hence it is exactly the signature $g$,
 with the diamond in Figure~\ref{subfig:CSP_to_4ary_Hol:step2} marking the first input bit.
 This finishes the proof.
\end{proof}

\begin{figure}[t]
 \centering
 \def\medialNodeDist{2.5cm}
 \tikzstyle{open}   = [draw, black, fill=white, shape=circle]
 \tikzstyle{closed} = [draw,        fill,       shape=circle]
 \subfloat[]{
  \begin{tikzpicture}[scale=\scale,transform shape,node distance=\medialNodeDist,semithick]
   \node[closed] (0)              {};
   \node[closed] (1) [right of=0] {};
   \node[closed] (2) [above of=0] {};
   \node[closed] (3) [above of=1] {};
   \node[closed] (4) [above of=2] {};
   \path (0) edge[out=-45, in=-135]               node[external] (m0) {} (1)
             edge[out= 45, in= 135]               node[external] (m1) {} (1)
             edge                                 node[external] (m2) {} (2)
         (1) edge                                 node[external] (m3) {} (3)
         (2) edge                                 node[external] (m4) {} (3)
             edge                                 node[external] (m5) {} (4)
         (3) edge[out=125, in=  55, looseness=30] node[external] (m6) {} (3);
   \path (m0) edge[white, densely dashed, out= 135, in=-135]                (m1)
              edge[white, densely dashed, out=  45, in= -45]                (m1)
              edge[white, densely dashed, out=-145, in=-135, looseness=1.7] (m2)
              edge[white, densely dashed, out= -35, in= -45, looseness=1.7] (m3)
         (m1) edge[white, densely dashed]                                   (m2)
              edge[white, densely dashed]                                   (m3)
         (m2) edge[white, densely dashed]                                   (m4)
              edge[white, densely dashed, out= 135, in=-135]                (m5)
         (m3) edge[white, densely dashed]                                   (m4)
              edge[white, densely dashed, out=  45, in=  15]                (m6)
         (m4) edge[white, densely dashed]                                   (m5)
              edge[white, densely dashed, out=  90, in= 165]                (m6)
         (m5) edge[white, densely dashed, out= 125, in=  55, looseness=30]  (m5)
         (m6) edge[white, densely dashed, out=-125, in= -55, looseness=15]  (m6);
  \end{tikzpicture}
  \label{subfig:planar_graph}
 }
 \qquad
 \qquad
 \subfloat[]{
  \begin{tikzpicture}[scale=\scale,transform shape,node distance=\medialNodeDist,semithick]
   \node[closed] (0)              {};
   \node[closed] (1) [right of=0] {};
   \node[closed] (2) [above of=0] {};
   \node[closed] (3) [above of=1] {};
   \node[closed] (4) [above of=2] {};
   \path (0) edge[out=-45, in=-135]               node[open] (m0) {} (1)
             edge[out= 45, in= 135]               node[open] (m1) {} (1)
             edge                                 node[open] (m2) {} (2)
         (1) edge                                 node[open] (m3) {} (3)
         (2) edge                                 node[open] (m4) {} (3)
             edge                                 node[open] (m5) {} (4)
         (3) edge[out=125, in=  55, looseness=30] node[open] (m6) {} (3);
   \path (m0) edge[densely dashed, out= 135, in=-135]                (m1)
              edge[densely dashed, out=  45, in= -45]                (m1)
              edge[densely dashed, out=-145, in=-135, looseness=1.7] (m2)
              edge[densely dashed, out= -35, in= -45, looseness=1.7] (m3)
         (m1) edge[densely dashed]                                   (m2)
              edge[densely dashed]                                   (m3)
         (m2) edge[densely dashed]                                   (m4)
              edge[densely dashed, out= 135, in=-135]                (m5)
         (m3) edge[densely dashed]                                   (m4)
              edge[densely dashed, out=  45, in=  15]                (m6)
         (m4) edge[densely dashed]                                   (m5)
              edge[densely dashed, out=  90, in= 165]                (m6)
         (m5) edge[densely dashed, out= 125, in=  55, looseness=30]  (m5)
         (m6) edge[densely dashed, out=-125, in= -55, looseness=15]  (m6);
  \end{tikzpicture}
  \label{subfig:superimposed}
 }
 \qquad
 \qquad
 \subfloat[]{
  \begin{tikzpicture}[scale=\scale,transform shape,node distance=\medialNodeDist,semithick]
   \node[external] (0)              {};
   \node[external] (1) [right of=0] {};
   \node[external] (2) [above of=0] {};
   \node[external] (3) [above of=1] {};
   \node[external] (4) [above of=2] {};
   \path (0) edge[white, out=-45, in=-135]               node[open] (m0) {} (1)
             edge[white, out= 45, in= 135]               node[open] (m1) {} (1)
             edge[white]                                 node[open] (m2) {} (2)
         (1) edge[white]                                 node[open] (m3) {} (3)
         (2) edge[white]                                 node[open] (m4) {} (3)
             edge[white]                                 node[open] (m5) {} (4)
         (3) edge[white, out=125, in=  55, looseness=30] node[open] (m6) {} (3);
   \path (m0) edge[densely dashed, out= 135, in=-135]                (m1)
              edge[densely dashed, out=  45, in= -45]                (m1)
              edge[densely dashed, out=-145, in=-135, looseness=1.7] (m2)
              edge[densely dashed, out= -35, in= -45, looseness=1.7] (m3)
         (m1) edge[densely dashed]                                   (m2)
              edge[densely dashed]                                   (m3)
         (m2) edge[densely dashed]                                   (m4)
              edge[densely dashed, out= 135, in=-135]                (m5)
         (m3) edge[densely dashed]                                   (m4)
              edge[densely dashed, out=  45, in=  15]                (m6)
         (m4) edge[densely dashed]                                   (m5)
              edge[densely dashed, out=  90, in= 165]                (m6)
         (m5) edge[densely dashed, out= 125, in=  55, looseness=30]  (m5)
         (m6) edge[densely dashed, out=-125, in= -55, looseness=15]  (m6);
  \end{tikzpicture}
  \label{subfig:medial_graph}
 }
 \caption{A plane graph~\protect\subref{subfig:planar_graph}, its medial graph~\protect\subref{subfig:medial_graph},
 and both graphs superimposed~\protect\subref{subfig:superimposed}.}
 \label{fig:medial_graph_example}
\end{figure}

\begin{remark}
 From the planar embedding of the graph $G$,
 treating $h$ vertices as edges,
 the resulting graph $G'$ is known as the medial graph of $G$.
 The (constructive) definition is usually phrased in the following way.
 The medial graph $G_m$ of plane graph $G$ has a vertex on each edge of $G$
 and two vertices in $G_m$ are joined by an edge for each face of $G$ in which their corresponding edges occur consecutively.
 See Figure~\ref{fig:medial_graph_example} for an example.
 However,
 our construction described in the proof clearly extends to nonplanar graphs as well.
\end{remark}

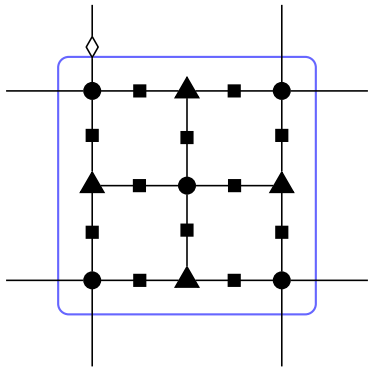
\begin{figure}[htpb]
 \centering
 \begin{tikzpicture}[scale=\scale,transform shape,node distance=1.5 * \nodeDist,semithick]
  \node[internal]                (00) {};
  
  \node[triangle, above of=00]   (01) {};
  \node[triangle, right of=00]   (10) {};
  \node[triangle, left of=00]   (-10) {};
  \node[triangle, below of=00]  (0-1) {};
  
  \node[internal, left of=01]   (-11) {};
  \node[internal, right of=01]   (11) {};
  \node[internal, left of=0-1] (-1-1) {};
  \node[internal, right of=0-1] (1-1) {};
  
  \node[external, above of=-11] (-12) {};
  \node[external, left of=-11]  (-21) {};
  \node[external, above of=11]   (12) {};
  \node[external, right of=11]   (21) {};
  \node[external, below of=-1-1] (-1-2) {};
  \node[external, left of=-1-1]  (-2-1) {};
  \node[external, below of=1-1] (1-2) {};
  \node[external, right of=1-1] (2-1) {};
  
  \path  (00) edge node[square] {} (01)
              edge node[square] {} (10)
              edge node[square] {} (-10)
              edge node[square] {} (0-1)
        (-11) edge node[square] {} (-10)
              edge node[square] {} (01)
              edge [postaction={decorate, decoration={markings,
                    mark=at position 0.6 with {\arrow[>=diamond,white] {>}; }],
                    mark=at position 0.6 with {\arrow[>=open diamond]  {>}; } } }]
                                   (-12)
              edge                 (-21)
         (11) edge node[square] {} (10)
              edge node[square] {} (01)
              edge                 (12)
              edge                 (21)
        (1-1) edge node[square] {} (10)
              edge node[square] {} (0-1)
              edge                 (1-2)
              edge                 (2-1)
       (-1-1) edge node[square] {} (-10)
              edge node[square] {} (0-1)
              edge                 (-1-2)
              edge                 (-2-1);
  \begin{pgfonlayer}{background}
   \node[draw=\borderColor,thick,rounded corners,fit = (11) (1-1) (-1-1) (-11),inner sep=16pt] {};
  \end{pgfonlayer}
 \end{tikzpicture}
 \caption{Grid-like gadget used in the proof of Lemma~\ref{lem:Z:Eq4PM3},
          whose support vectors are $00110011$, $11001100$, and $11111111$.
          Each square is assigned a binary disequality $\neq_2$, 
          circle $=_4$, and triangle $[0,1,0,0]$.}
 \label{fig:Eq4-PM3:grid}
\end{figure}

\begin{figure}[p]
 \centering
 \def\capWidth{6.5cm}
 \captionsetup[subfigure]{width=\capWidth}
 \subfloat[Gadget with signature $g$.
           Each square is assigned a binary disequality $\neq_2$, 
           circle $=_4$, triangle ${[0,1,0,0]}$, and pentagon $f$.]{
  \makebox[\capWidth][c]{
   \begin{tikzpicture}[scale=\scale,transform shape,node distance=1.5 * \nodeDist,semithick]
    \node[internal]                     (e1) {};
    \node[triangle, right of=e1]        (p1) {};
    \node[internal, right of=p1]        (e2) {};
    \node[triangle, right of=e2]        (p2) {};
    \node[internal, below right of=p2]  (e3) {};
    \node[triangle, below of=e3]        (p3) {};
    \node[internal, below of=p3]        (e4) {};
    \node[triangle, below of=e4]        (p4) {};
    \node[internal, below left of=p4]   (e5) {};
    \node[triangle, left of=e5]         (p5) {};
    \node[internal, left of=p5]         (e6) {};
    \node[triangle, left of=e6]         (p6) {};
    \node[internal, above left of=p6]   (e7) {};
    \node[triangle, above of=e7]        (p7) {};
    \node[internal, above of=p7]        (e8) {};
    \node[triangle, above of=e8]        (p8) {};
    
    \node[external, above of=p1]       (ex1) {};
    \node[external, above of=p2]       (ex2) {};
    \node[external, right of=p3]       (ex3) {};
    \node[external, right of=p4]       (ex4) {};
    \node[external, below of=p5]       (ex5) {};
    \node[external, below of=p6]       (ex6) {};
    \node[external, left  of=p7]       (ex7) {};
    \node[external, left  of=p8]       (ex8) {};
    
    \path (p1) edge node[square] {} (e1)
               edge node[square] {} (e2)
               edge                 (ex1);
    \path (p2) edge node[square] {} (e2)
               edge node[square] {} (e3)
               edge                 (ex2);
    \path (p3) edge node[square] {} (e3)
               edge node[square] {} (e4)
               edge                 (ex3);
    \path (p4) edge node[square] {} (e4)
               edge node[square] {} (e5)
               edge                 (ex4);
    \path (p5) edge node[square] {} (e5)
               edge node[square] {} (e6)
               edge                 (ex5);
    \path (p6) edge node[square] {} (e6)
               edge node[square] {} (e7)
               edge                 (ex6);
    \path (p7) edge node[square] {} (e7)
               edge node[square] {} (e8)
               edge                 (ex7);
    \path (p8) edge node[square] {} (e8)
               edge node[square] {} (e1)
               edge [postaction={decorate, decoration={
                     markings,
                     mark=at position 0.58 with {\arrow[>=diamond,white] {>}; }],
                     mark=at position 0.58 with {\arrow[>=open diamond]  {>}; } } }]
                                    (ex8);
    \node[external, below of=e2] (a1) {};
    \node[pentagon, below of=a1, text=white] (f1) {$f_1$};
    \path (f1) edge[bend left] node[square] {} (e1)
               edge            node[square] {} (e1)
               edge[bend left] node[square] {} (e2)
               edge            node[square] {} (e2)
               edge[bend left] node[square] {} (e3)
               edge            node[square] {} (e3)
               edge[bend left] node[square] {} (e4)
               edge[postaction={decorate, decoration={
                    markings,
                    mark=at position 0.14 with {\arrow[>=diamond,white] {>}; }],
                    mark=at position 0.14 with {\arrow[>=open diamond]  {>}; } } }] 
                               node[square] {} (e4);
    \node[external, above of=e6] (a2) {};
    \node[pentagon, above of=a2, text=white] (f2) {$f_2$};
    \path (f2) edge[bend left] node[square] {} (e5)
               edge            node[square] {} (e5)
               edge[bend left] node[square] {} (e6)
               edge            node[square] {} (e6)
               edge[bend left] node[square] {} (e7)
               edge            node[square] {} (e7)
               edge[bend left] node[square] {} (e8)
               edge[postaction={decorate, decoration={
                    markings,
                    mark=at position 0.14 with {\arrow[>=diamond,white] {>}; }],
                    mark=at position 0.14 with {\arrow[>=open diamond]  {>}; } } }]
                               node[square] {} (e8);
    \begin{pgfonlayer}{background}
     \node[draw=\borderColor,thick,rounded corners,fit = (e1) (p2) (e3) (p4) (e5) (p6) (e7) (p8),inner sep=14pt] {};
    \end{pgfonlayer}                
   \end{tikzpicture}
  }
  \label{subfig:Eq4-PM3:cycle}
 }
 \qquad
 \subfloat[Support of $g$.
           Each vector is an assignment ordered counterclockwise from the diamond.]{
  \makebox[\capWidth][c]{
   \begin{tabular}[b]{|c|c||c|}
    \hline
    $f_1$    & $f_2$    & $g$\\
    \hline
    \hline
    00000000 & 00000000 & 11111111\\
    \hline
    00110011 & 00000000 & 01111000\\
    \hline
    11001100 & 00000000 & 11110000\\
    \hline
    00000000 & 00110011 & 10000111\\
    \hline
    00110011 & 00110011 & 00000000\\
    \hline
    11001100 & 00110011 & -\\
    \hline
    00000000 & 11001100 & 00001111\\
    \hline
    00110011 & 11001100 & -\\
    \hline
    11001100 & 11001100 & 00000000\\
    \hline
   \end{tabular}
  }
  \label{table:Eq4-PM3}
 }
 \caption{Another gadget used in the proof of Lemma~\ref{lem:Z:Eq4PM3} and a Table listing the support of its signature.}
 \label{fig:Eq4-PM3:cycle}
\end{figure}

\begin{lemma} \label{lem:Z:Eq4PM3}
  $\plholant{{\neq}_2}{{=}_4,[0,1,0,0]}$ is \numP-hard.
\end{lemma}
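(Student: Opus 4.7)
The plan is to reduce from the \#P-hard problem $\PlHolant(g)$ of Lemma~\ref{lem:arity4:2-spin}, where $g$ is the quaternary signature whose signature matrix is the diagonal-like matrix $\operatorname{diag}(2,1,1,1)$ (in the redundant form). The reduction will be accomplished through two successive planar gadget constructions involving $=_4$, $[0,1,0,0]$, and $\neq_2$, which together realize a ``doubled'' version of the target signature.

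First I would analyze the grid-like gadget in Figure~\ref{fig:Eq4-PM3:grid}, which uses $=_4$ on the four outer circles, $=_4$ on the central circle, $[0,1,0,0]$ on the four triangles, and $\neq_2$ on every square. The central $=_4$ forces all four adjacent triangles to receive the same bit, and each outer $=_4$ synchronizes the two external edges it is incident to with the bits flowing through its adjacent triangles. A direct case analysis (since the four triangles are \textsc{ExactOne}$_3$) shows the support of this $8$-ary signature $f$ is exactly $\{00110011,\;11001100,\;11111111\}$ with easily-computable positive weights; crucially, every external assignment that contributes a nonzero value has adjacent bits paired as $00$ or $11$.

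Next I would plug two copies $f_1,f_2$ of this grid gadget into the cycle gadget of Figure~\ref{subfig:Eq4-PM3:cycle}, connecting $f_1$ to the four circles $e_1,e_2,e_3,e_4$ and $f_2$ to $e_5,\ldots,e_8$, with $\neq_2$ on every square and $[0,1,0,0]$ on each triangle $p_1,\ldots,p_8$. For each of the nine choices of supports in $\{00000000, 00110011, 11001100\}^2$ for $(f_1,f_2)$, one propagates the forced bits through the $e_j$ and $p_j$ and determines the resulting assignment (if any) on the eight external dangling edges; this gives exactly the table in Figure~\ref{table:Eq4-PM3}, producing an $8$-ary signature $g'$ supported on six patterns, each of which has the four pairs $(p_1 p_2, p_3 p_4, p_5 p_6, p_7 p_8)$ equal to one of $00$ or $11$.

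Finally, since $g'$ is supported on doubled bit patterns, I would interpret it as a quaternary signature on the domain-paired bits $(p_1 p_2,p_3 p_4,p_5 p_6,p_7 p_8)\in\{00,11\}^4$. Reading off the six support patterns and their multiplicities in the table shows that (after a counterclockwise rotation, cf.\ Figure~\ref{fig:rotate_asymmetric_signature}, which does not affect planar hardness) the resulting signature matrix is precisely the matrix $\operatorname{diag}(2,1,1,1)$ of Lemma~\ref{lem:arity4:2-spin}: the all-zero and all-one external assignments of $g'$ correspond to the weight-$2$ entries (both $(00,00,00,00)$ and $(11,11,11,11)$ appear twice in the table via the two ``all striped'' pairings), while the four mixed supports $01111000,11110000,10000111,00001111$ give the four weight-$1$ diagonal entries of the compressed matrix. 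Invoking Lemma~\ref{lem:arity4:2-spin} then gives \#P-hardness. The main obstacle will be carefully verifying the support-and-weights calculation at the junction of the two steps, in particular ensuring that the external assignments in the cycle gadget really pair up as $(00)$ or $(11)$ and that the resulting multiplicities match the required matrix after rotation; the rest is routine propagation through $\neq_2$ and $[0,1,0,0]$.
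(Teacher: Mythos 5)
Your proposal follows the paper's overall strategy (grid gadget, cycle gadget, domain pairing, then Lemma~\ref{lem:arity4:2-spin}), but it has a genuine gap and a computational error that together invalidate the argument as written.

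The gap is the missing \emph{complementation step}. You correctly determine the support of the grid gadget to be $\{00110011,11001100,11111111\}$, but then say you would plug ``two copies of this grid gadget'' into the cycle, while enumerating $(f_1,f_2)$ over $\{00000000,00110011,11001100\}^2$. These sets are different. The paper first connects each of the four pairs of external half-edges of the grid gadget to a fresh $=_4$ through two copies of $\neq_2$, producing a new gadget whose support is the \emph{complement} $\{11001100,00110011,00000000\}$; only this complemented gadget goes on the pentagon vertices of Figure~\ref{subfig:Eq4-PM3:cycle}. The distinction matters: with the uncomplemented support the pattern $11111111$ forces all four adjacent $=_4$'s in the cycle to $0$ (via $\neq_2$), which forces each neighboring $[0,1,0,0]$ to see two $1$'s --- no valid assignment, hence weight $0$ --- whereas $00000000$ is what yields the nonzero contribution $g=11111111$ in the table. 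Your plan as stated would not reproduce Table~\ref{table:Eq4-PM3}.

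There is also an error in your domain-pairing analysis. You claim both $00000000$ and $11111111$ appear twice in the table, and that all four mixed supports $01111000,11110000,10000111,00001111$ survive with weight $1$. In fact only $00000000$ appears twice (from $(f_1,f_2)=(00110011,00110011)$ and $(11001100,11001100)$); $11111111$ appears once (from $(00000000,00000000)$). Moreover $01111000$ and $10000111$ are \emph{eliminated} by the pairing, since consecutive pairs like $01$ or $10$ are killed by the $=_4$ enforcing equal values on each pair --- only $11110000 \to 1100$ and $00001111 \to 0011$ survive. Following your tally one would land on the matrix $\operatorname{diag}(2,1,1,2)$ rather than $\operatorname{diag}(2,1,1,1)$. (Also, calling $\operatorname{diag}(2,1,1,1)$ a ``redundant form'' is a misnomer: its middle two rows differ, so it is not redundant; Lemma~\ref{lem:arity4:2-spin} is stated directly for that signature and does not go through the redundant-matrix criterion.) The correct accounting --- $1111$ weight $1$, $1100$ weight $1$, $0011$ weight $1$, $0000$ weight $2$, then a counterclockwise rotation to $\{1111,0110,1001,0000\}$ --- is what gives the desired $\operatorname{diag}(2,1,1,1)$.
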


\begin{proof}
  Consider the gadget in Figure~\ref{fig:Eq4-PM3:grid}.
  We assign binary disequality $\neq_2$ to the square vertices,
  $=_4$ to the circle vertices,
  and $[0,1,0,0]$ to the triangle vertices.
  We show that the support of the resulting signature is the set $\{00110011,11001100,11111111\}$,
  where each vector is the assignment ordered counterclockwise starting from the diamond point.

  We call the equality signature $=_4$ in the middle the origin.
  There are two possible assignments at the origin.
  If it is assigned~$0$, 
  then every adjacent perfect matching signature $[0,1,0,0]$
  is matched to the half edge towards the origin,
  and every equality $=_4$ is forced to be~$1$.
  This gives the support vector $11111111$.

  The other possibility is that the origin is~$1$.
  In this case,
  we can remove the origin leaving the outer cycle,
  with every $[0,1,0,0]$ becoming $[0,1,0]$.
  This is effectively a cycle of four equalities connected by $\neq_2$.
  It is easy to see that there are only two support vectors,
  which are exactly $00110011$ and $11001100$.

  Every pair of half edges at each corner always take the same value.
  We further connect each pair of these edges to different copy of $=_4$ via two copies of $\neq_2$.
  This results in a gadget with signature $f$ whose support is the complement of the original support,
  that is, $\{11001100,00110011,00000000\}$.

  Now consider the gadget in Figure~\ref{subfig:Eq4-PM3:cycle}.
  We assign $\neq_2$ to the square vertices,
  $=_4$ to the circle vertices,
  $[0,1,0,0]$ to the triangle vertices,
  and $f$ to the pentagon vertex.
  Notice that each pair of edges coming out of the pentagon vertex are from the same corner of the gadget in Figure~\ref{fig:Eq4-PM3:grid} used to realize $f$.
  We now study the signature of this gadget.
  
  Notice that if a $=_4$ on the outer cycle is assigned $0$,
  then the two adjacent perfect matchings must match half edges toward that $=_4$,
  and their outgoing edges must be $0$.
  Furthermore, the two $=_4$ one more step away must be $1$.
  A further observation is that any pair of consecutive $=_4$'s cannot be both $0$,
  and if a pair of consecutive $=_4$'s are both $1$,
  then the $[0,1,0,0]$ in the middle must have a $1$ going out.
  In Figure~\ref{subfig:Eq4-PM3:cycle},
  we call the pentagon connecting to four equalities $=_4$ on the upper right $f_1$ and the other one $f_2$.
  Let $g$ be the signature of resulting gadget.
  We further order the external wires of $f_1$, $f_2$, and $g$ counterclockwise,
  each starting from edge marked with a diamond.
  With this notation and these observations,
  we get Table~\ref{table:Eq4-PM3} listing the support of $g$.
  The support of $g$ is $\{11111111, 01111000, 11110000, 10000111, 00000000, 00001111, 00000000\}$,
  and $00000000$ has multiplicity~$2$.

  Next we use a domain pairing argument.
  First we move $=_4$ to the left hand side,
  by contracting four $\neq_2$ into it.
  We apply the domain pairing on the problem $\plholant{=_4}{g}$.
  Specifically, we use $=_4$ as $=_2$, by pairing each pair of edges together.
  We also pair adjacent two outputs of $g$ clockwise, 
  starting from the diamond point.
  Each pair of output wires of $g$ are connected to a pair of wires from $=_4$ on the left hand side.
  Note that $=_4$ enforces that each pair of edges always takes the same value.
  We re-interpret $00$ or $11$ as $0$ or $1$ in the Boolean domain.
  In this way, we can treat $g$ as an arity~$4$ signature $g'$ in the Boolean domain.
  So the reduction is
  \[
   \plholant{{=}_2}{g'}\le_T\plholant{{=}_4}{g}.
  \]
  We get the expression of $g'$ next.
  The two support bit strings $01111000$ and $10000111$ of $g$
  are eliminated as they do not agree on adjacent paired outputs.
  So in the paired (Boolean) domain, the support of $g'$ becomes
  $\{1111,1100,0011,0000\}$ where $0000$ has multiplicity $2$.
  We further rotate $g'$ as a Boolean domain signature
  such that the support is $\{1111,0110,1001,0000\}$.
  Now it is easy to see that the matrix of $g'$, 
  an arity~$4$ signature in the Boolean domain, is
  \begin{align*}
    M_{g'}=
    \begin{bmatrix}
      2 & 0 & 0 & 0 \\
      0 & 1 & 0 & 0 \\
      0 & 0 & 1 & 0 \\
      0 & 0 & 0 & 1 \\      
    \end{bmatrix}.
  \end{align*}
  By Lemma~\ref{lem:arity4:2-spin} $\PlHolant(g')$ is \numP-hard.
  Hence $\plholant{{\neq}_2}{{=}_4,[0,1,0,0]}$ is \numP-hard.
\end{proof}

To extend Lemma~\ref{lem:Z:Eq3PM3} and Lemma~\ref{lem:Z:Eq4PM3}
to general \exactone{d} functions,
we show that we can always realize constant functions $[1,0]$ and $[0,1]$ in this setting.

\begin{lemma}  \label{lem:Z:Eq34:PMk:pin}
  For any integer $k\ge 3$ and $d\ge 3$ and any signature set $\mathcal{F}$, 
  \[\plholant{{\neq}_2}{{=}_k,\exactone{d},[0,1],[1,0],\mathcal{F}}\le_T\plholant{{\neq}_2}{{=}_k,\exactone{d},\mathcal{F}}.\]
\end{lemma}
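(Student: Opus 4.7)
My plan is to realize the unary signatures $[1,0]$ and $[0,1]$ as planar one-dangling-edge gadgets using only $\neq_2$, $=_k$, and $\exactone{d}$; the Turing reduction is then immediate by replacing each $[1,0]$ or $[0,1]$ vertex of an LHS instance by the corresponding gadget.

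The basic building block is a gadget $G_P$ formed by taking two copies of $\exactone{d}$ and joining two specified pairs of their edges through two parallel $\neq_2$'s. A case analysis on the $\exactone{d}$-constraints shows that the single ``1'' in each $\exactone{d}$ must lie on one of its two matched edges: otherwise both matched edges on that side are $0$, which via the two $\neq_2$'s forces two ``1''s on the opposite $\exactone{d}$'s matched edges, violating $\exactone{d}$. Hence the only satisfying configurations force all $2d-4$ dangling edges to $0$, and $G_P$ has signature $2\cdot[1,0]^{\otimes(2d-4)}$. Because this signature factors as a tensor product, each dangling edge of $G_P$ behaves as an independent $[1,0]$ (i.e., attached via an adjacent $\neq_2$, it pins the far side to $1$).

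From $G_P$ I build a one-dangling-edge gadget $G_{[0,1]}$ with signature proportional to $[0,1]$: take a fresh $=_k$ vertex and attach $k-1$ of its edges, via $\neq_2$'s, to $k-1$ dangling edges of $G_P$ (drawing on multiple copies of $G_P$ if $k-1>2d-4$). Each such attachment pins the corresponding $=_k$-edge to $1$, so by equality the remaining $=_k$-edge is also $1$, yielding $[0,1]$ on the lone dangling edge. Symmetrically, using $G_{[0,1]}$'s outputs in place of $G_P$'s outputs to pin $k-1$ edges of a fresh $=_k$ to $0$ produces a one-dangling-edge gadget $G_{[1,0]}$ with signature $[1,0]$. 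Any leftover $G_P$-outputs are absorbed by auxiliary $=_k$'s with all $k$ edges pinned to $1$ (contributing only a factor of $1$); $\exactone{d}$ also serves as a ``propagator'' that consumes one $[1,0]$ and emits $d-1$, giving enough arithmetic flexibility to balance the counts for all $(d,k)$. All of these sub-gadgets are planar by construction, so the composite gadget embeds planarly.

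Given $G_{[1,0]}$ and $G_{[0,1]}$, the reduction sends any planar LHS instance $\Omega$ to a planar RHS instance $\Omega'$ by a local substitution at each $[1,0]$/$[0,1]$ vertex; the Holants agree up to an explicit nonzero scalar. The main obstacle I anticipate is the bookkeeping in exceptional cases where $\gcd(2d-4,k)$ is large---for instance $d=k=4$, in which every available signature has even arity so every gadget has an even number of dangling edges, making ``one dangling edge'' unreachable on the face of it. Here I would appeal to the parity constraint forced on every valid LHS instance (the total right-side arity across the $\neq_2$'s is even, so $n_0+n_1$ is determined modulo $2$ by $\deg(\mathcal{F})$): instances violating this parity have Holant $0$ and need no simulation, while the remaining instances fall in the integer cone reached by combining $G_P$, absorbers, and propagators, so the construction goes through after a short arithmetic check.
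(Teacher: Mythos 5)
There is a genuine gap, and it is exactly where you flagged "bookkeeping in exceptional cases." The issue is not bookkeeping; it is a hard obstruction that rules out the entire local-substitution strategy in general, and the fix you sketch does not resolve it.

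Your gadget $G_P$ and the analysis of its signature $2[1,0]^{\otimes(2d-4)}$ are correct, and for some $(d,k)$ (e.g.\ $d=3$) one really can build a one-dangling-edge gadget for $[1,0]$ and $[0,1]$. But take $d=k=4$ with $\mathcal{F}$ consisting only of even-arity signatures (e.g.\ $\mathcal{F}=\emptyset$). Then \emph{every} available signature on both sides of the bipartition has even arity, so every planar gadget you can build has an even number of dangling edges. No combination of $G_P$'s, absorbers ($=_k$ with all edges pinned, consuming $k$), and propagators ($\exactone{d}$ consuming $1$ and emitting $d-1$, a net change of $d-2$) can ever produce a gadget with exactly one dangling edge: all of these change the output count by multiples of $d-2$, and when $\gcd(d-2,k)>1$ the residue $-1\bmod k$ is unreachable, while the parity argument shows a unary is unreachable outright. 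Your reply---"instances violating this parity have Holant $0$ and need no simulation, while the remaining instances fall in the integer cone"---does not rescue the construction: it is true that valid LHS instances have $n_0+n_1$ of a fixed parity, but a local replacement must handle each $[1,0]$ or $[0,1]$ vertex \emph{independently}, and each such replacement still requires a unary gadget that provably does not exist. Pairing up the pinnings and substituting a two-dangling gadget for each pair is no longer a local replacement: it requires choosing a pairing of pins scattered throughout the instance and routing each pair through the plane graph, which is a nontrivial global operation.

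This is precisely why the paper's proof is global and substantially more involved than "a short arithmetic check." The paper never builds a unary. Instead it (i) constructs $[0,1]^{\otimes r}$ for a suitable $r\ge 1$ and $\exactone{2+\ell(d-2)}$ for all $\ell\ge 1$; (ii) takes a spanning tree of the dual graph of $\Omega$ rooted at the external face and, working from the leaves up, reroutes the $[0,1]$ pinnings face by face into the external face, using the stretched $\exactone{}$ signatures to carry dangling edges across face boundaries while respecting planarity; and (iii) once all $p$ pinnings sit in the external face, forms $r$ disjoint copies of $\Omega$, pins all $pr$ edges with $p$ copies of $[0,1]^{\otimes r}$, and recovers the original Holant by taking an $r$-th root of a nonnegative integer. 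Steps (ii) and (iii)---the dual-tree rerouting and the disjoint-copies-plus-root-extraction trick---are the real content of the lemma and are not recoverable from a local gadget argument.
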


\begin{proof}
  Given an instance $\Omega$ of $\plholant{{\neq}_2}{{=}_k,\exactone{d},[0,1],[1,0],\mathcal{F}}$
  with underlying planar graph $G$,
  if there is any $[1,0]$ on the right hand side,
  then it can be combined with $\neq_2$ as a $[0,1]$ on the left hand side,
  and then contracted into whatever function it is attached to.
  If it is connected to $[1,0]$ or $[0,1]$,
  we either know the Holant is $0$ or remove the two vertices.
  If it is connected to $\exactone{d}$,
  then the contraction gives us $d-1$ many $[1,0]$ pinnings.
  Similarly, if it is connected to $=_k$,
  the whole function decomposes into $k-1$ many $[0,1]$'s.
  These additional pinnings by $[1,0]$'s or $[0,1]$'s can be recursively applied.

  By a similar analysis,
  it is easy to show that the only nontrivial occurrences of $[0,1]$'s
  are those attached to $\exactone{d}$ via $\neq_2$.
  We may therefore assume there is no $[1,0]$ in $\Omega$,
  and the only appearances of $[0,1]$ and $[1,0]$'s are those of $[0,1]$'s applied to $\exactone{d}$ via $\neq_2$.

  We can  construct $=_{\ell k}$ for any integer $\ell \ge 1$,
by $\neq_2$ on the left and $=_{k}$ on the right.
In fact if we connect two copies of $=_{k}$ via $\neq_2$ we get
a signature of arity $2k-2$ with $k-1$ consecutive external wires labeled $+$
and the others labeled $-$. As $k \ge 3$, we can take 2 wires of the $k-1$
wires labeled $-$ and attach to two copies of $=_{k}$ via two $\neq_2$.
This creates a signature of arity $3(k-1) + (k-3)$
with $3(k-1)$ consecutive wires labeled $+$ and the other
 $k-3$ wires labeled $-$. Finally connect $k-3$ pairs of adjacent
$+/-$ labeled wires by $\neq_2$ recursively. This creates a planar gadget
with an equality signature of arity $3(k-1) - (k-3)
= 2k$.  This can be extended to any $=_{\ell k}$ by applying the same
process on any consecutive $k$ wires.

Next we construct $[0,1]^{\otimes r}$ for some integer $r \ge 1$.
We get $[1,0]^{\otimes d-2}$ by a self-loop of $\exactone{d}$ via $\neq_2$,
ignoring the factor 2.
  We pick an integer $\ell$ large enough so that $d-2 < \ell k$.
  Then we connect $[1,0]^{\otimes d-2}$ to $=_{\ell k}$ via $\neq_2$ to get $[0,1]^{\otimes (\ell k-d+2)}$.
  This is what we claim with $r=\ell k-d+2$.

  One more construction we will use is $\exactone{2+\ell(d-2)}$ for any integer $\ell \ge 1$.
  This is realizable by connecting $\ell$ many copies of $\exactone{d}$ sequentially via $\neq_2$.

  Consider the dual graph $G^*$ of $G$.
  Take a spanning tree $T$ of $G^*$,
  with the external face as the root.
  In each face $F$, let $c_F$ be the number of $[0,1]$'s in the face.
  We start from the leaves to recursively move all the pinnings of $[0,1]$ to the external face.
  Suppose we are working on the face $F$ as a leaf of $T$.
  If $c_F=0$ then we just remove the leaf from $T$ and recurse on another leaf.
  Otherwise we remove all $[0,1]$'s in $F$.
  Let $s$ be the smallest integer such that $s r \ge c_F$.
  We replace the $\neq_2$ edge bordering between $F$ and its parent $F'$
  by a sequence of three signatures:
 $\neq_2$, $\exactone{2+\ell (d-2)}$ and $\neq_2$,
  where $\ell$ is a sufficiently large integer such that $\ell (d-2)\ge sr-c_F$.
  From $\exactone{2+\ell (d-2)}$ there are two edges connected to the two adjacent copies of $\neq_2$.
  Of the other $\ell (d-2)$ edges we will put $sr-c_F$ many dangling edges in $F$, 
  and the remaining $\ell (d - 2) - (s r - c_F)$ dangling edges in $F'$.
  Hence there are $sr$ dangling edges in $F$,
  including those $c_F$ many that were connected to $[0,1]$'s before we removed the $[0,1]$'s.
  We put $s$ copies of $[0,1]^{\otimes r}$ inside the face $F$ to pin all of them in a planar way.
  We add $\ell(d-2)-(sr-c_F)$ to $c_{F'}$.
  Remove the leaf $F$ from $T$,
  and recurse.

  After the process,
  all $[0,1]$'s are in the external face of $G$.
  Suppose the number is $p$.
  We put $r$ disjoint copies of $G$ together to form a planar signature grid.
  Apply a total of $p r$ many $[0,1]$'s by $p$ copies of $[0,1]^{\otimes r}$ in a planar way.
  This is now an instance of $\plholant{{\neq}_2}{{=}_k,\exactone{d},\mathcal{F}}$
  and the Holant value is the $r$th power of that of $\Omega$.
  Since the Holant value of $\Omega$ is a nonnegative integer,
  we can take the $r$th root and finish the reduction.
\end{proof}

Once we have constant functions $[0,1]$ and $[1,0]$,
it is easy to construct $\exactone{3}$ from $\exactone{d}$.
Therefore combining Lemma~\ref{lem:Z:Eq34:PMk:pin} with Lemma~\ref{lem:Z:Eq3PM3} and Lemma~\ref{lem:Z:Eq4PM3} 
we get the following corollary.

\begin{corollary} \label{cor:Z:Eq34:PMk}
 If $d \ge 3$ and $k \in \{3,4\}$,
 then $\plholant{{\neq}_2}{{=}_k,\exactone{d}}$ is \numP-hard.
\end{corollary}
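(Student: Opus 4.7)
The plan is to reduce the general case $d \ge 3$ to the two base cases $d = 3$, which are already established in Lemma~\ref{lem:Z:Eq3PM3} (for $k=3$) and Lemma~\ref{lem:Z:Eq4PM3} (for $k=4$). The key insight is that $\exactone{3} = [0,1,0,0]$, so if we can realize $\exactone{3}$ from $\exactone{d}$ while staying in the planar bipartite framework, we are done.

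First I would invoke Lemma~\ref{lem:Z:Eq34:PMk:pin}, which asserts that we may freely assume the constant unary signatures $[1,0]$ and $[0,1]$ are available on the right side without changing the complexity. Given these pinning signatures, I will construct $\exactone{3}$ from $\exactone{d}$ by the following planar gadget: attach $d-3$ copies of $[0,1]$ to $d-3$ consecutive dangling edges of $\exactone{d}$ through binary disequalities $\neq_2$ on the left. Since $\neq_2$ flips the bit, each $[0,1]$ pinning effectively forces the corresponding input edge of $\exactone{d}$ to take value $0$. Because $\exactone{d}$ requires exactly one $1$ among its $d$ inputs, after this pinning the surviving $3$ dangling edges must carry exactly one $1$, so the resulting signature is precisely $\exactone{3} = [0,1,0,0]$, up to a nonzero constant. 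The gadget is visibly planar since we only attach pinnings to a contiguous arc of dangling edges.

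Combining these two reductions yields
\[
\plholant{{\neq}_2}{{=}_k,\exactone{3}} \,\le_T\, \plholant{{\neq}_2}{{=}_k,\exactone{d},[0,1],[1,0]} \,\le_T\, \plholant{{\neq}_2}{{=}_k,\exactone{d}},
\]
and the leftmost problem is \numP-hard by Lemma~\ref{lem:Z:Eq3PM3} when $k=3$ and by Lemma~\ref{lem:Z:Eq4PM3} when $k=4$. There is no real obstacle here since both the pinning lemma and the base-case hardness lemmas do all the heavy lifting; the only thing to verify is that the arity-reduction gadget is planar, which is immediate.
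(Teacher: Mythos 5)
Your proof is correct and takes essentially the same approach as the paper: the paper's proof of this corollary is a one-sentence appeal to Lemma~\ref{lem:Z:Eq34:PMk:pin} and the "easy" construction of $\exactone{3}$ from $\exactone{d}$ via pinning, which is precisely the gadget you spell out. The only difference is that you make explicit the bookkeeping that $[0,1]$ composed through ${\ne}_2$ acts as a $0$-pin, which the paper leaves implicit.
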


\subsection{Tractability when \texorpdfstring{$k \ge 5$}{k>=5}}

On the other hand,
if the arity of the equality signature is at least~$5$,
then the problem is tractable.
In this subsection we will first prove that
the problem \plholant{{\neq}_2}{{=}_k,\mathcal{EO}}
is tractable for $k \ge 6$.
After that we  will  return to $=_5$.

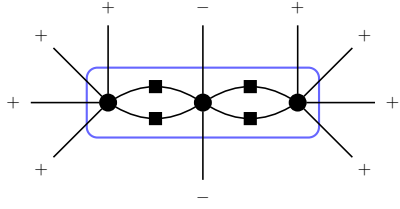
\begin{figure}[t]
 \centering
 \begin{tikzpicture}[scale=\scale,transform shape,node distance=1.5*\nodeDist,semithick]
  \node[internal]  (e1) {};
  \node[internal, right of=e1]  (e2) {};
  \node[internal, right of=e2]  (e3) {};
  \path (e1) edge[bend  left] node[square] (s1) {} (e2)
             edge[bend right] node[square] (s2) {} (e2)
        (e2) edge[bend  left] node[square] (s3) {} (e3)
             edge[bend right] node[square] (s4) {} (e3);
  \node[external, above left of=e1] (ex1-1) {$+$};
  \node[external, below left of=e1] (ex1-2) {$+$};
  \node[external, above      of=e1] (ex1-3) {$+$};
  \node[external,       left of=e1] (ex1-4) {$+$};
  
  \node[external, above right of=e3] (ex3-1) {$+$};
  \node[external, below right of=e3] (ex3-2) {$+$};
  \node[external, above       of=e3] (ex3-3) {$+$};
  \node[external,       right of=e3] (ex3-4) {$+$};
  
  \node[external, above of=e2] (ex2-1) {$-$};
  \node[external, below of=e2] (ex2-2) {$-$};
  
  \path (e1) edge (ex1-1)
             edge (ex1-2)
             edge (ex1-3)
             edge (ex1-4);
  \path (e3) edge (ex3-1)
             edge (ex3-2)
             edge (ex3-3)
             edge (ex3-4);
  \path (e2) edge (ex2-1)
             edge (ex2-2);
  \begin{pgfonlayer}{background}
   \node[draw=\borderColor,thick,rounded corners,fit = (e1) (e3) (s1) (s2) (s3) (s4),inner sep=8pt] {};
  \end{pgfonlayer}                              
 \end{tikzpicture}
 \caption{Example \eblock{6}. Circle vertices are assigned $=_6$ and square vertices are assigned $\neq_2$.}
 \label{fig:eblock:example}
\end{figure}

To prove this,
we first do some preprocessing.
Let $G$ be the underlying graph of an instance of $\plholant{{\neq}_2}{{=}_k,\mathcal{EO}}$.
Any self loop on an \exactone{d} by a $\neq_2$ changes it to a $[1,0]^{\otimes(d-2)}$ with factor $2$.
These pinning signatures can be applied recursively.
Any $[1,0]$ is first transformed to
$[0,1]$ via ${\neq}_2$ on LHS 
and then applied either to ${=}_k$ producing $[0,1]^{\otimes(k-1)}$, 
or to \exactone{d} (for some $d$) producing $[1,0]^{\otimes(d-1)}$. 
Similarly, any $[0,1]$ is first transformed to $[1,0]$ via ${\neq}_2$ on LHS 
and then applied either to  ${=}_k$ producing $[1,0]^{\otimes(k-1)}$, 
or to \exactone{d} (for some $d$) producing \exactone{d-1}.
Note that if $d=3$ then \exactone{d-1} is just ${\neq}_2$ on RHS, 
which combined with its adjacent two copies of ${\neq}_2$ of LHS, 
is equivalent to a single ${\neq}_2$ of LHS.
Moreover, whenever an \exactone{d} and another \exactone{\ell} are connected by a $\neq_2$,
we replace it by a single \exactone{d + \ell - 2}, 
shrinking the edge between (and remove the connecting  $\neq_2$).
On the other hand, consider a connected component made of $=_k$ and $\neq_2$.
We call such a component an \eblock{k}.
Notice that each \eblock{k} has either exactly  two or zero  support vectors.
This depends on whether or not there exists a contradiction,
which is formed by an odd cycle of  $=_k$ connected by $\neq_2$.
We say an \eblock{k} is trivial if it has no support.
This is easy to check.
The two support vectors of a nontrivial \eblock{k} are complements of each other.
We mark dangling edges of a nontrivial \eblock{k} by ``$+$'' or ``$-$'' signs.
Dangling edges marked with the same sign take the same value on both support vectors
while dangling edges marked with different signs take opposite values on both support vectors.
Let $n_\pm$ be the number of dangling edges marked $\pm$.
Then it is easy to see that 
\begin{equation} \label{eqn:eblock:mod}
 n_+ \equiv n_- \mod k.
\end{equation}
An example of \eblock{6} is illustrated in Figure~\ref{fig:eblock:example},
with $8$ $+$ signs and $2$ $-$ signs.

After contracting all edges between \exactone{d}'s and forming \eblock{k}'s
we obtain a bipartite graph connected between \exactone{d}'s and \eblock{k}'s
by edges labeled by $=_2$.

A key observation is that a planar (bipartite) graph cannot be simple,
i.e., it must have parallel edges,
if its degrees are large.

\begin{lemma} \label{lem:planar_bipartite}
 Let $G = (L \union R, E)$ be a planar bipartite graph with parts $L$ and $R$.
 If every vertex in $L$ has degree at least~$6$ and every vertex in~$R$ has degree at least~$3$,
 then $G$ is not simple.
\end{lemma}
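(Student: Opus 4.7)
My plan is to argue by contradiction using Euler's formula for planar simple bipartite graphs. Suppose, for contradiction, that $G$ is simple. Since $G$ is bipartite and simple, it has no odd cycles and in particular no triangles, so the girth is at least $4$. The standard consequence of Euler's formula (counting that each face is bounded by at least $4$ edges, and each edge borders at most $2$ faces) then gives
\[
|E| \le 2|V| - 4.
\]

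Next, I would double-count edges using the two degree hypotheses. Summing degrees in $L$ gives $|E| = \sum_{v \in L} \deg(v) \ge 6|L|$, so $|L| \le |E|/6$. Summing degrees in $R$ gives $|E| = \sum_{v \in R}\deg(v) \ge 3|R|$, so $|R| \le |E|/3$. Adding these yields
\[
|V| = |L| + |R| \le \frac{|E|}{6} + \frac{|E|}{3} = \frac{|E|}{2}.
\]

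Combining the two inequalities gives $|E| \le 2|V| - 4 \le |E| - 4$, a contradiction. Hence $G$ cannot be simple. There is no real obstacle here: the whole proof is a one-line application of the standard $|E| \le 2|V|-4$ bound for simple planar bipartite graphs together with the degree lower bounds on the two sides. The only thing to be mildly careful about is handling the degenerate case where $|E| = 0$, but then the degree hypotheses force $L = R = \emptyset$ and the statement is vacuous (or one can simply note that the bound $|E| \le 2|V|-4$ is applied assuming $|V|\ge 3$, which is automatic once any vertex of degree $\ge 3$ exists).
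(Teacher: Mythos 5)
Your proof is correct and is essentially the same argument as the paper's: both deduce $e \le 2v - 4$ (you by quoting the standard bipartite-planar bound, the paper by re-deriving it in-line from Euler's formula and $2e \ge 4f$) and then contradict it by the degree bounds $|L| \le e/6$ and $|R| \le e/3$. The paper phrases the degree accounting via sums over $v_i$ and $u_j$, but the computation is identical.
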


\begin{proof}
 Suppose $G$ is simple.
 Let $v$, $e$ and $f$ be the total number of vertices, edges, and faces, respectively.
 Let $v_i$ be the number of vertices of degree $i$ in $L$, where $i\ge 6$,
 and $u_j$ be the number of vertices of degree $j$ in $R$, where $j\ge 3$.
 Since $G$ is simple and bipartite,
 each face has at least~$4$ edges.
 Thus,
 \begin{equation} \label{eqn:sum:faces}
  2 e \ge 4 f.
 \end{equation}
 Furthermore,
 it is easy to see that
 \begin{equation} \label{eqn:sum:vertices_edges}
  v = \sum_{i \ge 6} v_i + \sum_{j\ge 3} u_j
  \qquad
  \text{and}
  \qquad
  e = \sum_{i \ge 6} i v_i = \sum_{j \ge 3} j u_j.
 \end{equation}

 Then starting from Euler's characteristic equation for planar graphs,
 we have
 \begin{align*}
  2
  &= v - e + f \\
  &\le v - \frac{e}{2} \tag{By~\eqref{eqn:sum:faces}}\\
  &= \sum_{i\ge 6} v_i + \sum_{j\ge 3} u_j - \frac{1}{6} \sum_{i\ge 6} i v_i - \frac{1}{3}  \sum_{j\ge 3} j u_j \tag{By~\eqref{eqn:sum:vertices_edges}}\\
  &= \sum_{i\ge 6} \frac{6-i}{6} v_i+\sum_{j\ge 3} \frac{3-j}{3} u_j \le 0,
 \end{align*}
 a contradiction.
\end{proof}

Lemma~\ref{lem:planar_bipartite} 
does not give us tractability for the case of $k\ge 6$ yet.
The reason is that given an instance of \plholant{{\neq}_2}{{=}_k,\mathcal{EO}},
after the preprocessing and forming \eblock{k}s to make the graph bipartite,
it is possible to have \eblock{k}s of arity less than $6$,
in which case Lemma~\ref{lem:planar_bipartite} does not apply.
However, for $k\ge 6$ and a nontrivial \eblock{k} of arity $n$ where $n<6$,
by~\eqref{eqn:eblock:mod} and the fact that $0 \le n_+,n_-\le n<k$,
we see that $n_+=n_-$, and $n = n_+ + n_-$ must be even.
Moreover, if $n=2$, then this means that the \eblock{k} is just $\neq_2$,
in which case we can replace it by a single $\neq_2$
connecting signatures from $\mathcal{EO}$ to produce a new \exactone{} signature.
The only problematic case is when $n=4$.
We identify two possibilities of such \eblock{k}s up to a rotation in Figure~\ref{subfig:eblocks:4}.

\begin{figure}[t]
 \centering
 \def\capWidth{6.5cm}
 \captionsetup[subfigure]{width=\capWidth}
 \subfloat[Two different arity~$4$ \eblock{k}s.]{
  \makebox[\capWidth][c]{
   \begin{tikzpicture}[scale=\scale,transform shape,node distance=1.5*\nodeDist,semithick]
    \node[internal] (o1) {};
    \node[external, above left  of=o1] (e1-nw) {$+$};
    \node[external, above right of=o1] (e1-ne) {$+$};
    \node[external, below left  of=o1] (e1-sw) {$-$};
    \node[external, below right of=o1] (e1-se) {$-$};

    \node[external, right of=o1] (aux) {};

    \node[internal, right of=aux] (o2) {};
    \node[external, above left  of=o2] (e2-nw) {$+$};
    \node[external, above right of=o2] (e2-ne) {$-$};
    \node[external, below left  of=o2] (e2-sw) {$-$};
    \node[external, below right of=o2] (e2-se) {$+$};

    \path (o1) edge (e1-nw)
               edge (e1-ne)
               edge (e1-sw)
               edge (e1-se);
    \path (o2) edge (e2-nw)
               edge (e2-ne)
               edge (e2-sw)
               edge (e2-se);
    \begin{pgfonlayer}{background}
     \node[draw=\borderColor,thick,rounded corners,fit = (o1), inner sep=10pt] {};
     \node[draw=\borderColor,thick,rounded corners,fit = (o2), inner sep=10pt] {};
    \end{pgfonlayer}               
   \end{tikzpicture}
  }
  \label{subfig:eblocks:4}
 }
 \qquad
 \subfloat[Replace them by parallel $\neq_2$'s.]{
  \makebox[\capWidth][c]{
   \begin{tikzpicture}[scale=\scale,transform shape,node distance=1.5*\nodeDist,semithick]
    \node[external] (o1) {};
    \node[external, above left  of=o1] (e1-nw) {$+$};
    \node[external, above right of=o1] (e1-ne) {$+$};
    \node[external, below left  of=o1] (e1-sw) {$-$};
    \node[external, below right of=o1] (e1-se) {$-$};

    \node[external, right of=o1] (aux) {};

    \node[external, right of=aux] (o2) {};
    \node[external, above left  of=o2] (e2-nw) {$+$};
    \node[external, above right of=o2] (e2-ne) {$-$};
    \node[external, below left  of=o2] (e2-sw) {$-$};
    \node[external, below right of=o2] (e2-se) {$+$};

    \path (e1-nw) edge[out=-45,  in= 45] node[square] (dis1-1) {} (e1-sw)
          (e1-se) edge[out=135, in=-135] node[square] (dis1-2) {} (e1-ne)
          (dis1-1) edge[dotted] (dis1-2);
    \path (e2-nw) edge[out=-45,  in= 45] node[square] (dis2-1) {} (e2-sw)
          (e2-se) edge[out=135, in=-135] node[square] (dis2-2) {} (e2-ne)
          (dis2-1) edge[dotted] (dis2-2);
    \begin{pgfonlayer}{background}
     \node[draw=\borderColor,thick,rounded corners,fit = (dis1-1) (dis1-2), inner sep=8pt] {};
     \node[draw=\borderColor,thick,rounded corners,fit = (dis2-1) (dis2-2), inner sep=8pt] {};
    \end{pgfonlayer}
   \end{tikzpicture}
  }
  \label{subfig:eblocks:diseq}
 }
 \caption{Arity $4$ \eblock{k}s.}
 \label{fig:eblocks:4}
\end{figure}

Formally we define a \emph{contraction} process on the connected graph of \eblock{k} with dangling edges.
Recursively, for any non-dangling non-loop edge $e$, 
we shrink it to a point, maintaining planarity.
The local cyclic orders of incident edges of the two vertices of $e$
are spliced along $e$ to form the cyclic order of the new vertex.
For any loop we simply remove it.
This contraction process ends in a single point with a cyclic order of the dangling edges.
Figure~\ref{subfig:eblocks:4} depicts the two possibilities of \eblock{k}s of arity 4 up to a rotation.
An \eblock{k} of arity 4 can be viewed as a pair of
$\neq_2$ in parallel, but there is a correlation between them,
namely their support vectors are paired up in a unique way.
If we replace the contracted \eblock{k} of arity 4 by
two parallel edges as indicated in Fig~\ref{subfig:eblocks:diseq},
one can revert back to a planar realization in the \eblock{k}
as it connects to the rest of the graph. This can be seen by reversing
the contraction process step by step.

We will show in the following lemma how to replace \eblock{k} of arity $4$ by 
some other signatures while keeping track of the Holant value.
We also observe that this tractable set is compatible with binary $\neq_2$
and unary $[1,0]$ or $[0,1]$ signatures.

\begin{lemma} \label{lem:Z:Eqd-PM3}
 For any integer $k \ge 6$,
 \plholant{{\neq}_2}{{=}_k,\mathcal{EO},\neq_2,[1,0],[0,1]} is tractable.
\end{lemma}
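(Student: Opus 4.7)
The plan is to exhibit a polynomial-time algorithm that repeatedly simplifies an instance of $\plholant{{\neq}_2}{{=}_k,\mathcal{EO},\neq_2,[1,0],[0,1]}$, driven by Lemma~\ref{lem:planar_bipartite}. I would proceed in three phases, iterated to a fixed point.

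First, apply the preprocessing already described in the paragraphs preceding the lemma: propagate any unary pinning $[1,0]$ or $[0,1]$ through adjacent $\neq_2$, $=_k$, or \exactone{d} vertices (producing further pinnings recursively); merge any pair of \exactone{} signatures joined by a $\neq_2$ into a single \exactone{}; and form the maximal \eblock{k} components so that the resulting signature grid is bipartite, with \eblock{k}s on one side and \exactone{d}s on the other. Any trivial \eblock{k} immediately yields Holant value~$0$.

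Second, eliminate every \eblock{k} of arity less than $6$. By~\eqref{eqn:eblock:mod} together with $0 \le n_+, n_- < k$, any such \eblock{k} satisfies $n_+ = n_-$, so its arity is $2$ or $4$. An arity-$2$ \eblock{k} is simply a $\neq_2$, which can be absorbed by merging the two \exactone{} signatures it joins. For an arity-$4$ \eblock{k}, contracting along its internal tree of $=_k$ and $\neq_2$ edges yields one of the two planar sign patterns shown in Figure~\ref{subfig:eblocks:4}; in each case the two support vectors are exactly those of two parallel $\neq_2$ gadgets under a specific planar pairing of the four external wires, as depicted in Figure~\ref{subfig:eblocks:diseq}. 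Replacing the \eblock{k} by this pair of $\neq_2$s (and re-absorbing as in the arity-$2$ case) strictly decreases the number of \eblock{k}s while preserving both planarity and Holant value.

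Third, after the first two phases, every \eblock{k} has arity at least $6$ and every \exactone{d} has arity at least $3$. Lemma~\ref{lem:planar_bipartite} then forces the existence of parallel edges between some \exactone{d} and some \eblock{k}. Such parallel edges admit an immediate simplification: if the two corresponding external wires of the \eblock{k} carry the same sign, both must take equal values, which combined with the \exactone{d} constraint forces both to $0$, reducing \exactone{d} to $\exactone{d-2}$ and shrinking the \eblock{k} by two wires; if they carry opposite signs, then the $+$ wire must be $1$ (consuming the unique $1$ of the \exactone{d}) and all remaining wires of the \exactone{d} become $[1,0]$ pinnings. In either case the total edge count strictly decreases, and the result is fed back into phase one.

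The main obstacle, and the step that will need the most care, is the arity-$4$ \eblock{k} replacement in phase two: one must verify that the two planar sign patterns really do exhaust all possibilities up to rotation, that the pair of parallel $\neq_2$s faithfully encodes the same support as the \eblock{k}, and that the replacement can always be realized planarly in the region that the contracted \eblock{k} occupied (which should follow by reversing the contraction step by step). Once this is established, termination in polynomial time is clear, since each outer iteration strictly reduces either the number of edges or the number of \eblock{k}s, and each phase is itself polynomial in the current size of the instance.
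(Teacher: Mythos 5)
The central claim in your phase two is false, and it is the crux of the whole argument. Replacing an arity-$4$ \eblock{k} by two parallel $\neq_2$s does \emph{not} preserve the Holant value. A nontrivial \eblock{k} of arity~$4$ has exactly two support vectors, complements of each other: once you fix which side of the block is $0$ and which is $1$, all four external wires are determined. Two parallel $\neq_2$s have four support vectors, because each $\neq_2$ can flip independently. Concretely, with the cyclic sign pattern $(+,+,-,-)$ the \eblock{k} allows $0011$ and $1100$ only, while the two parallel $\neq_2$s of Figure~\ref{subfig:eblocks:diseq} additionally allow $0110$ and $1001$. The dotted lines in that figure denote precisely the correlation that the $\neq_2$-pair fails to enforce. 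So the ``re-absorbing'' you propose would alter the number of satisfying assignments, and the algorithm as you describe it computes the Holant of a \emph{different} signature grid from the one it was given.

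The paper's proof makes exactly this move but is explicit that it is illegitimate as an equivalence-preserving rewrite: the replacement produces a new instance $\Omega_1$ whose satisfying assignments form a \emph{superset} of those of $\Omega$, and this is used \emph{only} to certify pinned edges. If an edge is pinned to a known value in $\Omega_1$, then (since $\Omega$'s assignments are a subset) the corresponding edge is pinned to the same value in $\Omega$. One then discards $\Omega_1$, returns to the original $\Omega$, applies that single pinning there, and repeats. Your proposal has no mechanism to return to $\Omega$; it keeps recursing on the altered instance. A secondary omission: after the replacement, the new $\neq_2$s together with \exactone{} vertices can close up into a cycle (a nontrivial ``$EO$-$Eq$-$4$-block''), and your ``re-absorbing'' does not terminate with a single \exactone{} in that case; the paper handles cycles separately by observing that they force the non-cycle incident edges to~$0$. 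Without both of these repairs the algorithm is not merely incomplete but incorrect.
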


\begin{proof}
 Let $\Omega$ be an instance of \plholant{{\neq}_2}{{=}_k,\mathcal{EO},\neq_2,[1,0],[0,1]}.
 Without loss of generality,
 we assume that $\Omega$ is connected.
Any occurrence of $\neq_2$ of the right hand side can be removed as follows:
It is connected to two adjacent copies of $\neq_2$ of the left hand side.
We replace these 3 copies of $\neq_2$   by a single $\neq_2$ 
from the left hand side.

The given signatures have no weight, however the proof below can be 
adapted to the weighted case. For the unweighted case,
 we only need to count the number of satisfying assignments.
 We call an edge pinned if it has the same value in all satisfying assignments,
\emph{if there is any}.
 Clearly any edge incident to a vertex assigned $[1,0]$ or $[0,1]$ is pinned.
 
 When an edge is pinned to a known value,
 we can get a smaller instance of the problem
 $\PlHolant({\neq}_2\mid =_k,\mathcal{EO},\neq_2,[1,0],[0,1])$
 without changing the number of satisfying assignments.
In our algorithm we  may also find a contradiction and simply return~$0$.
 If $e$ is a pinned edge,
 then it is adjacent to another edge $e'$ via $\neq_2$ on the left hand side,
and both  $e$ and $e'$ are pinned.
 We remove $e$, $e'$, and $\neq_2$, and perform the following 
on  $e$  (and on $e'$ as well).
 If the other endpoint of  $e$ is  $u = [1,0]$ or $[0,1]$
we either remove that  $u$ if the pinned value on
$e$ is consistent with $u$, or return 0 otherwise.
 If the other endpoint of  $e$ is $=_k$,
 then all  edges of this $=_k$ are pinned to the same value which we can
recursively apply.
 If the other endpoint of $e$ is \exactone{d} $\in \mathcal{EO}$,
 then we replace this signature by \exactone{d-1} when the pinned value is~$0$;
 or if the pinned value is~$1$ then 
 the remaining $d-1$ edges of this \exactone{d} are pinned to~$0$
which we recursively apply.
 Notice that we may create an \exactone{2} (i.e.~$\neq_2$) on the right hand side when
we pin $0$ on \exactone{3}. Such $\neq_2$'s are replaced as described at the beginning.
 It is easy to see that all these procedures do not change 
the number of satisfying assignments, and work in polynomial time.

 We claim that there always exists an edge in $\Omega$ that is pinned,
 unless $\Omega$ does not contain $=_k$,
 or does not contain \exactone{d} functions (for some $d \ge 3$),
 or there is a contradiction.
Furthermore if there are  $=_k$ or  \exactone{d} functions (for some $d \ge 3$),
 in polynomial time  we can find a pinned edge with a known value,
or return that  there is a contradiction.
(If there is a contradiction in $\Omega$,
we may still return a purported pinned edge with a known value,
which we can apply and simplify $\Omega$.
The contradiction will eventually be found.)
%
If  $\Omega$ does not contain $=_k$,
 or does not contain \exactone{d} functions (for some $d \ge 3$),
 then the problem is tractable,
 since $\Omega$ is an instance of $\mathscr{M}$,
 or an instance of $\mathscr{P}$.
 The lemma follows from the claim,
 for we either recurse on a smaller instance or have a tractable instance.

Suppose  $\Omega$ is an instance where
at least one $=_k$ and at least one \exactone{d} $\in \mathcal{EO}$ appear.
We assume no $\neq_2$ appears on the right hand side.
If any  $[1,0]$ or $[1,0]$ appear, then we have found a
pinned edge with a known value. 
 Hence we may assume neither $[1,0]$ nor $[1,0]$ appears in $\Omega$.

 If a signature \exactone{d} $\in \mathcal{EO}$
 is connected to itself by a self-loop through a $\neq_2$,
 then there are two choices for the assignment on this pair of edges through the $\neq_2$,
 but the remaining $d-2 \ge 1$ edges are pinned to~$0$.
We can keep track of the factor 2 and have found a pinned edge
with a known value.
 Thus we may assume there are
no self-loops via $\neq_2$  on \exactone{} signatures.

 Next we consider the case that two
separate signatures \exactone{d} and \exactone{\ell} 
from $\mathcal{EO}$
are connected by some number of $\neq_2$'s.
 Depending on the number of connecting edges,
 there are three cases:
 \begin{enumerate}
  \item \label{item:exactone} The connection is by a single $\neq_2$.
   We contract the connecting edge, maintaining planarity,
and replace these three signatures by an \exactone{d + \ell - 2} to 
get a new instance $\Omega'$.
   If an edge is pinned in $\Omega'$ then it is also pinned in $\Omega$
to the same value.
We continue with $\Omega'$.
  \item The connection is by two $\neq_2$'s.
   There are two choices for the assignment on these two pairs of edges 
through  $\neq_2$,
   but the remaining $d + \ell - 4 \ge 2$ edges are pinned to $0$.
  \item The connection is by at least three $\neq_2$'s.
   The three $\neq_2$'s cannot be all satisfied,
   so there is no satisfying assignment,
   a contradiction. We return the value 0.
 \end{enumerate}
 Hence, we may assume 
 there is no connection via any number of $\neq_2$'s among \exactone{} signatures.

Define an \eblock{k} as a connected component composed of $=_k$ and $\neq_2$.
All external connecting edges of each \eblock{k} are marked with $+$ or $-$
and this can be found by testing bipartiteness of a \eblock{k}
where we treat $\neq_2$'s as edges. If any \eblock{k} is not
bipartite, we return 0.
 We contract all \eblock{k}s and maintain planarity.
For each \eblock{k}
we contract two vertices that are connected by an edge, one edge at a time,
and remove self loops in this contraction process.
 If a trivial \eblock{k} appears,
 then there is no satisfying assignment,
 we return 0.
 Thus we may assume all \eblock{k}s are nontrivial.
 If there is a nontrivial \eblock{k} of arity~$2$,
 as discussed earlier,
 its signature is $\neq_2$.
We replace it with an edge labeled by $\neq_2$ to form an instance $\Omega'$,
maintaining planarity,
  such that any pinned edge in $\Omega'$ corresponds to a pinned edge in $\Omega$.
This new edge is between \exactone{} signatures and can be dealt with as
described earlier.
 So we may assume the arity of any \eblock{k} is at least~$4$.
 Since $k\ge 6$,
 the only possible \eblock{k}s of arity~$4$ are those in Figure~\ref{subfig:eblocks:4} up to a rotation.
 Since there is at least one \exactone{d} signature with $d\ge 3$,
 forming \eblock{k}s does not consume all of $\Omega$.
 
After these steps we may consider
$\Omega$ a bipartite graph, with
 one side consisting of
\eblock{k}s and the other side \exactone{} signatures.
And they are now connected by edges labeled by $=_2$.

 Suppose there are parallel edges between an \eblock{k} and an \exactone{d} signature.
 We show that this always leads to some pinned edges.
 If two parallel edges are marked by the same sign in the \eblock{k},
 then they must be pinned to~$0$.
 If they are marked by different signs,
 then the remaining $d-2 \ge 1$ edges of the \exactone{d} signature must be pinned to~$0$.
 Therefore, we may assume
there are no parallel edges between any \eblock{k} and any \exactone{} signature.

\begin{figure}[t]
 \centering
 \def\capWidth{6.5cm}
 \captionsetup[subfigure]{width=\capWidth}
 \subfloat[An \eoeqblock.
  Triangles are assigned \exactone{} signatures
  and circles are \eblock{k}s of arity~$4$.]{
  \makebox[\capWidth][c]{
   \begin{tikzpicture}[scale=\scale,transform shape,node distance=1.5*\nodeDist,semithick]
    \node[internal, label={10:{$+$}}, label={170:{$+$}}, label={190:{$-$}}, label={-10:{$-$}}] (o1) {};
    \node[triangle, above right of=o1] (p1) {};
    \node[triangle, below right of=o1] (p2) {};    
    \node[internal, below right of=p1, label={10:{$+$}}, label={170:{$-$}}, label={190:{$+$}}, label={-10:{$-$}}] (o2) {};    
    \node[triangle, below left  of=o1] (p3) {};
    \node[triangle, above left  of=o1] (p4) {};
    \node[triangle, above right of=o2] (p5) {};
    \node[triangle, below right of=o2] (p6) {};
    \path (o1) edge (p1)
               edge (p2)
               edge (p3)
               edge (p4);
    \path (o2) edge (p1)
               edge (p2)
               edge (p5)
               edge (p6);
    
    \node[external, above of=p4] (ext1) {};
    \node[external, left  of=p4] (ext2) {};
    \node[external, below of=p3] (ext3) {};
    \node[external, left of=p3]  (ext4) {};
    \node[external, below left of=p3]  (ext5) {};    
    \path (p4) edge (ext1)
               edge (ext2)
          (p3) edge (ext3)
               edge (ext4)
               edge (ext5);
    
    \node[external, above of=p5] (ext7) {};
    \node[external, right of=p5] (ext8) {};
    \node[external, below of=p6] (ext9) {};
    \node[external, right of=p6] (ext10) {};
    \path (p5) edge (ext7)
               edge (ext8)
          (p6) edge (ext9)
               edge (ext10);
    
    \node[external, above right of=p1] (ext11) {};
    \node[external, above left  of=p1] (ext12) {};
    \node[external, below of=p2] (ext13) {};
    \path (p1) edge (ext11)
               edge (ext12)
          (p2) edge (ext13);               
    
    \begin{pgfonlayer}{background}
     \node[draw=\borderColor,thick,rounded corners,fit = (p4) (p6), inner sep=14pt] {};
    \end{pgfonlayer}               
   \end{tikzpicture}
  }
  \label{subfig:eoeqblock}
 }
 \qquad
 \subfloat[Break the \eoeqblock\ into three components.
  Squares are assigned $\neq_2$.
  The component in the middle contains a cycle,
  and hence is degenerate.
  The other two are equivalent to \exactone{} signatures.]{
  \makebox[\capWidth][c]{
   \begin{tikzpicture}[scale=\scale,transform shape,node distance=1.5*\nodeDist,semithick]
    \node[external, ] (o1) {};
    \node[triangle, above right of=o1] (p1) {};
    \node[triangle, below right of=o1] (p2) {};    
    \node[external, below right of=p1] (o2) {};    
    \node[triangle, below left  of=o1] (p3) {};
    \node[triangle, above left  of=o1] (p4) {};
    \node[triangle, above right of=o2] (p5) {};
    \node[triangle, below right of=o2] (p6) {};
    
    \path (p1) edge[out=-45,  in= 45] node[square] (dis12-1) {} (p2)
          (p1) edge[out=-135, in=135] node[square] (dis12-2) {} (p2);
    \path (p5) edge[out=-135, in=135] node[square] (dis56) {} (p6);
    \path (p3) edge[out= 45,  in=-45] node[square] (dis34) {} (p4);
    \path (dis12-1) edge[dotted] (dis56);
    \path (dis12-2) edge[dotted] (dis34);
    
    \node[external, above of=p4] (ext1) {};
    \node[external, left  of=p4] (ext2) {};
    \node[external, below of=p3] (ext3) {};
    \node[external, left of=p3]  (ext4) {};
    \node[external, below left of=p3]  (ext5) {};    
    \path (p4) edge (ext1)
               edge (ext2)
          (p3) edge (ext3)
               edge (ext4)
               edge (ext5);
    
    \node[external, above of=p5] (ext7) {};
    \node[external, right of=p5] (ext8) {};
    \node[external, below of=p6] (ext9) {};
    \node[external, right of=p6] (ext10) {};
    \path (p5) edge (ext7)
               edge (ext8)
          (p6) edge (ext9)
               edge (ext10);
    
    \node[external, above right of=p1] (ext11) {};
    \node[external, above left  of=p1] (ext12) {};
    \node[external, below of=p2] (ext13) {};
    \path (p1) edge (ext11)
               edge (ext12)
          (p2) edge (ext13);               
    
    \begin{pgfonlayer}{background}
     \node[draw=\borderColor,thick,rounded corners,fit = (p4) (p3) (dis34), inner sep=10pt] {};
     \node[draw=\borderColor,thick,rounded corners,fit = (p5) (p6) (dis56), inner sep=10pt] {};
     \node[draw=\borderColor,thick,rounded corners,fit = (p1) (p2) (dis12-1) (dis12-2), inner sep=10pt] {};
    \end{pgfonlayer}               
   \end{tikzpicture}
  }
  \label{subfig:eoeqblocks:break}
 }
 \caption{\eoeqblock s}
 \label{fig:eoeqblocks}
\end{figure}

 The next thing we do is to consider \eblock{k}s of arity~$4$ with \exactone{} signatures together.
 Call a connected component consisting of \eblock{k}s of arity~$4$
and \exactone{} an \eoeqblock.
 Figure~\ref{subfig:eoeqblock} illustrates an example.
 Notice that the two possibilities of \eblock{k}s of arity~$4$ 
 can be viewed as two parallel $\neq_2$'s but with some correlation between them.
 This is illustrated in Figure~\ref{subfig:eblocks:diseq}.
 Note that the two dotted lines in Figure~\ref{subfig:eblocks:diseq} represent different correlations.

At this point we would like to replace every arity~$4$ \eblock{k}
 by two parallel $\neq_2$'s. However this replacement destroys the
equivalence of the Holant values, before and after.

\emph{The surprising move of this proof is that we shall do so anyway!}

 Suppose we ignore the correlation for the time being
 and replace every arity~$4$ \eblock{k} 
 by two parallel $\neq_2$'s as in Figure~\ref{subfig:eblocks:diseq}.
 This replacement produces a \emph{planar} signature grid $\Omega_1$.
 Every edge in $\Omega_1$ corresponds to a unique edge in $\Omega$.
 The set of satisfying assignments of $\Omega_1$
 is a superset of that of $\Omega$.
 Moreover, if there is an edge pinned in $\Omega_1$ to a known value,
 the corresponding edge is also pinned in $\Omega$ to the same value.
Once we find that in $\Omega_1$
 we revert back to work in $\Omega$ and apply the pinning to
the pinned edge.

 All that remains to be shown is that pinning always happens in $\Omega_1$.
 Each \eoeqblock\ splits into
 some number of connected components in $\Omega_1$.
 If any component contains a cycle
 (which must alternate between $\neq_2$, which are
the newly created ones
from the \eblock{k}s of arity 4, and \exactone{d} signatures for $d\ge 3$),
 then any edges not in the cycle but incident to some vertex in the cycle is 
pinned to $0$.
Moreover such edges must exist,
 for \exactone{d} signatures in the cycle are of arity at least $3$.
Note that the cycle has even length, and  there are
exactly two satisfying assignments, which assign exactly one 0 and
 one 1 to the two cycle edges incident to each \exactone{d} signature.  
This produces pinned edges.

 Hence we may assume there are no cycles in these components,
 and every such component forms a tree,
 whose vertices are \exactone{} functions and edges are $\neq_2$'s.
 Suppose there are $n\ge 2$ vertices in such a tree.
 As discussed in item~\ref{item:exactone} above,
 the whole tree is an \exactone{t} function for some arity $t$.
 Since each vertex in the tree has degree at least $3$,
 $t\ge 3n-2(n-1)=n+2\ge 4$.
 We replace these components by \exactone{t}'s.

 Thus, each connected component in the graph underlying $\Omega_1$ 
 is a planar bipartite graph with \eblock{k}s of arity at least~$6$ on one side
 and \exactone{d} signatures of arity at least $3$ on the other.
 By Lemma~\ref{lem:planar_bipartite},
 no component is simple,
 which means that there are parallel edges between some \eblock{k} and some \exactone{d} signature.
 As discussed earlier, there must exist some pinned edge,
and we can find a pinned edge with a known  value in polynomial time.
This finishes the proof.
\end{proof}

Unlike the situation in Lemma~\ref{lem:planar_bipartite},
a planar $(5,3)$-regular bipartite graph \emph{can} be simple.
However,
we show that such graphs must have a special induced subgraph.
We call this structure a ``wheel'',
which is pictured in Figure~\ref{fig:Eq5-PM3}.
There is a vertex $v$ of degree~$5$ in the middle,
and all faces adjacent to this vertex are $4$-gons (i.e.~quadrilaterals).
Moreover,
at least four neighbors of $v$ have degree~$3$.
Depending on the degree of the fifth neighbor
(whether it is~$3$ or not),
we have two types of wheel, 
which are pictured in Figure~\ref{subfig:wheel:type1} and Figure~\ref{subfig:wheel:type2}.

\begin{figure}[t]
 \centering
 \subfloat[Type 1]{
  \begin{tikzpicture}[scale=\scale,transform shape,node distance=1.5*\nodeDist,semithick]
   \draw    (0,0)    node[internal]  (o) {};
   \draw    (0,3)    node[internal] (e1) {};
   \draw    (2,3)    node[triangle] (p1) {};
   \draw   (-2,3)    node[triangle] (p2) {};
   \draw    (4,1)    node[internal] (e2) {};
   \draw    (4,-1)   node[triangle] (p3) {};
   \draw   (-4,1)    node[internal] (e3) {};
   \draw   (-4,-1)   node[triangle] (p4) {};
   \draw    (0,-3)   node[triangle] (p5) {};
   \draw    (2,-3)   node[internal] (e4) {};
   \draw   (-2,-3)   node[internal] (e5) {};
   
   \path (o) edge (p1)
             edge (p2)
             edge (p3)
             edge (p4)
             edge (p5);
   \path (e1) edge (p1)
              edge (p2);
   \path (e2) edge (p1)
              edge (p3);
   \path (e3) edge (p2)
              edge (p4);
   \path (e4) edge (p3)
              edge (p5);
   \path (e5) edge (p4)
              edge (p5);
   
   \draw (-0.5,4.5) node[external] (ex-11) {};
   \draw  (0.5,4.5) node[external] (ex-13) {};
   \path (e1) edge node[external,pos=0.9] (inv-11) {} (ex-11)
              edge node[external,pos=0.9] (inv-12) {} (ex-13);
   \path (inv-11) edge[opacity=0] node[external,opacity=1] () {$\cdots$} (inv-12);
   
   \draw  (5.5,0.5) node[external] (ex-21) {};
   \draw  (5.5,1.5) node[external] (ex-23) {};
   \path (e2) edge node[external,pos=0.9] (inv-21) {} (ex-21)
              edge node[external,pos=0.9] (inv-22) {} (ex-23);
   \path (inv-21) edge[dotted] (inv-22);
   
   \draw (-5.5,0.5) node[external] (ex-31) {};
   \draw (-5.5,1.5) node[external] (ex-33) {};
   \path (e3) edge node[external,pos=0.9] (inv-31) {} (ex-31)
              edge node[external,pos=0.9] (inv-32) {} (ex-33);
   \path (inv-31) edge[dotted] (inv-32);
   
   \draw (1.5,-4.5) node[external] (ex-41) {};
   \draw (2.5,-4.5) node[external] (ex-43) {};
   \path (e4) edge node[external,pos=0.9] (inv-41) {} (ex-41)
              edge node[external,pos=0.9] (inv-42) {} (ex-43);
   \path (inv-41) edge[opacity=0] node[external,opacity=1] () {$\cdots$} (inv-42);
   
   \draw (-1.5,-4.5) node[external] (ex-51) {};
   \draw (-2.5,-4.5) node[external] (ex-53) {};
   \path (e5) edge node[external,pos=0.9] (inv-51) {} (ex-51)
              edge node[external,pos=0.9] (inv-52) {} (ex-53);
   \path (inv-51) edge[opacity=0] node[external,opacity=1] () {$\cdots$} (inv-52);
   
   \begin{pgfonlayer}{background}
    \node[draw=\borderColor,thick,rounded corners,fit = (e1) (e2) (e3) (e4) (e5) (p1) (p2) (p3) (p4) (p5),inner sep=8pt] {};
   \end{pgfonlayer}
  \end{tikzpicture}
  \label{subfig:wheel:type1}
 }
 \qquad
 \subfloat[Type 2]{
  \begin{tikzpicture}[scale=\scale,transform shape,node distance=1.5*\nodeDist,semithick]
   \draw    (0,0)    node[internal]  (o) {};
   \draw    (0,3)    node[internal] (e1) {};
   \draw    (2,3)    node[triangle] (p1) {};
   \draw   (-2,3)    node[triangle] (p2) {};
   \draw    (4,1)    node[internal] (e2) {};
   \draw    (4,-1)   node[triangle] (p3) {};
   \draw   (-4,1)    node[internal] (e3) {};
   \draw   (-4,-1)   node[triangle] (p4) {};
   \draw    (0,-3)   node[triangle] (p5) {};
   \draw    (2,-3)   node[internal] (e4) {};
   \draw   (-2,-3)   node[internal] (e5) {};
   
   \path (o) edge (p1)
             edge (p2)
             edge (p3)
             edge (p4)
             edge (p5);
   \path (e1) edge (p1)
              edge (p2);
   \path (e2) edge (p1)
              edge (p3);
   \path (e3) edge (p2)
              edge (p4);
   \path (e4) edge (p3)
              edge (p5);
   \path (e5) edge (p4)
              edge (p5);
   
   \draw (-0.5,4.5) node[external] (ex-11) {};
   \draw  (0.5,4.5) node[external] (ex-13) {};
   \path (e1) edge node[external,pos=0.9] (inv-11) {} (ex-11)
              edge node[external,pos=0.9] (inv-12) {} (ex-13);
   \path (inv-11) edge[opacity=0] node[external,opacity=1] () {$\cdots$} (inv-12);
   
   \draw  (1.5,4.5) node[external] (ex-p1) {};
   \draw  (2.5,4.5) node[external] (ex-p3) {};
   \path (p1) edge node[external,pos=0.9] (inv-p1) {} (ex-p1)
              edge node[external,pos=0.9] (inv-p2) {} (ex-p3);
   \path (inv-p1) edge[opacity=0] node[external,opacity=1] () {$\cdots$} (inv-p2);
   
   \draw  (5.5,0.5) node[external] (ex-21) {};
   \draw  (5.5,1.5) node[external] (ex-23) {};
   \path (e2) edge node[external,pos=0.9] (inv-21) {} (ex-21)
              edge node[external,pos=0.9] (inv-22) {} (ex-23);
   \path (inv-21) edge[dotted] (inv-22);
   
   \draw (-5.5,0.5) node[external] (ex-31) {};
   \draw (-5.5,1.5) node[external] (ex-33) {};
   \path (e3) edge node[external,pos=0.9] (inv-31) {} (ex-31)
              edge node[external,pos=0.9] (inv-32) {} (ex-33);
   \path (inv-31) edge[dotted] (inv-32);
   
   \draw (1.5,-4.5) node[external] (ex-41) {};
   \draw (2.5,-4.5) node[external] (ex-43) {};
   \path (e4) edge node[external,pos=0.9] (inv-41) {} (ex-41)
              edge node[external,pos=0.9] (inv-42) {} (ex-43);
   \path (inv-41) edge[opacity=0] node[external,opacity=1] () {$\cdots$} (inv-42);
   
   \draw (-1.5,-4.5) node[external] (ex-51) {};
   \draw (-2.5,-4.5) node[external] (ex-53) {};
   \path (e5) edge node[external,pos=0.9] (inv-51) {} (ex-51)
              edge node[external,pos=0.9] (inv-52) {} (ex-53);
   \path (inv-51) edge[opacity=0] node[external,opacity=1] () {$\cdots$} (inv-52);
   
   \begin{pgfonlayer}{background}
    \node[draw=\borderColor,thick,rounded corners,fit = (e1) (e2) (e3) (e4) (e5) (p1) (p2) (p3) (p4) (p5),inner sep=8pt] {};
   \end{pgfonlayer}
  \end{tikzpicture}
  \label{subfig:wheel:type2}
 }
 \caption{Two types of wheels. Each circle is an \eblock{5} and triangle an \exactone{} signature.}
 \label{fig:Eq5-PM3}
\end{figure}
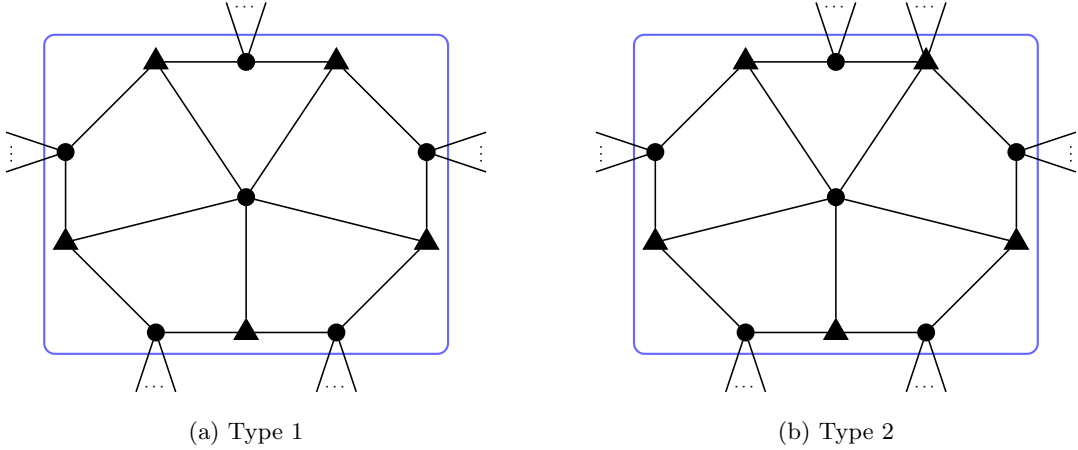

\begin{lemma} \label{lem:planar_bipartite:35}
 Let $G = (L \union R, E)$ be a planar bipartite graph with parts $L$ and $R$.
 Every vertex in $L$ has degree at least\ $5$ and every vertex in\ $R$ has degree at least\ $3$.
 If $G$ is simple, then there exists one of the two wheel structures in Figure~\ref{fig:Eq5-PM3} in $G$.
\end{lemma}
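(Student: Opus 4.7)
The plan is to prove this by a discharging argument: assume for contradiction that $G$ is simple and contains no wheel structure of either type, then derive a contradiction via Euler's formula. I would assign each vertex $x$ the initial charge $c(x) = d(x) - 4$ and each face $F$ the initial charge $c(F) = \ell(F) - 4$. Using $\sum_{v \in L} d(v) = \sum_{u \in R} d(u) = |E|$, $\sum_F \ell(F) = 2|E|$, and Euler's formula $|V| - |E| + |F| = 2$, the total initial charge is exactly $-8$. Since $G$ is simple and bipartite, every face has even length at least $4$, so every face has nonnegative charge; every $L$-vertex has charge $\ge 1$; and only $R$-vertices of degree exactly $3$ carry negative charge (namely $-1$ each).

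Next I would apply two discharging rules. \emph{Rule} (R1): every $L$-vertex sends $1/3$ to each adjacent $R$-vertex of degree $3$. \emph{Rule} (R2): every face of length $\ge 6$ sends $2/3$ to each degree-$5$ $L$-vertex on its boundary. After redistribution, a degree-$3$ $R$-vertex receives $3 \cdot 1/3 = 1$ from its three $L$-neighbors and ends at $0$; a higher-degree $R$-vertex still has charge $\ge 0$; an $L$-vertex of degree $d \ge 6$ ends at at least $d - 4 - d/3 = (2d - 12)/3 \ge 0$; a $4$-face sends nothing and stays at $0$; and a face of length $\ell \ge 6$ loses at most $(\ell/2)(2/3)$, ending at at least $2\ell/3 - 4 \ge 0$.

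The crux is a degree-$5$ $L$-vertex $v$ with $k$ degree-$3$ $R$-neighbors and $b$ incident faces of length $\ge 6$, whose new charge is $1 - k/3 + 2b/3$. If $k \le 3$ this is trivially $\ge 0$; if $k \ge 4$, the no-wheel hypothesis says that not all five incident faces of $v$ can be $4$-gons (otherwise $v$ together with its surroundings is a wheel), so $b \ge 1$, giving $1 - 4/3 + 2/3 = 1/3 \ge 0$ when $k = 4$ and $1 - 5/3 + 2/3 = 0$ when $k = 5$. Thus every new charge is nonnegative, yet the new charges sum to the preserved total $-8$, a contradiction. So some wheel must exist.

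The main subtlety will be calibrating the coefficients $1/3$ and $2/3$ so that every case balances: in particular, a $6$-face whose boundary contains three degree-$5$ $L$-vertices, and a degree-$5$ $L$-vertex $v$ with $k = 5$ and a single incident $6$-face, both end exactly at $0$ with no slack. Any smaller value for R2 would fail the $k = 5$ subcase, and any larger one would overdraw the $6$-face, so the choice is essentially forced. Once the discharging is set up, each verification is a short arithmetic check and the conclusion follows immediately from the discrepancy with the global total of $-8$.
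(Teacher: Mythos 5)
Your proof is correct, and it arrives at the same basic mechanism as the paper---an Euler-formula counting argument---but packages it as a standard $(d-4)/(\ell-4)$ discharging scheme rather than the paper's ad hoc ``score.'' The paper assigns each vertex a score built from $+1$, a $1/k$ contribution from each incident $k$-gon, and edge penalties of $-7/12$ or $-5/12$ (depending on whether a degree-$3$ endpoint is present), so that the scores sum to $|V|-|E|+|F|=2>0$; it then shows only degree-$5$ $L$-vertices can be positive and is forced into an extra matching step to rule out the troublesome configuration of a degree-$5$ $L$-vertex with five degree-$3$ neighbors and exactly one $\geq 6$-gon face (whose score is $1/12 > 0$). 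Your rule (R2), transferring $2/3$ from every $\geq 6$-face to each degree-$5$ $L$-vertex on its boundary, absorbs that case automatically: it is precisely calibrated so that the $k=5$, $b=1$ case lands at $0$ and a $6$-face with the maximum three $L$-corners also lands at $0$, eliminating the need for the paper's matching argument entirely. The arithmetic checks out in every case (degree-$3$ $R$-vertices end at $0$; degree-$\geq 4$ $R$-vertices, degree-$\geq 6$ $L$-vertices, $4$-faces, and $\geq 6$-faces all end $\geq 0$; degree-$5$ $L$-vertices end $\geq 0$ under the no-wheel hypothesis), contradicting the total of $-8$. One small presentational point you share with the paper: both treatments silently identify ``degree-$5$ $L$-vertex with all five incident faces $4$-gons and at least four degree-$3$ $R$-neighbors'' with the drawn wheel structure (in particular, that the five outer $L$-vertices are distinct); this follows from simplicity and the degree-$\geq 3$ lower bound on $R$ via a short Jordan-curve observation, but neither proof spells it out.
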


\begin{proof}
  Let $V=L\cup R$ be the set of vertices and
let $F$ be the set of faces.
  We assign a \emph{score} $s_v$ to each vertex $v\in V$.
  We will define $s_v$ so that $\sum_{v\in V}s_v=|V|-|E|+|F|=2>0$.
  The base score is $+1$ for each vertex, which accounts for $|V|$.
  For each $k$-gon face, we assign $\frac{1}{k}$ to each of its vertex.
  This accounts for $|F|$.
  As $G$ is a bipartite and a simple graph,
  $k\ge 4$ and a score coming from a face to a vertex
 is at most $\frac{1}{4}$.

  For $-|E|$, we separate two cases.
  For any edge if one of the two endpoints has degree $3$,
  we give the degree $3$ vertex a score of $-\frac{7}{12}$,
  and the other one $-\frac{5}{12}$.
  This is well defined because all degree $3$ vertices are in $R$.
  If the endpoints are not of degree $3$, we give each endpoint $-\frac{1}{2}$.
  This accounts for $-|E|$.

  Now we claim that $s_v\le 0$ unless $v\in L$ and has degree $5$.
  Suppose $v\in L$ and has degree $d\ge 6$, then
  \begin{align*}
    s_v&\le 1+\frac{d}{4} - \frac{5}{12}d =1-\frac{d}{6}\le 0.
  \end{align*}
  Now suppose $v\in R$ and $v$ has degree $d\ge 4$. 
  Then every edge adjacent to $v$ gives a score $-\frac{1}{2}$.
  Hence,
  \begin{align*}
    s_v&\le 1+\frac{d}{4} - \frac{1}{2}d=1-\frac{d}{4}\le 0.
  \end{align*}
  The remaining case is that $v\in R$ and $v$ has degree $3$.
  Then,
  \begin{align*}
    s_v&\le 1+\frac{d}{4} - \frac{7}{12}d=1-\frac{d}{3}\le 0.
  \end{align*}
  The claim is proved.

  Since the total score is positive, 
  there must exist $v\in L$,  $v$ has degree $5$ and $s_v>0$.
  We then claim that there must exist such a $v$ so that all adjacent faces are $4$-gons.
  Suppose otherwise.
  Then any such $v$ is adjacent to at least one $k$-gon with $k\ge 6$.
  In this case,
  \begin{align*}
    s_v&\le 1+\frac{1}{4}\cdot4+\frac{1}{6}-\frac{5}{12}\cdot 5 = \frac{1}{12}.
  \end{align*}
  Moreover, if $v$ is adjacent to more than one $k$-gon with $k\ge 6$,
  Then 
  \begin{align*}
    s_v&\le 1+\frac{1}{4}\cdot 3+\frac{1}{6}\cdot 2-\frac{5}{12}\cdot 5 = 0,
  \end{align*}
  contrary to the assumption that $s_v >0$.
  Hence $v$ is adjacent to exactly one $k$-gon with $k\ge 6$.
  Call this face $F_v$.

  In $F_v$, $v$ has two neighbors in $R$.
  We match each vertex $v$ that has a positive score to
the vertex on $F_v$ that is the next one in clockwise order from $v$.
By bipartiteness, every such $v$ is matched to a vertex in $R$.
  We do this matching in all faces containing at least one positively scored vertex.
It is possible that more than one such $v$ are matched to the same $u \in R$.
  Suppose a vertex $u\in R$ is matched to from $\ell$ different 
such vertices of positive score.
  This means that $u$ is adjacent to at least $\ell$ many $k$-gons with $k\ge 6$.
Then, if  $u$ has degree $3$ then $u$ has score
  \begin{align*}
    s_u\le 1+\frac{1}{4}\cdot (3-\ell)+\frac{1}{6}\cdot\ell-\frac{7}{12}\cdot 3 = -\frac{\ell}{12}.
  \end{align*}
If $u$ has degree $d \ge 4$ then $u$ has score
  \begin{align*}
    s_u\le 1+\frac{1}{4}\cdot (d-\ell)+\frac{1}{6}\cdot\ell-\frac{1}{2}\cdot d \le -\frac{\ell}{12}.
  \end{align*}
  Hence in any case, we have  $s_u\le -\frac{\ell}{12}$.
  It implies that the total 
score of $u$ and all positively scored vertices matched to $u$ is at most $0$.
  However each positively scored vertex is matched to a vertex in $R$.
  Hence the total score cannot be positive.
  This is  a contradiction.

  Therefore there exists $v\in L$ such that $s_v>0$, 
and has degree $5$, and all adjacent faces are $4$-gons.
  We further note that at most one neighbor of $v$ can have
 degree $\ge 4$,
  for otherwise,
  \begin{align*}
    s_v\le 1+\frac{5}{4}-\frac{1}{2}\cdot 2-\frac{5}{12}\cdot 3 = 0.
  \end{align*}
  If all neighbors of $v$ have degree $3$,
  that is a wheel of type 1 as in Figure~\ref{subfig:wheel:type1}.
  If one neighbor of $v$ has degree $\ge 4$,
  that is a wheel of type 2 as in Figure~\ref{subfig:wheel:type2}.
\end{proof}

As we shall see,
either structure in Figure~\ref{fig:Eq5-PM3} leads to pinned edges.

\begin{lemma} \label{lem:Z:Eq5-PM3}
 \plholant{{\neq}_2}{{=}_5,\mathcal{EO},\neq_2,[1,0],[0,1]} is tractable.
\end{lemma}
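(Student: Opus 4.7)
The plan is to follow the template of Lemma~\ref{lem:Z:Eqd-PM3}: recursively find an edge pinned to a known value, apply it to strictly shrink the instance, and terminate once the signature set contains no $=_5$ (so the instance lies in $\mathscr{M}$) or no \exactone{d} signature (so the instance lies in $\mathscr{P}$). The preprocessing is identical to that in Lemma~\ref{lem:Z:Eqd-PM3}: propagate $[1,0]$ and $[0,1]$ pinnings, absorb each right-side $\neq_2$ into its two neighboring left-side $\neq_2$'s, eliminate self-loops on \exactone{d} signatures, contract $\neq_2$-connections between two \exactone{} signatures into a single higher-arity \exactone{}, form maximal \eblock{5}s and return $0$ on any trivial one, and handle the small-arity \eblock{5}s. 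For $k = 5$ the sign condition $n_+ \equiv n_- \pmod{5}$ forces any nontrivial arity-$n < 5$ \eblock{5} to have $n$ even: arity $2$ collapses to a single $\neq_2$, and arity $4$ is handled by the correlated-parallel-$\neq_2$ substitution plus cycle/tree dichotomy used in Lemma~\ref{lem:Z:Eqd-PM3}. After preprocessing the instance becomes a bipartite graph with \eblock{5}s of arity $\ge 5$ on one side and \exactone{d} signatures of arity $\ge 3$ on the other, connected by $=_2$ edges; parallel edges give pinning as in Lemma~\ref{lem:Z:Eqd-PM3}, so we may assume this bipartite graph is simple.

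The crucial new step is to invoke Lemma~\ref{lem:planar_bipartite:35} in place of Lemma~\ref{lem:planar_bipartite}: it locates a wheel of Type~1 or Type~2 (Figure~\ref{fig:Eq5-PM3}), consisting of a central arity-$5$ \eblock{5} $v$, five \exactone{} neighbors $p_1, \dots, p_5$ with $p_2, \dots, p_5$ of arity $3$ and $p_1$ of arity $d \ge 3$, and five outer \eblock{5}s $e_1, \dots, e_5$ on the surrounding four-gons. Since $v$ has arity $5$, the mod-$5$ constraint forces $n_+ \in \{0, 5\}$, so $v$ behaves like $=_5$ up to a sign relabeling, and all five edges at $v$ share a common value $x \in \{0,1\}$. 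For each $e_j$ let $s_j \in \{0,1\}$ be the XOR of the signs of its two cycle edges to the two neighboring $p$'s, so the two cycle wires at $e_j$ satisfy $w_j^L \oplus w_j^R = s_j$ in any satisfying assignment. When $x = 1$, every non-$v$ edge of each $p_i$ is forced to $0$, feasible iff all $s_j = 0$; when $x = 0$, the cycle-edge XOR is $1$ at each $p_i$ for $i \ge 2$ and equals a free bit $c_1$ at $p_1$ (which is $0$ precisely when $p_1$'s unique $1$-input lies on one of $p_1$'s external edges), and the $10$-cycle parity around the wheel forces $c_1 \equiv \sum_j s_j \pmod{2}$.

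A direct case analysis then delivers pinning in every subcase. If $s_j = 0$ for every $j$, Type~1 admits only $x = 1$ and pins the entire wheel; Type~2 admits $d - 2$ joint assignments, but the six cycle edges incident to $e_1, e_2, e_5$ (the two cycle edges at $p_1$, the cycle edge of $p_2$ toward $e_1$, of $p_3$ toward $e_2$, of $p_4$ toward $e_5$, and of $p_5$ toward $e_5$) are uniformly $0$ across all these assignments and hence pinned. If $\sum_j s_j$ is even but some $s_j \ne 0$, only the $x = 0, c_1 = 0$ branch is feasible in Type~2 (pinning the five edges at $v$ together with both cycle edges at $p_1$), and Type~1 becomes infeasible (so the Holant is $0$). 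If $\sum_j s_j$ is odd, only the $x = 0, c_1 = 1$ branch is feasible in both types, pinning the five edges at $v$ and, in Type~2, all $d - 3$ external edges of $p_1$. In every case we extract at least one pinned edge in polynomial time, feed it back into the preprocessing, and recurse on a strictly smaller instance; this yields the desired polynomial-time algorithm.

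The main obstacle is the Type~2 wheel with all $s_j = 0$, where the $d - 2$ coexisting internal assignments make naive pinning unavailable. The key observation is that when $x = 0$ the all-zero $s_j$'s force cycle-wire values to propagate through the $e_j$'s without flipping, while the \exactone{3} flips at $p_2, \dots, p_5$ together with the $c_1 = 0$ constraint at $p_1$ produce a rigid alternating $0$-$0$-$1$-$1$ pattern around the $10$-cycle; the six cycle wires sitting at the $0$-positions of this pattern coincide with the cycle wires that must also be $0$ in the $x = 1$ branch (where every cycle wire is $0$), yielding six universally pinned wires across both branches. Verifying this coincidence, and checking the subtler subcases in which the parity of $\sum_j s_j$ distinguishes Types~1 and~2, is the technical heart of the argument.
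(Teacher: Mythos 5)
Your proposal is correct and follows the paper's approach exactly: preprocessing as in Lemma~\ref{lem:Z:Eqd-PM3}, invoking Lemma~\ref{lem:planar_bipartite:35} to locate a wheel, and extracting pinned edges from the wheel structure. Your sign-XOR bookkeeping $s_j$ and the $10$-cycle parity constraint is a more granular but equivalent version of the paper's two-step argument (if some $e_j$ carries mixed signs, pin $E_o$ to $0$; otherwise pin $E_o$ to $1$ for Type~1 or the two cycle edges of $p_1$ to $0$ for Type~2), and your identification of six universally-pinned wires in the all-$s_j = 0$ Type-2 case, while correct, is more than the paper needs since only $p_1$'s two cycle edges are required.
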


\begin{proof}
  We proceed as in Lemma~\ref{lem:Z:Eqd-PM3}
  up until the point of getting $\Omega_1$.
  Note that due to \eqref{eqn:eblock:mod}
  the only nontrivial \eblock{5}s of arity $\le 4$ are 
  $\neq_2$ and those in Figure~\ref{subfig:eblocks:4}.
  Moreover, each connected component of $\Omega_1$ is planar and bipartite
  with vertices on one side having degree at least $5$ and those on the other at least $3$.
  We only need to show that there are edges pinned in $\Omega_1$.

  Unlike in Lemma~\ref{lem:Z:Eqd-PM3}, 
  these components do not satisfy the condition of Lemma~\ref{lem:planar_bipartite} 
  but that of Lemma~\ref{lem:planar_bipartite:35}.
  If any such component is not simple,
  then there are pinned edges similar to Lemma~\ref{lem:Z:Eqd-PM3}.
  Otherwise by Lemma~\ref{lem:planar_bipartite:35},
  the wheel structure in Figure~\ref{fig:Eq5-PM3} appears.
  All we need to show is that wheel structures of either type contain pinned edges.

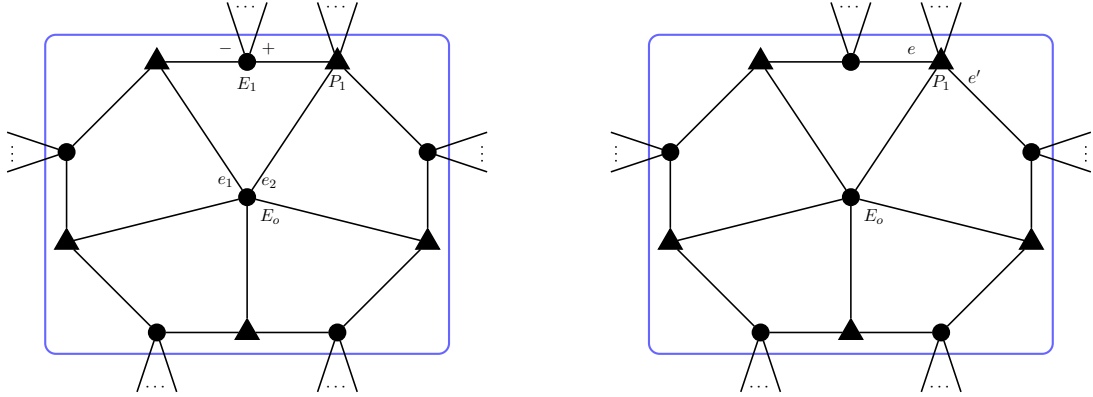
\begin{figure}[t]
 \centering
 \def\capWidth{7cm}
 \captionsetup[subfigure]{width=\capWidth}
 \subfloat[Different signs of an \eblock{5} along the cycle lead to pinning]{
  \makebox[\capWidth][c]{
   \begin{tikzpicture}[scale=\scale,transform shape,node distance=1.5*\nodeDist,semithick]
    \draw    (0,0)    node[internal, label={20:$e_2$}, label={150:$e_1$}, label={-40:$E_o$}] (o) {};
    \draw    (0,3)    node[internal, label={10:$+$}, label={170:$-$}, label={-90:$E_1$}] (e1) {};
    \draw    (2,3)    node[triangle, label={-90:$P_1$}] (p1) {};
    \draw   (-2,3)    node[triangle] (p2) {};
    \draw    (4,1)    node[internal] (e2) {};
    \draw    (4,-1)   node[triangle] (p3) {};
    \draw   (-4,1)    node[internal] (e3) {};
    \draw   (-4,-1)   node[triangle] (p4) {};
    \draw    (0,-3)   node[triangle] (p5) {};
    \draw    (2,-3)   node[internal] (e4) {};
    \draw   (-2,-3)   node[internal] (e5) {};
    
    \path (o) edge (p1)
              edge (p2)
              edge (p3)
              edge (p4)
              edge (p5);
    \path (e1) edge (p1)
               edge (p2);
    \path (e2) edge (p1)
               edge (p3);
    \path (e3) edge (p2)
               edge (p4);
    \path (e4) edge (p3)
               edge (p5);
    \path (e5) edge (p4)
               edge (p5);

    \draw (-0.5,4.5) node[external] (ex-11) {};
    \draw  (0.5,4.5) node[external] (ex-13) {};
    \path (e1) edge node[external,pos=0.9] (inv-11) {} (ex-11)
               edge node[external,pos=0.9] (inv-12) {} (ex-13);
    \path (inv-11) edge[opacity=0] node[external,opacity=1] () {$\cdots$} (inv-12);

    \draw  (1.5,4.5) node[external] (ex-p1) {};
    \draw  (2.5,4.5) node[external] (ex-p3) {};
    \path (p1) edge node[external,pos=0.9] (inv-p1) {} (ex-p1)
               edge node[external,pos=0.9] (inv-p2) {} (ex-p3);
    \path (inv-p1) edge[opacity=0] node[external,opacity=1] () {$\cdots$} (inv-p2);

    \draw  (5.5,0.5) node[external] (ex-21) {};
    \draw  (5.5,1.5) node[external] (ex-23) {};
    \path (e2) edge node[external,pos=0.9] (inv-21) {} (ex-21)
               edge node[external,pos=0.9] (inv-22) {} (ex-23);
    \path (inv-21) edge[dotted] (inv-22);

    \draw (-5.5,0.5) node[external] (ex-31) {};
    \draw (-5.5,1.5) node[external] (ex-33) {};
    \path (e3) edge node[external,pos=0.9] (inv-31) {} (ex-31)
               edge node[external,pos=0.9] (inv-32) {} (ex-33);
    \path (inv-31) edge[dotted] (inv-32);

    \draw (1.5,-4.5) node[external] (ex-41) {};
    \draw (2.5,-4.5) node[external] (ex-43) {};
    \path (e4) edge node[external,pos=0.9] (inv-41) {} (ex-41)
               edge node[external,pos=0.9] (inv-42) {} (ex-43);
    \path (inv-41) edge[opacity=0] node[external,opacity=1] () {$\cdots$} (inv-42);

    \draw (-1.5,-4.5) node[external] (ex-51) {};
    \draw (-2.5,-4.5) node[external] (ex-53) {};
    \path (e5) edge node[external,pos=0.9] (inv-51) {} (ex-51)
               edge node[external,pos=0.9] (inv-52) {} (ex-53);
    \path (inv-51) edge[opacity=0] node[external,opacity=1] () {$\cdots$} (inv-52);

    \begin{pgfonlayer}{background}
     \node[draw=\borderColor,thick,rounded corners,fit = (e1) (e2) (e3) (e4) (e5) (p1) (p2) (p3) (p4) (p5),inner sep=8pt] {};
    \end{pgfonlayer}
   \end{tikzpicture}
  }
  \label{subfig:wheel:different:sign}
 }
 \qquad
 \subfloat[Edges $e$ and $e'$ are pinned in wheels of type~2]{
  \makebox[\capWidth][c]{
   \begin{tikzpicture}[scale=\scale,transform shape,node distance=1.5*\nodeDist,semithick]
    \draw    (0,0)    node[internal, label={-40:$E_o$}]  (o) {};
    \draw    (0,3)    node[internal] (e1) {};
    \draw    (2,3)    node[triangle, label={170:$e~~$}, label={-10:$~~e'$}, label={-90:$P_1$}] (p1) {};
    \draw   (-2,3)    node[triangle] (p2) {};
    \draw    (4,1)    node[internal] (e2) {};
    \draw    (4,-1)   node[triangle] (p3) {};
    \draw   (-4,1)    node[internal] (e3) {};
    \draw   (-4,-1)   node[triangle] (p4) {};
    \draw    (0,-3)   node[triangle] (p5) {};
    \draw    (2,-3)   node[internal] (e4) {};
    \draw   (-2,-3)   node[internal] (e5) {};
    
    \path (o) edge (p1)
              edge (p2)
              edge (p3)
              edge (p4)
              edge (p5);
    \path (e1) edge (p1)
               edge (p2);
    \path (e2) edge (p1)
               edge (p3);
    \path (e3) edge (p2)
               edge (p4);
    \path (e4) edge (p3)
               edge (p5);
    \path (e5) edge (p4)
               edge (p5);

    \draw (-0.5,4.5) node[external] (ex-11) {};
    \draw  (0.5,4.5) node[external] (ex-13) {};
    \path (e1) edge node[external,pos=0.9] (inv-11) {} (ex-11)
               edge node[external,pos=0.9] (inv-12) {} (ex-13);
    \path (inv-11) edge[opacity=0] node[external,opacity=1] () {$\cdots$} (inv-12);

    \draw  (1.5,4.5) node[external] (ex-p1) {};
    \draw  (2.5,4.5) node[external] (ex-p3) {};
    \path (p1) edge node[external,pos=0.9] (inv-p1) {} (ex-p1)
               edge node[external,pos=0.9] (inv-p2) {} (ex-p3);
    \path (inv-p1) edge[opacity=0] node[external,opacity=1] () {$\cdots$} (inv-p2);

    \draw  (5.5,0.5) node[external] (ex-21) {};
    \draw  (5.5,1.5) node[external] (ex-23) {};
    \path (e2) edge node[external,pos=0.9] (inv-21) {} (ex-21)
               edge node[external,pos=0.9] (inv-22) {} (ex-23);
    \path (inv-21) edge[dotted] (inv-22);

    \draw (-5.5,0.5) node[external] (ex-31) {};
    \draw (-5.5,1.5) node[external] (ex-33) {};
    \path (e3) edge node[external,pos=0.9] (inv-31) {} (ex-31)
               edge node[external,pos=0.9] (inv-32) {} (ex-33);
    \path (inv-31) edge[dotted] (inv-32);

    \draw (1.5,-4.5) node[external] (ex-41) {};
    \draw (2.5,-4.5) node[external] (ex-43) {};
    \path (e4) edge node[external,pos=0.9] (inv-41) {} (ex-41)
               edge node[external,pos=0.9] (inv-42) {} (ex-43);
    \path (inv-41) edge[opacity=0] node[external,opacity=1] () {$\cdots$} (inv-42);

    \draw (-1.5,-4.5) node[external] (ex-51) {};
    \draw (-2.5,-4.5) node[external] (ex-53) {};
    \path (e5) edge node[external,pos=0.9] (inv-51) {} (ex-51)
               edge node[external,pos=0.9] (inv-52) {} (ex-53);
    \path (inv-51) edge[opacity=0] node[external,opacity=1] () {$\cdots$} (inv-52);

    \begin{pgfonlayer}{background}
     \node[draw=\borderColor,thick,rounded corners,fit = (e1) (e2) (e3) (e4) (e5) (p1) (p2) (p3) (p4) (p5),inner sep=8pt] {};
    \end{pgfonlayer}
   \end{tikzpicture}
  }
  \label{subfig:wheel:type2:pin}
 }
 \caption{Degeneracies in the wheel structure.}
 \label{fig:wheel:pin}
\end{figure}

  First we claim that if a wheel of either type has a \eblock{5},
call it $E_1$, on the outer cycle
  which has different signs on the two edges incident to
it along the cycle,
  then the middle $=_5$, denoted by $E_o$, is pinned.
  This is pictured in Figure~\ref{subfig:wheel:different:sign}.
  It does not matter whether the wheel is type $1$ or $2$,
  or the position of $E_1$ relative to the special triangle $P_1$ in type $2$.
Because $E_o$ is an equality, both $e_1$ and $e_2$, the two
edges incident to $E_o$ that are connected to the two
\exactone{} signatures flanking $E_1$,  must take the
same value.
  If both $e_1$ and $e_2$ are assigned $1$,
  then the two incoming wires of $E_1$ along the cycle have to be both assigned $0$,
  whereas they are marked by different signs.
  This is a contradiction.
  Hence both $e_1$ and $e_2$ are pinned to $0$ as well as all edges of $E_o$.

  We may therefore assume that each \eblock{5} has same signs along the outer cycle, either $++$ or $--$.
  If the wheel is of type $1$,
  then there is no valid assignment such that $E_o$ is assigned $0$ 
  because the cycle has odd length.
  In fact if $E_o$ is assigned $0$,
  then we can remove $E_o$ and its incident edges,
  and effectively the five \exactone{} signatures are now $\neq_2$'s forming a 5-cycle linked by binary equalities.
  Hence $E_o$ and all its edges are pinned to $1$.

  Otherwise the wheel is of type $2$, 
  and each \eblock{5} has signs $++$ or $--$ along the outer cycle.
  We denote by $P_1$ the special \exactone{d} function that has arity $d>3$.
  We claim that the two edges $e$ and $e'$ incident to $P_1$ along the cycle are both pinned to $0$.
  This is illustrated in Figure~\ref{subfig:wheel:type2:pin}.
  As $P_1$ is \exactone{d}, at most one of $e$ and $e'$ is $1$.
  If one of $e$ and $e'$ is $1$, the other is $0$,
  and as $P_1$ is an \exactone{d} function its edge to $E_o$ is also $0$,
  and thus all edges incident to $E_o$ are $0$. 
  As all five neighbors of $E_o$ are \exactone{} functions, 
  the four \exactone{3} functions effectively become $(\neq_2)$ functions along the wheel, 
  and we can remove $E_o$ and its incident edges.
  This becomes the same situation as in the previous case of type 1,
  where effectively a cycle of five binary equalities are linked by five binary disequalities,
  which has no valid assignment.
  It implies that both $e$ and $e'$ are pinned to $0$.
  This finishes the proof.
\end{proof}

\subsection{Lemmas related to \texorpdfstring{$\mathscr{M}_4$}{M4} and \texorpdfstring{$\mathscr{P}_2$}{P2}}

Now we prove some lemmas relating to $\mathscr{M}_4$ and $\mathscr{P}_2$ that are used in the proof of the full dichotomy.

\begin{figure}[t]
 \centering
 \begin{tikzpicture}[scale=\scale,transform shape,node distance=1.5*\nodeDist,semithick]
  \node[internal]                (e1) {};
  \node[internal, below of=e1]   (e2) {};
  \node[external, left  of=e2]   (aux3) {};
  \node[internal, left  of=aux3] (e3) {};
  \node[internal, above of=e3]   (e4) {};
  
  \node[external, above right of=e1] (ex1) {};
  \node[external, below right of=e2] (ex2) {};
  \node[external, below left  of=e3] (ex3) {};
  \node[external, above left  of=e4] (ex4) {};
  
  \path (e1) edge node[square] {} (e2)
             edge[out=160, in= 20] node[square] (mid14-1) {} (e4)
             edge[out=200, in=-20] node[square] (mid14-2) {} (e4);
  \path (e3) edge[out= 20, in=160] node[square] (mid23-1) {} (e2)
             edge[out=-20, in=200] node[square] (mid23-2) {} (e2)
             edge node[square] {} (e4);
  \path (mid14-1) edge[dotted] (mid14-2)
        (mid23-1) edge[dotted] (mid23-2);
  
  \path (e1) edge (ex1);
  \path (e2) edge (ex2);
  \path (e3) edge (ex3);
  \path (e4) edge (ex4);
  
  \begin{pgfonlayer}{background}
   \node[draw=\borderColor,thick,rounded corners,fit = (e1) (e2) (e3) (e4) (mid14-1) (mid23-2),inner sep=8pt] {};
  \end{pgfonlayer}
 \end{tikzpicture}
 \caption{Gadget to realize $\hat{g}$ in Lemma~\ref{lem:Eqd:PM:InvPM}.
          Circle vertices are assigned $=_k$ and square vertices are assigned $\neq_2$.
          The number of parallel edges is $k-2$.}
 \label{fig:PM-InvPM:hatg}
\end{figure}
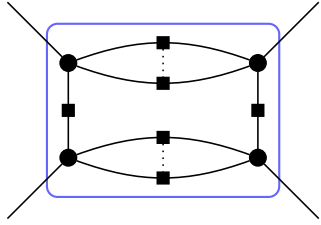

Recall that $\allbutone{d}$ is the signature $[0,\dots,0,1,0]$ of arity $d$,
which is the reverse of $\exactone{d}$.
After a $Z$ transformation,
$\mathscr{M}_4$ contains both $\allbutone{d}$ and $\exactone{d}$.
However, if both appear,
then with any $=_k$ the problem is hard.

\begin{lemma} \label{lem:Eqd:PM:InvPM}
  If integers $d_1, d_2, k \ge 3$,
  then $\plholant{{\neq}_2}{{=}_k,\exactone{d_1},\allbutone{d_2}}$ is \numP-hard.
\end{lemma}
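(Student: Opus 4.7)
The plan is to realize, on the right-hand side of the Holant problem, the three signatures $\exactone{4} = [0,1,0,0,0]$, $\allbutone{4} = [0,0,0,1,0]$, and the partial crossover $\hat{g}$ from equation~(\ref{eqn:signature-matrix:partial-crossover}); once all three are present on the right, Lemma~\ref{lem:PM-InvPM-g} immediately yields \numP-hardness. Note that the coexistence of $\exactone{d_1}$ and $\allbutone{d_2}$ is essential here: without $\allbutone{d_2}$ the problem is tractable for $k \ge 5$ by Lemma~\ref{lem:Z:Eqd-PM3} or Lemma~\ref{lem:Z:Eq5-PM3}, so we really need to exploit both signatures together.

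First I realize $\hat{g}$ using only $=_k$ and $\neq_2$ via the gadget in Figure~\ref{fig:PM-InvPM:hatg}. Four $=_k$ vertices $e_1, e_2, e_3, e_4$ are arranged cyclically: the sides $e_1 e_2$ and $e_3 e_4$ each carry a single $\neq_2$ edge, while the sides $e_2 e_3$ and $e_4 e_1$ each carry $k-2$ parallel $\neq_2$ edges; each $e_i$ additionally has one external edge, so every vertex has degree exactly $k$. Letting $a_i$ denote the common value on all edges incident to $e_i$, the $\neq_2$ constraints impose $a_1 \ne a_2$, $a_2 \ne a_3$, $a_3 \ne a_4$, and $a_4 \ne a_1$, which force $a_1 = a_3 \ne a_2 = a_4$. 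Only two valid assignments remain, and a direct check using the signature matrix convention (row index $AB$, column index $DC$, with external edges read counterclockwise) confirms that the induced arity-$4$ signature is exactly $\hat{g}$.

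Next I invoke Lemma~\ref{lem:Z:Eq34:PMk:pin} with $d = d_1$ and $\mathcal{F} = \{\allbutone{d_2}\}$ to reduce to the setting in which the pinning signatures $[1,0]$ and $[0,1]$ are also available on the right. With pinnings in hand, I realize $\exactone{4}$ as follows: if $d_1 = 4$ it is already present; if $d_1 = 3$ I chain two copies of $\exactone{3}$ by a single $\neq_2$; and if $d_1 \ge 5$ I attach $d_1 - 4$ copies of $[0,1]$ to $d_1 - 4$ edges of $\exactone{d_1}$ through planarly placed $\neq_2$'s, which forces those edges to $0$ and leaves the remaining arity-$4$ signature equal to $\exactone{4}$. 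The construction of $\allbutone{4}$ from $\allbutone{d_2}$ is symmetric, using $[1,0]$ pinnings through $\neq_2$ to force edges to $1$ when $d_2 \ge 5$, or chaining via $\neq_2$ when $d_2 = 3$.

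With $\hat{g}$, $\exactone{4}$, and $\allbutone{4}$ all realized on the right-hand side, applying Lemma~\ref{lem:PM-InvPM-g} completes the proof. The main technical step is the signature computation for Figure~\ref{fig:PM-InvPM:hatg}, in particular checking that the chosen ordering of external edges yields $\hat{g}$ itself (and not some rotation thereof that might not match the matrix in~(\ref{eqn:signature-matrix:partial-crossover})); everything else reduces to routine planar gadget composition together with a single appeal to the pinning reduction.
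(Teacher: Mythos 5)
Your proof is correct and follows essentially the same route as the paper: construct $\hat{g}$ from $=_k$ and $\neq_2$ via the eblock gadget of Figure~\ref{fig:PM-InvPM:hatg}, use Lemma~\ref{lem:Z:Eq34:PMk:pin} to obtain the pinnings $[1,0]$ and $[0,1]$, reduce $\exactone{d_1}$ and $\allbutone{d_2}$ to arity~$4$ by pinning (or chaining when the arity is~$3$), and finish by Lemma~\ref{lem:PM-InvPM-g}. The only added value over the paper's write-up is your explicit verification that the $4$-cycle of $=_k$'s connected by $\neq_2$'s yields $\hat{g}$ itself (support $\{0101,1010\}$ in the $M_g$ convention) rather than some other rotation, which is a reasonable detail to spell out.
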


\begin{proof}
  We apply Lemma~\ref{lem:Z:Eq34:PMk:pin} to create constant functions $[1,0]$ and $[0,1]$ first.
  Then we construct $\exactone{4}$ and $\allbutone{4}$.
  With both $[1,0]$ and
$[0,1]$ in hand,
 we may reduce $d_1$ or $d_2$ to $4$ if $d_1> 4$ or $d_2> 4$.
  If either of the two arities is $3$,
  then we connect two copies together via $\neq_2$ to realize an arity $4$ copy.

  Moreover, we use the gadget illustrated in Figure~\ref{fig:PM-InvPM:hatg}
  to create the function $\hat{g}$ in Lemma~\ref{lem:PM-InvPM-g} as an \eblock{k}.
  Then by Lemma~\ref{lem:PM-InvPM-g}, \plholant{{\neq}_2}{{=}_k,\exactone{d_1},\allbutone{d_2}} is \numP-hard.  
\end{proof}

In general signatures in $\mathscr{P}_2$ are non-degenerate
weighted equalities under the $Z$ transformation.
The next several lemmas show that the hardness criterion is the same regardless of the weight.

\begin{lemma}\label{lem:P2:M4+-}
  Let $f\in\mathscr{P}_2$, $g_1\in\mathscr{M}_4^+$, $g_2\in\mathscr{M}_4^-$ be non-degenerate signatures with arity $\ge 3$.
  Then $\PlHolant(f,g_1,g_2)$ is \numP-hard.
\end{lemma}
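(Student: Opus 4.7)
The plan is to reduce Lemma~\ref{lem:P2:M4+-} to the already-established Lemma~\ref{lem:Eqd:PM:InvPM} via a pair of holographic transformations. First I would pass to the bipartite setting by writing $\PlHolant(f, g_1, g_2) \equiv_T \plholant{{=}_2}{f, g_1, g_2}$ and then apply the transformation by $Z = \trans{1}{1}{i}{-i}$. By Lemma~\ref{lem:single:A2} and the definition of $\mathscr{M}_4^\pm$, this maps $f$, $g_1$, and $g_2$ to scalar multiples of $\hat{f} = [1, 0, \dotsc, 0, \beta]$ of arity $n = \arity(f)$ with $\beta \ne 0$, of $\exactone{n_1}$, and of $\allbutone{n_2}$ respectively, and it sends the left-hand $=_2$ to a scalar multiple of $\neq_2$. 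Thus the problem becomes $\plholant{{\neq}_2}{\hat{f}, \exactone{n_1}, \allbutone{n_2}}$.

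Second, I would apply a further holographic transformation by the diagonal matrix $T = \trans{1}{0}{0}{\beta^{1/n}}$, for any fixed $n$-th root of $\beta$. Because $T$ is diagonal, $[0,1,0] T^{\otimes 2}$ is a nonzero scalar multiple of $\neq_2$, so the left-hand side is preserved. On the right, each of $\exactone{n_1}$ and $\allbutone{n_2}$ has support on entries of a single Hamming weight ($1$ and $n_2 - 1$ respectively), so $(T^{-1})^{\otimes n_i}$ rescales them by a nonzero scalar without altering their form. The key observation is that under $(T^{-1})^{\otimes n}$ the weight-$0$ and weight-$n$ entries of $\hat{f}$ get multiplied by $1$ and $\beta^{-1}$, respectively, turning $\hat{f}$ into a nonzero scalar multiple of the unweighted equality $=_n$.

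After these two steps the problem has been reduced to $\plholant{{\neq}_2}{{=}_n, \exactone{n_1}, \allbutone{n_2}}$, and since $n, n_1, n_2 \ge 3$, Lemma~\ref{lem:Eqd:PM:InvPM} gives $\SHARPP$-hardness. There is no substantial obstacle here; all of the difficulty has been absorbed into Lemma~\ref{lem:Eqd:PM:InvPM} and ultimately into Lemma~\ref{lem:PM-InvPM-g}. The only insight required is the recognition that the potentially nontrivial weight $\beta$ carried by a generic $\mathscr{P}_2$ signature can be normalized away by a diagonal holographic transformation that leaves both matchgate-type signatures $\exactone{n_1}$ and $\allbutone{n_2}$ essentially invariant, because each has support on a single Hamming weight level.
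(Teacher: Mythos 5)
Your proof is correct and follows essentially the same route as the paper: a $Z$-transformation to the bipartite disequality setting, followed by a diagonal transformation $\trans{1}{0}{0}{\beta^{1/n}}$ that normalizes the weighted equality to $=_n$ while only rescaling $\exactone{n_1}$, $\allbutone{n_2}$, and $\neq_2$, after which Lemma~\ref{lem:Eqd:PM:InvPM} gives \numP-hardness. The paper's proof is the same argument with the weight $\beta$ called $c$ and the diagonal matrix called $D$.
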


\begin{proof}
 Suppose the arities of $f$, $g_1$, and $g_2$ are $n$, $m_1$, and $m_2$ respectively.
 Under a holographic transformation by $Z$,
 we have
 \begin{align*}
  \PlHolant(f,g_1,g_2)
  &\equiv \plholant{{\neq}_2}{\left( Z^{-1} \right)^{\otimes n}f, \left( Z^{-1} \right)^{\otimes m_1}g_1,\left( Z^{-1} \right)^{\otimes m_2}g_2}\\
  &\equiv \plholant{{\neq}_2}{\hat{f},
  \exactone{m_1},\allbutone{m_2}},
 \end{align*}
 where $\hat{f}= (Z^{-1})^{\otimes n}f$ which has
the form $[1,0,\dots,0,c]$ up to a nonzero constant, with $c\neq 0$,
 as $f\in\mathscr{P}_2$.
 We do another diagonal transformation by $D=\trans{1}{0}{0}{c^{1/n}}$.
 Then  
 \begin{align*}
  \operatorname{Pl-}&\operatorname{\!Holant}(f,g_1,g_2)\\
  &\equiv
  \PlHolant
  \left(
   (\neq_2) D^{\otimes 2}
   \; \middle| \;
   (D^{-1})^{\otimes n}\hat{f},(D^{-1})^{\otimes m_1}\exactone{m_1}, (D^{-1})^{\otimes m_2}\allbutone{m_2}
  \right)\\
  &\equiv \plholant{{\neq}_2}{{=}_n, \exactone{m_1}, \allbutone{m_2}},
 \end{align*}
 where in the last line we ignored several nonzero factors.
 The lemma follows from Lemma~\ref{lem:Eqd:PM:InvPM}.
\end{proof}

We also need to consider the mixture of $\mathscr{P}_2$ and binary signatures.

\begin{lemma}  \label{lem:P2:binary}
  Let $\mathcal{F}$ be a set of symmetric signatures.
  Suppose $\mathcal{F}$ contains a non-degenerate signature $f\in\mathscr{P}_2$ of arity $n\ge 3$ 
  and a binary signature $h$.
  Then $\PlHolant(\mathcal{F})$ is \numP-hard unless $h\in Z\mathscr{P}$,
  or $\PlCSP^2(DZ^{-1}\mathcal{F})\le_T\PlHolant(\mathcal{F})$ for some diagonal transformation $D$.
\end{lemma}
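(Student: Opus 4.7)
The plan is to first reduce to a bipartite setting in which the weighted equality $f$ becomes the plain equality $=_n$, and then split according to whether $h$ is ``compatible'' with this normalization.

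Since $f \in \mathscr{P}_2 = \mathscr{A}_2$ is non-degenerate, by Lemma~\ref{lem:single:A2} we may write $f = c\bigl(\tbcolvec{1}{i}^{\otimes n} + \beta \tbcolvec{1}{-i}^{\otimes n}\bigr)$ with $c,\beta \ne 0$.  Setting $D = \diag(1, \beta^{1/n})$ and $T = ZD$, a direct computation shows $(=_2)T^{\otimes 2}$ is a nonzero scalar multiple of $\neq_2$, while $(T^{-1})^{\otimes n} f$ is a nonzero scalar multiple of $=_n$.  By Valiant's Holant Theorem, $\PlHolant(\mathcal{F}) \equiv_T \plholant{\neq_2}{{=}_n, \hat{\mathcal{F}}}$ where $\hat{\mathcal{F}} = D^{-1} Z^{-1}\mathcal{F}$ contains $\hat h := D^{-1}Z^{-1}h = [a,b,c]$ (up to a global constant).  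Because conjugation of a symmetric binary signature by a diagonal matrix preserves membership in~$\mathscr{P}$, we have $\hat h \in \mathscr{P}$ iff $h \in Z\mathscr{P}$.  This handles the first escape case, so from now on I assume $\hat h \notin \mathscr{P}$, i.e.\ $b \neq 0$, $(a,c) \neq (0,0)$, and $b^2 \neq ac$.

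The goal becomes to show that $\PlCSP^2(\hat{\mathcal{F}}) \leq_T \plholant{\neq_2}{{=}_n, \hat h, \hat{\mathcal{F}}}$, which by~\eqref{eqn:prelim:PlCSPd_equiv_Holant} reduces to realizing every even equality $=_{2k}$ on the right side, in a planar fashion, from $\neq_2$ (left) together with $=_n$ and $\hat h$ (right).  The centerpiece is to realize $=_2$ on the right.  I would build a chain of $k$ copies of $\hat h$ connected by $\neq_2$'s; its signature matrix is $M(NM)^{k-1}$ where $M = M_{\hat h}$ and $N = M_{\neq_2}$.  The matrix $NM = \bigl[\begin{smallmatrix} b & c \\ a & b\end{smallmatrix}\bigr]$ has determinant $b^2 - ac \neq 0$ and, when $ac \neq 0$, two distinct nonzero eigenvalues $\lambda_\pm = b \pm \sqrt{ac}$, so a standard Vandermonde-based polynomial interpolation argument (mirroring those in Lemmas~\ref{lem:dichotomy:M2} and~\ref{lem:PM-InvPM-g}) allows us to interpolate any symmetric binary signature on the right, in particular $=_2$.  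Once $=_2$ is available on the right, a planar subtractive Euclidean construction using $=_n$ and $\neq_2$ on the left, in the spirit of the constructions in Lemmas~\ref{lem:van:deg} and~\ref{lem:Z:Eq34:PMk:pin}, yields every even equality $=_{2k}$ on the right.  Together with $\hat{\mathcal{F}}$ this simulates $\plholant{\EQ_2}{\hat{\mathcal{F}}} \equiv_T \PlCSP^2(\hat{\mathcal{F}})$, as desired.

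The main obstacle is the degenerate regimes of the interpolation.  Two things can go wrong: (i) the eigenvalue ratio $\lambda_+/\lambda_-$ is a nontrivial root of unity, so the Vandermonde matrix from the chain construction is singular; and (ii) $ac = 0$ but $b \neq 0$ and not both $a,c$ vanish, so $NM$ is a Jordan block rather than diagonalizable.  In case (i), a finite power of $NM$ becomes scalar, which in fact lets us realize $=_2$ directly once we track this power carefully, as in the $\PlCSP^2$ proof via cyclotomic fields (sketched in Theorem~\ref{thm:PlCSP2}).  In case (ii), say $a = 0$ and $c \neq 0$, I would first couple $\hat h$ with $=_n$ (connecting one edge of each through $\neq_2$) to pin an input, producing an auxiliary signature that is no longer Jordan-type, and then rerun the interpolation.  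Managing these exceptional cases uniformly---so that either direct interpolation succeeds or an auxiliary gadget restores it---is where the detailed work lies.
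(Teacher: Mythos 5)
Your normalization step is correct and matches the paper's $T = ZD_1$, but the strategy from there is both different from the paper's and has a concrete gap. The paper applies Theorem~\ref{thm:k-reg_homomorphism} to $\plholant{\hat h}{{=}_n}$, obtaining a clean trichotomy: hardness in the generic case (outcome (a) of the lemma), $\hat h \in \mathscr{P}$ in cases~1--3 (giving $h \in Z\mathscr{P}$), and the narrow exceptional cases~4--5 where $ac \neq 0$ and $(a/c)^{2n}=1$. Only in these exceptional cases does the paper construct the $\PlCSP^2$ reduction, and the construction crucially exploits the constraint $(a/c)^{2n}=1$ to normalize a weighted equality while keeping $=_{4n}$ invariant under $D_2$. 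You instead drop the theorem entirely and try to establish the $\PlCSP^2$ reduction unconditionally whenever $\hat h \notin \mathscr{P}$, via chain interpolation; this leaves the main (hard) case unargued unless the interpolation really succeeds for all such $\hat h$.

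It does not. The chain gadget with $k$ copies of $\hat h$ has signature matrix $M(NM)^{k-1}$, so the interpolation family is $\{MP\operatorname{diag}(x,y)P^{-1} : x,y\}$ where $NM = P\operatorname{diag}(\lambda_+,\lambda_-)P^{-1}$; a Vandermonde argument can only hit targets inside this two-parameter family. Whether $M_{=_2} = I$ lies in it is equivalent to $P^{-1}M^{-1}P$ being diagonal. With $P = \bigl[\begin{smallmatrix}\sqrt{c} & \sqrt{c} \\ \sqrt{a} & -\sqrt{a}\end{smallmatrix}\bigr]$ one computes the off-diagonal entries of $P^{-1}M^{-1}P$ to be proportional to $(c-a)(\sqrt{ac}\mp b)$; non-degeneracy gives $b^2 \neq ac$, so these vanish only when $a = c$ --- which is precisely a subcase of the paper's exceptional cases (case~5 with $a^n = c^n$). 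So for generic $\hat h \notin \mathscr{P}$ with $a \neq c$ you cannot interpolate $=_2$ on the right. What the family does contain is $[a,0,c]$ (since $M^{-1}\operatorname{diag}(a,c)$ lies in the commutant of $NM$), but rescaling $[a,0,c]$ to $=_2$ by a diagonal transformation sends $=_n$ to $[1,0,\ldots,0,(a/c)^{n/2}]$, which is an equality only when $(a/c)^n = 1$ --- the same algebraic constraint. The $ac = 0$ sub-case and the root-of-unity sub-case you flag compound, rather than exhaust, the difficulties. In short, your claim that the chain interpolates ``any symmetric binary signature on the right, in particular $=_2$'' is false, and without Theorem~\ref{thm:k-reg_homomorphism} to supply hardness when the exceptional constraint fails, the proof does not close.
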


\begin{proof}
  We do a $Z$ transformation and get
  \begin{align*}
    \PlHolant(\mathcal{F})
    &\equiv \PlHolant(\mathcal{F},h,f)\\
    &\equiv \plholant{{\neq}_2}{Z^{-1}\mathcal{F},\left(Z^{-1}\right)^{\otimes 2}h, \hat{f}},
  \end{align*}
  where $\hat{f}= (Z^{-1})^{\otimes n}f = [1,0,\dots,0,t]$ up to a 
  nonzero constant with $t\neq 0$.
  We further do another diagonal transformation of $D_1=\trans{1}{0}{0}{t^{1/n}}$.
  Then  
  \begin{align*}
    \PlHolant(\mathcal{F})
    &\equiv \plholant{(\neq_2) D_1^{\otimes 2}}{(D_1^{-1})^{\otimes n}\hat{f}, (ZD_1)^{-1}\mathcal{F}, \left((Z D_1)^{-1}\right)^{\otimes 2} h}\\
    &\equiv \plholant{{\neq}_2}{{=}_n, (ZD_1)^{-1}\mathcal{F}, \left((Z D_1)^{-1}\right)^{\otimes 2} h}\\
    &\ge_T \plholant{{=}_n}{(ZD_1)^{-1}\mathcal{F}, \left((ZD_1)^{-1}\right)^{\otimes 2} h},
  \end{align*}
  where in the second line we ignore a nonzero factor on $\neq_2$.
  Hence by Theorem~\ref{thm:k-reg_homomorphism},
  $\PlHolant(\mathcal{F})$ is \numP-hard 
  unless $\left( (ZD_1)^{-1}\right)^{\otimes 2}h\in \mathscr{P}$ (cases 1, 2 or 3 in Theorem~\ref{thm:k-reg_homomorphism}) 
  or $\left( (ZD_1)^{-1}\right)^{\otimes 2}h=[a,b,c]$
  for some $a,b,c\in\mathbb{C}$ such that $ac\neq 0$ 
  and $(a/c)^{2n}=1$ (cases 4 or 5 in Theorem~\ref{thm:k-reg_homomorphism}).

  If $\left( (ZD_1)^{-1}\right)^{\otimes 2}h\in \mathscr{P}$,
  then $h\in ZD_1\mathscr{P}=Z\mathscr{P}$ as $D_1\in\Stab{P}$.
  In the latter case,
  we construct $=_{2n}$ on the right by connecting three copies of $=_n$ to one copy of $=_n$ via $\neq_2$.
  We do the same construction again to realize $=_{4n}$ using $=_{2n}$.
  We connect $n-1$ many $[a,b,c]$'s to $=_{2n}$ via $\neq_2$
  to realize a binary weighted equality $[1,0,r]$ with $r=\left( a/c \right)^{n-1}\neq 0$
  ignoring a factor of $c^{n-1}$.
  Note that $r^{2n}=\left(a/c \right)^{2n(n-1)}=1$.
  Then we do another diagonal transformation of $D_2=\trans{1}{0}{0}{r^{1/2}}$ to get
  $\plholant{{\neq}_2}{ (ZD_1D_2)^{-1}\mathcal{F}, =_2, \left( D_2^{-1} \right)^{\otimes 4n}(=_{4n})}$.
  Notice that 
  \begin{align*}
    \left( D_2^{-1} \right)^{\otimes 4n}(=_{4n}) = [1,0,\dots,0,r^{-2n}] = (=_{4n}),
  \end{align*}
  as $r^{2n}=1$.

  Hence we have $=_2$ and $=_{4n}$ on the right.
  With $\neq_2$ on the left,
  we get $=_2$ on the left and therefore equalities of all even arities on the right.
  Let $D=(D_1D_2)^{-1}$.
  Then we have the reduction chain:
  \begin{align*}
    \PlHolant(\mathcal{F}) & \ge_T \plholant{{\neq}_2}{DZ^{-1}\mathcal{F} \union \{=_2,=_{4n}\}}\\
    & \ge_T \plholant{{\neq}_2}{DZ^{-1}\mathcal{F} \union \mathcal{EQ}_2}\\
    & \ge_T \plholant{\mathcal{EQ}_2}{DZ^{-1}\mathcal{F}}.
  \end{align*}
  The last problem is $\PlCSP^2(DZ^{-1}\mathcal{F})$.
  Thus $\PlCSP^2(DZ^{-1}\mathcal{F})\le_T\PlHolant(\mathcal{F})$.
\end{proof}

At last, we strengthen Corollary~\ref{cor:Z:Eq34:PMk},
Lemma~\ref{lem:Z:Eqd-PM3}, and Lemma~\ref{lem:Z:Eq5-PM3} to weighted equalities.
We split the hardness and tractability cases.
For a set $\mathcal{F}$ of signatures,
denote by $\nondeg{F}$ the set of non-degenerate signatures in $\mathcal{F}$ of arity at least~$3$.
Moreover denote by $\mathcal{F}^*$ the signature set that is the same as $\mathcal{F}$
but with each degenerate signature $[a,b]^{\otimes m}$ in $\mathcal{F}$ replaced by the unary $[a,b]$.

Notice that $\mathcal{F}\cap\mathscr{P}_2$
and $\mathcal{F}^*\cap\mathscr{P}_2$ agree on signatures of arity at least $2$,
since signatures in $\mathscr{P}_2$ of arity at least $2$ are non-degenerate.
So $\mathcal{F}\cap\mathscr{P}_2\subseteq \mathcal{F}^*\cap\mathscr{P}_2$,
and the only possible extra elements are some unary $[x,y]$'s 
from $[x,y]^{\otimes m}\in\mathcal{F}$
for some integer $m\ge 2$ and $[x,y]$ is not a multiple of $[1, \pm i]$.
Equivalently the only possible extra elements are unary signatures
of the form $Z[a,b]$ for $ab\neq 0$, i.e.,
\emph{not} of the form a multiple of $Z[1,0]$ or $Z[0,1]$,
when $Z^{-1} \mathcal{F}$ contains some degenerate signatures
of the form $[a,b]^{\otimes m}$ for some integer $m\ge 2$ and $ab\neq 0$.

\begin{lemma}  \label{lem:P2:M4:hardness}
  Let $\mathcal{F}$ be a set of symmetric signatures.
  Let $\nondeg{F}$ be the set of non-degenerate signatures in $\mathcal{F}$ of arity at least $3$. 
  Suppose $\nondeg{F}$ contains $f\in\mathscr{M}_4$ of arity $d\ge 3$.
  Moreover, suppose $\nondeg{F}\cap\mathscr{P}_2$ is nonempty,
  and let $k$ be the greatest common divisor of the arities of signatures in $\mathcal{F}^*\cap\mathscr{P}_2$.
  If $k\le 4$, then $\PlHolant(\mathcal{F})$ is \numP-hard.
\end{lemma}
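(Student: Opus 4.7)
The plan is to apply the holographic transformation by $Z=\trans{1}{1}{i}{-i}$ and work in the transformed basis. Without loss of generality assume $f\in\mathscr{M}_4^+$, since $f\in\mathscr{M}_4^-$ is handled symmetrically by post-composing with $\trans{1}{0}{0}{-1}$, which interchanges $\mathscr{M}_4^\pm$ while preserving $\mathscr{P}_2$. Under $Z$ the problem becomes $\plholant{{\neq}_2}{Z^{-1}\mathcal{F}}$, in which $Z^{-1}f$ is a nonzero scalar multiple of $\exactone{d}$ by Lemma~\ref{lem:trans:M4}, each non-degenerate $g\in\mathcal{F}\cap\mathscr{P}_2$ of arity $n$ becomes a weighted equality $[1,0,\ldots,0,t_g]$ with $t_g\neq 0$, and each degenerate $[u,v]^{\otimes m}\in\mathcal{F}$ with $[u,v]$ not a multiple of $[1,\pm i]$ contributes a non-degenerate unary $[u',v']$ with $u'v'\neq 0$. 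The arities of these signatures have gcd $k$.

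For $k\in\{3,4\}$ the plan is to realize a weighted equality of arity exactly $k$ on the RHS, normalize it to $=_k$, and invoke Corollary~\ref{cor:Z:Eq34:PMk}. The auxiliary step is to build $=_2$ on the RHS: joining two copies of any $[1,0,\ldots,0,t_j]$ of arity $n_j$ through $n_j-1$ parallel $\neq_2$'s produces $t_j\cdot\neq_2$ on the RHS, and a standard chain of these with the existing LHS $\neq_2$ then yields $=_2$ on the RHS. With $=_2$ in hand, two moves suffice: connecting $[1,0,\ldots,0,t_a]$ and $[1,0,\ldots,0,t_b]$ through a single shared variable via $=_2$ gives $[1,0,\ldots,0,t_at_b]$ of arity $n_a+n_b-2$, and a $=_2$ self-loop reduces arity by $2$ while preserving the weight. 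A Euclidean-type descent using $\gcd(n_i)=k$ reaches arity exactly $k$ (for $k=3$ the gcd forces some $n_i$ odd, enabling an odd-arity descent to $3$; for $k=4$ every $n_i\in 4\mathbb{Z}$ admits a descent by $2$'s to $4$). Normalize the resulting $[1,0,\ldots,0,t^\star]$ to $=_k$ by $D=\trans{1}{0}{0}{(t^\star)^{-1/k}}$; this preserves $\exactone{d}$ and $\neq_2$ up to nonzero scalars, so $\plholant{{\neq}_2}{=_k,\exactone{d}}\leq_T\PlHolant(\mathcal{F})$, which is \numP-hard by Corollary~\ref{cor:Z:Eq34:PMk}.

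For $k=1$ we construct a binary outside $Z\mathscr{P}$ and invoke Lemma~\ref{lem:P2:binary}. Pinning $d-2$ inputs of $\exactone{d}$ through $\neq_2$-connected copies of the unary $[a,b]$ ($ab\neq 0$) realizes the RHS binary $h^\circ=[(d-2)ab^{d-3},\,b^{d-2},\,0]$, whose standard-basis image $h=Z^{\otimes 2}h^\circ=[p+2q,\,ip,\,-p+2q]$ with $p=(d-2)ab^{d-3}$ and $q=b^{d-2}$ satisfies $h_0+h_2=4q\neq 0$ and $h_1^2=-p^2\neq-p^2+4q^2=h_0 h_2$. By the parameterization of symmetric binaries in $Z\mathscr{P}$ as either $[a'+b',\,i(a'-b'),\,-(a'+b')]$ (the $Z$-image of a weighted equality $[a',0,b']$) or a degenerate tensor square, $h\notin Z\mathscr{P}$. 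Lemma~\ref{lem:P2:binary} applied to $h$ and any arity-$\geq 3$ element of $\mathcal{F}\cap\mathscr{P}_2$ then yields either direct \numP-hardness or $\PlCSP^2(D'Z^{-1}\mathcal{F})\leq_T\PlHolant(\mathcal{F})$ for some diagonal $D'$. In the latter case, Theorem~\ref{thm:PlCSP2} closes the argument: $D'Z^{-1}\mathcal{F}$ contains a scalar multiple of $\exactone{d}$, and a direct inspection shows $\exactone{d}$ with $d\geq 3$ lies in none of the $\PlCSP^2$ tractable classes $T\mathscr{A}$ ($T\in\mathcal{T}_8$), $\mathscr{P}$, or $T\widehat{\mathscr{M}}$ ($T\in\mathcal{T}_4$)---its support $\{e_1,\ldots,e_d\}$ is not affine nor of the $[a,0,\ldots,0,b]$ form, and $\trans{1}{1}{1}{-1}^{\otimes d}\exactone{d}$ has entries of both parities, violating the matchgate parity condition.

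For $k=2$ we reduce directly to $\PlCSP^2(\exactone{d})$. Normalize an arity-$n_g$ weighted equality from $\nondeg{F}\cap\mathscr{P}_2$ (which exists by hypothesis with $n_g\geq 4$ even) to $=_{n_g}$ by a diagonal $D$; the arity-$2$ weighted equality then becomes $[1,0,s]$ with $s\neq 0$, while $\exactone{d}$ and $\neq_2$ are preserved up to scalars. Joining two copies of $[1,0,s]$ through $\neq_2$ on the LHS produces $s\cdot\neq_2$ on the RHS, and iterated self-loops of $=_{n_g}$ via $[1,0,s]$ (with a final interpolation step to clear the accumulated weight power) yield pure $=_2$ on the RHS; further self-loops of $=_{n_g}$ via $=_2$ generate all of $\mathcal{EQ}_2$. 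The resulting reduction $\PlCSP^2(\exactone{d})\leq_T\PlHolant(\mathcal{F})$ combined with Theorem~\ref{thm:PlCSP2} and the non-membership of $\exactone{d}$ in every $\PlCSP^2$ tractable class (as above) gives \numP-hardness. The hardest part of the whole argument is the verification, for $k\in\{1,2\}$, that $\exactone{d}$ avoids every $\PlCSP^2$ tractable class under all diagonal twists in $\mathcal{T}_4$ and $\mathcal{T}_8$, which requires a careful case-by-case check of the matchgate-parity, affine, and product conditions; a secondary obstacle is the careful tracking of weights through the arity descent of the $k\in\{3,4\}$ case and the interpolation used to produce pure $=_2$ when $k=2$.
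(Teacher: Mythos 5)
Your treatment of $k\in\{3,4\}$ matches the paper: normalize $\hat g$ to a pure equality, pin, run the subtractive Euclidean construction on the weighted equalities to reach arity $k$, normalize by a diagonal, and invoke Corollary~\ref{cor:Z:Eq34:PMk}. The divergence is in $k\in\{1,2\}$. The paper treats these cases \emph{uniformly} with $k\in\{3,4\}$: the Euclidean descent still produces a weighted equality of arity $k$, call it $=_1$ or $=_2$ after normalization; move it to the LHS through $\neq_2$; apply it repeatedly to the remaining weighted equality $\hat g'$ of arity $n\ge 3$ to reduce its arity by one (resp.\ two) until it has arity $3$ or $4$; normalize again and invoke the \emph{same} Corollary~\ref{cor:Z:Eq34:PMk}. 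Your route instead detours through Lemma~\ref{lem:P2:binary} (for $k=1$) and through directly manufacturing $\mathcal{EQ}_2$ (for $k=2$), in both cases finishing with Theorem~\ref{thm:PlCSP2} and a nontrivial verification that $\exactone{d}\notin T\mathscr{A}\cup\mathscr{P}\cup T'\widehat{\mathscr{M}}$. That heavier machinery can be made to work, and your verification of the non-membership (support not affine, $\trans{1}{1}{1}{-1}^{\otimes d}\exactone{d}$ has the geometric-series-in-weight profile $[d,d-2,\dots,-d]$ breaking the parity condition, diagonal twists only rescale) is essentially right; but it is considerably more than is needed, and the single deepest tool you invoke (Theorem~\ref{thm:PlCSP2}) is avoidable.

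Two concrete gaps in the $k\le 2$ branch deserve attention. First, for $k=1$ you take the unary $[a,b]$ with $ab\neq 0$ as coming from a degenerate signature in $\mathcal{F}$; but $k=1$ can arise purely from coprime arities $\ge 2$ (e.g.\ a weighted equality of arity $3$ and one of arity $4$) with no degenerate signature present. The unary must instead be obtained as the arity-$1$ output of the Euclidean descent on $\mathcal{G}$ — exactly the object your own $k\in\{3,4\}$ argument already constructs. Second, for $k=2$ your claim that ``iterated self-loops of $=_{n_g}$ via $[1,0,s]$ (with a final interpolation step to clear the accumulated weight power) yield pure $=_2$'' is stated but not carried out; the weight $s^j$ does not disappear automatically, and the interpolation argument needs the eigenvalue-ratio analysis (finite-order versus Vandermonde cases) that the paper spells out elsewhere (e.g.\ in Lemma~\ref{lem:dichotomy:M2}). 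Both gaps disappear if you adopt the paper's move: rather than trying to realize $=_2$ exactly, keep $=_1$ or $=_2$ in the LHS and use them only to \emph{shrink the arity} of a weighted equality down to $3$ or $4$, then re-normalize and re-apply Corollary~\ref{cor:Z:Eq34:PMk}. That closes both low-$k$ cases in a few lines, with no interpolation and no appeal to the $\PlCSP^2$ dichotomy.
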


\begin{proof}
  We may assume that $f\in\mathscr{M}_4^+$.
  Since $\nondeg{F}\cap\mathscr{P}_2$ is nonempty,
  there exists $g\in \nondeg{F}\cap\mathscr{P}_2$.
  By the definition of $\nondeg{F}$, $g$ has arity $n\ge 3$.
  We do a $Z$ transformation,
  \begin{align*}
    \PlHolant(\mathcal{F})
    \equiv
    \plholant{{\neq}_2}{\hat{g},\exactone{d},Z^{-1}\mathcal{F}},
  \end{align*}
  where $\hat{g}= (Z^{-1})^{\otimes n} g$
has the form $[1,0,\dots,0,c]$
of arity $n$
  for some $c\neq 0$ up to a nonzero factor.
  We further do a diagonal transformation $D=\trans{1}{0}{0}{c^{1/n}}$ and get 
  \begin{align*}
    \PlHolant(\mathcal{F})
    \equiv
    \plholant{{\neq}_2}{{=}_n,\exactone{d},(ZD)^{-1}\mathcal{F}},
  \end{align*}
  where we ignore nonzero factors on $\neq_2$ and \exactone{d}.
  Then by Lemma~\ref{lem:Z:Eq34:PMk:pin},  
  \begin{align*}
    \PlHolant(\mathcal{F})
    \ge_T
    \plholant{{\neq}_2}{{=}_n,\exactone{d},[0,1],[1,0],(ZD)^{-1}\mathcal{F}}.
  \end{align*}
By a weighted equality we mean a signature of the form
$[a, 0, \dots,0,b]$ of some arity $\ge 1$, where $ab \not = 0$. 
Recall that $\mathscr{P}_2$ consists of the $Z$ transformation of
all weighted equalities.
  Let $\mathcal{G}$ be the set of weighted equalities in $(ZD)^{-1}\mathcal{F}$.
  In other words, $\mathcal{G}=(ZD)^{-1}\left(\mathcal{F}\cap\mathscr{P}_2\right)$
  as $(ZD)^{-1}\mathscr{P}_2$ contains all weighted equalities.
  Moreover, up to a nonzero factor, $(=_n)\in\mathcal{G}$.

  Let $k'$ be the gcd of all arities of signatures in $\mathcal{G}$,
or equivalently the gcd of all arities of signatures 
in $\mathcal{F}\cap\mathscr{P}_2$.
  If $k'\neq k$, then the only possibility  is that
  $(ZD)^{-1}\mathcal{F}$ contains a degenerate signature
 $[a,b]^{\otimes m}$ for some $m \ge 2$ with $ab\neq 0$.
  In this case we use pinnings $[1,0]$ or $[0,1]$ to realize $[a,b]$ from $[a,b]^{\otimes m}$
  and put $[a,b]$ in $\mathcal{G}$.
  Hence we may assume that $k'=k$.

  Pick any $g_1,g_2\in\mathcal{G}$ of arities $\ell_1$ and $\ell_2$.
  Let $r=\gcd(\ell_1,\ell_2)$.
  Let $t_1,t_2$ be two positive integers such that $t_1\ell_1-t_2\ell_2=r$.
  Then connecting $t_1$ copies of $g_1$ and $t_2$ copies of $g_2$ via $\neq_2$ in a bipartite and planar way,
  we get a weighted equality signature of arity $r$.

  Apply the same argument repeatedly. 
  Eventually we construct a weighted equality $h$ of arity $k$.
  We further do a diagonal transformation $D_1$ to make it $=_k$,
  that is,
  \begin{align*}
    \PlHolant(\mathcal{F}) & \ge_T\plholant{{\neq}_2}{\mathcal{G},\exactone{d}}\\
    & \ge_T \plholant{{\neq}_2}{h,\exactone{d},\mathcal{G}}\\
    & \ge_T \plholant{(\neq_2)D_1^{\otimes 2}}{{=}_k,\left(D_1^{-1}\right)^{\otimes d}\exactone{d},D_1^{-1}\mathcal{G}}\\
    & \ge_T \plholant{{\neq}_2}{{=}_k,\exactone{d},D_1^{-1}\mathcal{G}},
  \end{align*}
  where in the last line we ignored nonzero factors of $\exactone{d}$ and $\neq_2$.
  If $k=3$ or $4$,
 then the hardness follows from Corollary~\ref{cor:Z:Eq34:PMk}.

  If $k=1$ or $2$, then on the right hand side
  we have $=_k$, which is $=_1$ or $=_2$, and a weighted equality 
  $\left( D_1^{-1} \right)^{\otimes n}(=_n)\in D_1^{-1}\mathcal{G}$.
  Call it $\hat{g}'$.
  We move the $=_k$ to the left hand side via $\neq_2$.
  Then we connect zero or more copies of this $=_k$, 
  which is $=_1$ or $=_2$, to $\hat{g}'$ such that its arity is $3$ or $4$.
  It is possible that $n=3$ or $4$ to begin with, 
  and if so we do nothing.
  We are done by yet another diagonal transformation and Corollary~\ref{cor:Z:Eq34:PMk}.
\end{proof}

\begin{lemma}  \label{lem:P2:M4:tractability}
  Let $\mathcal{F}$ be a set of symmetric signatures.
  Suppose $\mathcal{F}\subseteq Z\mathscr{P}\cup\mathscr{M}_4^\sigma$ for some $\sigma\in\{+,-\}$
  and the greatest common divisor of the arities of all signatures in $\mathcal{F}^*\cap\mathscr{P}_2$ is $k\ge 5$.
  Then $\PlHolant(\mathcal{F})$ can be computed in polynomial time.
\end{lemma}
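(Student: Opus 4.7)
The plan is to reduce the problem to the tractable regimes of Lemmas~\ref{lem:Z:Eqd-PM3} and \ref{lem:Z:Eq5-PM3} via the $Z$-transformation, and then to run (a natural weighted generalization of) their pinning-based algorithm. First I would dispose of the sign: the orthogonal matrix $H=\trans{1}{0}{0}{-1}$ satisfies $HZ = Z\trans{0}{1}{1}{0}$, so $H^{\otimes n}$ carries $\mathscr{M}_4^-$ onto $\mathscr{M}_4^+$ while preserving $Z\mathscr{P}$ (since $\trans{0}{1}{1}{0}$ merely permutes coordinates inside $\mathscr{P}$). Applying $H$ to $\mathcal{F}$ preserves the complexity by Theorem~\ref{thm:orthogonal}, and since $\mathscr{P}_2$ is orthogonally invariant the gcd condition is untouched. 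So I may assume $\sigma=+$.

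Applying the $Z$-transformation yields $\PlHolant(\mathcal{F}) \equiv_T \plholant{\neq_2}{Z^{-1}\mathcal{F}}$, and I would next pin down the structure of $Z^{-1}\mathcal{F} \subseteq \mathscr{P} \cup Z^{-1}\mathscr{M}_4^+$. Every signature in $Z^{-1}\mathscr{M}_4^+$ is a scalar multiple of some $\exactone{d}$. Among the symmetric members of $\mathscr{P}$ we can a priori see weighted equalities $[a,0,\dots,0,b]$ with $ab\neq0$, the signature $[0,1,0]$, and degenerate tensor powers $[c,d]^{\otimes m}$. The hypothesis $k\ge 5$ rules out all the low-arity cases: $[0,1,0]$ and any binary weighted equality lie in $\mathscr{P}_2$ with arity $2$, which is forbidden; and any degenerate $[c,d]^{\otimes m}\in\mathcal{F}$ with $[c,d]$ not a scalar multiple of $\tbcolvec{1}{\pm i}$ would place a unary into $\mathcal{F}^*\cap\mathscr{P}_2$ and force $k=1$. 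Hence the degenerate signatures in $Z^{-1}\mathcal{F}$ are only scalar multiples of $[1,0]^{\otimes m}$ or $[0,1]^{\otimes m}$, and every weighted equality in $Z^{-1}\mathcal{F}$ has arity a positive multiple of~$k$.

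With this structure in hand the remaining task is to compute $\plholant{\neq_2}{\mathcal{EO},\{[a,0,\dots,0,b]\ \text{of arity}\ k m\},[1,0]^{\otimes *},[0,1]^{\otimes *}}$ in polynomial time. I would run essentially verbatim the algorithm of Lemma~\ref{lem:Z:Eqd-PM3} (for $k\ge 6$) or Lemma~\ref{lem:Z:Eq5-PM3} (for $k=5$), generalized in the obvious way to weighted equalities. The algorithmic ingredients -- pinning with $[1,0]$ or $[0,1]$, contracting \exactone{} signatures through $\neq_2$, forming \eblock{}-like components out of weighted equalities and $\neq_2$'s, and searching for pinned edges using planarity -- depend only on the \emph{support} of the signatures, which is identical to that of unweighted equalities; each step just emits a scalar factor $a_j$ or $b_j$ that is carried along multiplicatively. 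The graph-theoretic Lemmas~\ref{lem:planar_bipartite} and \ref{lem:planar_bipartite:35} apply unchanged because they depend only on the degree sequence. When the recursion bottoms out with either no weighted equality or no \exactone{} left, the residual instance lies in $\mathscr{P}$ or in $\mathscr{M}$ and is polynomial-time computable.

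The \textbf{main obstacle} is to verify that the congruence $n_+\equiv n_-\pmod k$ of~\eqref{eqn:eblock:mod}, which underwrites the entire pinning argument, survives when \eblock{}-like components are built out of weighted equalities whose arities are arbitrary multiples of $k$ rather than all equal to $k$. This follows by the same counting as before: each weighted equality of arity $k m_j$ contributes $k m_j$ edges all of a single sign, while each internal $\neq_2$ consumes one $+$ and one $-$ edge, so $n_+-n_-$ is a $\mathbb{Z}$-linear combination of the $k m_j$'s, hence a multiple of $k$. Once this congruence and the corresponding trivial/parallel-edge/wheel case analyses are re-derived for mixed arities, the proof concludes exactly as in Lemmas~\ref{lem:Z:Eqd-PM3} and \ref{lem:Z:Eq5-PM3}.
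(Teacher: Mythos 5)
Your overall strategy coincides with the paper's: transform by $Z$, classify what can appear in $Z^{-1}\mathcal{F}$, and reduce to the pinning algorithm of Lemmas~\ref{lem:Z:Eqd-PM3} and~\ref{lem:Z:Eq5-PM3} extended to weighted equalities. Your explicit verification that the congruence~\eqref{eqn:eblock:mod} survives when the equalities have arities that are arbitrary multiples of $k$ (each weighted equality contributes $km_j$ edges of one sign; internal $\neq_2$'s cancel in $n_+ - n_-$) is a welcome elaboration of the paper's terse ``it can be checked.''

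However, your structure analysis contains a concrete error: you claim the gcd hypothesis forbids $[0,1,0]$ from appearing in $Z^{-1}\mathcal{F}$, on the grounds that it ``lies in $\mathscr{P}_2$ with arity $2$.'' This confuses the transformed and untransformed pictures. The gcd condition constrains $\mathcal{F}^* \cap \mathscr{P}_2$ in the original basis; if $[0,1,0]$ appears in $Z^{-1}\mathcal{F}$, the corresponding signature in $\mathcal{F}$ is $Z^{\otimes 2}[0,1,0] = 2[1,0,1]$, and $[1,0,1]\notin \mathscr{P}_2$. Indeed, every binary $f \in \mathscr{P}_2 = \mathscr{A}_2$ has the form $c\left(\tbcolvec{1}{i}^{\otimes 2} + \beta \tbcolvec{1}{-i}^{\otimes 2}\right) = c\left[1+\beta,\; i(1-\beta),\; -(1+\beta)\right]$, whose first and last entries are negatives of each other; $[1,0,1]$ fails this. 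So the hypothesis $k\ge 5$ places no constraint whatsoever on the presence of $2[1,0,1]$ in $\mathcal{F}$ — and it can certainly be there, since $2[1,0,1] = Z^{\otimes 2}\exactone{2}$ lies in both $Z\mathscr{P}$ and $\mathscr{M}_4^+$. (The thing that \emph{is} true is that $[0,1,0]$ itself, viewed in the $Z^{-1}$-transformed basis, happens to be $-\tfrac{i}{2}\bigl(\tbcolvec{1}{i}^{\otimes 2} - \tbcolvec{1}{-i}^{\otimes 2}\bigr) \in \mathscr{P}_2$ — but that is irrelevant, because $[0,1,0]$ is not a member of $\mathcal{F}$.) Consequently your residual problem $\plholant{\neq_2}{\mathcal{EO},\{[a,0,\dots,0,b]\},[1,0]^{\otimes *},[0,1]^{\otimes *}}$ is missing $\neq_2$ on the right-hand side. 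The fix is simply to retain it, as the paper's own residual problem does; the cited tractability lemmas already accommodate $\neq_2$ on the right (a right-hand $\neq_2$ sandwiched between two left-hand $\neq_2$'s just collapses to a single left-hand $\neq_2$), so the algorithm is undamaged, but your proof as written would omit this case.
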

\begin{proof}
  We may assume that $\sigma=+$ and the case of $\sigma=-$ is similar.
  We do a $Z$ transformation on $\PlHolant(\mathcal{F})$, 
  and get a problem of \plholant{{\neq}_2}{Z^{-1}\mathcal{F}}.

  In this bipartite setting, given $=_{n}$ on the right hand side,
  we can realize $=_{\ell n}$ for any integer $\ell\ge 1$
  as an \eblock{n} on the right.
  The problem \plholant{{\neq}_2}{\mathcal{EQ}_n,\mathcal{EO},\neq_2,[1,0],[0,1]} is tractable
  for any $n\ge 5$
by Lemma~\ref{lem:Z:Eqd-PM3} and Lemma~\ref{lem:Z:Eq5-PM3},
  where $\mathcal{EQ}_n$ denotes the set of all equalities of arity $\ell n$ for all integers $\ell\ge 1$.

  The symmetric signatures in the
 set $Z\mathscr{P}$ consist of $\mathscr{P}_2$, $Z^{\otimes 2}(\neq_2)$, 
  and degenerate signatures.
  If there is any degenerate signature of the form $(Z[a,b])^{\otimes m}\in\mathcal{F}$ with $ab\neq 0$,
  then $Z[a,b]\in\mathcal{F}^*\cap\mathscr{P}_2$.
  This contradicts  $k\ge 5$.
  Hence all degenerate signatures in $\mathcal{F}$ are of the form $(Z[1,0])^{\otimes m}$ or $(Z[0,1])^{\otimes m}$, if any.
Since $\mathcal{F}\subseteq Z\mathscr{P}\cup\mathscr{M}_4^+$,
  after a $Z$ transformation,
  $\PlHolant(\mathcal{F})$ is an instance of \plholant{{\neq}_2}{\mathcal{EQ}_k,\mathcal{EO},\neq_2,[1,0],[0,1]}
  except for the weights on the equalities.
  It can be checked that the tractability results of 
  Lemma~\ref{lem:Z:Eqd-PM3} and Lemma~\ref{lem:Z:Eq5-PM3} also apply to weighted equalities.
  The lemma follows.
\end{proof}

\vspace{.2in}

Let $\mathcal{G} = \{{=}_k \mid k \in S\}$ be a set of \textsc{Equality} signatures,
where $S$ is a set of positive integers containing at least one $r \ge 3$. 
Moreover let $\mathcal{EO}^+:=\{\exactone{d}\mid d\in \mathbb{Z}^+\}=\mathcal{EO}\cup\{\neq_2,[0,1]\}$.
Then $\plholant{\mathcal{G}}{\mathcal{EO}^+}$ is the problem of 
counting perfect matchings over hypergraphs with planar incidence graphs,
where the hyperedge sizes are prescribed by $S$.
In the incidence graph,
vertices assigned signatures in $\mathcal{G}$ on the left represent hyperedges, 
and vertices assigned signatures in $\mathcal{EO}^+$ on the right represent vertices of the hypergraph.
Let $t = \gcd(S)$.
It is stated in the introduction that
this problem is tractable if $t \ge 5$ and \numP-hard if $t \le 4$.
The tractability when $t \ge 5$ follows from Lemma~\ref{lem:Z:Eqd-PM3} and \ref{lem:Z:Eq5-PM3},
since we can reduce $\plholant{\mathcal{G}}{\mathcal{EO}^+}$ 
to $\plholant{{\neq}_2}{{=}_t,\mathcal{EO},\neq_2,[0,1]}$.
The reduction goes as follows.
With $\neq_2$ on the left hand side and $=_t$ on the right hand side,
we can construct all \eblock{t}s and hence all of $\EQ_t$ on the right.
Note that $\mathcal{G} \subseteq \EQ_t$.
Then we move all signatures in $\mathcal{G}$ to the left via $\neq_2$.

The hardness of $\plholant{\mathcal{G}}{\mathcal{EO}^+}$ 
for $t\le 4$ follows from Corollary~\ref{cor:Z:Eq34:PMk}.
The reason is as follows.
We construct $\neq_2$ on the left using the gadget pictured in Figure~\ref{subfig:gadget:van-binary}
with $(=_r) \in \mathcal{G}$ on the left side assigned to circle vertices and
$\neq_2$ on the right side assigned to square vertices.
Then we move $\mathcal{G}$ to the right side via $\neq_2$ on the right side. 
We construct ${=}_t$ on the right side in the same Euclidean process using 
$\mathcal{G}$ of the right side and $\neq_2$ of the left side. 
This gives us a reduction from $\plholant{{\neq}_2}{{=}_t,\mathcal{EO}}$,
which is \numP-hard by Corollary~\ref{cor:Z:Eq34:PMk} if $t=3,4$.
Otherwise $t=1,2$.
Recall that $(=_r) \in\mathcal{G}$ for some $r\ge 3$.
We use $=_t$ to reduce the arity of $=_r$ to $3$ or $4$, if necessary.
Again we are done by Corollary~\ref{cor:Z:Eq34:PMk}.

If we do not assume there is at least one hyperedge of size $\ge 3$
in $\plholant{\mathcal{G}}{\mathcal{EO}^+}$, and $t = \gcd(S) \le 2$, 
then the problem is tractable if and only if $S\subseteq\{1,2\}$.
The tractability is due to Kasteleyn's algorithm, 
as there is no hyperedge.
In summary, we have the following theorem.

\begin{theorem}\label{thm:hypergraph}
  The problem $\plholant{\mathcal{G}}{\mathcal{EO}^+}$ 
  counts perfect matchings over hypergraphs with planar incidence graphs,
  where the hyperedge sizes are prescribed by a set $S$ of positive integers.
  Let $t = \gcd(S)$.
  If $t \ge 5$ or $S\subseteq\{1,2\}$,
  then the problem is computable in polynomial time.
  Otherwise $t \le 4$, $S\not\subseteq\{1,2\}$,
  and the problem is \numP-hard.
\end{theorem}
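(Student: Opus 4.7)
The overall plan is to split into three cases matching the statement and invoke the machinery from Section~\ref{sec:mixing} after arranging each instance into the bipartite form $\plholant{{\neq}_2}{{=}_t, \mathcal{EO}, \neq_2, [0,1]}$ for tractability or $\plholant{{\neq}_2}{{=}_t, \mathcal{EO}}$ for hardness. The three cases are (i) $S \subseteq \{1,2\}$, (ii) $t = \gcd(S) \ge 5$, and (iii) $t \le 4$ with some element of $S$ being at least $3$.

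In case (i) the problem is essentially counting perfect matchings in a planar graph: hyperedges of size~$2$ are ordinary edges of the incidence graph and hyperedges of size~$1$ act as unary pins fixing the value of the incident variable, so the FKT algorithm solves the problem in polynomial time. For case (ii), the plan is to reduce $\plholant{\mathcal{G}}{\mathcal{EO}^+}$ to $\plholant{{\neq}_2}{{=}_t, \mathcal{EO}, \neq_2, [0,1]}$, which is tractable by Lemma~\ref{lem:Z:Eqd-PM3} when $t \ge 6$ and by Lemma~\ref{lem:Z:Eq5-PM3} when $t = 5$. Since every $k \in S$ is a multiple of $t$, each $=_k \in \mathcal{G}$ can be realized as a planar \eblock{t} on the right side using $=_t$ and the right-side $\neq_2$, and then flipped to the left side by precomposing its $k$ edges with left-side $\neq_2$. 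Here we use that $=_k$ is invariant under simultaneous negation of all of its inputs, so this relabeling does not change the signature.

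For case (iii), the goal is to reduce $\plholant{{\neq}_2}{{=}_t, \mathcal{EO}}$, which is $\numP$-hard by Corollary~\ref{cor:Z:Eq34:PMk} when $t \in \{3,4\}$, to $\plholant{\mathcal{G}}{\mathcal{EO}^+}$. The first step is to realize $\neq_2$ on the left by using the gadget of Figure~\ref{subfig:gadget:van-binary} with $=_r$ on the circles and right-side $\neq_2$ on the squares, where $r \in S$ is chosen with $r \ge 3$. Once $\neq_2$ is available on the left, each $=_k \in \mathcal{G}$ can be moved to the right side, after which a subtractive Euclidean process on the arities (joining pairs of the $=_k$'s by right-side $\neq_2$'s) yields $=_t$ on the right. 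When $t \in \{3,4\}$ this already completes the reduction. When $t \in \{1,2\}$, one further connects the new $=_t$ to a copy of $=_r$ through $\neq_2$ to trim its arity down to $3$ or $4$, and again invokes Corollary~\ref{cor:Z:Eq34:PMk}. The main technical obstacle throughout is maintaining planarity of the bipartite incidence structure during the \eblock{t} construction and the Euclidean arity reductions; this is routine but must be verified carefully at each step, since Lemma~\ref{lem:Z:Eqd-PM3}, Lemma~\ref{lem:Z:Eq5-PM3}, and Corollary~\ref{cor:Z:Eq34:PMk} all require planar inputs.
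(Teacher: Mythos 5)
Your proposal matches the paper's argument: the same three-case split, the same reduction to $\plholant{{\neq}_2}{{=}_t,\mathcal{EO},\neq_2,[0,1]}$ via \eblock{t}s for $t\ge 5$, the same gadget (Figure~\ref{subfig:gadget:van-binary}) to realize LHS~$\neq_2$ from $=_r$, the same Euclidean arity reduction to $=_t$, and the same arity-trim via $=_t$ when $t\in\{1,2\}$. The only nits are cosmetic—you say ``right-side $\neq_2$'' in a couple of places where the connecting vertex between two RHS equalities must be on the LHS, but since $\neq_2$ is available on both sides (and an odd chain of $\neq_2$'s composes to $\neq_2$) this does not affect the argument.
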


%
%

\section{Full Dichotomy} \label{sec:full:dic}

We are finally ready to prove our main dichotomy theorem.
Recall that for a set $\mathcal{F}$ of signatures,
$\nondeg{F}$ denotes the set of non-degenerate signatures in $\mathcal{F}$ of arity at least~$3$,
and $\mathcal{F}^*$ denotes $\mathcal{F}$ with all degenerate signatures $[a,b]^{\otimes m}$ 
replaced by unary $[a,b]$.

\begin{theorem} \label{thm:main}
 Let $\mathcal{F}$ be any set of symmetric, complex-valued signatures in Boolean variables.
 Then $\PlHolant(\mathcal{F})$ is \numP-hard unless $\mathcal{F}$ satisfies one of the following conditions:
 \begin{enumerate}
  \item All non-degenerate signatures in $\mathcal{F}$ are of arity at most~2; \label{case:main_tractable:trivial}
  \item $\mathcal{F}$ is $\mathscr{A}$-transformable; \label{case:main_tractable:CSP:A}
  \item $\mathcal{F}$ is $\mathscr{P}$-transformable; \label{case:main_tractable:CSP:P}
  \item $\mathcal{F} \subseteq \mathscr{V}^\sigma \union \{f \in \mathscr{R}_2^\sigma \st \arity(f) = 2\}$ 
    for some $\sigma \in \{+,-\}$; \label{case:main_tractable:vanishing_and_binary}
  \item All non-degenerate signatures in $\mathcal{F}$ are 
    in $\mathscr{R}_2^\sigma$ for some $\sigma \in \{+,-\}$. \label{case:main_tractable:vanishing_and_unary}
  \item $\mathcal{F}$ is $\mathscr{M}$-transformable; \label{case:main_tractable:CSP:M}       
  \item $\mathcal{F}\subseteq Z\mathscr{P}\cup\mathscr{M}_4^\sigma$ for some $\sigma\in\{+,-\}$, and
    the greatest common divisor of the arities of the signatures in $\mathcal{F}^*\cap\mathscr{P}_2$ is at least~$5$.\label{case:Eq:PM}
 \end{enumerate}
 In each exceptional case,
 $\PlHolant(\mathcal{F})$ is computable in polynomial time.
 If $\mathcal{F}$ satisfies condition \ref{case:main_tractable:trivial},
 \ref{case:main_tractable:CSP:A}, \ref{case:main_tractable:CSP:P},
 \ref{case:main_tractable:vanishing_and_binary}, or \ref{case:main_tractable:vanishing_and_unary},
 then $\Holant(\mathcal{F})$ is computable in polynomial time without planarity;
 otherwise $\Holant(\mathcal{F})$ is \numP-hard.
\end{theorem}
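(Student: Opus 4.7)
The plan is to bootstrap from the single-signature dichotomy (Theorem~\ref{thm:dic:single}) and the various mixing lemmas into a full set dichotomy via a layered case analysis. The tractability half is essentially already done: cases~\ref{case:main_tractable:trivial}--\ref{case:main_tractable:vanishing_and_unary} are tractable even without planarity by Theorem~\ref{thm:Holant:set}, case~\ref{case:main_tractable:CSP:M} is Theorem~\ref{thm:tractable:M}, and case~\ref{case:Eq:PM} is Lemma~\ref{lem:P2:M4:tractability}. The claim that $\Holant(\mathcal{F})$ is \numP-hard whenever $\mathcal{F}$ falls into case~\ref{case:main_tractable:CSP:M} or~\ref{case:Eq:PM} but not into cases~\ref{case:main_tractable:trivial}--\ref{case:main_tractable:vanishing_and_unary} is a direct application of the general-graph dichotomy Theorem~\ref{thm:Holant:set}.

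For hardness, I would first apply Theorem~\ref{thm:dic:single} to each $f \in \nondeg{F}$. If any such $f$ satisfies none of its tractable criteria, then $\PlHolant(f) \le_T \PlHolant(\mathcal{F})$ gives \numP-hardness and we are done. Otherwise every $f \in \nondeg{F}$ lies in $\mathscr{P}_1 \cup \mathscr{M}_2 \cup \mathscr{A}_3 \cup \mathscr{M}_3 \cup \mathscr{M}_4 \cup \mathscr{V}$. Next I would peel off the ``soft'' classes one by one: if some $f \in \nondeg{F}$ lies in $\mathscr{P}_1$, $\mathscr{A}_3$, $\mathscr{M}_2 \setminus \mathscr{P}_2$, or $\mathscr{M}_3$, then Corollary~\ref{cor:dichotomy:P1}, Corollary~\ref{cor:dichotomy:A3}, Lemma~\ref{lem:dichotomy:M2}, or Lemma~\ref{lem:dichotomy:M3} respectively drives $\mathcal{F}$ into one of the transformable cases~\ref{case:main_tractable:CSP:A}--\ref{case:main_tractable:CSP:M}. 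After these reductions I may assume $\nondeg{F} \subseteq \mathscr{P}_2 \cup \mathscr{M}_4 \cup \mathscr{V}$.

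Now the main case split is on which of $\mathscr{P}_2$, $\mathscr{M}_4^+$, $\mathscr{M}_4^-$, $\mathscr{V}^+ \setminus \mathscr{M}_4^+$, $\mathscr{V}^- \setminus \mathscr{M}_4^-$ contain a member of $\nondeg{F}$. The no-mixing results of Section~\ref{sec:mixing:vanishing}---Lemma~\ref{lem:van:plus_and_minus} (mixing $\mathscr{V}^+$ with $\mathscr{V}^-$ outside $\mathscr{M}_4$), Lemma~\ref{lem:vanishing:P2} (mixing $\mathscr{V} \setminus \mathscr{M}_4$ with $\mathscr{P}_2$), and Lemma~\ref{lem:P2:M4+-} (mixing $\mathscr{P}_2$ with both $\mathscr{M}_4^\pm$)---rule out most combinations, narrowing the surviving scenarios to either (a) $\nondeg{F} \subseteq \mathscr{V}^\sigma$ for some $\sigma \in \{+,-\}$, or (b) $\nondeg{F} \subseteq \mathscr{P}_2 \cup \mathscr{M}_4^\sigma$ for some $\sigma$. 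In (a), the remaining work is to constrain the binary and degenerate members of $\mathcal{F}$ using Lemma~\ref{lem:van:deg}, Lemma~\ref{lem:van:bin}, Lemma~\ref{lem:van:M4:bin}, and Lemma~\ref{lem:M4+:M4-:deg}; those should force $\mathcal{F}$ into case~\ref{case:main_tractable:vanishing_and_binary}, \ref{case:main_tractable:vanishing_and_unary}, or (in an $\mathscr{M}_4$-only subcase) case~\ref{case:Eq:PM}.

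Case (b) is where the real action lies. When no $\mathscr{M}_4^\sigma$ signature appears in $\nondeg{F}$, I would invoke Lemma~\ref{lem:P2:binary} to reduce to a $\PlCSP^2$ instance and then apply Theorem~\ref{thm:PlCSP2} to read off one of the transformable tractable cases. When $\nondeg{F}$ does contain some $\mathscr{M}_4^\sigma$ signature, Lemma~\ref{lem:P2:M4:hardness} should force the gcd of arities of signatures in $\mathcal{F}^* \cap \mathscr{P}_2$ to be at least~$5$ on pain of \numP-hardness, landing us in case~\ref{case:Eq:PM}. The hard part will be orchestrating case~(b) cleanly: in particular verifying that no degenerate $[a,b]^{\otimes m} \in \mathcal{F}$ with $ab \ne 0$ and $[a,b] \not\propto [1,\pm i]$ can sneak in without being detected by Lemma~\ref{lem:M4+:M4-:deg}, and confirming that after the $Z$-transform the interaction between $\mathscr{P}_2$-weights and $\mathscr{M}_4$-arities is indeed captured by the gcd criterion on $\mathcal{F}^* \cap \mathscr{P}_2$ (rather than on $\mathcal{F} \cap \mathscr{P}_2$)---this is where the genuinely new tractable class~\ref{case:Eq:PM} crystallises and gets correctly separated from the hard boundary.
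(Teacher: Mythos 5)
Your overall architecture mirrors the paper's proof closely: the same tractability sources, the same use of Theorem~\ref{thm:dic:single} to enter the argument, the same peeling-off of $\mathscr{P}_1$, $\mathscr{A}_3$, $\mathscr{M}_2\setminus\mathscr{P}_2$, $\mathscr{M}_3$, and the same reliance on the Section~\ref{sec:mixing:vanishing} and Section~\ref{sec:mixing} lemmas for the remaining cases. However, your stated case decomposition has a genuine gap. You reduce the surviving scenarios to (a) $\nondeg{F}\subseteq\mathscr{V}^\sigma$ or (b) $\nondeg{F}\subseteq\mathscr{P}_2\cup\mathscr{M}_4^\sigma$, claiming the no-mixing lemmas rule out everything else. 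But Lemma~\ref{lem:van:plus_and_minus} explicitly does \emph{not} give hardness when $f\in\mathscr{M}_4^+$ and $g\in\mathscr{M}_4^-$: that is precisely its exceptional hypothesis. So the scenario $\nondeg{F}\subseteq\mathscr{M}_4^+\cup\mathscr{M}_4^-$ with both halves nonempty (and no $\mathscr{P}_2$ signatures) is a surviving case that your (a) and (b) both fail to cover, since such a set is contained in neither $\mathscr{V}^+$ nor $\mathscr{V}^-$ nor $\mathscr{P}_2\cup\mathscr{M}_4^\sigma$ for a single $\sigma$. The paper handles this by treating $\nondeg{F}\subseteq\mathscr{M}_4$ as its own case, where Lemma~\ref{lem:van:M4:bin} and Lemma~\ref{lem:M4+:M4-:deg} constrain binaries to $Z^{\otimes 2}(\neq_2)$ or $Z^{\otimes 2}[a,0,1]$ and degenerates to powers of $[1,\pm i]$, yielding $\mathscr{M}$-transformability (case~\ref{case:main_tractable:CSP:M}).

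Relatedly, you assert that the $\mathscr{M}_4$-only subcase inside (a) should land in case~\ref{case:Eq:PM}. This is wrong: when $\nondeg{F}\subseteq\mathscr{M}_4^\sigma$ with no non-degenerate equality-type signature in $\mathscr{P}_2$, the set $\mathcal{F}^*\cap\mathscr{P}_2$ carries no signature of arity $\geq 3$, so the gcd condition of case~\ref{case:Eq:PM} is not the governing criterion. The paper shows this subcase lands in case~\ref{case:main_tractable:vanishing_and_unary} (all non-degenerate binaries in $\mathscr{R}_2^\sigma$) or case~\ref{case:main_tractable:CSP:M} (via a $Z$ and diagonal transform reducing to $\exactone{k}\in\mathscr{M}_3$ and applying Lemma~\ref{lem:dichotomy:M3}), not case~\ref{case:Eq:PM}. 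You also invoke Lemma~\ref{lem:M4+:M4-:deg} inside case (a), but its hypothesis requires \emph{both} an $\mathscr{M}_4^+$ and an $\mathscr{M}_4^-$ signature, which is impossible under the assumption $\nondeg{F}\subseteq\mathscr{V}^\sigma$; this is an internal inconsistency that signals the case split needs to be reorganized along the lines the paper uses.
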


\begin{proof}
We may assume that $\mathcal{F}$ contains no identically $0$ signatures.
We note that removing any identically $0$ signature from a set does not affect its
complexity, being either tractable or \numP-hard,
and it does not affect the set $\mathcal{F}$ satisfying any of the
listed conditions in Case~\ref{case:main_tractable:trivial} to \ref{case:Eq:PM}.

  If all non-degenerate signatures in $\mathcal{F}$ are of arity at most~$2$,
  then the problem is tractable case~\ref{case:main_tractable:trivial}.
  Otherwise, there is a non-degenerate signature $f \in \mathcal{F}$ of arity at least~$3$.
  By Theorem~\ref{thm:dic:single},
  $\PlHolant(\mathcal{F})$ is \numP-hard unless
  $f \in \mathscr{P}_1 \cup \mathscr{M}_2 \cup \mathscr{A}_3 \cup \mathscr{M}_3 \cup \mathscr{M}_4$
  or $f$ is vanishing.
  If $f \in \mathscr{P}_1$ or $f\in \mathscr{M}_2 \setminus \mathscr{P}_2$ or $f\in \mathscr{A}_3$ or $f\in \mathscr{M}_3$,
  then we are done by Corollary~\ref{cor:dichotomy:P1} or
  Lemma~\ref{lem:dichotomy:M2} or Corollary~\ref{cor:dichotomy:A3} or
  Lemma~\ref{lem:dichotomy:M3} respectively.
  Therefore, we assume that none of these is the case.
  This implies that $\nondeg{F}$ is nonempty and that each of its signatures 
is in $\mathscr{P}_2$ or in $\mathscr{M}_4$ or vanishing.
  That is,
  \[
    \emptyset \neq \nondeg{F} \subseteq \mathscr{P}_2 \cup \mathscr{M}_4 \cup \mathscr{V}.
  \]
  
  Suppose there exists some $f \in \nondeg{F}$
  which is in $\mathscr{V} \setminus \mathscr{M}_4$.
  We assume $f \in \mathscr{V}^+$ since the other case $\mathscr{V}^-$ is similar.
  In this case, we show that $\PlHolant(\mathcal{F})$ is \numP-hard,
  unless $\mathcal{F}$ is in Case~\ref{case:main_tractable:vanishing_and_binary}
  or Case \ref{case:main_tractable:vanishing_and_unary}.
  Assume that $\PlHolant(\mathcal{F})$ is not \numP-hard.
  We will discuss non-degenerate signatures of arity $\ge 3$, of arity $2$,
  and degenerate signatures separately.
  \begin{enumerate}
    \item For any $g \in \nondeg{F}$,
      we claim that $g \in \mathscr{V}^+$.
      Suppose otherwise, then $g\in\mathscr{P}_2$ or $g\in\mathscr{V}^-$.
      Notice that the latter covers the case where $g\in\mathscr{M}_4$ but $g\not\in\mathscr{V}^+$
      (namely $g\in\mathscr{M}_4^-$).
      If $g\in\mathscr{P}_2$, then $\PlHolant(f,g)$ is \numP-hard by Lemma~\ref{lem:vanishing:P2}.
      If $g\in\mathscr{V}^-$, then $\PlHolant(f,g)$ is \numP-hard by Lemma~\ref{lem:van:plus_and_minus} as $f\not\in\mathscr{M}_4$.
    \item For any non-degenerate binary signature $h\in\mathcal{F}$,
      it must be that $h\in \mathscr{R}_2^+$ as otherwise
      $\PlHolant(f,h)$ is \numP-hard by Lemma~\ref{lem:van:bin}.
    \item 
      If $\rd^+(g) = 1$ for all $g\in\nondeg{F}$,
      then $\nondeg{F}\subseteq\mathscr{R}_2^+$ by  
Lemma~\ref{lem:prelim:vanishing_form_in_Z_basis}.
      Together with the fact just proved that all non-degenerate binary in $\mathcal{F}$ are in $\mathscr{R}_2^+$,
      Case~\ref{case:main_tractable:vanishing_and_unary} is satisfied.
      
      Otherwise there exists $g\in\nondeg{F}$ such that $\rd^+(g)\ge 2$.
Then $g \in \mathscr{V}^+$
by the first item above.
      If $\mathcal{F}$ contains any degenerate signature $v= u^{\otimes m}$
      for $m \ge 1$ and some unary $u$ that is not a multiple of $[1,i]$,
      then by Lemma~\ref{lem:van:deg}, $\PlHolant(g,v)$ is \numP-hard.
      Hence all degenerate signatures are multiples of tensor powers of $[1,i]$,
      which are in $\mathscr{V}^+$.
      It implies that $\mathcal{F}$ is in Case~\ref{case:main_tractable:vanishing_and_binary}.
  \end{enumerate}
  Now we have that $\emptyset\neq\nondeg{F} \subseteq \mathscr{P}_2 \cup \mathscr{M}_4$.
  We handle this in three cases.
  \begin{enumerate}
    \item Suppose $\nondeg{F} \subseteq \mathscr{M}_4$.
      First suppose $\nondeg{F}\subseteq\mathscr{M}_4^\sigma$ for some $\sigma\in\{+,-\}$.
      Assume $\sigma=+$ as $\sigma=-$ is similar.
      Then $\nondeg{F}\subseteq\mathscr{R}_2^+$
by Lemma~\ref{lem:trans:M4} and \ref{lem:prelim:vanishing_form_in_Z_basis}.
      If all non-degenerate binary signatures are in $\mathscr{R}_2^+$ as well,
      then this is Case \ref{case:main_tractable:vanishing_and_unary} and tractable.
      Let $h$ be a non-degenerate binary signature in $\mathcal{F}$ that is not in $\mathscr{R}_2^+$.
      We apply Lemma~\ref{lem:van:M4:bin},
      and $\PlHolant(\mathcal{F})$ is \numP-hard unless $h=Z^{\otimes 2}[a,0,1]$ up to a nonzero factor, where $a\neq 0$.
      In this case we apply a $Z$ transformation, and get \plholant{\neq_2}{[a,0,1],Z^{-1}\mathcal{F}}.
      Then we do a diagonal transformation $D=\trans{a^{1/2}}{0}{0}{1}$.
      Note that this only changes $\neq_2$ on the left hand side to
      a nonzero multiple of $\neq_2$.
      Hence we have the reduction chain:
      \begin{align*}
        \PlHolant(\mathcal{F})
        &\equiv \plholant{{\neq}_2}{[a,0,1],Z^{-1}\mathcal{F}}\\
        &\equiv \plholant{{\neq}_2}{[1,0,1],D^{-1}Z^{-1}\mathcal{F}}\\
        &\ge_T \PlHolant(D^{-1}Z^{-1}\mathcal{F})
      \end{align*}
      Notice that $D^{-1}Z^{-1}\mathcal{F}$ contains $\exactone{k}$ with $k\ge 3$
      that is in $\mathscr{M}_3$ with $I_2$.
      Then by Lemma~\ref{lem:dichotomy:M3},
      $\PlHolant(\mathcal{F})$ is \numP-hard unless
      $D^{-1}Z^{-1}\mathcal{F}\subseteq I_2\mathscr{M}=\mathscr{M}$,
i.e., $\mathcal{F}\subseteq Z D \mathscr{M} = Z \mathscr{M}$.
      The exceptional case implies that $\mathcal{F}$ is $\mathscr{M}$-transformable via $Z$, and we are in the tractable
 Case~\ref{case:main_tractable:CSP:M}.
 
      Otherwise $\nondeg{F}$ contains both $f\in\mathscr{M}_4^+$ and $g\in\mathscr{M}_4^-$.
      Similarly as above, by Lemma~\ref{lem:van:M4:bin},
      any non-degenerate binary signature in $\mathcal{F}$ has to be 
      in $\mathscr{R}_2^+ \cap\mathscr{R}_2^-=\{Z^{\otimes 2}(\neq_2)\}$ 
      (cf.\ Lemma \ref{lem:prelim:vanishing_form_in_Z_basis}),
      or is a nonzero constant multiple of $Z^{\otimes 2}[a,0,1]$ 
where $a\neq 0$,
      as otherwise $\PlHolant(\mathcal{F})$ is \numP-hard.
      Moreover, by Lemma \ref{lem:M4+:M4-:deg},
      $\PlHolant(\mathcal{F})$ is \numP-hard, unless
      all degenerate signatures in $\mathcal{F}$ are of the form
      $[1,\pm i]^{\otimes m}$. Note that $[1,i]=Z[1,0]$ and $[1,-i]=Z[0,1]$.
      When this is the case, $\mathcal{F}$ is $\mathscr{M}$-transformable via $Z$.
    \item Suppose $\nondeg{F} \subseteq \mathscr{P}_2$.
      If $\mathcal{F}$ contains a non-degenerate binary signature $h$,
      then we apply Lemma~\ref{lem:P2:binary} and
      $\PlHolant(\mathcal{F})$ is \numP-hard unless $h\in Z\mathscr{P}$,
      or $\PlCSP^2(DZ^{-1}\mathcal{F})\le_T\PlHolant(\mathcal{F})$ for some diagonal transformation $D$.
      If it is the latter case,
      then by Theorem~\ref{thm:PlCSP2},
      either $\PlHolant(\mathcal{F})$ is \numP-hard,
      or $DZ^{-1}\mathcal{F}$ is a subset of $T\mathscr{A}$, $\mathscr{P}$, or $T\trans{1}{1}{1}{-1}\mathscr{M}$,
      for some diagonal matrix $T$.
      We claim that in any of these cases $\PlHolant(\mathcal{F})$ is tractable.
      In fact,
      \begin{enumerate}
        \item if $DZ^{-1}\mathcal{F}\subseteq T\mathscr{A}$,
          then $\mathcal{F}$ is $\mathscr{A}$-transformable as 
          $\mathcal{F}\subseteq Z D^{-1} T\mathscr{A}$
          and $[1,0,1]$ (as a row vector) is transformed into
$[1,0,1] (Z D^{-1} T)^{\otimes 2}$, which is $[0,1,0]\in\mathscr{A}$ 
          up to a nonzero constant;
        \item if $DZ^{-1}\mathcal{F}\subseteq \mathscr{P}$,
          then $\mathcal{F}$ is $\mathscr{P}$-transformable as
          $\mathcal{F}\subseteq Z D^{-1}\mathscr{P}$
          and $[1,0,1](ZD^{-1})^{\otimes 2}$ is $[0,1,0]\in\mathscr{P}$ up to a nonzero constant;          
        \item if $DZ^{-1}\mathcal{F}\subseteq T\trans{1}{1}{1}{-1}\mathscr{M}$,
          then $\mathcal{F}$ is $\mathscr{M}$-transformable as 
$\mathcal{F}\subseteq Z D^{-1} T \trans{1}{1}{1}{-1} \mathscr{M}$
          and $[1,0,1]$  is transformed
to $[1,0,1](Z D^{-1} T \trans{1}{1}{1}{-1})^{\otimes 2}$,
which is $[1,0,-1]\in\mathscr{M}$ 
          up to a nonzero constant.
      \end{enumerate}
      Hence we may assume that every non-degenerate binary in $\mathcal{F}$ is in $Z\mathscr{P}$.
      Notice that degenerate signatures are always in $\mathscr{P}$ under any transformation.
      Also $\nondeg{F}$ is a subset of $Z\mathscr{P}$
      because $\nondeg{F} \subseteq \mathscr{P}_2$
      and $\mathscr{P}_2$ is just weighted equalities under $Z$-transformation.
      It implies that $\mathcal{F}$ is $\mathscr{P}$-transformable
      under the $Z$ transformation. Hence we are in Case~\ref{case:main_tractable:CSP:P}.
    \item Finally, suppose neither of the above is the case.
      Then there are $f, g \in \nondeg{F}$ with $f \in \mathscr{M}_4$ and $g \in \mathscr{P}_2$.
      If $\nondeg{F}$ contains both $f\in\mathscr{M}_4^+$ and $f'\in\mathscr{M}_4^-$,
      then $\PlHolant(\mathcal{F})$ is \numP-hard by Lemma \ref{lem:P2:M4+-}.
      Otherwise $\nondeg{F}\cap\mathscr{M}_4\subseteq\mathscr{M}_4^+$ or $\mathscr{M}_4^-$.
      Let $\mathcal{G} = \mathcal{F}^* \cap \mathscr{P}_2$,
      and let $d$ be the gcd of the arities of the signatures in $\mathcal{G}$.
      Then $\mathcal{G}$ contains at least one non-degenerate signature $g$ of arity $\ge 3$.
      If $d\le 4$,
      then $\PlHolant(\mathcal{F})$ is \numP-hard by Lemma~\ref{lem:P2:M4:hardness}.
      Otherwise $d \ge 5$.
      If $\mathcal{F}$ contains a non-degenerate binary signature $h$,
      then we apply Lemma~\ref{lem:P2:binary} and 
      by a similar analysis as in the case of ``$\nondeg{F}\subseteq\mathscr{P}_2$'' above,
      we are done unless every such $h$ is in $Z \mathscr{P}$.
      Ignoring a nonzero factor, it implies that either $h=Z^{\otimes 2}[1,0,a]$ where $a\neq 0$ or $h=Z^{\otimes 2}(\neq_2)$.
      If $h=Z^{\otimes 2}[1,0,a]$, then $h\in \mathcal{F}^* \cap \mathscr{P}_2$,
      and it contradicts $d\ge 5$.
      Hence $h=Z^{\otimes 2}(\neq_2)$.
      If there is any degenerate $v=\left(Z[a,b]\right)^{\otimes m}$ in $\mathcal{F}$ with $ab\neq 0$,
      then $Z[a,b]\in \mathcal{F}^* \cap \mathscr{P}_2$ and it also contradicts $d\ge 5$.
      
      In summary, $\PlHolant(\mathcal{F})$ is \numP-hard
      unless $\nondeg{F}\subseteq\mathscr{P}_2\cup\mathscr{M}_4$,
      $\nondeg{F}\cap\mathscr{M}_4\subseteq\mathscr{M}_4^\sigma$ for some $\sigma\in\{+,-\}$,
      the greatest common divisor of the arities of the signatures in $\mathcal{F}^*\cap\mathscr{P}_2$ is at least~$5$.
      Every non-degenerate binary in $\mathcal{F}$ is of the form $Z^{\otimes 2}(\neq_2)$,
      and every degenerate in $\mathcal{F}$ is of the form $(Z[1,0])^{\otimes m}$ or $(Z[0,1])^{\otimes m}$.
      Notice that $\mathscr{P}_2$, $Z^{\otimes 2}(\neq_2)$, $(Z[1,0])^{\otimes m}$, and $(Z[0,1])^{\otimes m}$
      are all in $Z\mathscr{P}$.
      Hence the exceptional case implies that $\mathcal{F}\subseteq Z\mathscr{P}\cup\mathscr{M}_4^\sigma$ for some $\sigma\in\{+,-\}$
      and the greatest common divisor of the arities of the signatures in $\mathcal{F}^*\cap\mathscr{P}_2$ is at least~$5$.
      This is tractable Case~\ref{case:Eq:PM}.
  \end{enumerate}
  The tractability of $\Holant(\mathcal{F})$ 
in Case \ref{case:main_tractable:trivial}, 
  Case \ref{case:main_tractable:CSP:A}, Case \ref{case:main_tractable:CSP:P}, 
  Case \ref{case:main_tractable:vanishing_and_binary}, 
  and Case \ref{case:main_tractable:vanishing_and_unary} follows from the Holant dichotomy Theorem \ref{thm:Holant:set},  which also implies that
$\Holant(\mathcal{F})$ is \numP-hard otherwise.
  The tractability of $\PlHolant(\mathcal{F})$
in 
Case~\ref{case:main_tractable:CSP:M} follows from Theorem \ref{thm:tractable:M}. 
  The tractability of $\PlHolant(\mathcal{F})$ 
in Case~\ref{case:Eq:PM} follows from Lemma~\ref{lem:P2:M4:tractability}. 
  This completes the proof.
\end{proof}

\appendix

\title{A Holant Dichotomy: Is the FKT Algorithm Universal?\\
Part II: Planar \#CSP$^2$ Dichotomy}
\author{\emptythanks}
\maketitle

In Part~II of this paper,
we prove Theorem~\ref{dichotomy-pl-csp2},
which is the complexity dichotomy theorem of $\PlCSP^2(\mathcal{F})$,
where $\mathcal{F}$ is a set of complex-valued symmetric signatures on Boolean variables.
After we define some relevant notions,
we give an outline of the proof of Theorem~\ref{dichotomy-pl-csp2}.
Throughout Part~II,
we denote by $\alpha$
(respectively $\rho$)
any quantity that satisfies $\alpha^4 = -1$
(respectively $\rho^4 = 1$).
 
\section{Preliminaries} \label{partII:sec:prelim}

We will first define some tractable families of signatures that are expressible under a holographic transformation,
specific to the $\PlCSP^2$ framework.
\begin{definition}
Let
$\mathcal{T}_k =
\left\{
\trans{1}{0}{0}{\omega}
\mid \omega^k = 1
\right\}$ be a set of diagonal matrices of order dividing $k$
and $\mathscr{T}_k = \mathcal{T}_{2k} \setminus \mathcal{T}_k
= \left\{
\trans{1}{0}{0}{\omega}
\mid \omega^k = -1
\right\}$.
Let $\mathscr{A}^\dagger = \mathscr{T}_4 \mathscr{A}$
and
$\widehat{\mathscr{M}}^\dagger = \mathscr{T}_2 \widehat{\mathscr{M}}$
be the sets of signatures transformed by $\mathscr{T}_4$ from the
Affine family $\mathscr{A}$ and transformed by $\mathscr{T}_2$ from $\widehat{\mathscr{M}}$,
respectively,
where for a class of signatures $\mathscr{C}$,
we denote
\[
 \mathscr{T}_k \mathscr{C}
 =
 \{
  T^{\otimes \arity(f)}f \mid T \in \mathscr{T}_k \text{ and } f \in \mathscr{C}
 \}.
\]
Let
\[
 \widetilde{\mathscr{A}} = \mathscr{A} \cup \mathscr{A}^\dagger
 \qquad
 \text{and}
 \qquad
 \widetilde{\mathscr{M}} = \widehat{\mathscr{M}} \cup \widehat{\mathscr{M}}^\dagger
\]
be the $\mathscr{A}$-transformable and ${\mathscr{M}}$-transformable
signatures for $\PlCSP^2$.
\end{definition}

Recall that
$\widehat{\mathscr{M}} = H \mathscr{M}$ is the set of Matchgate signatures
 $\mathscr{M}$ transformed by the Hadamard basis
$H=
\left[\begin{smallmatrix} 1 & 1 \\
1&-1 \end{smallmatrix}\right]$.
Note that $\mathscr{A}$ is unchanged under the transformation
by $H$,
and thus there is no  need to define
$\widehat{\mathscr{A}}$.
Also note that $\mathscr{P}$ is unchanged under
any diagonal matrix.
Thus there is no need to define $\mathscr{P}^\dagger$.
For $T = \left[\begin{smallmatrix} 1 & 0 \\
0 & \omega \end{smallmatrix}\right] \in \mathcal{T}_4$ with $\omega^4 = 1$,
$T \mathscr{A} =  \mathscr{A}$. Thus
$\widetilde{\mathscr{A}}$ is $\mathscr{A}$
under transformations by
$T  
= \left[\begin{smallmatrix} 1 & 0 \\
0 & \omega \end{smallmatrix}\right]
\in \mathcal{T}_8$. For such $T$, we have
$(=_{2n}) T^{\otimes 2n} \in \mathscr{A}$.
Hence
$\widetilde{\mathscr{A}}$ is
$\mathscr{A}$-transformable for $\PlCSP^2$.
Similarly, for $T = \left[\begin{smallmatrix} 1 & 0 \\
0 & \pm 1 \end{smallmatrix}\right]$,
$TH =
\left[\begin{smallmatrix} 1 & 1\\
\pm 1 & \mp 1 \end{smallmatrix}\right]
=$ either $H$ or $H
\left[\begin{smallmatrix} 0 & 1\\
1 & 0 \end{smallmatrix}\right]$,
and
$\left[\begin{smallmatrix} 0 & 1\\
1 & 0 \end{smallmatrix}\right] \mathscr{M} = \mathscr{M}$.
Thus $T \widehat{\mathscr{M}} = \widehat{\mathscr{M}}$, and
$\widetilde{\mathscr{M}}$ is $\mathscr{M}$ transformed under
$TH$
for all $T  \in
\mathcal{T}_4$.
Also note that for all such $T$,
we have $(=_{2n}) (TH)^{\otimes 2n} \in \mathscr{M}$.
Hence
$\widetilde{\mathscr{M}}$
is ${\mathscr{M}}$-transformable
for $\PlCSP^2$.

In the proof of No-Mixing of different tractable sets,
because of a particular order in which we carry out the proof,
to make an overall logical structure more apparent
 we introduce the following notations
\[
 S_1 = \widehat{\mathscr{M}},
 \qquad
 S_2 = \widehat{\mathscr{M}}^\dagger,
 \qquad
 S_3 = \mathscr{A}^\dagger,
 \qquad
 S_4 = \mathscr{A},
 \qquad \text{and} \qquad
 S_5 = \mathscr{P}.
\]

We will prove the following Main Theorem of Part~II.
It is not hard to see that this is a rephrase of Theorem \ref{thm:PlCSP2}
from Part~I.
It follows from Theorem \ref{odd-arity-dichotomy}, Theorem~\ref{general-single-dichotomy} and
Theorem \ref{mixing-theorem},
which will be shown in later sections.
It follows from the definition of $\mathscr{P}$-transformability,
$\mathscr{A}$-transformability and $\mathscr{M}$-transformability
that if $\mathcal{F}\subseteq S_k$ for any $1 \le k \le 5$, then
$\PlCSP^2(\mathcal{F})$ is tractable.

\begin{theorem} \label{dichotomy-pl-csp2}
For any set of complex-valued symmetric signatures $\mathcal{F}$
on Boolean variables,
if $\mathcal{F}\subseteq \mathscr{P}$, or
$\mathscr{A}$, or $\mathscr{A}^\dagger$,
or $\widehat{\mathscr{M}}$, or $\widehat{\mathscr{M}}^\dagger$,
then $\PlCSP^2(\mathcal{F})$ is tractable.
Otherwise, $\PlCSP^2(\mathcal{F})$ is \numP-hard.
\end{theorem}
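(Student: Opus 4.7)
The plan is to prove Theorem~\ref{dichotomy-pl-csp2} by splitting into two principal cases based on the parity of the arities of signatures in $\mathcal{F}$, and then mounting a delicate induction in the even-arity case. First I would handle the case where $\mathcal{F}$ contains at least one nonzero signature of odd arity. Here I would prove a simulation lemma showing that if $\mathcal{F}$ contains a unary signature $[a,b]$ with $ab \neq 0$, then $\PlCSP(\mathcal{F}) \le_T \PlCSP^2(\mathcal{F})$, reducing the problem to the known $\PlCSP$ dichotomy Theorem~\ref{thm:PlCSP}. When every signature satisfies a parity constraint (so no useful unary can be realized), one can exploit the parity structure directly by a separate argument, since parity-constrained signatures fall into restricted subfamilies.

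The harder case is when all nonzero signatures in $\mathcal{F}$ have even arity. The key simulation lemma in this setting would show that $\PlCSP(\mathcal{F}) \le_T \PlCSP^2(\mathcal{F})$ provided $\mathcal{F}$ contains some binary tensor power $[a,b]^{\otimes 2}$ with $ab \neq 0$. To realize such a tensor power from higher even-arity signatures, I would first dispose of the "lucky" sub-case: if $\mathcal{F}$ contains a signature in $\widehat{\mathscr{M}} \setminus (\mathscr{P} \cup \widetilde{\mathscr{A}})$, one can construct a binary signature from which $[1,1]^{\otimes 2}$ is obtained by interpolation, using facts from the theory of cyclotomic fields to guarantee the requisite eigenvalue ratios have infinite order up to roots of unity. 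In all remaining sub-cases I would proceed by induction on the arity of signatures in $\mathcal{F}$, with the true base case being arity~$4$ (arity~$2$ being too weak to preserve \#P-hardness under reduction).

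Next I would prove a $\PlCSP^2$ dichotomy for a single signature $f$ of arity~$4$. The toolkit consists of the rank criterion for redundant signatures (Lemma~\ref{lem:arity4:nonsingular_compressed_hard} and Corollary~\ref{cor:prelim:nonsingular_compressed_hard_trans}), Theorem~\ref{thm:k-reg_homomorphism} applied to binary signatures connected to equalities, and a \emph{domain pairing} trick (the "Three Stooges") that converts arity~$4$ problems into arity~$2$ problems over a paired Boolean domain. With the arity~$4$ single-signature dichotomy in hand, I would then prove a conditional No-Mixing Lemma: for two even-arity signatures $f,g$ each belonging to some tractable class $S_k$ but not jointly to a single class, $\PlCSP^2(f,g)$ is \numP-hard. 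For arities higher than~$4$ this is done by simulating a pair $f', g'$ of arity~$4$ in different tractable classes and invoking the single-signature arity~$4$ result. I would then bootstrap the pair version to a No-Mixing Lemma for arbitrary sets $\mathcal{F}$ of even-arity signatures lying in the union of tractable classes.

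The main obstacle, and the heart of Part~II, will be proving the $\PlCSP^2$ dichotomy for a single signature of arbitrary even arity. This step must combine the conditional No-Mixing Lemma (for sets) with arity reduction via derivatives $\partial$ and integrals $\int$ to peel a high-arity signature down to arity~$4$, while ensuring that intermediate signatures created along the way stay within controlled tractable classes, or produce \numP-hardness. Only after this single-signature even-arity dichotomy is established does the No-Mixing Lemma become unconditional, and then the full even-arity set dichotomy follows by combining the No-Mixing Lemma with the single-signature result. Assembling the odd-arity and even-arity cases completes the proof. The hardest technical difficulty is that the tractable classes $\widehat{\mathscr{M}}$, $\widehat{\mathscr{M}}^\dagger$, $\mathscr{A}$, $\mathscr{A}^\dagger$, $\mathscr{P}$ have overlapping intersections and subtle stabilizer structure under $\mathcal{T}_k$, so every arity-reduction move must be tracked through the lattice of these families without accidentally landing in a new tractable class.
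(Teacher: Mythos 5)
Your proposal follows essentially the same architecture as the paper's own proof: the odd-/even-arity split, the global simulation lemmas for lifting $\PlCSP$ to $\PlCSP^2$ via $[a,b]$ or $[a,b]^{\otimes 2}$, the cyclotomic-field interpolation shortcut for the $\widehat{\mathscr{M}}\setminus(\mathscr{P}\cup\widetilde{\mathscr{A}})$ case, the arity-$4$ base case via rank criterion, Theorem~\ref{thm:k-reg_homomorphism}, and domain pairing, the conditional-then-unconditional No-Mixing lemmas (pairs, then sets), and the final single-signature even-arity dichotomy via $\partial$/$\int$ arity reduction. This is the paper's proof outline, correctly reproduced.
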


\begin{proof}[Proof Outline]
We now give an outline of the proof of Theorem~\ref{dichotomy-pl-csp2}.
The overall plan is to break the proof into two main steps.

The first step is to prove the dichotomy theorem for $\PlCSP^2(\mathcal{F})$
when there is at least one nonzero signature of \emph{odd} arity in $\mathcal{F}$.
In this case we can make use of Lemma~\ref{[1,a]XXX}
that shows that we can simulate
$\PlCSP(\mathcal{F})$ by $\PlCSP^2(\mathcal{F})$
if $\mathcal{F}$ includes a unary signature $[a, b]$ with $ab\neq 0$.
Then we can apply the known dichotomy Theorem~\ref{pl-dicho-1} for $\PlCSP^1$.
However this strategy (provably) \emph{cannot} work
in the case when every signature in $\mathcal{F}$  satisfies
 the \emph{parity}
 constraint.
In that case we employ other means.
This first step of the proof is relatively uncomplicated.

The second step is to deal with the case when
all nonzero signatures in $\mathcal{F}$ have
even arity. This is where the real difficulties lie. In this case it is
impossible to directly construct \emph{any} unary signature.
So we cannot use  Lemma~\ref{[1,a]XXX} in this case.
But Lemma~\ref{mixing-P-global} provides a way to simulate
$\PlCSP(\mathcal{F})$ by $\PlCSP^2(\mathcal{F})$
in a \emph{global} fashion,
\emph{if} $\mathcal{F}$ includes some tensor power
 of the form $[a, b]^{\otimes 2}$ where  $ab\neq 0$.
Moreover, we have a lucky break (for the complexity of the proof) if
$\mathcal{F}$ includes
a signature that
is in $\widehat{\mathscr{M}}\setminus(\mathscr{P}\cup\widetilde{\mathscr{A}})$.
In this case, we can construct a special binary signature,
and then use Lemma~\ref{[1,a,1]-interpolation}
to obtain $[1, 1]^{\otimes 2}$ by interpolation.
This proof uses the theory of \emph{cyclotomic fields}.
This simplifies the proof greatly.
For all other cases (when $\mathcal{F}$ has only
even arity signatures), the proof gets going in earnest---we
 will attempt an induction on the arity of signatures.

The lowest arity of this induction
will be 2. We will try to reduce the arity
to 2 whenever possible; however for many cases an arity reduction
to 2
destroys the \#P-hardness at hand. Therefore the true basis
of this induction proof of Pl-\#{\rm CSP}$^2$ starts with
arity 4. Consequently we will first prove a dichotomy theorem
for $\PlCSP^2(f)$, where $f$ is a signature of arity 4.
This proof is presented in Section~\ref{Arity-4}.
Several
tools will be used. These include the rank criterion for redundant
signatures, Theorem~\ref{thm:parII:k-reg_homomorphism} for arity 2 signatures,
and a trick we call the \emph{Three Stooges} by domain pairing.

However in the next step we do not attempt a general
Pl-\#{\rm CSP}$^2$ dichotomy for a {\it single} signature of
even arity. This  would have been natural at this point, but
it would have been too difficult.
We will need some additional leverage by proving a conditional
No-Mixing Lemma for pairs of signatures of even arity.
So, seemingly taking a detour,
 we prove that for two signatures $f$ and $g$ both of even arity,
that individually belong to some tractable class, but  do not
belong to a single tractable class in the conjectured Pl-\#{\rm CSP}$^2$
 dichotomy
(that is yet to be proved),
the problem  $\PlCSP^2(f, g)$ is \#P-hard.
We prove this No-Mixing Lemma for any pair of
signatures $f$ and $g$ both of even arity, not restricted to arity 4.
Even though at this point we only have a dichotomy for a
single signature of arity 4, we prove this No-Mixing Lemma
for higher even arity pairs $f$ and $g$ by simulating
two signatures $f'$ and $g'$
of arity 4  that belong to different tractable sets,
from that of $\PlCSP^2(f, g)$.
After this arity reduction (within the No-Mixing Lemma),
we prove that $\PlCSP^2(f', g')$
is \#P-hard by the dichotomy for a {\it single} signature of arity 4.
After this, we prove a No-Mixing Lemma for a \emph{set} of
signatures $\mathcal{F}$ of even arities, which states
that \emph{if}  $\mathcal{F}$
is contained in the union of all tractable classes, then
it is still \#P-hard unless it is \emph{entirely} contained in a
single tractable class.  Note that at this point we still only have
a \emph{conditional} No-Mixing Lemma in the sense that we have to assume
every signature in $\mathcal{F}$ belongs to some tractable set.

We then attempt the proof of a
Pl-\#{\rm CSP}$^2$ dichotomy for a {\it single} signature of
arbitrary even arity. This uses all the previous lemmas, in
particular the (conditional) No-Mixing Lemma for a set
of signatures. However, after completing the proof of
this Pl-\#{\rm CSP}$^2$ dichotomy for a single signature
of even arity, the No-Mixing Lemma becomes absolute.

Finally the dichotomy for
a single signature
of even arity is logically extended
 to a dichotomy theorem for $\PlCSP^2(\mathcal{F})$
where all signatures in
$\mathcal{F}$ have even arity.
Together with the first main step when $\mathcal{F}$
contains some nonzero signature of odd arity,
this completes the proof of Theorem~\ref{dichotomy-pl-csp2}.
\end{proof}
\vspace{.2in}

In the rest of this Section~\ref{partII:sec:prelim}, we will introduce
the operators $\partial$ and $\int$, and give some characterizations
of the tractable classes. We will also introduce some preliminary
lemmas, including one using the domain pairing technique, and list some
known dichotomies.  In Section~\ref{partII-sec2}, we discuss a technique
to simulate $\PlCSP$ by $\PlCSP^2$.
Section~\ref{PartII.secC.Odd-arity} proves Theorem~\ref{dichotomy-pl-csp2}
in the case when $\mathcal{F}$ contains at least one nonzero signature of
odd arity. Section~\ref{Arity-4} proves the base case of 
the even arity case of
Theorem~\ref{dichotomy-pl-csp2}
when $\mathcal{F}$ consists of a single signature of arity~$4$.
Section~\ref{PartII.secE.cyclcomic} gives an application
of cyclotomic field which simplifies the proof of 
Theorem~\ref{dichotomy-pl-csp2}
when $\mathcal{F}$ contains a signature 
in 
$\widetilde{\mathscr{M}} \setminus(\mathscr{P} \cup \widetilde{\mathscr{A}})$.
Section~\ref{PartII.secF.No-Mixing-of-pairs} proves the conditional
No Mixing lemmas for a pair of signatures of even arity.
Section~\ref{PartII.secG.No-Mixing-of-sets} generalizes
the No Mixing lemmas to a set of signatures of even arity.
Section~\ref{PartII.secH.csp2-dichotomy} finishes the proof
of Theorem~\ref{dichotomy-pl-csp2}.

\begin{remark}
 We occasionally make some remarks
 (such as Remark~\ref{rmk:E2:1} and Remark~\ref{rmk:E2:2} in Subsection~\ref{subsection-E.2})
 to explain the complications forced upon the proof by various reasons,
 and why another more straightforward approach would not succeed.
 These remarks are not logically necessary to the proof,
 but hopefully they provide some insight and point out pitfalls in the proof.
\end{remark}

The next lemma is a simple fact that is used many times.
It essentially says that the set $\{0, 1, i, -1, -i, \infty\}$ is closed set-wise under the mapping $z \mapsto \frac{z+1}{z-1}$.
The proof is straightforward,
so we omit it.

\begin{lemma} \label{pre-4-power}
 Let $x \neq y$ and $\lambda = \frac{x+y}{x-y}$.
 Then $\lambda^4 \not\in \{0,1\}$ iff $x^4 \neq y^4$ and $x y \neq 0$.
\end{lemma}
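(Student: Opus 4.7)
The plan is a direct verification by unpacking both sides into explicit conditions on $x, y$. The Möbius transformation $\mu(z) = \frac{z+1}{z-1}$ is an involution that takes the set of fourth roots of unity together with $0$ and $\infty$ to itself; the statement is essentially a reformulation of this fact in terms of the ratio $x:y$. So rather than invoking any structure, I will just compute.

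First I would handle the "$\lambda^4 = 0$" case: $\lambda = 0$ iff $x + y = 0$, which (using $x \neq y$) gives $x = -y \neq 0$, in particular $x^4 = y^4$ and $xy \neq 0$ or $x = y = 0$. Next I would handle "$\lambda^4 = 1$" by solving $\frac{x+y}{x-y} = \xi$ for each $\xi \in \{1,-1,i,-i\}$: the four solutions are $y = 0$, $x = 0$, $x = -iy$, and $x = iy$ respectively. The first two give $xy = 0$; the last two give $x^4 = y^4$ with $xy \neq 0$. Taken together, $\lambda^4 \in \{0,1\}$ implies $xy = 0$ or $x^4 = y^4$.

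For the converse direction, I would simply observe that each of the conditions $xy = 0$ or $x^4 = y^4$ places $x/y$ (or $y/x$) into $\{0, 1, -1, i, -i\}$, and plugging each value into $\lambda = \frac{x+y}{x-y}$ yields $\lambda \in \{0, \infty, 1, -1, i, -i\}$; the value $\infty$ is ruled out by $x \neq y$, and all remaining values satisfy $\lambda^4 \in \{0,1\}$. Contrapositive then gives the stated biconditional.

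There is no real obstacle; the proof is a finite case check and can be written in a few lines. The only mild subtlety is to remember to use the standing hypothesis $x \neq y$ to exclude $\lambda = \infty$ (i.e.\ to make $\lambda$ well-defined at all) when enumerating solutions to $\lambda^4 = 1$, and to observe that $x = y = 0$ is consistent with the hypothesis $x \neq y$ being vacuously excluded, so it does not need separate treatment.
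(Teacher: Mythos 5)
Your proof is correct; the finite case check, solving $\frac{x+y}{x-y} = \xi$ for each $\xi \in \{0, 1, -1, i, -i\}$ and then inverting, is the natural direct verification. The paper explicitly omits its own proof (``the proof is straightforward, so we omit it''), so there is no alternative approach to compare against. One small remark: your final sentence about $x=y=0$ is garbled --- that case simply violates the standing hypothesis $x \neq y$, so it is excluded outright rather than ``vacuously excluded''; this does not affect the argument.
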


\begin{definition}[Derivative] \label{deriviative}
Let $f$ and
 $g$ be two symmetric signatures of arities $n$ and $m$ respectively,
and $n > m$.
By connecting all $m$ input edges of  $g$ to
 $f$, we get a planar $\{f, g\}$-gate with
 a signature of arity $n-m$. This \emph{derivative signature} will be denoted
by $\partial_g(f)$.
If $kn<m$ and we connect $k$ copies of $g$ to $f$,
which is the same as forming $\partial_g(f)$ sequentially
$k$ times, the resulting \emph{repeated derivative signature} is
 denoted by $\partial^k_g(f)$.
If $g=[1, 0, 1]$, we  denote $\partial_g(f)$ simply by $\partial(f)$.
\end{definition}

\noindent
{\bf Calculus}:
 Our proof will make substantial use of a {\it calculus}
using this notion of derivatives.
This calculus is essentially a systematic way to
calculate the signatures of some gadget constructions.
In a  {\rm Pl-Holant} problem $\PlHolant(\mathcal{G}\mid\mathcal{F})$,
if $g \in \mathcal{G}$ and $f \in \mathcal{F}$,
then we say that $g$ is from the LHS and $f$ is from the RHS.
If $f$ has arity $n$ and $g$ has arity $m$,
and $n >m$, then
we can form the signature $\partial_g(f)$ and 
$\PlHolant(\mathcal{G}\mid \mathcal{F} \cup\{\partial_g(f)\})
\leq_T \PlHolant(\mathcal{G}\mid\mathcal{F})$.
If $m >n$ we can form $\partial_f(g)$ and
$\PlHolant(\mathcal{G} \cup\{\partial_f(g)\}\mid \mathcal{F})
\leq_T \PlHolant(\mathcal{G}\mid\mathcal{F})$.
In particular, for $\PlCSP^2(\mathcal{F}) \equiv
\PlHolant(\mathcal{EQ}_2 \mid\mathcal{F})$
we consider all $(=_{2k})$ as from the LHS. In this case
if $h \in \mathcal{F}$ with arity $< n$
then we can also form $\partial_h(f)$, by first moving
$h$ to LHS via $(=_2) \in \mathcal{EQ}_2$, and then
$\PlCSP^2(\mathcal{F} \cup \{\partial_h(f)\})
\leq_T \PlCSP^2(\mathcal{F})$.
Note that if we discuss $\PlCSP^4(\mathcal{F})$ then
this operation $\partial_h(f)$ is in general not permissible,
and has to be justified in each individual case, e.g. when 
$h$ has even arity and one can construct
$[1,0,1]^{\otimes 2}$ in the LHS.

   To familiarize the readers with
this calculus, we list some simple calculations below,
which we will use often in our proofs freely without comment.

For any $g$, the operator $\partial_g(\cdot)$ is a linear operator.
It also depends on $g$ linearly.

By definition $\partial([f_0, f_1, \ldots, f_n])
= [f_0+ f_2, f_1 + f_3, \ldots, f_{n-2} + f_n]$ has arity $n-2$.

\begin{enumerate}
\item
If $f=[s, t]^{\otimes n}$, then
%
\begin{itemize}
\item $\partial^k_{[a, b]}(f)=(as+bt)^k[s, t]^{\otimes n-k}$ if $n >k$.
\item $\partial^k_{[a, b, c]}(f)=(as^2+2bst+ct^2)^k[s, t]^{\otimes n-2k}$
if $n >2k$;
\\
in particular, $\partial^k(f)=(s^2+t^2)^k[s, t]^{\otimes n-2k}$.
\item $\partial^k_{=_4}(f)=(s^4+t^4)^k[s, t]^{\otimes n-4k}$, if $n > 4k$.
\item
For $g=[g_0, g_1, \ldots, g_m]$, we have $\partial_{g}(=_n)=[g_0, 0, \ldots, 0, g_m]$ of arity $n-m$, where $n>m$.
\end{itemize}
%
%

\item
Let $f$ be of arity $n$ and $f_k=(\pm 1)^k(n-2k)$ ($0 \le k \le n$), then
\begin{itemize}
\item $\partial(f)$ has arity $n' =
n-2$ and $(\partial(f))_k=2(\pm 1)^k(n'-2k)$.
If $n$ is odd, then
$\partial^{\frac{n-1}{2}}(f) = 2^{\frac{n-1}{2}} [ 1, \mp 1]$.
\item $\partial_{=_4}(f)$ has arity $n''= n-4$ and $(\partial_{=_4}(f))_k=2(\pm 1)^k(n''-2k)$.

If $n \equiv 1 \pmod 4$, then
$\partial^{\frac{n-1}{4}}_{=_4}(f) = 2^{\frac{n-1}{4}} [ 1, \mp 1]$.
\\
If $n \equiv 3 \pmod 4$, then
$\partial(\partial^{\frac{n-3}{4}}_{=_4}(f))
 = 2^{\frac{n+1}{4}} [ 1, \mp 1]$.
\end{itemize}

\item
Let $f$ be of arity $n$ and $f_k=(\pm i)^k(n-2k)$ ($0 \le k \le n$), then
\begin{itemize}
\item $\partial(f)=4[1, \pm i]^{\otimes n-2}$.
\item $\partial_{=_4}(f)$ has arity $m=n-4$ and $(\partial_{=_4}(f))_k=2(\pm i)^k(m-2k)$.

If $n \equiv 1 \pmod 4$, then
$\partial^{\frac{n-1}{4}}_{=_4}(f) = 2^{\frac{n-1}{4}} [ 1, \mp i]$.
\\
If $n \equiv 3 \pmod 4$, then
$\partial(\partial^{\frac{n-3}{4}}_{=_4}(f))
 = 2^{\frac{n+5}{4}} [ 1, \pm i]$.
\end{itemize}
\end{enumerate}

\vspace{.2in}

Now we define an inverse operator $\int(\cdot)$ to $\partial$.
Just like the usual calculus there is a certain non-uniqueness in
the expression in an \emph{indefinite} integral; this non-uniqueness
is addressed in Lemma~\ref{general-f-construction}.
One reasonable definition for $\int([f_0, f_1, \ldots, f_n])$
is $F = [F_0, F_1, \ldots, F_{n+2}]$ such that
\[F_k = \sum_{s \ge 0} (-1)^s f_{k+2s} = f_k - f_{k+2} + f_{k+4} -  \ldots\]
where we define $f_k=0$ for all $k > n$.
Clearly $\partial(F) =f$.

 \begin{lemma}\label{general-f-construction}
 Let $F$ and $G$ be symmetric signatures of  arity $n\geq 3$ and
suppose $\partial(F)= \partial(G)$.
Then $F - G = x[1, i]^{\otimes n}+y[1, -i]^{\otimes n}$, for
some constants $x$ and $y$.
 \end{lemma}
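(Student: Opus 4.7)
The plan is to exploit the linearity of $\partial$ and reduce the statement to a second-order linear recurrence on the entries of $F - G$. Specifically, setting $H = F - G = [h_0, h_1, \ldots, h_n]$ and applying linearity, the hypothesis $\partial(F) = \partial(G)$ becomes $\partial(H) = 0$. Unfolding the definition of $\partial$ on a symmetric signature, this is precisely the system of equations $h_k + h_{k+2} = 0$ for every $0 \le k \le n-2$, i.e., $h_{k+2} = -h_k$.

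Next I would recognize this as a length-2 linear recurrence with characteristic polynomial $x^2 + 1$, whose roots are $\pm i$. Standard linear recurrence theory then supplies constants $x, y \in \mathbb{C}$, uniquely determined by the two initial values $h_0$ and $h_1$, such that
\[
h_k = x \cdot i^k + y \cdot (-i)^k \qquad \text{for every } 0 \le k \le n.
\]
This step is where the computation really happens, but it is a routine appeal to the theory of linear recurrences over $\mathbb{C}$; the coefficients $x$ and $y$ can be written explicitly (e.g.\ $x = (h_0 - i h_1)/2$, $y = (h_0 + i h_1)/2$), providing a sanity check.

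Finally I would translate this formula for the entries back into tensor form. The $k$-th entry of the symmetric signature $[1, i]^{\otimes n}$ (on inputs of Hamming weight $k$) is $i^k$, and similarly the $k$-th entry of $[1,-i]^{\otimes n}$ is $(-i)^k$. Comparing entry-by-entry then yields
\[
H = x [1, i]^{\otimes n} + y [1, -i]^{\otimes n},
\]
which is the desired identity. There is no real obstacle here beyond bookkeeping; the only thing to watch is that the recurrence $h_{k+2} = -h_k$ is derived from \emph{all} indices $0 \le k \le n-2$, which is guaranteed because $\partial(H)$ has arity $n-2 \ge 1$ when $n \ge 3$, so the initial data $(h_0, h_1)$ and the recurrence together determine $H$ completely.
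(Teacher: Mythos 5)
Your proof is correct and takes essentially the same route as the paper: set $H = F - G$, observe $\partial(H) = 0$ gives the second-order recurrence $h_k + h_{k+2} = 0$, and solve it with eigenvalues $\pm i$ to get $H = x[1,i]^{\otimes n} + y[1,-i]^{\otimes n}$. You simply spell out a few more details (explicit $x$, $y$) than the paper does.
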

\begin{proof}
The signature $H = F - G$ satisfies $\partial(H) =0$,
and thus satisfies the second order recurrence relation 
$H_k+H_{k+2}=0$ for $0\leq k\leq n-2$.
Hence there exist constants $x$ and $y$ such that $H=x[1, i]^{\otimes n}+y[1, -i]^{\otimes n}$.
\end{proof}

Thus $\int(\cdot)$ 
is well-defined  up to
an additive  term $x[1, i]^{\otimes n}+y[1, -i]^{\otimes n}$.
In this paper,
we choose to write the expression $\int(f)$ by the following
definition  when 
a certain special expression of $f$ exists. This is more convenient for
our proofs.
\begin{definition}\label{defn-integral}
For $n\geq 3$,
 \begin{itemize}
\item $\int(0) = 0$.
 \item For $a^2+b^2\neq 0$, $\int([a, b]^{\otimes n-2})=\frac{1}{a^2+b^2}[a, b]^{\otimes n}$.
 \item $\int([1, \pm i]^{\otimes n-2})$ has arity $n$ and $[\int([1, \pm i]^{\otimes n-2})]_k=\frac{1}{4}(\pm i)^k(n-2k)$.
 \item If the signature $g$ has arity $n-2$ and
 $g_k=(\pm 1)^k(n-2-2k)$, then $\int(g)$ has arity $n$ and
 $[\int(g)]_k=\frac{1}{2}(\pm 1)^k(n-2k)$.
\item If the signature $g$ has arity $n-2$ and
 $g_k=(\pm i)^k(n-2-2k)$, then $\int(g)$ has arity $n$ and
$[\int(g)]_k=(-\frac{n}{2}k+\frac{1}{2}k^2)(\pm i)^k$.
 \end{itemize}
\end{definition}

Clearly for all $f$
where $\int(f)$ is given in the above definition, $\partial[\int(f)]=f$.

When we prove the dichotomy theorem for $\PlCSP^2(f)$,
where $f$ has arity $n$,
we can get a signature $f'$ of arity $n-2$ by taking a self loop with $f$,
i.e., $f' = \partial(f)$.
Clearly $\PlCSP^2(f') \le_T \PlCSP^2(f)$.
If $f'\notin \mathscr{P} \cup \widetilde{\mathscr{A}} \cup \widetilde{\mathscr{M}}$,
then by induction $\PlCSP^2(f')$ is \numP-hard.
Thus $\PlCSP^2(f)$ is also \numP-hard.
Definition~\ref{defn-integral} allows us
to write down an explicit expression for $\int(f')$ for all
cases when $f'\in \mathscr{P}\cup \widetilde{\mathscr{A}}\cup
\widetilde{\mathscr{M}}$.

The following is an explicit list of $\int(f')$ for $f' = \partial(f) \in \mathscr{P}\cup \widetilde{\mathscr{A}}\cup \widetilde{\mathscr{M}}$.
We can recover $f$ up to the constants $x, y$ from $\partial(f)$ by Lemma~\ref{general-f-construction}.
This list is for the convenience of the readers.

\begin{proposition}[Explicit List for $\int(f')$] \label{prop:explicit_list}
 \ 
 
 \begin{itemize}
  \item $\int(f')\equiv 0$ if $f'\equiv 0$.
  
  
  \item $\int([1, 0]^{\otimes n-2}+a[0, 1]^{\otimes n-2})=[1, 0]^{\otimes n}+a[0, 1]^{\otimes n}$.
  
  \item $\int([1, \gamma]^{\otimes n-2}+i^r[1, -\gamma]^{\otimes n-2}) = \frac{1}{1+\gamma^2}[1, \gamma]^{\otimes n}+\frac{i^r}{1+\gamma^2}[1, -\gamma]^{\otimes n}$
  where $\gamma^2\neq -1, \gamma^8=1$.
  
  \item $\int([s, t\rho]^{\otimes n-2}\pm[t, s\rho]^{\otimes n-2}) = \frac{1}{s^2+\rho^2 t^2}[s, \rho t]^{\otimes n}\pm\frac{1}{\rho^2 s^2+t^2}[t, \rho s]^{\otimes n}$,
  where $\rho^4=1, st\neq 0, s^4\neq t^4$.
  
  \item $[\int(f')]_k = \frac{1}{2}(\pm 1)^k(n-2k)$ if $f'$ has arity $n-2$ and  $f'_k = (\pm 1)^k (n-2-2k)$.
  
  \item $[\int(f')]_k = \frac{1}{4}(\pm i)^k(n-2k)$ if $f'$ has arity $n-2$ and $f' = [1, \pm i]^{\otimes n-2}$.
  
  \item $[\int(f')]_k = \frac{1}{4} [i^k+i^r(-i)^k] (n-2k)$ if $f'$ has arity $n-2$ and $f' = [1, i]^{\otimes n-2} + i^r[1, -i]^{\otimes n-2}$.
  
  \item $[\int(f')]_k = (-\frac{n}{2}k+\frac{1}{2}k^2)(\pm i)^k$ if $f'$ has arity $n-2$ and $f'_k = (\pm i)^k (n-2-2k)$.
 \end{itemize}
\end{proposition}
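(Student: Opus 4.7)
The strategy is a direct case-by-case verification: for each listed form of $f'$ we exhibit a signature $F$ of arity $n$ and check that $\partial(F)=f'$, appealing to Lemma~\ref{general-f-construction} to control the non-uniqueness of $\int$. Since $\partial$ is linear in the target signature, it suffices to verify the integral formula on a spanning set for each tractable family, then invoke linearity. Concretely, I would first compile the list of all symmetric signatures in $\mathscr{P}\cup\widetilde{\mathscr{A}}\cup\widetilde{\mathscr{M}}$ of arity $n-2\ge 1$, as summarized in Section~\ref{sec:APMtrans} of Part~I. Up to the freedom guaranteed by Lemma~\ref{general-f-construction}, every such $f'$ has one of the three shapes: (i) a sum $a[1,0]^{\otimes n-2}+b[0,1]^{\otimes n-2}$, (ii) a two-term sum $[s,t]^{\otimes n-2}\pm\lambda[s',t']^{\otimes n-2}$ with $(s,t)$ not a multiple of $[1,\pm i]$, or (iii) a ``non-generic'' component supported at one of the isotropic directions $[1,\pm i]$ which forces a polynomial-in-$k$ multiplier.

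For shape (i)--(ii), the key identity is $\partial([s,t]^{\otimes n})=(s^2+t^2)[s,t]^{\otimes n-2}$, which is immediate from the binomial expansion of $(s\cdot 1 + t\cdot 0)^2+(s\cdot 0+t\cdot 1)^2$ applied entry-wise; dividing by $s^2+t^2$ (which is nonzero exactly when $(s,t)$ is not a multiple of $[1,\pm i]$) gives the stated expressions for $\int([s,t]^{\otimes n-2})$. I would verify the cases $\int([1,0]^{\otimes n-2}+a[0,1]^{\otimes n-2})$, the $\mathscr{A}$-type $\int([1,\gamma]^{\otimes n-2}+i^r[1,-\gamma]^{\otimes n-2})$ with $\gamma^8=1,\gamma^2\neq -1$, and the matchgate-type $\int([s,t\rho]^{\otimes n-2}\pm[t,s\rho]^{\otimes n-2})$ with $\rho^4=1$, $s^4\neq t^4$, $st\neq 0$, all by this single identity together with linearity. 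For each, the hypothesis on parameters ($a$ arbitrary, $\gamma^2\neq -1$, $s^4\neq t^4$) is exactly what makes the relevant $s^2+t^2$ factor nonzero so that the claimed pre-image is well-defined.

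For shape (iii), where $(s,t)=[1,\pm i]$ so $s^2+t^2=0$, the degenerate formula breaks. Here I would verify the polynomial-factor formulas by direct entry-wise computation using $(\partial F)_k = F_k + F_{k+2}$. For $F_k=\tfrac14(\pm i)^k(n-2k)$,
\begin{align*}
(\partial F)_k &= \tfrac14(\pm i)^k(n-2k) + \tfrac14(\pm i)^{k+2}(n-2k-4) \\
 &= \tfrac14(\pm i)^k\bigl[(n-2k)-(n-2k-4)\bigr] = (\pm i)^k,
\end{align*}
which equals $[1,\pm i]^{\otimes n-2}$ entry-wise. The same mechanism (a single cancellation that produces a constant) handles the case $f'_k=(\pm 1)^k(n-2-2k)$ with $F_k=\tfrac12(\pm 1)^k(n-2k)$. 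For $f'_k=(\pm i)^k(n-2-2k)$ (which arises from the $[1,\pm i]$-isotropic direction paired with a linear-in-$k$ coefficient, a one-higher Jordan block of the recurrence $F_k+F_{k+2}=f'_k$), I would guess a quadratic ansatz $F_k=(\alpha k + \beta k^2)(\pm i)^k$ and match coefficients; this yields exactly $\alpha=-\tfrac n2,\beta=\tfrac12$, recovering the claimed formula. The additive ``constant of integration'' in each shape is pinned down (modulo the kernel $x[1,i]^{\otimes n}+y[1,-i]^{\otimes n}$ of $\partial$) by the convention in Definition~\ref{defn-integral}.

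The main issue to watch is not computational difficulty but \emph{bookkeeping}: one must be sure the case list is exhaustive, i.e.\ that every symmetric $f'\in\mathscr{P}\cup\widetilde{\mathscr{A}}\cup\widetilde{\mathscr{M}}$ decomposes, via the characterizations of Section~\ref{sec:APMtrans}, into a linear combination of the shapes above, with the stated side conditions on the parameters. Once this census is in place, the verification is mechanical. The only genuinely subtle point is that the two ``polynomial-in-$k$'' formulas are forced on us precisely when the generic recipe $\int([s,t]^{\otimes n-2})=(s^2+t^2)^{-1}[s,t]^{\otimes n}$ fails, and the derivation above by ansatz explains exactly why a factor $n-2k$ (resp.\ a quadratic polynomial in $k$) appears; that there is no higher obstruction is guaranteed by Lemma~\ref{general-f-construction}, since further ambiguity would lie in the two-dimensional kernel of $\partial$.
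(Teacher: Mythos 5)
Your proposal is correct and takes essentially the same route as the paper intends. The paper presents Proposition~\ref{prop:explicit_list} as a "list for the convenience of the readers," treating it as an immediate expansion of Definition~\ref{defn-integral} by linearity of $\partial$; the verification that $\partial[\int(f')] = f'$ is declared "Clearly" true without further elaboration. What you do is spell out that declaration: the generic identity $\partial([s,t]^{\otimes n}) = (s^2+t^2)[s,t]^{\otimes n-2}$ plus linearity handles the non-degenerate tensor-sum cases (items 2--4, 7), and entry-wise computation $(\partial F)_k = F_k + F_{k+2}$ plus a polynomial-in-$k$ ansatz handles the isotropic cases 5, 6, 8. Your cancellation computation for item 6 and your ansatz $F_k = (\alpha k + \beta k^2)(\pm i)^k$ yielding $\alpha = -n/2$, $\beta = 1/2$ for item 8 both check out, and your observation that the parameter hypotheses ($\gamma^2 \neq -1$, $s^4 \neq t^4$) are precisely what keep the relevant $s^2+t^2$-type denominators nonzero is the right way to frame why the two formula families split. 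The only imprecision is the one-line justification of the key identity via "binomial expansion of $(s\cdot 1 + t\cdot 0)^2 + (s\cdot 0 + t\cdot 1)^2$," which is a bit garbled; what you mean is simply $([s,t]^{\otimes n})_k + ([s,t]^{\otimes n})_{k+2} = s^{n-k}t^k + s^{n-k-2}t^{k+2} = (s^2+t^2)s^{(n-2)-k}t^k$, and stating it that way would be cleaner. Your caution about exhaustiveness of the case census is well-placed in principle but is not actually at issue here, since the Proposition only asserts formulas for the specific listed forms rather than claiming to cover all of $\mathscr{P}\cup\widetilde{\mathscr{A}}\cup\widetilde{\mathscr{M}}$ within the statement itself.
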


The following lemma is used to determine whether a binary signature belongs to various tractable sets
$\mathscr{P}$,
$\mathscr{A}$,
$\mathscr{A}^\dagger$,
$\widehat{\mathscr{M}}$,
and $\widehat{\mathscr{M}}^\dagger$.
It can be proved directly by the definition.

\begin{lemma} \label{binary}
 For any binary symmetric signature $f$,
 \begin{itemize}
  \item $f\in\mathscr{P}$ iff $f = [a, 0, c]$ or $f = [0, b, 0]$ or $f = [a, b]^{\otimes 2}$.

  \item $f\in\mathscr{A}$ iff up to a scalar,
   $f = [1, \rho, -\rho^2]$ where $\rho^4=1$, or
   $[0, 1, 0]$, or
   $[1, 0, \rho]$ where $\rho^4=1$,
   or $[x, y]^{\otimes 2}$ where $(x^4=y^4\neq 0 \text{ or } xy=0)$.

 \item $f\in\mathscr{A}^\dagger$ iff up to a scalar,
  $f=[1, \alpha, -\alpha^2]$ where $\alpha^4=-1$, or
  $[0, 1, 0]$, or
  $[1, 0, \rho]$ where $\rho^4=1$, or
  $[x, y]^{\otimes 2}$ where $(x^4=-y^4 \neq 0 \text{ or }xy=0)$.
 \item $f\in\widehat{\mathscr{M}}$ iff
  $f=[a, b, a]$ or $[a, 0, -a]$.
 \item $f\in\widehat{\mathscr{M}}^\dagger$ iff
  $f=[a, b, -a]$ or $[a, 0, a]$.
 \end{itemize}
 \end{lemma}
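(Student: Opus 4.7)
The plan is to handle each of the five tractable sets separately and in each case to reduce to a finite enumeration.  For each set I will first identify the binary symmetric members in some canonical form (using the known classifications already recalled in the paper), and then transport that canonical list through the appropriate diagonal or orthogonal transformation.  Throughout, I will write a binary symmetric signature $f=[f_0,f_1,f_2]$, sometimes in matrix form $\trans{f_0}{f_1}{f_1}{f_2}$.

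For $\mathscr{P}$, I invoke the characterization stated just after Definition of $\mathscr{P}$ in the preliminaries, namely that every symmetric $f\in\mathscr{P}$ is degenerate, equal (up to scalar) to $\neq_2=[0,1,0]$, or of the form $[a,0,\ldots,0,b]$.  At arity $2$ this is exactly the three shapes $[a,0,c]$, $[0,b,0]$, $[a,b]^{\otimes 2}$ claimed, and the converse inclusion is immediate from the definition of product-type.  For $\mathscr{A}$, I will intersect the explicit list of $\mathscr{F}_1\cup\mathscr{F}_2\cup\mathscr{F}_3$ (Section~3) with the arity-$2$ slice, and separately treat degenerate signatures $[x,y]^{\otimes 2}$, for which membership in $\mathscr{A}$ is equivalent to $[x,y]$ being (scalar times) a power of $i$ in each coordinate, i.e.\ $xy=0$ or $x^4=y^4\ne 0$.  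The resulting non-degenerate representatives reduce to $[0,1,0]$, $[1,0,\rho]$ with $\rho^4=1$ (coming from $\mathscr{F}_1$ and the $r=0$ case of $\mathscr{F}_2,\mathscr{F}_3$), and $[1,\rho,-\rho^2]$ with $\rho^4=1$ (coming from the $r\ne 0,2$ cases of $\mathscr{F}_2,\mathscr{F}_3$ after dividing by $1\pm i$).

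For $\mathscr{A}^\dagger=\mathscr{T}_4\mathscr{A}$ and for $\widehat{\mathscr{M}}^\dagger=\mathscr{T}_2\widehat{\mathscr{M}}$, I apply the diagonal matrix $T=\trans{1}{0}{0}{\omega}$ to each canonical form found above and check what happens to the exponent on the last coordinate.  Concretely, $T^{\otimes 2}$ sends $[f_0,f_1,f_2]$ to $[f_0,\omega f_1,\omega^2 f_2]$.  For $\mathscr{A}^\dagger$ (where $\omega^4=-1$): $[0,1,0]$ and $[1,0,\rho]$ (with $\rho^4=1$) are preserved set-wise, while $[1,\rho,-\rho^2]$ with $\rho^4=1$ becomes $[1,\omega\rho,-(\omega\rho)^2]$ with $(\omega\rho)^4=-1$, giving exactly $[1,\alpha,-\alpha^2]$ with $\alpha^4=-1$; and the degenerate condition $x^4=y^4$ becomes $x^4=-y^4$.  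For $\widehat{\mathscr{M}}^\dagger$ (where $\omega^2=-1$) the same computation turns $[a,b,a]$ into $[a,b',-a]$ and $[a,0,-a]$ into $[a,0,a]$.

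For $\widehat{\mathscr{M}}=H\mathscr{M}$ I first list the symmetric arity-$2$ elements of $\mathscr{M}$ from Proposition~\ref{prop:match:symmetric}: the $n=1$ instance of case~(1) gives $[a,0,b]$, and the $n=0$ instance of case~(4) gives $[0,1,0]$ (up to scalar).  Applying $H=\trans{1}{1}{1}{-1}$ to matrix form, $H\,\trans{a}{0}{0}{b}\,H=\trans{a+b}{a-b}{a-b}{a+b}$, which is an arbitrary $[a',b',a']$, and $H\,\trans{0}{c}{c}{0}\,H=\trans{2c}{0}{0}{-2c}$, which is $[a,0,-a]$.  This gives exactly the claimed description of $\widehat{\mathscr{M}}$.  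In every case the converse direction is a direct verification.  The main obstacle is purely bookkeeping: making sure that in each case every parameter with $\rho^4=1$ or $\alpha^4=-1$ is genuinely achieved and that the degenerate cases with $xy=0$ are correctly folded in (for instance $[1,0,0]=[1,0]^{\otimes 2}$ already appears both as a tensor square and as $[a,0,c]$); this is handled uniformly by treating the $xy=0$ and $x^4=\pm y^4\ne 0$ sub-cases of $[x,y]^{\otimes 2}$ together.
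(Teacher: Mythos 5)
The proposal is correct and supplies exactly the direct verification that the paper elides when it writes that Lemma~\ref{binary} ``can be proved directly by the definition.'' You enumerate the arity-$2$ slice of each class from the explicit descriptions already recalled in the preliminaries (the $\mathscr{F}_1\cup\mathscr{F}_2\cup\mathscr{F}_3$ list for $\mathscr{A}$, Lemma~A.1 of~\cite{HL12} for $\mathscr{P}$, Proposition~\ref{prop:match:symmetric} for $\mathscr{M}$) and then transport by the diagonal $\mathscr{T}_4$, $\mathscr{T}_2$ and Hadamard $H$ transformations, checking that each family is carried onto the claimed family set-wise. This matches the route the authors clearly have in mind. One tiny bookkeeping slip: in your parenthetical for $\mathscr{A}$, $[0,1,0]$ arises from the $r=2$ instances of $\mathscr{F}_2,\mathscr{F}_3$, not $r=0$; this does not affect the argument since it lands in the right bucket either way.
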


\noindent
 Corollary~\ref{binary-necessary} gives some necessary conditions for a binary signature to belong to a tractable set.

 \begin{corollary} \label{binary-necessary}
  For any binary signature $f = [a, b, c]$,
 \begin{itemize}
\item  $f\in\mathscr{P} \implies f$ satisfies  either
the parity constraint or $b^2=ac$.
\item  $f\in\mathscr{A} \implies a^2 = c^2$ or $b =0$.
If $f\in\mathscr{A}\setminus\mathscr{P}$, then $f=[1, \rho, -\rho^2]$, $\rho^4=1$.
\item  $f\in\mathscr{A}^\dagger \implies a^2 = -c^2$ or $b =0$.
If $f\in\mathscr{A}^\dagger\setminus\mathscr{P}$, then $f=[1, \alpha, -\alpha^2]$, $\alpha^4=-1$.
\item  $f\in\widetilde{\mathscr{A}}$ $\Longrightarrow$ the norms
 of all nonzero entries are equal.
\item  $f\in\widetilde{\mathscr{M}} \implies a^2 = c^2$.
\end{itemize}
\end{corollary}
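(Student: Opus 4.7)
The plan is to derive each of the five bullets as a direct case analysis from the explicit canonical forms already enumerated in Lemma~\ref{binary}, so the work reduces to checking each canonical form against the claimed necessary condition. First, for the $\mathscr{P}$ bullet I would take the three forms $[a,0,c]$, $[0,b,0]$, and $[x,y]^{\otimes 2}$: the first two satisfy parity by inspection ($b=0$ in the first, $a=c=0$ in the second), while $[x,y]^{\otimes 2} = [x^2, xy, y^2]$ yields middle${}^2 = (xy)^2 = x^2 y^2 = (\text{first})\cdot(\text{last})$, which gives the $b^2 = ac$ alternative.

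Next, for the $\mathscr{A}$ and $\mathscr{A}^\dagger$ bullets I run the same procedure against the four families from Lemma~\ref{binary}. The diagonal form $[1,\rho,-\rho^2]$ (resp.\ $[1,\alpha,-\alpha^2]$) gives $c^2 = \rho^4 = 1 = a^2$ (resp.\ $c^2 = \alpha^4 = -1 = -a^2$); the form $[0,1,0]$ has $a=c=0$; the form $[1,0,\rho]$ has $b=0$; and $[x,y]^{\otimes 2}$ with $x^4 = \pm y^4 \neq 0$ gives $a^2 = x^4 = \pm y^4 = \pm c^2$, while the subcase $xy = 0$ degenerates to $b = 0$. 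To refine this to the $\mathscr{A}\setminus\mathscr{P}$ and $\mathscr{A}^\dagger\setminus\mathscr{P}$ statements, I would then observe that the last three families ($[0,1,0]$, $[1,0,\rho]$, and $[x,y]^{\otimes 2}$) are already contained in $\mathscr{P}$ by the three forms of $\mathscr{P}$ in Lemma~\ref{binary} (two by parity, the tensor-square by $b^2 = ac$); hence the only possible witnesses outside $\mathscr{P}$ are $[1,\rho,-\rho^2]$ and $[1,\alpha,-\alpha^2]$, and a direct check shows these fail both parity (all three entries are nonzero) and $b^2 = ac$ (since $b^2 = \rho^2$ or $\alpha^2$ while $ac = -\rho^2$ or $-\alpha^2$).

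The $\widetilde{\mathscr{A}} = \mathscr{A} \cup \mathscr{A}^\dagger$ bullet then follows as a uniform conclusion from the above case analysis, since in every canonical form the nonzero entries have equal modulus: $|\rho| = |\alpha| = 1$, and for $[x,y]^{\otimes 2}$ with $x^4 = \pm y^4 \neq 0$ we have $|x| = |y|$, so $|x^2| = |xy| = |y^2|$, while in the degenerate $xy = 0$ case at most one of $x^2, y^2$ is nonzero. Finally, the $\widetilde{\mathscr{M}}$ bullet is the quickest: the four matchgate binary forms $[a,b,a]$, $[a,0,-a]$, $[a,b,-a]$, $[a,0,a]$ from Lemma~\ref{binary} all have first and last entries equal to $\pm a$, whence $a_f^2 = c_f^2$ in every case.

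There is no genuine obstacle here; the corollary is immediate once Lemma~\ref{binary} is in hand, and the only mild bookkeeping point is tracking the degenerate subcase $xy = 0$ within the tensor-square forms, which collapses the clause ``$a^2 = \pm c^2$'' into the disjunctive clause ``$b = 0$'' rather than contradicting it.
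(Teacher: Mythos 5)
Your proof is correct and matches the paper's intended approach: the paper gives no explicit proof, merely remarking that Lemma~\ref{binary} "can be proved directly by the definition" and letting Corollary~\ref{binary-necessary} follow by inspection of the canonical forms — exactly the case check you carry out. Every case is verified accurately, including the refinement that the three families $[0,1,0]$, $[1,0,\rho]$, and $[x,y]^{\otimes 2}$ fall inside $\mathscr{P}$, leaving $[1,\rho,-\rho^2]$ (resp.\ $[1,\alpha,-\alpha^2]$) as the only possible witness of $\mathscr{A}\setminus\mathscr{P}$ (resp.\ $\mathscr{A}^\dagger\setminus\mathscr{P}$).
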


Furthermore,
all signatures in each tractable set satisfy a second order recurrence relation.

\begin{definition} \label{second-recurrence-relation-definition}
Let $f=[f_0, f_1, \ldots, f_n]$.
If there exist constants $a, b$ and $c$, not all zero, such that
 $af_k-bf_{k+1}+cf_{k+2}=0$  for $1\leq k\leq n-2$,
 then we say $f$ has type $\langle a, b, c\rangle$,
and it is denoted by $f\in \langle a, b, c\rangle$.
\end{definition}
For a non-degenerate symmetric signature $f$ of arity at least 3,
if $f$ has type $\langle a, b, c\rangle$, its type is uniquely
determined up to a nonzero multiple.
%
%
The next lemma states this type information for the various
tractable sets.
We can use the lemma to check whether
 a symmetric signature can possibly be
 in a tractable set.
\begin{lemma}\label{second-recurrence-relation}
Let  $f\in\mathscr{P}\cup\widetilde{\mathscr{A}}\cup\widetilde{\mathscr{M}}$ be non-degenerate and have arity $\geq 3$.
\begin{itemize}
\item If $f\in \mathscr{P}$ then $f\in \langle0, 1, 0\rangle$.
\item If $f\in \mathscr{A}$ then $f\in \langle0, 1, 0\rangle$ or $f\in\langle1, 0, \pm 1\rangle$.
If $f\in \mathscr{A}\setminus\mathscr{P}$ then $f\in \langle1, 0, \pm 1\rangle$.
\item If $f\in \mathscr{A}^\dagger$ then $f\in \langle 0, 1, 0 \rangle$ or $f\in\langle1, 0, \pm i\rangle$.
If $f\in \mathscr{A}^\dagger\setminus\mathscr{P}$ then $f\in \langle1, 0, \pm i\rangle$.
\item If $f\in \widehat{\mathscr{M}}$ then $f\in \langle 0, 1, 0 \rangle$ or $f\in \langle1, c, 1\rangle$.
If $f\in \widehat{\mathscr{M}}\setminus(\mathscr{P}\cup\widetilde{\mathscr{A}})$ then $f\in \langle1, c, 1\rangle$ with $c\neq 0$.
\item If $f\in \widehat{\mathscr{M}}^\dagger$ then $f\in\langle 0, 1, 0 \rangle$ or $f\in\langle1, c, -1\rangle$.
If $f\in \widehat{\mathscr{M}}^\dagger\setminus(\mathscr{P}\cup\widetilde{\mathscr{A}})$ then $f\in \langle1, c, -1\rangle$ with $c\neq 0$.
\end{itemize}
\end{lemma}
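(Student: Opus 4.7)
The plan is to check the recurrence type separately for each of the five families, exploiting the explicit structural descriptions already at hand. In each case the argument reduces to recognizing the characteristic polynomial of an explicit expression for~$f$ and translating it into the type notation $\langle a,b,c\rangle$.

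First, for $\mathscr{P}$: by the alternate characterization of symmetric product-type signatures (the statement following Definition~3.3 in~\cite{CLX14} quoted in Part~I), a non-degenerate $f\in\mathscr{P}$ of arity $\geq 3$ must be of the form $[a,0,\dots,0,b]$. Then $f_{k+1}=0$ for all $1\le k\le n-2$, which is precisely $\langle 0,1,0\rangle$. Next, for $\mathscr{A}$: I would invoke the explicit list of ten families in the description of $\mathscr{A}$. Families $(\mathscr{F}_1,r{=}0,1,2,3)$ are $[1,0,\dots,0,\pm 1]$ and $[1,0,\dots,0,\pm i]$, which are in $\langle 0,1,0\rangle$; families $(\mathscr{F}_2,r{=}0,1,2,3)$ satisfy $f_{k+2}=f_k$ (after accounting for the $\pm i$ scaling), so they sit in $\langle 1,0,1\rangle$; families $(\mathscr{F}_3,r{=}0,1,2,3)$ satisfy $f_{k+2}=-f_k$, so they sit in $\langle 1,0,-1\rangle$. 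This accounts for every signature in $\mathscr{A}\setminus\mathscr{P}$.

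Second, for $\mathscr{A}^\dagger = \mathscr{T}_4\mathscr{A}$, I would push the recurrence through a diagonal transformation. If $f'\in\mathscr{A}$ satisfies $a f'_k - b f'_{k+1} + c f'_{k+2}=0$ and $f = T^{\otimes n} f'$ with $T=\mathrm{diag}(1,\omega)$, $\omega^4=-1$, then $f_k = \omega^k f'_k$, and the same equation becomes $a f_k - b\omega^{-1} f_{k+1} + c\omega^{-2} f_{k+2}=0$. Since $\omega^2\in\{i,-i\}$, the types $\langle 1,0,\pm 1\rangle$ in $\mathscr{A}$ become $\langle 1,0,\pm i\rangle$ in $\mathscr{A}^\dagger$, and $\langle 0,1,0\rangle$ is preserved (up to rescaling the middle coefficient by a nonzero scalar, which is harmless in the type notation). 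For $\widehat{\mathscr{M}}$, I would use Proposition~\ref{prop:match:symmetric:decomp}: every non-degenerate symmetric matchgate signature is (up to scalar) one of $[a,b]^{\otimes n}\pm[a,-b]^{\otimes n}$, or $\lambda\,\mathrm{Sym}_n^{n-1}([1,0];[0,1])$ or its reverse. Applying $H=\bigl[\begin{smallmatrix}1&1\\1&-1\end{smallmatrix}\bigr]$ gives $\widehat{\mathscr{M}}$ signatures of the form $[a+b,a-b]^{\otimes n}\pm[a-b,a+b]^{\otimes n}$, i.e., $[u,v]^{\otimes n}\pm[v,u]^{\otimes n}$, and $H$-transforms of the \textsc{Perfect Matching}-type signatures, which are computed directly. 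The former satisfies the second order linear recurrence with characteristic polynomial $(x-v/u)(x-u/v)$, giving type $\langle uv,\,u^2+v^2,\,uv\rangle = \langle 1,c,1\rangle$ (after dividing by $uv$, with $c=(u^2+v^2)/(uv)$; the case $uv=0$ drops to $\langle 0,1,0\rangle$ or degenerate). The latter directly falls into $\langle 0,1,0\rangle$ since it is (up to $H$) a weighted perfect matching signature. Finally, for $\widehat{\mathscr{M}}^\dagger=\mathscr{T}_2\widehat{\mathscr{M}}$, the same diagonal-transformation calculation as for $\mathscr{A}^\dagger$, now with $\omega^2=-1$, converts $\langle 1,c,1\rangle$ into $\langle 1,c\omega^{-1},-1\rangle$, i.e., $\langle 1,c',-1\rangle$ for some $c'$, while $\langle 0,1,0\rangle$ is preserved.

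To conclude each bullet, I would verify the finer claim about the $\setminus\mathscr{P}$ refinement: when $f\in\mathscr{A}\setminus\mathscr{P}$ (resp.\ $\mathscr{A}^\dagger\setminus\mathscr{P}$, $\widehat{\mathscr{M}}\setminus(\mathscr{P}\cup\widetilde{\mathscr{A}})$, $\widehat{\mathscr{M}}^\dagger\setminus(\mathscr{P}\cup\widetilde{\mathscr{A}})$), the type $\langle 0,1,0\rangle$ is ruled out, since that would force $f$ to be of the shape $[a,0,\dots,0,b]$ which already lies in $\mathscr{P}$; hence $f$ sits in the other type. For $\widehat{\mathscr{M}}\setminus(\mathscr{P}\cup\widetilde{\mathscr{A}})$, in the form $[u,v]^{\otimes n}\pm[v,u]^{\otimes n}$, I need $c=(u^2+v^2)/(uv)\neq 0$; if $c=0$ then $u^2=-v^2$, and a direct check shows $f\in\widetilde{\mathscr{A}}$, excluded by assumption. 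The routine but slightly fiddly step—and the one I expect to be the main obstacle—is keeping the case analysis bookkeeping clean, in particular handling the degenerate boundary cases ($uv=0$, $\lambda=0$) and correctly tracking the $\pm$ signs of $\omega^2$ through the diagonal transformations, so that the alternatives $\langle 1,0,+1\rangle$ vs.\ $\langle 1,0,-1\rangle$ and $\langle 1,0,+i\rangle$ vs.\ $\langle 1,0,-i\rangle$ are both genuinely realized.
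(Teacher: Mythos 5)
The overall strategy—reading off the recurrence type from explicit parametrizations of each tractable family—is sound, and your handling of $\mathscr{P}$, the three families $\mathscr{F}_1,\mathscr{F}_2,\mathscr{F}_3$ of $\mathscr{A}$, the diagonal-transformation bookkeeping that produces $\mathscr{A}^\dagger$ and $\widehat{\mathscr{M}}^\dagger$ from $\mathscr{A}$ and $\widehat{\mathscr{M}}$, and the tensor-sum form $[u,v]^{\otimes n}\pm[v,u]^{\otimes n}$ are all essentially right. (You do have the $\mathscr{F}_2\mapsto\langle1,0,1\rangle$ and $\mathscr{F}_3\mapsto\langle1,0,-1\rangle$ assignments swapped relative to the convention $af_k-bf_{k+1}+cf_{k+2}=0$, since $f_{k+2}=f_k$ gives $\langle1,0,-1\rangle$, but this is absorbed by the $\pm$ in the statement.)

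However, there is a genuine gap in the $\widehat{\mathscr{M}}$ case (and hence also $\widehat{\mathscr{M}}^\dagger$): you assert that the $H$-transform of a perfect-matching signature ``directly falls into $\langle 0,1,0\rangle$,'' but the recurrence type is \emph{not} preserved under the non-diagonal transformation $H=\bigl[\begin{smallmatrix}1&1\\1&-1\end{smallmatrix}\bigr]$. Concretely, $H^{\otimes 3}[0,1,0,0]=[3,1,-1,-3]$, and more generally the $H$-transform of $\mathrm{Sym}_n^{n-1}([1,0];[0,1])$ is (up to scalar) the signature $f_k=(\pm1)^k(n-2k)$, which satisfies $f_k\mp 2f_{k+1}+f_{k+2}=0$ and hence has type $\langle1,\pm2,1\rangle$, not $\langle0,1,0\rangle$. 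This matters twice: first, because a non-degenerate signature of arity $\geq3$ has a unique recurrence type, your assignment is simply incorrect for an explicit family of matchgate signatures; and second, because these signatures \emph{do} lie in $\widehat{\mathscr{M}}\setminus(\mathscr{P}\cup\widetilde{\mathscr{A}})$ (see Lemma~\ref{M-2-M-NOT-IN-A-AND-P}), so if you believed they had type $\langle0,1,0\rangle$ you would have found an apparent counterexample to the refined claim that such signatures have type $\langle1,c,1\rangle$ with $c\neq0$. Your refinement argument only verifies $c\neq0$ for the tensor-sum form $[u,v]^{\otimes n}\pm[v,u]^{\otimes n}$ and never checks the perfect-matching form, which in your framework you have (mistakenly) filed away under a different type; the correct fix is to compute the type of $f_k=\lambda(\pm1)^k(n-2k)$ directly, obtaining $\langle1,\pm2,1\rangle$ so that $c=\pm2\neq0$ automatically, and then push through the $\mathscr{T}_2$-transformation to obtain the $\widehat{\mathscr{M}}^\dagger$ case as you already do.
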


The following two corollaries follow from Lemma~\ref{binary}
for the binary case, and Lemma~\ref{second-recurrence-relation}
for arity $n \ge 3$.

\begin{corollary}\label{A-2-A-minus-P}
If $f\in\mathscr{A}\setminus\mathscr{P}$, then $f\notin\mathscr{A}^\dagger$.
Similarly, If $f\in\mathscr{A}^\dagger \setminus \mathscr{P}$, then $f\notin\mathscr{A}$.
\end{corollary}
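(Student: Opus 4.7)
The two statements are symmetric, so I will focus on the first: assuming $f\in \mathscr{A}\setminus\mathscr{P}$, derive $f\notin \mathscr{A}^\dagger$. The plan is to split on whether $f$ is degenerate, and for non-degenerate $f$ further split on whether $\arity(f)=2$ or $\arity(f)\ge 3$. In each case I will arrive at a contradiction to $f\in \mathscr{A}^\dagger$ using the structural characterizations already available in the excerpt.

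First I would dispose of the degenerate case trivially: any symmetric degenerate signature has the form $[a,b]^{\otimes n}$, which lies in $\mathscr{P}$ by definition of the product-type family. Hence if $f$ is degenerate and $f\in \mathscr{A}$, then $f\in \mathscr{A}\cap\mathscr{P}$, so $f\notin \mathscr{A}\setminus\mathscr{P}$, and the hypothesis is vacuously not satisfied. Thus I may assume $f$ is non-degenerate. If $\arity(f)=2$, Lemma~\ref{binary} (together with the explicit list in Corollary~\ref{binary-necessary}) forces $f$ to be a nonzero scalar multiple of $[1,\rho,-\rho^2]$ with $\rho^4=1$. If additionally $f\in \mathscr{A}^\dagger$, the hypothesis $f\notin \mathscr{P}$ and Corollary~\ref{binary-necessary} force $f=c\,[1,\alpha,-\alpha^2]$ with $\alpha^4=-1$. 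Comparing the two expressions entry by entry forces $\rho=\alpha$, hence $\rho^4=\alpha^4$, contradicting $1\ne -1$.

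For the remaining case $\arity(f)\ge 3$ with $f$ non-degenerate, I would apply Lemma~\ref{second-recurrence-relation} on both sides. The hypothesis $f\in \mathscr{A}\setminus\mathscr{P}$ gives $f\in\langle 1,0,\pm 1\rangle$, while a putative $f\in \mathscr{A}^\dagger$, combined with $f\notin \mathscr{P}$ (which transfers from the first hypothesis), gives $f\in\langle 1,0,\pm i\rangle$. The key fact I will invoke is that for a non-degenerate symmetric signature of arity at least $3$, its second-order recurrence type is uniquely determined up to a nonzero scalar. Thus $\langle 1,0,\pm 1\rangle$ and $\langle 1,0,\pm i\rangle$ must coincide up to scalar, which forces $\pm 1 = \lambda(\pm i)$ for some scalar $\lambda$ that simultaneously satisfies $1=\lambda$ (from the first coordinate), an impossibility. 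The symmetric statement (swap $\mathscr{A}$ and $\mathscr{A}^\dagger$) follows by the same argument with the roles of $\langle 1,0,\pm 1\rangle$ and $\langle 1,0,\pm i\rangle$ interchanged. No step here looks to be a serious obstacle; the only thing to be careful about is the degenerate case and the ``up to scalar'' quantifier in Lemma~\ref{binary}, both of which are handled cleanly above.
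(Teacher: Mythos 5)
Your proof is correct and follows essentially the same route the paper intends: the paper explicitly states that this corollary follows from Lemma~\ref{binary} for binary signatures and Lemma~\ref{second-recurrence-relation} for arity $\geq 3$, which are exactly your two non-trivial cases. Your explicit disposal of the degenerate/unary case (degenerate symmetric signatures are of the form $[a,b]^{\otimes n} \in \mathscr{P}$, so the hypothesis is vacuous) is a clean addition that the paper leaves implicit.
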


\begin{corollary}\label{M-2-M-A-P}
If $f\in\widehat{\mathscr{M}}\setminus(\mathscr{P}\cup\widetilde{\mathscr{A}})$, then $f\notin\widehat{\mathscr{M}}^\dagger$.
Similarly, if $f\in\widehat{\mathscr{M}}^\dagger \setminus (\mathscr{P}\cup\widetilde{\mathscr{A}})$, then $f\notin\widehat{\mathscr{M}}$.
\end{corollary}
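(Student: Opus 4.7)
The plan is to reduce Corollary~\ref{M-2-M-A-P} to the type characterizations in Lemma~\ref{second-recurrence-relation} (for arity $\ge 3$) and the explicit membership description in Lemma~\ref{binary} (for arity $2$). The key observation is that the second-order recurrence type of a non-degenerate symmetric signature of arity at least $3$ is unique up to a nonzero scalar, since two linearly independent second-order recurrences would combine to give a first-order recurrence, forcing degeneracy. I will first dispose of degenerate $f$: every degenerate signature $[a,b]^{\otimes n}$ lies in $\mathscr{P}$, so the hypothesis $f \in \widehat{\mathscr{M}} \setminus (\mathscr{P} \cup \widetilde{\mathscr{A}})$ already excludes it. Thus it suffices to treat the non-degenerate case, split between arity $2$ and arity $\ge 3$.

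For the arity $\ge 3$ case, which is the main case, assume for contradiction that $f \in \widehat{\mathscr{M}} \setminus (\mathscr{P} \cup \widetilde{\mathscr{A}})$ and simultaneously $f \in \widehat{\mathscr{M}}^\dagger$. Since $f \notin \mathscr{P} \cup \widetilde{\mathscr{A}}$, in particular $f \notin \mathscr{P}$, and so by the last two bullets of Lemma~\ref{second-recurrence-relation} we get $f \in \langle 1, c, 1 \rangle$ from the $\widehat{\mathscr{M}}$ membership (with $c \ne 0$), and $f \in \langle 1, c', -1 \rangle$ from the $\widehat{\mathscr{M}}^\dagger$ membership (with $c' \ne 0$); here we also exclude the $\langle 0,1,0 \rangle$ alternatives because those force $f \in \mathscr{P}$. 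By the uniqueness of type up to scalar, we must have $(1,c,1) = \lambda(1,c',-1)$ for some $\lambda \in \mathbb{C}^\times$, which demands $\lambda = 1$ and $\lambda = -1$ simultaneously, a contradiction. The second half of the corollary follows by symmetry, interchanging the roles of $\widehat{\mathscr{M}}$ and $\widehat{\mathscr{M}}^\dagger$.

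For the arity $2$ case, I would apply Lemma~\ref{binary} directly. If $f \in \widehat{\mathscr{M}}$ then $f = [a,b,a]$ or $[a,0,-a]$. The form $[a,0,-a]$ lies in $\mathscr{P}$, so the hypothesis forces $f = [a,b,a]$ with $ab \ne 0$ (otherwise $f$ is degenerate or in $\mathscr{P}$); moreover $b \ne \pm i a$, since $[a,\pm i a, a] = a[1,\pm i,-(\pm i)^2]$ lies in $\mathscr{A} \subseteq \widetilde{\mathscr{A}}$. Now if also $f \in \widehat{\mathscr{M}}^\dagger$, then $f = [a',b',-a']$ or $[a',0,a']$: matching with $[a,b,a]$ forces $a = -a$ (so $a = 0$) in the first sub-case and $b = 0$ in the second, either way contradicting $ab \ne 0$. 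The symmetric statement is analogous.

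The only mild obstacle I anticipate is making sure the uniqueness-of-type argument is airtight for non-degenerate signatures of arity exactly $3$, where the recurrence only has one instance ($0 \le k \le n-2$ gives a single equation $a f_0 - b f_1 + c f_2 = 0$ and $a f_1 - b f_2 + c f_3 = 0$)---but with two linearly independent equations on four entries, the null space is at most two-dimensional, and any two-dimensional solution set for $(a,b,c)$ again yields a first-order recurrence, hence degeneracy. So the uniqueness holds uniformly for all $n \ge 3$, and the proof goes through cleanly without further case splits.
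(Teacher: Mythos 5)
Your proof is correct and follows exactly the route the paper has in mind: dispose of degenerate $f$ (which lies in $\mathscr{P}$), use Lemma~\ref{binary} to compare the explicit binary forms when $\arity(f)=2$, and for $\arity(f)\ge 3$ use the type classification in Lemma~\ref{second-recurrence-relation} together with the uniqueness of the second-order recurrence type for non-degenerate signatures, whence $\langle 1,c,1\rangle$ and $\langle 1,c',-1\rangle$ would have to be proportional, which is impossible. The one cosmetic remark is that you need not separately argue that type $\langle 0,1,0\rangle$ forces membership in $\mathscr{P}$: the second sentences in the $\widehat{\mathscr{M}}$ and $\widehat{\mathscr{M}}^\dagger$ bullets of Lemma~\ref{second-recurrence-relation} already deliver $\langle 1,c,1\rangle$ and $\langle 1,c',-1\rangle$ with $c,c'\neq 0$ directly once $f\notin\mathscr{P}\cup\widetilde{\mathscr{A}}$.
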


The following lemma gives a characterization for
$\widetilde{\mathscr{M}}\setminus(\mathscr{P}\cup\widetilde{\mathscr{A}})$.

\begin{lemma} \label{M-2-M-NOT-IN-A-AND-P}
 Let $f = [f_0, \dotsc, f_n]$ be a symmetric signature of arity $n$.
 Then $f \in \widehat{\mathscr{M}} \setminus(\mathscr{P} \cup \widetilde{\mathscr{A}})$
 iff
 \begin{itemize}
  \item $n = 2$ and $f = \lambda[1, a, 1]$,
  where $a^4 \not\in \{0,1\}$ and $\lambda \neq 0$; or

  \item $n \geq 3$ and $f = [s, t]^{\otimes n} \pm [t, s]^{\otimes n}$,
  where $s t \neq 0$ and $s^4 \neq t^4$; or

  \item $n \geq 3$ and $f_k = \lambda (\pm 1)^k (n-2k)$,
  where $\lambda \neq 0$.
 \end{itemize}
 Similarly,
 $f \in \widehat{\mathscr{M}}^\dagger \setminus (\mathscr{P} \cup \widetilde{\mathscr{A}})$
 iff
 \begin{itemize}
  \item $n = 2$ and $f = \lambda[1, b, -1]$,
  where $b^4 \not\in \{0,1\}$ and $\lambda \neq 0$; or

  \item $n \geq 3$ and $f = [s, ti]^{\otimes n} \pm [t, si]^{\otimes n}$
  where $s t \neq 0$ and $s^4 \neq t^4$; or

  \item $n \geq 3$ and $f_k = \lambda (\pm i)^k (n-2k)$,
  where $\lambda \neq 0$.
 \end{itemize}
\end{lemma}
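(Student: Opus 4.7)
The plan is to enumerate all non-degenerate symmetric signatures in $\widehat{\mathscr{M}}$ by applying the holographic transformation $H = \left[\begin{smallmatrix} 1 & 1 \\ 1 & -1 \end{smallmatrix}\right]$ to the explicit classification of symmetric matchgate signatures in Proposition~\ref{prop:match:symmetric:decomp}, and then to identify precisely which of them also lie in $\mathscr{P} \cup \widetilde{\mathscr{A}}$. Under $H$, the pair $[a, b]^T, [a, -b]^T$ maps to $[s, t]^T, [t, s]^T$ with $s = a + b$, $t = a - b$, so the first two families in Proposition~\ref{prop:match:symmetric:decomp} become $[s, t]^{\otimes n} \pm [t, s]^{\otimes n}$. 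The symmetrization families $\lambda \Sym_n^{n-1}([1, 0]; [0, 1])$ and its reversal map to $\lambda \Sym_n^{n-1}([1, 1]; [1, -1])$ and $\lambda \Sym_n^{n-1}([1, -1]; [1, 1])$ respectively, whose $k$-th entries a short calculation shows to be $\lambda(n - 2k)$ and $\lambda(-1)^k(n - 2k)$ up to a common nonzero factor. Together with degenerate forms $\lambda[s, t]^{\otimes n}$, these exhaust $\widehat{\mathscr{M}}$.

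Next, I would discard the signatures that fall into $\mathscr{P} \cup \widetilde{\mathscr{A}}$. For $n = 2$, Lemma~\ref{binary} gives that $\widehat{\mathscr{M}}$-signatures are of the form $[a, b, a]$ or $[a, 0, -a]$; a direct check using Lemma~\ref{binary} shows that $\lambda[1, a, 1]$ lies in $\mathscr{P}$ iff $a^2 \in \{0, 1\}$ (via degeneracy or the $[a, 0, c]$ form), in $\mathscr{A}$ iff $a^2 \in \{\pm 1\}$, and in $\mathscr{A}^\dagger$ iff $a^2 \in \{\pm i\}$; combining these conditions yields exactly $a^4 \notin \{0, 1\}$. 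For $n \geq 3$, the key tool is Lemma~\ref{second-recurrence-relation}: the signature $[s, t]^{\otimes n} \pm [t, s]^{\otimes n}$ satisfies a second-order recurrence of type $\langle st, s^2 + t^2, st \rangle$, which normalizes to $\langle 1, c, 1 \rangle$ with $c \neq 0$ exactly when $st \neq 0$ and $s^2 + t^2 \neq 0$, i.e., when $st \neq 0$ and $s^4 \neq t^4$. Since for non-degenerate signatures of arity $\geq 3$ the recurrence type is unique up to scalar, this rules out the types $\langle 0, 1, 0 \rangle$, $\langle 1, 0, \pm 1 \rangle$, and $\langle 1, 0, \pm i \rangle$ that are required for $\mathscr{P}$, $\mathscr{A}$, and $\mathscr{A}^\dagger$ respectively. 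The arithmetic-progression form $f_k = \lambda(\pm 1)^k(n - 2k)$ satisfies the recurrence $\langle 1, \pm 2, 1 \rangle$ with nonzero middle coefficient, so by the same uniqueness argument it too lies in $\widehat{\mathscr{M}} \setminus (\mathscr{P} \cup \widetilde{\mathscr{A}})$.

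The dagger counterpart is obtained by applying a diagonal $T \in \mathscr{T}_2$, say $T = \left[\begin{smallmatrix} 1 & 0 \\ 0 & i \end{smallmatrix}\right]$, to the above characterization: $[s, t]^{\otimes n} \pm [t, s]^{\otimes n}$ becomes $[s, ti]^{\otimes n} \pm [t, si]^{\otimes n}$, the arithmetic progression $f_k = \lambda(\pm 1)^k(n - 2k)$ becomes $f_k = \lambda(\pm i)^k(n - 2k)$, and the binary form $\lambda[1, a, 1]$ becomes $\lambda[1, b, -1]$ with $b = ai$, so $a^4 \notin \{0, 1\}$ translates directly to $b^4 \notin \{0, 1\}$. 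The converse direction in each case is immediate: verifying the stated signatures lie in $\widehat{\mathscr{M}}$ (or $\widehat{\mathscr{M}}^\dagger$) follows from the explicit transformation formulas, and excluding them from $\mathscr{P}$ and $\widetilde{\mathscr{A}}$ is another application of Lemma~\ref{second-recurrence-relation} and Lemma~\ref{binary}. The main obstacle will be the exhaustive treatment of boundary cases in the constraint $s^4 \neq t^4$ for the $[s, t]^{\otimes n} \pm [t, s]^{\otimes n}$ family: different violations ($s = 0$, $t = 0$, $s = \pm t$, $s = \pm it$) can cause the signature to degenerate, identically vanish, or fall into another tractable class, and each of these must be verified to be properly excluded by the degeneracy requirement and the condition $s^4 \neq t^4$.
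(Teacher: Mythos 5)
Your overall approach — enumerate the symmetric $\widehat{\mathscr{M}}$-signatures via $H$ applied to Proposition~\ref{prop:match:symmetric:decomp}, sieve out $\mathscr{P} \cup \widetilde{\mathscr{A}}$ using Lemma~\ref{binary} and the recurrence-type uniqueness from Lemma~\ref{second-recurrence-relation}, then pass to $\widehat{\mathscr{M}}^\dagger$ by $\trans{1}{0}{0}{i}$ — matches the paper's proof. But the binary case as you have written it contains a genuine error that makes the step internally inconsistent. You assert that $\lambda[1,a,1] \in \mathscr{A}^\dagger$ iff $a^2 \in \{\pm i\}$. This is false: checking the four forms in Lemma~\ref{binary}, the form $[1,\alpha,-\alpha^2]$ with $\alpha^4 = -1$ requires $-\alpha^2 = 1$, i.e.~$\alpha^4 = 1$, which contradicts $\alpha^4 = -1$; the form $[1,0,\rho]$ forces $a=0$; and $[x,y]^{\otimes 2}$ with $x^2 = y^2$ forces $x^4 = y^4$, which together with $x^4 = -y^4 \neq 0$ is impossible, so again $xy=0$ and $a=0$. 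So $[1,a,1] \in \mathscr{A}^\dagger$ iff $a=0$. Worse, if your three stated conditions were taken at face value, their union would be $a^4 \in \{0,1,-1\}$ (since $a^2 \in \{\pm i\}$ is exactly $a^4 = -1$), giving the exclusion $a^4 \notin \{0,1,-1\}$ — not the $a^4 \notin \{0,1\}$ the lemma requires. With the corrected $\mathscr{A}^\dagger$ condition the union collapses to $a^4 \in \{0,1\}$ and the step goes through.

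A smaller slip: you write that the recurrence normalizes to $\langle 1,c,1 \rangle$ with $c \neq 0$ ``exactly when $st \neq 0$ and $s^2 + t^2 \neq 0$, i.e., when $st \neq 0$ and $s^4 \neq t^4$.'' These two conditions are not equivalent: $s^4 = t^4$ splits into $s^2 = t^2$ (where $s^2 + t^2 \neq 0$ still, hence $c \neq 0$, but $f$ is degenerate and so in $\mathscr{P}$) and $s^2 = -t^2$ (where $c = 0$). The paper separates these cases explicitly. You do flag the boundary cases $s = \pm t$ and $s = \pm it$ as needing verification at the end, so this is a presentation issue rather than a conceptual gap, but the stated equivalence as written is false and should not be claimed.
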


\begin{proof}
 We prove the lemma for $\widehat{\mathscr{M}}$.
 The proof for $\widehat{\mathscr{M}}^\dagger$ follows from a holographic transformation by $\trans{1}{0}{0}{i}$.

 By Lemma~\ref{binary},
 a binary symmetric signature $f \in \widehat{\mathscr{M}}$ 
has the form $[a,b,a]$ or $[a, 0, -a]$.
 Since $[a, 0, -a] \in {\mathscr{A}}$ as a multiple of $[1,0,-1]$,
 we exclude it.
 For $[a,b,a]$,
 if $a b = 0$,
 then $f \in \mathscr{P}$.
 Also if $a^4 = b^4$,
 then $[a, b, a] \in  {\mathscr{A}}$, being a multiple of
$[1, \pm 1]^{\otimes 2}$ or $[1, \pm i, 1]$.
 This gives the form $f = \lambda [1, b, 1]$  with $b^4 \not\in \{0,1\}$ and $\lambda \neq 0$.
 Conversely,
 any $f$ of this form belongs to
 $\widehat{\mathscr{M}} \setminus(\mathscr{P} \cup \widetilde{\mathscr{A}})$.

%
For arity $n\geq 3$, $f\in\widehat{\mathscr{M}}$ iff
$f$ takes the form $[s, t]^{\otimes n}\pm [t, s]^{\otimes n}$
or $f_k=\lambda(\pm 1)^k(n-2k)$.
For the latter case
$f\in\widehat{\mathscr{M}}\setminus(\mathscr{P}\cup\widetilde{\mathscr{A}})$
follows from its type $\langle 1, \pm 2, 1\rangle$.

For $f=[s, t]^{\otimes n}\pm [t, s]^{\otimes n}$,
if $st=0$, then $f\in\mathscr{P}$.
If $s^2=t^2$, then $f$ is degenerate, thus $f\in\mathscr{P}$.
If $s^2=-t^2$, then $f\in\mathscr{A}$.
Conversely, if $st\neq 0$ and $s^4\neq t^4$,
then $f$ is non-degenerate and $f_k$ has type
 $\langle 1, \frac{s}{t}+\frac{t}{s}, 1\rangle$.
Note that $\frac{s}{t}+\frac{t}{s}\neq 0$ by $s^4\neq t^4$.
Thus
$f\in\widehat{\mathscr{M}}\setminus(\mathscr{P}\cup\widetilde{\mathscr{A}})$ by Lemma~\ref{second-recurrence-relation}.
\end{proof}

By the second recurrence relation of the signatures in $\widetilde{\mathscr{M}}\setminus(\mathscr{P}\cup\widetilde{\mathscr{A}})$,
we have the following lemma that will be used in the proof of Theorem~\ref{odd-mixing-theorem}.
\begin{corollary}\label{widetilde-M-minus-P-and-A-parity}
If $f\in\widetilde{\mathscr{M}}\setminus(\mathscr{P}\cup\widetilde{\mathscr{A}})$,
then $f$ does not satisfy parity constraints.
\end{corollary}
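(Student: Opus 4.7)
The plan is to argue by contradiction using the second order recurrence relations that characterize $\widetilde{\mathscr{M}} \setminus (\mathscr{P} \cup \widetilde{\mathscr{A}})$ via Lemma~\ref{second-recurrence-relation}. By the symmetry between $\widehat{\mathscr{M}}$ and $\widehat{\mathscr{M}}^\dagger$ (the latter is obtained from the former by the holographic transformation $\trans{1}{0}{0}{i}$, which permutes parity-respecting signatures among themselves up to relabeling of parity classes), it suffices to treat $f \in \widehat{\mathscr{M}} \setminus (\mathscr{P} \cup \widetilde{\mathscr{A}})$. Lemma~\ref{second-recurrence-relation} tells us that such an $f$ is non-degenerate (otherwise $f \in \mathscr{P}$) and, if its arity $n$ is at least $3$, satisfies a recurrence $f_k - c\, f_{k+1} + f_{k+2} = 0$ for $0 \le k \le n-2$ with $c \neq 0$.

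For $n \ge 3$, suppose for contradiction that $f$ satisfies parity; I would treat the case that all odd-indexed entries vanish (the even case is entirely analogous). For each odd $k \in [0, n-2]$, the recurrence reduces to $-c f_{k+1} = 0$, and since $c \neq 0$ this forces $f_{k+1} = 0$. This kills every even-indexed entry $f_2, f_4, \ldots$ up to $f_{n-1}$ or $f_{n-2}$ depending on the parity of $n$. Plugging $k = 0$ into the recurrence then gives $f_0 = -f_2 = 0$, and if $n$ is even, $k = n-2$ gives $f_n = -f_{n-2} = 0$. Thus $f \equiv 0$, contradicting non-degeneracy. The key observation is that the nonvanishing middle coefficient $c$ is exactly what couples the two parity classes, so vanishing on one class forces vanishing on the other through the recurrence.

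The binary case $n = 2$ is handled directly from Lemma~\ref{binary}: any $f \in \widehat{\mathscr{M}}$ of arity $2$ has the form $[a, b, a]$ or $[a, 0, -a]$. If $f$ satisfies parity, then either $b = 0$ (yielding $[a,0,a] \in \mathscr{P}$ or $[a,0,-a] \in \mathscr{A} \subseteq \widetilde{\mathscr{A}}$) or $a = 0$ in the first form (yielding $[0, b, 0] \in \mathscr{P}$); in every subcase $f \in \mathscr{P} \cup \widetilde{\mathscr{A}}$, contradicting the hypothesis. There is no serious obstacle here—the proof is really a bookkeeping exercise built on Lemma~\ref{second-recurrence-relation} and Lemma~\ref{binary}—and the only point requiring a moment of care is verifying that the recurrence does not degenerate at the boundary indices $k = 0$ and $k = n-2$, which is where I use $c \neq 0$ together with the entries already forced to zero to conclude $f_0 = f_n = 0$.
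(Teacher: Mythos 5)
Your proof is correct and takes essentially the same approach as the paper: reduce $\widehat{\mathscr{M}}^\dagger$ to $\widehat{\mathscr{M}}$ by the $\trans{1}{0}{0}{i}$ transformation, use Lemma~\ref{binary} for arity~$2$, and for arity $\ge 3$ use the second-order recurrence with nonzero middle coefficient $c$ from Lemma~\ref{second-recurrence-relation}. The only cosmetic difference is that the paper locates a nonzero interior entry $f_k$ directly (guaranteed since $f \notin \mathscr{P}$ rules out \textsc{Gen-Eq}) and derives $c=0$ at once, whereas you propagate the recurrence globally to force $f \equiv 0$; both hinge on the same observation that the middle term $cf_{k+1}$ couples the two parity classes.
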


\begin{proof}
For $f\in\widehat{\mathscr{M}}\setminus(\mathscr{P}\cup\widetilde{\mathscr{A}})$,
if $f$ has arity 2, then $f=\lambda [1, a, 1]$ for
some $\lambda \neq 0$, $a^4\neq 0, 1$ by Lemma~\ref{M-2-M-NOT-IN-A-AND-P}.
Thus it does not satisfy parity constraints.

For $f$ with arity $n\geq 3$, by Lemma~\ref{second-recurrence-relation}, there exists a constants $c\neq 0$
such that $f\in\langle 1, c, 1\rangle$.
Note that there exists $f_k\neq 0$, where $1\leq k\leq n-1$ by $f\notin \mathscr{P}$.
If $f$ satisfies parity constraints, then $f_{k-1}=f_{k+1}=0$.
Moreover, by $f_{k-1}-cf_k+f_{k+1}=0$, we have $c=0$. This is a contradiction.

The proof for
$f\in\widehat{\mathscr{M}}^\dagger\setminus(\mathscr{P}\cup\widetilde{\mathscr{A}})$
follows from a holographic transformation by
$\left[\begin{smallmatrix} 1 & 0 \\
0 & i \end{smallmatrix}\right]$.
\end{proof}


The following lemma gives a characterization
of nonzero signatures in $\widehat{\mathscr{M}}$.
A \textsc{Gen-Eq} is a signature of the form $f = [a, 0, \ldots, 0, b]$,
called a generalized equality (with $a=0$ or $b=0$ allowed.)
\begin{lemma}\label{arity-hat-M}
A \textsc{Gen-Eq} signature $f$ is in $\widehat{\mathscr{M}}$ iff $f
= \lambda [1, 0, \ldots, 0, \pm 1]$, for some $\lambda$.

Suppose $f$ is a symmetric signature that is not a \textsc{Gen-Eq}.
Then $f \in \widehat{\mathscr{M}}$ iff $f$ satisfies a second order recurrence
$f_{k} - c f_{k+1} + f_{k+2} = 0$
(for $0 \leq k \leq \arity(f) - 2$)
and the following conditions hold.

If $f$ has arity $2n$, then
\begin{itemize}
\item $f_{n-k}=f_{n+k}$ (for $0 \le k \le n$),
 $f_n\neq 0$,
 $c=
\frac{2f_{n-1}}{f_n}$; or
\item$f_{n-k}=-f_{n+k}$ (for $0 \le k \le n$),
$f_{n-1}\neq 0$,
 $c=\frac{f_{n-2}}{f_{n-1}}$.
\end{itemize}

If $f$ has arity $2n+1$, then
\begin{itemize}
\item $f_{n-k}=f_{n+1+k}$ (for $0 \le k \le n$),
$f_n\neq 0$,
 $c
= \frac{f_{n-1}}{f_n} + 1$; or
\item $f_{n-k}=-f_{n+1+k}$ (for $0 \le k \le n$),
$f_n\neq 0$,
$c
=\frac{f_{n-1}}{f_n} -1$.
\end{itemize}
\end{lemma}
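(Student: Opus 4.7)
My plan is to reduce the characterization of $\widehat{\mathscr{M}}$ to the explicit structural description of $\mathscr{M}$ given in Proposition~\ref{prop:match:symmetric:decomp}, then transfer across the Hadamard transformation via $H = \left[\begin{smallmatrix} 1 & 1 \\ 1 & -1 \end{smallmatrix}\right]$ satisfying $H\tbcolvec{a}{b} = \tbcolvec{a+b}{a-b}$. The four canonical forms of nonzero symmetric signatures in $\mathscr{M}$ become, up to nonzero scalar, the following four forms for $f \in \widehat{\mathscr{M}}$ of arity $n$: (A) $[s,t]^{\otimes n} + [t,s]^{\otimes n}$; (B) $[s,t]^{\otimes n} - [t,s]^{\otimes n}$; (C) $\Sym_n^{n-1}([1,1];[1,-1])$, whose $k$-th entry is a constant multiple of $(n-2k)$; (D) $\Sym_n^{n-1}([1,-1];[1,1])$, whose $k$-th entry is a constant multiple of $(-1)^k(n-2k)$. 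I will verify both parts of the lemma directly against this list.

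For the Gen-Eq case, I first note that in (C) and (D) the interior entries never all vanish for $n \ge 2$, since the pattern $(n-2k)$ has no interior zero-run. In forms (A)/(B) with $st \ne 0$, a middle entry $s^{n-k}t^k \pm t^{n-k}s^k$ can vanish only at isolated values of $k$, so the only way to obtain a full Gen-Eq pattern is $st = 0$. Specializing to $t = 0$ then recovers exactly $\lambda[1,0,\ldots,0,\pm 1]$, while the converse is immediate from Proposition~\ref{prop:match:symmetric}.

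For the non-Gen-Eq case, the forward direction $(\Rightarrow)$ is a direct algebraic computation in each form. Form (A) has characteristic roots $s/t$ and $t/s$, giving recurrence type $\langle 1,\, s/t + t/s,\, 1\rangle$, and the symmetry $f_{n-k} = f_{n+k}$ (even arity) or $f_{n-k} = f_{n+1+k}$ (odd arity) is immediate from the formula; form (B) yields the corresponding antisymmetric relations. Forms (C) and (D) have entries linear in $k$ up to a factor of $(-1)^k$, hence satisfy type $\langle 1, \pm 2, 1\rangle$, with parity of arity determining which symmetry class each falls into. The explicit formulas for $c$ in terms of $f_{n-1}, f_n, f_{n-2}$ then follow by applying the recurrence once at an index adjacent to the symmetry point and substituting the symmetry relation; for example, in form (A) of arity $2n$ the identity $f_{n+1} = f_{n-1}$ converts $f_{n-1} - cf_n + f_{n+1} = 0$ into $c = 2f_{n-1}/f_n$, and the other three $c$-formulas arise analogously.

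For the converse $(\Leftarrow)$ I solve the recurrence. The polynomial $x^2 - cx + 1$ has roots with product $1$. When $c^2 \ne 4$ the two distinct roots $\mu, \mu^{-1}$ give $f_k = A\mu^k + B\mu^{-k}$, and the symmetry or antisymmetry condition imposes exactly one linear relation between $A$ and $B$; after choosing $s, t$ with $s/t = \mu$ and absorbing a nonzero scalar, I recover form (A) in the symmetric case and form (B) in the antisymmetric case. When $c = \pm 2$ the root is a double root at $\pm 1$ and the solution is $f_k = (A + Bk)(\pm 1)^k$, and the symmetry condition pins $A$ in terms of $B$ and the arity, producing form (C) or (D) after rescaling. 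The main delicate point will be this double-root case, where the correspondence between (symmetric vs.\ antisymmetric) and (form (C) vs.\ form (D)) depends jointly on the sign of $c$ and the parity of the arity; this is forced by matching the explicit entry formulas $\lambda(n-2k)$ and $\lambda(-1)^k(n-2k)$ against the symmetry at the center index, and once the bookkeeping is done it exhausts all four subcases of the lemma.
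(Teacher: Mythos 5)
Your overall route is the same as the paper's: enumerate the Hadamard image of the canonical matchgate forms to obtain (A)--(D), read off the recurrence type and the (anti)symmetry about the center index for each, and in the converse solve the recurrence under the imposed symmetry. The forward direction and the distinct-root converse agree with the paper in substance.

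The issue is in your double-root converse. You write that the symmetry condition ``pins $A$ in terms of $B$'' so that each of the symmetric and antisymmetric subcases rescales to form (C) or (D), with the assignment governed by the sign of $c$ and the arity parity. In fact, for each sign of $c$ and each parity, exactly one of the two symmetry conditions instead forces $B = 0$: for instance $c = 2$ with $f_{n-k} = f_{n+k}$ gives $-2Bk = 0$ for all $k$, so $f$ is a constant, i.e.\ the degenerate instance of form (A) with $s = t$, not a rescaling of (C) or (D). The paper addresses this by noting that the symmetric double-root branch simply folds into the tensor-sum form; as written, your bookkeeping would try (and fail) to match a constant sequence against $\lambda(n-2k)$ or $\lambda(-1)^k(n-2k)$. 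A smaller slip: you claim forms (C)/(D) have some nonzero interior entry for all $n \geq 2$, but at $n = 2$ form (C) is a multiple of $[1,0,-1]$, a Gen-Eq; this is harmless, since $[1,0,-1]$ has the asserted shape $\lambda[1,0,\ldots,0,\pm 1]$, but your case split should acknowledge it. Neither point alters the truth of the lemma, but both must be repaired for the argument to stand.
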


\begin{proof}
Symmetric signatures in $\widehat{\mathscr{M}}$ have the following forms,
$f=[s, t]^{\otimes m} \pm [t, s]^{\otimes m}$,
or $f_k = \lambda (\pm 1)^k(m-2k)$ ($0 \le k \le m$).
A \textsc{Gen-Eq} $f \in \widehat{\mathscr{M}}$ iff
it takes the first form with $st=0$.
Suppose $f$ is not a  \textsc{Gen-Eq}, then we have $st \not =0$
in the first form.
In particular $f$ is not identically zero.
In both forms, $f$
satisfies  a second order recurrence
$f_{k}-cf_{k+1}+f_{k+2}=0~(0\leq k\leq m-2)$,
for some $c$.
For example in the first form with a tensor sum,
the product of the
eigenvalues $s/t \cdot t/s =1$.


For even arity $m=2n$, and
$f=[s, t]^{\otimes 2n} + [t, s]^{\otimes 2n}$,
we have the symmetry $f_{n+k}=f_{n-k}$.
Thus $f_{n-1}=f_{n+1}$ and $cf_n=2f_{n-1}$.
If $f_n=0$, then $f$ is identically zero, a contradiction.
Therefore, we have $c=\frac{2f_{n-1}}{f_n}$.

For $f=[s, t]^{\otimes 2n}-[t, s]^{\otimes 2n}$,
or $f_k= \lambda (\pm 1)^k(2n-2k)$,
we have $f_{n+k}=-f_{n-k}$.
Thus we have $f_{n}=0$ and $cf_{n-1}=f_{n-2}$.
If $f_{n-1}=0$, then $f$ is identically zero, a contradiction.
Therefore, we have $c=\frac{f_{n-2}}{f_{n-1}}$.


Conversely, the second order recurrence
$f_{k}-cf_{k+1}+f_{k+2}=0$ gives
the expression
$f = c_1 [s, t]^{\otimes 2n} + c_2 [t, s]^{\otimes 2n}$,
or  in the double root case
when $c=\pm 2$, we have the form $f_k = \lambda (\pm 1)^k(2n- \mu k)$.
If $f_{n+k}=-f_{n-k}$, then $f_n=0$, the double root case
must be $f_k = \lambda (\pm 1)^k(2n- 2k)$,
and the tensor sum takes the form
$f = [s, t]^{\otimes 2n} - [t, s]^{\otimes 2n}$.
If $f_{n+k}=f_{n-k}$, then we only have the form
$f = [s, t]^{\otimes 2n} + [t, s]^{\otimes 2n}$.

For odd arity, the proof is similar. We omit it here.
\end{proof}

\begin{corollary}\label{arity-hat-M-dagger}
 If $f \in \widehat{\mathscr{M}}^\dagger$ has even arity $2n$,
 then for all $0 \le k \le 2n$,
 \[
  f_k
  =
  f_{2n-k}
  \qquad \text{or} \qquad
  f_k=-f_{2n-k}
 \]
 and the signs strictly alternate.
\end{corollary}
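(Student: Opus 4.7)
The plan is to prove Corollary~\ref{arity-hat-M-dagger} by reducing it directly to Lemma~\ref{arity-hat-M} via the defining transformation of $\widehat{\mathscr{M}}^\dagger$. Recall that $\widehat{\mathscr{M}}^\dagger = \mathscr{T}_2 \widehat{\mathscr{M}}$, where $\mathscr{T}_2$ consists of diagonal matrices $\operatorname{diag}(1,\omega)$ with $\omega^2 = -1$, i.e., $\omega \in \{i,-i\}$. So any $f \in \widehat{\mathscr{M}}^\dagger$ of arity $2n$ can be written as $f = T^{\otimes 2n} g$ for some $g \in \widehat{\mathscr{M}}$ and $T = \operatorname{diag}(1,\omega)$ with $\omega^2 = -1$. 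Componentwise this reads $f_k = \omega^k g_k$ for $0 \le k \le 2n$.

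I would then apply Lemma~\ref{arity-hat-M} to $g$. If $g$ is a $\textsc{Gen-Eq}$, then $g = \lambda[1,0,\ldots,0,\pm 1]$ and all middle entries of $f$ vanish, so the claim $f_k = \pm f_{2n-k}$ holds trivially for each $k$, and there is nothing else to check. Otherwise Lemma~\ref{arity-hat-M} (even arity case) yields one of the two symmetries $g_{n-j} = g_{n+j}$ or $g_{n-j} = -g_{n+j}$ for all $0 \le j \le n$; equivalently $g_k = \varepsilon\, g_{2n-k}$ for a fixed sign $\varepsilon \in \{+1,-1\}$ independent of $k$. Substituting,
\[
 f_{2n-k} \;=\; \omega^{2n-k} g_{2n-k} \;=\; \varepsilon\,\omega^{2n-k} g_k
           \;=\; \varepsilon\,\omega^{2n-2k} f_k \;=\; \varepsilon\,(-1)^{n-k} f_k,
\]
since $\omega^2 = -1$. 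Thus for every $k$ the relation $f_{2n-k} = \varepsilon(-1)^{n-k} f_k$ holds with a sign $\varepsilon(-1)^{n-k}$ that depends only on the parity of $k$, and it strictly alternates as $k$ increments by one.

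There is no real obstacle here; the only thing to be mindful of is handling the $\textsc{Gen-Eq}$ case of Lemma~\ref{arity-hat-M} separately (where the alternation statement is vacuous because the middle entries of $f$ are all zero), and noting that the sign of $\omega$ (whether $\omega = i$ or $\omega = -i$) is irrelevant since $\omega^{2n-2k} = (-1)^{n-k}$ in either case. Together these cover all signatures in $\widehat{\mathscr{M}}^\dagger$ of even arity, establishing the corollary.
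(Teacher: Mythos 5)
Your proof is correct and follows essentially the same route as the paper: pull back to $g = T^{-\otimes 2n} f \in \widehat{\mathscr{M}}$ via the diagonal transformation, invoke Lemma~\ref{arity-hat-M} to get the symmetry $g_k = \varepsilon\, g_{2n-k}$, and translate back using $f_k = \omega^k g_k$ to obtain the sign $\varepsilon\,(-1)^{n-k}$ that alternates in $k$. The paper's proof is just a terser rendering of the same computation (and, as you observe, the $\textsc{Gen-Eq}$ case also fits the symmetry $g_k = \varepsilon\, g_{2n-k}$, so the case split is harmless).
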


\begin{proof}
By definition,  $\widehat{\mathscr{M}}^\dagger =
\left[\begin{smallmatrix} 1 & 0 \\
0&i \end{smallmatrix}\right]^{\otimes 2n} \widehat{\mathscr{M}}$.
By Lemma~\ref{arity-hat-M},
for some $\epsilon = \pm 1$,
we have $i^{n-k} f_{n-k} = \epsilon i^{n+k} f_{n+k}$ for all $k$.
The Corollary follows.
\end{proof}

In the proof of
Pl-\#CSP$^2$ dichotomy,
we often use the following Corollary. It gives a characterization
of a signature of arity 4 in $\widetilde{\mathscr{M}}$.
It follows directly from
 Lemma~\ref{arity-hat-M} and the definition of $\widehat{\mathscr{M}}^\dagger$.

\begin{corollary}\label{mixing-M-arity-4}
An arity 4 signature $f\in\widehat{\mathscr{M}}$ has the following forms:
\begin{itemize}
\item $[u, v, w, v, u]$ and $(u+w)w=2v^2$; or
\item $[u, v, 0, -v, -u]$.
\end{itemize}


An arity 4 signature $f\in\widehat{\mathscr{M}}^\dagger$ has the following forms:
\begin{itemize}
\item $[u, v, w, -v, u]$ and $(u-w)w=2v^2$,
\item $[u, v, 0, v, -u]$.
\end{itemize}
\end{corollary}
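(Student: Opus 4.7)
The plan is to specialize Lemma~\ref{arity-hat-M} to even arity $2n = 4$ for $\widehat{\mathscr{M}}$, and then to transfer that description to $\widehat{\mathscr{M}}^\dagger$ via the diagonal transformation $T = \trans{1}{0}{0}{i}$, using that $\widehat{\mathscr{M}}^\dagger = T^{\otimes 4}\widehat{\mathscr{M}}$ at arity~$4$.

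For $\widehat{\mathscr{M}}$, write $f = [f_0, f_1, f_2, f_3, f_4]$ and invoke Lemma~\ref{arity-hat-M} at $n = 2$. The symmetric branch $f_{2-k} = f_{2+k}$ forces $f_1 = f_3$ and $f_0 = f_4$, so $f = [u, v, w, v, u]$ with $u = f_0$, $v = f_1$, $w = f_2$; when $w \neq 0$, substituting the prescribed $c = 2v/w$ into the second-order recurrence $f_0 - cf_1 + f_2 = 0$ gives $(u+w)w = 2v^2$, and the Gen-Eq signature $[u,0,0,0,u] \in \widehat{\mathscr{M}}$ is included as the degenerate case $v = w = 0$. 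The antisymmetric branch $f_{2-k} = -f_{2+k}$ forces $f_2 = 0$, $f_3 = -f_1$, $f_4 = -f_0$, yielding $[u, v, 0, -v, -u]$, which absorbs the remaining Gen-Eq $[u,0,0,0,-u]$ at $v = 0$. This establishes the $\widehat{\mathscr{M}}$ half.

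For $\widehat{\mathscr{M}}^\dagger$, I will set $g_k = (-i)^k f_k$ so that $g = T^{-\otimes 4} f \in \widehat{\mathscr{M}}$, and apply the list just obtained to $g$. The symmetric branch on $g$ reads $g_0 = g_4$ and $g_1 = g_3$, which in $f$-coordinates becomes $f_0 = f_4$ and $f_3 = -f_1$; its condition $(g_0 + g_2)g_2 = 2g_1^2$, substituted with $g_0 = u$, $g_1 = -iv$, $g_2 = -w$, rewrites as $(u - w)w = 2v^2$, yielding the form $[u, v, w, -v, u]$ with the twisted condition. The antisymmetric branch on $g$ gives $g_2 = 0$, $g_3 = -g_1$, $g_4 = -g_0$, which in $f$-coordinates becomes $f_2 = 0$, $f_3 = f_1$, $f_4 = -f_0$, producing $[u, v, 0, v, -u]$; the two Gen-Eqs $[u,0,0,0,\pm u]$, which are unchanged by $T^{\otimes 4}$, fall into these two families under the same degenerations as before.

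There is no real obstacle: the whole argument is a direct unfolding of Lemma~\ref{arity-hat-M} at $n = 2$. The only point requiring a bit of care is the bookkeeping of the $(-i)^k$ twists when transporting the characterization through $T^{\otimes 4}$, in particular verifying that the sign flip $g_2 = -f_2$ is exactly what converts $(u+w)w = 2v^2$ into the twisted $(u-w)w = 2v^2$, and that the two Gen-Eq degenerate cases remain subsumed by the stated parametric families rather than requiring a separate clause in the statement.
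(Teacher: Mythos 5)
Your proposal is correct and mirrors the paper's own (one-line) proof: specialize Lemma~\ref{arity-hat-M} to $n=2$, substitute the prescribed $c$ into the recurrence to get $(u+w)w=2v^2$, absorb the two Gen-Eq cases as the degenerate $v=w=0$ and $v=0$ instances, and push the result through the diagonal $\trans{1}{0}{0}{i}$ twist to obtain the $\widehat{\mathscr{M}}^\dagger$ half. The sign bookkeeping ($g_2=-f_2$ converting $(u+w)w=2v^2$ into $(u-w)w=2v^2$, and $g_1=g_3$ becoming $f_3=-f_1$) is handled correctly.
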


The following lemma can be proved by domain pairing.
We can use it to derive \#P-hardness of Pl-\#{\rm CSP}$^2$
problems by applying the known dichotomy of Pl-\#{\rm CSP}.

\begin{lemma}\label{domain-pairing-expand}
 Suppose $f = [f_0, f_1, \dotsc, f_{2n}]$ is a symmetric signature of arity $2n$.
 Let $g = [f_0, f_2, \dotsc, f_{2n}]$ be a symmetric signature of arity $n$ consisting of all even indexed entries of $f$.
 Then \[\PlCSP(g) \leq \PlCSP^2(f).\]
\end{lemma}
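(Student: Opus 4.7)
The plan is to use the domain pairing technique: given a planar $\PlCSP(g)$ instance, I will construct a planar $\PlCSP^2(f)$ instance with the same Holant value, so that an oracle for the latter solves the former. Recall that $\PlCSP(g) \equiv_T \plholant{\EQ}{g}$ and $\PlCSP^2(f) \equiv_T \plholant{\EQ_2}{f}$, and each has a bipartite view with equalities on the left and the constraint on the right.

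Let $\Omega$ be an instance of $\plholant{\EQ}{g}$ with underlying planar bipartite graph $G = (U, V, E)$, where each $u \in U$ is labeled by some equality $=_k$ (a variable) and each $v \in V$ is labeled by $g$. Fix a planar embedding of $G$. I will construct $\Omega'$ on a planar bipartite graph $G'$ by doing the following two substitutions, carried out consistently in the planar embedding. First, for every edge $e \in E$, replace $e$ by two parallel edges $e_1, e_2$ drawn in a narrow ribbon alongside the original location of $e$; the two copies of $e$ are placed consecutively in the cyclic order at each of their endpoints, preserving planarity. Second, relabel each variable vertex $u$ that was $=_k$ by $=_{2k}$ (so it now has $2k$ incident edges, matching the doubling), and relabel each constraint vertex $v$ that was $g$ by $f$ (so it now has arity $2n$, again matching the doubling). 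The resulting $G'$ is plainly planar, the left-hand side consists of even-arity equalities in $\EQ_2$, and the right-hand side consists of copies of $f$, so $\Omega'$ is a legitimate instance of $\plholant{\EQ_2}{f}$.

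The key step is to verify $\Holant_{\Omega'} = \Holant_\Omega$. In any nonzero-contributing assignment of $\Omega'$, each $=_{2k}$ forces all $2k$ of its incident edges to a common bit $b \in \{0,1\}$; in particular, for every original edge $e$, the two copies $e_1$ and $e_2$ take the same value. Thus the nonzero-contributing assignments of $\Omega'$ are in one-to-one correspondence with the assignments of $\Omega$: an assignment of $\Omega$ that puts bit $b$ on edge $e$ corresponds to the assignment of $\Omega'$ that puts $b$ on both $e_1$ and $e_2$. Under this correspondence, each equality vertex contributes the value $1$ in both instances. At each constraint vertex $v$, suppose the original $g$ sees $k$ inputs of total Hamming weight $w$; then the corresponding $f$ at $v$ sees $2k$ inputs of Hamming weight $2w$, contributing $f_{2w} = g_w$ by the definition of $g$. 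Multiplying over all vertices shows the Holant values agree term by term, hence $\Holant_{\Omega'} = \Holant_\Omega$.

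I do not anticipate a real obstacle: the only subtlety is the planar bookkeeping in the doubling step, which is handled by drawing the two parallel copies of each edge as an arbitrarily thin ribbon so that the cyclic order of edges at every vertex is preserved (each original edge slot is split into two consecutive slots). With these two observations together, the reduction runs in polynomial time and gives $\PlCSP(g) \leq_T \PlCSP^2(f)$.
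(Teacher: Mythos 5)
Your proof is correct and takes essentially the same approach as the paper: replace each edge by two parallel copies preserving planarity, upgrade each $=_k$ to $=_{2k}$, replace each $g$ by $f$, and observe that the Holant values match term by term since $g_w = f_{2w}$. You simply spell out the planar embedding detail and the Hamming-weight bookkeeping more explicitly than the paper does.
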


\begin{proof}
For any instance of $\PlCSP(g)$,  we replace
each edge $e$ by two edges that connect the same incident nodes of $e$.
For each variable node that is connected to $k$ edges,
 we replace its label $=_k$ by $=_{2k}$.
We replace each occurrence of $g$ by $f$ as a constraint.
Then the new instance is a problem
in $\PlCSP^2(f)$
and has the same value as the given instance of
$\PlCSP(g)$, because $g_k=f_{2k}$.
Note that the
values $f_{2k+1}$ with an odd index
 contribute nothing to the partition function
in this instance.
\end{proof}

The case when $f = [1, i]^{\otimes 4}+a[1, -i]^{\otimes 4}$ poses some
special difficulty, mainly because $\partial(f)$ is identically 0.
The following lemma shows that in this case,
with $a \neq 0$,
we can construct $[1, 0, -1]^{\otimes 2}$ in the LHS
in a Pl-Holant problem with $f$ on the RHS.
Its utility is that after a holographic transformation by
$\left[\begin{smallmatrix} 1 & 0 \\
0 & i \end{smallmatrix}\right]$
%
or by
$\left[\begin{smallmatrix}
1&1\\
i&-i
\end{smallmatrix}\right]
=
\left[\begin{smallmatrix} 1 & 0 \\
0 & i \end{smallmatrix}\right]
\left[\begin{smallmatrix} 1 & 1 \\
1 & -1 \end{smallmatrix}\right]$
we have $[1, 0, 1]^{\otimes 2}$ on the LHS.

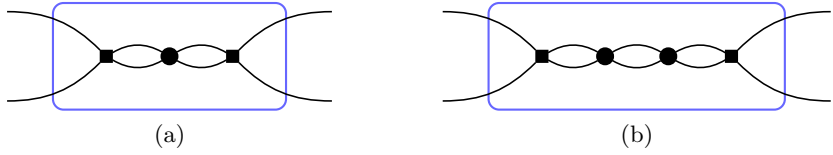
\begin{figure}[t]
 \centering
 \subfloat[]{
  \begin{tikzpicture}[scale=\scale,transform shape,node distance=\nodeDist,semithick]
   \node[square]    (0)                    {};
   \node[external]  (1) [above left  of=0] {};
   \node[external]  (2) [below left  of=0] {};
   \node[external]  (3) [left        of=1] {};
   \node[external]  (4) [left        of=2] {};
   \node[internal]  (5) [right       of=0] {};
   \node[square]    (6) [right       of=5] {};
   \node[external]  (7) [above right of=6] {};
   \node[external]  (8) [below right of=6] {};
   \node[external]  (9) [right       of=7] {};
   \node[external] (10) [right       of=8] {};
   \path (0) edge[out= 135, in=   0]  (3)
             edge[out=-135, in=   0]  (4)
             edge[bend left]          (5)
             edge[bend right]         (5)
         (5) edge[bend left]          (6)
             edge[bend right]         (6)
         (6) edge[out=  45, in= 180]  (9)
             edge[out= -45, in= 180] (10);
   \begin{pgfonlayer}{background}
    \node[draw=\borderColor,thick,rounded corners,fit = (1) (2) (7) (8),inner sep=0pt] {};
   \end{pgfonlayer}
  \end{tikzpicture}
  \label{subfig:1-1}
 }
 \qquad
 \subfloat[]{
  \begin{tikzpicture}[scale=\scale,transform shape,node distance=\nodeDist,semithick]
   \node[square]    (0)                    {};
   \node[external]  (1) [above left  of=0] {};
   \node[external]  (2) [below left  of=0] {};
   \node[external]  (3) [left        of=1] {};
   \node[external]  (4) [left        of=2] {};
   \node[internal]  (5) [right       of=0] {};
   \node[internal]  (6) [right       of=5] {};
   \node[square]    (7) [right       of=6] {};
   \node[external]  (8) [above right of=7] {};
   \node[external]  (9) [below right of=7] {};
   \node[external] (10) [right       of=8] {};
   \node[external] (11) [right       of=9] {};
   \path (0) edge[out= 135, in=   0]  (3)
             edge[out=-135, in=   0]  (4)
             edge[bend left]          (5)
             edge[bend right]         (5)
         (5) edge[bend left]          (6)
             edge[bend right]         (6)
         (6) edge[bend left]          (7)
             edge[bend right]         (7)
         (7) edge[out=  45, in= 180] (10)
             edge[out= -45, in= 180] (11);
   \begin{pgfonlayer}{background}
    \node[draw=\borderColor,thick,rounded corners,fit = (1) (2) (8) (9),inner sep=0pt] {};
   \end{pgfonlayer}
  \end{tikzpicture}
  \label{subfig:1-2}
 }
 \caption{Two gadgets used to obtain $[1,0,-1]^{\otimes 2}$.
 The circle vertices are assigned $f$ and the square vertices are assign $=_4$.}
 \label{fig:1}
\end{figure}

\begin{lemma} \label{construct-[1,0,1]-by-(1,i)-(1,-i)}
 Let $\mathcal{F}$ be a set of signatures containing  $f = [1, i]^{\otimes 4} + a[1, -i]^{\otimes 4}$.
 Then
 \[
  \PlHolant([1, 0, -1]^{\otimes 2} \cup \mathcal{EQ}_2 \mid \mathcal{F})
\equiv
  \PlCSP^2(\mathcal{F}).
 \]
\end{lemma}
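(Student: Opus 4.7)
The reduction $\PlCSP^2(\mathcal{F})\leq \PlHolant([1,0,-1]^{\otimes 2}\cup\mathcal{EQ}_2 \mid \mathcal{F})$ is immediate from $\PlCSP^2(\mathcal{F}) \equiv \PlHolant(\mathcal{EQ}_2\mid\mathcal{F})$, since enlarging the LHS signature set cannot decrease computational power. The interesting direction is the reverse reduction, for which it suffices to realize the arity-$4$ signature $[1,0,-1]^{\otimes 2}$ (up to a nonzero scalar) as a planar $\{=_4\}\cup\{f\}$-gate, because $=_4\in\mathcal{EQ}_2$.

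I will show that the two gadgets in Figure~\ref{fig:1} together suffice for every value of $a$. In each gadget the two squares are assigned $=_4$, which forces the two dangling edges on each side to agree with the two internal edges to the adjacent copy of $f$; call the common left value $x$ and the common right value $y$. Since $f(x_1,x_2,y_1,y_2)=i^{x_1+x_2+y_1+y_2}+a(-i)^{x_1+x_2+y_1+y_2}$, substituting $x_1=x_2=x$ and using $i^{2x}=(-1)^x=(-i)^{2x}$ yields the decomposition $f(x,x,a_1,a_2)=(-1)^x F(a_1+a_2)$ where $F(k)=i^k+a(-i)^k$. Gadget~\ref{subfig:1-1} has just one copy of $f$ between the two squares whose internal edges are pinned on both sides, producing the signature
\[
g_a(x_1,x_2,y_1,y_2) = [x_1{=}x_2][y_1{=}y_2]\,(1+a)(-1)^{x_1+y_1} \;=\; (1+a)\,[1,0,-1]^{\otimes 2}.
\]
Hence this gadget realizes $[1,0,-1]^{\otimes 2}$ up to the scalar $1+a$, which is nonzero except when $a=-1$.

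For the exceptional case $a=-1$ I would use gadget~\ref{subfig:1-2}, which connects two copies of $f$ in series by a pair of parallel edges. Summing over the two middle variables $a_1,a_2$ and using the factorization above, the gadget signature equals
\[
g_b(x_1,x_2,y_1,y_2) = [x_1{=}x_2][y_1{=}y_2]\,(-1)^{x_1+y_1}\bigl(F(0)^2+2F(1)^2+F(2)^2\bigr).
\]
A direct evaluation gives $F(0)^2+2F(1)^2+F(2)^2=(1+a)^2-2(1-a)^2+(1+a)^2=2[(1+a)^2-(1-a)^2]=8a$, so $g_b=8a\,[1,0,-1]^{\otimes 2}$. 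This is a nonzero multiple whenever $a\ne 0$, and in particular when $a=-1$. Since $a=-1$ and $a=0$ cannot hold simultaneously, at least one of the two gadgets always produces $[1,0,-1]^{\otimes 2}$ up to a nonzero scalar, completing the reduction.

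There is no serious obstacle here; the proof is a direct gadget computation. The only point of care is the case analysis: gadget~\ref{subfig:1-1} collapses exactly when $1+a=0$, which is precisely why gadget~\ref{subfig:1-2} is needed. One verifies the outputs are correct by the symmetry of $f$ under $(x_1,x_2)\mapsto(x_2,x_1)$, which together with the $=_4$ constraint forces the output to live on the support of $[1,0,-1]^{\otimes 2}$; the actual values then reduce to the two scalar evaluations $1+a$ and $8a$ given above.
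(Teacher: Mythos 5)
Your proof is correct and uses exactly the paper's two gadgets (Figure~\ref{fig:1}), with the same case split ($a\ne -1$ vs.\ $a=-1$). Your explicit factorization $f(x,x,a_1,a_2)=(-1)^x F(a_1+a_2)$ and the computation $F(0)^2+2F(1)^2+F(2)^2=8a$ just supplies the arithmetic the paper leaves implicit; the extra observation that the second gadget works for all $a\ne 0$ (so the two gadgets cover all $a$ with slack) is a nice but inessential remark.
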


\begin{proof}
 Suppose $a \neq -1$ and consider the gadget in Figure~\ref{subfig:1-1}.
 We assign $f$ to the circle vertex and $=_4$ to the square vertices.
 This gives $(1+a) [1, 0, -1]^{\otimes 2}$ on the left as desired.

 Otherwise $a = -1$.
 Consider the gadget in Figure~\ref{subfig:1-2}.
 We assign $f$ to the circle vertices and $=_4$ to the square vertices.
 This gives $-8 [1, 0, -1]^{\otimes 2}$ on the left as desired.
\end{proof}

Coming up next are a couple of complexity dichotomy theorems 
that were previously shown.
They are  also quoted in Section \ref{sec:preliminaries} of Part~I.
Here we restate them for easier reference.
The first is a dichotomy theorem about counting complex weighted graph homomorphisms over degree prescribed graphs.
It includes $\PlCSP^2(f)$,
where $f$ is a symmetric \emph{binary} signature,
as a special case.
This is also quoted as Theorem~\ref{thm:k-reg_homomorphism} in Part~I.

\begin{theorem}[Theorem~3 in~\cite{CK12}]  \label{thm:parII:k-reg_homomorphism}
 Let $S \subseteq \mathbb{Z}^+$ contain $k \ge 3$,
 let $\mathcal{G} = \{=_k \st k \in S\}$,
 and let $d = \gcd(S)$.
 Further suppose that $f_0, f_1, f_2 \in \mathbb{C}$.
 Then $\plholant{[f_0, f_1, f_2]}{\mathcal{G}}$ is $\SHARPP$-hard unless
 one of the following conditions holds:
 \begin{enumerate}
  \item $f_0 f_2 = f_1^2$;
  \item $f_0 = f_2 = 0$;
  \item $f_1 = 0$;
  \item $f_0 f_2 = -f_1^2$ and $f_0^d = -f_2^d\neq0$;
  \item $f_0^d = f_2^d\neq0$.
 \end{enumerate}
 In any exceptional case,
 the problem is computable in polynomial time.
\end{theorem}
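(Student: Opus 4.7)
\medskip

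\noindent\textbf{Proof proposal for Theorem~\ref{thm:parII:k-reg_homomorphism}.}

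The plan is to view the problem as a degree-prescribed weighted graph homomorphism: the binary signature $[f_0,f_1,f_2]$ encodes the symmetric $2\times 2$ matrix $A=\left[\begin{smallmatrix} f_0 & f_1 \\ f_1 & f_2 \end{smallmatrix}\right]$ on the right, and the equality signatures $\{=_k : k\in S\}$ on the left enforce that all edges incident to each vertex receive the same $\{0,1\}$-value. So $\plholant{[f_0,f_1,f_2]}{\mathcal{G}}$ computes, on graphs whose vertex degrees all lie in $S$, the sum $\sum_{\sigma} \prod_{e=(u,v)} A[\sigma(u),\sigma(v)]$. First, I would dispose of the five tractable cases explicitly: in case (1) $A$ has rank $\leq 1$ so $A=uu^{\mathsf T}$ and the partition function factors over vertices; in case (2) $A=\left[\begin{smallmatrix} 0 & f_1 \\ f_1 & 0 \end{smallmatrix}\right]$ gives a weighted proper $2$-coloring count which is easy by connected-component/bipartiteness analysis; case (3) is a diagonal $A$, where the partition function reduces to a sum over independent vertex choices; cases (4) and (5) correspond, after a diagonal holographic transformation $T=\mathrm{diag}(1,\lambda)$ with $\lambda$ a suitable $2d$-th (resp.\ $d$-th) root of unity, to signatures in $\mathscr{A}$ or to a pinned-affine form tractable by Gaussian elimination mod $2$ (an explicit $\#\mathsf{CSP}$-affine algorithm).

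Next, for the hardness direction, I would assume none of conditions (1)--(5) holds and derive \#P-hardness by reduction from the known (non-planar, but sufficient here) \#CSP dichotomy of Bulatov--Dyer--Goldberg--Jerrum--Richerby or Cai--Chen--Lu for symmetric signatures. The key structural step is to use the equality of arity $r\geq 3$ together with parallel edges and self-loops (via $A$ and via $=_r$) to build a rich enough supply of gadget signatures. In particular, using the equality signatures of all arities in $S$ together with $A$ on both sides of the bipartition, I can realize $=_{\ell d}$ for all positive integers $\ell$ as an $\mathcal{EQ}$-block through Bézout-style constructions; this gives access to the full $\#\mathsf{CSP}^d$ setting with right-hand signature $A$.

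Then the task reduces to distinguishing, within $\#\mathsf{CSP}^d(A)$, the tractable from the intractable. The cases $f_1=0$, $f_0=f_2=0$, and $f_0f_2=f_1^2$ are precisely when $A$ is already in a tractable class ($\mathscr{P}$ or degenerate). Otherwise, $A$ has two distinct nonzero eigenvalues $\lambda_1,\lambda_2$, and by interpolation (stratification on support assignments to an appropriate diagonalized gadget, exactly as in the Vandermonde argument used in Lemma~\ref{lem:dichotomy:M2}), one obtains the binary equality $=_2$ on both sides provided $(\lambda_1/\lambda_2)^{\ell}\neq 1$ for the relevant orders $\ell$ divisible by $d$. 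The excluded Vandermonde-degenerate cases are precisely $(\lambda_1/\lambda_2)^d=\pm 1$, which, chasing through the eigenvalues of $A$, yield exactly the conditions $f_0^d=f_2^d$ and $f_0^d=-f_2^d$ with $f_0f_2=-f_1^2$ in cases (4) and (5). Once $=_2$ is realized on both sides, planarity is maintained and the problem becomes $\mathsf{Pl}\text{-}\mathsf{Holant}(A)$ on general planar graphs with a single binary $A$, which is handled by the arity-$2$ case of Theorem~\ref{thm:PlCSP} (or directly), giving \#P-hardness outside of the affine/product-type locus.

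The main obstacle will be the interpolation step: one must verify that the $d$-th power conditions in (4) and (5) are the only algebraic obstructions to creating a non-degenerate Vandermonde system, and that in the borderline cases the problem genuinely becomes tractable (not just an interpolation failure). This amounts to showing that exactly when $(\lambda_1/\lambda_2)^d\in\{1,-1\}$, a diagonal holographic transformation by $\mathrm{diag}(1,\zeta)$ for an appropriate $2d$-th root of unity $\zeta$ maps $A$ into a tractable class compatible with $\mathcal{EQ}_d$, so that the putative hard instance collapses. The rest of the proof --- tractability algorithms and the reductions in non-borderline cases --- follows standard $\#\mathsf{CSP}$ dichotomy machinery.
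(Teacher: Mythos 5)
The statement is quoted verbatim from~\cite{CK12} and the present paper gives no proof of it, so there is no in-paper argument to compare against; your proposal must stand on its own as an attempted reconstruction of the Cai--Kowalczyk result.

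Your hardness direction has a genuine gap, in two related places. First, the interpolation you sketch requires chaining copies of $A$ end-to-end and taking powers of the $2\times 2$ edge-weight matrix, but in $\plholant{[f_0,f_1,f_2]}{\mathcal{G}}$ with $\min(S)\ge 3$ you have no arity-$2$ signature on either side of the bipartition, so you cannot connect two copies of $A$ by a wire; producing a first binary wire gadget is itself the hard part of the proof, and you have not shown how to do it. Second, even granting that interpolation could be carried out, the degeneracy criterion you propose --- $\lambda_1/\lambda_2$ being a root of unity for the eigenvalues of $A$ --- is simply not the theorem's tractability criterion. Conditions~(4) and~(5) involve $d$-th powers of the \emph{diagonal} entries $f_0,f_2$, not the spectrum of $A$. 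Concretely, take $S=\{3\}$ (so $d=3$) and $[f_0,f_1,f_2]=[1,1,\omega]$ with $\omega=e^{2\pi i/3}$: this satisfies condition~(5), hence is tractable, yet the eigenvalues of $\left[\begin{smallmatrix}1&1\\1&\omega\end{smallmatrix}\right]$ have a generic (non-root-of-unity) ratio, so your reduction would interpolate $=_2$, land in $\PlCSP([1,1,\omega])$, and via Theorem~\ref{thm:PlCSP} incorrectly declare the problem $\#$P-hard. The very existence of tractable cases (4) and (5) with $d\ge 3$, strictly more permissive than the $\PlCSP$ or $\PlCSP^2$ tractable loci, shows that $=_2$ \emph{cannot} be manufactured in those regimes; the correct obstruction is the $d$-th power condition on $f_0/f_2$ (equivalently, membership of $T^{\otimes 2}f$ in $\mathscr{A}$ or $\widehat{\mathscr{M}}$ for a diagonal $T$ compatible with $\mathcal{EQ}$ of the arities in $S$), and a self-contained proof must establish hardness by a route that is sensitive to that condition rather than to the eigenvalues of $A$.
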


Theorem~\ref{thm:parII:k-reg_homomorphism} is the original statement as in~\cite{CK12}.
It is explicit and easy to apply.
Conceptually, it can be restated as Theorem~\ref{thm:parII:k-reg_homomorphism}$'$,
which supports the putative form of the $\PlCSP^d$ dichotomy.


\begin{specialtheorem}[Theorem~3 in~\cite{CK12}]
 Let $S \subseteq \mathbb{Z}^+$ contain $k \ge 3$,
 let $\mathcal{G} = \{=_k \st k \in S\}$,
 and let $d = \gcd(S)$.
 Further suppose that $f$ is a non-degenerate, symmetric, complex-valued binary signature in Boolean variables.
 Then $\plholant{f}{\mathcal{G}}$ is \numP-hard unless
 $f$ satisfies one of the following conditions,
 in which case,
 the problem is computable in polynomial time:
 \begin{enumerate}
  \item there exists $T \in \mathcal{T}_{4d}$ such that $T^{\otimes 2} f \in \mathscr{A}$;
  \item                                                 $f               \in \mathscr{P}$;
  \item there exists $T \in \mathcal{T}_{2d}$ such that $T^{\otimes 2} f \in \widehat{\mathscr{M}}$.
 \end{enumerate}
\end{specialtheorem}

The following theorem is the dichotomy theorem of $\PlCSP(\mathcal{F})$,
where $\mathcal{F}$ is a set of symmetric signatures.
This is also quoted as Theorem \ref{thm:PlCSP} in Part~I.

\begin{theorem}[Theorem~19 in~\cite{GW13}] \label{pl-dicho-1}
 Let $\mathcal{F}$ be any set of symmetric, complex-valued signatures in Boolean variables.
 Then $\PlCSP(\mathcal{F})$ is $\numP$-hard unless
 $\mathcal{F} \subseteq \mathscr{A}$,
 $\mathcal{F} \subseteq \mathscr{P}$, or
 $\mathcal{F} \subseteq \widehat{\mathscr{M}}$,
 in which case the problem is computable in polynomial time.
\end{theorem}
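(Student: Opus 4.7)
The plan is to prove both directions separately. For tractability, each of the three classes admits a known polynomial-time algorithm: $\mathscr{A}$-tractability follows from Gaussian elimination over $\mathbb{F}_2$ combined with quadratic Gauss-sum evaluation, since under the Equality constraints the problem reduces to summing $i^{Q(x)}\chi_{Ax=0}$ over $x \in \mathbb{F}_2^n$; $\mathscr{P}$-tractability follows because product-type signatures decompose the computation into independent binary equality/disequality constraints on connected components of a derived graph; and $\widehat{\mathscr{M}}$-tractability is where planarity is exploited: under the Hadamard transformation $H$, the Equality signatures $(=_k)$ become matchgate signatures (since $(=_k) H^{\otimes k}$ lies in $\mathscr{M}$ up to a factor), so the $\PlCSP$ problem reduces to a planar Holant over matchgates, which is evaluated by the FKT algorithm.

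For \#P-hardness, the strategy is induction on the maximum arity in $\mathcal{F}$, combined with the non-planar \#CSP dichotomy of Cai-Lu-Xia. First I would prove a single-signature dichotomy for $\PlCSP(f)$: if $f \notin \mathscr{A} \cup \mathscr{P} \cup \widehat{\mathscr{M}}$, then $\PlCSP(f)$ is \#P-hard. The base case is arity $2$, handled directly by Theorem~\ref{thm:parII:k-reg_homomorphism} with $d = 1$ (Equalities of all arities are available in $\PlCSP$, so $\gcd(S) = 1$), which exactly matches the three tractable classes. For the inductive step, given a non-degenerate symmetric $f$ of arity $n \geq 3$, form $\partial(f)$ by a self-loop through $(=_2)$; this gives a signature of arity $n-2$ in a $\PlCSP(\mathcal{F})$ instance. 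If $\partial(f)$ falls outside the three tractable classes we invoke induction; otherwise one uses the $\int(\cdot)$ machinery of Proposition~\ref{prop:explicit_list} together with Lemma~\ref{general-f-construction} to constrain $f$ up to a term $x[1,i]^{\otimes n} + y[1,-i]^{\otimes n}$, and then distinguish cases by which tractable class $\partial(f)$ lies in.

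The key step — and the main obstacle — is the No-Mixing argument showing that if $\mathcal{F}$ is contained in the union $\mathscr{A} \cup \mathscr{P} \cup \widehat{\mathscr{M}}$ but not in any single one of them, then $\PlCSP(\mathcal{F})$ is still \#P-hard. Here the non-planar \#CSP dichotomy already implies hardness whenever $\mathcal{F} \not\subseteq \mathscr{A} \cup \mathscr{P}$, so the genuinely new content is ruling out ``planar-only'' escapes into $\widehat{\mathscr{M}}$. The plan is to pick witnesses $f \in \mathcal{F} \setminus \widehat{\mathscr{M}}$ and $g \in \mathcal{F} \setminus (\mathscr{A} \cup \mathscr{P})$ and combine them via planar gadgetry (repeated derivatives, tensor products, and holographic transformations by diagonal and Hadamard matrices) to either produce an arity-2 signature that violates all five exceptional cases of Theorem~\ref{thm:parII:k-reg_homomorphism} with $d=1$, or to construct a signature of arity $\leq 4$ whose redundant signature matrix is nonsingular, yielding hardness via a matrix-rank criterion analogous to Lemma~\ref{lem:arity4:nonsingular_compressed_hard} specialized to the $\PlCSP$ setting.

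Finally, I would assemble these pieces: Lemma~\ref{binary}, Lemma~\ref{second-recurrence-relation}, and Corollaries~\ref{A-2-A-minus-P} and~\ref{M-2-M-A-P} provide the structural characterizations needed to verify that signatures crossing class boundaries produce the required ``bad'' witnesses, and the inductive single-signature dichotomy reduces all higher-arity hardness to the arity-2 base case handled by Theorem~\ref{thm:parII:k-reg_homomorphism}. The hard part will be the No-Mixing step involving $\widehat{\mathscr{M}}$, since planarity prevents the standard crossover gadgets used in the non-planar dichotomy; one must instead rely on holographic reductions together with the rigidity of matchgate signatures (parity condition plus geometric progression of nonzero entries from Proposition~\ref{prop:match:symmetric}) to derive a contradiction whenever mixing occurs.
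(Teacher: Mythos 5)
The premise of the comparison is off: this paper does not prove Theorem~\ref{pl-dicho-1} at all---it is quoted verbatim from Guo--Williams~\cite{GW13} (their Theorem~19) and used as a black box in the $\PlCSP^2$ argument of Part~II. There is therefore no ``paper's own proof'' to compare against; one can only judge whether your outline is a viable route to the result in its own right, with the published~\cite{GW13} proof in mind.

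At a high level your architecture (single-signature dichotomy by induction on arity with base case at arity~$2$ via Theorem~\ref{thm:parII:k-reg_homomorphism}, followed by a No-Mixing argument across the three tractable classes, using domain pairing, redundant-matrix rank criteria, holographic transformations, and the rigidity of matchgate signatures) is the right shape and the right toolbox, and the tractability direction you sketch is correct. But there are two concrete gaps in the hardness side. First, the assertion that ``the non-planar \#CSP dichotomy already implies hardness whenever $\mathcal{F}\not\subseteq\mathscr{A}\cup\mathscr{P}$'' is not usable: $\PlCSP(\mathcal{F})$ is a \emph{restriction} of $\CSP(\mathcal{F})$, so hardness of the general problem tells you nothing about the planar one. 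Every hardness gadget in the Cai--Lu--Xia proof must be re-examined for planarity, and precisely the gadgets that fail to be planar (crossover constructions) are what create the $\widehat{\mathscr{M}}$ exception in the first place; the $\mathscr{A}$-versus-$\mathscr{P}$ mixing case cannot be offloaded to the non-planar theorem and must be reproved with planar gadgets. Second, your No-Mixing witness selection---take $f\in\mathcal{F}\setminus\widehat{\mathscr{M}}$ and $g\in\mathcal{F}\setminus(\mathscr{A}\cup\mathscr{P})$---has no candidate for $g$ when $\mathcal{F}\subseteq\mathscr{A}\cup\mathscr{P}$ yet $\mathcal{F}\not\subseteq\mathscr{A}$, $\mathcal{F}\not\subseteq\mathscr{P}$, $\mathcal{F}\not\subseteq\widehat{\mathscr{M}}$; you need a witness pair for \emph{each} pair of classes that can simultaneously fail to contain $\mathcal{F}$, not only the $\widehat{\mathscr{M}}$-vs-rest split. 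A further caveat: the $\partial/\int$ calculus of Proposition~\ref{prop:explicit_list} and Lemma~\ref{general-f-construction} was built in Part~II of this paper specifically for $\PlCSP^2$, where unary signatures cannot be constructed and parity constraints dominate; in $\PlCSP$ all arities of Equality (including $=_1$) are freely available, the base case has $\gcd(S)=1$ rather than~$2$, and there is no parity obstruction, so the induction engine has to be re-derived rather than borrowed wholesale.
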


We repeat the definition of redundant matrices in Section \ref{sec:redundant}.

\begin{definition}[Definition~6.1 in~\cite{CGW13}]
 A 4-by-4 matrix is \emph{redundant} if its middle two rows and middle two columns are the same.
\end{definition}

An example of a redundant matrix is the signature matrix of a symmetric arity~4 signature.

\begin{definition}[Definition~6.2 in~\cite{CGW13}]
 The \emph{signature matrix} of a symmetric arity~4 signature $f = [f_0, f_1, f_2, f_3, f_4]$ is
 \begin{align*}
  M_f =
  \begin{bmatrix}
   f_0 & f_1 & f_1 & f_2\\
   f_1 & f_2 & f_2 & f_3\\
   f_1 & f_2 & f_2 & f_3\\
   f_2 & f_3 & f_3 & f_4
  \end{bmatrix}.
 \end{align*}
 This definition extends to an asymmetric signature $g$ as
 \begin{align*}
  M_g =
  \begin{bmatrix}
   g^{0000} & g^{0010} & g^{0001} & g^{0011}\\
   g^{0100} & g^{0110} & g^{0101} & g^{0111}\\
   g^{1000} & g^{1010} & g^{1001} & g^{1011}\\
   g^{1100} & g^{1110} & g^{1101} & g^{1111}
  \end{bmatrix}.
 \end{align*}
 When we present $g$ as an $\mathcal{F}$-gate, we order the four external edges ABCD counterclockwise.
 In $M_g$,
 the row index bits are ordered AB and the column index bits are ordered DC,
 in reverse order.
 This is for convenience so that the signature matrix of the linking of two arity~4 $\mathcal{F}$-gates
 is the matrix product of the signature matrices of the two $\mathcal{F}$-gates.

 If $M_g$ is redundant, we also define the \emph{compressed signature matrix} of $g$ as
 \[
  \widetilde{M_g}
  =
  \begin{bmatrix}
   1 & 0 & 0 & 0\\
   0 & \frac{1}{2} & \frac{1}{2} & 0\\
   0 & 0 & 0 & 1
  \end{bmatrix}
  M_g
  \begin{bmatrix}
   1 & 0 & 0\\
   0 & \frac{1}{2} & 0\\
   0 & \frac{1}{2} & 0\\
   0 & 0 & 1
  \end{bmatrix}.
 \]
\end{definition}

The definition of \emph{compressed signature matrix}
is a slight change from~\cite{GW13} where $\widetilde{M_g} 
\left[
 \begin{smallmatrix}
 1 & 0 & 0\\
 0 & 2 & 0\\
 0 & 0 & 1
 \end{smallmatrix}
 \right]$ is called by that name.
It does not affect the following lemma.
We repeat the following lemma from \cite{GW13},
which is very convenient to apply.

\begin{lemma} [Corollary 3.8 in \cite{GW13}] \label{4-redundant}
 Let $f$ be an arity $4$ signature with complex weights.
 If $M_f$ is redundant and $\widetilde{M_f}$ is nonsingular,
 then $\PlHolant(f)$ is $\numP$-hard.
\end{lemma}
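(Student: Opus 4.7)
My plan is to prove this via a domain-pairing construction that reduces $\PlHolant(f)$ to a planar weighted graph homomorphism problem on a ternary target, where the weight matrix is (essentially) $\widetilde{M_f}$, and then exploit the nonsingularity of $\widetilde{M_f}$ to rule out all tractable cases.

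First, I would carefully unpack the meaning of redundancy. The condition that the middle two rows and middle two columns of $M_f$ coincide says that if we group the four external edges of an $f$-gate into two consecutive pairs $(A,B)$ and $(C,D)$ following the counterclockwise ordering, then $f$ is invariant under swapping the two bits within either pair whenever that pair has Hamming weight one. Consequently the value of $f$ on $\{0,1\}^4$ depends only on the Hamming weights $\xi_1 = x_A + x_B$ and $\xi_2 = x_C + x_D$, each taking values in $\{0,1,2\}$. The $3 \times 3$ matrix describing this effective binary interaction on the ternary domain is precisely $\widetilde{M_f}$, where the $\tfrac12$ factors in its definition exactly account for the fact that the ternary value $1$ can arise from two Boolean configurations.

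Next, I would translate a signature grid $\Omega$ for $\PlHolant(f)$ into an equivalent planar graph $\Omega'$ in which each vertex of $\Omega$ is replaced by a "paired" gadget whose two pairs of external half-edges become two ternary endpoints, and each original Boolean edge of $\Omega$ becomes an edge of $\Omega'$ labeled by the weight $\widetilde{M_f}$. Planarity of this translation follows from the fact that consecutive pairing respects the cyclic order inherited from the planar embedding, so $\Omega'$ is a planar graph whose partition function (on the ternary domain) equals a fixed multiple of $\Holant(\Omega; f)$. This is a planar edge-coloring/graph-homomorphism problem with target weight matrix $\widetilde{M_f}$.

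Finally, I would derive \#P-hardness. The clean route is to use nonsingularity of $\widetilde{M_f}$ to run an interpolation: chaining $k$ copies of $f$ along the paired edges realizes a gadget whose ternary signature matrix is $\widetilde{M_f}^k$, and because $\widetilde{M_f}$ is invertible its eigenvalues are all nonzero; together with a Lagrange--Vandermonde argument this lets me linearly interpolate an arbitrary ternary binary signature from the span of $\{\widetilde{M_f}^k\}$. It then suffices to reduce from any known \#P-hard planar ternary weighted problem, for example a suitable instance of counting graph homomorphisms over a planar graph or an Eulerian orientation problem, as in the hardness gadgets already available from \cite{CK12} and Corollary~\ref{cor:prelim:nonsingular_compressed_hard_trans}-style arguments. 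The main obstacle will be the spectral degenerate subcases: when the eigenvalues of $\widetilde{M_f}$ have equal modulus or when their pairwise ratios are roots of unity, the Vandermonde interpolation is singular, and for those cases I would need a separate argument that either perturbs the spectrum through an auxiliary planar gadget, or builds the hard target signature directly from the algebraic structure of $\widetilde{M_f}$ while preserving planar routing of all connecting edges.
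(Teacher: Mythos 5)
The paper does not prove this lemma---it is cited verbatim as Corollary~3.8 of \cite{GW13}---so there is no in-paper argument to compare against; here I assess your sketch on its own terms. The opening observation is right (a redundant $M_f$ lets one regard $f$ as an effective ternary interaction, and chaining $k$ copies realizes the compressed power $\widetilde{M_f}^k$), but the middle and final steps both have genuine gaps. The translation of a $\PlHolant(f)$ instance $\Omega$ into a planar ternary graph-homomorphism instance $\Omega'$ is not well defined: the redundancy pairing bundles the two \emph{consecutive} half-edges $(A,B)$ and $(D,C)$ at each vertex of $\Omega$, but the two half-edges in a pair go to two \emph{different} neighboring vertices, whose own pairings need not match up, so the local pairings do not glue into global ternary edges. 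Since your final step reduces from a ternary hard problem and depends on this ternary reformulation, this failure is load-bearing rather than cosmetic.

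The interpolation claim is also too strong, and the degenerate spectral cases are the crux of the statement, not a corner to patch. By Cayley--Hamilton, $\operatorname{span}\{\widetilde{M_f}^{k} : k \geq 0\}$ has dimension at most $3$ inside the $9$-dimensional space of $3\times 3$ matrices, so no Vandermonde argument can deliver ``an arbitrary ternary binary signature''; the chain gadget only realizes compressed matrices that are polynomials in $\widetilde{M_f}$, and you still have to exhibit a concrete $\numP$-hard arity-$4$ target inside that commutant. Worse, when $\widetilde{M_f}$ is a nonzero scalar multiple of the identity---which is nonsingular and therefore within the lemma's scope, e.g.\ $M_f$ with $1$'s in the outer diagonal entries and $\tfrac{1}{2}$'s in the middle $2\times 2$ block---$M_f$ is idempotent, $M_f^k = M_f$ for all $k \geq 1$, chaining produces nothing beyond scalar multiples of $f$ itself, and the interpolation engine is completely inert. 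The ``separate argument'' you defer must therefore be a genuinely different reduction for a nontrivial family of signatures; without it, the proposal does not establish the lemma.
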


\section{Reduction from \texorpdfstring{$\PlCSP$}{Pl-\#CSP} to \texorpdfstring{$\PlCSP^2$}{Pl-\#CSP2}}\label{partII-sec2}

\begin{definition}
For $k \ge 1$, $\ell \ge 0$ and any $\omega$,
we define $E^{\ell}_k(\omega) = [1, 0, \ldots, 0, \omega^{\ell}]$
to be a signature of arity $k$,
and define $E(\omega)=\{E^{\ell}_k(\omega) \mid k\equiv \ell\pmod 2\}$.
We also write $E^{\ell}_k$ for $E^{\ell}_k(\omega)$ when
$\omega$ is clear from the context.
\end{definition}

The following lemma shows that if we have a unary $[1, \omega]\in\mathcal{F}$ with $\omega\neq 0$,
then either $\mathcal{F}$ is contained in one single tractable set or
$\PlCSP^2(\mathcal{F})$
is  $\#${\rm P-}hard.
We will use this lemma for the case that  $\mathcal{F}$ contains
 at least one nonzero signature of odd arity.
The proof of this lemma also
 demonstrates in a simple setting the idea that will be used in the proof
of
Lemma~\ref{mixing-P-global}.

\begin{lemma}\label{[1,a]XXX}
Let $\omega\neq 0$ and
let $\mathcal{F}$ be a set of symmetric signatures containing
 $[1, \omega]\in\mathcal{F}$.
If $\mathcal{F}\nsubseteq\mathscr{P}$,
$\mathcal{F}\nsubseteq\mathscr{A}$,
$\mathcal{F}\nsubseteq\mathscr{A}^{\dagger}$,
 $\mathcal{F}\nsubseteq\widehat{\mathscr{M}}$,
and
  $\mathcal{F}\nsubseteq\widehat{\mathscr{M}}^{\dagger}$,
then {\rm Pl}-$\#${\rm CSP}$^2(\mathcal{F})$ is  $\#${\rm P-}hard.
\end{lemma}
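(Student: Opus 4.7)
The plan is to reduce $\PlCSP(\mathcal{G})$ to $\PlCSP^2(\mathcal{F})$ for a suitable $\mathcal{G}$ obtained from $\mathcal{F}$ by a diagonal holographic transformation, then invoke the known planar $\#$CSP dichotomy (Theorem~\ref{pl-dicho-1}). The key ingredient is that the unary $[1,\omega]$ together with the even equalities $\mathcal{EQ}_2$ suffice to simulate the odd-arity equalities that are missing from the $\PlCSP^2$ framework.

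First, I would establish the basic construction: attaching $\ell$ copies of $[1,\omega]$ to $\ell$ wires of $=_{k+\ell} \in \mathcal{EQ}_2$ (planar, since all attachments are to the same equality) realizes $E^\ell_k(\omega) = [1,0,\ldots,0,\omega^\ell]$ for every $k \equiv \ell \pmod{2}$. Thus the whole set $E(\omega)$ is constructible from $\mathcal{F}$ inside $\PlCSP^2$.

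Second, to simulate $=_m$ on the LHS of a $\PlCSP$ instance (where $m$ odd is the nontrivial case, $m$ even being already in $\mathcal{EQ}_2$), I would argue by cases on $\omega$. If $\omega^2$ is either transcendental or a root of unity of sufficiently large order, then for any instance with $n$ occurrences of $=_m$ the substitution $=_m \mapsto E^\ell_m$ makes the resulting Holant equal $\sum_{j=0}^{n} c_j \omega^{\ell j}$, a polynomial in $\omega^\ell$; evaluating at enough odd values of $\ell$ yields a Vandermonde system whose inversion produces $\sum_j c_j$, which is the original $\PlCSP$ Holant. If instead $\omega$ is a low-order root of unity, I would first apply the holographic transformation by $T = \left[\begin{smallmatrix} 1 & 0 \\ 0 & \omega \end{smallmatrix}\right]$: this sends $[1,\omega]$ to $[1,1]$ on the RHS and each $=_{2k}$ on the LHS to $E^{2k}_{2k}(\omega)$; choosing $k$ with $\omega^{2k}=1$ recovers $=_{2k}$ in the transformed setting, and the presence of $[1,1]$ then generates all equalities, yielding $\PlCSP(T^{-\otimes}\mathcal{F}) \leq_T \PlCSP^2(\mathcal{F})$.

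Third, combining the reduction with Theorem~\ref{pl-dicho-1} yields $\#\mathrm{P}$-hardness unless the (possibly transformed) signature set lies in one of $\mathscr{A}, \mathscr{P}, \widehat{\mathscr{M}}$. Pulling back through $T$, these three $\PlCSP$ tractable classes correspond to the five $\PlCSP^2$ tractable classes $\mathscr{A}, \mathscr{A}^\dagger, \mathscr{P}, \widehat{\mathscr{M}}, \widehat{\mathscr{M}}^\dagger$: $T\mathscr{A}$ equals $\mathscr{A}$ when $\omega^4 = 1$ and $\mathscr{A}^\dagger$ when $\omega^4 = -1$ (i.e.\ $T \in \mathscr{T}_4 \subset \mathcal{T}_8$), $T\widehat{\mathscr{M}}$ equals $\widehat{\mathscr{M}}$ or $\widehat{\mathscr{M}}^\dagger$ according to $\omega^2 = \pm 1$, and $T\mathscr{P} = \mathscr{P}$ always. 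The main obstacle I expect is the delicate case split on $\operatorname{ord}(\omega)$: specifically, for $\omega$ with $\omega^8 \neq 1$ the transformation $T$ lies outside $\mathcal{T}_8$, so $T\mathscr{A}$ and $T\widehat{\mathscr{M}}$ are not among the five tractable classes, and one has to argue separately---using the fact that $[1,\omega]$ itself then lies only in $\mathscr{P}$ among the five classes---that in this regime the hypothesis $\mathcal{F} \not\subseteq \mathscr{P}$ together with the interpolation step forces $\#\mathrm{P}$-hardness directly, with no residual tractable exception to rule out.
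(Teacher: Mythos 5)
Your step one is right. But your step two, as written, does not cover all $\omega\neq 0$. The case split ``transcendental or sufficiently large order'' vs.\ ``low-order root of unity'' is ill-posed: the order needed for the Vandermonde system to pin down $\sum_j c_j$ grows with the instance, so no fixed order is ever ``sufficiently large''. Concretely, if $\omega$ is a root of unity of \emph{even} order $r$, then as $\ell$ runs over odd integers $\omega^\ell$ takes at most $r/2$ distinct values in a coset of $\langle\omega^2\rangle$ that never contains $1$; for an instance with $\geq r/2$ odd-arity equality occurrences the interpolation has too few nodes to recover $\sum_j c_j$, and you cannot fall back to evaluating directly at $\ell$ with $\omega^\ell=1$ since that would require $\ell$ even. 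So the interpolation branch in fact works only when $\omega$ is \emph{not} a root of unity, while your transformation branch works for \emph{all} roots of unity; the boundary you drew is in the wrong place, and even the corrected dichotomy is an unnecessary complication.

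The source of the trouble is your order of operations: you transform the instance first and then try to recover the odd equalities afterwards, which is exactly what forces the dependence on the order of $\omega$. Construct first and transform second: build $E^k_k(\omega) = \partial^k_{[1,\omega]}(=_{2k})$ on the LHS for all $k\geq 1$, and only then apply a single holographic transformation by $T^{-1}$ with $T = \trans{1}{0}{0}{\omega}$. Since $E^k_k(\omega)\,(T^{-1})^{\otimes k} = (=_k)$ for every $k$ and every $\omega\neq 0$, after the transformation the LHS contains all of $\mathcal{EQ}$ simultaneously, giving $\PlCSP(T\mathcal{F}) \leq_T \PlCSP^2(\mathcal{F})$ uniformly, with no case split and no interpolation. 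Your step three then goes through, but the $\omega^8\neq 1$ case deserves a precise argument rather than a gesture at ``the interpolation step forcing hardness directly'': $T[1,\omega]=[1,\omega^2]\in T\mathcal{F}$, and $(\omega^2)^4\notin\{0,1\}$ forces $[1,\omega^2]\notin\mathscr{A}\cup\widehat{\mathscr{M}}$, so the only surviving tractable containment from Theorem~\ref{pl-dicho-1} is $T\mathcal{F}\subseteq\mathscr{P}$, equivalent to $\mathcal{F}\subseteq\mathscr{P}$ since diagonal transformations preserve $\mathscr{P}$, contradicting the hypothesis. There is no residual tractable exception and nothing for the interpolation to do.
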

\begin{proof}
Firstly, we have $E^k_{k}(\omega)=\partial^{k}_{[1, \omega]}(=_{2k})$
of arity $k$
on the LHS in $\PlCSP^2(\mathcal{F})$, for all $k \ge 1$.
By a holographic transformation using $T^{-1}$, where
${T}=\left[\begin{smallmatrix} 1 & 0 \\
0&\omega \end{smallmatrix}\right]$,
we have $(E^k_{k}(\omega)) T^{-1} = (=_k)$ on the LHS, and
\[
\PlCSP({T}\mathcal{F})\leq_T
\PlHolant(\mathcal{EQ}\cup \mathcal{EQ}_2 {T}^{-1} \mid {T}\mathcal{F})
\leq_T
 \PlCSP^2(\mathcal{F}),
\]
where $\mathcal{EQ}$ on LHS of the Holant instance comes from $E^k_k(\omega)$
in  the second step of the reduction.
If ${T}\mathcal{F}\nsubseteq\mathscr{P}$,
${T}\mathcal{F}\nsubseteq\mathscr{A}$
and
${T}\mathcal{F}\nsubseteq\widehat{\mathscr{M}}$,
then
 Pl-$\#${\rm CSP}$({T}\mathcal{F})$ is  $\#${\rm P-}hard
 by Theorem~\ref{pl-dicho-1}.
Thus $\PlCSP^2(\mathcal{F})$ is  $\#${\rm P-}hard.

Otherwise,
${T}\mathcal{F}\subseteq\mathscr{P}$,
${T}\mathcal{F}\subseteq\mathscr{A}$ or
${T}\mathcal{F}\subseteq\widehat{\mathscr{M}}$.
If ${T}\mathcal{F}\subseteq\mathscr{P}$, then $\mathcal{F}\subseteq\mathscr{P}$.
In the following, assume that ${T}\mathcal{F}\nsubseteq\mathscr{P}$,
then ${T}\mathcal{F}\subseteq\mathscr{A}$ or
${T}\mathcal{F}\subseteq\widehat{\mathscr{M}}$.

Note that $[1, \omega^2]\in {T}\mathcal{F}$.
If  $\omega^8\neq 1$, then $[1, \omega^2]\notin\mathscr{A}\cup\widehat{\mathscr{M}}$.
This is a contradiction.

If $\omega^4=-1$, then $[1, \omega^2]\notin\widehat{\mathscr{M}}$.
Thus ${T}\mathcal{F}\subseteq\mathscr{A}$.
It follows that $\mathcal{F}\subseteq\mathscr{A}^{\dagger}$.

For $\omega^4=1$, if ${T}\mathcal{F}\subseteq\mathscr{A}$, then $\mathcal{F}\subseteq\mathscr{A}$.
If ${T}\mathcal{F}\subseteq\widehat{\mathscr{M}}$,
then
either  $\mathcal{F}\subseteq\widehat{\mathscr{M}}$ if $\omega^2=1$,
or  $\mathcal{F}\subseteq\widehat{\mathscr{M}}^{\dagger}$ if $\omega^2=-1$.
%
%
%
%
\end{proof}

%

Lemma~\ref{[1,a]XXX} allows us to
transfer the complexity question of $\PlCSP^{2}$
to that of  $\PlCSP$, to which we can apply the known
dichotomy (Theorem~\ref{pl-dicho-1}).
However it requires a unary signature.
We observe that
 if all signatures in $\mathcal{F}$ have even arities, then
there is no way to construct a unary in $\PlCSP^{2}(\mathcal{F})$.
In this case, we use the
next lemma, which is similar to Lemma~\ref{[1,a]XXX}.
It shows that if we have $[1, \omega]^{\otimes 2}$ with $\omega\neq 0$
in $\mathcal{F}$,  then we can still
transfer the question of $\PlCSP^{2}$
to that of  $\PlCSP$.
It is proved using a global simulation of  $\PlCSP$
by $\PlCSP^2$.

\begin{lemma}\label{mixing-P-global}
Let $\mathcal{F}$ be a set of signatures  of
even arities. Suppose $[1, \omega]^{\otimes 2}\in\mathcal{F}$ for some
$\omega\neq 0$.
If $\mathcal{F}\nsubseteq\mathscr{P}$,
$\mathcal{F}\nsubseteq\mathscr{A}$,
$\mathcal{F}\nsubseteq\mathscr{A}^{\dagger}$,
$\mathcal{F}\nsubseteq\widehat{\mathscr{M}}$
and
$\mathcal{F}\nsubseteq\widehat{\mathscr{M}}^{\dagger}$,
%
then $\PlCSP^2(\mathcal{F})$ is  $\#${\rm P-}hard.
\end{lemma}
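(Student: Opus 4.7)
The plan is to follow the blueprint of Lemma~\ref{[1,a]XXX}, with the critical adjustment that the absence of a usable unary signature (forced because every signature in $\mathcal{F}$ has even arity) must be compensated by a \emph{global} doubling argument. First I would exploit the tensor-power structure of $[1,\omega]^{\otimes 2}$ on the RHS by repeatedly applying the derivative operator: a direct calculation shows $\partial^{m}_{[1,\omega]^{\otimes 2}}(=_{2k}) = [1, 0, \ldots, 0, \omega^{2m}]$ has arity $2k - 2m$, and the choice $k = 2m$ yields $E^{2m}_{2m}(\omega)$ on the LHS of $\PlCSP^2(\mathcal{F})$ for every $m \geq 1$. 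A holographic transformation by $T^{-1}$ with $T = \operatorname{diag}(1,\omega)$ converts these twisted equalities to the standard even equalities $\mathcal{EQ}_2$ on the LHS while turning the RHS into $T\mathcal{F}$, which still consists exclusively of even-arity signatures. This establishes the reduction $\PlCSP^2(T\mathcal{F}) \leq_T \PlCSP^2(\mathcal{F})$.

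The heart of the argument is to further reduce $\PlCSP(T\mathcal{F})$ to $\PlCSP^2(T\mathcal{F})$ via a global doubling. Given a planar instance $\Omega$ of $\PlCSP(T\mathcal{F})$, I would construct $\Omega'$ by placing a mirror-reflected copy $\Omega^R$ of $\Omega$ into the unbounded face of $\Omega$ and identifying each variable $v$ of $\Omega$ with its twin $v^R$ in $\Omega^R$. Upon identification the merged variable has even degree $2k_v$, on which $=_{2k_v} \in \mathcal{EQ}_2$ may be placed in a planar way by routing through the common boundary. The resulting $\Omega'$ is a valid $\PlCSP^2(T\mathcal{F})$ instance, and the Holant of $\Omega'$ is a polynomially computable function (the square, up to a known factor) of $\operatorname{Holant}(\Omega)$. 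Standard polynomial interpolation across a parameter family of modified instances then recovers $\operatorname{Holant}(\Omega)$ from a polynomial number of oracle calls, completing the reduction chain $\PlCSP(T\mathcal{F}) \leq_T \PlCSP^2(T\mathcal{F}) \leq_T \PlCSP^2(\mathcal{F})$.

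With this reduction in hand, I would invoke the known planar dichotomy Theorem~\ref{pl-dicho-1}: $\PlCSP(T\mathcal{F})$ is \#P-hard unless $T\mathcal{F}$ is contained in $\mathscr{P}$, $\mathscr{A}$, or $\widehat{\mathscr{M}}$. It then suffices to rule out each tractable case under the hypotheses of the lemma. Since $\mathscr{P}$ is invariant under diagonal transformations, $T\mathcal{F} \subseteq \mathscr{P}$ forces $\mathcal{F} \subseteq \mathscr{P}$. If $T\mathcal{F} \subseteq \mathscr{A}$, the membership $[1,\omega^2,\omega^4] = T[1,\omega]^{\otimes 2} \in \mathscr{A}$ forces $\omega^8 = 1$ by Corollary~\ref{binary-necessary}; then $\omega^4 = 1$ yields $T \in \mathcal{T}_8$ and hence $\mathcal{F} \subseteq T^{-1}\mathscr{A} = \mathscr{A}$, while $\omega^4 = -1$ gives $T \in \mathscr{T}_4$ and hence $\mathcal{F} \subseteq T^{-1}\mathscr{A} = \mathscr{A}^\dagger$. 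The case $T\mathcal{F} \subseteq \widehat{\mathscr{M}}$ is handled analogously using the characterization in Lemma~\ref{binary}, forcing $\omega^4 = 1$ and yielding $\mathcal{F} \subseteq \widehat{\mathscr{M}}$ (if $\omega^2 = 1$) or $\mathcal{F} \subseteq \widehat{\mathscr{M}}^\dagger$ (if $\omega^2 = -1$). All five resulting containments contradict the hypothesis, so $\#P$-hardness follows.

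The main obstacle will be the global doubling step: ensuring the reflected-copy construction is genuinely planar even for variables interior to $\Omega$ (which requires careful routing of the identifying edges so that no crossings are introduced), and arranging the polynomial interpolation that inverts the square-like relationship between $\operatorname{Holant}(\Omega')$ and $\operatorname{Holant}(\Omega)$. Once that geometric and arithmetic bookkeeping is settled, the remaining steps are routine applications of the derivative calculus, holographic transformations, and the existing $\PlCSP$ dichotomy.
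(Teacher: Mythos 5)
Your first step — constructing the even twisted equalities $E^{2m}_{2m}(\omega)$ via $\partial^m_{[1,\omega]^{\otimes 2}}$ and transforming by $\operatorname{diag}(1,\omega^{-1})$ — is sound, and your final case analysis is exactly the one used in the paper. The problem is the global doubling step, which has a genuine mathematical gap and is not the route the paper takes.

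If you reflect $\Omega$ across its outer boundary and identify each variable $v$ with its mirror twin $v^R$, then the $=_{2k_v}$ on the merged variable forces all $2k_v$ edges to one value, so an assignment of $\Omega'$ is again a $0/1$ assignment $\sigma$ to the original variable set — but every constraint $C$ of $\Omega$ contributes its weight \emph{twice}. Hence
\[
\operatorname{Holant}(\Omega') \;=\; \sum_\sigma \prod_C w_C(\sigma)^2,
\]
which is \emph{not} a function of $\operatorname{Holant}(\Omega) = \sum_\sigma \prod_C w_C(\sigma)$. It is the Holant of $\Omega$ under the pointwise-squared signature set, a genuinely different quantity; no "interpolation across a parameter family" is going to recover the original sum from it, and you give no hint of what that family would be. (The construction also has a secondary difficulty you flagged yourself: identifying variables interior to $\Omega$ with points interior to the reflected copy is not planar without extra routing, and that routing is not obviously available.)

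The paper sidesteps all of this with a parity argument that you've missed. Rather than trying to get $\mathcal{EQ}$ on the LHS and then recover $\PlCSP$ by post-processing, it aims for the intermediate problem $\PlHolant(E(\omega) \mid \mathcal{F})$, where $E(\omega)$ contains \emph{both} even- and odd-arity twisted equalities. The odd-arity ones cannot be built by a local $\{f\}$-gate (everything has even arity), but in any concrete instance $\Omega$ of $\PlHolant(E(\omega) \mid \mathcal{F})$ the number of odd-arity LHS vertices is automatically even, because the total edge count must match the total RHS arity, which is even. Each such pair $E^{2u+1}_{2v-1}$, $E^{2u'+1}_{2v'-1}$ is functionally identical to a pair of \emph{even}-arity $E$'s joined by one copy of $[1,\omega]^{\otimes 2}$; the pairing can be done planarly by working leaf-by-leaf along a spanning tree of the dual graph. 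This is a Holant-\emph{preserving} simulation, no squaring and no interpolation. Since $E^k_k \in E(\omega)$ for every $k$ (odd and even), the holographic transformation by $\operatorname{diag}(1,\omega^{-1})$ now turns the LHS into the full $\mathcal{EQ}$, yielding $\PlCSP(T^{-1}\mathcal{F}) \le_T \PlCSP^2(\mathcal{F})$ directly, after which your final case analysis applies unchanged. You should replace the doubling argument with this parity pairing; as written, your reduction chain breaks at the point where you assert $\operatorname{Holant}(\Omega')$ determines $\operatorname{Holant}(\Omega)$.
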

\begin{proof}
We first prove that $\PlHolant(E(\omega) \mid \mathcal{F})$
$\leq_T$ Pl-$\#${\rm CSP}$^{2}(\mathcal{F})$.

For $k \ge 1$ and $\ell\geq 0$,
we have all of $E^{2\ell}_{2k} = E^{2\ell}_{2k}(\omega)
=\partial_{[1, \omega]^{\otimes 2}}^{\ell}(=_{2k+2\ell})$ on LHS
in Pl-$\#${\rm CSP}$^{2}(\mathcal{F})$.
Given any instance $\Omega$ of  $\PlHolant(E(\omega) \mid \mathcal{F})$,
since
all signatures in $\mathcal{F}$  have  even arities, the number of $E^{\ell}_k$
of odd arity must be even.
In each connected component of $\Omega$,
we can connect all $E^{\ell}_k$ of odd arity  in  pairs,
by some copies of $[1, \omega]^{\otimes 2}$ in a planar way.
Note that  when one input of $E^{\ell}_{k}$ is connected to
 a unary $[1, \omega]$, it  becomes $E^{\ell+1}_{k-1}$.
Hence a pair $E^{2u+1}_{2v-1}$ and $E^{2u'+1}_{2v'-1}$ can be functionally
replaced by  a pair $E^{2u}_{2v}$ and $E^{2u'}_{2v'}$ that are  connected by
$[1, \omega]^{\otimes 2}$.

Formally, we may assume the plane graph $\Omega$
is connected,
since the Holant value on $\Omega$ is the product
over its connected components,
and  the number of $E_k^{\ell} \in E(\omega)$ of odd arity is even
in each connected component of $\Omega$.
We will connect pairs of $E_k^{\ell}$ of odd arity
by copies of $[1, \omega]^{\otimes 2}$ within each
connected component.

Let $T$ be a spanning tree of the dual graph of $\Omega$, and pick
any node as the root of $T$.
For definiteness we pick the node of $T$ that corresponds to
the external face of $\Omega$ as root.
If on a leaf node of $T$, i.e., a face of $\Omega$, there are an even number
of $E_k^{\ell}$ of odd arity, we can connect them in pairs
within the face by copies of $[1, \omega]^{\otimes 2}$, maintaining planarity.
If there are an odd number of them,
we can pick any one, and
 still connect the others in pairs
within the face by copies of $[1, \omega]^{\otimes 2}$, maintaining planarity.
On the edge connecting the leaf to its parent in the tree $T$,
the corresponding edge in $\Omega$ has an $E_s^t$
 in one of the two
incident nodes of $\Omega$.
If $s$  is  odd, we pick this $E_s^t$.
If $s$  is even, we pick the first $E_k^\ell$
of odd arity in clockwise order in the face of $\Omega$, which
is the leaf node in $T$, and connect it to that $E_s^t$
 by one copy of $[1, \omega]^{\otimes 2}$.
This effectively transforms  that $E_s^t$ to $E_{s-1}^{t+1}$  of
odd arity. We then delete the leaf node from $T$.

The proof is completed by induction. Note that at the root of $T$,
there must be an even number of $E^{\ell}_k$ of odd arity,
including those which have been transformed by its children in $T$.
Thus we can simulate the  $\PlHolant(E(\omega) \mid \mathcal{F})$
problem $\Omega$ by
$\PlCSP^{2}(\mathcal{F})$.

Note that
$E^k_k \in E(\omega)$, for all $k \ge 1$.
Thus we have
\[
 \PlHolant(E_1^1, E^2_2, \dotsc, E^k_k, \dotsc \mid \mathcal{F})
 \leq_T
 \PlHolant(E(\omega) \mid \mathcal{F})
 \leq_T
 \PlCSP^2(\mathcal{F}).
\]
Then by a holographic transformation using
${T}=\left[\begin{smallmatrix} 1 & 0 \\
0&\omega^{-1} \end{smallmatrix}\right]$,
we have
\[
 \PlCSP({T}^{-1}\mathcal{F})
 \equiv \PlHolant(E_1^1, E^2_2, \ldots, E^k_k, \ldots \mid \mathcal{F})
 \leq_T \PlCSP^{2}(\mathcal{F}).
\]
The rest of the proof is the same
as the proof of Lemma~\ref{[1,a]XXX}. We omit it here.
\end{proof}

The next lemma shows that when we obtain $[1, 0, 1]^{\otimes 2}$,
we can reduce a  Pl-$\#$CSP$^2$ problem to
a  Pl-$\#$CSP$^4$
problem,
when all signatures in $\mathcal{F}$ have arity divisible by 4.

\begin{lemma}\label{mixing-P-global-binary}
{\rm Pl}-$\#${\rm CSP}$^2(\mathcal{F})$ $\leq_T$ {\rm Pl}-$\#${\rm CSP}$^{4}(\mathcal{F}, [1, 0, 1]^{\otimes 2})$,
if all signatures in $\mathcal{F}$ have arity $\equiv 0\pmod 4$.
\end{lemma}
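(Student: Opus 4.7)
The plan is to give a direct gadget reduction. Let $\Omega$ be an arbitrary instance of $\PlCSP^2(\mathcal{F})$, viewed as a planar bipartite signature grid with $=_{2k}$-vertices on the left and $\mathcal{F}$-vertices on the right. Call a left-vertex $u$ with signature $=_{2k_u}$ \emph{bad} if $2k_u\equiv 2\pmod 4$; then $\Omega$ is already a valid $\PlCSP^4$-instance once every bad vertex is eliminated. Because every signature in $\mathcal{F}$ has arity divisible by $4$, a degree count gives
\[
 \sum_u 2k_u \;=\; \sum_{f\in V}\arity(f) \;\equiv\; 0\pmod 4,
\]
so $\sum_u k_u$ is even and the number of bad vertices is even.

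The core identity
\[
 \sum_{y_1,y_2}(=_{2k+2})(x_1,\dots,x_{2k},y_1,y_2)\cdot(=_2)(y_1,y_2)\;=\;(=_{2k})(x_1,\dots,x_{2k})
\]
says that a bad equality $=_{2k_u}$ can be replaced by the good equality $=_{2k_u+2}$ (arity divisible by $4$) provided its two new dangling edges are then absorbed by a $(=_2)$-factor. Since $[1,0,1]^{\otimes 2}$ supplies exactly two such $(=_2)$-factors bundled into one arity-$4$ right-vertex, two bad vertices $u,u'$ can be handled together by a single $[1,0,1]^{\otimes 2}$: replace their equalities by $=_{2k_u+2}$ and $=_{2k_{u'}+2}$, then wire the two new edges of $u$ into one $(=_2)$-factor and the two new edges of $u'$ into the other. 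Applied to every pair in some pairing of the bad vertices, this produces a signature grid whose left-arities are all $\equiv 0\pmod 4$ and whose right-vertices lie in $\mathcal{F}\cup\{[1,0,1]^{\otimes 2}\}$, with the same Holant value as $\Omega$.

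The main (and only nontrivial) obstacle is realizing the pairing of bad vertices in a planar fashion, so that every pair $(u,u')$ can be joined by its $[1,0,1]^{\otimes 2}$-gadget without crossings. I would handle this by the spanning-tree technique already used in the proof of Lemma~\ref{mixing-P-global}. Take a spanning tree $T$ of the dual graph of $\Omega$ rooted at the outer face, and process $T$ bottom-up. At each leaf face $F$, pair up inside $F$ the bad vertices incident to $F$ that have not yet been paired, drawing one fresh $[1,0,1]^{\otimes 2}$-vertex inside $F$ for each pair (the cyclic order on $\partial F$ makes such a planar matching routine); if an odd number remains, defer exactly one of them, chosen to be incident to the dual-tree edge from $F$ to its parent face, up to the parent before deleting $F$ from $T$. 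The global parity observation from the first paragraph guarantees that at the root there is an even number of deferred-plus-local bad vertices, so the process terminates with every bad vertex paired and the planar embedding preserved. This yields the desired reduction $\PlCSP^2(\mathcal{F})\le_T\PlCSP^4(\mathcal{F},[1,0,1]^{\otimes 2})$.
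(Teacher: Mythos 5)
Your proof is correct and takes essentially the same approach as the paper's: both observe that the even-arity constraint forces an even number of equalities of arity $\equiv 2 \pmod 4$, promote each bad $=_{4k-2}$ to $=_{4k}$ with the two new edges absorbed by one $(=_2)$-factor of a shared $[1,0,1]^{\otimes 2}$ right-vertex, and invoke the spanning-tree pairing technique of Lemma~\ref{mixing-P-global} to realize the pairing planarly. (One small point: when a leaf face has an odd number of unpaired bad vertices and none of them is incident to the dual-tree edge to the parent, you would, exactly as in Lemma~\ref{mixing-P-global}, pair the leftover bad vertex with the good equality vertex on that boundary edge, turning the latter into a bad vertex to be deferred --- but the paper glosses over the same detail, deferring to the earlier proof.)
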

\begin{proof}
Let $\Omega$ be an instance  of $\PlCSP^2(\mathcal{F})$. Since
all signatures in $\mathcal{F}$ have arity $\equiv 0\pmod 4$, the number of {\sc Equalities}
of  arity $\equiv 2 \pmod 4$ must be even.
We can connect in pairs all {\sc Equalities} of arity $\equiv 2\pmod 4$
by some copies of $[1, 0, 1]^{\otimes 2}$ maintaining planarity
 similarly as in the  proof of Lemma~\ref{mixing-P-global}.
When two inputs of $=_{m+2}$ are connected
 to $[1, 0, 1]$ it becomes $\partial(=_{m+2}) = (=_{m})$.
Hence a pair $=_{4k-2}$ and $=_{4\ell-2}$ can be functionally
replaced by  a pair $=_{4k}$ and $=_{4\ell}$ that are  connected by
$[1, 0, 1]^{\otimes 2}$.
The rest of the proof is the same as  in Lemma~\ref{mixing-P-global} and we omit it here.
\end{proof}

The next corollary is used in the proof of the No-Mixing theorems.
We present it here since the proof uses a global simulation that is similar to Lemma~\ref{mixing-P-global-binary}.

\begin{figure}[t]
 \centering
 \begin{tikzpicture}[scale=\scale,transform shape,node distance=\nodeDist,semithick]
  \node[internal]  (0)                     {};
  \node[external]  (1) [above left  of= 0] {};
  \node[external]  (2) [below left  of= 0] {};
  \node[external]  (3) [left        of= 1] {};
  \node[external]  (4) [left        of= 2] {};
  \node[external]  (5) [right       of= 0] {};
  \node[internal]  (6) [right       of= 5] {};
  \node[external]  (7) [right       of= 6] {};
  \node[internal]  (8) [right       of= 7] {};
  \node[external]  (9) [right       of= 8] {};
  \node[internal] (10) [right       of= 9] {};
  \node[external] (11) [above right of=10] {};
  \node[external] (12) [below right of=10] {};
  \node[external] (13) [right       of=11] {};
  \node[external] (14) [right       of=12] {};
  \path (0) edge[out= 135, in=   0]  (3)
            edge[out=-135, in=   0]  (4)
            edge[out=  45, in= 135] node[square] (s1) {}  (6)
            edge[out= -45, in=-135] node[square] (s2) {}  (6)
        (6) edge[out=  45, in= 135] node[square] (s3) {}  (8)
            edge[out= -45, in=-135] node[square] (s4) {}  (8)
        (8) edge[out=  45, in= 135] node[square] (s5) {} (10)
            edge[out= -45, in=-135] node[square] (s6) {} (10)
       (10) edge[out=  45, in= 180] (13)
            edge[out= -45, in= 180] (14);
  \begin{pgfonlayer}{background}
   \node[draw=\borderColor,thick,rounded corners,densely dashed,fit = (s1) (s2),inner sep=6pt] {};
   \node[draw=\borderColor,thick,rounded corners,densely dashed,fit = (s3) (s4),inner sep=6pt] {};
   \node[draw=\borderColor,thick,rounded corners,densely dashed,fit = (s5) (s6),inner sep=6pt] {};
   \node[draw=\borderColor,thick,rounded corners,fit = (1) (2) (11) (12)] {};
  \end{pgfonlayer}
 \end{tikzpicture}
 \caption{Gadget used to obtain $=_4$.
          The circle vertices are assigned $\hat{f}$ and the dashed subgadgets are assigned $[1,0,1]^{\otimes 2}$ aligned horizontally
          so that it is equivalent to assigning $[1,0,1]$ to the square vertices.}
 \label{fig:2}
\end{figure}
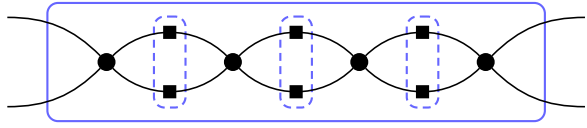

\begin{corollary} \label{2-M-double-roots}
 Suppose $f = [1, i]^{\otimes 4} + i^r [1, -i]^{\otimes 4}$ {\rm (}
$0 \le r \le 3$ {\rm )}
and $g = [g_0, \dotsc, g_{2n}]$ with $g_k = (\pm i)^{k} (2n-2k)$.
 Furthermore,
 let $\hat{g} = (Z^{-1})^{\otimes 2n} g$,
 where
$Z = \trans{1}{1}{i}{-i}$.
 Then 
\[\PlCSP^2(\hat{g}) \leq_T \PlCSP^2(f, g).\]
\end{corollary}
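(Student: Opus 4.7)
My approach is to apply the holographic transformation by $Z = \trans{1}{1}{i}{-i}$ to pull $\PlCSP^2(f,g)$ back to a bipartite Pl-Holant problem in which $\hat g$ already sits on the right-hand side, and then to argue that the transformed left-hand side is rich enough to realize all of $\mathcal{EQ}_2$, thereby simulating $\PlCSP^2(\hat g)$. The key preparatory step is Lemma~\ref{construct-[1,0,1]-by-(1,i)-(1,-i)} applied to $\mathcal{F} = \{f,g\}$, which yields
\[
 \PlHolant([1,0,-1]^{\otimes 2} \cup \mathcal{EQ}_2 \mid \{f,g\})
 \equiv
 \PlCSP^2(f,g).
\]
Since $Z^{-1}[1,i] = [1,0]$ and $Z^{-1}[1,-i] = [0,1]$, the signature $f$ transforms to $\hat f := (Z^{-1})^{\otimes 4} f = [1,0,0,0,i^r]$; the signature $g$ transforms to $\hat g$ by definition. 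A direct $2\times 2$ matrix calculation gives $[1,0,-1]\,Z^{\otimes 2} = 2[1,0,1]$, so the arity-$4$ LHS signature $[1,0,-1]^{\otimes 2}$ is carried, up to a nonzero scalar, to $(=_2)^{\otimes 2}$ on the LHS.

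\textbf{From $=_2$ and $\hat f$ to $=_4$.} Because $(=_2)^{\otimes 2}$ is just two independent copies of $=_2$ glued in parallel, we really have $=_2$ available on the LHS of the transformed problem. This $=_2$ is strong enough to collapse the bipartite split: a single $=_2$ connector on the LHS between two edges from RHS signatures is nothing more than an identity wire. In this effectively unipartite setting, the gadget depicted in Figure~\ref{fig:2}, with the circles assigned $\hat f$ and the dashed subgadgets realized by $[1,0,1]^{\otimes 2}$, can be formed. A direct evaluation shows that chaining $m$ copies of $\hat f$ in this linear fashion (with two $=_2$ edges between consecutive circles) produces the arity-$4$ signature $[1,0,0,0,i^{rm}]$, since every internal $=_2$ together with the all-or-nothing support of $\hat f$ propagates a single bit value through the whole gadget. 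One then selects $m \in \{1,2,4\}$ with $rm \equiv 0 \pmod 4$---namely $m=1$ if $r=0$, $m=2$ if $r=2$, and $m=4$ if $r\in\{1,3\}$---which yields $=_4$ exactly.

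\textbf{Finishing.} Having $=_2$ on the LHS and $=_4$ realized as above, one builds $=_{2k}$ for every $k \ge 2$ by chaining copies of $=_4$ through a single shared edge carrying $=_2$; the construction is evidently planar. An $=_{2k}$ realized on the RHS can be freely moved to the LHS by prepending $=_2$ connectors, so in the transformed problem the whole family $\mathcal{EQ}_2$ is available on the LHS, while $\hat g$ sits on the RHS. This is exactly the setup of $\PlHolant(\mathcal{EQ}_2 \mid \hat g) = \PlCSP^2(\hat g)$, which therefore reduces to $\PlCSP^2(f,g)$. The case $g_k = (-i)^k(2n-2k)$ is handled symmetrically, exchanging the roles of $[1,i]$ and $[1,-i]$ in the argument.

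\textbf{Main obstacle.} The only genuine subtlety is the $r$-dependence of $\hat f$: a single copy of the gadget realizes $=_4$ only when $r = 0$, and one must argue uniformly for all $r\in\{0,1,2,3\}$ by choosing the correct number of $\hat f$-copies in the linear gadget of Figure~\ref{fig:2}. Everything else---the identification of $(=_2)^{\otimes 2}$ with a copy of $=_2$, the bootstrapping from $=_2$ and $=_4$ up to all of $\mathcal{EQ}_2$, and the symmetry argument for the $-i$-case---is routine bookkeeping once the $Z$-transformation has been set up.
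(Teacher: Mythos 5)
Your proof correctly sets up the $Z$-transformation, correctly computes $\hat f=[1,0,0,0,i^r]$ and $[1,0,-1]Z^{\otimes 2}=2[1,0,1]$, and your variant of the Figure~\ref{fig:2} gadget (chaining $m\in\{1,2,4\}$ copies of $\hat f$ with $M_{\hat f}^m=\mathrm{diag}(1,0,0,i^{rm})$) is a legitimate, slightly different way to realize $=_4$ on the RHS; the paper always uses $m=4$.

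However, there is a genuine gap at the step ``Because $(=_2)^{\otimes 2}$ is just two independent copies of $=_2$ glued in parallel, we really have $=_2$ available on the LHS of the transformed problem.'' This is false in the bipartite setting. After the $Z$-transformation, the LHS consists of $\{[0,1,0],[1,0,1,0,1],\dots\}$ (the images of $\mathcal{EQ}_2$) together with $[1,0,1]^{\otimes 2}$. Having the arity-$4$ signature $[1,0,1]^{\otimes 2}$ available as an LHS constraint does \emph{not} give you $[1,0,1]$ as an LHS signature: all four of its legs must connect to RHS vertices, you cannot self-loop an LHS signature, and there is no binary $[1,0,1]$ on the RHS to truncate with. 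The bipartite structure rigidly forces every LHS occurrence to contribute four edges at a time. This is precisely why the paper cannot ``collapse the bipartite split'' and instead works inside $\PlHolant(\mathcal{EQ}_4\mid[1,0,1]^{\otimes 2},\hat g)$, invoking the global simulation of Lemma~\ref{mixing-P-global-binary}.

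Once that step fails, your bootstrapping of $\mathcal{EQ}_2$ on the LHS, and hence the final reduction to $\PlCSP^2(\hat g)$, is unjustified. Concretely: with $[1,0,1]^{\otimes 2}$ and $\mathcal{EQ}_4$ on the LHS, every LHS signature has arity $\equiv 0\pmod 4$, so any edge-counting/parity argument breaks when $\hat g$ has arity $2n\equiv 2\pmod 4$ and occurs an odd number of times in an instance of $\PlCSP^2(\hat g)$. The paper has to address this explicitly: for $2n\equiv 0\pmod 4$ it appeals to Lemma~\ref{mixing-P-global-binary}; for $2n\equiv 2\pmod 4$ it additionally uses the fact that all Holant values here are nonnegative integers, computes on a disjoint union $\Omega\uplus\Omega$ so that the count of offending equalities becomes even, and takes a square root. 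Your proof would need both the global-simulation lemma and this parity fix to close the reduction.
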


\begin{proof}
 Clearly $\hat{g} = [0, 1, 0, \dotsc, 0]$ or $\hat{g} = [0, \dotsc, 0, 1, 0]$,
 the perfect matching signature or its reversal.
 By applying Lemma~\ref{construct-[1,0,1]-by-(1,i)-(1,-i)} to $f = [1, i]^{\otimes 4}+i^r[1, -i]^{\otimes 4}$,
 we get $[1, 0, -1]^{\otimes 2}$ on the left:
 \[
  \PlHolant([1, 0, -1]^{\otimes 2} \cup \mathcal{EQ}_2 \mid f, g)
  \leq_T
  \PlCSP^2(f, g).
 \]
 Under a holographic transformation by $Z$,
 we have
 \[
  \PlHolant([1, 0, 1]^{\otimes 2} \mid \hat{f}, \hat{g})
  \leq_T
  \PlHolant([1, 0, -1]^{\otimes 2} \cup \mathcal{EQ}_2 \mid f, g),
 \]
%
 where $\hat{f} = (Z^{-1})^{\otimes 4} f = [1, 0, 0, 0, i^r]$.
Note that $[1,0,-1]Z^{\otimes 2} = 2 [1,0,1]$,
as $Z^{\rm T} \left[\begin{smallmatrix} 1 & 0 \\
0&-1 \end{smallmatrix}\right] Z = 
2
\left[\begin{smallmatrix} 1 & 0 \\
0& 1 \end{smallmatrix}\right]$.
 Consider the gadget in Figure~\ref{fig:2}.
 We assign $\hat{f}$ to the circle vertices and $[1,0,1]^{\otimes 2}$ the dashed subgadgets rotated appropriately so that it is equivalent to assigning $[1,0,1]$ to the square vertices.
 The signature of this gadget is $=_4$, for any $0 \le r \le 3$.
 Thus
 \[
  \PlHolant([1, 0, 1]^{\otimes 2} \mid  {=}_4, \hat{g})
  \leq_T
  \PlHolant([1, 0, 1]^{\otimes 2} \mid \hat{f}, \hat{g}).
 \]
 In $\PlHolant([1, 0, 1]^{\otimes 2} \mid {=}_4, \hat{g})$,
 by $[1, 0, 1]^{\otimes 2}$ and $=_4$,
 we can get all of $=_{4k}$ for $k\geq 1$ on RHS
 and then move them to LHS by $[1, 0, 1]^{\otimes 2}$.
 Moreover, we have $[1, 0, 1]^{\otimes 2}$ on RHS by connecting two copies of $=_4$
 by $[1, 0, 1]^{\otimes 2}$.
 Thus
 \[
  \PlHolant(\mathcal{EQ}_4 \mid [1, 0, 1]^{\otimes 2}, \hat{g})
  \leq_T
  \PlHolant([1, 0, 1]^{\otimes 2} \mid {=}_4, \hat{g}).
\]

Now we simulate $\PlCSP^2(\hat{g})$ by $\PlHolant(\mathcal{EQ}_4 \mid [1, 0, 1]^{\otimes 2}, \hat{g})$.
If $\hat{g}$ has arity $2n\equiv 0\pmod 4$, then we are done  by Lemma~\ref{mixing-P-global-binary}.

If $\hat{g}$ has arity $2n\equiv 2\pmod 4$, then
in an instance $\Omega$ of $\PlCSP^2(\hat{g})$,
the number of occurrences of  {\sc Equalities}
of  arity $\equiv 2 \pmod 4$ has the same parity
as the number of occurrences of $\hat{g}$,
which could be odd.
However, we observe that
all entries of signatures in $\PlCSP^2(\hat{g})$
  are nonnegative integers.
Thus the value of $\Omega$ is a nonnegative integer.
Let $\Omega  \uplus \Omega$ be the disjoint
union of two copies of $\Omega$ as a plane graph
with a common external face,
then the value of $\Omega  \uplus \Omega$ is the square of
the value of $\Omega$.
Thus computing the values on
 $\Omega  \uplus \Omega$ and $\Omega$ are equivalent.
In $\Omega  \uplus \Omega$, the number of {\sc Equalities}
of  arity $\equiv 2 \pmod 4$ is even.
Now we can use the same global simulation
as in Lemma~\ref{mixing-P-global-binary},
except that in the last step we may use one extra copy of
$[1,0,1]^{\otimes 2}$ to connect
two {\sc Equalities}
of  arity $\equiv 2 \pmod 4$ at the two root nodes
of the two spanning trees of the dual graphs of $\Omega$,
if the  number of occurrences of {\sc Equalities}
of  arity $\equiv 2 \pmod 4$ in $\Omega$ is odd.
 Thus we have
 \[
  \PlCSP^2(\hat{g})\leq \PlHolant(\mathcal{EQ}_4 \mid [1, 0, 1]^{\otimes 2}, \hat{g}).
  \qedhere
 \]
 \end{proof}

\section{Dichotomy Theorem when \texorpdfstring{$\mathcal{F}$}{F} Contains an Odd Arity Signature}\label{PartII.secC.Odd-arity}

In this section,
we give a dichotomy theorem for $\PlCSP^2(\mathcal{F})$,
where $\mathcal{F}$ includes at least one nonzero signature $f$ that has odd arity.

The next result is similar to Lemma~6.2 in~\cite{GW13}.

\begin{lemma}\label{odd-arity3-parity}
 Let $x, y \in \mathbb{C}$ and $f=[x,0,y,0]$.
 If $y \ne 0$ and $x^4 \ne y^4$,
 then
 {\rm Pl}-\#{\rm CSP}$^2(f)$ is \#P-hard.
\end{lemma}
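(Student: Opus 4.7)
Because $f = [x,0,y,0]$ has support only on even Hamming weight inputs, the same is true for every signature built from $f$ and $\mathcal{EQ}_2$ via planar gadgets. Consequently no unary $[1,\omega]$ with $\omega\neq 0$ can be realized, and Lemma~\ref{[1,a]XXX} does not apply directly. A second, related obstruction is that $f$ is itself a matchgate signature (it fits case~(2) of Proposition~\ref{prop:match:symmetric} with $a=x$, $b=y$), so any gadget built using only $f$ together with $=_2$ produces another matchgate, and its compressed signature matrix will be forced to be singular. To obtain hardness, one must exploit the fact that $=_{2k}$ for $k\ge 2$ is \emph{not} a matchgate signature.

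The plan is to construct an arity-$4$ signature $h$ in $\PlCSP^2(f)$ whose signature matrix $M_h$ is redundant (possibly after a cyclic rotation of its four external legs as in Figure~\ref{fig:rotate_asymmetric_signature} of Part~I), and whose compressed matrix $\widetilde{M_h}$ has nonzero determinant under the hypotheses $y\ne 0$ and $x^4\ne y^4$. Then by Lemma~\ref{4-redundant}, $\PlHolant(h)$ is $\#$P-hard, and since $h$ is realizable in $\PlCSP^2(f)$, the lemma follows.

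Natural candidate gadgets include (i) two copies of $f$ linked by a single $=_2$, which yields a redundant matrix $M_h$ with entries of the form $x^2$, $xy$, $y^2$ but whose compressed determinant vanishes identically (the construction stays inside $\mathscr{M}$), and (ii) the triangle gadget $T=[x^3+y^3,0,y^2(x+y),0]$ from three copies of $f$, which is again of the matchgate form $[a,0,b,0]$. Neither alone is enough. The correct gadget must incorporate at least one $=_4$ vertex so that the non-matchgate structure of $=_4$ destroys the singularity induced by the matchgate character of $f$. I would therefore combine two copies of $f$ with one $=_4$ (possibly together with a $=_2$), balancing edge counts so that three internal edges remain and four external legs exit planarly. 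After computing $M_h$ symbolically in $x$ and $y$ and applying a suitable rotation to force redundancy, I expect $\det\widetilde{M_h}$ to factor as a nonzero constant times a polynomial containing $y^{a}(x^4-y^4)^{b}$ for some $a,b\ge 1$, so that $\widetilde{M_h}$ is nonsingular exactly under our hypotheses.

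The central technical difficulty is to identify the specific gadget whose compressed determinant is manifestly nonzero under the conditions; the necessary tension is between the matchgate structure of $f$ and the non-matchgate structure of $=_4$, and the rotation-to-redundancy step requires a careful case-by-case analysis of the entries, closely paralleling the analysis in the analogous Lemma~6.2 of the earlier work referenced in the statement.
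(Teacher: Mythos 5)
Your preliminary observations are correct and show good judgment: parity forces every $\{f\}\cup\mathcal{EQ}_2$-gate to have even support (so no unary $[1,\omega]$ with $\omega\ne0$ is realizable, and Lemma~\ref{[1,a]XXX} is unavailable), and $f\in\mathscr{M}$ means any gadget using only $f$ and $=_2$ stays inside the matchgate class and will have singular compressed matrix. But after that diagnosis the proposal stops being a proof. You write ``I would therefore combine two copies of $f$ with one $=_4$ \dots\ I expect $\det\widetilde{M_h}$ to factor as \dots'', and then ``the central technical difficulty is to identify the specific gadget.'' That difficulty \emph{is} the content of the lemma along this route, and it is left open: no gadget is exhibited, no signature matrix is computed, no rotation-to-redundancy case analysis is done, and no determinant is verified. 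It is not clear a priori that a planar gate with the right arity and leg count will have $\det\widetilde{M_h}$ vanish exactly when $y=0$ or $x^4=y^4$, so ``I expect'' cannot be accepted.

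The paper proves the lemma by a different and shorter route: domain pairing. Given a planar bipartite instance of $\PlCSP([x^2,y^2,y^2])$, replace each $=_k$ by $=_{2k}$ with the $k$ incident edges doubled into bundles, and replace each occurrence of the binary constraint by the gadget of Figure~\ref{fig:gadget:domain_pairing}, which is two copies of $f$ joined by a single edge. The equalities force each bundle to carry $(0,0)$ or $(1,1)$, and on that paired domain the linked gadget acts exactly as the symmetric binary signature $[x^2,y^2,y^2]$ (the off-domain values never contribute). This gives $\PlCSP([x^2,y^2,y^2])\le_T\PlCSP^2(f)$, and $\PlCSP([x^2,y^2,y^2])$ is \numP-hard by Theorem~\ref{thm:parII:k-reg_homomorphism} precisely when $y\ne 0$ and $(x^2)^2\ne (y^2)^2$, i.e.\ under the stated hypotheses. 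The advantage of this route is that it inherits the entire known $\PlCSP$ dichotomy for binary signatures rather than hunting for one redundant arity-$4$ gate and proving its compressed matrix nonsingular by brute calculation. If you want to make your approach go through, you would need to actually specify the planar gadget and compute $\det\widetilde{M_h}$; as written, the argument does not establish the lemma.
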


\begin{proof}
 We reduce from $\PlCSP([x^2,y^2,y^2])$ to $\PlCSP^2(f)$.
 Since $\PlCSP([x^2,y^2,y^2])$ is $\numP$-hard when $y \ne 0$ and $x^4 \ne y^4$ by Theorem~\ref{thm:parII:k-reg_homomorphism},
 this shows that  $\PlCSP^2(f)$ is also $\numP$-hard.

 An instance of $\PlCSP([x^2,y^2,y^2])$ is a signature grid $\Omega$ with underlying graph $G = (U, V, E)$,
where
  $G$ is bipartite and planar, and every vertex in $U$ has degree~$2$.
 We replace every vertex in $V$ of degree $k$ (which is assigned ${=}_k \in \EQ$) with a vertex of degree $2k$,
 and bundle two adjacent variables to form $k$ bundles of~$2$ edges each.
 The $k$ bundles correspond to the $k$ incident edges of the original vertex with degree $k$.
 We assign $=_{2k}$ to the new vertices of degree $2k$.

 If the inputs to these equality signatures are restricted to $\{(0,0), (1,1)\}$ on each bundle,
 then these equality signatures take value~$1$ on $((0,0), \dotsc, (0,0))$ and $((1,1), \dotsc, (1,1))$ and take value~$0$ elsewhere.
 Thus, if we restrict the domain to $\{(0,0), (1,1)\}$, it is the equality signature $=_k$.

\begin{figure}[t]
 \centering
 \begin{tikzpicture}[scale=\scale,transform shape,node distance=\nodeDist,semithick,font=\LARGE]
  \node[external]  (0) [            label=left:$a_1$]  {};
  \node[external]  (1) [below of=0, label=left:$a_2$]  {};
  \node[internal]  (2) [right of=0]                    {};
  \node[internal]  (3) [right of=1]                    {};
  \node[external]  (4) [right of=2, label=right:$b_1$] {};
  \node[external]  (5) [right of=3, label=right:$b_2$] {};
  \path (0) edge node[very near end]   (n1) {} (2)
        (1) edge                               (3)
        (2) edge node[label=right:$c$]      {} (3)
            edge                               (4)
        (3) edge node[very near start] (n2) {} (5);
  \begin{pgfonlayer}{background}
   \node[draw=\borderColor,thick,rounded corners,fit = (n1) (n2),inner sep=6pt] {};
  \end{pgfonlayer}
 \end{tikzpicture}
 \caption{Gadget designed for the paired domain. Both vertices are assigned $[x,0,y,0]$.}
 \label{fig:gadget:domain_pairing}
\end{figure}
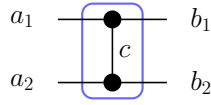

 To simulate $[x^2,y^2,y^2]$,
 we connect two copies of $f = [x,0,y,0]$ by a single edge as shown in Figure~\ref{fig:gadget:domain_pairing} to form a gadget with signature
 \[h(a_1, a_2, b_1, b_2) = \sum_{c=0,1} f(a_1, b_1, c) f(a_2, b_2, c).\]
 We replace every (degree~$2$) vertex in $U$ (which is assigned $[x^2,y^2,y^2]$) by a degree~$4$ vertex assigned $h$,
 where the variables of $h$ are bundled as $(a_1, a_2)$ and $(b_1, b_2)$.

 The vertices in this new graph $G'$ are connected as in the original graph $G$,
 except that every original edge is replaced by two edges that connect to the same side of the gadget in Figure~\ref{fig:gadget:domain_pairing}.
 Notice that $h$ is only connected by $(a_1, a_2)$ and $(b_1, b_2)$ to some bundle of two incident edges of an equality signature.
 Since this equality signature enforces that the value on each bundle is either $(0,0)$ or $(1,1)$,
 we only need to consider the restriction of $h$ to the domain $\{(0,0), (1,1)\}$.
 On this domain, $h = [x^2,y^2,y^2]$ is a \emph{symmetric} signature of arity~$2$.
 Therefore, the signature grid $\Omega'$ with underlying graph $G'$ has the same Holant value as the original signature grid $\Omega$.
\end{proof}

The following lemma is a dichotomy for $\PlCSP^2(f)$ where $f$ is a symmetric ternary signature.

\begin{lemma} \label{lem:PlCSP2:dichotomy:ternary}
 Let $f$ be a symmetric signature of arity~$3$,
 then $\PlCSP^2(f)$ is \#P-hard unless $f\in\mathscr{P}\cup\widetilde{\mathscr{A}}\cup\widetilde{\mathscr{M}}$.
\end{lemma}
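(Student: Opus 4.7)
If $f$ is degenerate then $f = [a,b]^{\otimes 3} \in \mathscr{P}$, so henceforth assume $f$ is non-degenerate. My overall strategy is to realize a unary signature in $\PlCSP^2(f)$ and then invoke Lemma~\ref{[1,a]XXX}, which immediately classifies the complexity once some $[1,\omega]$ with $\omega\neq 0$ is available. Two natural unaries are always realizable: the self-loop $u_1 = \partial(f) = [f_0+f_2,\,f_1+f_3]$, obtained by folding $f$ through $(=_2)$, and the three-leg signature $u_2 = \partial_f(=_4) = [f_0,\,f_3]$, obtained by plugging all three inputs of $f$ into an $(=_4)$ on the LHS. In the generic case at least one of $u_1, u_2$ has both coordinates nonzero and normalizes to $[1,\omega]$ with $\omega \ne 0$, so Lemma~\ref{[1,a]XXX} finishes: either $f$ lies in one of $\mathscr{P}, \mathscr{A}, \mathscr{A}^{\dagger}, \widehat{\mathscr{M}}, \widehat{\mathscr{M}}^{\dagger}$ (all contained in $\mathscr{P}\cup\widetilde{\mathscr{A}}\cup\widetilde{\mathscr{M}}$), or $\PlCSP^2(f)$ is $\numP$-hard.

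The residual regime is $f_0 f_3 = 0$ and $(f_0+f_2)(f_1+f_3) = 0$. In every subcase at least one of the pinnings $[1,0],[0,1]$ is available. Whenever both are available I plan to pin two of $f$'s three inputs to produce a binary $f(a,\cdot,c)$, whose two entries are the candidate unary $[f_{a+c},\,f_{a+c+1}]$; any such unary with both entries nonzero again triggers Lemma~\ref{[1,a]XXX}. For example the subcase $f=[a,b,-a,0]$ with $ab\neq 0$ produces $f(0,\cdot,0)=[a,b]$, a good unary. A direct check via Lemma~\ref{binary} and the type analysis of Lemma~\ref{second-recurrence-relation} shows that the unique second-order type of such an $f$ is $\langle a^2,-ab,a^2+b^2\rangle$, which matches none of the tractable types $\langle 0,1,0\rangle, \langle 1,0,\pm 1\rangle, \langle 1,0,\pm i\rangle, \langle 1,c,\pm 1\rangle$ under $ab\neq 0$, so the conclusion is $\numP$-hardness. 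The fully balanced subcase $f=[a,b,-a,-b]$ is already handled by $u_2 = [a,-b]$ whenever $ab \ne 0$; the boundary $ab = 0$ collapses $f$ to $a[1,0,-1,0]$ or $b[0,1,0,-1]$, both of which direct inspection places in $\widetilde{\mathscr{A}}$.

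The genuinely obstinate residues are the strict parity forms $f = [x,0,y,0]$ and $f = [0,x,0,y]$, together with the isolated exceptional signatures $\exactone{3} = [0,1,0,0]$ and $\allbutone{3} = [0,0,1,0]$. For $f = [x,0,y,0]$, Lemma~\ref{odd-arity3-parity} is exactly the needed dichotomy, and its tractable boundary $y = 0$ (forcing $f$ degenerate) or $x^4 = y^4$ coincides with membership in $\mathscr{P} \cup \widetilde{\mathscr{A}}$, as one can verify from the explicit lists in Section~\ref{partII:sec:prelim}. The reversed parity form $[0,x,0,y]$ reduces to the previous one via the $0 \leftrightarrow 1$ relabelling, which preserves every $=_{2k}$ and hence the complexity of $\PlCSP^2$.

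The hardest ingredient will be the exceptional case $f = \exactone{3}$ (and its reverse), where every $\partial$ and every iterated pinning collapses to $[0,1]$, $[1,0]$, a degenerate binary, or zero, so the route through Lemma~\ref{[1,a]XXX} is blocked. The key observation I plan to exploit is that $\exactone{3} \in \mathscr{M}_3$ of Part~I with orthogonal transformation $H = I_2$ (since $\exactone{3} = \tfrac{1}{2}\Sym_3^2([1,0];[0,1])$), and analogously $\allbutone{3} \in \mathscr{M}_3$ with $H = \trans{0}{1}{1}{0}$. Applying Lemma~\ref{lem:dichotomy:M3} of Part~I to $\mathcal{F} = \mathcal{EQ}_2 \cup \{f\}$ then yields $\numP$-hardness of $\PlHolant(\mathcal{F}) = \PlCSP^2(f)$ unless $\mathcal{EQ}_2 \subseteq H\mathscr{M} = \mathscr{M}$; but $=_4 \in \mathcal{EQ}_2$ manifestly violates the parity/geometric-progression condition of matchgate signatures given in Proposition~\ref{prop:match:symmetric} of Part~I, so $\mathcal{EQ}_2 \not\subseteq \mathscr{M}$ and hardness follows. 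This closes the final case and completes the ternary dichotomy.
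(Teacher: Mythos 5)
Your route for the ``hardest ingredient,'' the signatures $\exactone{3}$ and $\allbutone{3}$, is circular. You invoke Lemma~\ref{lem:dichotomy:M3} of Part~I, but that lemma is proved via Lemma~\ref{lem:dichotomy:M2}, and the proof of Lemma~\ref{lem:dichotomy:M2} applies Theorem~\ref{thm:PlCSP2} --- which is precisely the Pl-$\#$CSP$^2$ dichotomy that Part~II (and hence Lemma~\ref{lem:PlCSP2:dichotomy:ternary} as an ingredient of it, via Theorem~\ref{odd-single} and Theorem~\ref{odd-arity-dichotomy}) is trying to establish. Moreover, the detour is unnecessary: under the $0 \leftrightarrow 1$ relabelling you already introduce for $[0,x,0,y]$, $\exactone{3}=[0,1,0,0]$ becomes $[0,0,1,0]$, which is of the form $[x,0,y,0]$ with $x=0$ and $y=1$, so $y\neq 0$ and $x^4\neq y^4$, and Lemma~\ref{odd-arity3-parity} directly yields $\numP$-hardness. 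The same for $\allbutone{3}$. These are not exceptional cases at all once the parity-form analysis is carried out.

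There is also a genuine gap in the ``generic case.'' The gate $u_2=\partial_f(=_4)=[f_0,f_3]$ is a gadget whose dangling edge emanates from the $(=_4)$ vertex, so in the bipartite problem $\plholant{\mathcal{EQ}_2}{\{f\}}$ it is a \emph{LHS} signature. Lemma~\ref{[1,a]XXX} requires $[1,\omega]\in\mathcal{F}$, i.e., a unary on the RHS, so having $f_0 f_3\neq 0$ does not directly ``trigger'' that lemma. The paper's own proof of the case $f_0+f_2=f_1+f_3=0$ has to execute a chain of derivatives --- $[f_0,-f_1]$ on LHS, then a RHS binary, then $[1,0,-1]$ on LHS, and finally $2[f_0,f_1]$ on RHS --- precisely to land the unary on the right side. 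A direct $\partial^2_{[f_0,f_3]}(f)$ would produce a RHS unary, but its entries are cubic expressions in $f_0,\ldots,f_3$ that vanish on a nontrivial variety, re-introducing case analysis you have not addressed. Finally, the claim that the type $\langle a^2,-ab,a^2+b^2\rangle$ ``matches none of the tractable types'' is false when $b^2=-2a^2$: the type then becomes a multiple of $\langle 1,c,-1\rangle$; to rule out membership in $\widehat{\mathscr{M}}^\dagger$ one must check the actual form (e.g.\ via Lemma~\ref{arity-hat-M}), since having a tractable type is necessary but not sufficient.
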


\begin{proof}
 Let $f = [f_0, f_1, f_2, f_3]$.
 If $f$ satisfies parity constraints, then $f=[f_0, 0, f_2, 0]$ or $f=[0, f_1, 0, f_3]$.

 For $f=[f_0, 0, f_2, 0]$, if $f_2=0$, then $f\in\mathscr{P}$.
 If $f_0^2=f_2^2$, then $f\in\mathscr{A}$. If $f_0^2=-f_2^2$, then $f\in\mathscr{A}^{\dagger}$.
 Otherwise, we have $f_2\neq 0$ and $f_0^4\neq f_2^4$. Thus $\PlCSP^2(f)$ is \#P-hard by Lemma~\ref{odd-arity3-parity}.
 For $f=[0, f_1, 0, f_3]$, the proof follows
from a holographic transformation using
 $\left[
 \begin{smallmatrix}
 0 & 1\\
 1 & 0
 \end{smallmatrix}
 \right]$.

 In the following, assume that $f$ does not satisfy parity constraints.
 Firstly, we have $\partial(f)=[f_0+f_2, f_1+f_3]$.
\begin{itemize}
\item For $(f_0+f_2)(f_1+f_3)\neq 0$, we are done by Lemma~\ref{[1,a]XXX}.
\item For $f_0+f_2=f_1+f_3=0$, $f=[f_0, f_1, -f_0, -f_1]$.
Since $f$ does not satisfy parity constraints, we have $f_0f_1\neq 0$.
If $f_0^2=f_1^2$, then $f\in\mathscr{A}$.

Otherwise, we have $\partial_{f}(=_4)=[f_0, -f_1]$ on LHS and $\partial_{[f_0, -f_1]}(f)=[f_0^2-f_1^2, 2f_0f_1, f_1^2-f_0^2]$
on RHS. Moreover, we have $\partial_{[f_0^2-f_1^2, 2f_0f_1, f_1^2-f_0^2]}(=_4)=(f_0^2-f_1^2)[1, 0, -1]$ on LHS, where
$f_0^2-f_1^2\neq 0$. So we have $\partial_{[1, 0, -1]}(f)=2[f_0, f_1]$ on RHS.
Then we are done by Lemma~\ref{[1,a]XXX} and $f_0f_1\neq 0$.

\item For $f_0+f_2\neq 0$, $f_1+f_3=0$, we have
 $f_1=-f_3\neq 0$ since $f$ does not satisfy parity constraints.
Note that we have $\partial(f) = (f_0+f_2)[1, 0]$ in RHS, 
so we have $\partial^2_{[1, 0]}(f)=[f_0, f_1]$ in RHS.
If $f_0\neq 0$, then we are done by Lemma~\ref{[1,a]XXX}.
If $f_0=0$, then $f_2\neq 0$ since $f_0+f_2\neq 0$.
Note that we have $f_1[0, 1]$ and $f_2[1, 0]$ now.
Thus we have $\partial_{[1, 0]}[\partial_{[0, 1]}(f)]=[f_1, f_2]$.
Then we are done by Lemma~\ref{[1,a]XXX}.
\item For $f_0+f_2=0, f_1+f_3\neq 0$,
the proof follows from a holographic transformation using
 $\left[
 \begin{smallmatrix}
 0 & 1\\
 1 & 0
 \end{smallmatrix}
 \right]$.
\end{itemize}
\end{proof}

The next lemma shows that if we have an odd arity signature in
$\widetilde{\mathscr{M}} \setminus (\mathscr{P} \cup \widetilde{\mathscr{A}})$,
then we can prove Theorem~\ref{dichotomy-pl-csp2} directly.
The key point is that we can use such a signature to get a unary $[1, \omega]$ with $\omega \neq 0$.

\begin{lemma}\label{M-odd}
Let $\mathcal{F}$ be a symmetric
signature set and $f\in\mathcal{F}$ has odd arity.
\begin{itemize}
\item If $f\in\widehat{{\mathscr{M}}}\setminus(\mathscr{P}\cup\widetilde{\mathscr{A}})$,
then either $\mathcal{F}\subseteq\widehat{\mathscr{M}}$ or $\PlCSP^2(\mathcal{F})$ is $\#${\rm P-}hard.
\item If $f\in\widehat{\mathscr{M}}^\dagger\setminus(\mathscr{P}\cup\widetilde{\mathscr{A}})$,
then  either $\mathcal{F}\subseteq\widehat{\mathscr{M}}^\dagger$ or $\PlCSP^2(\mathcal{F})$ is $\#${\rm P-}hard.
\end{itemize}
\end{lemma}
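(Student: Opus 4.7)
The plan is to use $f$ to construct a nonzero unary signature $[1,\omega]$ via the derivative calculus and then invoke Lemma~\ref{[1,a]XXX}. In the first bullet the unary produced will have $\omega\in\{1,-1\}$, and in the second bullet $\omega\in\{i,-i\}$. Once such a unary is available, Lemma~\ref{[1,a]XXX} leaves only two possibilities: either $\PlCSP^2(\mathcal{F})$ is $\numP$-hard, or $\mathcal{F}$ lies in one of the five tractable classes $\mathscr{P},\mathscr{A},\mathscr{A}^\dagger,\widehat{\mathscr{M}},\widehat{\mathscr{M}}^\dagger$. The hypothesis $f\in\widehat{\mathscr{M}}\setminus(\mathscr{P}\cup\widetilde{\mathscr{A}})$ directly rules out $\mathscr{P}$, $\mathscr{A}$, and $\mathscr{A}^\dagger$, and Corollary~\ref{M-2-M-A-P} rules out $\widehat{\mathscr{M}}^\dagger$, leaving only $\mathcal{F}\subseteq\widehat{\mathscr{M}}$; the second bullet is symmetric.

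To build the unary I will first invoke Lemma~\ref{M-2-M-NOT-IN-A-AND-P} to put $f$ into one of its three explicit normal forms (the binary case of that lemma does not arise since $f$ has odd arity $n\ge 3$). For the first bullet, if $f=[s,t]^{\otimes n}\pm[t,s]^{\otimes n}$, the assumption $s^4\neq t^4$ forces $s^2+t^2\neq 0$ as well as $s+t\neq 0$ and $s-t\neq 0$, so $\partial^{(n-1)/2}(f)$ evaluates to a nonzero multiple of $[1,1]$ or $[1,-1]$; if $f_k=\lambda(\pm 1)^k(n-2k)$, the calculus gives $\partial^{(n-1)/2}(f)=2^{(n-1)/2}\lambda[1,\mp 1]$. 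For the second bullet, if $f=[s,ti]^{\otimes n}\pm[t,si]^{\otimes n}$, iterating $\partial$ works because $\partial([s,ti]^{\otimes n})=(s^2-t^2)[s,ti]^{\otimes n-2}$ and $s^2-t^2\neq 0$, yielding a nonzero multiple of $[1,\pm i]$. The remaining normal form $f_k=\lambda(\pm i)^k(n-2k)$ requires a different operator because $\partial$ annihilates $[1,\pm i]^{\otimes m}$; here I will use $\partial_{=_4}$, noting from the calculus that $\partial_{=_4}^{(n-1)/4}(f)$ when $n\equiv 1\pmod 4$, respectively $\partial(\partial_{=_4}^{(n-3)/4}(f))$ when $n\equiv 3\pmod 4$, is a nonzero multiple of $[1,\mp i]$ or $[1,\pm i]$.

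With the nonzero unary $[1,\omega]$ realized from $f$ in $\PlCSP^2(\mathcal{F})$, Lemma~\ref{[1,a]XXX} applied to $\mathcal{F}\cup\{[1,\omega]\}$ gives the dichotomy described above, and the exclusions via the hypothesis on $f$ together with Corollary~\ref{M-2-M-A-P} pin the only tractable alternative to $\mathcal{F}\subseteq\widehat{\mathscr{M}}$ (respectively $\widehat{\mathscr{M}}^\dagger$). The only genuine subtlety is the third normal form $f_k=\lambda(\pm i)^k(n-2k)$ in the second bullet, where naive iteration of $\partial$ collapses to zero since $\partial([1,\pm i]^{\otimes m})=0$; this is precisely the situation for which the $\partial_{=_4}$ trick, already recorded in the calculus preliminaries, was prepared. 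All other cases are routine iterated self-loops down to arity one.
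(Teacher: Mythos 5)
Your proposal is correct and follows essentially the same route as the paper's proof: produce a nonzero unary $[1,\omega]$ from $f$ by iterated $\partial$ (or $\partial_{=_4}$ when $\partial$ degenerates), invoke Lemma~\ref{[1,a]XXX}, and then eliminate the other four tractable classes via the hypothesis on $f$ together with Corollary~\ref{M-2-M-A-P}. Your case analysis via Lemma~\ref{M-2-M-NOT-IN-A-AND-P} and the switch to $\partial_{=_4}$ for the normal form $f_k=\lambda(\pm i)^k(n-2k)$ match the paper exactly.
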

\begin{proof}
We will use our calculus with the derivative operator $\partial$.
Firstly, we prove the lemma for $f\in\widehat{{\mathscr{M}}}\setminus(\mathscr{P}\cup\widetilde{\mathscr{A}})$.
We already have
$\mathcal{F} \nsubseteq\mathscr{P}$,
$\mathcal{F}\nsubseteq\mathscr{A}$,
$\mathcal{F}\nsubseteq\mathscr{A}^\dagger$ by the
presence of $f$,
and $\mathcal{F}\nsubseteq\widehat{\mathscr{M}}^\dagger$
 by Corollary~\ref{M-2-M-A-P}.
If we can construct
 a unary $[a, b]$ with $ab\neq 0$, then we can finish the
proof  by Lemma~\ref{[1,a]XXX}.

As $f \not \in \mathscr{P}$ and has odd arity,
its arity $n \ge 3$. By Lemma~\ref{M-2-M-NOT-IN-A-AND-P},
the signature $f \in \widehat{{\mathscr{M}}} \setminus(\mathscr{P}\cup\widetilde{\mathscr{A}})$ can take
one of the following two forms
(see the Calculus after Definition~\ref{deriviative}):
\begin{itemize}
\item
For $f=[s, t]^{\otimes n}\pm [t, s]^{\otimes n}$,
where $n \ge 3$ is odd, we have $st \not =0$ and $s^4 \not = t^4$.
Thus we have $\partial^{\frac{n-1}{2}}(f)=
(s^2+t^2)^{\frac{n-1}{2}}([s,t] \pm [t, s])
=(s^2+t^2)^{\frac{n-1}{2}}(s\pm t)[1, \pm 1]$,
a nonzero multiple of $[1, \pm 1]$.
 So we are done by Lemma~\ref{[1,a]XXX}.

 \item For $f_k=\lambda(\pm 1)^k(n-2k)$, we
have $\partial^{\frac{n-1}{2}}(f)=  2^{\frac{n-1}{2}} \lambda [1, \mp 1]$
and we are done by Lemma~\ref{[1,a]XXX}.
\end{itemize}

\noindent
Similarly, for $f\in\widehat{{\mathscr{M}}}^\dagger\setminus(\mathscr{P}\cup\widetilde{\mathscr{A}})$,
 we just need to construct a unary $[a, b]$ with $ab\neq 0$.

\begin{itemize}
\item For $f=[s, ti]^{\otimes n}\pm [t, si]^{\otimes n}$,
we have $\partial^{\frac{n-1}{2}}(f)
= (s^2-t^2)^{\frac{n-1}{2}} [ s, ti] \pm
(t^2 - s^2)^{\frac{n-1}{2}} [t, si]
= (s^2-t^2)^{\frac{n-1}{2}} (s \pm t) [ 1, \pm i]$.
By Lemma~\ref{M-2-M-NOT-IN-A-AND-P}, we have
$st \not =0$ and
$s^4\neq t^4$,  and so
this is a nonzero multiple of $[1, \pm i]$.
So we are done by Lemma~\ref{[1,a]XXX}.

\item For $f_k=\lambda(\pm i)^k(n-2k)$,
if $n\equiv 1\pmod 4$,  we have $\partial^{\frac{n-1}{4}}_{=_4}(f)
=  2^{\frac{n-1}{4}}\lambda[1, \mp i]$
and we are done by Lemma~\ref{[1,a]XXX}.
If $n\equiv 3\pmod 4$, we have $\partial[\partial^{\frac{n-3}{4}}_{=_4}(f)]=
2^{\frac{n+5}{4}} \lambda [1, \pm i]$
and we are done by Lemma~\ref{[1,a]XXX}.
\qedhere
\end{itemize}
\end{proof}

We remark that the use of $\partial_{=_4}$ instead of just
$\partial$ in this proof is necessary,
because $\partial^2(f) =0$ when $f_k=\lambda(\pm i)^k(n-2k)$, and $n \ge 5$.
One may also suppose that the case for 
$\widehat{{\mathscr{M}}}^\dagger\setminus(\mathscr{P}\cup\widetilde{\mathscr{A}})$
can be reduced to the case for 
$\widehat{{\mathscr{M}}}\setminus(\mathscr{P}\cup\widetilde{\mathscr{A}})$
by the transformation
$T = \left[
 \begin{smallmatrix}
 1 & 0\\
 0 & i
 \end{smallmatrix}
 \right]$.
While $T$ transforms $\widehat{{\mathscr{M}}}^\dagger$ to
 $\widehat{{\mathscr{M}}}$, and keeps
$\mathscr{P}$ and $\widetilde{\mathscr{A}}$ invariant,
this transformation does not keep $\mathcal{EQ}_2$ invariant.
In fact $[1,0,1]T^{\otimes 2} = [1,0,-1] \not \in \mathcal{EQ}_2$.
Therefore we need to handle the proof for
$\widehat{{\mathscr{M}}}^\dagger \setminus(\mathscr{P}\cup\widetilde{\mathscr{A}})$ separately.

By definitions of $\mathscr{P}$ and $\widetilde{\mathscr{A}}$,
we have the following simple lemma.

\begin{lemma} \label{widetilde-affine-and-P-parity}
 If $f\in\mathscr{P}\cup\widetilde{\mathscr{A}}$,
 then $f$ satisfies parity constraints iff
$f$ belongs to the following set, up to a scalar factor
\[\left\{[1, 0]^{\otimes n}, [0, 1]^{\otimes n},
 [1, 0]^{\otimes 2n}+t[0, 1]^{\otimes 2n},
 [1, \rho]^{\otimes n}\pm[1, -\rho]^{\otimes n},
 [1, \alpha]^{\otimes n}\pm[1, -\alpha]^{\otimes n}~\mid~t\neq 0, n\geq 1\right\}.\]
\end{lemma}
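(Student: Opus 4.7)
The plan is to verify both directions by case analysis on the explicit characterizations of symmetric signatures in $\mathscr{P}$, $\mathscr{A}$, and $\mathscr{A}^\dagger$ that have been recorded earlier in the paper. The ``if'' direction is a routine computation of entries: for $[1,0]^{\otimes n}$ and $[0,1]^{\otimes n}$ only one entry is nonzero; for $[1,0]^{\otimes 2n} + t[0,1]^{\otimes 2n}$ only the entries at positions $0$ and $2n$ are nonzero, and these have the same parity; for $[1,\rho]^{\otimes n}\pm[1,-\rho]^{\otimes n}$ the $k$-th entry equals $\rho^k(1\pm(-1)^k)$, which vanishes on all odd (resp.\ even) $k$; and similarly for $[1,\alpha]^{\otimes n}\pm[1,-\alpha]^{\otimes n}$.

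For the ``only if'' direction, I would first handle $f\in\mathscr{P}$. Recall from the preliminaries that any symmetric $f\in\mathscr{P}$ is, up to scalar, either degenerate $[a,b]^{\otimes n}$, the disequality $[0,1,0]$, or a generalized equality $[a,0,\dots,0,b]$. A degenerate $[a,b]^{\otimes n}$ satisfies parity iff $ab = 0$, giving $[1,0]^{\otimes n}$ or $[0,1]^{\otimes n}$ up to scalar. The disequality $[0,1,0]$ equals $\tfrac{1}{2}([1,1]^{\otimes 2}-[1,-1]^{\otimes 2})$, so it fits the $[1,\rho]^{\otimes n}\pm[1,-\rho]^{\otimes n}$ family with $\rho=1$, $n=2$. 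A generalized equality $[a,0,\dots,0,b]$ of arity $n$ with $ab\neq 0$ satisfies parity iff $n$ is even, placing it in the $[1,0]^{\otimes 2n}+t[0,1]^{\otimes 2n}$ family; if $ab=0$ it is degenerate and already covered.

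Next I would handle $f\in\widetilde{\mathscr{A}}$, splitting into $\mathscr{A}$ and $\mathscr{A}^\dagger$. For non-degenerate $f\in\mathscr{A}$, the explicit list $\mathscr{F}_1\cup\mathscr{F}_2\cup\mathscr{F}_3$ from Section~\ref{sec:preliminaries} applies. Signatures from $\mathscr{F}_1 = \{\lambda([1,0]^{\otimes k}+i^r[0,1]^{\otimes k})\}$ already lie in $\mathscr{P}$ and have been classified above. For $\mathscr{F}_2$, the $j$-th entry of $[1,1]^{\otimes k}+i^r[1,-1]^{\otimes k}$ is $1+i^r(-1)^j$; this vanishes on all $j$ of one parity iff $i^r=\pm 1$, i.e.\ $r\in\{0,2\}$, yielding exactly $[1,1]^{\otimes k}\pm[1,-1]^{\otimes k}$, which is of the form $[1,\rho]^{\otimes k}\pm[1,-\rho]^{\otimes k}$ with $\rho=1$. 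The analysis for $\mathscr{F}_3$ is identical and yields $\rho=i$; other values $\rho\in\{-1,-i\}$ with $\rho^4=1$ produce the same signatures up to sign. Finally, for $f\in\mathscr{A}^\dagger=T\mathscr{A}$ with $T=\mathrm{diag}(1,\alpha)$ and $\alpha^4=-1$, I would apply $T$ entrywise to the three families: $T$ leaves the $\mathscr{F}_1$-type generalized equalities in the same form, while $T\mathscr{F}_2$ gives $[1,\alpha]^{\otimes k}\pm[1,-\alpha]^{\otimes k}$ (for $r\in\{0,2\}$) and $T\mathscr{F}_3$ gives $[1,i\alpha]^{\otimes k}\pm[1,-i\alpha]^{\otimes k}$; noting that $i\alpha$ is itself a fourth root of $-1$, both fall into the $[1,\alpha]^{\otimes n}\pm[1,-\alpha]^{\otimes n}$ family of the statement. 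The case $r\in\{1,3\}$ in $\mathscr{F}_2$ or $\mathscr{F}_3$ (and their $T$-images) gives entries $1\pm i(-1)^j$ that are never zero, and so never satisfies parity.

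There is no real obstacle in this proof: it is entirely a bookkeeping exercise against the previously established explicit forms. The one small point to be careful about is making sure the degenerate and low-arity boundary cases (arity $1$, arity $2$, and $n=1$ in the tensor families, which collapse to unary signatures) are all consistently absorbed into one of the five listed forms, as illustrated by the observation that $[1,\rho]\pm[1,-\rho]$ is a scalar multiple of $[1,0]$ or $[0,1]$.
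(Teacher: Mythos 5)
Your proof is correct. The paper actually omits a proof entirely, introducing the lemma with the phrase ``By definitions of $\mathscr{P}$ and $\widetilde{\mathscr{A}}$, we have the following simple lemma,'' and instead presents the relevant explicit forms in the Venn diagram of Figure~\ref{fig:venn_diagram:A_Adagger_P}. Your case analysis against the explicit characterizations of $\mathscr{P}$, $\mathscr{F}_1\cup\mathscr{F}_2\cup\mathscr{F}_3$, and $\mathscr{A}^\dagger=\mathscr{T}_4\mathscr{A}$ is precisely the verification the paper considers routine; in particular, your observation that a diagonal transformation $\operatorname{diag}(1,\alpha)$ preserves parity constraints is the reason the $\mathscr{A}^\dagger$ case reduces cleanly to the $\mathscr{A}$ case, and your handling of the boundary/degenerate cases (including noting that $i\alpha$ is again a fourth root of $-1$) closes the bookkeeping.
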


The next lemma shows that if we have a nonzero odd arity signature
$f\in\mathscr{P}\cup \widetilde{\mathscr{A}}$ that does not satisfy parity constraints,
then we can obtain a unary $[a, b]$ with $ab\neq 0$.
Note that if we have a unary $[a, b]$ with $ab\neq 0$, then we can 
apply Lemma~\ref{[1,a]XXX}.

\begin{lemma}\label{odd-nonzero-parity-unary-construction}
If $f\in\mathscr{P}\cup \widetilde{\mathscr{A}}$
has odd arity and does not satisfy parity constraints,
then we can
construct a unary $[a, b]$ with $ab\neq 0$ in  $\PlCSP^2(f)$.
\end{lemma}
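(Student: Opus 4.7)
The plan is to exhibit a single planar gadget that extracts the extremal entries $f_0$ and $f_n$ of $f$ as the two values of a unary signature, and then verify that these entries are nonzero by a case analysis grounded in the explicit characterizations of $\mathscr{P}$ and $\widetilde{\mathscr{A}}$ supplied in Part~I. Because $f$ has odd arity $n$, the equality signature $=_{n+1}$ has even arity and therefore lies in $\mathcal{EQ}_2$. Consider the planar $\{f, =_{n+1}\}$-gate obtained by identifying all $n$ edges of one copy of $f$ with $n$ of the $n+1$ edges of $=_{n+1}$, leaving a single dangling edge. Since $=_{n+1}$ forces all its incident edges to carry the same Boolean value, a direct computation gives $\partial_f(=_{n+1}) = [f_0, f_n]$. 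The lemma will follow once I verify that $f_0 f_n \neq 0$.

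For $f \in \mathscr{P}$ of odd arity, the characterization of symmetric product-type signatures restricts $f$ to either a degenerate $[a,b]^{\otimes n}$ or a generalized equality $[a, 0, \ldots, 0, b]$, since the binary disequality is ruled out by oddness. In either form, failing the parity constraint forces $ab \neq 0$, and therefore $f_0 f_n \neq 0$. For $f \in \widetilde{\mathscr{A}} \setminus \mathscr{P}$, I will consult the ten-class list of symmetric signatures in $\mathscr{A}$ from Part~I: only the four classes in $\mathscr{F}_2$ and $\mathscr{F}_3$ corresponding to $r \in \{1, 3\}$ fail parity, and in each of those a direct check confirms that $f_0 \neq 0$ and that $f_n \neq 0$ whenever $n$ is odd. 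The companion class $\mathscr{A}^\dagger$ is the image of $\mathscr{A}$ under the diagonal transformation $T = \mathrm{diag}(1, \omega)$ with $\omega^4 = -1$, which merely scales entries by powers of $\omega$ and therefore preserves the non-vanishing of both extremal entries.

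Combining the two steps, the gadget $\partial_f(=_{n+1})$ is a planar $\PlCSP^2(f)$-gate whose signature $[f_0, f_n]$ has both entries nonzero, as required. No interpolation, holographic transformation, or recursion is needed; the entire argument is essentially a one-gadget observation plus routine bookkeeping. The only conceivable obstacle is the case analysis through $\widetilde{\mathscr{A}}$, but this is already packaged by the explicit enumeration in Part~I and reduces, after the observation that diagonal transforms do not create zero entries from nonzero ones, to inspecting four short periodic patterns.
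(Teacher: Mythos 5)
Your gadget $\partial_f(=_{n+1})$ correctly computes $[f_0, f_n]$, and your case analysis establishing $f_0 f_n \neq 0$ is sound: for $f \in \mathscr{P}$ of odd arity the two forms (degenerate $[a,b]^{\otimes n}$ or $[a,0,\dotsc,0,b]$) both require $ab \neq 0$ once the parity constraint fails, and for $f \in \widetilde{\mathscr{A}} \setminus \mathscr{P}$ the four nondegenerate patterns from $\mathscr{F}_2, \mathscr{F}_3$ with $r \in \{1,3\}$ (and their $\mathscr{T}_4$-transforms) all have nonzero entries throughout. So the mathematical content is correct. This is a genuinely simpler route than the paper's, which works through a lengthy case-by-case construction using $\partial$, $\partial_{=_4}$, and unary-weighted derivatives, tailored to each of the forms $[1,0,\dotsc,0,x]$, $[a,b]^{\otimes 2n+1}$, $[1,\pm i]^{\otimes 2n+1}$, $[1,\alpha]^{\otimes 2n+1} \pm i[1,-\alpha]^{\otimes 2n+1}$, and $[1,\rho]^{\otimes 2n+1} \pm i[1,-\rho]^{\otimes 2n+1}$.

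There is one subtlety you should address explicitly. In the bipartite picture $\plholant{\mathcal{EQ}_2}{\mathcal{F}}$ that the paper uses throughout, your gadget has its dangling edge issuing from the $=_{n+1}$ vertex, so it is a \emph{LHS} signature. The paper is careful about this: in the one subcase where it uses the very same gadget ($f = [1,\pm i]^{\otimes 2n+1}$, $n$ odd), it explicitly notes ``$\partial_f(=_{2n+2}) = [1,\mp i]$ on LHS'' and then performs an extra step $\partial^{2n}_{[1,\mp i]}(f)$ to land the unary on the RHS, which is where Lemma~\ref{[1,a]XXX} needs it (that lemma assumes $[1,\omega] \in \mathcal{F}$ so that $\partial^k_{[1,\omega]}(=_{2k})$ is a well-formed bipartite construction). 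Your proof can be completed by invoking the stated equivalence $\PlCSP^2(\mathcal{F}) \equiv_T \PlHolant(\mathcal{EQ}_2 \cup \mathcal{F})$ for $d=2$, under which any planar $\mathcal{EQ}_2 \cup \{f\}$-gate may be added to the signature set regardless of which side its dangling edge originates from. But note that the naive patch---taking $\partial^{n-1}_{[f_0,f_n]}(f)$ to transfer the unary to the RHS---does \emph{not} work in general: for $f = [1,\pm i]^{\otimes 2n+1}$ with $n$ even, one gets $\partial^{2n}_{[1,\mp i]}(f) = (1+(-1)^{n+1})^{2n}[\,\cdot\,] = 0$, which is why the paper branches on the parity of $n$ there and uses $\partial_{=_4}$ instead. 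So the non-bipartite equivalence is not merely a stylistic convenience in your argument; it is doing real work, and should be cited. With that one sentence added, your proof is a clean and uniform replacement for the paper's case analysis.
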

\begin{proof}
Let $f$ have arity $2n+1$, $n\ge 0$. Not satisfying parity constraints
implies that $f$ is not identically 0.
Up to a nonzero factor, $f$ has the following forms.

If $f\in\mathscr{P}$, then
 $f=[a, b]^{\otimes 2n+1}$ with $ab\neq0$ or
 $f=[1, 0, \ldots, 0, x]$ with $x\neq 0$.
\begin{itemize}
\item If $f=[1, 0, \ldots, 0, x], x\neq 0$, then
$\partial^{n}(f)=[1, x]$.
\item If $f=[a, b]^{\otimes 2n+1}$, $a^2+b^2\neq 0$,
then $\partial^{n}(f)=(a^2+b^2)^{n}[a, b]$.
\item For $f=[1, \pm i]^{\otimes 2n+1}$,
if $n$ is even, 
then $\partial^{\frac{n}{2}}_{=_4}(f)=2^{\frac{n}{2}}[1, \pm i]$.
If $n$ is odd, 
then we have $\partial_{f}(=_{2n+2})=[1, \mp i]$ on LHS
and we have $\partial^{2n}_{[1, \mp i]}(f)=2^{2n}[1, \pm i]$ on RHS.
\end{itemize}

For $f\in\widetilde{\mathscr{A}}\setminus\mathscr{P}$,
 we have $f=[1, \rho]^{\otimes 2n+1}\pm i[1, -\rho]^{\otimes 2n+1}$
or $f=[1, \alpha]^{\otimes 2n+1}\pm i[1, -\alpha]^{\otimes 2n+1}$.
\begin{itemize}
\item If $f=[1, \alpha]^{\otimes 2n+1}\pm i[1, -\alpha]^{\otimes 2n+1},$
then $\partial^{n}(f)=(1+\alpha^2)^{n}[1\pm i, (1\mp i)\alpha]$.
\item If
$f=[1, \rho]^{\otimes 2n+1}\pm i[1, -\rho]^{\otimes 2n+1}$ with $\rho^2=1$,
then $\partial^{n}(f)=(1+\rho^2)^{n}[1\pm i, (1\mp i)\rho]$.
\item For $f=[1, \rho]^{\otimes 2n+1}\pm i[1, -\rho]^{\otimes 2n+1}$
with $\rho^2=-1$,
and if $n$ is even, then 
we have $\partial^{\frac{n}{2}}_{=_4}(f)=2^{\frac{n}{2}}[1\pm i, (1\mp i)\rho]$.
If $n$ is odd, then $2n+1\equiv 3\pmod 4$,
and $(\pm \rho)^{2n+1} = \pm \rho^3 = \mp \rho$, by $\rho^2=-1$.
 Then we have $\partial_{f}(=_{2n+2})=
[1, \rho^{2n+1}]\pm i[1, (-\rho)^{2n+1}]
= [ 1, -\rho] \pm i[1, \rho] = (1\pm i)[1, \pm i\rho]$ on LHS.
Note that $(\frac{1\mp i}{1\pm i})^{2n}=(\mp i)^{2n}=(-1)^n=-1$ since $n$ is 
odd.

Then we have $\partial^{2n}_{[1, \pm i\rho]}(f)=(1\mp i)^{2n}[1, \rho]\pm i(1\pm i)^{2n}[1, -\rho]
=(1\pm i)^{2n}\{(\frac{1\mp i}{1\pm i})^{2n}[1, \rho]\pm i[1, -\rho]\}$$=(1\pm i)^{2n}\{-[1, \rho]\pm i[1, -\rho]\}
=(1\pm i)^{2n}[-1\pm i, -\rho(1\pm i)]$.
\qedhere
\end{itemize}
\end{proof}

If a signature $f$ satisfies parity constraints, then
there is no way to construct $[a, b]$ with $ab\neq 0$ from $f$.
In fact in $\PlCSP^2$ using $f$, the signature of any $\{f\}$-gate will 
also satisfy the parity constraints, and in particular for unary
signature, it can only be a multiple of $[1, 0]$ or $[0, 1]$.
The next lemma shows that if we have a nonzero odd arity signature
$f\in\mathscr{P}\cup \widetilde{\mathscr{A}}$ that satisfies parity constraints,
then we can obtain $[1, 0]$ or $[0, 1]$.
We also remark that in $\PlCSP^2$ using signatures of even arity
one can only produce signatures of even arity, and thus no unary signatures.

\begin{lemma}\label{odd-[1,0]-[0,1]-construction}
If a nonzero $f\in\mathscr{P}\cup\widetilde{\mathscr{A}}$ 
has odd arity 
and satisfies parity constraints,
then we can
construct a unary $[1, 0]$ or $[0, 1]$ in  {\rm Pl}-$\#{\rm CSP}^{2}(f)$.
\end{lemma}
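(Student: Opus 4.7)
The plan is to exploit the availability of even-arity \textsc{Equality} signatures on the LHS of $\PlCSP^2$ to contract $f$ down to a unary. Since $f \in \mathscr{P} \cup \widetilde{\mathscr{A}}$ satisfies parity constraints and has odd arity $n$, Lemma~\ref{widetilde-affine-and-P-parity} tells us that (up to a nonzero scalar) $f$ must be one of
\[
[1,0]^{\otimes n},\quad [0,1]^{\otimes n},\quad [1,\rho]^{\otimes n}\pm[1,-\rho]^{\otimes n},\quad \text{or}\quad [1,\alpha]^{\otimes n}\pm[1,-\alpha]^{\otimes n},
\]
since the mixed form $[1,0]^{\otimes 2n}+t[0,1]^{\otimes 2n}$ has even arity and is excluded by hypothesis. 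In all cases, because $n$ is odd, the only nonzero entries of $f$ among $\{f_0, f_n\}$ can be checked directly.

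Next, since $n$ is odd, $n+1$ is even, so $=_{n+1} \in \mathcal{EQ}_2$ is available on the LHS of $\PlCSP^2(f)$. I would form the single-variable signature $\partial_f(=_{n+1})$ by connecting all $n$ inputs of $f$ into $n$ of the $n+1$ edges of $=_{n+1}$. A direct calculation gives
\[
\partial_f(=_{n+1})(y) \;=\; \sum_{x_1,\dots,x_n \in \{0,1\}} f(x_1,\dots,x_n)\,(=_{n+1})(x_1,\dots,x_n,y) \;=\; f_0\,[1,0](y) + f_n\,[0,1](y).
\]

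Finally, I would verify case by case that exactly one of $f_0, f_n$ is nonzero, yielding a nonzero multiple of either $[1,0]$ or $[0,1]$: for $[1,0]^{\otimes n}$ we have $f_0=1,\ f_n=0$; for $[0,1]^{\otimes n}$ we have $f_0=0,\ f_n=1$; for $[1,\rho]^{\otimes n}+[1,-\rho]^{\otimes n}$ we have $f_0=2$ and $f_n = \rho^n(1+(-1)^n)=0$ since $n$ is odd; for $[1,\rho]^{\otimes n}-[1,-\rho]^{\otimes n}$ we have $f_0=0$ and $f_n = 2\rho^n \neq 0$; and identical computations handle the two $\alpha$-cases. This completes the construction. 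Since the argument is purely a one-step gadget using an even-arity equality (which is already in the LHS for free), there is no real obstacle, nor any need for interpolation, domain pairing, or calculus with $\int$; the role of the parity hypothesis is precisely that it forces $f$ into the explicit list above, and the role of odd arity is to make $n+1$ even so that $=_{n+1}$ is at our disposal.
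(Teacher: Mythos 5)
Your proposal is correct, and it takes a genuinely different and in fact cleaner route than the paper's.

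The paper's proof proceeds by cases according to the explicit form of $f$ and reduces the arity using the self-loop derivative $\partial(\cdot)$, i.e., $\partial^n(f)$. This fails when $f = [1,\rho]^{\otimes 2n+1} \pm [1,-\rho]^{\otimes 2n+1}$ with $\rho^2 = -1$, since $\partial([1,\rho]^{\otimes m}) = (1+\rho^2)[1,\rho]^{\otimes m-2} = 0$; the paper then substitutes $\partial_{=_4}$, which works because $\rho^4=1$ but only steps arity down by~$4$ at a time, so it only reaches arity~$1$ when $2n+1\equiv 1\pmod 4$. When $2n+1\equiv 3\pmod 4$ the paper falls back to exactly your construction: it forms the LHS unary $\partial_f(=_{2n+2}) = [f_0,f_{2n+1}]$ and then connects $2n$ copies of it back to $f$ to land a RHS unary. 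Your observation is that this last technique already works \emph{uniformly} across all cases --- the single gadget $\partial_f(=_{n+1})$ collapses $f$ to $[f_0,f_n]$, and the parity hypothesis plus the explicit list from Lemma~\ref{widetilde-affine-and-P-parity} forces exactly one of $f_0,f_n$ to be nonzero --- so the $\partial$ vs.\ $\partial_{=_4}$ vs.\ arity-mod-$4$ case split is unnecessary. This is a real simplification. The only delta between your conclusion and the paper's is that your unary sits on the LHS of the bipartite Holant instance, whereas the paper (except in its last sub-case) produces a RHS unary via $\partial^n(f)$. This is immaterial: the later uses (Lemma~\ref{odd-[a,b]-parity}, Theorems~\ref{odd-mixing-theorem} and~\ref{odd-single}) all apply $\partial^k_{[1,0]}(\cdot)$ or $\partial^k_{[0,1]}(\cdot)$ to RHS signatures, for which a LHS unary is what is directly needed; if one insists on a RHS unary it is obtained at once as $\partial^{n-1}_u(f)$ as the paper itself does. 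A one-sentence remark to that effect would make the match with later usage fully explicit.
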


\begin{proof}
 By Lemma~\ref{widetilde-affine-and-P-parity},
 an nonzero $f$ of odd arity belongs to the following set, up to a nonzero
factor, 
 \[
  \left\{
         [1, 0]^{\otimes 2n+1},
         [0, 1]^{\otimes 2n+1},
         [1,   \rho]^{\otimes 2n+1} \pm [1, -  \rho]^{\otimes 2n+1}],
         [1, \alpha]^{\otimes 2n+1} \pm [1, -\alpha]^{\otimes 2n+1}]
~|~ n \ge 0
        \right\}.
 \]

For $f=[1, 0]^{\otimes 2n+1}$ or $f=[0, 1]^{\otimes 2n+1}$
we have $\partial^{n}(f)=[1, 0]$ or $[0, 1]$ respectively.

For $f=[1, \alpha]^{\otimes 2n+1}\pm [1, -\alpha]^{\otimes 2n+1},$
 $\partial^{n}(f)=(1+\alpha^2)^{n}[1\pm 1, (1\mp 1)\alpha]$, 
 a nonzero multiple of
$[1, 0]$ or $[0, 1]$.

For $f=[1, \rho]^{\otimes 2n+1}\pm [1, -\rho]^{\otimes 2n+1}$ with $\rho^2=1$,
 $\partial^{n}(f)=(1+\rho^2)^{n}[1\pm 1, (1\mp 1)\rho]$, 
a nonzero multiple of $[1, 0]$ or $[0, 1]$.

For $f=[1, \rho]^{\otimes 2n+1}\pm [1, -\rho]^{\otimes 2n+1}$, with $\rho^2=-1$,
if $2n+1\equiv 1\pmod 4$, then $\partial^{\frac{n}{2}}_{=_4}(f)=2^{\frac{n}{2}}[1\pm 1, (1\mp 1)\rho]$,
a nonzero multiple of $[1, 0]$ or $[0, 1]$.
%
If  $2n+1\equiv 3\pmod 4$,
then $(\pm \rho)^{2n+1} = \mp \rho$.
If we write $f = [f_0, f_1, \ldots, f_{2n+1}]$, then
exactly one of $f_0$ and $f_{2n+1}$ is nonzero.
We have the unary $u = \partial_{f}(=_{2n+2}) = [f_0, f_{2n+1}]$ in LHS,
a nonzero multiple of $[1, 0]$ or $[0, 1]$.
Then we get $\partial^{2n}_{u}(f)$ in RHS,
also a nonzero multiple of $[1, 0]$ or $[0, 1]$.
%
\end{proof}

%

The next lemma shows that if we already have $[1, 0]$ or $[0, 1]$
and also a signature $f$ of \emph{any} arity
 that does not satisfy the parity constraints,
then we can construct a unary $[a, b]$ with $ab\neq 0$.

\begin{lemma}\label{odd-[a,b]-parity}
If $f$ does not satisfy the parity constraints,
then we can construct a unary $[a, b]$ with $ab\neq 0$ in $\PlCSP^2([1, 0], f)$ or
$\PlCSP^2([0, 1], f)$.
\end{lemma}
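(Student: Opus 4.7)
The plan is to induct on the arity $n$ of $f$, after observing that the case with $[0,1]$ reduces to the case with $[1,0]$ by reversing all inputs of $f$ (an operation preserving $\mathcal{EQ}_2$ and the non-parity hypothesis); I will therefore assume throughout that $[1, 0]$ is available. The base case $n = 1$ is immediate: $f = [f_0, f_1]$ failing parity forces $f_0 f_1 \neq 0$, so $f$ itself is the desired unary. For $n \ge 2$, write $E(f) = (f_0, f_2, \dots)$ and $O(f) = (f_1, f_3, \dots)$; the hypothesis gives that both subsequences are nonzero.

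The first step is $\partial_{[1,0]}^{n-1}(f) = [f_0, f_1]$; if $f_0 f_1 \neq 0$ we are done. Otherwise let $i_0$ be the smallest index with $f_{i_0} \neq 0$. When $i_0 \ge 1$, pinning $n - i_0$ inputs to $0$ yields the \textsc{Gen-Eq} $[0, \ldots, 0, f_{i_0}]$ of arity $i_0$, and iterated self-loops $\partial$ reduce this either to $[0, f_{i_0}]$ (when $i_0$ is odd, giving a $[0,1]$-pin) or to $f_{i_0}[0,1]^{\otimes 2}$ (when $i_0$ is even, giving a pin-two-to-$1$ operation via $\partial_{[0,1]^{\otimes 2}}$). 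Combined with $[1,0]$, these extra pinnings allow me to realize $[f_p, f_{p+1}]$ as a unary for appropriate indices $p$, so it suffices to find a $p$ with $f_p f_{p+1} \neq 0$. When instead $i_0 = 0$, a parallel construction uses the next nonzero index $j_0 \ge 2$: applying $\partial^{(j_0-1)/2}$ to the pinned signature $[f_0, 0, \dots, 0, f_{j_0}]$ produces $[f_0, f_{j_0}]$ directly as the desired unary if $j_0$ is odd, and the binary $[f_0, 0, f_{j_0}]$ if $j_0$ is even.

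The main obstacle is the remaining case where no two consecutive entries of $f$ are simultaneously nonzero at a position of the usable parity (e.g.\ $f_0 \neq 0$, $f_1 = 0$, $j_0$ even, and the support pattern is ``interlocked''). The resolution is to apply $\partial$ (or $\partial_{=_{2k}}$) to $f$ before pinning: the operator $\partial$ acts as the consecutive-sum operator independently on $E(f)$ and $O(f)$, both of which are nonzero by hypothesis, so after finitely many applications the signature acquires two consecutive nonzero entries at a suitable position, and the previous construction applies. Since the arity strictly decreases under $\partial$, the induction terminates. The delicate part of the bookkeeping will be the degenerate sub-cases where an iterated $\partial$ annihilates one subsequence (the kernel of consecutive-sum is spanned by an alternating sequence); here $\partial_{=_4}$, whose action on the subsequences is also consecutive-sum but with step $2$, serves as an effective backup together with one or two extra $[1,0]$-pinnings to shift the subsequences out of the alternating pattern.
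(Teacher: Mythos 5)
Your plan genuinely diverges from the paper's. The paper never iterates $\partial$ to ``create'' adjacent nonzero entries; instead it folds a prefix of $f$ through an {\sc Equality} signature. Specifically, it picks the closest opposite-parity pair $(i,j)$ in the support (so $f_i f_j\neq 0$, $j-i$ odd, all intermediate entries zero), pins to $f'=\partial^{n-\ell}_{[1,0]}(f)$ where $\ell\in\{i,j\}$ is the odd one, and then realizes
\[
\partial_{f'}(=_{\ell+1})=[f_0,f_\ell]
\]
as a unary on the LHS. This single ``skipping'' move jumps directly from index $0$ to index $\ell$, something your repertoire ($[1,0]$-pins, $\partial$, $\partial_{=_4}$) cannot do: those operations only slice a contiguous block or sum entries at fixed stride, they never selectively delete intermediate indices. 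Once the paper has $[f_0,f_\ell]$, the two cases are immediate. If $f_0=0$ this is a $[0,1]$ pin, used to slice out $[f_i,0,\dots,0,f_j]$ of odd arity $j-i$ and self-loop down to $[f_i,f_j]$; if $f_0\neq 0$ the LHS unary $[f_0,f_\ell]$ is itself used as a weighted derivative $\partial^{m-1}_{[f_0,f_\ell]}$ on $[f_0,0,\dots,0,f_m]$ to produce the RHS unary $[f_0^m,f_\ell^{m-1}f_m]$. Your proposal has no analogue of either of these two concluding steps.

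The concrete gaps in your sketch are the following. First, the case $i_0=0$ with $j_0$ even produces the binary $[f_0,0,f_{j_0}]$, and the proposal stops there without saying how to turn that into a unary with both entries nonzero (a $\partial$ on it is a scalar; the other tools don't obviously apply). Second, the central claim -- that iterating $\partial$ or $\partial_{=_4}$ eventually produces two \emph{accessible} adjacent nonzero entries -- is asserted, not proved; it needs care because $\partial$ can push a non-parity signature into one that \emph{does} satisfy parity (one of $E,O$ can vanish while the other survives), after which the ``$E(f)$ and $O(f)$ both nonzero'' hypothesis that underlies the induction is lost, and $\partial_{=_4}$ has its own nontrivial kernel. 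Third, even granting that $\partial^k(f)$ has $g_p g_{p+1}\neq 0$ somewhere, extracting $[g_p,g_{p+1}]$ requires pinning $p$ inputs to $1$, which needs $[0,1]$ (or a $p$ of the right parity when only $[0,1]^{\otimes 2}$ is available); whether the right pins exist at that stage depends on the leading-zero structure of $\partial^k(f)$, which you don't control. Taken together, the approach as written is not a proof: it is missing precisely the equality-fold ingredient that lets the paper bypass all of this bookkeeping.
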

\begin{proof}
We prove the lemma for $\PlCSP^2([1, 0], f)$.
The proof for the other case follows from a holographic transformation by
$\left[\begin{smallmatrix}0 & 1\\
1 & 0\end{smallmatrix}\right]$.

Let $f=[f_0, f_1, \ldots, f_n]$.
Since $f$ does not satisfy the parity constraints,
there exist $0 \le i < j \le n$ such that
$[f_i, f_{i+1}, \ldots, f_{j-1}, f_j]=[f_i, 0, \ldots, 0, f_j]$,
where $f_if_j\neq 0$ and $j-i$ is odd.
We can get both $f'=\partial^{n-i}_{[1, 0]}=[f_0, f_1, \dots, f_i]$  and
$f''=\partial^{n-j}_{[1, 0]}=[f_0, f_1, \dots, f_j]$ on RHS.
Either $i$ or $j$ is odd.
And so we have either $=_{i+1}$ or $=_{j+1}$,
and we can get either $\partial_{f'}(=_{i+1})=[f_0, f_i]$ 
or $\partial_{f''}(=_{j+1})=[f_0, f_j]$ on LHS.
Without loss of generality, assume that we have $[f_0, f_i]$ on LHS.
 

If $f_0=0$, then we have $[0, 1]$ on LHS 
and $f'''=\partial^{\frac{j-i-1}{2}}(\partial^{i}_{[0,1]}(f''))
= \partial^{\frac{j-i-1}{2}}([f_i, 0, \ldots, 0, f_j])
=[f_i, f_j]$ on RHS, and we are done.

If $f_0\neq 0$, let $m=\displaystyle\min_{1\leq k\leq n}\{k~\mid~f_k\neq 0\}$.
(As $j>0$ and $f_j \not =0$, this $m$ is well-defined.)
Then $f^{(4)}=\partial^{n-m}_{[1, 0]}(f)=[f_0, 0, \ldots, 0, f_m]$.
Moreover,  we have $\partial^{m-1}_{[f_0, f_i]}(f^{(4)})=[f^{m}_0, f^{m-1}_if_m]$.
\end{proof}

The next lemma assumes the presence of a
non-degenerate binary {\sc Gen-Eq}. The conclusion is about
a transformed signature but still in the $\PlCSP^2$
setting.

\begin{lemma}\label{mixing-P-binary}
For any $x\not = 0$ and any signature $f$ of arity $2n$,
let $\hat{f}=\left[\begin{smallmatrix} 1 & 0 \\
0 & x^{-\frac{1}{2}} \end{smallmatrix}\right]^{\otimes 2n}f$.
Then 
{\rm Pl}-$\#${\rm CSP}$^2(\hat{f})
\leq_T ${\rm Pl-}$\#${\rm CSP}$^{2}(f, [1, 0, x])$.
\end{lemma}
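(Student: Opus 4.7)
The natural route is to combine Valiant's holographic transformation with polynomial interpolation. Applying the transformation $S = \trans{1}{0}{0}{x^{-1/2}}$ gives $S^{-1}\hat f = f$ and $(=_{2k})\,S^{\otimes 2k} = [1,0,\dots,0,x^{-k}]$, so by Valiant's Holant theorem
\[
 \PlCSP^2(\hat f) \;\equiv_T\; \PlHolant\!\bigl(\{[1,0,\dots,0,x^{-k}]\}_{k\ge 1} \,\big|\, f\bigr).
\]
It therefore suffices to show that every weighted equality $[1,0,\dots,0,x^{-k}]$ on the LHS can be effectively simulated in $\PlCSP^2(f,[1,0,x])$.

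\paragraph{Positive-power gadgets and interpolation.} For any $m\ge 0$ and $k\ge 1$, the signature $[1,0,\dots,0,x^m]$ of arity $2k$ is realizable on the LHS as a planar gadget: take $=_{2k+2m}\in\mathcal{EQ}_2$ and attach $m$ copies of $[1,0,x]$ on the RHS so that each absorbs one cyclically-adjacent pair of edges of the equality; the remaining $2k$ edges form the external interface, and a direct computation of the gadget signature gives the claimed form. Given an instance $\Omega$ of $\PlCSP^2(\hat f)$ with transformed copy $\Omega_T$ on $E$ edges, I would then construct, for each $m\ge 0$, an instance $\Omega_m'$ of $\PlCSP^2(f,[1,0,x])$ by replacing every LHS vertex $u$ of arity $2k_u$ in $\Omega_T$ with the gadget producing $[1,0,\dots,0,x^{m k_u}]$. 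The key structural observation is that in $\PlCSP^2$ every satisfying assignment $\sigma$ has even total Hamming weight $s(\sigma)$, because each LHS equality forces its $2k_u$ incident edges into an all-$0$ or all-$1$ block. Hence
\[
 \Holant(\Omega_m') \;=\; \sum_{\sigma} x^{m\, s(\sigma)/2}\prod_v f(\sigma|_v) \;=\; \sum_{t=0}^{E/2} (x^m)^t\, W_{2t},
\]
where $W_{2t} := \sum_{\sigma:\,s(\sigma)=2t}\prod_v f(\sigma|_v)$. So $\Holant(\Omega_m') = Q(x^m)$ for a polynomial $Q(Y)$ of degree at most $E/2$, and the target value is $\Holant(\Omega) = Q(x^{-1})$. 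If $x$ is not a root of unity, then $x^0, x^1, \dots, x^{E/2}$ are distinct, and $E/2+1$ oracle calls plus Lagrange interpolation recover $Q$, from which $Q(x^{-1})$ is evaluated. If $x$ has finite multiplicative order $r$, then $x^{-1}=x^{r-1}$ is itself a non-negative power of $x$, and $\Holant(\Omega) = \Holant(\Omega_{r-1}')$ follows from a single oracle call.

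\paragraph{Main obstacle.} The hard part is precisely that no signature with a \emph{negative} power of $x$ can be built as a planar gadget from $[1,0,x]$ in $\PlCSP^2(f,[1,0,x])$: every signature realizable there has even arity, so no unary signature such as $[1,x^{-1/2}]$ is ever available, which rules out the most direct edge-by-edge implementation of the holographic transformation $S$. The plan sidesteps this by using the parity fact that $s(\sigma)$ is always \emph{even} in $\PlCSP^2$---so the needed weight $x^{-s(\sigma)/2}$ is an integer power of $x$---and bridges the gap between the positive-exponent gadgets available and the negative-exponent weight required by polynomial interpolation, with the root-of-unity case serving as a direct shortcut when the evaluation points collide.
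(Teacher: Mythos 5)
Your proof is correct, and it uses a noticeably different interpolation scheme from the paper's. The paper applies the diagonal holographic transformation to the \emph{oracle} problem, obtaining
$\PlCSP^2([1,0,x],f) \equiv_T \PlHolant([1,0,x],[1,0,0,0,x^2],\dotsc \mid [1,0,1],\hat f)$;
it then interpolates the single signature $(=_4)=[1,0,0,0,1]$ on the LHS from the available $\partial^{d-2}([1,0,\dotsc,0,x^d])=[1,0,0,0,x^d]$ for $d\ge 2$, and afterward builds every $(=_{2k})$ on the LHS from $(=_4)$ together with the transformed $[1,0,1]$ now sitting on the RHS. You instead transform the \emph{target} $\PlCSP^2(\hat f)$, which shows the needed LHS signatures are the negative-power weighted equalities $[1,0,\dotsc,0,x^{-k}]$; you then realize a parametrized family $[1,0,\dotsc,0,x^{mk}]$ of surrogate gadgets (exactly $\partial^{mk}_{[1,0,x]}(=_{2k(m+1)})$, all planar), observe via the even-Hamming-weight structure that the resulting Holant is a univariate polynomial $Q$ of degree at most $E/2$ evaluated at $x^m$, and recover $Q(x^{-1})$ either by Lagrange interpolation (non-root-of-unity case) or by a single evaluation at $m=r-1$ (root-of-unity case). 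What your version buys is directness: it skips the intermediate step of first realizing $(=_4)$ and then rebuilding the equality family from $(=_4)$ and $[1,0,1]$, and the parity observation cleanly explains why the negative powers, though not constructible as gadgets, are still accessible. This is closer in spirit to the global-simulation style of Lemma~\ref{mixing-P-global} and Lemma~\ref{mixing-P-global-binary} than to the signature-by-signature interpolation the paper uses here. The paper's version is more modular---once $(=_4)$ is realized it can be reused freely---but both routes split on the same root-of-unity dichotomy and give a polynomial-time Turing reduction.
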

\begin{proof}
After a holographic transformation by $\left[\begin{smallmatrix} 1 & 0 \\
0 & x^{\frac{1}{2}} \end{smallmatrix}\right]$,
we have
\[
 \PlCSP^2([1, 0, x], f)
 \equiv_T
 \PlHolant([1, 0, x], [1, 0, 0, 0, x^2], \cdots \mid [1, 0, 1], \hat{f}).
\]
If $x$ is a root of unity,
then there exists a $t \ge 1$ such that $x^t=1$.
Thus we have $=_{2kt}$ for all $k\geq 1$ on LHS.
Moreover, we have $=_{2k}$ by $\partial^{k(t-1)}(=_{2kt})$ on LHS
for all $k\geq 1$.
Thus we are done.

If $x$ is not a root of unity, then we have $\partial^{d-2}(E^{d}_{2d}(x))=[1, 0, 0, 0, x^{d}]$ of arity 4 on LHS
for all $d\geq 2$, where $E^{d}_{2d}(x)=[1, 0, \ldots, 0, x^d]$ has arity $2d$.
Thus we can get $[1, 0, 0, 0, 1]$ on LHS by interpolation.
Then we can get all of $=_{2k}$ on LHS since we have $[1, 0, 1]$ on RHS.
\end{proof}

\begin{lemma} \label{odd-binary-norm}
 Suppose either $f = [1, \rho]^{\otimes 3} \pm [1, -\rho]^{\otimes 3}$ or $f = [1, \alpha]^{\otimes 3} \pm [1, -\alpha]^{\otimes 3}$,
 and let $h = [1, 0, x]$.
 If $x^4 \not\in \{0,1\}$,
 then $\PlCSP^2(f, h)$ is $\#{\rm P}$-hard.
\end{lemma}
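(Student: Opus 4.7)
The plan is to apply the holographic transformation $T=\trans{1}{0}{0}{x^{1/2}}$ (for any fixed choice of square root) and reduce $\PlCSP^2(f,h)$ to $\PlCSP^2(\hat f)$ for an odd-arity signature $\hat f$ that satisfies the hypotheses of Lemma~\ref{odd-arity3-parity}. Under $T$ the RHS signature $h=[1,0,x]$ becomes $[1,0,1]$, the signature $f$ becomes $\hat f=(T^{-1})^{\otimes 3}f$, and each equality $=_{2k}$ on the LHS becomes the weighted equality $E^{k}_{2k}(x)=[1,0,\dots,0,x^{k}]$, so that
\[
  \PlCSP^2(f,h)\equiv_{T}\PlHolant\bigl(\{E^{k}_{2k}(x): k\ge 1\}\bigm|\hat f,\,[1,0,1]\bigr).
\]

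The second step is to run the simulation of $\mathcal{EQ}_2$ on the LHS that already appears inside the proof of Lemma~\ref{mixing-P-binary}: if $x^{t}=1$ for some $t\ge 1$ then $E^{t}_{2t}(x)=(=_{2t})$ and every $=_{2k}$ is recovered from such equalities by derivatives, while if $x$ is not a root of unity then from the family $\partial^{d-2}(E^{d}_{2d}(x))=[1,0,0,0,x^{d}]$ one polynomial-interpolates $[1,0,0,0,1]$ and then, using $[1,0,1]$ on the RHS as a binary equality wire, assembles every $=_{2k}$. This construction only manipulates the LHS weighted equalities and the binary equality on the RHS; it never inspects $\hat f$, so the even-arity hypothesis in Lemma~\ref{mixing-P-binary} is inessential and the argument transfers verbatim to give $\PlCSP^2(\hat f)\le_{T}\PlCSP^2(f,h)$.

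Finally, I check that $\hat f$ always fits the hypothesis of Lemma~\ref{odd-arity3-parity} under the standing assumption $x^{4}\notin\{0,1\}$. A direct expansion gives
\[
  \hat f=
  \begin{cases}
    [2,\,0,\,2\rho^{2}/x,\,0] & \text{if } f=[1,\rho]^{\otimes 3}+[1,-\rho]^{\otimes 3},\\
    [0,\,2\rho/x^{1/2},\,0,\,2\rho^{3}/x^{3/2}] & \text{if } f=[1,\rho]^{\otimes 3}-[1,-\rho]^{\otimes 3},
  \end{cases}
\]
and the analogous expressions with $\alpha$ in place of $\rho$. For the two $[0,a,0,b]$ cases, the further holographic transformation by $\trans{0}{1}{1}{0}$, which preserves $\mathcal{EQ}_2$ and hence $\PlCSP^2$, reverses the signature to the form $[b,0,a,0]$. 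Using $\rho^{8}=\alpha^{8}=1$, in every case the resulting signature has the form $[x',0,y',0]$ with $y'\ne 0$, and a short computation shows $(x')^{4}\ne (y')^{4}$ iff $x^{4}\ne 1$, which holds by hypothesis. Lemma~\ref{odd-arity3-parity} then certifies that $\PlCSP^2(\hat f)$ is \#P-hard, and therefore so is $\PlCSP^2(f,h)$.

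The only delicate point is the justification that the construction of $\mathcal{EQ}_2$ in step two is insensitive to the parity of $\arity(\hat f)$; a careful reading of Lemma~\ref{mixing-P-binary} makes this clear, but it may be cleaner to state and cite an odd-arity analogue of that lemma to remove any ambiguity.
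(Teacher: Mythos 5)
Your proposal is correct and follows the same overall path as the paper's proof: apply Lemma~\ref{mixing-P-binary} (the holographic transformation by $\trans{1}{0}{0}{x^{-1/2}}$ together with the root-of-unity/interpolation argument to rebuild $\mathcal{EQ}_2$) to reduce to $\PlCSP^2(\hat f)$, and then certify $\#$P-hardness of the transformed arity-3 signature. The paper also applies Lemma~\ref{mixing-P-binary} to a ternary $f$ without comment, so you are right that the even-arity hypothesis in its statement is cosmetic. Where you depart slightly is the final hardness certificate: the paper observes that $\hat f$ has second-order recurrence type $\langle -\rho^2/x,\,0,\,1\rangle$ with $(\rho^2/x)^4\neq 1$, rules out all tractable classes via Lemma~\ref{second-recurrence-relation}, and then invokes the full ternary dichotomy Lemma~\ref{lem:PlCSP2:dichotomy:ternary}; you instead note that $\hat f$ inherits the parity structure $[x',0,y',0]$ and apply Lemma~\ref{odd-arity3-parity} directly (after a $\trans{0}{1}{1}{0}$-flip in the odd cases). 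Your route is slightly more economical, since Lemma~\ref{odd-arity3-parity} is in fact the relevant case inside the proof of Lemma~\ref{lem:PlCSP2:dichotomy:ternary}, and your explicit $(x')^4\neq(y')^4 \Leftrightarrow x^4\neq 1$ computation is correct in all four sign/base cases.
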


\begin{proof}
We prove the lemma for $f=[1, \rho]^{\otimes 3}\pm[1, -\rho]^{\otimes 3}$.
The proof for $f=[1, \alpha]^{\otimes 3}\pm[1, -\alpha]^{\otimes 3}$
is similar and we omit it here.

Let $\hat{f}=[1, x^{-\frac{1}{2}}\rho]^{\otimes 3}\pm[1, -x^{-\frac{1}{2}}\rho]^{\otimes 3}$,
then {\rm Pl}-$\#${\rm CSP}$^2(\hat{f})
\leq${\rm Pl-}$\#${\rm CSP}$^{2}(f, h)$
by Lemma~\ref{mixing-P-binary}.
$\hat{f}$ satisfies a second order recurrence with
eigenvalues $\pm x^{-\frac{1}{2}}\rho$ with sum 0 and product
$- \rho^2/x$. Hence $\hat{f}$ has type $\langle - \rho^2/x, 0, 1 \rangle$.
Moreover, the  second recurrence relation is unique up to a scalar since
$\hat{f}$ is non-degenerate and has arity 3.
By $(x^{-1}\rho^2)^4\neq 1$,
we have
 $\hat{f}\notin\mathscr{P}\cup\widetilde{\mathscr{A}}\cup\widetilde{\mathscr{M}}$ by Lemma~\ref{second-recurrence-relation}.
So $\PlCSP^2(\hat{f})$ is $\#{\rm P}$-hard by Lemma~\ref{lem:PlCSP2:dichotomy:ternary}.
Thus $\PlCSP^2(f, h)$ is $\#{\rm P}$-hard.
\end{proof}

%

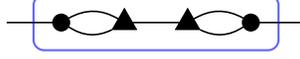
\begin{figure}[t]
 \centering
 \begin{tikzpicture}[scale=\scale,transform shape,node distance=\nodeDist,semithick]
  \node[external] (0)              {};
  \node[internal] (1) [right of=0] {};
  \node[triangle] (2) [right of=1] {};
  \node[triangle] (3) [right of=2] {};
  \node[internal] (4) [right of=3] {};
  \node[external] (5) [right of=4] {};
  \path (0) edge             (1)
        (1) edge[bend left]  (2)
            edge[bend right] (2)
        (2) edge             (3)
        (3) edge[bend left]  (4)
            edge[bend right] (4)
        (4) edge             (5);
  \begin{pgfonlayer}{background}
   \node[draw=\borderColor,thick,rounded corners,fit = (1) (4),inner sep=12pt] {};
  \end{pgfonlayer}
 \end{tikzpicture}
 \caption{Gadget used to obtain a signature of the form $[a,0,b]$ with $|a| \ne |b|$.
 The circle vertices are assigned $f$ and the triangle vertices are assigned $g$.}
 \label{fig:3}
\end{figure}

\begin{lemma} \label{odd-arity-3-mixing}
 Let $f = [1, \rho]^{\otimes 3} \pm [1, -\rho]^{\otimes 3}$ and
 $g = [1, \alpha]^{\otimes 3} \pm [1, -\alpha]^{\otimes 3}$.
 Then $\PlCSP^2(f, g)$ is $\#{\rm P}$-hard.
\end{lemma}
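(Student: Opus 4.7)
The plan is to use the planar gadget in Figure~\ref{fig:3} to realize a binary signature of the form $h = [a, 0, b]$ with $ab \ne 0$ and $a^4 \ne b^4$, and then invoke Lemma~\ref{odd-binary-norm}. After scaling, such an $h$ is a nonzero multiple of $[1, 0, x]$ with $x = b/a$ satisfying $x^4 \notin \{0, 1\}$, so the lemma applied to $f$ (which already has the form required in its hypothesis) gives $\numP$-hardness of $\PlCSP^2(f, h)$, and hence of $\PlCSP^2(f, g)$.

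For the gadget computation, I would first treat the central pair of $g$'s joined by a single edge as an arity-$4$ tensor
\[
 G(x_1, x_2, y_1, y_2) = \sum_w g(x_1, x_2, w)\, g(w, y_1, y_2).
\]
Expanding $g = H_+ + \epsilon_g H_-$ with $H_\pm = [1, \pm\alpha]^{\otimes 3}$ and $\epsilon_g \in \{\pm 1\}$, and using $\langle [1, s\alpha], [1, t\alpha] \rangle = 1 + st\alpha^2$, writes $G$ as a sum of rank-one tensors indexed by $s, t \in \{\pm 1\}$. Then contracting each side with the corresponding copy of $f = F_+ + \epsilon_f F_-$, where $F_\pm = [1, \pm\rho]^{\otimes 3}$, using $\langle [1, s\alpha], [1, u\rho] \rangle = 1 + su\alpha\rho$ and abbreviating $\beta = \alpha\rho$ (so $\beta^4 = \alpha^4 \rho^4 = -1$, and hence $\beta^2 \in \{\pm i\}$), yields, after the inner sum over $s, t$, a symmetric binary signature whose middle entry vanishes. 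In the case $\epsilon_f = \epsilon_g = +1$ this collapse is driven by $\sum_{s, t} s = \sum_{s, t} t = 0$ together with the identity $(1 + \beta^2)^2 = 2\beta^2$; the other three sign cases admit analogous simplifications.

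The main obstacle is simply the bookkeeping of the four sign cases; conceptually there is no difficulty, as the identities $\rho^4 = 1$ and $\alpha^4 = -1$ force everything to collapse cleanly. A direct check shows that in each sign case $|a|, |b| \in \{32, 64\}$ with $|a| \ne |b|$, so in particular $a^4 \ne b^4$ and $ab \ne 0$. For example, in the case $\epsilon_f = \epsilon_g = +1$ one finds $h = [16(1 + \beta^2)^2, 0, -64] = [\pm 32 i, 0, -64]$, giving $x = b/a = \pm 2i$ and $x^4 = 16$. With $h$ in hand, Lemma~\ref{odd-binary-norm} closes the argument.
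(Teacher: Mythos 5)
Your proposal is correct and follows essentially the same route as the paper: you use the same chain gadget from Figure~\ref{fig:3} (with $f$ at the two ends and the two copies of $g$ in the middle), compute the resulting binary signature $h$ of the form $[a,0,b]$, and then hand $h$ (normalized to $[1,0,x]$) together with $f$ to Lemma~\ref{odd-binary-norm}. Your numbers check out: e.g.\ in the $(+,+)$ case $16(1+\beta^2)^2 = 32\beta^2 = 32\rho^2\alpha^2$, matching the paper's $h = 32[\rho^2\alpha^2, 0, -2]$, and the norm discrepancy $|a| \ne |b|$ gives $x^4 \notin \{0,1\}$.

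The one place where the paper is slightly cleaner is the justification that $h_1 = 0$: the paper notes in one line that both $f$ and $g$ satisfy parity constraints (each is a sum or difference of $[1,\gamma]^{\otimes 3}$ and $[1,-\gamma]^{\otimes 3}$), so the signature of any gadget built from them also satisfies parity, hence the middle entry vanishes a priori. You instead carry out the rank-one tensor expansion and observe the cancellation under $(s,t,u,v)\mapsto(-s,-t,-u,-v)$, which is the same phenomenon expressed more computationally. (Your shorthand ``$\sum_{s,t} s = 0$'' is not quite the full statement of the cancellation---what is actually used is the invariance of the scalar coefficient under simultaneous sign flip of all four dummy indices, which makes the weight-one contribution odd---but the intent is right and the conclusion holds.) Either way the argument is sound.
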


\begin{proof}
 Consider the gadget in Figure~\ref{fig:3}.
 We assign $f$ to the circle vertices and $g$ to the triangle vertices.
 Let $h$ be the signature of this gadget.
 \begin{itemize}
  \item If
  $f = [1, \rho]^{\otimes 3}   + [1, -\rho]^{\otimes 3}$ and
  $g = [1, \alpha]^{\otimes 3} + [1, -\alpha]^{\otimes 3}$,
  then $h = 32 [\rho^2 \alpha^2, 0, -2]$.

  \item If
  $f = [1, \rho]^{\otimes 3}   - [1, -\rho]^{\otimes 3}$ and
  $g = [1, \alpha]^{\otimes 3} + [1, -\alpha]^{\otimes 3}$,
  then $h = 32 \rho^2 [-2, 0, \rho^2 \alpha^2]$.

  \item If
  $f = [1, \rho]^{\otimes 3}   + [1, -\rho]^{\otimes 3}$ and
  $g = [1, \alpha]^{\otimes 3} - [1, -\alpha]^{\otimes 3}$,
  then $h = 32 \alpha^2 [\rho^2 \alpha^2, 0, 2]$.

  \item If
  $f = [1, \rho]^{\otimes 3}   - [1, -\rho]^{\otimes 3}$ and
  $g = [1, \alpha]^{\otimes 3} - [1, -\alpha]^{\otimes 3}$,
  then $h = 32 \rho^2 \alpha^2 [2, 0, \rho^2 \alpha^2]$.
 \end{itemize}
Note that both $f$ and $g$ satisfy parity constraints,
and thus $h$ also satisfies that.
Hence, e.g., in the first case, $f=2[1,0,\rho^2,0]$
and $g=2[1,0, \alpha^2,0]$,  we only need to calculate
$h_0$ and $h_2$, since $h_1=0$ by parity.
In fact the left half of Figure~\ref{fig:3}, connecting $f$ to $g$,
also satisfies parity and has
the signature $4[1+ \rho^2 \alpha^2,0, 2\rho^2 \alpha^2]$,
and thus $h = 16[ (1+ \rho^2 \alpha^2)^2, 0, 4 (\rho^2 \alpha^2)^2]
= 32 [\rho^2 \alpha^2, 0, -2]$.

 Since $|\alpha \rho| = 1 \neq 2$,
 we are done by Lemma~\ref{odd-binary-norm}.
\end{proof}

Now we can prove a conditional No-Mixing theorem for $\PlCSP^2$
when a set of signatures $\mathcal{F}$ is assumed to consist
of only tractable signatures \emph{and} has 
at least one nonzero signature of odd arity.

\begin{theorem} \label{odd-mixing-theorem}
 Let $\mathcal{F} \subseteq \bigcup_{k=1}^5 S_k$ be a set of symmetric signatures that includes at least one nonzero signature of odd arity.
 If $\mathcal{F} \nsubseteq S_k$ for all $1 \leq k \leq 5$,
 then $\PlCSP^2(\mathcal{F})$ is $\#{\rm P}$-hard.
\end{theorem}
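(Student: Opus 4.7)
The plan is a case analysis driven by where in $\bigcup_{k=1}^5 S_k$ the signatures of $\mathcal{F}$ sit, together with an odd--arity witness $f_0 \in \mathcal{F}$. First I would dispose of the ``$\widetilde{\mathscr{M}}$--case'': if some odd--arity $f \in \mathcal{F}$ lies in $\widehat{\mathscr{M}} \setminus (\mathscr{P} \cup \widetilde{\mathscr{A}})$ or in $\widehat{\mathscr{M}}^{\dagger} \setminus (\mathscr{P} \cup \widetilde{\mathscr{A}})$, then Lemma~\ref{M-odd} already gives the conclusion, because by Corollaries~\ref{A-2-A-minus-P} and~\ref{M-2-M-A-P} the only way $\mathcal{F}$ could land in one of $S_1$ or $S_2$ is if $\mathcal{F} \subseteq \widehat{\mathscr{M}}$ or $\mathcal{F} \subseteq \widehat{\mathscr{M}}^{\dagger}$, which is excluded by hypothesis. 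So I may henceforth assume every odd--arity signature of $\mathcal{F}$ lies in $\mathscr{P} \cup \widetilde{\mathscr{A}}$.

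Next I split on parity. Pick a nonzero odd--arity $f \in \mathcal{F} \cap (\mathscr{P} \cup \widetilde{\mathscr{A}})$. If $f$ does \emph{not} satisfy parity constraints, Lemma~\ref{odd-nonzero-parity-unary-construction} constructs a unary $[a,b]$ with $ab \neq 0$ in $\PlCSP^2(f)$, and Lemma~\ref{[1,a]XXX} finishes the job (the hypothesis $\mathcal{F} \nsubseteq S_k$ is exactly what that lemma needs). If $f$ does satisfy parity constraints, Lemma~\ref{odd-[1,0]-[0,1]-construction} produces $[1,0]$ or $[0,1]$. Now, if \emph{any} $g \in \mathcal{F}$ fails parity, Lemma~\ref{odd-[a,b]-parity} upgrades $[1,0]$ or $[0,1]$ together with $g$ to a unary $[a,b]$ with $ab \neq 0$, and again Lemma~\ref{[1,a]XXX} gives \#P-hardness. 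By Corollary~\ref{widetilde-M-minus-P-and-A-parity} no signature in $\widetilde{\mathscr{M}} \setminus (\mathscr{P} \cup \widetilde{\mathscr{A}})$ satisfies parity, so the remaining case is that every $h \in \mathcal{F}$ satisfies parity constraints and lies in $\mathscr{P} \cup \widetilde{\mathscr{A}}$; by Lemma~\ref{widetilde-affine-and-P-parity} each such $h$ then has one of the listed canonical forms.

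The crux of the argument is this final sub-case. Using the hypothesis $\mathcal{F} \nsubseteq \mathscr{P}$, $\mathcal{F} \nsubseteq \mathscr{A}$, $\mathcal{F} \nsubseteq \mathscr{A}^{\dagger}$, I would argue that $\mathcal{F}$ must contain a signature $f \in \mathscr{A} \setminus \mathscr{P}$ and a signature $g \in \mathscr{A}^{\dagger} \setminus \mathscr{P}$ (using Corollary~\ref{A-2-A-minus-P} so these are genuinely different and neither is purely in $\mathscr{P}$); by Lemma~\ref{widetilde-affine-and-P-parity} the parity-respecting representatives are, up to scalar, $f = [1,\rho]^{\otimes m} \pm [1,-\rho]^{\otimes m}$ and $g = [1,\alpha]^{\otimes n} \pm [1,-\alpha]^{\otimes n}$. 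The target is the arity-3 mixing Lemma~\ref{odd-arity-3-mixing}, so I must reduce both $f$ and $g$ to arity 3 while staying inside their respective classes $\mathscr{A}\setminus\mathscr{P}$ and $\mathscr{A}^{\dagger}\setminus\mathscr{P}$, and while keeping odd arity at the end. The reduction is repeated application of $\partial$ (which sends $[1,\rho]^{\otimes m} \pm [1,-\rho]^{\otimes m}$ to $(1+\rho^2)([1,\rho]^{\otimes m-2} \pm [1,-\rho]^{\otimes m-2})$) together with $\partial_{=_4}$ in the degenerate situation $\rho^2 = -1$ where $\partial$ annihilates the signature, exactly as in the proof of Lemma~\ref{M-odd}. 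Care is needed to ensure the arity reaches $3$ rather than $1$ (the parity of $m$ dictates whether to apply $\partial$ enough times), and the same for $g$. Once both are reduced to arity 3, Lemma~\ref{odd-arity-3-mixing} delivers \#P-hardness.

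The step I expect to be the main obstacle is this last reduction to arity 3: the operators $\partial$ and $\partial_{=_4}$ can collapse signatures or jump parity in ways that depend on the specific exponent of $\rho$ or $\alpha$ and on $m,n \bmod 4$, so a careful bookkeeping over the four subcases for $f$ (and likewise $g$) is required, and one must also verify that the resulting arity-3 signatures are still in $\mathscr{A}\setminus\mathscr{P}$ (respectively $\mathscr{A}^{\dagger}\setminus\mathscr{P}$), not degenerate. The other cases are comparatively direct invocations of the lemmas already in place.
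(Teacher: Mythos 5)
Your proposal tracks the paper's argument closely through the reduction to the parity-constrained case, and those steps (Lemma~\ref{M-odd}, Lemma~\ref{odd-nonzero-parity-unary-construction}, Lemma~\ref{odd-[1,0]-[0,1]-construction}, Lemma~\ref{odd-[a,b]-parity}, Corollary~\ref{widetilde-M-minus-P-and-A-parity}, Lemma~\ref{widetilde-affine-and-P-parity}) are exactly what the paper uses. The gap is in the final sub-case, where you assert that the hypotheses $\mathcal{F} \nsubseteq \mathscr{P}$, $\mathcal{F} \nsubseteq \mathscr{A}$, $\mathcal{F} \nsubseteq \mathscr{A}^{\dagger}$ force $\mathcal{F}$ to contain both some $f \in \mathscr{A}\setminus\mathscr{P}$ and some $g \in \mathscr{A}^{\dagger}\setminus\mathscr{P}$. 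That inference is not valid: $\mathcal{F} \nsubseteq \mathscr{A}$ only yields a signature \emph{outside} $\mathscr{A}$, which may simply lie in $\mathscr{P} \setminus (\mathscr{A}\cup\mathscr{A}^{\dagger})$. Concretely, take $\mathcal{F} = \{f, g\}$ with $f = [1,\rho]^{\otimes 3} + [1,-\rho]^{\otimes 3}$ (witnessing $\nsubseteq \mathscr{P}$ and $\nsubseteq \mathscr{A}^{\dagger}$) and $g = [1,0]^{\otimes 2n} + t[0,1]^{\otimes 2n}$ with $t^4 \notin \{0,1\}$ (witnessing $\nsubseteq \mathscr{A}$, $\nsubseteq \mathscr{A}^{\dagger}$, and $\nsubseteq \widehat{\mathscr{M}}$, $\nsubseteq \widehat{\mathscr{M}}^{\dagger}$ by Lemma~\ref{arity-hat-M}). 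This $\mathcal{F}$ satisfies the theorem's hypotheses, all signatures are parity-respecting and in $\mathscr{P}\cup\widetilde{\mathscr{A}}$, yet it contains no $\alpha$-type signature at all, so your reduction to Lemma~\ref{odd-arity-3-mixing} is unavailable.

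The paper closes this hole by further case splitting: after stripping out the trivially common signatures, let $S$ and $T$ be the $\rho$-type and $\alpha$-type sub-families of $\mathcal{F}'$, respectively. If both are nonempty you are in the situation your proof handles (Lemma~\ref{odd-arity-3-mixing}). If exactly one is nonempty, the fact that $\mathcal{F}'$ is not contained in $\mathscr{A}$ (resp.\ $\mathscr{A}^{\dagger}$) forces $\mathcal{F}'$ to contain a generalized equality $[1,0]^{\otimes 2n} + t[0,1]^{\otimes 2n}$ with $t^4 \notin \{0,1\}$; one then pins to a binary $[1,0,t]$ and invokes Lemma~\ref{odd-binary-norm}, a separate hardness result for the pair $([1,\gamma]^{\otimes 3}\pm[1,-\gamma]^{\otimes 3},\,[1,0,t])$. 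If both $S$ and $T$ are empty one gets $\mathcal{F}' \subseteq \mathscr{P}$, a contradiction. You would need to add the ``exactly one nonempty'' branch (and the lemma it leans on) to make the argument complete.
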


\begin{proof}
 If $\mathcal{F}$ contains a signature of odd arity in $\widetilde{\mathscr{M}} \setminus(\mathscr{P} \cup \widetilde{\mathscr{A}})$,
 then we are done by Lemma~\ref{M-odd}.
 Thus we can assume that $\mathcal{F}$ contains at least one
nonzero signature of odd arity 
$f \in \mathscr{P} \cup \widetilde{\mathscr{A}}$.

By Lemma~\ref{odd-nonzero-parity-unary-construction}, if
$f$ does not satisfy parity constraints,
then we have a unary $[a, b]$ with $ab\neq 0$ and we are done by Lemma~\ref{[1,a]XXX}.
Otherwise, we have $[1, 0]$ or $[0, 1]$ by Lemma~\ref{odd-[1,0]-[0,1]-construction}.
If there exists a signature in $\mathcal{F}$ that does not satisfy parity constraints, then we can
obtain a unary $[a, b]$ with $ab\neq 0$ by Lemma~\ref{odd-[a,b]-parity}.
Thus we are done by Lemma~\ref{[1,a]XXX}.

Now we can assume that $\mathcal{F}$ includes a nonzero odd arity signature $f \in \mathscr{P}\cup\widetilde{\mathscr{A}}$
and all signatures in $\mathcal{F}$ satisfy parity constraints.
Thus
$\mathcal{F} \bigcap \left(\widetilde{\mathscr{M}} \setminus (\mathscr{P} \cup \widetilde{\mathscr{A}})\right) = \emptyset$
by Corollary~\ref{widetilde-M-minus-P-and-A-parity}.
So we have $\mathcal{F} \subseteq \mathscr{P} \cup \widetilde{\mathscr{A}}$,
i.e., $\mathcal{F} \subseteq \bigcup_{k=3}^{5}S_k$.
Then by Lemma~\ref{widetilde-affine-and-P-parity},
we have,
up to scalar multiples,
\[
 \mathcal{F}
 \subseteq
 \left\{
  \begin{array}{ll}
   [1, 0]^{\otimes n}, \quad
   [0, 1]^{\otimes n}, \quad
   [1, 0]^{\otimes 2n} + t [0, 1]^{\otimes 2n},\\
   {[1, \rho]}^{\otimes n} \pm [1, -\rho]^{\otimes n}, \quad
   [1, \alpha]^{\otimes n} \pm [1, -\alpha]^{\otimes n}
  \end{array}
  \enspace \middle| \enspace
  t \neq 0 \text{ and } n \geq 1
 \right\}.
\]

Note that the following signatures are all in $\bigcap_{k=3}^{5}S_k$
(see Figure~\ref{fig:venn_diagram:A_Adagger_P}):
\begin{align*}
 &[1,0]^{\otimes n}
 \qquad \text{ and } \qquad
 [0,1]^{\otimes n},\\
 &[1,0]^{\otimes 2n} + t [0,1]^{\otimes 2n}
 \qquad \text{with} \qquad t^4=1,\\
 &[1, \rho]^{\otimes m} \pm [1, -\rho]^{\otimes m}
 \qquad \text{ and } \qquad
 [1, \alpha]^{\otimes \ell} \pm [1, -\alpha]^{\otimes \ell}
 \qquad \text{with} \qquad 1 \le m, \ell \leq 2.
\end{align*}
Let 
\[
 \mathcal{F'}
 =
 \mathcal{F}
 \; \bigcap \;
 \left\{
  \begin{array}{cc}
   [1,0]^{\otimes 2n} + t [0,1]^{\otimes 2n},\\
   {[1, \rho]}^{\otimes m} \pm [1, -\rho]^{\otimes m}, \quad
   [1, \alpha]^{\otimes \ell} \pm [1, -\alpha]^{\otimes \ell}
  \end{array}
  \enspace \middle| \enspace
  t^4 \not\in \{0,1\}
  \mbox{ and }
  m, \ell \geq 3
 \right\}.
\]
Then $\mathcal{F'}\nsubseteq S_k$ for $3 \leq k \leq 5$.
Indeed if $\mathcal{F'}\subseteq S_k$ for some $3\leq k\leq 5$,
then $\mathcal{F} \subseteq S_k$.
Let 
\[
 S
 =
 \mathcal{F'}
 \; \bigcap \;
 \left\{
  [1, \rho]^{\otimes m} \pm [1, -\rho]^{\otimes m}
  \mid
  m \geq 3
 \right\}
 \qquad \text{and} \qquad
 T
 =
 \mathcal{F'}
 \; \bigcap \;
 \left\{
  [1, \alpha]^{\otimes \ell} \pm [1, -\alpha]^{\otimes \ell}
  \mid
  \ell \geq 3
 \right\}.
\]

If $S \neq \emptyset$ and $T \neq \emptyset$,
then there exist $g, h \in \mathcal{F'}$ such that
$g = [1, \alpha]^{\otimes m} \pm [1, -\alpha]^{\otimes m}$
and
$h = [1, \rho]^{\otimes \ell} \pm [1, -\rho]^{\otimes \ell}$,
where $m, \ell \geq 3$.
By Lemma~\ref{odd-[1,0]-[0,1]-construction},
we can get $[1, 0]$ or $[0, 1]$ from $f$.
If we have $[1, 0]$,
then we have
$g' = \partial^{m-3}_{[1, 0]}(g)=[1, \alpha]^{\otimes 3}\pm[1, -\alpha]^{\otimes 3}$
and
$h' = \partial^{\ell-3}_{[1, 0]}(h)=[1, \rho]^{\otimes 3}\pm[1, -\rho]^{\otimes 3}$,
and are done by Lemma~\ref{odd-arity-3-mixing}.
If we have $[0, 1]$,
then the proof follows from a transformation by $\trans{0}{1}{1}{0}$.

If exactly one of $S$ and $T$ is not empty,
then there
 exists some $[1, 0]^{\otimes 2n}+t[0, 1]^{\otimes 2n}$ with $t^4 \notin \{0,1\}$ in $\mathcal{F'}$,
 since otherwise $\mathcal{F'}$ would be contained in either $\mathscr{A}$ or $\mathscr{A}^\dagger$.
This contradicts $\mathcal{F'} \nsubseteq S_k$ for $3\leq k\leq 5$.
 By taking $\partial^{n-1}$,
we have $[1, 0, t]$.
Moreover, we have
$g=[1, \alpha]^{\otimes m}\pm[1, -\alpha]^{\otimes m}$ or
$h=[1, \rho]^{\otimes \ell}\pm[1, -\rho]^{\otimes \ell}$ in $\mathcal{F'}$, where $m, \ell\geq 3$.
By a similar proof with the previous case, 
first getting $[0,1]$ or $[1,0]$ by Lemma~\ref{odd-[1,0]-[0,1]-construction},
 we can have
$g'=[1, \alpha]^{\otimes 3}\pm[1, -\alpha]^{\otimes 3}$ or
$h'=[1, \rho]^{\otimes 3}\pm[1, -\rho]^{\otimes 3}$ in $\mathcal{F'}$.
Thus $\PlCSP^2(\mathcal{F'})$ is \#P-hard by Lemma~\ref{odd-binary-norm}.
So $\PlCSP^2(\mathcal{F})$ is \#P-hard.

If $S=\emptyset$ and $T=\emptyset$,
then $\mathcal{F'} \subseteq \left\{[1, 0]^{\otimes 2n} + t [0, 1]^{\otimes 2n} \mid t^4 \notin \{0,1\}\right\} \subseteq \mathscr{P}$.
This contradicts that $\mathcal{F'}\nsubseteq S_k$ for $3\leq k\leq 5$.
\end{proof}

Now we can prove the dichotomy for $\PlCSP^2$
with a single symmetric signature of odd arity.

\begin{theorem}\label{odd-single}
If  $f$ is a symmetric signature of odd arity,
then either $\PlCSP^2(f)$ is $\#{\rm P}$-hard or
$f\in\mathscr{P}\cup\widetilde{\mathscr{A}}\cup\widetilde{\mathscr{M}}$.
\end{theorem}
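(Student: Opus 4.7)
I would prove Theorem~\ref{odd-single} by induction on the odd arity $n$ of $f$. The base case $n=1$ is trivial since every unary signature is degenerate and hence lies in $\mathscr{P}$, and the base case $n=3$ is Lemma~\ref{lem:PlCSP2:dichotomy:ternary}. For the inductive step at odd $n\ge 5$, assume $f \notin \mathscr{P}\cup\widetilde{\mathscr{A}}\cup\widetilde{\mathscr{M}}$ and show $\PlCSP^2(f)$ is $\#$P-hard. Let $f' = \partial(f)$, which is a signature of odd arity $n-2\ge 3$. If $f'\equiv 0$, then by Lemma~\ref{general-f-construction} we have $f = x[1,i]^{\otimes n} + y[1,-i]^{\otimes n}$, and since $\trans{1}{0}{0}{i}^{\otimes n}(x[1,1]^{\otimes n}+y[1,-1]^{\otimes n}) = f$, we get $f\in \mathscr{A}^\dagger\subseteq\widetilde{\mathscr{A}}$, contradicting our assumption. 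Hence $f'\not\equiv 0$.

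If $f'\notin\mathscr{P}\cup\widetilde{\mathscr{A}}\cup\widetilde{\mathscr{M}}$, the induction hypothesis gives that $\PlCSP^2(f')$ is $\#$P-hard, and since $\PlCSP^2(f')\le_T\PlCSP^2(f)$ (via $\partial$), we are done. Otherwise $f'\in\mathscr{P}\cup\widetilde{\mathscr{A}}\cup\widetilde{\mathscr{M}}$, and I would split into three subcases according to where $f'$ lies. If $f' \in \widetilde{\mathscr{M}}\setminus(\mathscr{P}\cup\widetilde{\mathscr{A}})$, I would apply Lemma~\ref{M-odd} to $\{f,f'\}$; since $f\notin \widetilde{\mathscr{M}}$, the two tractable alternatives of that lemma fail and we conclude $\PlCSP^2(\{f,f'\})$, and hence $\PlCSP^2(f)$, is $\#$P-hard. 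If $f'\in\mathscr{P}\cup\widetilde{\mathscr{A}}$ and $f'$ does \emph{not} satisfy parity constraints, then by Lemma~\ref{odd-nonzero-parity-unary-construction} I can construct a unary $[a,b]$ with $ab\ne 0$, and then Lemma~\ref{[1,a]XXX} applied to $\{f,[a,b]\}$ gives $\#$P-hardness, because $f\notin$ any of $\mathscr{P},\mathscr{A},\mathscr{A}^\dagger,\widehat{\mathscr{M}},\widehat{\mathscr{M}}^\dagger$. If $f'$ does satisfy parity, Lemma~\ref{odd-[1,0]-[0,1]-construction} yields $[1,0]$ or $[0,1]$; if $f$ itself fails parity we invoke Lemma~\ref{odd-[a,b]-parity} to build $[a,b]$ with $ab\ne 0$ and conclude as before via Lemma~\ref{[1,a]XXX}.

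\textbf{The hard part} will be the remaining sub-subcase, where both $f$ and $f' = \partial(f)$ satisfy parity constraints. Via a holographic transformation by $\trans{0}{1}{1}{0}$ (which fixes $\mathcal{EQ}_2$) I may assume even-weight parity. Lemma~\ref{widetilde-affine-and-P-parity} then forces $f'$ to be one of $c[1,0]^{\otimes n-2}$, $c([1,\rho]^{\otimes n-2}+[1,-\rho]^{\otimes n-2})$ with $\rho^4=1$, or $c([1,\alpha]^{\otimes n-2}+[1,-\alpha]^{\otimes n-2})$ with $\alpha^4=-1$. Lemma~\ref{general-f-construction} gives $f = \int(f') + x[1,i]^{\otimes n} + y[1,-i]^{\otimes n}$, and even-weight parity forces $x=y$, yielding a two-parameter explicit form for $f$ in each of the three cases. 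The main obstacle is that iterating $\partial$ on a parity-satisfying signature of the tractable forms above keeps us inside the tractable classes, so one cannot simply descend to arity~$3$ and invoke Lemma~\ref{odd-arity3-parity}.

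To overcome this, I plan to verify directly in each of the three structural cases that the explicit form for $f$, together with the assumption $f\notin\mathscr{P}\cup\widetilde{\mathscr{A}}\cup\widetilde{\mathscr{M}}$, forces certain non-vanishing relations among its parameters. From these I will construct a ternary (or arity-$5$) gadget signature $g$ built from $f$ and from $[1,0]$ (obtained from $f'$) whose nonzero pattern $[a,0,b,0]$ satisfies $ab\ne 0$ and $a^4\ne b^4$; then Lemma~\ref{odd-arity3-parity} gives $\#$P-hardness of $\PlCSP^2(g)$ and hence of $\PlCSP^2(f)$. In the residual tight cases where the parameters of $f$ coincide in a way that prevents this direct construction, $f$ itself can be shown by calculus to coincide with a signature on the explicit list of Lemma~\ref{widetilde-affine-and-P-parity} (possibly after integration) or to belong to $\widetilde{\mathscr{M}}$ via Lemma~\ref{M-2-M-NOT-IN-A-AND-P}, contradicting $f\notin\mathscr{P}\cup\widetilde{\mathscr{A}}\cup\widetilde{\mathscr{M}}$. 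This will close all cases and complete the induction.
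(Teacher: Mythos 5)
Your proposal largely mirrors the paper's inductive strategy, but there is a concrete error in the step where you dismiss the case $f'\equiv 0$. You write that if $f'=\partial(f)\equiv 0$ then $f=x[1,i]^{\otimes n}+y[1,-i]^{\otimes n}$ and, because $\trans{1}{0}{0}{i}^{\otimes n}\bigl(x[1,1]^{\otimes n}+y[1,-1]^{\otimes n}\bigr)=f$, the signature $f$ lies in $\mathscr{A}^\dagger\subseteq\widetilde{\mathscr{A}}$, contradicting $f\notin\mathscr{P}\cup\widetilde{\mathscr{A}}\cup\widetilde{\mathscr{M}}$ and so $f'\not\equiv 0$. This conclusion is false on two counts. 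First, $\trans{1}{0}{0}{i}\in\mathcal{T}_4\subseteq\operatorname{Stab}(\mathscr{A})$, so the displayed transformation maps $\mathscr{A}$ to $\mathscr{A}$, not to $\mathscr{A}^\dagger$. Second, and more fundamentally, $x[1,1]^{\otimes n}+y[1,-1]^{\otimes n}$ is not in $\mathscr{A}$ unless $x$ or $y$ vanishes, or the ratio $y/x$ is a power of $i$; when $xy\neq 0$ and $x^4\neq y^4$, neither transformed version lands in any of the five tractable classes, as one checks from the type $\langle 1,0,1\rangle$ and Lemma~\ref{second-recurrence-relation}. So the case $f'\equiv 0$ is not vacuous — it is a genuine subcase with a hard branch, and the paper handles it explicitly by splitting on $2n+1\bmod 4$, reducing via $\partial_{=_4}$ to a unary $[x+y,(x-y)i]$ (done by Lemma~\ref{[1,a]XXX}) or to an arity-$3$ signature that fails the second-order recurrence criterion (done by Lemma~\ref{lem:PlCSP2:dichotomy:ternary}).

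Apart from that, your skeleton — induction, reduction to $f'$ in a tractable class, splitting on whether $f'\in\widetilde{\mathscr{M}}\setminus(\mathscr{P}\cup\widetilde{\mathscr{A}})$ (use Lemma~\ref{M-odd}), whether $f'$ fails parity (use Lemma~\ref{odd-nonzero-parity-unary-construction} then Lemma~\ref{[1,a]XXX}), and finally handling parity-respecting $f$ and $f'$ — matches the paper. For the last, hard part, the paper does \emph{not} proceed by building a ternary $[a,0,b,0]$ and applying Lemma~\ref{odd-arity3-parity}: it instead constructs arity-$4$ signatures by $\partial^{k}_{[1,0]}(f)$ (or $\partial^{k}_{[1,0,\alpha^2]}(f)$, etc.) and checks nonsingularity of the compressed signature matrix via Lemma~\ref{4-redundant}, plus a separate gadget (Figure~\ref{fig:4}) for the degenerate parameter values where that determinant vanishes, plus one invocation of Theorem~\ref{odd-mixing-theorem}. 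Your ternary-gadget plan may also work, but you have left it entirely unverified and have not identified the exceptional parameter values (e.g.\ $a=1$ when $f'=[1,1]^{\otimes n-2}+[1,-1]^{\otimes n-2}$, or $a=2$ when $f'=[1,i]^{\otimes n-2}+[1,-i]^{\otimes n-2}$) where the naive determinant or norm conditions degenerate; these exceptional values are exactly where the paper needs its auxiliary gadget. As it stands the proposal has one definite wrong step and one substantial unverified claim, so it does not yet constitute a proof.
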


\begin{proof}
Let $f$ have arity $2n+1$.
If $2n+1=1$, then $f\in\mathscr{P}$.
If $2n+1=3$, then
 we are done by Lemma~\ref{lem:PlCSP2:dichotomy:ternary}.
 In the following, assume that $2n+1\geq 5$.
 Let $f'=\partial(f)$.
If $f'\notin \mathscr{P}\cup \widetilde{\mathscr{A}}\cup \widetilde{\mathscr{M}}$,
then $\PlCSP^2(f')$ is $\#${\rm P-}hard by induction.
Thus $\PlCSP^2(f)$ is $\#${\rm P-}hard as well.
Otherwise,
$f'\in\mathscr{P}\cup \widetilde{\mathscr{A}}\cup\widetilde{\mathscr{M}}$.

If $f'\in\widetilde{\mathscr{M}}\setminus(\mathscr{P}\cup \widetilde{\mathscr{A}})$,
then we are done by Lemma~\ref{M-odd}.
So we can assume that $f'\in\mathscr{P}\cup \widetilde{\mathscr{A}}$.
Note that $f'$ has odd arity,
so if $f'$ 
does not satisfy parity constraints,
then we have $[a, b]$ with $ab\neq 0$
by Lemma~\ref{odd-nonzero-parity-unary-construction} and we are done by Lemma~\ref{[1,a]XXX}.
Otherwise,
either $f'$ is identically zero or, as $f'$ has odd arity and satisfies parity,
by Lemma~\ref{widetilde-affine-and-P-parity}
\[f'\in\left\{[1, 0]^{\otimes 2n-1}, [0, 1]^{\otimes 2n-1},
[1, \rho]^{\otimes 2n-1}\pm[1, -\rho]^{\otimes 2n-1},
[1, \alpha]^{\otimes 2n-1}\pm[1, -\alpha]^{\otimes 2n-1}\right\}.\]

If $f'\equiv 0$, then $f=x[1, i]^{\otimes 2n+1}+y[1, -i]^{2n+1}$ by Lemma~\ref{general-f-construction}.
If $x=0$ or $y=0$ or $[xy\neq 0 \wedge x^4 = y^4]$, then $f\in\mathscr{A}$.
Otherwise, $xy\neq 0 \wedge x^4 \neq y^4$.
\begin{itemize}
\item For $2n+1\equiv 1\pmod 4$, we have
$\partial^{\frac{n}{2}}_{=_4}(f)=2^{\frac{n}{2}}\{x[1, i]+y[1, -i]\}=2^{\frac{n}{2}}[x+y, (x-y)i]$.
Note that $x+y\neq 0, x-y\neq 0$ by $x^4\neq y^4$.
Then we are done by Lemma~\ref{[1,a]XXX}.
\item For $2n+1\equiv 3\pmod 4$, we have
$f''=\partial^{\frac{n-1}{2}}_{=_4}(f)=2^{\frac{n-1}{2}}\{x[1, i]^{\otimes 3}+y[1, -i]^{\otimes 3}\}$.
Note that $xy\neq 0$ and $f$ is non-degenerate.
And by its second order recurrence,   $f\in \langle 1, 0,  1\rangle$.
it follows from Lemma~\ref{second-recurrence-relation} that 
 $f''\notin\mathscr{P}\cup\widetilde{\mathscr{A}}\cup\widetilde{\mathscr{M}}$
since $x^4\neq y^4$.
Thus  $\PlCSP^2(f'')$ is $\#${\rm P-}hard by Lemma~\ref{lem:PlCSP2:dichotomy:ternary}.
So  $\PlCSP^2(f)$ is $\#${\rm P-}hard.
\end{itemize}

\vspace{.1in}
If $f'\in\{[1, 0]^{\otimes 2n-1}, [0, 1]^{\otimes 2n-1},$
$[1, \rho]^{\otimes 2n-1}\pm[1, -\rho]^{\otimes 2n-1}$,
$[1, \alpha]^{\otimes 2n-1}\pm[1, -\alpha]^{\otimes 2n-1} \}$,
then we have $[1, 0]$ or $[0, 1]$ by  Lemma~\ref{odd-[1,0]-[0,1]-construction}.
So if $f$ does not satisfy parity constraints, then we have $[a, b]$ with $ab\neq 0$ by  Lemma~\ref{odd-[a,b]-parity}
and we are done by
Lemma~\ref{[1,a]XXX}.
So we can assume that $f$ satisfies parity constraints in the following.

\begin{itemize}
\item For $f'=[1, 0]^{\otimes 2n-1}$, $f=x[1, i]^{\otimes 2n+1}+y[1, -i]^{2n+1}+[1, 0]^{\otimes 2n+1}$
by Lemma~\ref{general-f-construction}.
If $x=y=0$, then $f\in\mathscr{P}$.
Otherwise, $(x, y)\neq (0, 0)$.

Let $a=x+y$, $b=(x-y)i$, then $(a, b)\neq (0, 0)$.
Note that $f=[1+a, b, -a, -b, \ldots,\pm a,  \pm b]$.
Since $1+a$ and $-a$ cannot be both 0, by the parity constraints,
we have $b=0$.
And thus $a\neq 0$.
Moreover we have $\partial^{n-1}([1, 0]^{2n-1})=[1, 0]$ and $f'''=\partial^{2n-3}_{[1, 0]}(f)=[1+a, 0, -a, 0, a]$.
We note that $2n -3 \ge 1$ and so  $\partial^{2n-3}_{[1, 0]}$ is defined.
Note that $f'''$ is a redundant signature and its compressed signature matrix $\left[\begin{smallmatrix} 1+a & 0 & -a\\
0 & -a & 0\\
-a & 0 & a \end{smallmatrix}\right]$ is nonsingular,
so $\PlCSP^2(f''')$ is $\#{\rm P}$-hard by Lemma~\ref{4-redundant}.
Thus $\PlCSP^2(f)$ is $\#{\rm P}$-hard.

\item For $f'=[0, 1]^{\otimes 2n-1}$,
the proof follows from the previous case by a transformation using
$\left[\begin{smallmatrix} 0 & 1\\
 1 & 0 \end{smallmatrix}\right]$.

\item For $f'=[1, \alpha]^{\otimes 2n-1}\pm[1, -\alpha]^{\otimes 2n-1}$,
$f=x[1, i]^{\otimes 2n+1}+y[1, -i]^{2n+1}+\frac{1}{1+\alpha^2}\{[1, \alpha]^{\otimes 2n+1}\pm[1, -\alpha]^{\otimes 2n+1}\}$
by Lemma~\ref{general-f-construction}.
If $x=y=0$, then $f\in\mathscr{A}^\dagger$.
Otherwise, $(x, y)\neq (0, 0)$.
Firstly, we construct $[1, 0, \alpha^2]$ by $f$.
Note that we have $f^{(4)}=\partial^{n-1}(f)=(1+\alpha^2)^{n-2}\{[1, \alpha]^{\otimes 3}\pm[1, -\alpha]^{\otimes 3}\}$.

If $f^{(4)}=(1+\alpha^2)^{n-2}\{[1, \alpha]^{\otimes 3}+[1, -\alpha]^{\otimes 3}\}$ with a $+$ sign,
we have $\partial(f^{(4)})=2(1+\alpha^2)^{n-1}[1, 0]$
and $\partial_{[1, 0]}(f^{(4)})=2(1+\alpha^2)^{n-2}[1, 0, \alpha^2]$.

If $f^{(4)}=(1+\alpha^2)^{n-2}\{[1, \alpha]^{\otimes 3}-[1, -\alpha]^{\otimes 3}\}$ with a $-$ sign,
we have $\partial(f^{(4)})=2\alpha(1+\alpha^2)^{n-1}[0, 1]$
and $\partial_{[0, 1]}(f^{(4)})=2\alpha(1+\alpha^2)^{n-2}[1, 0, \alpha^2]$.

In either case, we have $[1, 0, \alpha^2]$.
Then we have $f^{(5)}=\partial^{n-1}_{[1, 0, \alpha^2]}(f)=(1-\alpha^2)^{n-1}\{x[1, i]^{\otimes 3}+y[1, -i]^{3}\}$.
If $x=0$ or $y=0$ or $[xy\neq 0 \wedge x^4=y^4]$, then $f^{(5)}\in\mathscr{A}\setminus\mathscr{A}^\dagger$.
By the eigenvalues, $f' \in \langle 1 ,0 \pm i\rangle$, hence 
 $f'\in\mathscr{A}^\dagger\setminus(\mathscr{P}\cup\mathscr{A}\cup\widetilde{\mathscr{M}})$ in this case.
So $\PlCSP^2(f^{(5)}, f')$ is $\#{\rm P}$-hard by Theorem~\ref{odd-mixing-theorem}.
Otherwise, $xy\neq 0$ and $x^4\neq y^4$. Then $f^{(5)}\notin\mathscr{P}\cup\widetilde{\mathscr{A}}\cup\widetilde{\mathscr{M}}$. Thus
$\PlCSP^2(f^{(5)})$ is $\#{\rm P}$-hard by Lemma~\ref{lem:PlCSP2:dichotomy:ternary}.
So $\PlCSP^2(f)$ is $\#{\rm P}$-hard.
\end{itemize}

The final case is $f'=[1, \rho]^{\otimes 2n-1}\pm[1, -\rho]^{\otimes 2n-1}$.

\begin{itemize}
\item
For $f'=[1, 1]^{\otimes 2n-1}+[1, -1]^{\otimes 2n-1}$,
\[f=x[1, i]^{\otimes 2n+1}+y[1, -i]^{2n+1} + \frac{1}{2} \left\{[1, 1]^{\otimes 2n+1}+[1, -1]^{\otimes 2n+1}\right\}.\]
If $x=y=0$, then $f\in\mathscr{A}$.
In the following, assume that $(x, y)\neq (0, 0)$.
Let $a=x+y$, $b=(x-y)i$, then $(a, b)\neq (0, 0)$.
Moreover, $f=
[a,b,-a,-b, \ldots, \pm b] + [1,0,1,0, \ldots, 0] =
[a+1, b, -a+1, -b, a+1, \ldots, \pm b]$.
Since $a+1$ and $-a+1$ cannot be both 0,
and $f$ satisfies parity, we have $b=0$. 
Thus $f=[a+1, 0, -a+1, 0, a+1, \ldots, \pm a+1, 0]$.
As $b=0$ we have $a \not =0$.
Note that we have $\partial^{n}(f)=2^{n}[1, 0]$.
Thus we have $f^{(6)}=\partial^{2n-3}_{[1, 0]}(f)=[a+1, 0, -a+1, 0, a+1]$.
The compressed signature matrix of $f^{(6)}$ is
$\left[
   \begin{smallmatrix}
    a+1 & 0 & -a+1 \\
    0 & -a+1 & 0 \\
    -a+1 & 0 & a+1
   \end{smallmatrix}
  \right]$ with determinant $4a(1-a)$.
If $a\neq 1$, then 
by $a \not =0$, this determinant is nonzero.
Thus the compressed signature matrix of $f^{(6)}$ is nonsingular and
 $\PlCSP^2(f^{(6)})$ is \numP-hard
 by Lemma~\ref{4-redundant}.
 So $\PlCSP^2(f)$ is \numP-hard.

\begin{figure}[t]
 \centering
 \begin{tikzpicture}[scale=\scale,transform shape,node distance=\nodeDist,semithick]
  \node[internal]  (0)                    {};
  \node[external]  (1) [above left  of=0] {};
  \node[external]  (2) [below left  of=0] {};
  \node[external]  (3) [left        of=1] {};
  \node[external]  (4) [left        of=2] {};
  \node[external]  (5) [right       of=0] {};
  \node[internal]  (6) [right       of=5] {};
  \node[external]  (7) [above right of=6] {};
  \node[external]  (8) [below right of=6] {};
  \node[external]  (9) [right       of=7] {};
  \node[external] (10) [right       of=8] {};
  \path (3) edge[in= 135, out=   0,postaction={decorate, decoration={
                                               markings,
                                               mark=at position 0.45 with {\arrow[>=diamond, white] {>}; },
                                               mark=at position 0.45 with {\arrow[>=open diamond]   {>}; } } }] (0)
        (0) edge[out=-135, in=   0]  (4)
            edge[bend left]          (6)
            edge                     (6)
            edge[bend right]         (6)
        (6) edge[out=  45, in= 180]  (9)
            edge[out= -45, in= 180] (10);
  \begin{pgfonlayer}{background}
   \node[draw=\borderColor,thick,rounded corners,fit = (1) (2) (7) (8),inner sep=0pt] {};
  \end{pgfonlayer}
 \end{tikzpicture}
 \caption{Gadget used to obtain a signature whose signature matrix is redundant.}
 \label{fig:4}
\end{figure}
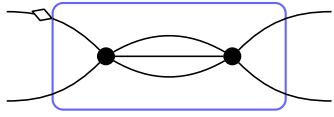

 If $a = 1$,
 then we have $f^{(7)} = \partial^{2n-4}_{[1, 0]}(f) = 2 [1,0,0,0,1,0]$
of arity 5 (note that $2n -4 \ge 0$).
 Consider the gadget in Figure~\ref{fig:4}.
We assign $[1,0,0,0,1,0]$ to both vertices.
 The signature of this gadget is redundant,
 and its compressed signature matrix is
 $\left[
   \begin{smallmatrix}
    1 & 0 & 0 \\
    0 & 1 & 0 \\
    0 & 0 & 3
   \end{smallmatrix}
  \right]$.
 Since this matrix is nonsingular,
 we are done by Lemma~\ref{4-redundant}.
%
%
%
%
%
\item For $f'=[1, 1]^{\otimes 2n-1}-[1, -1]^{\otimes 2n-1}$,
\[f=x[1, i]^{\otimes 2n+1}+y[1, -i]^{2n+1} + \frac{1}{2} \left\{[1, 1]^{\otimes 2n+1}-[1, -1]^{\otimes 2n+1}\right\}.\]
After the
holographic transformation by $\left[\begin{smallmatrix} 0 & 1\\
 1 & 0 \end{smallmatrix}\right]$, we have
 $\PlCSP^2(f, f')\equiv$ $\PlCSP^2(\widehat{f}, \widehat{f'})$,
 where $\widehat{f'}=[1, 1]^{\otimes 2n-1}+[1, -1]^{\otimes 2n-1}$,
 and
  $\widehat{f}=xi^{2n+1}[1, -i]^{\otimes 2n+1}+y(-i)^{2n+1}[1, i]^{2n+1}+\frac{1}{2}\{[1, 1]^{\otimes 2n+1}+[1, -1]^{\otimes 2n+1}\}$.
  Thus we are done by the previous case.
%
%

\item For $f'=[1, i]^{\otimes 2n-1}+[1, -i]^{\otimes 2n-1}$,
$f$ has arity $2n+1$ and using
Proposition~\ref{prop:explicit_list}
(the Explicit List for $\int(f')$),
 $\int([1, \pm i]^{\otimes 2n-1})$ is 
a sum of   $\lambda [1, \pm i]^{\otimes 2n+1}$ with
a signature having the $k$-th term
of the form $-\frac{1}{2}k (\pm i)^k$. Thus, we can write 
$f_k=(x-\frac{1}{2}k)i^k+ (y-\frac{1}{2}k)(-i)^k$
 by Lemma~\ref{general-f-construction}.

We have $\partial_{f'}(=_{2n}) = \partial_{\partial(f)}(=_{2n})=2[1, 0]$ on LHS.

Let $a=x+y$, $b=(x-y)i$, then $f=[a, b, -a+2, -b, a-4, \ldots, \pm b]$.
Since $a$ and $-a+2$ cannot be both 0, and $f$ satisfies
parity, we have $b=0$.
Then we have $f^{(8)}=\partial^{2n-3}_{[1, 0]}(f)=[a, 0, -a+2, 0, a-4]$.
If $a\neq 2$, then the compressed signature matrix of $f^{(8)}$ 
is $\left[
     \begin{smallmatrix}
      a & 0 & -a+2 \\
      0 & -a+2 & 0  \\
      -a+2 & 0 & a-4
     \end{smallmatrix}
    \right]$,
and is nonsingular and we are done
 by Lemma~\ref{4-redundant}.

   For $a = 2$,
   we have $\partial^{2n-4}_{[1, 0]}(f) = 2 [1,0,0,0,-1,0]$.
   Consider the gadget in Figure~\ref{fig:4}.
   We assign $[1,0,0,0,-1,0]$ to both vertices.
   The signature of this gadget is redundant,
   and its compressed signature matrix is
   $\left[
     \begin{smallmatrix}
      1 & 0 & 0 \\
      0 & 1 & 0 \\
      0 & 0 & 3
     \end{smallmatrix}
    \right]$.
   Since this matrix is nonsingular,
   we are done by Lemma~\ref{4-redundant}.

  \item For $f'=[1, i]^{\otimes 2n-1}-[1, -i]^{\otimes 2n-1}$,
  the proof follows from the previous case by a
  holographic transformation using $\left[\begin{smallmatrix} 0 & 1\\
  1 & 0 \end{smallmatrix}\right]$.
  \qedhere
 \end{itemize}
\end{proof}

By Theorem~\ref{odd-mixing-theorem} and Theorem~\ref{odd-single},
we have the following dichotomy theorem.

\begin{theorem}\label{odd-arity-dichotomy}
For any set of symmetric signatures $\mathcal{F}$
which contains at least one nonzero signature with odd arity,
if $\mathcal{F}\subseteq \mathscr{P}$, or
$\mathscr{A}$, or $\mathscr{A}^\dagger$,
or $\widehat{\mathscr{M}}$, or $\widehat{\mathscr{M}}^\dagger$,
then {\rm Pl}-$\#${\rm CSP}$^2(\mathcal{F})$ is tractable.
Otherwise, {\rm Pl}-$\#${\rm CSP}$^2(\mathcal{F})$ is $\#${\rm P-}hard.
\end{theorem}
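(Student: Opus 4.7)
The proof will synthesize Theorem~\ref{odd-single} (single-signature dichotomy for odd arity) with Theorem~\ref{odd-mixing-theorem} (the conditional no-mixing lemma). Tractability of each $S_k$ case is immediate from the definitions of $\mathscr{A}$-, $\mathscr{P}$-, and $\mathscr{M}$-transformability for $\PlCSP^2$: for $S_5 = \mathscr{P}$, the product structure trivially factors; for $S_3 = \mathscr{A}^\dagger$ and $S_4 = \mathscr{A}$, a diagonal holographic transformation from $\mathcal{T}_8$ maps $\mathcal{EQ}_2$ into $\mathscr{A}$ so the instance reduces to a tractable $\CSP(\mathscr{A})$; and analogously for $S_1 = \widehat{\mathscr{M}}$ and $S_2 = \widehat{\mathscr{M}}^\dagger$ via a transformation in $\mathcal{T}_4$ that converts the instance into a planar matchgate problem solvable by FKT.

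For the hardness direction, I would fix a nonzero signature $f \in \mathcal{F}$ of odd arity guaranteed by the hypothesis and first invoke Theorem~\ref{odd-single} on $f$ alone: either $\PlCSP^2(f)$ is already $\#$P-hard, in which case $\PlCSP^2(\mathcal{F})$ is hard by $\PlCSP^2(f) \leq_T \PlCSP^2(\mathcal{F})$, or $f \in \mathscr{P} \cup \widetilde{\mathscr{A}} \cup \widetilde{\mathscr{M}} = \bigcup_{k=1}^{5} S_k$. Assume the latter. Then I would show for every $g \in \mathcal{F}$ that either $g \in \bigcup_{k=1}^{5} S_k$, or $\PlCSP^2(\{f, g\})$ (and hence $\PlCSP^2(\mathcal{F})$) is $\#$P-hard. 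For odd-arity $g$ this is Theorem~\ref{odd-single} applied to $g$ directly. For even-arity $g$, I would exploit $f$ as a parity-shifting lever: combine $f$ with $g$ via derivative gadgets to produce an odd-arity signature $h$ constructible in $\PlCSP^2(\{f, g\})$ that reflects enough of $g$'s structure to inherit non-membership in $\bigcup_{k=1}^{5} S_k$, and then apply Theorem~\ref{odd-single} to $h$.

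Once every $g \in \mathcal{F}$ is certified to lie in $\bigcup_{k=1}^{5} S_k$, so that $\mathcal{F} \subseteq \bigcup_{k=1}^{5} S_k$ and contains at least one odd-arity element, I invoke Theorem~\ref{odd-mixing-theorem}: either $\mathcal{F} \subseteq S_k$ for some single $k$, which contradicts the assumption in the hardness direction, or $\PlCSP^2(\mathcal{F})$ is $\#$P-hard, which is what we want.

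The main obstacle will be the parity-shifting step for even-arity $g$, because no single-signature dichotomy for even arity is yet available at this stage of the paper. The plan is a case analysis on which class $S_k$ contains $f$, combined with choosing an appropriate derivative operator built from $g$. For instance, when $f$ has arity $n$ and $g$ has even arity $2m < n$, one examines $h = \partial^{j}_{g}(f)$ for appropriate $j$, which has odd arity and whose tractable-class membership can be pulled back to constrain $g$; alternatively one forms $\partial_{f}$ applied to a suitable tensor combination involving $g$. Establishing that $g \notin \bigcup_{k=1}^{5} S_k$ implies some such construction produces $h \notin \bigcup_{k=1}^{5} S_k$ will require tracking the second-order recurrence classification from Lemma~\ref{second-recurrence-relation} and the parity characterizations of Lemma~\ref{widetilde-affine-and-P-parity} and Corollary~\ref{widetilde-M-minus-P-and-A-parity}, after which Theorem~\ref{odd-single} closes the argument.
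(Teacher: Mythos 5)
Your skeleton exactly mirrors the paper's own (one‑line) proof, which simply cites Theorem~\ref{odd-single} and Theorem~\ref{odd-mixing-theorem}. You also correctly sense that this citation is not quite self‑contained: Theorem~\ref{odd-mixing-theorem} carries the hypothesis $\mathcal{F} \subseteq \bigcup_{k=1}^5 S_k$, and Theorem~\ref{odd-single} certifies membership in $\bigcup S_k$ only for the \emph{odd}-arity signatures of $\mathcal{F}$. So the case where $\mathcal{F}$ contains an even-arity $g\notin\bigcup S_k$ is genuinely outside the reach of those two theorems. Flagging this is the right instinct.

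However, the fix you sketch does not go through. You propose producing an odd-arity $h$ in $\PlCSP^2(f,g)$ by derivative/pinning gadgets and hoping that $h$ ``inherits'' non-membership of $g$ in $\bigcup S_k$, then closing with Theorem~\ref{odd-single}. This fails already on small examples. Take $f=[1,0,1,0,1,0]$ (odd arity, in $\mathscr{A}$, satisfies parity, yields only $[1,0]$ as a unary) and $g=[1,0,1,0,2]$. One checks via Lemma~\ref{general-second-recurrence} that $g$ satisfies no second-order recurrence, so $g\notin\bigcup S_k$. Yet every odd-arity projection of $g$ by the available unary is tame: $\partial_{[1,0]}(g)=[1,0,1,0]\in\mathscr{A}$, $\partial^3_{[1,0]}(g)=[1,0]\in\mathscr{P}$. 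Likewise $\partial_g(f)$ averages out $g$'s structure (here it reduces to a multiple of $[1,0]$), carrying no useful information. The recurrence and parity lemmas you cite (Lemma~\ref{second-recurrence-relation}, Lemma~\ref{widetilde-affine-and-P-parity}, Corollary~\ref{widetilde-M-minus-P-and-A-parity}) cannot force $h\notin\bigcup S_k$ from $g\notin\bigcup S_k$ when $g$ is projected down this way — the pathology is precisely that a hard even-arity signature can have all odd-arity projections tractable.

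The gap is instead closed by a different ingredient: for the residual case (all of $\mathcal{F}$ parity-constrained and the odd-arity $f\in\mathscr{P}\cup\widetilde{\mathscr{A}}$), one must handle even-arity $g\notin\bigcup S_k$ directly, e.g.~by the redundant compressed-matrix criterion (Lemma~\ref{4-redundant} applied to a reduction to arity~4) or, cleanly, by invoking Theorem~\ref{general-single-dichotomy}, which the paper proves later but independently of Theorem~\ref{odd-arity-dichotomy}, so there is no circularity in doing so. Alternatively one can verify directly that the proof of Theorem~\ref{odd-mixing-theorem} only uses $\mathcal{F}\subseteq\bigcup S_k$ in its final parity-constrained sub-case, and supply a dedicated hardness argument there. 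Either way, ``parity-shifting via $\partial^j_g(f)$ plus Theorem~\ref{odd-single}'' is not a workable substitute.
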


\section{The Arity 4 Dichotomy} \label{Arity-4}

The goal of this section is a dichotomy theorem for $\PlCSP^2(f)$ when $f$ is a symmetric signature of arity~$4$.
Frequently our first test uses the determinantal criterion of a redundant signature of arity~$4$ based on Lemma~\ref{4-redundant}.

\begin{lemma} \label{redundant}
 Let $f$ be an arity~$4$ signature.
 If the signature matrix of $f$ is redundant,
 and its compressed form is nonsingular,
 then $\PlCSP^2(f)$ is \#P-hard.
\end{lemma}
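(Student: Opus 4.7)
The plan is to observe that this lemma is an almost immediate corollary of Lemma~\ref{4-redundant}, which already gives \#P-hardness of $\PlHolant(f)$ under exactly the same hypotheses on $M_f$. So the only work is to establish the reduction $\PlHolant(f) \leq_T \PlCSP^2(f)$ in the planar setting, after which the conclusion follows at once.

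First I would note that $\PlCSP^2(f) \equiv_T \plholant{\mathcal{EQ}_2}{f}$ by the definition of $\PlCSP^2$, and that $(=_2) \in \mathcal{EQ}_2$. Given any instance $\Omega$ of $\PlHolant(f)$ with underlying plane graph $G$, I will construct an instance $\Omega'$ of $\PlCSP^2(f)$ by subdividing each edge of $G$ with a new degree-$2$ vertex assigned $=_2$. This preserves planarity, and because $[1,0,1]$ is the binary equality, the Holant value is unchanged: $\Holant_{\Omega'} = \Holant_\Omega$. The resulting $\Omega'$ is a planar bipartite signature grid with $=_2 \in \mathcal{EQ}_2$ on one side and $f$ on the other, i.e., a valid instance of $\plholant{\mathcal{EQ}_2}{f} \equiv_T \PlCSP^2(f)$. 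This gives the reduction $\PlHolant(f) \leq_T \PlCSP^2(f)$.

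Now applying Lemma~\ref{4-redundant} under the hypotheses that $M_f$ is redundant and $\widetilde{M_f}$ is nonsingular yields that $\PlHolant(f)$ is $\numP$-hard. Composing with the reduction just established, $\PlCSP^2(f)$ is $\numP$-hard as well, which is the conclusion.

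There is essentially no obstacle here; the statement is packaged as a convenient reformulation of Lemma~\ref{4-redundant} suited to the $\PlCSP^2$ framework, so that subsequent arity-$4$ case analyses can invoke the redundancy/rank criterion directly without re-introducing the bipartite Holant language each time. The only thing to be careful about is that the edge-subdivision reduction preserves planarity, which it plainly does since inserting a degree-$2$ vertex in the middle of an edge of a plane graph yields a plane graph with the same Holant value.
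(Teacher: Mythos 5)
Your proof is correct and takes the same approach as the paper: the paper's proof simply observes $\PlHolant(f) \leq_T \PlCSP^2(f)$ and applies Lemma~\ref{4-redundant}. You spell out the edge-subdivision reduction more explicitly, which the paper leaves implicit, but the underlying argument is identical.
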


\begin{proof}
 Since $\PlHolant(f) \leq_T \PlCSP^2(f)$,
 we are done by Lemma~\ref{4-redundant}.
\end{proof}

\newcommand{\timestimes}{\times\!\!\times}
\newcommand{\timesATOPtimes}{\begin{sideways}$\scriptstyle \timestimes$\end{sideways}}

Next we introduce a trick which we call the ``Three Stooges''.
For $f=[a,b,c,d,e]$,
define
\begin{align*}
 f^{\times} &= [a, c, e] \\
 f^{\timestimes} &= [a^2 +c^2 +2b^2, ac+ ce +2bd, c^2 +e^2 +2d^2],
 \qquad
 \text{and} \\
 f^{\timesATOPtimes} &= [a^2 +c^2 +2b^2, b^2 +d^2 +2c^2, c^2 +e^2 +2d^2].
\end{align*}

\begin{figure}[t]
 \centering
  \begin{tikzpicture}[scale=\scale,transform shape,node distance=\nodeDist,semithick]
   \node[internal] (0)                    {};
   \node[external] (1) [above left  of=0] {};
   \node[external] (2) [below left  of=0] {};
   \node[external] (3) [left        of=1] {};
   \node[external] (4) [left        of=2] {};
   \node[internal] (5) [right       of=0] {};
   \node[external] (6) [above right of=5] {};
   \node[external] (7) [below right of=5] {};
   \node[external] (8) [right       of=6] {};
   \node[external] (9) [right       of=7] {};
   \path (3) edge[in= 135, out=   0,postaction={decorate, decoration={
                                                markings,
                                                mark=at position 0.45 with {\arrow[>=diamond, white] {>}; },
                                                mark=at position 0.45 with {\arrow[>=open diamond]   {>}; } } }] (0)
         (0) edge[out=-135, in=   0] (4)
             edge[bend left]         (5)
             edge[bend right]        (5)
         (5) edge[out=  45, in= 180] (8)
             edge[out= -45, in= 180] (9);
   \begin{pgfonlayer}{background}
    \node[draw=\borderColor,thick,rounded corners,fit = (1) (2) (6) (7),inner sep=0pt] {};
   \end{pgfonlayer}
  \end{tikzpicture}
 \caption{Gadget used in Lemma~\ref{domain-pairing-4}.
 Both vertices are assigned $f$.}
 \label{fig:domainpairing}
\end{figure}
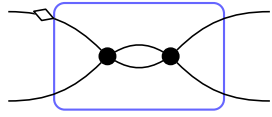

The following lemma is proved by the technique of domain pairing.

\begin{lemma} \label{domain-pairing-4}
 If $f=[a,b,c,d,e]$,
 then $\PlCSP(f^{\times}, f^{\timesATOPtimes}, f^{\timestimes}) \leq_T \PlCSP^2(f)$.
\end{lemma}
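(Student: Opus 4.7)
The plan is to generalize the domain pairing reduction used in Lemma~\ref{domain-pairing-expand}. Given an instance $\Omega$ of $\PlCSP(f^{\times}, f^{\timesATOPtimes}, f^{\timestimes})$ with underlying planar bipartite graph $G=(U,V,E)$, where each $u \in U$ is labeled by some $=_k$ and each $v \in V$ by one of the three stooges, I will build an instance $\Omega'$ of $\PlCSP^2(f)$ by (i) replacing every edge of $G$ by a parallel pair of edges (a ``bundle''), (ii) replacing each $=_k$ on $U$ by $=_{2k}$, and (iii) replacing each $v \in V$ by a small planar gadget of copies of $f$, described below. Each $=_{2k}$ forces every adjacent bundle to take value $(0,0)$ or $(1,1)$, so each bundle behaves as a single Boolean variable and it suffices to check that each replacement gadget, restricted to bundle-consistent inputs, evaluates to the corresponding stooge.

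The replacement gadgets I will use are as follows. For $v$ labeled $f^{\times}=[a,c,e]$, I simply take a single copy of $f$ and partition its four incident edges into two bundles of two; on bundle inputs $(00,00)$, $(00,11)$, $(11,11)$ the symmetric signature $f$ returns $a,c,e$ as required. For $v$ labeled $f^{\timestimes}$, I use the gadget of Figure~\ref{fig:domainpairing}: two copies of $f$ joined by the two ``bend'' edges, with the two dangling edges at each copy forming one bundle. Summing over the two internal edge values $z_1, z_2$ and setting the left and right bundles to $(x,x)$ and $(y,y)$ gives $\sum_{z_1, z_2 \in \{0,1\}} f_{2x+z_1+z_2}\, f_{2y+z_1+z_2}$, which at $(x,y) \in \{(0,0),(0,1),(1,1)\}$ yields $a^2+2b^2+c^2$, $ac+2bd+ce$, $c^2+2d^2+e^2$, i.e.\ exactly $f^{\timestimes}$. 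For $v$ labeled $f^{\timesATOPtimes}$, I use a similar two-vertex gadget, but with a different pairing of the four dangling edges: each bundle consists of one dangling edge from each copy of $f$. This can be embedded planarly by placing the two copies of $f$ horizontally and letting the two bundles exit above and below. Under this ``vertical'' pairing both copies of $f$ see identical external inputs $(x,y)$, so the gadget evaluates to $\sum_{z_1, z_2 \in \{0,1\}} f(x,y,z_1,z_2)^2$, which at $(x,y) \in \{(0,0),(0,1),(1,1)\}$ equals $a^2+2b^2+c^2$, $b^2+2c^2+d^2$, $c^2+2d^2+e^2$, matching $f^{\timesATOPtimes}$.

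Finally, I will verify that doubling edges and expanding each vertex into its gadget preserves planarity: the bundling respects the cyclic order of edges at each $u \in U$, and every gadget fits inside a small disk around its original vertex $v$, with its two bundles exiting on opposite sides. Then $\Omega'$ is a valid planar $\PlCSP^2(f)$ instance whose Holant value equals that of $\Omega$, since any assignment violating a bundle constraint is killed by a surrounding $=_{2k}$. The only subtle step is the planar realization of the ``vertical'' pairing used for $f^{\timesATOPtimes}$; once that is in place the rest is a routine tensor computation.
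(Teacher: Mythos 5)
Your proposal is correct and takes essentially the same approach as the paper: your ``vertical pairing'' gadget for $f^{\timesATOPtimes}$ is exactly the $90^\circ$-rotated copy of the Figure~\ref{fig:domainpairing} gadget that the paper uses, since rotating a planar gadget merely shifts the cyclic bundling of the four external edges from $\{A,B\},\{C,D\}$ to $\{D,A\},\{B,C\}$. Your explicit evaluations of the three gadget signatures on bundle-consistent inputs agree with the corner entries of the paper's signature matrices for $f$, $f'$, and $f''$.
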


\begin{proof}
 Let $f'$ be the signature of the gadget in Figure~\ref{fig:domainpairing}
 and $f''$ be the signature of the gadget in Figure~\ref{fig:domainpairing} rotated 90$^{\circ}$.
 Then $f'$ has a signature matrix on the left,
 and $f''$ has a signature matrix on the right:
 \[
  \left[
  \begin{smallmatrix}
   a^2+c^2+2b^2 \enspace & \enspace ab+cd+2bc \enspace & \enspace ab+cd+2bc \enspace & \enspace ac+ce+2bd \\
   \\
   ab+cd+2bc             & b^2+d^2+2c^2                & b^2+d^2+2c^2                & bc+de+2cd \\
   \\
   ab+cd+2bc             & b^2+d^2+2c^2                & b^2+d^2+2c^2                & bc+de+2cd \\
   \\
   ac+ce+2bd             & bc+de+2cd                   & bc+de+2cd                   & c^2+e^2+2d^2
  \end{smallmatrix}
  \right];
  \enspace
  \left[
  \begin{smallmatrix}
   a^2+c^2+2b^2 \enspace & \enspace ab+cd+2bc \enspace & \enspace ab+cd+2bc \enspace & \enspace b^2+d^2+2c^2 \\
   \\
   ab+cd+2bc             & ac+ce+2bd                   & b^2+d^2+2c^2                & bc+de+2cd\\
   \\
   ab+cd+2bc             & b^2+d^2+2c^2                & ac+ce+2bd                   & bc+de+2cd\\
   \\
   b^2+d^2+2c^2          & bc+de+2cd                   & bc+de+2cd                   & c^2+e^2+2d^2
  \end{smallmatrix}
  \right].
 \]
 
 We highlight the relevant entries in the display below
 (in fact,
 readers should only focus on the entries highlighted;
 see Figure~\ref{fig:rotate_asymmetric_signature} in Part I for an illustration of the rotation operation):
 \[
  \left[
  \begin{smallmatrix}
   a^2+c^2+2b^2 \enspace & \enspace * \enspace & \enspace * \enspace & \enspace ac+ce+2bd \\
   \\
   *                     & b^2+d^2+2c^2        & *                   & *                  \\
   \\
   *                     & *                   & b^2+d^2+2c^2        & *                  \\
   \\
   ac+ce+2bd             & *                   & *                   & c^2+e^2+2d^2
  \end{smallmatrix}
  \right];
  \qquad
  \left[
  \begin{smallmatrix}
   a^2+c^2+2b^2 \enspace & \enspace * \enspace & \enspace * \enspace & \enspace b^2+d^2+2c^2 \\
   \\
   *                     & ac+ce+2bd           &          *          & *                     \\
   \\
   *                     &          *          & ac+ce+2bd           & *                     \\
   \\
   b^2+d^2+2c^2          &          *          &          *          & c^2+e^2+2d^2`
  \end{smallmatrix}
  \right].
 \]

 For any instance of $\PlCSP(f^{\times}, f^{\timesATOPtimes}, f^{\timestimes})$,
 we replace
 each edge $e$ by two edges that connect the same incident nodes of $e$.
 For each variable node that is connected to $k$ edges,
 we replace its label $=_k$ by $=_{2k}$.
 We replace each occurrence of $f^{\times}, f^{\timesATOPtimes}, f^{\timestimes}$ by $f, f', f''$ as a constraint respectively.
 Then the new instance is a problem in $\PlCSP^2(f, f', f'')$
 and has the same value as the given instance of $\PlCSP(f^{\times}, f^{\timesATOPtimes}, f^{\timestimes})$.
 By $\PlCSP^2(f, f', f'')\equiv\PlCSP^2(f)$,
 we complete the proof.
\end{proof}

We demonstrate a simple use of the ``Three Stooges'' in the following lemma.

\begin{lemma} \label{rpower4neq1}
 If $a^4 \notin \{0,1\}$,
 then $\PlCSP^2([1, 0, a, 0, a^2])$ is \numP-hard.
\end{lemma}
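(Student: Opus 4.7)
The plan is to apply the domain-pairing (``Three Stooges'') reduction of Lemma~\ref{domain-pairing-4} and then invoke the $\PlCSP$ dichotomy Theorem~\ref{pl-dicho-1}. A direct attempt via the redundant-signature criterion will not suffice: one checks that $M_f$ is indeed redundant, but its compressed form has determinant $a^3 - a^3 = 0$, so Lemma~\ref{redundant} gives no information. Likewise, a single self-loop produces $\partial(f) = (1+a)[1,0,a] \in \mathscr{P}$, which is tractable, so straightforward arity reduction via $\partial$ loses all hardness.

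First I will compute the Three Stooges of $f = [1,0,a,0,a^2]$:
\[
 f^\times = [1,a,a^2], \qquad
 f^{\timestimes} = (1+a^2)[1,a,a^2], \qquad
 f^{\timesATOPtimes} = [\,1+a^2,\; 2a^2,\; a^2(1+a^2)\,].
\]
The hypothesis $a^4 \ne 1$ forces $1+a^2 \ne 0$, so $f^\times$ and $f^{\timestimes}$ are both (nonzero multiples of) the degenerate signature $[1,a]^{\otimes 2}$, hence individually tractable. The entire hardness will thus rest on $f^{\timesATOPtimes}$.

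Next I will verify that $f^{\timesATOPtimes}$ is non-degenerate and lies outside every $\PlCSP$-tractable class $\mathscr{P}$, $\mathscr{A}$, $\widehat{\mathscr{M}}$. Non-degeneracy reduces to $(1+a^2)^2 \ne 4a^2$, i.e.\ $a \ne \pm 1$, which is ruled out by $a^4 \ne 1$. Using Lemma~\ref{binary}: membership in $\mathscr{P}$ would require a zero entry or degeneracy, both impossible since the middle entry $2a^2 \ne 0$; membership in $\mathscr{A}$ would force, after normalization by $1+a^2$, the form $[1,\rho,-\rho^2]$ with $\rho^4=1$, yielding $\rho^2=-a^2$ and hence $a^4=1$; membership in $\widehat{\mathscr{M}}$ would force either $[u,v,u]$ (so $1+a^2 = a^2(1+a^2)$, giving $a^2=1$) or $[u,0,-u]$ (so $a=0$). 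Each case contradicts $a^4 \notin \{0,1\}$.

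Combining these, Theorem~\ref{pl-dicho-1} gives that $\PlCSP(f^\times, f^{\timesATOPtimes}, f^{\timestimes})$ is \numP-hard (since $f^{\timesATOPtimes}$ alone already lies in no tractable class), and Lemma~\ref{domain-pairing-4} converts this to \numP-hardness of $\PlCSP^2(f)$. The only mildly subtle step is the case analysis ruling out the tractable classes: each sub-case reduces to a polynomial identity whose solution set in $a$ is contained in $\{a : a^4 \in \{0,1\}\}$, so the hypothesis is precisely what kills every exceptional form.
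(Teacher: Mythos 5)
Your proof is correct and takes essentially the same route as the paper: compute the Three Stooges, apply Lemma~\ref{domain-pairing-4}, and conclude via the $\PlCSP$ dichotomy Theorem~\ref{pl-dicho-1}. The only (cosmetic) difference is how the tractable classes are ruled out: you show that $f^{\timesATOPtimes}$ alone lies outside $\mathscr{P}\cup\mathscr{A}\cup\widehat{\mathscr{M}}$, whereas the paper shows $f^{\timesATOPtimes}\notin\mathscr{P}$ and $f^{\times}\notin\mathscr{A}\cup\widehat{\mathscr{M}}$, so that no single tractable class contains the pair; your form is marginally tighter since it places all the hardness on one signature, but the substance is identical.
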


\begin{proof}
 For $f= [1, 0, a, 0, a^2]$,
 we have $f^{\times} = [1, a, a^2]$
 and $f^{\timesATOPtimes} = [1+a^2, 2a^2, a^2(1+a^2)]$.
 By Lemma~\ref{binary},
 $f^{\times} \notin \mathscr{A} \cup \widehat{\mathscr{M}}$ since $a^4 \notin \{0,1\}$.
 By the same reason and Lemma~\ref{binary},
 the only possibility for $f^{\timesATOPtimes}\in\mathscr{P}$ is being degenerate.
 Thus $a^2 (1 + a^2)^2 = 4 a^4$.
 This implies that $a=0$ or $a = \pm 1$;
 a contradiction.
 This implies that $f^{\times}$ and $f^{\timesATOPtimes}$
 cannot both be in $\mathscr{P}, \mathscr{A}$, or $\widehat{\mathscr{M}}$.
 Thus $\PlCSP(f^{\times}, f^{\timesATOPtimes})$ is \#P-hard by Theorem~\ref{pl-dicho-1}.
 Then by Lemma~\ref{domain-pairing-4},
 $\PlCSP^2(f)$ is \#P-hard.
\end{proof}

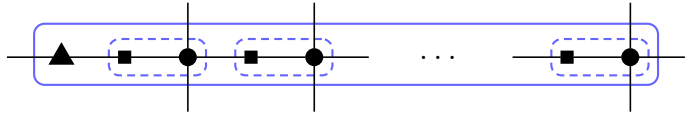
\begin{figure}[t]
 \centering
 \begin{tikzpicture}[scale=\scale,transform shape,node distance=\nodeDist,semithick]
  \node[external]  (0)              {};
  \node[triangle]  (1) [right of=0] {};
  \node[square]    (2) [right of=1] {};
  \node[internal]  (3) [right of=2] {};
  \node[external]  (4) [above of=3] {};
  \node[external]  (5) [below of=3] {};
  \node[square]    (6) [right of=3] {};
  \node[internal]  (7) [right of=6] {};
  \node[external]  (8) [above of=7] {};
  \node[external]  (9) [below of=7] {};
  \node[external] (10) [right of=7] {};
  \node[external] (11) [right of=10] {\huge $\dots$};
  \node[external] (12) [right of=11] {};
  \node[square]   (13) [right of=12] {};
  \node[internal] (14) [right of=13] {};
  \node[external] (15) [above of=14] {};
  \node[external] (16) [below of=14] {};
  \node[external] (17) [right of=14] {};
  \path (0) edge  (1)
        (1) edge  (2)
        (2) edge  (3)
        (3) edge  (4)
            edge  (5)
            edge  (6)
        (6) edge  (7)
        (7) edge  (8)
            edge  (9)
            edge (10)
       (12) edge (13)
       (13) edge (14)
       (14) edge (15)
            edge (16)
            edge (17);
  \begin{pgfonlayer}{background}
   \node[draw=\borderColor,thick,rounded corners,densely dashed,fit = (2) (3),inner sep=6pt] {};
   \node[draw=\borderColor,thick,rounded corners,densely dashed,fit = (6) (7),inner sep=6pt] {};
   \node[draw=\borderColor,thick,rounded corners,densely dashed,fit = (13) (14),inner sep=6pt] {};
   \node[draw=\borderColor,thick,rounded corners,fit = (1) (14),inner sep=12pt] {};
  \end{pgfonlayer}
 \end{tikzpicture}
 \caption{Gadget $\Gamma_k$, which has $k-1$ copies of the dashed box.
 Circle vertices are assigned $\hat{f}$,
 square vertices are assigned $=_2$,
 and the triangle vertex is assigned $[1,0,a]$.}
 \label{fig:5}
\end{figure}

\begin{lemma}\label{affine-with-wrong-scalar}
Let $f=[1, 1]^{\otimes 4}+a[1, -1]^{\otimes 4}$, where $a^4\neq 0, 1$.
Then $\PlCSP^2(f)$ is \#P-hard.
\end{lemma}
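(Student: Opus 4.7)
The plan is to construct the degenerate binary signature $[1,1]^{\otimes 2}$ inside $\PlCSP^2(f)$ by a polynomial interpolation argument based on the family of gadgets $\Gamma_k$ in Figure~\ref{fig:5}, and then invoke Lemma~\ref{mixing-P-global} to conclude $\numP$-hardness.

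First I analyse the gadget $\Gamma_k$, most transparently in the Hadamard basis, where $\hat f = [1,0,0,0,a]$ is a weighted generalized equality, the binary equality $=_2$ is Hadamard-invariant, and the triangle $[1,0,a] = \partial(\hat f)$ is constructible by a self-loop. Because $\hat f$, the triangle and $=_2$ each force their connected variables to coincide, a support analysis shows that every one of the $2k+2$ dangling edges of $\Gamma_k$ must take the same Boolean value, yielding $\Gamma_k = [1,0,\ldots,0,a^{k+1}]$ in the Hadamard basis. Pulling back through $H$, the gadget built in $\PlCSP^2(f)$ using $f$ for each circle and $\tfrac14\partial(f)$ for the triangle has signature proportional to $[1,1]^{\otimes(2k+2)} + a^{k+1}[1,-1]^{\otimes(2k+2)}$; applying $k$ self-loops through $=_2$ reduces the arity to $2$ and produces the binary signatures $B_m := [1+a^m,\,1-a^m,\,1+a^m] = [1,1]^{\otimes 2} + a^m[1,-1]^{\otimes 2}$ for every $m \ge 1$.

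Next I apply Vandermonde polynomial interpolation. For any planar instance $\Omega$ of $\PlCSP^2(f)$ augmented with $n$ occurrences of a formal edge-weight $B_t^* = [1+t,1-t,1+t]$, the Holant value is a polynomial in $t$ of degree at most $n$; substituting each $B_t^*$ by $B_{a^m}$ for sufficiently many distinct $m$ and solving the resulting Vandermonde system recovers this polynomial, and evaluating at $t=0$ gives the Holant of the instance in which every $B_t^*$ is replaced by the constant-$1$ signature $B_0 = [1,1]^{\otimes 2}$. This yields $\PlCSP^2(f, [1,1]^{\otimes 2}) \le_T \PlCSP^2(f)$. Once $[1,1]^{\otimes 2}$ is in hand I verify, using Lemma~\ref{binary}, Lemma~\ref{arity-hat-M}, the eigenvalue characterisation of Lemma~\ref{second-recurrence-relation} applied to the type $\langle 1,0,-1\rangle$ of $f$, and Corollary~\ref{arity-hat-M-dagger}, that $f \notin \mathscr P \cup \mathscr A \cup \mathscr A^\dagger \cup \widehat{\mathscr M} \cup \widehat{\mathscr M}^\dagger$ under the hypothesis $a^4 \ne 0,1$ (for example, $f \in \mathscr A_1$ forces $a^4 = 1$, and $f \in \widehat{\mathscr M}$ forces $a = 0$). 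Lemma~\ref{mixing-P-global} then delivers $\numP$-hardness.

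The main obstacle is the interpolation step when $a$ has small finite multiplicative order: since $a^4 \ne 1$ excludes orders $1$, $2$ and $4$, the order is at least $3$ but may be as small as $3$, supplying only three distinct values $\{a,a^2,1\}$ among the $a^m$. Instances requiring more evaluation points than this will be handled in the standard way, either by taking disjoint planar copies of $\Omega$ and extracting the desired partition-function value from the resulting product, or by exploiting that the $\Gamma_k$'s give independent higher-arity signatures which enlarge the set of available Holant evaluations; this bookkeeping is routine but is the one place where care is needed.
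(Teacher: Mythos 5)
Your route is genuinely different from the paper's. The paper avoids interpolation entirely: it uses the $\Gamma_k$ gadgets of Figure~\ref{fig:5} to realize the generalized equalities $[1,0,\dotsc,0,a^k]$ directly, thereby simulating $\PlCSP^2([1,0,a,0,a^2])$ (after a diagonal transformation), which is shown $\#$P-hard by the domain-pairing Lemma~\ref{rpower4neq1}. Your gadget analysis of $\Gamma_k$ and the resulting $B_m = [1+a^m,1-a^m,1+a^m]$ are correct (modulo an off-by-one in the index), and so is your verification that $f\notin\mathscr{P}\cup\widetilde{\mathscr{A}}\cup\widetilde{\mathscr{M}}$; after obtaining $[1,1]^{\otimes 2}$, Lemma~\ref{mixing-P-global} would indeed finish.

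The gap is in the interpolation step. Normalizing, $B_m \sim [1,b_m,1]$ with $b_m=\frac{1-a^m}{1+a^m}$, whose eigenvalue ratio is $\frac{1+b_m}{1-b_m}=a^{-m}$. The hypothesis $a^4\neq 0,1$ is perfectly compatible with $a$ being a root of unity of order $d$ for any $d\geq 3$, $d\neq 4$; then $\{a^m : m\geq 1\}$ has only $d$ elements, so the Vandermonde system fails to have full rank once the number of occurrences of $B_t^*$ exceeds $d-1$, and your interpolation does not determine the degree-$n$ polynomial. Neither proposed remedy works: taking disjoint planar copies of $\Omega$ computes powers of one Holant value, not new evaluation points, and the higher-arity $\Gamma_k$'s are all governed by the same $a$, so they never enlarge the set of realizable $[1,b,1]$ beyond the $d$ available values. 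The clean repair is to invoke the cyclotomic-field interpolation Lemma~\ref{[1,a,1]-interpolation} applied to $B_1\sim[1,b_1,1]$, noting $b_1^4\notin\{0,1\}$ by Lemma~\ref{pre-4-power}; that lemma overcomes exactly this obstruction by finding an odd power $b_1^{2k-1}$ whose eigenvalue ratio escapes roots of unity. (Equivalently, apply Lemma~\ref{M-even} to $B_1\in\widehat{\mathscr{M}}\setminus(\mathscr{P}\cup\widetilde{\mathscr{A}})$ together with $f\notin\widehat{\mathscr{M}}$.) Be aware, however, that those lemmas live in Section~\ref{PartII.secE.cyclcomic}, which comes after Section~\ref{Arity-4} in the paper's development; the paper's domain-pairing argument keeps the arity-4 dichotomy self-contained and avoids pulling the cyclotomic machinery forward.
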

\begin{proof}
Under a holographic transformation by
$H=
\left[\begin{smallmatrix} 1 & 1 \\
1&-1 \end{smallmatrix}\right]$,
we have
\begin{align}\label{reduction-1}
\PlCSP^2(f)
&\equiv \PlHolant(\mathcal{EQ}_2 \mid f) \\
&\equiv  \PlHolant([1, 0, 1], [1, 0, 1, 0, 1], \ldots \mid \hat{f}),
\label{csp2-f-hat-form}
\end{align}
 where $\hat{f}= (H^{-1})^{\otimes 4} f = [1, 0, 0, 0, a]$.
 By Lemma~\ref{rpower4neq1}, $\PlCSP^2([1, 0, a, 0, a^2])$ is \#P-hard,
and we have
\begin{align} \label{reduction-3}
\PlCSP^2([1, 0, a, 0, a^2])
&\equiv \PlHolant(\mathcal{EQ}_2 \mid [1,0,a,0,a^2])\\
&\equiv \PlHolant([1, 0, a], [1, 0, 0, 0, a^2], \cdots \mid [1, 0, 1, 0, 1])\\
&\leq \PlHolant([1, 0, a], [1, 0, 0, 0, a^2], \cdots \mid [1, 0, 1], [1, 0, 1, 0, 1], \cdots),
\label{intermediate-prob-lm-1.4}
\end{align}
where the second equivalence $\equiv$
 is by a holographic transformation with
$\left[\begin{smallmatrix} 1 & 0 \\
0& \sqrt{a} \end{smallmatrix}\right]$.

 The problem in~\eqref{csp2-f-hat-form} can simulate the problem in~\eqref{intermediate-prob-lm-1.4}.
 With $[1,0,1]$ on the left and $\hat{f}$ on the right in~\eqref{csp2-f-hat-form},
 we can get $\partial(\hat{f}) = [1, 0, a]$ on the right.
 Now consider the gadget in Figure~\ref{fig:5}.
 We assign $\hat{f}$ to the circle vertices,
 $=_2$ to the square vertices,
 and $[1,0,a]$ to the triangle vertex.
 If there are $k-1$ occurrences of the dashed subgadget,
 then the signature of this gadget is  $[1, 0, \dotsc, 0, a^{k}]$ of arity $2k$.
 Thus
 \begin{align*}
         &\PlHolant([1, 0, 1], [1, 0, 1, 0, 1], \cdots \mid [1, 0, a], [1, 0, 0, 0, a^2], \cdots)\\
  \leq {}&\PlHolant([1, 0, 1], [1, 0, 1, 0, 1], \cdots \mid \hat{f}).
 \end{align*}
Then combining three reductions,
we have $\PlCSP^2([1, 0, a, 0, a^2]) \leq \PlCSP^2(f)$,
where $a^4 \neq 0, 1$.
Thus $\PlCSP^2(f)$ is \#P-hard by Lemma~\ref{rpower4neq1}.
\end{proof}

Now we are ready to prove the following theorem.

\begin{theorem}\label{arity-4-dichotomy}
Let $f$ be a signature of arity 4, then $\PlCSP^2(f)$ is \numP-hard
or
$f\in\mathscr{P}\cup\widetilde{\mathscr{A}}\cup\widetilde{\mathscr{M}}$.
\end{theorem}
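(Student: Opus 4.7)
The plan is to prove the contrapositive: assuming $f \notin \mathscr{P} \cup \widetilde{\mathscr{A}} \cup \widetilde{\mathscr{M}}$, show that $\PlCSP^2(f)$ is \numP-hard. A degenerate $f = [a,b]^{\otimes 4}$ always lies in $\mathscr{P}$, so we may assume $f = [f_0, f_1, f_2, f_3, f_4]$ is non-degenerate. Three tools are available: (i) the redundant-matrix criterion (Lemma \ref{redundant}), giving hardness whenever the compressed matrix $\widetilde{M_f}$ is nonsingular; (ii) the Three Stooges reduction (Lemma \ref{domain-pairing-4}) combined with the \PlCSP\ dichotomy (Theorem \ref{pl-dicho-1}), giving hardness unless the three binary signatures $f^{\times}$, $f^{\timesATOPtimes}$, $f^{\timestimes}$ simultaneously lie in one of $\mathscr{P}$, $\mathscr{A}$, $\widehat{\mathscr{M}}$; and (iii) the derivative $\partial(f) = [f_0+f_2,\, f_1+f_3,\, f_2+f_4]$, which satisfies $\PlCSP^2(\partial(f)) \le_T \PlCSP^2(f)$, so that Theorem \ref{thm:parII:k-reg_homomorphism} (applied with $d=2$) forces $\partial(f)$ into one of a handful of explicit shapes whenever $\PlCSP^2(f)$ is not hard.

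The main body will be a case analysis driven by the shape of $\partial(f)$. The parity cases $f = [f_0, 0, f_2, 0, f_4]$ and $f = [0, f_1, 0, f_3, 0]$ are treated first; the latter reduces to the former via the transformation $\trans{0}{1}{1}{0}$, which preserves $\mathcal{EQ}_2$, and in the former the Three Stooges directly gives $f^{\times} = [f_0, f_2, f_4]$, so the explicit characterizations in Lemma \ref{binary} combined with Theorem \ref{pl-dicho-1} force either $f \in \mathscr{P} \cup \widetilde{\mathscr{A}} \cup \widetilde{\mathscr{M}}$ or hardness. For non-parity $f$, first apply Theorem \ref{thm:parII:k-reg_homomorphism} to $\partial(f)$; if $\partial(f)$ does not fall into one of its tractable forms, we are done. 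Otherwise $\partial(f)$ has an explicit form, Proposition \ref{prop:explicit_list} supplies $\int(\partial(f))$ explicitly, and by Lemma \ref{general-f-construction} every $f$ with this derivative has the form $f = \int(\partial(f)) + x[1,i]^{\otimes 4} + y[1,-i]^{\otimes 4}$ for some $x, y \in \mathbb{C}$. For each surviving shape of $\partial(f)$ we write $f$ out in terms of $x, y$ and the remaining parameters, and either verify $f \in \mathscr{P} \cup \widetilde{\mathscr{A}} \cup \widetilde{\mathscr{M}}$ directly (using the arity-$4$ matchgate classification in Corollary \ref{mixing-M-arity-4}), or derive hardness via Lemma \ref{redundant} by computing $\det \widetilde{M_f}$, via Lemma \ref{rpower4neq1} applied to $[1,0,a,0,a^2]$ after a diagonal holographic transformation, or via Lemma \ref{affine-with-wrong-scalar} on $[1,1]^{\otimes 4} + a[1,-1]^{\otimes 4}$. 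The degenerate sub-case $\partial(f) \equiv 0$, where $f = x[1,i]^{\otimes 4} + y[1,-i]^{\otimes 4}$, is handled separately: if $xy = 0$ or $x^4 = y^4$ then $f \in \mathscr{A}$, and otherwise a Hadamard transformation followed by Lemma \ref{affine-with-wrong-scalar} gives hardness.

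The main obstacle I expect is the bookkeeping across the boundary between the tractable classes. When $\partial(f)$ lies in the intersection of two binary tractable classes --- for instance $\partial(f) = [u,0,w]$ with $u^4 = w^4$, which lies in both $\mathscr{A}$ and $\widehat{\mathscr{M}}$ --- the two free parameters $x, y$ from Lemma \ref{general-f-construction} can push $f$ across the boundary of $\mathscr{P} \cup \widetilde{\mathscr{A}} \cup \widetilde{\mathscr{M}}$, but the redundant-matrix determinant and the Three Stooges signatures often degenerate on the boundary itself, so no single direct tool is decisive. Resolving these subcases will require holographic transformations by diagonal matrices $\trans{1}{0}{0}{\omega}$ with a careful choice of $\omega$ to bring $f$ into a canonical form such as $[1,0,a,0,a^2]$ or $[1,1]^{\otimes 4} + a[1,-1]^{\otimes 4}$ with $a^4 \notin \{0,1\}$, at which point Lemma \ref{rpower4neq1} or Lemma \ref{affine-with-wrong-scalar} applies. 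Matching the canonical form correctly in each subcase, while respecting the compatibility with the equalities of even arity on the left hand side of the Holant problem, will be the most delicate part of the proof.
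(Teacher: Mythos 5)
Your decomposition is genuinely different from the paper's. The paper drives its case analysis by the even-indexed shadow $f^{\times} = [f_0, f_2, f_4]$: it applies the Three Stooges and Theorem~\ref{pl-dicho-1} so that either $\PlCSP(f^{\times})$ is hard (done) or $f^{\times}$ lies in one of a handful of normalized forms, and then proceeds case by case on $f^{\times}$. You drive by $\partial(f)$, normalizing it via Theorem~\ref{thm:parII:k-reg_homomorphism} with $d = 2$ and integrating back up with Lemma~\ref{general-f-construction}. This is precisely the strategy the paper reserves for arity $\ge 6$ (Theorem~\ref{general-single-dichotomy}), and it would buy you a uniform argument across arities. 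But the trade-off is real: the $\PlCSP^2$ binary dichotomy applied to $\partial(f)$ alone is far weaker than the simultaneous $\PlCSP$ constraint on the three Stooges $f^{\times}$, $f^{\timesATOPtimes}$, $f^{\timestimes}$, so your route leaves many more surviving shapes of $\partial(f)$ and a harder job pinning down the free parameters $x, y$.

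There is a concrete gap in your non-parity hardness toolbox. Take $f = [0, x, 1, y, 0]$ with $xy = \tfrac{1}{2}$ and $x^2 \neq y^2$ (the hard subcase of the paper's Case~4). Then $\det\widetilde{M_f} = 2xy - 1 = 0$, so Lemma~\ref{redundant} is vacuous; $\partial(f) = [1, x+y, 1]$ is always in the $\PlCSP^2$-tractable set by condition~5 of Theorem~\ref{thm:parII:k-reg_homomorphism}, so that test never fires; and no $\PlCSP^2$-compatible change of basis sends $f$ onto $[1,0,a,0,a^2]$ or $[1,1]^{\otimes 4} + a[1,-1]^{\otimes 4}$, since $f_0 = f_4 = 0$ while $f_2 \neq 0$. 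Analogous dead ends occur in the paper's Cases~5 and~7. The paper escapes them by explicitly computing $f^{\timesATOPtimes}$ and $f^{\timestimes}$ and showing the three Stooges cannot all lie in a single $\PlCSP$-tractable class; you list this as an available tool but never invoke it in the non-parity plan. Your parity-case argument has the same hole: $f = [1,0,a,0,a^2]$ with $a^4 \notin \{0,1\}$ has $f^{\times} = [1,a,a^2]$ degenerate, hence in every tractable class, yet $\PlCSP^2(f)$ is hard, and Lemma~\ref{rpower4neq1} itself needs $f^{\timesATOPtimes}$ to see this. Also be careful with ``a diagonal holographic transformation'': $\trans{1}{0}{0}{\omega}$ fixes $\mathcal{EQ}_2$ only when $\omega^2 = 1$, so it is not a free change of basis in $\PlCSP^2$; the paper reaches $[1,0,a,0,a^2]$ from $[1,1]^{\otimes 4} + a[1,-1]^{\otimes 4}$ via the Hadamard transform (which does preserve $\mathcal{EQ}_2$) plus a gadget chain in Lemma~\ref{affine-with-wrong-scalar}, not a bare diagonal change of basis.
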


\begin{proof}
The first step is to apply Lemma~\ref{domain-pairing-4}
to $f^{\times}$.
For $f=[f_0, f_1, f_2, f_3, f_4]$ we have  $f^{\times}=[f_0, f_2, f_4]$.
If $\PlCSP(f^{\times})$ is \#P-hard, then $\PlCSP^2(f)$ is \#P-hard by Lemma~\ref{domain-pairing-4}.
In the following, assume that $\PlCSP(f^{\times})$ is 
not  \#P-hard, and hence tractable
by the dichotomy Theorem~\ref{thm:parII:k-reg_homomorphism}, and
$[f_0, f_2, f_4]$ takes the following form
\[
 [0, 0, 0],~
 [1, 0, 0],~
 [0, 0, 1],~
 [1, r, r^2],~
 [0, 1, 0],~
 [1, 0, a],~
 [1, 1, -1],~
 [1, -1, -1],
 ~~\mbox{or}~~
 [1, b, 1]
\]
up to a scalar,
where $r \neq 0$, $a \neq 0$, and $b^2 \notin \{0,1\}$.

\vspace{.2in}

\begin{description}
\item{Case 1: \texorpdfstring{$[f_0, f_2, f_4] = [0,0,0]$}{[f0,f2,f4]=[0,0,0]}}

In this case, $f=[0, x, 0, y, 0]$
and $f^{\timesATOPtimes}=[2x^2, x^2+y^2, 2y^2]$.
\begin{itemize}
 \item If $x^2 = y^2$,
 then $f=[0, x, 0, \pm x, 0]\in \mathscr{A}$.
 
 \item If $x^2 = -y^2$,
 then $f=[0, 1, 0, \pm i, 0] \in \mathscr{A}^{\dagger}$ since $\trans{1}{0}{0}{\sqrt{i}}^{\otimes 4}f\in\mathscr{A}$.
 
 \item If $x^4 \neq y^4$,
 then $\PlCSP(f^{\timesATOPtimes})$ is \#P-hard by Theorem~\ref{thm:parII:k-reg_homomorphism},
 so $\PlCSP^2(f)$ is \#P-hard by Lemma~\ref{domain-pairing-4}.
\end{itemize}

\vspace{.2in}
\item{Case 2: \texorpdfstring{$[f_0, f_2, f_4] = [1,0,0]$ or $[0,0,1]$}{[f0,f2,f4]=[1,0,0] or [0,0,1]}}

We prove the case for $[f_0, f_2, f_4]=[1, 0, 0]$ , i.e., $f=[1, x, 0, y, 0]$. The other case is similar.

Note that we have $\partial(f)=[1, x+y, 0]$.
If $x+y\neq 0$, then $\PlCSP^2([1, x+y, 0])$ is \#P-hard by Theorem~\ref{thm:parII:k-reg_homomorphism}. Thus $\PlCSP^2(f)$ is \#P-hard.

If $x=-y\neq 0$, then $\PlCSP^2(f)$  is \#P-hard by Lemma~\ref{redundant}.

If $x=-y=0$, then $f=[1, 0]^{\otimes 4}\in\mathscr{P}$.

\vspace{.2in}
\item{Case 3: \texorpdfstring{$[f_0, f_2, f_4] = [1,r,r^2]$ with $r \neq 0$}{[f0,f2,f4]=[1,r,r2] with r!=0}}

In this case, $f=[1, x, r, y, r^2]$.
If $rx\neq y$, then $\PlCSP^2(f)$  is \#P-hard by Lemma~\ref{redundant}.
Otherwise, $f=[1, x, r, xr, r^2]$. Then we have $\partial(f)=(1+r)[1, x, r]$.
If $r \not = -1$, then we have $[1, x, r]$. In the following we will
separate out the cases according to value of $r$.

\vspace{.1in}
\noindent
For $r^4\neq 1$ in $f = [1, x, r, xr, r^2]$.
\begin{itemize}
\item If $x=0$, then $f = [1, 0, r, 0, r^2]$, and  $\PlCSP^2(f)$ is \#P-hard by
Lemma~\ref{rpower4neq1}.
\item If $x^2=r$, then $f=[1, x]^{\otimes 4}\in\mathscr{P}$.
\item If $x^2\neq r$ and $x\neq 0$, then $\PlCSP^2([1, x, r])$  is \#P-hard by Theorem~\ref{thm:parII:k-reg_homomorphism}. Thus $\PlCSP^2(f)$ is \#P-hard.
\end{itemize}

\vspace{.1in}
\noindent
For $r=1$, then $f = [1, x, 1, x, 1]$.
\begin{itemize}
\item If $x^4=0$ or $1$, then $f\in\mathscr{A}$.
\item If $x^4\neq 0, 1$, then
let $a=\frac{1-x}{1+x}$ and we have $a^4\neq 0, 1$
by Lemma~\ref{pre-4-power}. Note that $f=\frac{1}{1+a}\left\{[1, 1]^{\otimes 4}+a[1, -1]^{\otimes 4}\right\}$.
By Lemma~\ref{affine-with-wrong-scalar}, Pl-$\#{\rm CSP}^{2}(f)$ is \#P-hard.
\end{itemize}

\vspace{.1in}
\noindent
For $r=-1$, then $f = [1, x, -1, -x, 1]$.
\begin{itemize}
\item If $x^4=0$ or $1$, then $f\in\mathscr{A}$.
\item If $x^4\neq 0, 1,$ then let
 $a=\frac{1+xi}{1-xi}$ and we have  $a^4\neq 0, 1$
by Lemma~\ref{pre-4-power}.
Note that $f=\frac{1}{a+1}\left\{
[1, i]^{\otimes 4}+ a[1, -i]^{\otimes 4} \right\}$.
Thus we have $[1, 0, -1]^{\otimes 2}$ on the left by Lemma~\ref{construct-[1,0,1]-by-(1,i)-(1,-i)}.
Under the holographic transformation by $\trans{1}{0}{0}{i}$,
this $[1, 0, -1]^{\otimes 2}$ is transformed to $[1, 0, 1]^{\otimes 2}$,
and we have
\begin{equation} \label{reduction-4}
 \PlCSP^{2}(f)
 \equiv_T \PlHolant(\mathcal{EQ}_4 \cup \{[1, 0, 1]^{\otimes 2}, [1, 0, -1], [1, 0, 0, 0, 0, 0, -1], \cdots\} \mid f'),
\end{equation}
where $f'=\frac{1}{1+a} \left\{[1, 1]^{\otimes 4}+a[1, -1]^{\otimes 4}\right \}$.
Now having $[1, 0, 1]^{\otimes 2}$ on the left, we can
form a pair of self loops in a planar way for a pair of adjacent $f'$
and get $(\partial(f'))^{\otimes 2}=
\left(\frac{2}{1+a}[1+a, 1-a, 1+a]\right)^{\otimes 2}$ on the right side.
Since we have $[1, 0, 1]^{\otimes 2}$ on the left side, we can obtain
$[1, 0, 1]^{\otimes 2}$ on the right side by interpolation using $[1+a, 1-a, 1+a]^{\otimes 2}$.
Note that the matrix 
$\left[\begin{smallmatrix} 1+a & 1-a \\
1-a & 1+a \end{smallmatrix}\right]$ 
can be diagonalized by 
$\left[\begin{smallmatrix} 1 & 1 \\
1 & -1 \end{smallmatrix}\right]$.
This implies that
\begin{multline*}
 \PlCSP^{4}(f', [1, 0, 1]^{\otimes 2})
 \leq_T\\
 \PlHolant(\mathcal{EQ}_4 \cup \{[1, 0, 1]^{\otimes 2}, [1, 0, -1], [1, 0, 0, 0, 0, 0, -1], \cdots\} \mid f').
\end{multline*}
Then by~\eqref{reduction-4} and Lemma~\ref{mixing-P-global-binary},
we have
\begin{equation}\label{reduction-6}
  \PlCSP^{2}(f')\leq \PlCSP^{4}(f', [1, 0, 1]^{\otimes 2})\leq
  \PlCSP^{2}(f).
\end{equation}
By Lemma~\ref{affine-with-wrong-scalar},
$\PlCSP(f')$ is \#P-hard.
Thus $\PlCSP(f)$ is \#P-hard.
\end{itemize}

\vspace{.1in}
\noindent
For $r^2=-1$, then $r = \pm i$ in $f= [1, x, r, xr, -1]$.
\begin{itemize}
\item If $x=0$, then $f=[1, 0, r, 0, -1]\in\mathscr{A}^{\dagger}$
since
$\left[\begin{smallmatrix} 1 & 0 \\
0& \sqrt{i} \end{smallmatrix}\right]^{\otimes 4}f
= [1,0,\pm 1, 0, 1] \in\mathscr{A}$.
\item If $x^2=r$, then $f=[1, x]^{\otimes 4}\in\mathscr{P}$.
\item If $x^2=-r$,
then $f=[1, x, -x^2, -x^3, -1]\in\mathscr{A}^{\dagger}$
since
$\left[\begin{smallmatrix} 1 & 0 \\
0& x^{-1} \end{smallmatrix}\right]^{\otimes 4}f\in\mathscr{A}$,
with $x^4 =-1$.
\item If $x^4\neq 0, -1$, thus $x^2 \not = \pm r$.
Then $\PlCSP^2([1, x, r])$  is \#P-hard by Theorem~\ref{thm:parII:k-reg_homomorphism}.
 Thus $\PlCSP^2(f)$ is \#P-hard.
\end{itemize}


\vspace{.2in}
\item{Case 4: \texorpdfstring{$[f_0, f_2, f_4] = [0,1,0]$}{[f0,f2,f4]=[0,1,0]}}

In this case, $f=[0, x, 1, y, 0]$.
We first apply Lemma~\ref{redundant} and calculate the
determinant of the compressed matrix for $f$, which is $2xy-1$.
If $xy\neq \frac{1}{2}$, then Pl-$\#$CSP$^2(f)$
is \#P-hard by Lemma~\ref{redundant}.

If $xy=\frac{1}{2}$ and $x-y=0$,
then $f=[0, \frac{1}{\sqrt{2}}, 1, \frac{1}{\sqrt{2}}, 0]$
or $f=[0, -\frac{1}{\sqrt{2}}, 1, -\frac{1}{\sqrt{2}}, 0]$.
Both are in $\widehat{\mathscr{M}}$, by Lemma~\ref{mixing-M-arity-4}.

If $xy=\frac{1}{2}$ and $x+y=0$, then
$f=[0, \frac{i}{\sqrt{2}}, 1, -\frac{i}{\sqrt{2}}, 0]$
or $f=[0, -\frac{i}{\sqrt{2}}, 1, \frac{i}{\sqrt{2}}, 0]$.
Both are in $\widehat{\mathscr{M}}^{\dagger}$,
by Lemma~\ref{mixing-M-arity-4}.
In fact from the previous line 
with $[0, \pm\frac{1}{\sqrt{2}}, 1, \pm\frac{1}{\sqrt{2}}, 0]$,
 we can see directly
$\left[\begin{smallmatrix} 1 & 0 \\
0& i \end{smallmatrix}\right]^{\otimes 4}f\in\widehat{\mathscr{M}}$.

In the following we have $xy=\frac{1}{2}$ and $x^2\neq y^2$. Then
$f^{\timesATOPtimes}=[1+2x^2, 2+x^2+y^2, 1+2y^2]$
and $f^{\timestimes}=[1+2x^2, 1, 1+2y^2]$.
We will prove that $\PlCSP(f^{\timesATOPtimes}, f^{\timestimes})$ is \#P-hard by showing that $f^{\timesATOPtimes}, f^{\timestimes}$ cannot
be both in the same $\mathscr{P}, \mathscr{A}$, or $\widehat{\mathscr{M}}$.
\begin{itemize}
\item By $x^2\neq y^2$ and Lemma~\ref{binary}, we have $f^{\timestimes}\notin \widehat{\mathscr{M}}$.
\item Suppose $\{f^{\timesATOPtimes}, f^{\timestimes}\}
\subset \mathscr{P}$.
$f^{\timestimes}$ is not of the form
$[1, 0, a]$, and also not of the form
  $[0, 1, 0]$ since $1+2x^2\neq 1+2y^2$.
Thus $f^{\timestimes}$ is degenerate,
i.e., $(1+2x^2)(1+2y^2)=1$.
Note that $f^{\timesATOPtimes}$ is not of the form $[0, 1, 0]$ since $1+2x^2\neq 1+2y^2$.
If $f^{\timesATOPtimes}$ is of the form $[1, 0, a]$, then $x^2+y^2=-2$. Then together with $xy=\frac{1}{2}$
we obtain $(1+2x^2)(1+2y^2)=-2\neq 1$. This contradicts that $f^{\timestimes}$ is degenerate.
Thus $f^{\timesATOPtimes}$ and $f^{\timestimes}$ are both degenerate.
Then we have
\begin{align}
 (1+2x^2) (1+2y^2) &= (x^2+y^2+2)^2, \notag\\
 (1+2x^2) (1+2y^2) &= 1. \label{arity-4-[0,1,0]-2}
\end{align}
Together we have
$(x^2+y^2+2)^2 = 1$,
i.e., $x^2+y^2 = -3$ or $x^2+y^2 = -1$.
However both possibilities contradict~\eqref{arity-4-[0,1,0]-2} and $x y = \frac{1}{2}$.
Thus $f^{\timesATOPtimes}$ and $f^{\timestimes}$ cannot both belong to $\mathscr{P}$.
\item Suppose $\{f^{\timesATOPtimes}, f^{\timestimes}\} \subset \mathscr{A}$.
By $f^{\timestimes}\in\mathscr{A}$
and the middle term is nonzero, by
Corollary~\ref{binary-necessary} we have $1+2x^2=\pm (1+2y^2)$.
Since $x^2\neq y^2$,
we have $1+2x^2 = -1-2y^2$.
This leads to $(x+y)^2=0$ by using $x y = \frac{1}{2}$.
This contradicts $x^2 \neq y^2$.
\end{itemize}
We have proved that
$f^{\timesATOPtimes}, f^{\timestimes}$ cannot be
both in $\mathscr{P}$, or $\mathscr{A}$, or $\widehat{\mathscr{M}}$.
Thus $\PlCSP(f^{\timesATOPtimes}, f^{\timestimes})$ is \#P-hard by Theorem~\ref{pl-dicho-1}.
So $\PlCSP^2(f)$ is \#P-hard by Lemma~\ref{domain-pairing-4}.

\vspace{.2in}
\item{Case 5: \texorpdfstring{$[f_0, f_2, f_4] = [1,0,a]$ with $a \neq 0$}{[f0,f2,f4]=[1,0,a] with a!=0}}

In this case, $f=[1, x, 0, y, a]$.
We first apply Lemma~\ref{redundant} and calculate the
determinant of the compressed matrix for $f$,
which is $-(ax^2 + y^2)$.
If $ax^2+y^2\neq 0$, then $\PlCSP^2(f)$  is \#P-hard by Lemma~\ref{redundant}.
In the following we assume $ax^2+y^2=0$.

If $x=y=0$, then $f\in\mathscr{P}$.

If $x=y\neq 0$, then $a=-1$. So $f=[1, x, 0, x, -1]\in\widehat{\mathscr{M}}^{\dagger}$, by Corollary~\ref{mixing-M-arity-4}.

If $x=-y\neq 0$, then $a=-1$. So $f=[1, x, 0, -x, -1]\in\widehat{\mathscr{M}}$,
by Corollary~\ref{mixing-M-arity-4}.

Now we assume $ax^2+y^2=0$ and $x^2\neq y^2$.
Then $a\neq -1$ and $xy\neq 0$ by $a\neq 0$.
In this case,
the ``Three Stooges'' are
\[
 f^{\times} = [1, 0, a],
 \qquad
 f^{\timesATOPtimes} = [1+2x^2, x^2+y^2, a^2+2y^2],
 \qquad \text{and} \qquad
 f^{\timestimes}=[1+2x^2, 2xy, a^2+2y^2].
\]
By $ax^2+y^2=0$,
we have
\[
 f^{\timesATOPtimes} = [1+2x^2, (1-a)x^2, a^2-2ax^2]
 \qquad \text{and} \qquad
 f^{\timestimes}=[1+2x^2, 2xy, a^2-2ax^2].
\]
We will prove that $\PlCSP(f^{\times}, f^{\timesATOPtimes}, f^{\timestimes})$ is \#P-hard by showing that $f^{\times}, f^{\timesATOPtimes}$ and $f^{\timestimes}$ cannot be all in the same
$\mathscr{P}, \mathscr{A},$ or $\widehat{\mathscr{M}}$.
\begin{itemize}
\item Assume that $\{f^{\times}, f^{\timestimes}\}\subseteq \widehat{\mathscr{M}}$. Note that $a\neq -1$.
If $f^{\times}\in\widehat{\mathscr{M}}$, we have $a=1$ by Lemma~\ref{binary}.
Then by $f^{\timestimes}\in\widehat{\mathscr{M}}$ and Lemma~\ref{binary},
 we have $1+2x^2=1-2x^2$ or $2xy= 0$. This is a contradiction.
\item Assume that $\{f^{\timesATOPtimes}, f^{\timestimes}\}\subseteq \mathscr{P}$. If $1+2x^2$ and $a^2-2ax^2$ are both zero, then $a=0$ or $-1$.
This is a contradiction. Thus $f^{\timesATOPtimes}, f^{\timestimes}$
are not of the form $[0, 1, 0]$. By $xy\neq 0$, $f^{\timestimes}$
is not of the form $[1, 0, c]$ with $c \neq 0$.
Thus
$f^{\timestimes}$ is degenerate by Lemma~\ref{binary}, i.e.,
\begin{equation}\label{arity-4-[1,0,a]-prod}
(1+2x^2)(a^2-2ax^2)=4x^2y^2 = - 4 a x^4,
\end{equation}
where the last equality is by $ax^2 + y^2 =0$.

If $a=1$,
we have $1-4x^4=-4x^4$ by~\eqref{arity-4-[1,0,a]-prod}.
This is a contradiction.

If $a\neq 1$, then $f^{\timesATOPtimes}$ is not of the form $[1, 0, c]$ with $c\neq 0$.
Thus $f^{\timesATOPtimes}$ is degenerate by $f^{\timesATOPtimes}\in\mathscr{P}$,
i.e.,
\begin{equation*}
 (1+2x^2)(a^2-2ax^2)=(1-a)^2x^4.
 \end{equation*}
Then  by (\ref{arity-4-[1,0,a]-prod}), we have
$-4ax^4=(1-a)^2x^4$. This implies that $-4a=(1-a)^2$ by $x\neq 0$.
Then $(1 +a)^2 =0$, contradicting  $a\neq \pm 1$.
\item Suppose $\{f^{\times},
f^{\timestimes}\}\subset \mathscr{A}$. By $f^{\times}\in\mathscr{A}$,
and $a \not =0$, 
we get $a^4=1$ from Lemma~\ref{binary}. It follows that $a=1$ or $a^2=-1$, as we have $a \not = -1$.

For $a=1$, the equation $ax^2 + y^2 =0$ gives us $y^2 = - x^2$. Then
from Corollary~\ref{binary-necessary} we have
\begin{equation*}
(1+2x^2)^2=(1-2x^2)^2
\end{equation*}
by $f^{\timestimes}\in \mathscr{A}$ and $2xy\neq 0$. Thus $x=0$.
This is a contradiction.

For $a^2=-1$, by $f^{\timestimes}\in \mathscr{A}$
and $2xy\neq 0$, we have $(1+2x^2)^2=(-1-2ax^2)^2$ by Corollary~\ref{binary-necessary}.
$1+2x^2 = 1 + 2ax^2$ leads to a contradiction $a=1$,
hence $1+2x^2 = -(1 + 2ax^2)$.
Then $x^2=-\frac{1}{a+1}$
and $f^{\timestimes}=[\frac{a-1}{a+1}, 2xy, \frac{a-1}{a+1}]$.
Note that $a+1\neq 0$.
We observe that the norm of $x^2$ is $\frac{1}{\sqrt{2}}$
and the norm of $x$ is equal to the norm of $y$ by $ax^2=-y^2$ and $a^2=-1$.
Thus the norm of $2xy$ is $\sqrt{2}$.
Moreover, the norm of $\frac{a-1}{a+1}$ is 1, as $a = \pm i$.
Thus the norm of $2xy$ is not equal to the norm of $\frac{a-1}{a+1}$,
and are nonzero.
So $f^{\timestimes}\notin\mathscr{A}$ by Corollary~\ref{binary-necessary}.
\end{itemize}
This implies that $f^{\times}, f^{\timesATOPtimes}$
and $f^{\timestimes}$ cannot be all in $\mathscr{P}$,
or all in $\mathscr{A}$,  or all in $\widehat{\mathscr{M}}$.
Thus the problem $\PlCSP(f^{\times}, f^{\timesATOPtimes}, f^{\timestimes})$ is \#P-hard by Theorem~\ref{pl-dicho-1}. So $\PlCSP^2(f)$ is \#P-hard.

\vspace{.2in}
\item{Case 6: \texorpdfstring{$[f_0, f_2, f_4] = [1, \pm 1, -1]$}{[f0,f2,f4]=[1,+-1,-1]}}

In this case, $f=[1, x, 1, y, -1]$ or $[1, x, -1, y, -1]$.
We consider the first case; the second case is similar.

We have $\partial(f)=[2, x+y, 0]$.
If $x+y\neq 0$, then $\PlCSP^2([2, x+y, 0])$
is \#P-hard by Theorem~\ref{thm:parII:k-reg_homomorphism}. Thus $\PlCSP^2(f)$ is \#P-hard.
Now we assume $x+y=0$.
Next we
apply Lemma~\ref{redundant} and calculate the
determinant of the compressed matrix for $f$,
which is a nonzero constant multiple of $x^2 + 1$.
If $x^2+1\neq 0$,
then $\PlCSP^2(f)$  is \#P-hard by Lemma~\ref{redundant}.

If $x+y=0$ and $x^2+1=0$, then $f=[1, \pm i, 1, \mp i, -1]$.
We have
\[
 \partial(f) = 2[1, 0, 0],
 \qquad
 \partial_{[1, 0, 0]}(f)=[1, \pm i, 1],
 \qquad \text{and} \qquad
 \partial_{[1, \pm i, 1]}(f) = [0, \pm 2i, 2].
\]
Then $\PlCSP^2([0, \pm 2i, 2])$ is \#P-hard by Theorem~\ref{thm:parII:k-reg_homomorphism}.
Thus $\PlCSP^2(f)$ is \#P-hard.

\vspace{.2in}
\item{Case 7: \texorpdfstring{$[f_0, f_2, f_4] = [1,b,1]$ with $b^2 \neq 0,1$}{[f0,f2,f4]=[1,b,1] with b2!=0,1}}

 In this last case of
Theorem~\ref{arity-4-dichotomy},
 $f=[1, x, b, y, 1]$ and the determinant of the compress signature matrix is
 \begin{equation}\label{deter-[1,b,1]}
D=b+2bxy-b^3-x^2-y^2.
 \end{equation}
If $D\neq 0$, then $\PlCSP^2(f)$  is \#P-hard by Lemma~\ref{redundant}.
In the following we assume that $D=0$.

 If $x=y=0$, then $b=0$ or $b^2=1$ by $D=b(1-b^2)=0$. This is a contradiction.

 If $x=y\neq 0$, then $D=(1-b)[b(1+b)-2x^2]=0$. By $b\neq 1$,
we have $b(1+b)=2x^2$.
 This
 implies that $f\in\widehat{\mathscr{M}}$ by Corollary~\ref{mixing-M-arity-4}.

  Similarly, if $x=-y\neq 0$, then $D=(1+b)[b(1-b)-2x^2]=0$.
  By $b\neq -1$, we have $b(1-b)=2x^2$.
 This
 implies that $f\in\widehat{\mathscr{M}}^{\dagger}$ by Corollary~\ref{mixing-M-arity-4}.

  In the following, assume that $x^2\neq y^2$ in addition to $D=0$.
In this case,
the ``Three Stooges'' are
\begin{align*}
 f^{\times} &= [1, b, 1],\\
 f^{\timesATOPtimes} &= [1+b^2+2x^2, 2b^2+x^2+y^2, 1+b^2+2y^2], \qquad \text{and}\\
 f^{\timestimes} &= [1+b^2+2x^2, 2b+2xy, 1+b^2+2y^2].
\end{align*}
We will prove that $\PlCSP(f^{\times}, f^{\timesATOPtimes}, f^{\timestimes})$ is \#P-hard by showing that $f^{\times}, f^{\timesATOPtimes}, f^{\timestimes}$ cannot all be in the same
$\mathscr{P}$, or $\mathscr{A}$, or $\widehat{\mathscr{M}}$.

By $b^2\neq 0, 1$, we have $f^{\times}\notin\mathscr{P}$
 by Lemma~\ref{binary}.
\begin{itemize}
\item Suppose $b^2\neq -1$. Then in addition to $b^2\neq 0,1$,
 we have $b^4 \not =  0,1$.
Then $f^{\times} \notin \mathscr{A}$ by  Lemma~\ref{binary}.
Moreover, if $f^{\timesATOPtimes}\in \widehat{\mathscr{M}}$,
then by Lemma~\ref{binary} and the fact that $x^2\neq y^2$, we must have
\begin{equation} \label{arity-4-[1,b,1]}
 1+b^2+2x^2 = -(1+b^2+2y^2)
 \qquad \text{and} \qquad
 2b^2+x^2+y^2 = 0.
\end{equation}
From~\eqref{arity-4-[1,b,1]},
we get $b^2 = 1$.
This is a contradiction.
This implies that $f^{\times}, f^{\timesATOPtimes}$ cannot be all in $\mathscr{P}$, or all in $\mathscr{A}$, or all in $\widehat{\mathscr{M}}$ when $b^2\neq -1$.
\item Now suppose
 $b^2=-1$. Then 
$f^{\timesATOPtimes}=[2x^2, x^2+y^2-2, 2y^2]$
and $f^{\timestimes}=2[x^2, b+xy, y^2]$.
If $f^{\timesATOPtimes}\in \widehat{\mathscr{M}}$,
then by $x^2\neq y^2$ and Lemma~\ref{binary},
we have
\[
 x^2 = -y^2
 \qquad \text{and} \qquad
 x^2+y^2-2=0
\]
This is a contradiction.

Finally suppose $\{f^{\timesATOPtimes}, f^{\timestimes}\}\subset\mathscr{A}$.
\begin{itemize}
\item If $x^2+y^2=0$, then $xy=-1$  by $b^2=-1$ and
\[D=b+2bxy-b^3-x^2-y^2=0.\]
Then $f^{\timesATOPtimes}=2[x^2, -1, y^2], f^{\timestimes}=2[x^2, b-1, y^2]$ both have all nonzero entries.
If they are both in $\mathscr{A}$, the norm of their entries
must be all the same $|b-1| = |x^2| = |-1| = 1$,
by Corollary~\ref{binary-necessary}.
However $b-1$ does not have norm 1 since $b^2=-1$.

\item If $x^2+y^2\neq 0$, then, since we also have $x^2\neq y^2$,
the first and the last entries of
both $f^{\timesATOPtimes}$ and $f^{\timestimes}$
are neither equal nor negative of each other.
  It follows from membership in $\mathscr{A}$
 that  $x^2+y^2-2=0$ and $b+xy=0$  by  Corollary~\ref{binary-necessary}.
Then by $D=b+2bxy-b^3-x^2-y^2=0$ and $b^2=-1$,
we get a contradiction.
  \end{itemize}
We have proved that $f^{\times}, f^{\timesATOPtimes}, f^{\timestimes}$
cannot be all in $\mathscr{P}$,
or all in $\mathscr{A}$,
or all in $\widehat{\mathscr{M}}$ when $b^2=-1$.
\end{itemize}
From above, $f^{\times}, f^{\timesATOPtimes}, f^{\timestimes}$
cannot be all in $\mathscr{P},$
or all in $\mathscr{A}$,
or all in $\widehat{\mathscr{M}}$ when $x^2\neq y^2$ and $D=0$.
Thus $\PlCSP(f^{\times}, f^{\timesATOPtimes}, f^{\timestimes})$ is \#P-hard by Theorem~\ref{pl-dicho-1}.
So $\PlCSP^2(f)$ is \#P-hard.
This completes the proof of Case~7.
\end{description}

This completes the proof of Theorem~\ref{arity-4-dichotomy}.
\end{proof}

\section{An Application of Cyclotomic Field} \label{PartII.secE.cyclcomic}

\subsection{Dichotomy Theorem with a Signature in \texorpdfstring{$\widehat{\mathscr{M}} \setminus (\mathscr{P} \cup \widetilde{\mathscr{A}})$}{Mtilde - (P union Atilde)}}


The next three lemmas are crucial.
The purpose of these lemmas is to give a similar result as Lemma~\ref{M-odd} when the signature set $\mathcal{F}$ contains some
$f \in \widehat{\mathscr{M}} \setminus (\mathscr{P} \cup \widetilde{\mathscr{A}})$,
and all signatures in $\mathcal{F}$ have even arity.
The proof uses an argument involving the degree of extension of a \emph{cyclotomic field}.

We first prove that if we have an even arity signature in
$\widehat{\mathscr{M}} \setminus (\mathscr{P} \cup \widetilde{\mathscr{A}})$,
then we can construct a binary $[1,a,1]$ with $a^4 \notin \{0,1\}$.

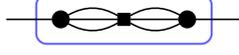
\begin{figure}[t]
 \centering
 \begin{tikzpicture}[scale=\scale,transform shape,node distance=\nodeDist,semithick]
  \node[external] (0)              {};
  \node[internal] (1) [right of=0] {};
  \node[square]   (2) [right of=1] {};
  \node[internal] (3) [right of=2] {};
  \node[external] (4) [right of=3] {};
  \path (0) edge             (1)
        (1) edge[bend left]  (2)
            edge             (2)
            edge[bend right] (2)
        (2) edge[bend left]  (3)
            edge             (3)
            edge[bend right] (3)
        (3) edge             (4);
  \begin{pgfonlayer}{background}
   \node[draw=\borderColor,thick,rounded corners,fit = (1) (3),inner sep=10pt] {};
  \end{pgfonlayer}
 \end{tikzpicture}
 \caption{Gadget used in the proof of Lemma~\ref{[1,a,1]-construction}.}
 \label{fig:6}
\end{figure}

\begin{lemma} \label{[1,a,1]-construction}
 Let $\mathcal{F}$ be a set of symmetric signatures
 containing some
 $f \in \widehat{\mathscr{M}} \setminus(\mathscr{P} \cup \widetilde{\mathscr{A}})$, which has even arity.
 Then
 \[
  \PlCSP^2([1,a, 1], \mathcal{F})
  \leq_T
  \PlCSP^2(\mathcal{F})
 \]
 for some $a$ satisfying $a^4 \notin \{0,1\}$.
\end{lemma}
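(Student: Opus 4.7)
I plan to dispatch the claim by exhaustive case analysis against the classification of $\widehat{\mathscr{M}} \setminus (\mathscr{P} \cup \widetilde{\mathscr{A}})$ in Lemma~\ref{M-2-M-NOT-IN-A-AND-P}: up to a scalar, $f$ is (i) a binary $\lambda[1,a,1]$ with $a^4 \notin \{0,1\}$; (ii) a tensor sum/difference $[s,t]^{\otimes n} \pm [t,s]^{\otimes n}$ with $st \ne 0$ and $s^4 \ne t^4$ ($n \ge 3$); or (iii) the double-root signature $f_k = \lambda(\pm 1)^k(n-2k)$ with $\lambda \ne 0$ ($n \ge 3$). Since $f$ has even arity, the arity-$2$ case (which only occurs inside family~(i)) is immediate with $a$ read off from $f$. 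For each of the remaining families with arity $n \ge 4$ even, I first apply $\partial^{(n-4)/2}$ to reduce to an arity-$4$ instance; this preserves the shape up to a nonzero scalar, using $s^2 + t^2 \ne 0$ (forced by $s^4 \ne t^4$) in family~(ii) and $\lambda \ne 0$ in family~(iii).

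For the tensor-sum subcase of~(ii), I would simply iterate the self-loop one more time to obtain $\partial^{(n-2)/2}(f) = (s^2+t^2)^{(n-2)/2}[s^2+t^2,\, 2st,\, s^2+t^2]$, a nonzero multiple of $[1, 2st/(s^2+t^2), 1]$. Setting $a = 2st/(s^2+t^2)$, each of $a \in \{\pm 1, \pm i\}$ forces $(s \mp t)^2 = 0$ or $(s \mp it)^2 = 0$, hence $s^4 = t^4$, which is excluded; therefore $a^4 \notin \{0,1\}$ as required.

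The tensor-difference subcase is where the main obstacle lies: here $\partial(f)$ collapses to a scalar multiple of $[1,0,-1]$, which by itself lies in $\mathscr{A}$ and is useless. The idea I would use is to recycle this degenerate derivative. After reducing to arity~$4$, I set $h = \partial(f)$, a nonzero multiple of $[1,0,-1]$, and form the arity-$2$ signature $\partial_h(f)(x_1, x_2) = f(0,0,x_1,x_2) - f(1,1,x_1,x_2)$. A direct calculation on $f = [s,t]^{\otimes 4} - [t,s]^{\otimes 4}$ yields $(s^2-t^2)[s^2+t^2,\, 2st,\, s^2+t^2]$, again a nonzero multiple of $[1, 2st/(s^2+t^2), 1]$, and the same verification gives $a^4 \notin \{0,1\}$. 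Finally, for family~(iii) reduced to $n = 4$ so that $f = \lambda[4, \mp 2, 0, \pm 2, -4]$, I would apply the Figure~\ref{fig:6} gadget with both circles labelled $f$ and the arity-$6$ square labelled $=_6$; its binary signature equals $[f_0^2 + f_3^2,\, f_0 f_1 + f_3 f_4,\, f_1^2 + f_4^2] = 4\lambda^2[5, \mp 4, 5]$, so $a = \mp 4/5$ and $a^4 = 256/625 \notin \{0,1\}$.
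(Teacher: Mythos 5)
You've reproduced the overall architecture of the paper's proof --- the same case split via Lemma~\ref{M-2-M-NOT-IN-A-AND-P}, the same self-loop reductions, the same Figure~\ref{fig:6} gadget formula for the double-root family, and the same $\partial_{[1,0,-1]}$ trick for the tensor-difference family. Your reduction to arity~$4$, your formula $[f_0^2+f_3^2,\ f_0f_1+f_3f_4,\ f_1^2+f_4^2]$ for the gadget, and your treatment of family~(iii) are all correct. But there is a genuine algebraic error in the claim that $a = 2st/(s^2+t^2)$ automatically satisfies $a^4\notin\{0,1\}$.

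The values $a=\pm 1$ do force $(s\mp t)^2 = 0$ and hence $s^4=t^4$, so those are excluded. But $a=\pm i$ does \emph{not} force $(s\mp it)^2=0$: vanishing of $(s\mp it)^2 = s^2 \mp 2ist - t^2$ corresponds to $s^2+t^2=0$, i.e., to $a$ being undefined, not to $a=\pm i$. The actual condition $a=i$ reads $s^2+t^2 + 2ist = 0$; dividing by $t^2$ and setting $r=s/t$ gives $r^2+2ir+1=0$, with solutions $r = i(-1\pm\sqrt{2})$, for which $r^4 = (-1\pm\sqrt{2})^4 = 17\mp 12\sqrt{2} \neq 1$. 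So $s^4\ne t^4$ is entirely consistent with $a=\pm i$, and the binary signature you produce may be $\lambda[1,\pm i,1]$, which lies in $\mathscr{A}$ and is useless for the lemma. The same gap reappears in your tensor-difference subcase, since $\partial_{[1,0,-1]}(f)$ reduces to the very same binary $[s^2+t^2,\ 2st,\ s^2+t^2]$.

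The paper confronts this explicitly. When $a=\pm i$ it backs up one derivative, takes $g = \partial^{n-2}(f)$ of arity $4$, and shows (substituting $s^2+t^2 = \mp 2ist$) that $g$ is a nonzero multiple of $[3,\pm i,-1,\pm i,3]$; feeding this through the Figure~\ref{fig:6} gadget gives $[8,\pm 6i,8]$, hence $a=\pm\tfrac{3i}{4}$ with $a^4 = \tfrac{81}{256}$. For the $2n=4$ difference case with $a=\pm i$, the paper similarly shows $f$ itself is a multiple of $[2i,\mp 1,0,\pm 1,-2i]$ and runs the gadget on it to obtain $[-3,\mp 4i,-3]$. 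You have all the right tools in hand; you just need this $a=\pm i$ fallback branch in both tensor subcases.
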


\begin{proof}
 If $f$ has arity~$2$,
 then we are done by Lemma~\ref{M-2-M-NOT-IN-A-AND-P}.
 Thus,
 we assume that $f$ has arity $2 n \geq 4$.
 By Lemma~\ref{M-2-M-NOT-IN-A-AND-P},
 we have either
 $f = [s, t]^{\otimes 2n} \pm [t, s]^{\otimes 2n}$ with $s^4 \neq t^4$ and $s t \neq 0$
 or $f_k = (\pm 1)^k (2n-2k)$ up to a scalar.

For $f=[s, t]^{\otimes 2n}+[t, s]^{\otimes 2n}$,
we have $\partial^{n-1}(f)
=(s^2+t^2)^{n-1} \{ [s, t]^{\otimes 2} + [t, s]^{\otimes 2} \}
=(s^2+t^2)^{n}[1, a, 1]$,
where $a = \frac{2st}{s^2+t^2}$.
Note that $s^2+t^2 \not =0$ and  $a \neq 0, \pm 1$.
If $a \neq \pm i$, then we are done.
Suppose $a=\pm i$.
Then $g=\partial^{n-2}(f)
=(s^2+t^2)^{n-2} \{ [s, t]^{\otimes 4} + [t, s]^{\otimes 4} \}$.
A simple calculation shows that $g =
-2s^2t^2(s^2+t^2)^{n-2} [3, \pm i, -1, \pm i, 3]$.
 Consider the gadget in Figure~\ref{fig:6}.
 We assign $[3, \pm i, -1, \pm i, 3]$ to the circle vertices
 and $=_6$ to the square vertex.
Its signature is $[8, \pm 6i, 8]$,
 so we are done.

For $f=[s, t]^{\otimes 2n}-[t, s]^{\otimes 2n}$,
we have $\partial^{n-1}(f)
=(s^2+t^2)^{n-1} \{ [s, t]^{\otimes 2} -  [t, s]^{\otimes 2} \}
=\lambda [1, 0, -1]$, where
$\lambda = (s^2+t^2)^{n-1} (s^2 -t^2) \not = 0$.
For $2n\geq 6$, we have $\partial_{[1, 0, -1]}(f)
=(s^2-t^2)\{[s, t]^{\otimes 2n-2}+[t, s]^{\otimes 2n-2}\}$
and we are done by the proof of the previous case, as $2n-2 \ge 4$.
For $2n=4$, we have $\partial_{[1, 0, -1]}(f)
=(s^2-t^2) \{ [s, t]^{\otimes 2} +  [t, s]^{\otimes 2} \}
= (s^4 - t^4) [1, a, 1]$,
where $a = \frac{2st}{s^2+t^2} \not = 0, \pm 1$.
If $a \neq \pm i$, then we are done.
Suppose $a =\pm i$, then
a simple calculation shows that
$f$ is a nonzero multiple of $[2 i, \mp 1, 0, \pm 1, -2 i]$.
(One can verify that $\frac{s^3t - st^3}{s^4 - t^4} = \frac{st}{s^2+t^2}
= \frac{a}{2} = \pm \frac{i}{2}$.)
 Consider the gadget in Figure~\ref{fig:6}.
 We assign $[2i, \mp 1, 0, \pm 1, -2i]$ to the circle vertices and
 $=_6$ to the square vertex..
 The signature of this gadget is $[-3, \mp 4i, -3]$,
 so we are done.

 For $f_k = (\pm 1)^k (2n-2k)$,
 we have $\partial^{n-2}(f) = 2^{n-1} [2, \pm 1, 0, \mp 1, -2]$.
 Consider the gadget in Figure~\ref{fig:6}.
 We assign $[2, \pm 1, 0, \mp 1, -2]$ to the circle vertices and 
 $=_6$ to the square vertex.
 The signature of this gadget is $ [5, \pm 4, 5]$,
 so we are done.
\end{proof}

The next lemma shows that if we have $[1, a, 1]$ with $a^4\neq 0, 1$,
then we can obtain $[1, 1]^{\otimes 2}$ by interpolation.

\begin{lemma} \label{[1,a,1]-interpolation}
 For any signature set $\mathcal{F}$ and any $a^4 \notin \{0,1\}$,
 \[
  \PlCSP^2(\{[1,1]^{\otimes 2}\} \cup \mathcal{F})
  \leq_T
  \PlCSP^2(\{[1,a,1]\} \cup \mathcal{F}).
 \]
\end{lemma}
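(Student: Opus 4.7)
My plan is to reduce by polynomial-time interpolation. For each $k \ge 1$, let $N_k$ be the binary \emph{chain gadget} obtained by concatenating $k$ copies of $[1,a,1]$ end to end; its signature matrix is $M^k$, where $M = \left[\begin{smallmatrix}1 & a\\ a & 1\end{smallmatrix}\right]$. Given an instance $\Omega$ of $\PlCSP^2(\{[1,1]^{\otimes 2}\} \cup \mathcal{F})$ containing $n$ copies of $[1,1]^{\otimes 2}$, I replace every copy by $N_k$, maintaining planarity, to obtain an instance $\Omega_k$ of $\PlCSP^2(\{[1,a,1]\} \cup \mathcal{F})$. I then intend to recover $\Holant_\Omega$ by querying $\Holant_{\Omega_k}$ for suitably many values of $k$ and inverting a Vandermonde-type linear system.

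Setting $u = \tfrac{1}{\sqrt{2}}[1,1]^\top$ and $v = \tfrac{1}{\sqrt{2}}[1,-1]^\top$, one has the spectral decomposition $M = (1+a)uu^\top + (1-a)vv^\top$, whence $M^k = (1+a)^k uu^\top + (1-a)^k vv^\top$, while $[1,1]^{\otimes 2} = 2uu^\top$. Stratifying the product over the $2^n$ ways each replaced gadget can contribute its $uu^\top$-part or its $vv^\top$-part yields
\[
  \Holant_{\Omega_k} \;=\; (1-a)^{kn}\sum_{i=0}^n c_i\, r^{ki}, \qquad r := \tfrac{1+a}{1-a},
\]
where the coefficients $c_i$ depend only on $\Omega$ (not on $k$) and satisfy $\Holant_\Omega = 2^n c_n$. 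The hypothesis $a^4 \notin \{0,1\}$ together with Lemma~\ref{pre-4-power} (applied with $x=1, y=a$) gives $r^4 \notin \{0,1\}$; in particular $1 \pm a \neq 0$ and $r \neq 0$. When $r$ is not a root of unity, the powers $r, r^2, \ldots, r^{n+1}$ are pairwise distinct, so the oracle queries $\Holant_{\Omega_1}, \ldots, \Holant_{\Omega_{n+1}}$ form a nonsingular Vandermonde system in the unknowns $c_0, \ldots, c_n$, from which $c_n$ (and hence $\Holant_\Omega$) is recovered in polynomial time.

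The hard part is the case where $r$ is a root of unity of some order $m$: then $\{r^k\}_{k \ge 1}$ cycles through only $m$ values, and as soon as $m \le n$ the above Vandermonde system is singular. The key is that this case is severely constrained, via cyclotomic field theory. Since $r \in \mathbb{Q}(a)$ and the roots of unity lying in any number field $K$ form a cyclic group whose order $N$ satisfies $\varphi(\mathrm{ord}(\zeta)) \mid [K:\mathbb{Q}]$ for every $\zeta \in K$, the order $m$ of $r$ is bounded by a constant depending only on $a$, independent of the instance $\Omega$. Moreover, the inequality $r^4 \ne 1$ forces $m \ge 5$. For each such bounded exceptional $m$ (equivalently, each exceptional value of $a$), I plan to exhibit a direct planar gadget realization of a nonzero multiple of $[1,1]^{\otimes 2}$ using $[1,a,1]$ and $\mathcal{EQ}_2$, exploiting the algebraic collapse $M^m = (1-a)^m I$ together with the two-dimensional identity $[1,1]^{\otimes 2} = (1 - \tfrac{1}{a})I + \tfrac{1}{a} M$ in the algebra $\mathbb{C}[M]$. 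Combining the generic Vandermonde interpolation with this finite case analysis completes the reduction.
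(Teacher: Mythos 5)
Your proof has the right overall shape — chain gadgets, spectral decomposition, Vandermonde interpolation when $r=\frac{1+a}{1-a}$ is not a root of unity — and correctly identifies that the root-of-unity case is the crux and that cyclotomic field theory is the relevant tool. The generic half matches the paper's argument. However, your plan for the exceptional case has two genuine problems.

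First, a minor factual slip: $r^4\neq 1$ does \emph{not} force $\operatorname{ord}(r)\geq 5$. It only forces $\operatorname{ord}(r)\nmid 4$, so $\operatorname{ord}(r)=3$ is possible; indeed $a=i\sqrt{3}$ satisfies $a^4=9\notin\{0,1\}$ while $r=\frac{1+i\sqrt{3}}{1-i\sqrt{3}}=e^{2\pi i/3}$ is a primitive cube root of unity.

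Second, and more seriously, your proposed resolution of the exceptional case is not a valid gadget construction. You write that you will "exhibit a direct planar gadget realization of a nonzero multiple of $[1,1]^{\otimes 2}$\ldots exploiting $M^m=(1-a)^m I$ together with the identity $[1,1]^{\otimes 2}=(1-\tfrac{1}{a})I+\tfrac{1}{a}M$." The identity is true as a statement in the commutative algebra $\mathbb{C}[M]$, but planar gadgets built from $[1,a,1]$ and $\mathcal{EQ}_2$ compose signatures by contraction and tensoring — you cannot form an arbitrary $\mathbb{C}$-linear combination of two signature matrices as the signature of a single gadget. So this step simply has no construction behind it, and the observation $M^m=(1-a)^m I$ only tells you that chaining $m$ copies of $[1,a,1]$ degenerates to a multiple of $[1,0,1]$, which is useless here. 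This is a real gap, not a detail to be filled in.

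The paper escapes the exceptional case by a different device. Besides chaining, it uses the gadget consisting of two copies of $=_{2k}$ joined by $2k-1$ parallel copies of $[1,a,1]$, which realizes $[1,a^{2k-1},1]$ for every $k\geq 1$. It then proves that \emph{some} $k$ makes $\frac{1+a^{2k-1}}{1-a^{2k-1}}$ not a root of unity: assuming otherwise, $a\in\Phi_m$ for some fixed $m$, and $|r|=1$ forces $a=ih$ with $h$ real, $h\notin\{0,\pm1\}$; as $k\to\infty$ the quantity $|h|^{2k-1}$ tends to $0$ or $\infty$, which forces the orders of the roots of unity $\frac{1+a^{2k-1}}{1-a^{2k-1}}$ to grow without bound, contradicting the finite degree $\varphi(m)=[\Phi_m:\mathbb{Q}]$. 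It is this quantitative growth-of-order argument, applied across the whole sequence $\{a^{2k-1}\}$, that you would need to recover; establishing a bound on $\operatorname{ord}(r)$ alone does not help if you cannot then realize $[1,1]^{\otimes 2}$ in the bounded cases.
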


\begin{proof}
 The eigenvalues of
 \(
  \left[
  \begin{smallmatrix}
   1 & b \\
   b & 1
  \end{smallmatrix}
  \right]
 \)
 are $1+b$ and $1-b$ respectively.
 If we have a signature $[1,b,1]$,
 for some $b \neq 1$,
 such that ratio $\frac{1+b}{1-b}$ of eigenvalues is not a root of unity,
 then we can interpolate any binary signature $[1,x,1]$ for $x \in \mathbb{C}$.
 In particular,
 we could interpolate the desired $[1,1]^{\otimes 2}$.

 Indeed,
 let $\Omega$ be an instance of $\PlCSP^2(\{[1,x,1]\} \cup \mathcal{F})$ in which $[1,x,1]$ occurs $n$ times.
 Write
 \(
  \left[
  \begin{smallmatrix}
   1 & x \\
   x & 1
  \end{smallmatrix}
  \right]
 \)
 as
 \(
  H
  \left[
  \begin{smallmatrix}
  1+x & 0 \\
  0 & 1-x
  \end{smallmatrix}
  \right]
  H,
 \)
 where
 \(
  H
  =
  \frac{1}{\sqrt{2}}
  \left[
  \begin{smallmatrix}
   1 &  1 \\
   1 & -1
  \end{smallmatrix}
  \right].
 \)
 We can stratify the partition function value on $\Omega$
 as $\operatorname{Z}(\Omega) = \sum_{\ell =1}^n c_\ell (1+x)^\ell (1-x)^{n-\ell}$,
 where $c_\ell$ is the sum,
 over all assignments that assign $00$ to $\ell$ copies of
 \(
  \left[
  \begin{smallmatrix}
   1+x & 0 \\
   0   & 1-x
  \end{smallmatrix}
  \right]
 \)
 and $11$ to the remaining $n-\ell$ copies,
 of the product of evaluations of all other signatures from $\mathcal{F}$ and those copies of $H$.
 If we construct a sequence $\Omega_k$ of instances of $\PlCSP^2(\{[1,b,1]\} \cup \mathcal{F})$,
 where we replace each occurrence of $[1,x,1]$ by a chain of $k$ linked copies of $[1,b,1]$,
 then since
 \(
  \left[
  \begin{smallmatrix}
   1 & b \\
   b & 1
  \end{smallmatrix}
  \right]^k
  =
  H
  \left[
  \begin{smallmatrix}
   (1+b)^k & 0 \\
   0       & (1-b)^k
  \end{smallmatrix}
  \right]
  H,
 \)
 we have $\operatorname{Z}(\Omega_k) = (1-b)^{kn} \sum_{\ell =1}^n c_\ell (\frac{1+b}{1-b})^{k \ell}$,
 for $0 \le k \le n$.
 This is a Vandermonde system of full rank,
 and we can solve for all $c_\ell$ and find the value $\operatorname{Z}(\Omega)$.

 The simple gadget with two copies of $=_{2k}$ connected by $2k-1$ parallel copies of $[1,a,1]$ has signature $[1,a^{2k-1},1]$.
 Our key claim is that there exists a $k \ge 1$,
 depending only on $a$,
 such that $\frac{1+a^{2k-1}}{1-a^{2k-1}}$ is \emph{not} a root of unity.
 Then we are done by the interpolation given above.
 
 For a contradiction,
 assume that $\frac{1+a^{2k-1}}{1-a^{2k-1}}$ is a root of unity for all $k \ge 1$.
 For $k = 1$,
 $\frac{1+a}{1-a}$ is some root of unity $e^{2 \pi i j / m}$,
 where $\gcd(j,m) = 1$.
 Then $a \in \Phi_m = \mathbb{Q}(e^{2 \pi i / m})$,
 the $m$-th cyclotomic field.
 Therefore $a^{2k-1} \in \Phi_m$ as well for all $k \ge 1$.
 Furthermore,
 $|\frac{1+a}{1-a}| = 1$,
 so $a$ is purely imaginary,
 i.e.~$a = ih$ for some real $h \notin \{0, \pm 1\}$ since $a^4 \notin \{0,1\}$.
 First we consider the case $0 < |h| < 1$.
 Then $a^{2k-1} = \pm i h^{2k-1}$ and $\lim_{k \to \infty} h^{2k-1} = 0$.

 For all $k \ge 1$,
 $\frac{1+a^{2k-1}}{1-a^{2k-1}}$ is some root of unity $e^{2 \pi i J / M}$
 (in which $J$ and $M$ depend on $k$),
 where $0 < |J| < M / 2$ with $\gcd(J,M) = 1$.
 Then $e^{2 \pi i / M} \in \Phi_m$ as well,
 so $\Phi_M \subseteq \Phi_m$.
 Note that $|\tan (\pi J/M)| = |h|^{2k-1}$.
 Hence $|h|^{2k-1} \ge \tan (\pi /M) \ge \pi /M$.
 Thus $M \ge \pi/|h|^{2k-1}$.

 However, the $M$-th cyclotomic field $\Phi_M$ has degree of extension $[\mathbb{Q}(e^{2 \pi i/M}) : \mathbb{Q}] = \varphi(M)$,
 where $\varphi$ is the Euler totient function.
 We have a crude estimate $(\varphi(M))^2 \ge M/2$,
 which is obvious by considering each prime dividing $M$.
 Then it follows that $\lim_{M \to \infty} \varphi(M) = \infty$,
 which contradicts $\varphi(M) \le \varphi(m) < \infty$.

 The remaining case $|h| > 1$ can be handled similarly.
 In fact,
 if $|h^{2k-1}|$ is large,
 then the angle
 $\tan^{-1}(|h^{2k-1}|)$ is close (but unequal) to $\pi/2$.
 Then the angle of $\left(\frac{1+a^{2k-1}}{1-a^{2k-1}}\right)^2$ is close (but unequal) to $0$ mod $2 \pi$.
\end{proof}

Combining Lemma~\ref{mixing-P-global}, Lemma~\ref{[1,a,1]-construction} and Lemma~\ref{[1,a,1]-interpolation},
we have proved the following.

\begin{lemma} \label{M-even}
 Let $\mathcal{F}$ be a set of even-arity signatures containing $f$.
 If $f \in \widehat{\mathscr{M}} \setminus (\mathscr{P} \cup \widetilde{\mathscr{A}})$,
 then $\PlCSP^2(\mathcal{F})$ is \numP-hard
 unless $\mathcal{F}\subseteq\widehat{\mathscr{M}}$.
%
\end{lemma}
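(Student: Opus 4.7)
The plan is a direct assembly of the three lemmas just proved, so the approach is mostly bookkeeping rather than new gadget-building. First I would apply Lemma~\ref{[1,a,1]-construction} to the distinguished $f \in \widehat{\mathscr{M}} \setminus (\mathscr{P} \cup \widetilde{\mathscr{A}})$ of even arity. This immediately yields a binary signature $[1, a, 1]$ with $a^{4} \notin \{0, 1\}$ in the reduced problem, so
\[
\PlCSP^{2}([1, a, 1] \cup \mathcal{F}) \leq_{T} \PlCSP^{2}(\mathcal{F}).
\]

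Second, I would feed this $[1, a, 1]$ into Lemma~\ref{[1,a,1]-interpolation} to interpolate the degenerate binary tensor power $[1, 1]^{\otimes 2}$, obtaining
\[
\PlCSP^{2}(\{[1,1]^{\otimes 2}\} \cup \mathcal{F}) \leq_{T} \PlCSP^{2}(\mathcal{F}).
\]
At this point we have manufactured, entirely inside the even-arity world, a signature of the form $[1, \omega]^{\otimes 2}$ with $\omega = 1 \neq 0$, which is exactly the hypothesis needed to trigger the global simulation trick.

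Third, I would invoke Lemma~\ref{mixing-P-global} with $\omega = 1$ on the augmented set $\{[1,1]^{\otimes 2}\} \cup \mathcal{F}$. That lemma says $\PlCSP^{2}$ of this set is \numP-hard unless the set is entirely contained in one of $\mathscr{P}$, $\mathscr{A}$, $\mathscr{A}^{\dagger}$, $\widehat{\mathscr{M}}$, or $\widehat{\mathscr{M}}^{\dagger}$. Since $f \in \mathcal{F}$ and $f \notin \mathscr{P} \cup \widetilde{\mathscr{A}}$ by hypothesis, the first three options are ruled out immediately. The option $\mathcal{F} \subseteq \widehat{\mathscr{M}}^{\dagger}$ is ruled out by Corollary~\ref{M-2-M-A-P}, which guarantees that a signature in $\widehat{\mathscr{M}} \setminus (\mathscr{P} \cup \widetilde{\mathscr{A}})$ cannot lie in $\widehat{\mathscr{M}}^{\dagger}$. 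Thus the only surviving non-hard case is $\mathcal{F} \subseteq \widehat{\mathscr{M}}$, which is the stated exception.

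There is essentially no obstacle here, since all the heavy lifting was done in the preceding three lemmas; the only subtlety worth double-checking is that Lemma~\ref{mixing-P-global} does require every signature in $\mathcal{F}$ to have even arity, which is precisely the hypothesis of the current lemma, and that the interpolated $[1,1]^{\otimes 2}$ has even arity as well, so appending it preserves the even-arity assumption needed to apply the global simulation.
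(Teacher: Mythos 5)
Your proof is correct and is exactly the approach the paper takes: the paper proves Lemma~\ref{M-even} simply by noting it follows from combining Lemma~\ref{mixing-P-global}, Lemma~\ref{[1,a,1]-construction}, and Lemma~\ref{[1,a,1]-interpolation}, and your write-up spells out the same reduction chain (construct $[1,a,1]$, interpolate $[1,1]^{\otimes 2}$, invoke the global simulation lemma, and rule out the other four tractable classes via $f \notin \mathscr{P}\cup\widetilde{\mathscr{A}}$ together with Corollary~\ref{M-2-M-A-P}). The bookkeeping is accurate, including the observation that $[1,1]^{\otimes 2}=[1,1,1]\in\widehat{\mathscr{M}}$ so adjoining it does not spoil the exceptional case.
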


\subsection{Dichotomy Theorem with a Signature in \texorpdfstring{$\widehat{\mathscr{M}}^\dagger \setminus (\mathscr{P} \cup \widetilde{\mathscr{A}})$}{Mhatdagger - (P union Atilde)}}\label{subsection-E.2}

We would like to prove a corresponding statement
 to Lemma~\ref{M-even} after replacing the condition 
$f\in\widehat{\mathscr{M}}\setminus(\mathscr{P}\cup\widetilde{\mathscr{A}})$
by
$f\in\widehat{\mathscr{M}}^\dagger\setminus(\mathscr{P}\cup\widetilde{\mathscr{A}})$.
This corresponding statement is indeed true
and is implied by Theorem~\ref{dichotomy-pl-csp2},
the final dichotomy theorem for $\PlCSP^2$.
However,
at this point leading up to the proof of Theorem~\ref{dichotomy-pl-csp2},
we are not able to prove it.
Instead,
we prove a weaker version,
Lemma~\ref{2-M-even},
in which $f$ is assisted by a binary signature other than a multiple of $[1,0,1]$.

\begin{remark} \label{rmk:E2:1}
 Here we explain some of the difficulties in the proof
 caused by structural complications of the signatures involved.

When we prove the No-Mixing statements for $\widehat{\mathscr{M}}$
the crucial step is the ability to construct $[1, \omega]^{\otimes 2}$
with $\omega\neq 0$ in the $\PlCSP^2$ setting
(cf.~Lemma~\ref{[1,a,1]-construction} and Lemma~\ref{[1,a,1]-interpolation}). 
This is the key, and the only known
method, for us to
leverage the existing dichotomy for $\PlCSP$
(cf.~Lemma~\ref{mixing-P-global}).
Then in a similar spirit,
to prove the No-Mixing statements for $\hat{\mathscr{M}}^{\dagger}$,
we would like to be able to construct $[1, \omega]^{\otimes 2}$ as well.

A signature $f = [f_0, \ldots, f_n]$ is called an \emph{odd} signature
if $f_{2k}=0$ for all $k \ge 0$,
and an \emph{even} signature if $f_{2k+1}=0$ for all $k \ge 0$.

In any $\mathcal{F}$-gate $H$, if every
signature in  $\mathcal{F}$ satisfies parity constraints,
then the signature of $H$ also
satisfies parity constraints. In fact the parity
of the signature of $H$ is the same as the parity
of the  number of occurrences of odd signatures of $\mathcal{F}$
 in $H$.
To see this, suppose   $\sigma$ 
 is a $\{0, 1\}$-assignment
 to all the edges of $H$, including
internal and external edges, that has a nonzero evaluation on $H$.
By parity constraints,
each odd (resp. even) signature appearing in $H$
 has an
odd (resp.  even) number of
incident edges assigned 1.
Adding up all these numbers $\bmod~2$,
noting that each internal edge of $H$ assigned 1 contributes 2 to the sum
while each external edge of $H$ assigned 1 contributes 1,
we get $N \equiv 2X + Y \equiv Y \pmod 2$,
where $N$ is the number of occurrences of odd signatures of $\mathcal{F}$
in $H$,  and $X$ (resp. $Y$)  is the number of internal 
(resp. external) edges assigned  to $1$ by  $\sigma$.
Hence $H$ has the same parity as $N$.

For any signature of the form $f=[s, ti]^{\otimes m} \pm [t, si]^{\otimes m}$,
or $f_k=(\pm i)^k(m-2k)$, for any arity $m$,
$(Z^{-1})^{\otimes m}f$ satisfies the parity constraints,
where $Z= \frac{1}{2}\left[
\begin{smallmatrix}
1 & 1\\
i & -i
\end{smallmatrix}
\right]$.
In fact for $f$ of the first type, $(Z^{-1})^{\otimes m}f
= [u, v]^{\otimes m} \pm [u, -v]^{\otimes m}$
for $u = s+t$ and $v= s-t$, and
for $f$ of the second type, $(Z^{-1})^{\otimes m}f
= 2^m [0,1,0, \ldots, 0]$ or $2^m [0, \ldots, 0,1,0]$.
Note that
\[
 \trans{1}{1}{i}{-1}^{\otimes m}
 [0,1,0, \ldots, 0]
 =
 \Sym_n^{n-1}( 
 \left[\begin{smallmatrix} 1 \\  i \end{smallmatrix}\right];
 \left[\begin{smallmatrix} 1 \\ -i \end{smallmatrix}\right])
 \]
has its $k$-th term $i^k (m-2k)$.
Similarly,
$\left[\begin{smallmatrix}
1 & 1\\
i & -i
\end{smallmatrix}
\right]^{\otimes m} [0,\ldots,0,1,0]$
has its $k$-th term $(-i)^k (m-2k)$.

Under the  holographic transformation $Z$,  we have
\begin{equation} \label{eqn-remark-E.2}
 \PlCSP^2(f)
 \equiv_T
 \PlHolant([0, 1, 0], [1, 0, 1, 0, 1], \ldots \mid \hat{f}),
\end{equation}
where $\hat{f}=(Z^{-1})^{\otimes m}f$,
and $\frac{1}{2} [0, 1, 0] = (=_2) Z^{\otimes 2}$, 
$\frac{1}{2^3}[1, 0, 1, 0, 1]
= (=_4) Z^{\otimes 4}$, etc.
Notice that for the signatures $(=_{2n}) Z^{\otimes 2n}$,
if  the arity $2n\equiv 2\pmod 4$ then the signature is odd,
and if $2n\equiv 0\pmod 4$ then the signature is even.

Every signature of the form
 $[s, ti]^{\otimes m}+[t, si]^{\otimes m}$ is even,
every signature of the form
 $[s, ti]^{\otimes m}-[t, si]^{\otimes m}$ is odd,
and for even arity $2n$ the signatures
$[0,1,0, \ldots, 0]$ and $[0,\ldots,0,1,0]$ are
both odd.

Thus, if we focus on signatures
$f=[s, ti]^{\otimes 2n}+[t, si]^{\otimes 2n}$ with arity $2n\equiv 0\pmod 4$,
or $f=[s, ti]^{\otimes 2n}-[t, si]^{\otimes 2n}$ with
arity $2n\equiv 2\pmod 4$,
or $f_k=(\pm i)^k(2n-2k)$ with arity $2n\equiv 2\pmod 4$,
then the following property holds
for \emph{all} the signatures in the bipartite $\PlHolant$ problem in~\eqref{eqn-remark-E.2}:
\begin{quote}
All signatures of arity $2n\equiv 2\pmod 4$
satisfy odd parity and all signatures of arity $2n\equiv 0\pmod 4$
satisfy even parity.
\end{quote}

It follows that, for such $f$, any gadget constructed from
(\ref{eqn-remark-E.2}) has the same parity as the number of  occurrences of
signatures of arity $2n\equiv 2\pmod 4$.

Furthermore, in a bipartite gadget construction in
$\PlHolant([0, 1, 0], [1, 0, 1, 0, 1], \ldots \mid \hat{f})$,
if the resulting signature of the gadget is \emph{binary},
 the number of  occurrences of  signatures of arity
$2n\equiv 2\pmod 4$ in this gadget \emph{must} be odd.
Indeed 
let $N_0$ (resp. $N_2$) denote the  number of  occurrences of
 signatures of arity
$2n\equiv 0 \pmod 4$ (resp. $2n\equiv 2\pmod 4$) in this 
bipartite gadget, and we add up the arities of all signatures modulo 4, 
we get $0 N_0 + 2 N_2 \equiv  2 N_I + 2 \pmod 4$,
where $N_I$ is the number of internal edges in the bipartite gadget,
and the additive term $2$ is because the gadget is a binary gadget.
Thus $N_2  \equiv  N_I + 1 \pmod 2$.
On the other hand, since the gadget  is bipartite, 
$N_I$ is the sum of all arities of signatures from RHS,  and
minus 2 if the external 2 edges come from the RHS.
As all signatures in this gadget 
have even arity,
$N_I \equiv 0 \pmod 2$.  Hence $N_2  \equiv 1 \pmod 2$.

This implies that any binary signature
constructed in
$\PlHolant([0, 1, 0], [1, 0, 1, 0, 1], \ldots \mid \hat{f})$
must have odd parity, i.e.,
they are all of the form $\lambda[0, 1, 0]$.
Thus, before the $Z$-transformation, one can only
construct binary signatures of the form  $\frac{\lambda}{2}[1, 0, 1] = 
\lambda Z^{\otimes 2} [0, 1, 0]$
in $\PlCSP(f)$ by gadget construction.
This can be verified as 
$\left[
\begin{smallmatrix}
1 & 1\\
i & -i
\end{smallmatrix}
\right]
\left[
\begin{smallmatrix}
0 & 1\\
1 & 0
\end{smallmatrix}
\right]
\left[
\begin{smallmatrix}
1 & i\\
1 & -i
\end{smallmatrix}
\right]
=
2 
\left[
\begin{smallmatrix}
1 & 0\\
0 & 1
\end{smallmatrix}
\right]$.

In particular one \emph{cannot} construct $[1, \omega]^{\otimes 2}$
in $\PlCSP^2(f)$ by gadget construction.
This explains the extra mile we have to travel in this proof.

As indicated, therefore,
 we prove a weaker version of  Lemma~\ref{M-even} in this subsection, namely
Lemma~\ref{2-M-even},
in which $f$ is assisted by a binary signature other 
than a multiple of $[1,0,1]$.
\end{remark}

\vspace{.1in}

We begin with the following lemma.

\begin{lemma}\label{2-M-even-X}
Let $\mathcal{F}$ be any set of symmetric signatures of even arities,
and suppose $\mathcal{F}$ contains signatures $f$ and $g$,
where
$f\in\widehat{\mathscr{M}}^\dagger
\setminus(\mathscr{P}\cup\widetilde{\mathscr{A}})$,
and $g=[g_0, g_1, \ldots, g_{2n}]$ and there exists a positive
 integer $s$
such that $g_0^{s}=-g_{2n}^{s}\neq 0$.
Then
either $\mathcal{F}\subseteq\widehat{\mathscr{M}}^\dagger$ or $\PlCSP^2(\mathcal{F})$ is \numP-hard.
\end{lemma}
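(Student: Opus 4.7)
The overall plan is to reduce $\PlCSP^2(\mathcal{F})$ to the setting of Lemma~\ref{M-even} by means of the holographic transformation $T = \trans{1}{0}{0}{i}$, using the hypothesis on $g$ to compensate for the fact that $T$ does not preserve $\mathcal{EQ}_2$. As highlighted in Remark~\ref{rmk:E2:1}, one cannot realize any $[1,\omega]^{\otimes 2}$ with $\omega\neq 0$ as a gadget in the present setting, so the direct route via Lemma~\ref{mixing-P-global} is unavailable. The role of $g$ is instead to let us build, for every $k\geq 1$, the LHS signature $G_{2k}:=[1,0,\dots,0,-1]$ of arity $2k$: take $s$ disjoint copies of $g$ and, in a planar way, connect all $2ns$ of their legs to $2ns$ of the $2ns+2k$ legs of a single $=_{2ns+2k}$ on the LHS; the resulting LHS signature is $[g_0^s,0,\dots,0,g_{2n}^s]$, which by the hypothesis $g_0^s=-g_{2n}^s\neq 0$ is a nonzero scalar multiple of $G_{2k}$.

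Applying the holographic transformation $T$ next, the signature $=_{2k}$ maps to $[1,0,\dots,0,i^{2k}]$ and the newly built $G_{2k}$ maps to $[1,0,\dots,0,-i^{2k}]$. Since $i^{2k}$ equals $+1$ when $4\mid 2k$ and $-1$ when $4\nmid 2k$, and the two sources produce opposite signs at every arity, the transformed LHS contains both $[1,0,\dots,0,1]$ and $[1,0,\dots,0,-1]$ at every even arity, and in particular contains all of $\mathcal{EQ}_2$. Restricting to the subset $\mathcal{EQ}_2$ on the LHS then yields the key reduction $\PlCSP^2(T^{-1}\mathcal{F})\leq_T \PlCSP^2(\mathcal{F})$.

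It remains to verify $T^{-1}f\in\widehat{\mathscr{M}}\setminus(\mathscr{P}\cup\widetilde{\mathscr{A}})$ and invoke Lemma~\ref{M-even}. The identity $T^{-1}\widehat{\mathscr{M}}^\dagger=\widehat{\mathscr{M}}$ is immediate from $\widehat{\mathscr{M}}^\dagger=\mathscr{T}_2\widehat{\mathscr{M}}$; and since $T\in\mathcal{T}_4$ stabilizes both $\mathscr{A}$ and $\mathscr{A}^\dagger$, so does $T^{-1}$, ruling out $T^{-1}f\in\widetilde{\mathscr{A}}$. The check $T^{-1}f\notin\mathscr{P}$ proceeds by case analysis on the three canonical forms of $f$ given by Lemma~\ref{M-2-M-NOT-IN-A-AND-P}: the binary form $\lambda[1,a,-1]$ maps to $\lambda[1,-ai,1]$ with $(-ai)^4=a^4\notin\{0,1\}$; the tensor-sum form $[s,ti]^{\otimes 2n}\pm[t,si]^{\otimes 2n}$ maps to $[s,t]^{\otimes 2n}\pm[t,s]^{\otimes 2n}$, still with $s^4\neq t^4$ and $st\neq 0$; and the arithmetic form $f_k=\lambda(\pm i)^k(2n-2k)$ maps to $h_k=\lambda(\pm 1)^k(2n-2k)$. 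In every case the image lies in $\widehat{\mathscr{M}}\setminus(\mathscr{P}\cup\widetilde{\mathscr{A}})$ by Lemma~\ref{M-2-M-NOT-IN-A-AND-P}. Lemma~\ref{M-even} applied to $T^{-1}\mathcal{F}$ then delivers the dichotomy: either $\PlCSP^2(T^{-1}\mathcal{F})$ is \numP-hard, whence so is $\PlCSP^2(\mathcal{F})$; or $T^{-1}\mathcal{F}\subseteq\widehat{\mathscr{M}}$, whence $\mathcal{F}\subseteq T\widehat{\mathscr{M}}=\widehat{\mathscr{M}}^\dagger$. The main obstacle to this plan is precisely the parity-based impossibility of the $[1,\omega]^{\otimes 2}$ construction emphasized in Remark~\ref{rmk:E2:1}, and the slight surprise is that it is circumvented not by constructing anything on the RHS but by enriching the LHS with the signatures $G_{2k}$, which after the transformation $T$ conspire with the original $\mathcal{EQ}_2$ to recover a full copy of $\mathcal{EQ}_2$ in the transformed problem.
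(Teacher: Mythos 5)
Your proof is correct and follows essentially the same route as the paper's: use $\partial^s_g(=_{2ns+2k})$ to manufacture the generalized-equality LHS signatures $[1,0,\dots,0,-1]$ of every even arity, observe that under the diagonal transformation $\trans{1}{0}{0}{\pm i}$ the enlarged LHS set $\mathcal{E}(-1)\cup\mathcal{EQ}_2$ is set-wise invariant, and hence reduce $\PlCSP^2(\trans{1}{0}{0}{-i}\mathcal{F})$ to $\PlCSP^2(\mathcal{F})$ before invoking Lemma~\ref{M-even}. The only cosmetic difference is that you verify $T^{-1}f\in\widehat{\mathscr{M}}\setminus(\mathscr{P}\cup\widetilde{\mathscr{A}})$ by explicit case analysis on the forms in Lemma~\ref{M-2-M-NOT-IN-A-AND-P}, while the paper just asserts it (which also follows directly since $\mathscr{P}$ and $\widetilde{\mathscr{A}}$ are stabilized by any diagonal matrix and $\trans{1}{0}{0}{-i}\widehat{\mathscr{M}}^\dagger=\widehat{\mathscr{M}}$).
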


\begin{proof}
Let $E_{2k}(-1)=[1, 0, \ldots, 0, -1]$ have
 arity $2k$ and $\mathcal{E}(-1) = \{E_{2k}(-1) \mid k\geq 1\}$.
Firstly, by our calculus
we have $\partial^{s}_{g}(=_{2ns+2k})=g^{s}_0E_{2k}(-1)$ on LHS for $k\geq 1$.
Thus we have
\[
 \PlHolant(\mathcal{E}(-1)\cup \mathcal{EQ}_2 \mid \mathcal{F})
 \leq_T
 \PlCSP^2(\mathcal{F}).
\]

Under a holographic transformation by
${T}^{-1} = \left[\begin{smallmatrix} 1 & 0 \\ 0 & i \end{smallmatrix}\right]$,
the set $\mathcal{E}(-1)\cup \mathcal{EQ}_2$ is set-wise invariant.
Indeed, for all $k \ge 1$, signatures of arity $4k$
in $\mathcal{E}(-1)\cup \mathcal{EQ}_2$  are pointwise fixed,
and signatures of arity $4k-2$ in $\mathcal{E}(-1)$ and in $\mathcal{EQ}_2$
are interchanged.
Thus,
\[
 \PlCSP^2({T}\mathcal{F})
 \leq_T
 \PlHolant(\mathcal{E}(-1) \cup \mathcal{EQ}_2 \mid T \mathcal{F})
 \equiv_T
 \PlHolant(\mathcal{E}(-1) \cup \mathcal{EQ}_2 \mid \mathcal{F}).
\]
Note that ${T}^{\otimes 2n}f\mathscr{f}\in{T}\mathcal{F}$
is in $\widehat{\mathscr{M}}\setminus(\mathscr{P}\cup\widetilde{\mathscr{A}})$.
Thus either
${T}\mathcal{F}\subseteq\widehat{\mathscr{M}}$ or
$\PlCSP^2({T}\mathcal{F})$ is $\#$P-hard
by Lemma~\ref{M-even}.
Note that ${T}\mathcal{F}\subseteq\widehat{\mathscr{M}}$ iff
$\mathcal{F}\subseteq\widehat{\mathscr{M}}^\dagger$.
Thus either $\mathcal{F}\subseteq\widehat{\mathscr{M}}^\dagger$ or
$\PlCSP^2(\mathcal{F})$ is $\#$P-hard.
\end{proof}

The next two lemmas show
that if we have a signature in
$\widehat{\mathscr{M}}^\dagger\setminus(\mathscr{P}\cup\widetilde{\mathscr{A}})$
and a binary signature that is not a multiple of
 $[1, 0, 1]$, then we have the same statement for
$\widehat{\mathscr{M}}^\dagger$,
 as Lemma~\ref{M-even} is
for $\widehat{\mathscr{M}}$.
This will be stated as Lemma~\ref{2-M-even}.
Note that if
$f\in\widehat{\mathscr{M}}^\dagger
\setminus(\mathscr{P}\cup\widetilde{\mathscr{A}})$
is a binary signature, then
$f$ takes the form $[1, b, -1]$ by Lemma~\ref{M-2-M-NOT-IN-A-AND-P},
and this case is covered by
Lemma~\ref{2-M-even-X}, where $f$ also plays the role of $g$.
Thus
we assume
$f\in\widehat{\mathscr{M}}^\dagger
\setminus(\mathscr{P}\cup\widetilde{\mathscr{A}})$
has arity $\ge 4$.
By Lemma~\ref{M-2-M-NOT-IN-A-AND-P}, such a signature $f$ has two forms.
Lemma~\ref{2-M-even-XX} and~\ref{2-M-even-XXX} handle these two cases
respectively.

\begin{lemma}\label{2-M-even-XX}
Let $\mathcal{F}$ be any set of symmetric signatures of even arities,
and suppose $\mathcal{F}$ contains signatures $f$ and $h$,
where
$f=[s, ti]^{\otimes 2n}\pm[t, si]^{\otimes 2n}$ with $2n\geq 4$, $s^4\neq t^4$ and $st\neq 0$,
and $h$ is any nonzero binary signature other than $\lambda[1, 0, 1]$.
Then either
$\mathcal{F}\subseteq\widehat{\mathscr{M}}^\dagger$ or $\PlCSP^2(\mathcal{F})$ is \numP-hard.
\end{lemma}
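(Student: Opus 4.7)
The plan is to reduce to Lemma~\ref{2-M-even-X} by manufacturing from $f$ and $h$ an auxiliary signature $g$ of even arity satisfying $g_0^s = -g_{\arity(g)}^s \neq 0$ for some $s \ge 1$, or else to derive $\#\mathrm{P}$-hardness by a different route when $h \notin \widehat{\mathscr{M}}^\dagger$. First I would compute
\[
 f_0 = s^{2n} + \epsilon t^{2n}, \qquad f_{2n} = (-1)^n(t^{2n} + \epsilon s^{2n}),
\]
so the ratio $f_0/f_{2n} = (-1)^n$ when $\epsilon = +1$ and $(-1)^{n+1}$ when $\epsilon = -1$. In the ``good'' configurations $(\epsilon,n)\in\{(+,\text{odd}),(-,\text{even})\}$, we get $f_0 = -f_{2n}$ (nonzero because $s^4\neq t^4$ forces the sum/difference of the $2n$-th powers to be nonzero in most cases, with the sporadic cases handled separately), so Lemma~\ref{2-M-even-X} applies directly with $g = f$ and $s = 1$.

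In the complementary ``bad'' configurations $(\epsilon,n)\in\{(+,\text{even}),(-,\text{odd})\}$, where $f_0 = f_{2n}$, the idea is to use $h$ to drop the arity by two while flipping the parity of $n$ into the good regime. A key calculation shows that for $h = [a,b,c]$ we have
\[
 \partial_h(f) = \mu\,[s,ti]^{\otimes 2(n-1)} + \epsilon\nu\,[t,si]^{\otimes 2(n-1)},
\]
where $\mu = as^2 + 2bsti - ct^2$ and $\nu = at^2 + 2bsti - cs^2$, and crucially $\mu - \nu = (a+c)(s^2-t^2)$. Hence $\mu = \nu$ exactly when $a + c = 0$, i.e., when $h\in\widehat{\mathscr{M}}^\dagger$. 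In that case, $\partial_h(f)$ is a nonzero scalar multiple of $f' = [s,ti]^{\otimes 2(n-1)} + \epsilon[t,si]^{\otimes 2(n-1)}$, which has the same form as $f$ but with arity parameter $n-1$ of opposite parity. Thus $f'$ satisfies $f'_0 = -f'_{2(n-1)}$, and we invoke Lemma~\ref{2-M-even-X} with $g = f'$, noting that $f'\in\widehat{\mathscr{M}}^\dagger$ so adding it to $\mathcal{F}$ does not affect the dichotomy conclusion $\mathcal{F}\subseteq\widehat{\mathscr{M}}^\dagger$.

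The remaining, genuinely hard, case is when $h\notin\widehat{\mathscr{M}}^\dagger$, i.e., $a+c\neq 0$. Here $\mathcal{F}\not\subseteq\widehat{\mathscr{M}}^\dagger$ automatically, so the goal is to prove $\#\mathrm{P}$-hardness directly. The plan is to iterate $\partial_h$ and $\partial$ to reduce to an arity-$4$ signature and invoke Theorem~\ref{arity-4-dichotomy}. Specifically, $\partial_h^{n-2}(f) = \mu^{n-2}[s,ti]^{\otimes 4} + \epsilon\nu^{n-2}[t,si]^{\otimes 4}$ is an arity-$4$ signature with $\mu \neq \nu$, and one checks that this signature lies outside $\mathscr{P}\cup\widetilde{\mathscr{A}}\cup\widetilde{\mathscr{M}}$ by verifying it fails the explicit criteria in Lemma~\ref{binary}, Corollary~\ref{mixing-M-arity-4}, and the second-order recurrence characterization of Lemma~\ref{second-recurrence-relation}. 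The main obstacle I expect is handling the degenerate sub-cases where this arity-$4$ signature accidentally falls into a tractable class (for instance, when $\mu/\nu$ is a root of unity that conspires with $s/t$); in such sub-cases one switches strategy, using an alternative construction such as $\partial^j\partial_h^k(f)$ for different $(j,k)$, or the realized binary $\partial_h(=_4) = [a,0,c]$ on the LHS, to manufacture either a different arity-$4$ witness to hardness or a $g$ satisfying the hypothesis of Lemma~\ref{2-M-even-X}. A subsidiary technical point throughout is ensuring the scalar multipliers (such as $\mu$, $s^2-t^2$, or $s^{2m}\pm t^{2m}$) are nonzero; when they vanish, one applies $\partial_h$ or $\partial$ one more time, or increments the exponent, to bypass the degeneracy.
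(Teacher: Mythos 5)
Your parity observation $f_0 = (-1)^n\,\epsilon\, f_{2n}$ is exactly the right starting point, and the resulting ``good''/``bad'' split matches the paper's very first step, which is to compute $\partial^{n-2}(f) = (s^2-t^2)^{n-2}\{[s,ti]^{\otimes 4} \pm (-1)^{n-2}[t,si]^{\otimes 4}\}$ and branch on the resulting sign. However, your handling of the ``good'' case is not quite right as stated: you claim Lemma~\ref{2-M-even-X} applies directly with $g=f$, but the needed hypothesis $f_0^s = -f_{2n}^s \neq 0$ can fail, since $f_0 = s^{2n} \pm t^{2n}$ may vanish even when $s^4 \neq t^4$ (e.g.\ $s/t$ a primitive $4n$-th root of unity). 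The paper avoids this by taking one more derivative: in the minus-sign case, $\partial^{n-1}(f)$ is a nonzero multiple of $[s^2+t^2, 2sti, -(s^2+t^2)]$, whose first entry $s^2+t^2 \neq 0$ is guaranteed by $s^4\neq t^4$.

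The more serious gap is in the ``bad'' case with $h\notin\widehat{\mathscr{M}}^\dagger$. Your proposed strategy of iterating $\partial_h$ (and $\partial$) to obtain an arity-$4$ witness for Theorem~\ref{arity-4-dichotomy} cannot work in general. Every iterate $\partial^j\partial_h^k(f)$ has the form $c_1[s,ti]^{\otimes m} + c_2[t,si]^{\otimes m}$ for scalars $c_1,c_2$, because these derivative operators act linearly on each tensor power and only rescale the two coefficients. When $c_1 c_2 \neq 0$ and $m \geq 3$, the second-order recurrence of the result always has eigenvalue product $(ti/s)(si/t) = -1$ and hence type $\langle 1, c, -1\rangle$, which is consistent with $\widehat{\mathscr{M}}^\dagger$; the result actually lies \emph{in} $\widehat{\mathscr{M}}^\dagger$ exactly when $c_1/c_2 = \pm 1$ (by the alternating-symmetry condition of Corollary~\ref{arity-hat-M-dagger}). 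Your observation that $\mu - \nu = (a+c)(s^2-t^2) \neq 0$ only rules out $\nu/\mu = 1$; it does \emph{not} rule out $\nu/\mu = -1$, which happens whenever $(a-c)(s^2+t^2) = -4bsti$ and is compatible with $a+c\neq 0$ (for instance $h = [1,b,0]$ with suitable $b\neq 0$). When $\nu/\mu = -1$, the ratio $(\nu/\mu)^k$ is $\pm 1$ for every $k$, and using $\partial$ only flips the sign, so \emph{no} combination $\partial^j\partial_h^k(f)$ ever leaves $\widehat{\mathscr{M}}^\dagger$. The paper escapes this trap with tools not in your toolkit: it first checks whether $h$ alone is $\#$P-hard via Theorem~\ref{thm:parII:k-reg_homomorphism}$'$; if not it enumerates the tractable forms of $h$ from Lemma~\ref{binary}, and for $h = [1,0,x]$ with $x^4\notin\{0,1\}$ it realizes the \emph{holographic} transformation $\operatorname{diag}(1,x)^{\otimes 4}g = [s,xti]^{\otimes 4} + [t,xsi]^{\otimes 4}$ by attaching one copy of $h$ to each external edge of $g$, changing the eigenvalue product to $-x^2$ so that Lemma~\ref{second-recurrence-relation} places the result outside every tractable class; for $h = [1,u,1]$ with $u^4\notin\{0,1\}$ it invokes Lemma~\ref{M-even} directly since $h\in\widehat{\mathscr{M}}\setminus(\mathscr{P}\cup\widetilde{\mathscr{A}})$; for $h = [a,b]^{\otimes 2}$ with $ab\neq 0$ it uses Lemma~\ref{mixing-P-global}. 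Without at least the first two of these you cannot close the argument.
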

\begin{proof}
Firstly, by our calculus, ignoring the nonzero factor
$(s^2-t^2)^{n-2}$ in $\partial^{n-2}(f)$,
we have $g=[s, ti]^{\otimes 4}\pm(-1)^{n-2}[t, si]^{\otimes 4}$.
If $g=[s, ti]^{\otimes 4}-[t, si]^{\otimes 4}$,
then we have $\partial(g)=(s^2-t^2)\{[s, ti]^{\otimes 2}+[t, si]^{\otimes 2}\}
=(s^2-t^2)[s^2+t^2, 2sti, -(s^2+t^2)]$
and we are done by Lemma~\ref{2-M-even-X}.

Suppose $g=[s, ti]^{\otimes 4}+[t, si]^{\otimes 4}$, and we also have
$h\neq\lambda[1, 0, 1]$.
If $h \notin \mathscr{P} \cup \widetilde{\mathscr{A}} \cup \widetilde{\mathscr{M}}$,
then $\PlCSP^2(\mathcal{F})$ is \numP-hard by Theorem~\ref{thm:parII:k-reg_homomorphism}$'$.
Otherwise,
by Lemma~\ref{binary},
the possibilities for $h$,
after normalizing,
are
\[
 [a, b]^{\otimes 2},~~
 [1, 0, x],~~
 [0, 1, 0],~~
 [1, \rho, -\rho^2],~~
 [1, \alpha, -\alpha^2],~~
 [1, u, 1],
 \quad \text{and} \quad
 [1, v, -1],
\]
where $x \notin \{0,1\}$, $\rho^4 = 1$, $\alpha^4 = -1$, $u^4 \notin \{0,1\}$, and $v^4 \notin \{0,1\}$.
\begin{itemize}
\item If $h=[a, b]^{\otimes 2}$ with $ab\neq 0$, then we are done by
Lemma~\ref{mixing-P-global}.
\item If $h\in\{[1, 0, -1], [1, 0, \pm i], [1, \pm 1, -1], [1, \alpha, -\alpha^2], [1, v, -1]\}$, then we
are done by Lemma~\ref{2-M-even-X}.
\item If $h=[1, u, 1]$ with $u^4\neq 0, 1$, then $h\in\widehat{\mathscr{M}}\setminus(\mathscr{P}\cup\widetilde{\mathscr{A}})$ by
Lemma~\ref{M-2-M-NOT-IN-A-AND-P}.
Thus we are done by Lemma~\ref{M-even}.
\end{itemize}
The remaining cases are $h= [1, 0]^{\otimes 2}$,
$[0, 1]^{\otimes 2}$, $[1, 0, x]$,
$[0, 1, 0]$ or $[1, \pm i, 1]$, 
where $x^4\neq 0, 1$.
\begin{itemize}
\item If $h=[1, 0, x]$ with $x^4\neq 0, 1$,
then
by taking 4 copies of $h$ and connecting one input of $h$ to each edge of $g$,
 we have $\hat{g}=\left[\begin{smallmatrix} 1 & 0\\
0 & x \end{smallmatrix}\right]^{\otimes 2n}g=[s, xti]^{\otimes 4}+[t, xsi]^{\otimes 4}$.
The signature $\hat{g}$ is non-degenerate, has arity 4, and
satisfies a second recurrence relation.
 The eigenvalues of the recurrence relation are
$\frac{xti}{s}$ and $\frac{xsi}{t}$. By the trace and product,
$\hat{g}$  has type
$\langle -x^2, \frac{xti}{s}+\frac{xsi}{t}, 1\rangle$.
Thus
$\hat{g}\notin\mathscr{P}\cup\widetilde{\mathscr{A}}\cup\widetilde{\mathscr{M}}$ by Lemma~\ref{second-recurrence-relation}, since
$(-x^2)^2 \not = 0,1$ and $\frac{t}{s}+\frac{s}{t} \not = 0$.
So $\PlCSP^2(\hat{g})$ is \numP-hard by Theorem~\ref{arity-4-dichotomy}.
Thus $\PlCSP^2(\mathcal{F})$ is \numP-hard.
\item If $h=[0, 1, 0]$, then $\partial_{h}(g)=2sti\{[s, ti]^{\otimes 2}+[t, si]^{\otimes 2}\}=2sti[s^2+t^2, 2sti, -(s^2+t^2)]$.
Then we are done by Lemma~\ref{2-M-even-X}.
\item If $h=[1, \pm i, 1]$, by connecting two copies of $[1, \pm i, 1]$ we have $\pm 2i[0, 1, 0]$, as
$\left[\begin{smallmatrix} 1 & \pm i\\
\pm i  & 1 \end{smallmatrix}\right]^2
=
\left[\begin{smallmatrix} 0 & \pm 2i\\
\pm 2i & 0 \end{smallmatrix}\right]$.
Then we are done by the previous case.
\item If $h=[1, 0]^{\otimes 2}$, then
we have $g'=\partial_{h}(g)=s^2[s, ti]^{\otimes 2}+t^2[t, si]^{\otimes 2}=[s^4+t^4, (s^2+t^2)sti, -2s^2t^2]$.
We claim that $g'\notin\mathscr{P}\cup\widetilde{\mathscr{A}}\cup\widetilde{\mathscr{M}}$.
\begin{itemize}
\item If $g'\in\mathscr{P}$, then $g'$ is degenerate by $(s^2+t^2)sti\neq 0$
and $-2s^2t^2\neq 0$.
So $-2s^2t^2(s^4+t^4)=-(s^2+t^2)^2s^2t^2$. Thus
$st =0$ or $(s^2-t^2)^2=0$. This is a contradiction.
\item If $g'\in\mathscr{A}\setminus\mathscr{P}$, then $g'=[1, \rho, -\rho^2]$
up to a scalar by Corollary~\ref{binary-necessary}, where $\rho^4 =1$.
By $\rho^2=\pm 1$, we have
$s^4+t^4=\pm2s^2t^2$. This contradicts that $s^4\neq t^4$.
\item If $g'\in\mathscr{A}^\dagger\setminus\mathscr{P}$,
then $g'=[1, \alpha, -\alpha^2]$ up to a scalar
by Corollary~\ref{binary-necessary}, where $\alpha^4 = -1$.
 Thus $2s^2t^2(s^4+t^4)=-(s^2+t^2)^2s^2t^2$.
Then, by $st \not =0$, we have
 $3(s^4+t^4)=-2s^2t^2 \not =0$, and so $|s^4+t^4|\neq|-2s^2t^2|$.
This implies that the norms of two nonzero entries of $g'$ are not equal.
This contradicts the form $g' = \lambda [1, \alpha, -\alpha^2]$.  
\item Since $s^4 \neq t^4$ we have $s^4+t^4\neq\pm2s^2t^2$.
Hence  $g' \notin \widetilde{\mathscr{M}}$ by Corollary~\ref{binary-necessary}.
\end{itemize}
Then by Theorem~\ref{thm:parII:k-reg_homomorphism}$'$,
$\PlCSP^2(g')$ is \numP-hard.
Thus $\PlCSP^2(\mathcal{F})$ is \numP-hard.
\item If $h=[0, 1]^{\otimes 2}$,
then we apply the transformation $\trans{0}{1}{1}{0}$
and are done by the previous case.
\qedhere
\end{itemize}
\end{proof}

\begin{lemma}\label{2-M-even-XXX}
Let $\mathcal{F}$ be any set of symmetric signatures of even arities,
and suppose $\mathcal{F}$ contains signatures $f$ and $h$,
where
$f$ has arity  $2n\geq 4$ and $f_k=(\pm i)^k(2n-2k)$,
and $h$ is any nonzero binary signature other than $\lambda[1, 0, 1]$.
Then either $\mathcal{F}\subseteq\widehat{\mathscr{M}}^\dagger$ or $\PlCSP^2(\mathcal{F})$ is \numP-hard.
\end{lemma}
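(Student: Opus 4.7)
The proof will follow the template of Lemma~\ref{2-M-even-XX}, by case analysis on the form of the binary signature $h$, but with calculations adapted to the double-root shape $f_k = (\pm i)^k(2n-2k)$. Two useful structural facts I would establish up front are that $\partial_{=_4}(f) = 2f'$ where $f'$ is the arity-$(2n-4)$ version of the same shape (so the shape is preserved under $\partial_{=_4}$), and that $\partial_{[0,1,0]}(f)$ is (up to a nonzero scalar) the arity-$(2n-2)$ version of the same shape with $n \mapsto n-1$. The main tools I plan to invoke are Lemma~\ref{2-M-even-X}, which provides the dichotomy whenever we can exhibit some $g \in \mathcal{F}$ (or derivable) with $g_0^s = -g_{\arity(g)}^s \neq 0$ for some positive integer $s$; Lemma~\ref{mixing-P-global}, Lemma~\ref{M-even}, and Lemma~\ref{mixing-P-binary}; and Theorem~\ref{arity-4-dichotomy}.

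First I would dispatch the easy forms of $h$. If $h \notin \mathscr{P} \cup \widetilde{\mathscr{A}} \cup \widetilde{\mathscr{M}}$, Theorem~\ref{thm:parII:k-reg_homomorphism}$'$ gives hardness; otherwise Lemma~\ref{binary} lists the surviving shapes. The binaries $[1,0,-1]$, $[1,0,\pm i]$, $[1,\pm 1,-1]$, $[1,\alpha,-\alpha^2]$, and $[1,v,-1]$ all satisfy $h_0^s = -h_2^s \neq 0$ for $s = 1$ or $s = 2$, so Lemma~\ref{2-M-even-X} applies with $g = h$. For $h = [1,u,1]$ with $u^4 \not\in \{0,1\}$, Lemma~\ref{M-even} forces $\mathcal{F} \subseteq \widehat{\mathscr{M}}$ or hardness, and Corollary~\ref{M-2-M-A-P} rules out the first alternative since $f \notin \widehat{\mathscr{M}}$. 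For $h = [a,b]^{\otimes 2}$ with $ab \neq 0$, Lemma~\ref{mixing-P-global} collapses $\mathcal{F}$ into one of the five tractable classes, and again by Corollary~\ref{M-2-M-A-P} only $\widehat{\mathscr{M}}^\dagger$ can contain $f$.

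The remaining candidates are $h \in \{[1,\pm i, 1], [0,1,0], [1,0,x], [1,0]^{\otimes 2}, [0,1]^{\otimes 2}\}$ with $x^4 \not\in \{0,1\}$. The case $h = [1,\pm i, 1]$ reduces to $h = [0,1,0]$ since joining two copies through $=_2$ yields $\pm 2i[0,1,0]$. For $h = [0,1,0]$ I split on the parity of $n$: when $n$ is even, $f_{2n} = (-1)^n(-2n) = -2n = -f_0$, so $g = f$ itself satisfies the hypothesis of Lemma~\ref{2-M-even-X}; when $n$ is odd, a direct computation shows $g := \partial_{[0,1,0]}(f)$ has arity $2n-2$ with $g_0 = \pm 4i(n-1)$ and $g_{2n-2} = \mp 4i(n-1)$, which are nonzero negatives of each other (using $n \ge 3$), and Lemma~\ref{2-M-even-X} again applies. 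The two pinning cases $h = [1,0]^{\otimes 2}$ and $h = [0,1]^{\otimes 2}$ are handled together via the observation that $\trans{0}{1}{1}{0}$ preserves $\mathcal{EQ}_2$ setwise and sends $f$ to a signature of the same shape (up to a global sign and the substitution $\pm i \mapsto \mp i$), reducing the latter to the former; for $h = [1,0]^{\otimes 2}$, I would first reduce $f$ via $\partial_{=_4}$ to arity $4$ (if $n$ is even) or arity $6$ (if $n$ is odd), then $\partial_{[1,0]^{\otimes 2}}$ yields either $[4, \pm 2i, 0]$ of arity $2$, which is in no tractable set so Theorem~\ref{thm:parII:k-reg_homomorphism}$'$ applies, or $[6, \pm 4i, -2, 0, -2]$ of arity $4$, whose compressed signature matrix has determinant $-8 \mp 16i \neq 0$, so Lemma~\ref{redundant} applies.

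The most delicate case is $h = [1, 0, x]$ with $x^4 \not\in \{0,1\}$, and this is where I expect the main obstacle. I plan to apply Lemma~\ref{mixing-P-binary} via $T = \trans{1}{0}{0}{x^{-1/2}}$, giving $\hat{f}_k = \mu^k(2n-2k)$ with $\mu = \pm i\, x^{-1/2}$, so $\mu^2 = -x^{-1} \not\in \{0, \pm 1\}$ and in particular $\nu := 1 + \mu^2 \neq 0$. A short calculation shows that for any signature of the form $c\, \mu^k(B - 2k)$, one has $\partial\bigl(c\, \mu^k(B-2k)\bigr)_k = c\, \nu\, \mu^k\bigl[(B - 4\mu^2/\nu) - 2k\bigr]$, which is again a non-degenerate double-root signature of the same type $\langle \mu^2, 2\mu, 1\rangle$ with the same root $\mu$ and the same nonzero slope $-2$ in $k$; iterating $\partial$ exactly $n - 2$ times on $\hat{f}$ therefore yields an arity-$4$ signature of this type. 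By Lemma~\ref{second-recurrence-relation} together with $\mu^2 \not\in \{0, \pm 1\}$, this arity-$4$ signature lies outside $\mathscr{P} \cup \widetilde{\mathscr{A}} \cup \widetilde{\mathscr{M}}$, so Theorem~\ref{arity-4-dichotomy} delivers hardness. The careful bookkeeping of the linear constant $B^{(m)} = 2n - 4m\mu^2/\nu$ under repeated differentiation, and verifying that $\nu \neq 0$ remains in force throughout, is the step I expect to require the most attention.
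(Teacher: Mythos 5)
Your plan is broadly in the spirit of the paper's proof but has one genuine computational error that sinks the $h = [1,0]^{\otimes 2}$ case when $n$ is odd, and one case where you take a legitimately different (and arguably cleaner) route.

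The error: you claim the compressed signature matrix of $[6, \pm 4i, -2, 0, -2]$ has determinant $-8 \mp 16i \neq 0$. It is in fact $0$. The matrix is
\[
 \begin{pmatrix} 6 & \pm 4i & -2 \\ \pm 4i & -2 & 0 \\ -2 & 0 & -2 \end{pmatrix},
\]
and its determinant is $6\cdot 4 - (\pm 4i)(\mp 8i) + (-2)(-4) = 24 - 32 + 8 = 0$. This is forced: the arity-$4$ signature $[g_0,\dotsc,g_4]$ you constructed is the truncation of $g_k = (\pm i)^k(6-2k)$, which satisfies the second-order recurrence of type $\langle 1, \mp 2i, -1\rangle$; any arity-$4$ signature satisfying a second-order recurrence $a f_k - b f_{k+1} + c f_{k+2} = 0$ has linearly dependent columns in $\widetilde{M_f}$ (with coefficients $a, -b, c$), so $\det(\widetilde{M_f}) = 0$ automatically. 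Lemma~\ref{redundant} is therefore unusable here, and the subcase is unproved. Two repairs are available: go one step further as the paper does, taking $\partial^2_{[1,0]^{\otimes 2}}(g) = [3, \pm 2i, -1]$ of arity $2$ and applying Corollary~\ref{binary-necessary} plus Theorem~\ref{thm:parII:k-reg_homomorphism}$'$; or keep the arity-$4$ signature and argue via its type: by Lemma~\ref{second-recurrence-relation} the type $\langle 1, \mp 2i, -1\rangle$ only permits membership in $\widehat{\mathscr{M}}^\dagger$, but $g_0 = 6 \neq \pm(-2) = \pm g_4$ contradicts Corollary~\ref{arity-hat-M-dagger}, so the signature lies outside all five classes and Theorem~\ref{arity-4-dichotomy} applies.

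Your handling of $h = [1, 0, x]$ is correct and genuinely different from the paper's. The paper first reduces $f$ by $\partial_{=_4}$ to the concrete arity-$6$ signature $[3, \pm 2i, -1, 0, -1, \mp 2i, 3]$, computes $\partial_{[1,0,x]}$ of it explicitly, and rules out each tractable class directly (using the type $\langle 1, \mp 2i, -1\rangle$ and the asymmetry $3-x \neq \pm(-1+3x)$). You instead invoke Lemma~\ref{mixing-P-binary} to transform to $\hat{f}_k = \mu^k(2n-2k)$ with $\mu^2 = -x^{-1} \notin \{0, \pm 1, \pm i\}$, observe that $\partial$ preserves the double-root structure with root $\mu$ and nonzero slope (because $1 + \mu^2 \neq 0$), iterate down to arity $4$, and then exclude all five classes at once via Lemma~\ref{second-recurrence-relation} from the type $\langle \mu^2, 2\mu, 1\rangle$. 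This avoids the $2n \bmod 4$ split the paper makes up front, and also sidesteps the fact that $\partial^2(f)$ itself vanishes (the reason the paper was forced to use $\partial_{=_4}$); your $\hat{f}$ has $\mu^2 \neq -1$, so plain $\partial$ never annihilates it. Both arguments are valid; yours is more uniform, the paper's is more concrete.

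One organizational remark: the paper dispatches the entire case $2n \equiv 0 \pmod 4$ at the very start via Lemma~\ref{2-M-even-X} applied to $f$ itself (since then $f_0 = -f_{2n} = 2n$), and only then does the case analysis on $h$ for $2n \equiv 2 \pmod 4$ after reducing to arity $6$. You rediscover the $f_0 = -f_{2n}$ observation only inside the $h = [0,1,0]$ branch. Making the upfront split would shorten the whole proof, though your per-case handling is not wrong.
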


\begin{proof}
If $2n\equiv 0\pmod 4$,
then $f_0=-f_{2n}=2n$.
Thus we are done by Lemma~\ref{2-M-even-X}.

Suppose $2n\equiv 2\pmod 4$.
Thus $n \ge 3$ and  we have $g = \partial^{\frac{n-3}{2}}_{=_4}(f)$
of  arity 6. Ignoring the nonzero factor $2^{\frac{n-3}{2}}$,
we have $g_k = (\pm i)^k(6-2k)$. Removing another factor 2, we have
\[g=[3, \pm 2i, -1, 0, -1, \mp 2i, 3].\]
We also have a nonzero binary signature $h\neq\lambda[1, 0, 1]$.
If $h \notin \mathscr{P} \cup \widetilde{\mathscr{A}} \cup \widetilde{\mathscr{M}}$,
then $\PlCSP^2(\mathcal{F})$ is \numP-hard by Theorem~\ref{thm:parII:k-reg_homomorphism}$'$.
Otherwise
(similar to the proof of Lemma~\ref{2-M-even-XX}),
by Lemma~\ref{binary},
the possibilities for $h$,
after normalizing,
are
\[
 [a, b]^{\otimes 2},~~
 [1, 0, x],~~
 [0, 1, 0],~~
 [1, \rho, -\rho^2],~~
 [1, \alpha, -\alpha^2],~~
 [1, u, 1],
 \quad \text{and} \quad
 [1, v, -1],
\]
where $x \notin \{0,1\}$, $\rho^4 = 1$, $\alpha^4 = -1$, $u^4 \notin \{0,1\}$, and $v^4 \notin \{0,1\}$.
If $h = [1, 0, -1]$, $[1, 0, \pm i]$, $[1, \pm 1, -1]$, $[1, \alpha, -\alpha^2]$, $[1, v, -1]$, $[1, u, 1]$, or $[a, b]^{\otimes 2}$ with $ab\neq 0$,
then we are done with the same proof as in Lemma~\ref{2-M-even-XX}.

The remaining cases are
$h = [1, 0]^{\otimes 2}$,
$[0, 1]^{\otimes 2}$,
$[1, 0, x]$,
$[0, 1, 0]$, or
$[1, \pm i, 1]$, 
where $x^4 \notin \{0,1\}$.
\begin{itemize}
\item For $h=[1, 0, x]$ with $x^4 \notin \{0,1\}$,
we have $g'=\partial_{h}(g)=[3-x, \pm 2i, -1-x, \mp 2xi, -1+3x]$.
The signature $g'$ is non-degenerate
because $(\pm 2i)(\mp 2xi)\neq (-1-x)^2$ by $x\neq 1$.
Moreover,  $g'$ satisfies
the second recurrence relation with type $\langle 1, \mp 2i, -1\rangle$.
Thus $g'\notin\mathscr{P}\cup\widetilde{\mathscr{A}}\cup\widehat{\mathscr{M}}$ by Lemma~\ref{second-recurrence-relation}.
Moreover, by $x\neq \pm 1$, we have $3-x\neq \pm(-1+3x)$, so $g'\notin\widehat{\mathscr{M}}^\dagger$ by Corollary~\ref{arity-hat-M-dagger}.
So $\PlCSP^2(g')$ is \numP-hard by Theorem~\ref{arity-4-dichotomy}.
Thus $\PlCSP^2(\mathcal{F})$ is \numP-hard.
\item If $h=[0, 1, 0]$, then $\partial_{h}(g)=[\pm 2i, -1, 0, -1, \mp 2i]$.
Then we are done by Lemma~\ref{2-M-even-X}.
\item If $h=[1, \pm i, 1]$, by connecting two copies of $[1, \pm i, 1]$ we have $\pm 2i[0, 1, 0]$.
Then we are done by the proof of the previous case.
\item If $h=[1, 0]^{\otimes 2}$, then
we have $g''=\partial^2_{h}(g)=[3, \pm 2i, -1]$.
By Corollary~\ref{binary-necessary},
we have $g''\notin\mathscr{P}\cup\widetilde{\mathscr{A}}\cup\widetilde{\mathscr{M}}$.
Then by Theorem~\ref{thm:parII:k-reg_homomorphism}$'$,
$\PlCSP^2(g'')$ is \numP-hard.
Thus $\PlCSP^2(\mathcal{F})$ is \numP-hard.
\item If $h=[0, 1]^{\otimes 2}$,
we apply the transformation
$\left[\begin{smallmatrix} 0 & 1\\
 1 & 0 \end{smallmatrix}\right]$
 and it follows from the previous case. \qedhere
\end{itemize}
\end{proof}

\begin{lemma} \label{2-M-even}
Let $\mathcal{F}$ be any set of symmetric signatures of even arities,
and suppose $\mathcal{F}$ contains signatures $f$ and $h$,
where
$f\in\widehat{\mathscr{M}}^\dagger
\setminus(\mathscr{P}\cup\widetilde{\mathscr{A}})$,
and $h$ is any nonzero binary signature other than $\lambda[1, 0, 1]$.
%
Then either $\mathcal{F}\subseteq\widehat{\mathscr{M}}^\dagger$ or $\PlCSP^2(\mathcal{F})$ is \numP-hard.
\end{lemma}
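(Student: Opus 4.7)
The plan is to prove Lemma~\ref{2-M-even} by a simple case analysis on the structural form of $f$, dispatching each case to one of the three preceding technical lemmas (Lemma~\ref{2-M-even-X}, Lemma~\ref{2-M-even-XX}, Lemma~\ref{2-M-even-XXX}). Since $f \in \widehat{\mathscr{M}}^\dagger \setminus (\mathscr{P} \cup \widetilde{\mathscr{A}})$ and $f$ has even arity, Lemma~\ref{M-2-M-NOT-IN-A-AND-P} (applied to the $\widehat{\mathscr{M}}^\dagger$ case) constrains $f$ to exactly three possible forms, partitioned by arity.

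First I would dispose of the binary case: if $f$ has arity $2$, then $f = \lambda[1, b, -1]$ with $b^4 \notin \{0,1\}$ and $\lambda \neq 0$. In this case $f$ itself plays the role of the auxiliary signature $g$ in Lemma~\ref{2-M-even-X}: its endpoint entries satisfy $f_0^{1} = \lambda = -f_2^{1} \neq 0$, so with $s = 1$ the hypothesis of Lemma~\ref{2-M-even-X} is met, and we conclude that either $\mathcal{F} \subseteq \widehat{\mathscr{M}}^\dagger$ or $\PlCSP^2(\mathcal{F})$ is \#P-hard. Note that the auxiliary binary signature $h$ is not even needed for this sub-case.

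Next, if $f$ has even arity $2n \ge 4$, then Lemma~\ref{M-2-M-NOT-IN-A-AND-P} forces $f$ to be of one of the two remaining forms. If $f = [s, ti]^{\otimes 2n} \pm [t, si]^{\otimes 2n}$ with $st \neq 0$ and $s^4 \neq t^4$, then the hypotheses of Lemma~\ref{2-M-even-XX} are satisfied verbatim (using $h$), and we are done. If instead $f_k = \lambda(\pm i)^k(2n-2k)$ with $\lambda \neq 0$, then the hypotheses of Lemma~\ref{2-M-even-XXX} are satisfied verbatim (again using $h$), and we are done.

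Since these three forms exhaust all possibilities for $f \in \widehat{\mathscr{M}}^\dagger \setminus (\mathscr{P} \cup \widetilde{\mathscr{A}})$ of even arity, the lemma follows. There is no real obstacle here, as all the difficult gadget constructions and cyclotomic-field interpolation arguments have already been absorbed into Lemmas~\ref{2-M-even-X}, \ref{2-M-even-XX}, and \ref{2-M-even-XXX} (which in turn rely on Lemma~\ref{M-even} via the diagonal transformation $T = \trans{1}{0}{0}{i}$). The only subtlety worth flagging in the write-up is that the binary sub-case does not require $h$ at all (one uses $f$ as its own witness in Lemma~\ref{2-M-even-X}), while the higher-arity sub-cases crucially use the hypothesis $h \neq \lambda[1,0,1]$ through Lemmas~\ref{2-M-even-XX} and \ref{2-M-even-XXX}.
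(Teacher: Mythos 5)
Your proof is correct and follows essentially the same structure as the paper's: dispose of the binary case by noting $f = \lambda[1,b,-1]$ serves as its own witness $g$ in Lemma~\ref{2-M-even-X} with $s=1$, then dispatch the two remaining forms from Lemma~\ref{M-2-M-NOT-IN-A-AND-P} to Lemmas~\ref{2-M-even-XX} and~\ref{2-M-even-XXX}. The observation that $h$ is unused in the binary sub-case is a nice bit of precision the paper leaves implicit, but otherwise the two arguments coincide.
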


\begin{proof}
If $f$ has arity 2, then $f=[1, b, -1]$ by Lemma~\ref{M-2-M-NOT-IN-A-AND-P}.
Then we are done by Lemma~\ref{2-M-even-X}.

If $f$  has arity $2n\geq 4$, then by Lemma~\ref{M-2-M-NOT-IN-A-AND-P},
we have $f=[s, ti]^{\otimes 2n}\pm[t, si]^{\otimes 2n}$
 with $st\neq 0$, $s^4\neq t^4$, or $f_k=(\pm i)^k(2n-2k)$ up to a scalar.
These two cases are proved in Lemma~\ref{2-M-even-XX},
and~\ref{2-M-even-XXX} respectively.
\end{proof}

\begin{remark} \label{rmk:E2:2}
Lemma~\ref{M-even} and Lemma~\ref{2-M-even} will substantially simplify the
succeeding proof for No-Mixing Lemmas concerning $\widehat{\mathscr{M}}$
and $\widehat{\mathscr{M}}^\dagger$.
Thus it is natural that we wish to do the same for $\mathscr{A}$,
and that means we would like to
construct $[1, \omega]^{\otimes 2}$ with $\omega\neq 0$ in 
$\PlCSP^2(f)$ for $f\in\mathscr{A}$.
Unfortunately, for  most cases of $f \in\mathscr{A}$
this is impossible.

First, for a signature $f\in\mathscr{A}$, if $f$ satisfies parity constraints,
then all signatures constructed in $\PlCSP^2(f)$ satisfy parity constraints,
since all ${\cal EQ}_2$ also satisfy parity constraints.
So it is impossible to construct $[1, \omega]^{\otimes 2}$ with $\omega\neq 0$ in $\PlCSP^2(f)$.

If a signature $f\in\mathscr{A}$ is degenerate and does not satisfy parity constraints,
then $f=[1, \pm 1]^{\otimes 2n}$ or $f=[1, \pm i]^{\otimes 2n}$ up to a scalar.
For $f=[1, \pm 1]^{\otimes 2n}$, we have $\partial^{n-1}(f)=2^{n-1}[1, \pm 1]^{\otimes 2}$.
For $f=[1, \pm i]^{\otimes 2n}$ and $2n\equiv 2\pmod 4$, we have $\partial^{\frac{n-1}{2}}_{=_4}(f)=2^{\frac{n-1}{2}}[1, \pm i]^{\otimes 2}$.
Thus in these two particular cases we \emph{can}
 get $[1, \omega]^{\otimes 2}$ with $\omega\neq 0$.
We will show that these are the only cases that this is possible.

Let $f=[1, \pm i]^{\otimes 2n}$ and $2n\equiv 0\pmod 4$.
After a holographic transformation by
$Z = \trans{1}{1}{i}{-i}$,
we have
\[
 \PlCSP^2(f)
 \equiv_T
 \PlHolant([0, 1, 0], [1, 0, 1, 0, 1], \dotsc \mid \hat{f}),
\]
where $\hat{f}=(Z^{-1})^{\otimes 2n}f$, i.e., $\hat{f}=[1, 0]^{\otimes 2n}$ or $\hat{f}=[0, 1]^{\otimes 2n}$.
In $\PlHolant([0, 1, 0], [1, 0, 1, 0, 1], \dotsc \mid \hat{f})$,
all signatures of arity $\equiv 0\pmod 4$ have even parity and
all signatures of arity $\equiv 2\pmod 4$ have odd parity.
By the same proof in Remark~\ref{rmk:E2:2},
all nonzero binary signatures that can be constructed
in $\PlHolant([0, 1, 0], [1, 0, 1, 0, 1], \dotsc \mid \hat{f})$ are
multiples of $[0, 1, 0]$.
In terms of signatures that can be constructed before the
$Z$-transformation, this is equivalent to say that
all nonzero binary signatures that can be constructed  in $\PlCSP^2(f)$
must be multiples of $[1, 0, 1]$.
In particular, one \emph{cannot} construct 
 $[1, \omega]^{\otimes 2}$ with $\omega\neq 0$ in $\PlCSP^2(f)$. 

If $f\in\mathscr{A}$ is non-degenerate and does not satisfy parity constraints, then
$f=[1, i]^{\otimes 2n}\pm i[1, -i]^{\otimes 2n}$ or $f=[1, 1]^{\otimes 2n}\pm i[1, -1]^{\otimes 2n}$.
If we can construct $[1, \omega]^{\otimes 2}$ with $\omega\neq 0$ in $\PlCSP^2(f)$,
then $[1, \omega]^{\otimes 2}$ must
 be in $\mathscr{A}$. Thus $[1, \omega]^{\otimes 2}$=$[1, \pm 1]^{\otimes 2}$
or $[1, \pm i]^{\otimes 2}$.

For $f=[1, i]^{\otimes 2n}\pm i[1, -i]^{\otimes 2n}$,
 $f=[1, \pm 1, -1, \mp 1, \ldots, (-1)^n]$ up to the scalar $1\pm i$.
In any construction in  $\PlCSP^2(f)$, if we 
ignore a global scalar factor which is a power of $1\pm i$,
all entries of the constructed signature are real numbers.
Thus the ratio of any two nonzero entries is a real number.
But this is not the case with $[1, \pm i]^{\otimes 2}$.
This  implies that we \emph{cannot} construct $[1, \pm i]^{\otimes 2}$
in  $\PlCSP^2(f)$ by gadget construction.

Moreover, we claim that it is impossible to get
$[1, \pm 1]^{\otimes 2}$ in $\PlCSP^2(f)$ by gadget construction.
After a holographic transformation by $Z=\left[
\begin{smallmatrix}
1 & 1\\
i & -i
\end{smallmatrix}
\right]$,
we have
\[
 \PlCSP^2(f)
 \equiv_T
 \PlHolant([0, 1, 0], [1, 0, 1, 0, 1], \dotsc \mid \hat{f}),
\]
where $\hat{f}=(Z^{-1})^{\otimes 2n}f=[1, 0, \ldots, 0, \pm i]$.
All signatures in $\PlHolant([0, 1, 0], [1, 0, 1, 0, 1], \dotsc \mid \hat{f})$
satisfy parity constraints. Thus we cannot construct
$(Z^{-1})^{\otimes 2}[1, \pm 1]^{\otimes 2} = 
\mp \frac{i}{2}[1, \pm i]^{\otimes 2}$, which does not satisfy parity constraints,
by gadget construction.
Thus  we  \emph{cannot}
 get $[1, \pm 1]^{\otimes 2}$ in $\PlCSP^2(f)$ by gadget construction.

For $f=[1, 1]^{\otimes 2n}\pm i[1, -1]^{\otimes 2n}$, after
 a holographic transformation by $\trans{1}{0}{0}{i}$,
we can use the same argument as the previous case for $[1, \pm i]^{\otimes 2}$
 to prove that
we cannot get $[1, \pm 1]^{\otimes 2}$ 
in $\PlCSP^2(f)$ by gadget construction.
Moreover,  it is also impossible to get
$[1, \pm i]^{\otimes 2}$ in $\PlCSP^2(f)$ by gadget construction.
After a holographic transformation by $H=\left[
\begin{smallmatrix}
1 & 1\\
1 & -1
\end{smallmatrix}
\right]$,
we have
\[
 \PlCSP^2(f)
 \equiv_T
 \PlHolant([1, 0, 1], [1, 0, 1, 0, 1], \dotsc \mid \hat{f}),
\]
where $\hat{f}=(H^{-1})^{\otimes 2n}f=[1, 0, \ldots, 0, \pm i]$.
All signatures in $\PlHolant([1, 0, 1], [1, 0, 1, 0, 1], \dotsc \mid \hat{f})$
satisfy parity constraints. Thus we cannot construct
$(H^{-1})^{\otimes 2}[1, \pm i]^{\otimes 2} =
\pm \frac{i}{2}[1, \mp i]^{\otimes 2}$
by gadget construction.
This implies that we cannot 
get $[1, \pm i]^{\otimes 2}$ in $\PlCSP^2(f)$ by gadget construction.
\end{remark}

\section{No-Mixing of a Pair of Signatures of Even Arity}\label{PartII.secF.No-Mixing-of-pairs}

The general theme of this section and the next is that,
for planar $\PlCSP^2$ problems,
various tractable signatures of different types cannot mix.
In these two sections,
all signatures are of even arity.
In this section we prove a No-Mixing theorem
for a pair of signatures. This will be extended to
a set of signatures in the next section.  

The general form of the No-Mixing theorem to be proved in this
section is as follows: Let $f$ and $g$ be two symmetric signatures
of even arity. Suppose for some $1 \le j < i \le 5$,  $f \in S_i 
\setminus S_j$ and $g \in S_j \setminus S_i$,
and for all $1 \le k \le 5$, $\{f, g \} \not \subseteq S_k$. Then 
 $\PlCSP^2(f, g)$ is \#P-hard.
We will call such a statement No-Mixing-$(i,j)$.

It is easy to see that,
with possibly switching the names $f$ and $g$,
the condition stated above is equivalent to the following assumption:
\[
 \{f, g \} \subset \bigcup_{k=1}^5 S_k
 \text{ but for any }
 1 \le k \le 5,
 \text{ we have }
 \{f, g \} \not\subseteq S_k.
\]
However under this assumption,
we make the following observation that any index $i$ for which $f \in S_i$ can be chosen as the distinguishing index:
\begin{quote}
 If $f \in S_i$ for some $i$,
 then there exists some $j \not = i$
 such that $g \in S_j \setminus S_i$
 and $f \in  S_i \setminus S_j$.
\end{quote}

In particular,
neither $f$ nor $g$ can be identically~$0$.

We will prove the No-Mixing theorem-$(i,j)$ 
in a reverse lexicographic order of $(i,j)$:
We order the statements as
$(5,4), (5,3), (5,2), (5,1), (4,3), (4,2), (4,1), (3,2), (3,1), (2,1)$.
After having proved all No-Mixing theorem-$(i',j')$    
preceding $(i,j)$ in this order,
we assume there are two signatures $f$ and $g$ 
such that $f \in S_i \setminus S_j$ and $g \in S_j \setminus S_i$.
Now we \emph{may} make the following additional assumption: 
\[
 f, g \notin \bigcup_{i<k \le 5} S_k
 \qquad \text{ and } \qquad
 g \notin \bigcup_{j<k \le i} S_k.
\]
Indeed, if $f$ or $g$ belongs to $S_k$ for some $k > i$,
then let $k$ be the maximum index such that $S_k$ contains either $f$ or $g$.
Then by the observation above,
there exists some $j \neq k$ such that one signature belongs to $S_j \setminus S_k$,
and the other one belongs to $S_k \setminus S_j$.
By the maximality of $k$,
we have $k > j$.
Since $k > i$ and No-Mixing theorem-$(k,j)$ has already been proved,
we have $\PlCSP^2(f, g)$ is \numP-hard.
Moreover, if 
$g \in \bigcup_{j< \ell \le i} S_\ell$,
then $g \in S_\ell$ for some $j<\ell < i$,  as $g \notin S_i$.
Then  $f \in S_i \setminus S_\ell$ since $\{f, g \} \not \subseteq S_\ell$, 
and also $g \in S_\ell \setminus S_i$.
Hence $\PlCSP^2(f, g)$ is \numP-hard
by No-Mixing-$(i,k)$ already proved.

\vspace{.2in}



We now proceed with this plan. We first prove a preliminary result,
which allows us to construct signatures of arbitrarily high even arities from a given binary signature.

\begin{lemma} \label{mixing-arity-expand}
 For any binary signature $[a, b, c]$, any integer $k \ge 1$,
 and any signature set $\mathcal{F}$,
 \[
  \PlCSP^2([a, b]^{\otimes 2k}+[b, c]^{\otimes 2k},\mathcal{F})
  \leq_T
  \PlCSP^2([a, b, c],\mathcal{F}).
 \]
\end{lemma}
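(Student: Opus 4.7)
The plan is to realize the signature $[a,b]^{\otimes 2k}+[b,c]^{\otimes 2k}$ as a planar $\{[a,b,c]\}\cup\mathcal{EQ}_2$-gate, which is directly available in $\PlCSP^2([a,b,c],\mathcal{F})$. Once this gadget is in hand, any instance of $\PlCSP^2([a,b]^{\otimes 2k}+[b,c]^{\otimes 2k},\mathcal{F})$ can be simulated by replacing each occurrence of $[a,b]^{\otimes 2k}+[b,c]^{\otimes 2k}$ with this planar gadget, without disturbing the planarity of the rest of the signature grid.

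The gadget I have in mind places a single copy of $=_{2k}\in\mathcal{EQ}_2$ at the center, and around it arranges $2k$ copies of $[a,b,c]$ in a planar cyclic fashion, each connected to the center by exactly one of its two edges. The remaining free edges of the $2k$ copies of $[a,b,c]$ serve as the $2k$ dangling edges of the gadget, ordered counterclockwise, so that the embedding is planar. I would then verify the signature by direct evaluation: for an external assignment $(y_1,\dots,y_{2k})$, the value is
\[
 \sum_{x\in\{0,1\}} \prod_{i=1}^{2k}[a,b,c](x,y_i)
 \;=\; \prod_{i=1}^{2k}[a,b](y_i) \;+\; \prod_{i=1}^{2k}[b,c](y_i),
\]
using $[a,b,c](0,y)=[a,b](y)$ and $[a,b,c](1,y)=[b,c](y)$, and the fact that the central $=_{2k}$ forces all internal edges to share a common value $x\in\{0,1\}$. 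This is exactly $[a,b]^{\otimes 2k}+[b,c]^{\otimes 2k}$.

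With this gadget established, the Turing reduction is immediate: given any planar signature grid $\Omega$ for $\PlCSP^2([a,b]^{\otimes 2k}+[b,c]^{\otimes 2k},\mathcal{F})$, replace every vertex labeled $[a,b]^{\otimes 2k}+[b,c]^{\otimes 2k}$ by the gadget above, matching the $2k$ dangling edges to the $2k$ incident edges of that vertex in their cyclic order. The resulting signature grid $\Omega'$ is a planar instance of $\PlCSP^2([a,b,c],\mathcal{F})$, and by construction $\Holant(\Omega)=\Holant(\Omega')$.

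There is no substantive obstacle here; the only point requiring a little care is ensuring planarity of the replacement, which is guaranteed because the gadget itself is planar with its dangling edges appearing on the outer face in the correct cyclic order, so it can be substituted for any single vertex in a plane graph without destroying planarity.
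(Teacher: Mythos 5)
Your proof is correct and uses exactly the same construction as the paper: a central $=_{2k}$ surrounded by $2k$ copies of $[a,b,c]$, each contributing one dangling edge, so the gadget signature is $[a,b]^{\otimes 2k}+[b,c]^{\otimes 2k}$ by the same tensor identity the paper invokes. The only difference is presentational — you spell out the substitution into the planar signature grid, which the paper leaves implicit via the standard $\mathcal{F}$-gate replacement argument.
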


\begin{proof}
We take $2k$ copies of $[a, b, c]$
and connect one input of each $[a, b, c]$ to an edge of $=_{2k}$.
The resulting signature is
$[a, b]^{\otimes 2k}+[b, c]^{\otimes 2k}$,
since
$\left[\begin{smallmatrix} a & b \\
b & c \end{smallmatrix}\right]^{\otimes n}
\left( \left[\begin{smallmatrix} 1 \\ 0 \end{smallmatrix}\right]^{\otimes n}
+ \left[\begin{smallmatrix} 0 \\ 1 \end{smallmatrix}\right]^{\otimes n}
\right) = \left[\begin{smallmatrix} a \\ b \end{smallmatrix}\right]^{\otimes n}
+ \left[\begin{smallmatrix} b \\ c \end{smallmatrix}\right]^{\otimes n}$.
\end{proof}

In the next lemma, we will prove that for any
symmetric signature $f\in \mathscr{A}\setminus\mathscr{P}$ of even arity,
we can construct an arity 4 signature $g\in \mathscr{A}\setminus\mathscr{P}$
in Pl-$\#${\rm CSP}$^2(\{f\}\cup\mathcal{F})$.
Thus we can assume that we have an arity 4 signature $g\in \mathscr{A}\setminus\mathscr{P}$
  in the proof of the No-Mixing lemma of $\mathscr{P}$ versus $\mathscr{A}$,
namely  No-Mixing-$(5,4)$.
We can prove a similar result for $\mathscr{A}^{\dagger}\setminus\mathscr{P}$.
This is for the proof of No-Mixing-$(5,3)$.

\begin{lemma}\label{mixing-arity-reduction-A}
For any symmetric signature
 $f\in \mathscr{A}\setminus\mathscr{P}$ 
(respectively, 
$f\in \mathscr{A}^{\dagger}\setminus\mathscr{P}$)
of even arity $2n \ge 2$,
there exists a symmetric signature
$g \in \mathscr{A}\setminus\mathscr{P}$
(respectively,
$g \in \mathscr{A}^{\dagger}\setminus\mathscr{P}$)
 of arity 4, such that for any set  $\mathcal{F}$,
\[\PlCSP^2(\{g\}\cup\mathcal{F})\leq_T
\PlCSP^2(\{f\}\cup\mathcal{F}).\]
%
\end{lemma}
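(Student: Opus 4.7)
The plan is to establish the lemma by induction on the arity $2n$, reducing $f$ to an arity-$4$ signature by planar gadgets in $\PlCSP^2(\{f\}\cup\mathcal{F})$ that preserve membership in $\mathscr{A}\setminus\mathscr{P}$ (respectively $\mathscr{A}^\dagger\setminus\mathscr{P}$). First I would dispose of the base case $2n=2$: by Lemma~\ref{binary} every $f\in\mathscr{A}\setminus\mathscr{P}$ of arity $2$ has the form $\lambda[1,\rho,-\rho^2]$ with $\rho^4=1$ (respectively $\rho^4=-1$ for $\mathscr{A}^\dagger$), and Lemma~\ref{mixing-arity-expand} with $2k=4$ supplies the arity-$4$ signature $g=[1,\rho]^{\otimes 4}+[\rho,-\rho^2]^{\otimes 4}=[1,\rho]^{\otimes 4}\pm[1,-\rho]^{\otimes 4}$ (absorbing $\rho^4=\pm 1$); a direct four-entry computation for each value of $\rho$ shows $g$ is non-degenerate, is not a generalized equality, and lies in the required set.

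For $2n\ge 4$ I would use Lemma~\ref{second-recurrence-relation} to split into cases by the recurrence type of $f$. The $\mathscr{A}^\dagger\setminus\mathscr{P}$ case is the easiest: the type is $\langle 1,0,\pm i\rangle$, so the eigenvalues are $\pm\alpha$ with $\alpha^4=-1$, whence $1+\alpha^2=1\pm i\neq 0$ and $\partial(f)$ preserves the tensor-sum form, producing $g:=\partial^{n-2}(f)\in\mathscr{A}^\dagger\setminus\mathscr{P}$ of arity $4$. The eigenvalue-$\pm 1$ branch of $\mathscr{A}\setminus\mathscr{P}$ (type $\langle 1,0,-1\rangle$) is analogous: $f=a[1,1]^{\otimes 2n}+b[1,-1]^{\otimes 2n}$ with $ab\neq 0$, and $\partial$ merely rescales by $2$; by Definition~\ref{def:A123} the ratio $b/a$ is a fourth root of unity at even arities and is preserved by $\partial$, so $g:=\partial^{n-2}(f)$ stays in $\mathscr{A}_1\setminus\mathscr{P}$. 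The remaining branch of $\mathscr{A}$, type $\langle 1,0,1\rangle$, is where the real difficulty lies: here $f=a[1,i]^{\otimes 2n}+b[1,-i]^{\otimes 2n}$ with $ab\neq 0$, and $\partial(f)\equiv 0$. The operator $\partial_{=_4}$ reduces arity by $4$ with factor $1+i^4=2$ and preserves the form, so when $2n\equiv 0\pmod 4$, iterating $\partial_{=_4}^{(n-2)/2}$ yields the desired $g$; but when $2n\equiv 2\pmod 4$ no sequence of $\partial$ and $\partial_{=_4}$ steps reaches arity $4$ from an arity in the residue class $2\pmod 4$.

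The main obstacle is this last sub-case, and I would resolve it by a two-stage construction that uses $f$ twice. In the first stage apply $\partial_{=_4}^{(n-1)/2}$ (valid since $n$ is odd here) to produce the arity-$2$ signature $h\propto[a+b,(a-b)i,-(a+b)]$, which degenerates to a nonzero multiple of $\neq_2$ precisely when $a+b=0$. In the second stage connect $h$ to two adjacent dangling edges of a fresh copy of $f$; the direct calculation
\[
h_0f_k+2h_1f_{k+1}+h_2f_{k+2}=4ab\bigl(i^k+(-i)^k\bigr)
\]
(using $i^{k+2}=-i^k$ and the identity $f_k-f_{k+2}=2f_k$) shows that $\partial_h(f)$ is a nonzero multiple of $[1,i]^{\otimes(2n-2)}+[1,-i]^{\otimes(2n-2)}$, an arity $2n-2\equiv 0\pmod 4$ signature in the now-friendly form with coefficient ratio $1$. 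Finally, iterating $\partial_{=_4}^{(n-3)/2}$ brings this down to a nonzero multiple of $[1,i]^{\otimes 4}+[1,-i]^{\otimes 4}=[2,0,-2,0,2]$, which lies in $\mathscr{A}_2\setminus\mathscr{P}$. All gadgets constructed are planar and use only $f$ and $\mathcal{EQ}_2$, so the desired reduction $\PlCSP^2(\{g\}\cup\mathcal{F})\le_T\PlCSP^2(\{f\}\cup\mathcal{F})$ holds; the non-obvious ingredient, and the crux of the proof, is recognizing that the $2n\equiv 2\pmod 4$ sub-case forces the use of $f$ in \emph{two} distinct roles (once to synthesize the auxiliary $h$, and once as the target of $\partial_h$) rather than a single derivative chain.
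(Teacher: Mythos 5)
Your proof is correct and follows essentially the same strategy as the paper's: settle arity $2$ via Lemma~\ref{mixing-arity-expand}, then descend to arity $4$ using $\partial$ (when the eigenvalues are $\pm\alpha$ or $\pm 1$) or $\partial_{=_4}$ (when they are $\pm i$), and in the residue class $2n\equiv 2\pmod 4$ first drop to an arity-$2$ signature $h$ and use it as a derivative on a fresh copy of $f$. One genuine simplification over the paper: for $f=a[1,i]^{\otimes 2n}+b[1,-i]^{\otimes 2n}$ the paper splits into $r=2$ (so $h\propto[0,1,0]$) and $r\neq 2$ (taking $\partial_h(=_4)=[1,0,\rho^2]$), whereas your identity $\partial_h(f)=4ab\bigl([1,i]^{\otimes 2n-2}+[1,-i]^{\otimes 2n-2}\bigr)$ — which I verified directly from the tensor form, using $h_0+2h_1 i-h_2=4b$ and $h_0-2h_1 i-h_2=4a$ — holds for every nonzero ratio $b/a$ and hence dispenses with that case split, landing uniformly on a multiple of $[1,i]^{\otimes 4}+[1,-i]^{\otimes 4}=[2,0,-2,0,2]\in\mathscr{A}\setminus\mathscr{P}$.
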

\begin{proof}
If $f$ has arity $2n= 4$, then there is nothing to prove. 
Suppose  $2n \not =4$.
For $f\in \mathscr{A}^{\dagger}\setminus\mathscr{P}$,
if $2n=2$, then $f=[1, \alpha, -\alpha^2]$ by Corollary~\ref{binary-necessary}.
By Lemma~\ref{mixing-arity-expand}, we have
$g=[1, \alpha]^{\otimes 4}-[1, -\alpha]^{\otimes 4}$, since $\alpha^4 = -1$.
Clearly $g \in \mathscr{A}^{\dagger}$ and is non-degenerate.
Note that $g$ satisfies a second recurrence relation of type
 $\langle  -\alpha^2 ,0, 1 \rangle$, since the eigenvalues
of the recurrence are $\pm \alpha$ with trace 0 and product $-\alpha^2$.
Thus
$g \notin \mathscr{P}$ by Lemma~\ref{second-recurrence-relation}.
For $2n\geq 6$,
we have $f=[1, \alpha]^{\otimes 2n}+i^{r}[1, -\alpha]^{\otimes 2n}$ by
definitions (see Figure~\ref{fig:venn_diagram:A_Adagger_P}).
Then by our calculus,
we have $\partial^{n-2}(f)=(1+\alpha^2)^{n-2}\{[1, \alpha]^{\otimes 4}+i^{r}[1, -\alpha]^{\otimes 4}\}$.
Clearly it is in $\mathscr{A}^{\dagger}$ and is non-degenerate.
It also has type $\langle -\alpha^2, 0,  1 \rangle$
and therefore
 it is not in $\mathscr{P}$.

For $f\in \mathscr{A}\setminus\mathscr{P}$,
if $2n=2$, then $f=[1, \rho, -\rho^2]$
by Corollary~\ref{binary-necessary}.
By Lemma~\ref{mixing-arity-expand}, we have $g=[1, \rho]^{\otimes 4}+[1, -\rho]^{\otimes 4}$,
since $\rho^4 = 1$.
Clearly $g \in \mathscr{A}$ and is non-degenerate.
Note that $g$ has type
$\langle  -\rho^2, 0, 1 \rangle$, since the eigenvalues
of its second recurrence relation
 are $\pm \rho$ with trace 0 and product $-\rho^2$.
Thus $g \notin \mathscr{P}$ by Lemma~\ref{second-recurrence-relation}.

For $2n\geq 6$, we have $f=[1, \rho]^{\otimes 2n}+i^r[1, -\rho]^{\otimes 2n}$ by definitions (see Figure~\ref{fig:venn_diagram:A_Adagger_P}).
If $2n\equiv 0\pmod 4$, then
$n$ is even, and we have
$\partial^{\frac{n-2}{2}}_{=_4}(f)=2^{\frac{n-2}{2}}\{[1, \rho]^{\otimes 4}+i^r[1, -\rho]^{\otimes 4}\}$
that is in $\mathscr{A}$, and not in $\mathscr{P}$ by its type
$\langle -\rho^2, 0, 1 \rangle$.
For $2n\equiv 2\pmod 4$,  we have $h=\partial^{\frac{n-1}{2}}_{=_4}(f)=2^{\frac{n-1}{2}}\{[1, \rho]^{\otimes 2}+i^r[1, -\rho]^{\otimes 2}\}$.
\begin{itemize}
\item If $r=2$, then we have $h=2^{\frac{n-1}{2}}[0, 2\rho, 0]$.
Thus we have $[0, 1, 0]$ up to a nonzero scalar and
$\partial^{n-2}_{[0, 1, 0]}(f)
=(2\rho)^{n-2}\{[1, \rho]^{\otimes 4}+i^r(-1)^{n-2}[1, -\rho]^{\otimes 4}\}$
that is in $\mathscr{A}$, and not in $\mathscr{P}$ by its type
$\langle -\rho^2, 0, 1 \rangle$.
\item If $r\neq 2$, then $h=2^{\frac{n-1}{2}}(1+i^r)[1, \frac{1-i^r}{1+i^r}\rho, \rho^2]$.
Then we have $\partial_{[1, \frac{1-i^r}{1+i^r}\rho, \rho^2]}(=_4)=[1, 0, \rho^2]$ on LHS
and $\partial^{n-2}_{[1, 0, \rho^2]}(f)=2^{n-2}\{[1, \rho]^{\otimes 4}+i^r[1, -\rho]^{\otimes 4}\}$ by $\rho^4 =1$,
that is in $\mathscr{A}\setminus\mathscr{P}$ by the same reason.\qedhere
\end{itemize}
\end{proof}
We note that the complication for the
case $f \in \mathscr{A}\setminus\mathscr{P}$ is unavoidable since
if $\rho = \pm i$, then $\partial(f) =0$, therefore we need to use
$\partial_{=_4}(f)$.

\subsection{Mixing with \texorpdfstring{$\mathscr{P}$}{P}}

In this subsection, we prove 
No-Mixing-$(5,j)$, for $1 \le j \le 4$,
namely the No-Mixing of one
signature in $\mathscr{P}$ and another signature 
in a different tractable set.
Thus we assume there is some $f \in S_5 = \mathscr{P}$,
and some $g \in S_k$ for some $1 \le k \le 4$, and for no $1  \le k \le 5$,
$\{f, g\} \subset S_k$. Under this assumption we  show that
${\rm Pl}$-$\#{\rm CSP}^2(f, g)$ is \#P-hard.
As explained earlier, for $j<k <5$,
when  we prove  No-Mixing-$(5,j)$,
we can make logical use of  No-Mixing-$(5,k)$.
\begin{lemma}\label{mixing-P}
Let $\{f, g\}\subseteq\bigcup_{k=1}^5 S_k$ and $\{f, g\}\nsubseteq S_j$ for
every $1\leq j\leq 5$.
Assume that $f\in\mathscr{P}$, then $\PlCSP^2(f, g)$ is $\#{\rm P}$-hard.
\end{lemma}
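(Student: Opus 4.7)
I will split on the tractable set $S_k$ of $g$, with $k \in \{1,2,3,4\}$ (since $\{f,g\} \nsubseteq S_5$ and $f \in S_5 = \mathscr{P}$), and handle $k = 1, 2, 3, 4$ in this order, so that within a given case I may assume $g \notin S_m$ for $k < m \le 5$ by the earlier sub-cases. For each $k$ the plan is either to invoke directly one of the auxiliary hardness lemmas of Section \ref{PartII.secE.cyclcomic}, or to extract from $f \in \mathscr{P}$ a binary (or arity-$4$) signature that converts $\{f,g\}$ into a previously-known hard instance via Theorem \ref{pl-dicho-1}, Theorem \ref{arity-4-dichotomy}, or Theorem \ref{thm:parII:k-reg_homomorphism}$'$.

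\textbf{Cases $k = 1$ and $k = 2$.} When $g \in \widehat{\mathscr{M}} \setminus (\mathscr{P} \cup \widetilde{\mathscr{A}})$, the residual assumption on $g$ already places it in the right form, so Lemma \ref{M-even} applied to $\mathcal{F} = \{f,g\}$ closes the case immediately, using $\{f,g\} \nsubseteq \widehat{\mathscr{M}}$. When $g \in \widehat{\mathscr{M}}^\dagger \setminus (\mathscr{P} \cup \widetilde{\mathscr{A}})$, Lemma \ref{2-M-even} needs a binary $h$ that is not a multiple of $[1,0,1]$. I will extract such an $h$ from $f$: for a generalised equality $f = [a,0,\ldots,0,c]$ the iterated derivative $\partial^{n-1}(f) = [a,0,c]$ is such a binary unless $a = c$, in which case $f$ is $=_{2n}$ up to scalar and I instead invoke Lemma \ref{2-M-even-X} using the structure of $g$ itself; for degenerate $f = [u,v]^{\otimes 2n}$, $\partial^{n-1}(f) \propto [u,v]^{\otimes 2}$ is non-$[1,0,1]$ whenever $u^2 + v^2 \ne 0$, and the exceptional $v = \pm i u$ forces $f \in \widetilde{\mathscr{A}}$ and is likewise handled through Lemma \ref{2-M-even-X}.

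\textbf{Cases $k = 3$ and $k = 4$.} When $g \in \widetilde{\mathscr{A}} \setminus \mathscr{P}$, I first apply Lemma \ref{mixing-arity-reduction-A} to replace $g$ by an arity-$4$ signature in $\mathscr{A}^\dagger \setminus \mathscr{P}$ (respectively $\mathscr{A} \setminus \mathscr{P}$). Lemma \ref{second-recurrence-relation} then pins down $g$'s second-order recurrence type to $\langle 1,0,\pm i\rangle$ or $\langle 1,0,\pm 1\rangle$, leaving only finitely many normal forms up to diagonal scaling. From $f$ I aim to produce one of: a binary $[a,b]^{\otimes 2}$ with $ab \ne 0$, in which case Lemma \ref{mixing-P-global} reduces to $\PlCSP(\{f,g\})$ and Theorem \ref{pl-dicho-1} finishes; a binary $[1,0,x]$ with $x^4 \notin \{0,1\}$, in which case Lemma \ref{mixing-P-binary} reduces to a $\PlCSP^2$ problem on a transformed $g$ that falls outside every tractable class and is hard by Theorem \ref{arity-4-dichotomy}; or an explicit arity-$4$ gadget whose compressed signature matrix is nonsingular, invoking Lemma \ref{redundant}. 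The subdivision of $f \in \mathscr{P}$ into the cases $f = [u,v]^{\otimes 2n}$, $f = [a,0,\ldots,0,c]$, and those involving tensor factors of $[0,1,0]$ determines which of the three routes is available.

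\textbf{Main obstacle.} The most delicate sub-cases arise when $f \in \mathscr{P} \cap \widetilde{\mathscr{A}}$, namely $f = [1,0,\ldots,0,\pm 1]$, $f = [1,0,\ldots,0,\pm i]$, or $f = [1,\pm i]^{\otimes 2n}$. Remark \ref{rmk:E2:2} shows that from such an $f$ alone one cannot construct any off-diagonal binary $[1,\omega]^{\otimes 2}$; every binary I can pull out is either a weighted equality $[1,0,x]$ or a disequality-type $[0,1,0]$. The technical crux is therefore to couple this restrictive output from $f$ with the canonical arity-$4$ forms of $g$ to produce a witness whose recurrence type lies outside every tractable pattern $\langle 0,1,0\rangle$, $\langle 1,0,\pm 1\rangle$, $\langle 1,0,\pm i\rangle$, $\langle 1,c,\pm 1\rangle$, checked via Corollary \ref{binary-necessary} on the norms and arguments of the entries. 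Carrying out this verification, case by case, for each pair of canonical forms of $f$ and $g$ will constitute the bulk of the proof.
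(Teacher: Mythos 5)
Your high-level structure (splitting on the tractable class of $g$, then extracting a useful binary or arity-$4$ witness from $f$) matches the paper's spirit, but your proposal has a genuine gap and misses the key observations that make the hard sub-cases manageable.

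The substantive gap is in your ``Main obstacle'' paragraph. You correctly identify that when $f\in\mathscr{P}\cap\widetilde{\mathscr{A}}$ (i.e.\ $f=[1,\pm i]^{\otimes 2n}$, $f=[1,0,\ldots,0,\pm 1]$, or $f=[1,0,\ldots,0,\pm i]$) one cannot pull out an off-diagonal binary $[1,\omega]^{\otimes 2}$, and you then propose an unexecuted ``case-by-case verification'' which you say will be ``the bulk of the proof.'' That is precisely where the actual work lies, and it is left undone. More importantly, you miss the structural observations that make these cases tractable. For instance, $[1,\pm i]^{\otimes 2n}$ of even arity lies in $\mathscr{P}\cap\mathscr{A}\cap\widehat{\mathscr{M}}^\dagger$ simultaneously, so $g$ is forced out of all three of those classes, leaving only $g\in\mathscr{A}^\dagger\setminus(\mathscr{P}\cup\mathscr{A}\cup\widehat{\mathscr{M}}^\dagger)$ or $g\in\widehat{\mathscr{M}}\setminus(\mathscr{P}\cup\mathscr{A}\cup\widehat{\mathscr{M}}^\dagger)$, each of which is then settled cleanly (the latter by Lemma~\ref{M-even}, the former by extracting $[0,1,0]$ or $[1,0,-1]$ from the arity-$4$ reduced $g'$ and composing with $f$ to obtain $[1,\pm i]^{\otimes 2}$). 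Similarly, $[1,0,\ldots,0,x]$ with $x^4=1$ and even arity is in both $\mathscr{A}$ and $\mathscr{A}^\dagger$, so if $g\in\widetilde{\mathscr{A}}$ then $\{f,g\}\nsubseteq S_k$ forces $x^4\neq 1$, and the Lemma~\ref{mixing-P-binary} route becomes available unconditionally. These pincers resolve the cases you flag as delicate; without them, your proof is incomplete.

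There are also smaller inaccuracies worth noting. In your $k=2$ case, the sub-case ``$a=c$ (so $f=\,{=}_{2n}$ up to scalar), invoke Lemma~\ref{2-M-even-X}'' is vacuous: the even-arity equality $[1,0,\ldots,0,1]$ is already in $\widehat{\mathscr{M}}^\dagger$ (apply $Z^{-1}$ and check the parity alternation), so this $f$ together with $g\in\widehat{\mathscr{M}}^\dagger$ violates $\{f,g\}\nsubseteq S_2$; you do not need, and could not straightforwardly apply, Lemma~\ref{2-M-even-X} there because its ends-condition $g_0^s=-g_{2n}^s\neq 0$ is not guaranteed. The same remark applies to $f=[1,\pm i]^{\otimes 2n}$, which is also in $\widehat{\mathscr{M}}^\dagger$. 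Finally, your statement that in Case $k$ you may ``assume $g\notin S_m$ for $k<m\le 5$ by the earlier sub-cases'' is logically reversed for your chosen order ($k=1,2,3,4$ ascending): earlier cases can only justify $g\notin S_m$ for $m<k$; what you actually need is the convention of assigning $g$ to the case indexed by its smallest (or largest) containing class, which should be stated explicitly.
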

\begin{proof}
As explained earlier, since $f \in\mathscr{P}$,
there exists some $1\leq k\leq 4$, such that $g \in S_k\setminus\mathscr{P}$
and $f \in \mathscr{P}\setminus S_k$.
Since $[0, 1, 0] \in \bigcap_{k=1}^5 S_{k}$,
we know that $f$ is not a multiple of $[0, 1, 0]$.
Then by $f \in \mathscr{P}$ 
 (see Figure~\ref{fig:venn_diagram:A_Adagger_P}), we have
$f=[a, b]^{\otimes 2n}$ with $a$ and $b$ not both 0 (because $f$ is
not identically $0$),
 or $f = [1, 0, \ldots, 0, x]$ with $x\neq 0$.

We first consider the case $f=[a, b]^{\otimes 2n}$, with $(a,b) \not = (0,0)$.
It has three subcases.
\begin{itemize}
\item If $ab\neq 0$ (i.e., $a$ and $b$ both nonzero)
 and $a^2+b^2\neq 0$, then we have
$\partial^{n-1}(f)=(a^2+b^2)^{n-1}[a, b]^{\otimes 2}$. We are done by Lemma~\ref{mixing-P-global}.

\item If $ab\neq 0$ and  $a^2+b^2=0$,
then $f=[1, \pm i]^{\otimes 2n}$ up to a nonzero scalar.
Note that $f\in\mathscr{P}\cap\mathscr{A}\cap\widehat{\mathscr{M}}^\dagger$.
Hence  $g\in\mathscr{A}^\dagger\setminus(\mathscr{P}\cup\mathscr{A}\cup\widehat{\mathscr{M}}^\dagger)$
or
$g\in\widehat{\mathscr{M}}\setminus(\mathscr{P}\cup\mathscr{A}\cup\widehat{\mathscr{M}}^\dagger)$.


 If $g\in\widehat{\mathscr{M}}\setminus(\mathscr{P}\cup\widetilde{\mathscr{A}}\cup\widehat{\mathscr{M}}^\dagger)$,
then a fortiori,
$g\in\widehat{\mathscr{M}}\setminus(\mathscr{P}\cup\widetilde{\mathscr{A}})$. 
Therefore we
are done by Lemma~\ref{M-even}.

The other case is $g\in\mathscr{A}^\dagger\setminus(\mathscr{P}\cup\mathscr{A}\cup\widehat{\mathscr{M}}^\dagger)$,
then a fortiori, $g\in\mathscr{A}^\dagger\setminus\mathscr{P}$,
and
by Lemma \ref{mixing-arity-reduction-A},
 we have an arity 4 signature
 $g'\in\mathscr{A}^\dagger\setminus\mathscr{P}$.
 By definition (see Figure~\ref{fig:venn_diagram:A_Adagger_P}),
$g'=[1, \alpha]^{\otimes 4}+i^r[1, -\alpha]^{\otimes 4}$.
For $r=2$, we have $\partial(g')=2\alpha(1+\alpha^2)[0, 1, 0]$ and $\partial^{n-1}_{[0, 1, 0]}(f)=(\pm 2i)^{n-1}[1, \pm i]^{\otimes 2}$.
Then we are done by Lemma \ref{mixing-P-global}.
For $r\neq 2$, we have on LHS
\[\partial_{g'}(=_6)
= \partial_{[1, \alpha]^{\otimes 4}}(=_6) + i^r
\partial_{[1, -\alpha]^{\otimes 4}}(=_6)
= [1, 0, \alpha^4] + i^r [1, 0, (-\alpha)^4]
=(1+i^r)[1, 0, -1]\]
 and
$\partial^{n-1}_{[1, 0, -1]}(f)=2^{n-1}[1, \pm i]^{\otimes 2}$.
Then again we are done by Lemma \ref{mixing-P-global}.

\item For $f=[1, 0]^{\otimes 2n}$ or $[0, 1]^{\otimes 2n}$,
we have $\partial^{n-1}(f) = [1, 0]^{\otimes 2}$ or $[0, 1]^{\otimes 2}$.
Note that $f\in\mathscr{P}\cap\mathscr{A}\cap\mathscr{A}^\dagger$.
Thus $g\in\widetilde{\mathscr{M}}\setminus(\mathscr{P}\cup\widetilde{\mathscr{A}})$.
%
If $g\in\widehat{\mathscr{M}}\setminus(\mathscr{P}\cup\widetilde{\mathscr{A}})$,
then we are done by Lemma \ref{M-even}.
If $g\in\widehat{\mathscr{M}}^\dagger\setminus(\mathscr{P}\cup\widetilde{\mathscr{A}})$,
then we are done by Lemma \ref{2-M-even}, where the binary signature
is supplied by $\partial^{n-1}(f) = [1, 0]^{\otimes 2}$ or $[0, 1]^{\otimes 2}$.
\end{itemize}

The remaining case is
$f=[1, 0, \ldots, 0, x]$ with $x\neq 0$. We have $\partial^{n-1}(f)=[1, 0, x]$.

Suppose $g\in\widetilde{\mathscr{A}}$.
As $f \in \mathscr{P}$, we have $g \not \in \mathscr{P}$.
Then we have
an arity 4 signature $g'\in\widetilde{\mathscr{A}}\setminus \mathscr{P}$ by Lemma \ref{mixing-arity-reduction-A}.
Moreover, by definition (see Figure~\ref{fig:venn_diagram:A_Adagger_P}), we have
$g'=[1, \gamma]^{\otimes 4}+i^r[1, -\gamma]^{\otimes 4}$
where $\gamma^8=1$.
Depending on whether $g \in \mathscr{A}$
or $\mathscr{A}^\dagger$, we have
either
$f\in \mathscr{P}\setminus\mathscr{A}$, or $f\in \mathscr{P}\setminus\mathscr{A}^\dagger$.
Then we claim that $x^4\neq 1$.
Note that $f$ has even arity $2n$. If $x^4 =1$, then
$f = [1, 0, \ldots, 0, x] \in \mathscr{A}$ as well as
$\left[\begin{smallmatrix} 1 & 0 \\
0 & \alpha \end{smallmatrix}\right]^{\otimes 2n} f
= [1, 0, \ldots, 0, xi^n] \in \mathscr{A}$ thus $f \in \mathscr{A}^\dagger$.
This is a contradiction.
 Thus we have $x^4\neq 0, 1$.
%
Let $\widehat{g'}=[1, x^{-\frac{1}{2}}\gamma]^{\otimes 4}+i^r[1, -x^{-\frac{1}{2}}\gamma]^{\otimes 4}$.
Then by Lemma \ref{mixing-P-binary},
$\PlCSP^2(\widehat{g'})\leq$$\PlCSP^2(f, g)$.
Note that
$\widehat{g'}$
has type
$\langle -x^{-1}\gamma^2, 0,  1 \rangle$
by calculating the trace and product of the eigenvalues of the
second recurrence relation.
Note that $(-x^{-1}\gamma^2)^4=x^{-4}\neq 0, 1$.
Thus
$\widehat{g'} \not \in
\mathscr{P}\cup\widetilde{\mathscr{A}}\cup\widetilde{\mathscr{M}}$
by Lemma~\ref{second-recurrence-relation}.
This implies that
$\PlCSP^2(\widehat{g'})$ is $\#{\rm P}$-hard by Theorem \ref{arity-4-dichotomy}.
So $\PlCSP^2(f, g)$ is $\#{\rm P}$-hard.
%
%
%
%

Now we may assume that $g\notin\widetilde{\mathscr{A}}$.
Thus $g\in\widetilde{\mathscr{M}}\setminus(\mathscr{P}\cup\widetilde{\mathscr{A}})$.
If
$g\in\widehat{\mathscr{M}}\setminus (\mathscr{P}\cup\widetilde{\mathscr{A}})$,
then we are done by Lemma \ref{M-even}.
If
$g\in\widehat{\mathscr{M}}^\dagger\setminus (\mathscr{P}\cup\widetilde{\mathscr{A}})$,
then $f\in \mathscr{P}\setminus\widehat{\mathscr{M}}^\dagger$. 
In this case we claim that $x \neq 1$.
Suppose for a contradiction that $x=1$, then we show that
$f \in \widehat{\mathscr{M}}^\dagger$. Notice that
$f = [1, 0, \ldots, 0, 1] = (=_{2n})$ and
$\widehat{\mathscr{M}}^\dagger =  Z \mathscr{M}$,
where $Z = \left[\begin{smallmatrix} 1 & 1 \\
i & -i \end{smallmatrix}\right]$.
Crucially recall that $f$
has even arity. 
Then, up to a nonzero
scalar, $(Z^{-1})^{\otimes 2n} f = 
[1,0,1, \ldots, 0,1] \in \mathscr{M}$ of arity $2n$ (if $n$ is even) or
$(Z^{-1})^{\otimes 2n} f =
[0,1,0, \ldots,1,0] \in \mathscr{M}$ of arity $2n$ (if $n$ is odd).
%
Hence $x \neq 1$. Then we are done by Lemma \ref{2-M-even},
with 
$g\in\widehat{\mathscr{M}}^\dagger\setminus (\mathscr{P}\cup\widetilde{\mathscr{A}})$, and the help of $\partial^{n-1}(f)=[1, 0, x]$.
\end{proof}

\subsection{Mixing with \texorpdfstring{$\mathscr{A}$}{A}}

In this subsection, we prove the No-Mixing lemma of $\mathscr{A}$ with
 other tractable sets.
Because we have
already proved Lemma~\ref{mixing-P},
the No-Mixing lemma for $S_5 = \mathscr{P}$,
 we only need to consider No-Mixing-$(4,j)$
 of $S_4 = \mathscr{A}$
with $S_j$ for $1 \le j \le 3$.

There is a  particular case involving
$\mathscr{A}$  and  $\mathscr{A}^{\dagger}$
 that requires some special care. This is when two signatures
 $f\in\mathscr{A}$ and $g\in\mathscr{A}^{\dagger}$
both satisfy the parity constraint.
We deal with this case first.
Furthermore, by Lemma~\ref{mixing-arity-reduction-A},
for two signatures $f \in \mathscr{A} \setminus \mathscr{P}$ and
$g \in \mathscr{A}^{\dagger}\setminus\mathscr{P}$
 we may assume the
signatures $f$ and $g$ have arity 4.
Hence the next lemma considers signatures $f$ and $g$ of arity 4.
%

\begin{figure}[t]
 \centering
 \subfloat[]{
  \begin{tikzpicture}[scale=\scale,transform shape,node distance=\nodeDist,semithick]
   \node[internal] (0)                    {};
   \node[external] (1) [above left  of=0] {};
   \node[external] (2) [below left  of=0] {};
   \node[external] (3) [left        of=1] {};
   \node[external] (4) [left        of=2] {};
   \node[triangle] (5) [right       of=0] {};
   \node[external] (6) [above right of=5] {};
   \node[external] (7) [below right of=5] {};
   \node[external] (8) [right       of=6] {};
   \node[external] (9) [right       of=7] {};
   \path (3) edge[in= 135, out=   0,postaction={decorate, decoration={
                                               markings,
                                               mark=at position 0.45 with {\arrow[>=diamond, white] {>}; },
                                               mark=at position 0.45 with {\arrow[>=open diamond]   {>}; } } }] (0)
         (0) edge[out=-135, in=   0] (4)
             edge[bend left]         (5)
             edge[bend right]        (5)
         (5) edge[out=  45, in= 180] (8)
             edge[out= -45, in= 180] (9);
   \begin{pgfonlayer}{background}
    \node[draw=\borderColor,thick,rounded corners,fit = (1) (2) (6) (7),inner sep=0pt] {};
   \end{pgfonlayer}
  \end{tikzpicture}
  \label{subfig:7-1}
 }
 \qquad
 \subfloat[]{
  \begin{tikzpicture}[scale=\scale,transform shape,node distance=\nodeDist,semithick]
   \node[internal]  (0)                    {};
   \node[external]  (1) [above left  of=0] {};
   \node[external]  (2) [below left  of=0] {};
   \node[external]  (3) [left        of=1] {};
   \node[external]  (4) [left        of=2] {};
   \node[triangle]  (5) [right       of=0] {};
   \node[internal]  (6) [right       of=5] {};
   \node[external]  (7) [above right of=6] {};
   \node[external]  (8) [below right of=6] {};
   \node[external]  (9) [right       of=7] {};
   \node[external] (10) [right       of=8] {};
   \path (3) edge[in= 135, out=   0,postaction={decorate, decoration={
                                               markings,
                                               mark=at position 0.45 with {\arrow[>=diamond, white] {>}; },
                                               mark=at position 0.45 with {\arrow[>=open diamond]   {>}; } } }] (0)
         (0) edge[out=-135, in=   0]  (4)
             edge[bend left]          (5)
             edge[bend right]         (5)
         (5) edge[bend left]          (6)
             edge[bend right]         (6)
         (6) edge[out=  45, in= 180]  (9)
             edge[out= -45, in= 180] (10);
   \begin{pgfonlayer}{background}
    \node[draw=\borderColor,thick,rounded corners,fit = (1) (2) (7) (8),inner sep=0pt] {};
   \end{pgfonlayer}
  \end{tikzpicture}
  \label{subfig:7-2}
 }
 \caption{Two gadgets used in the proof of Lemma~\ref{mixing-A-2-A-parity}.}
 \label{fig:7}
\end{figure}
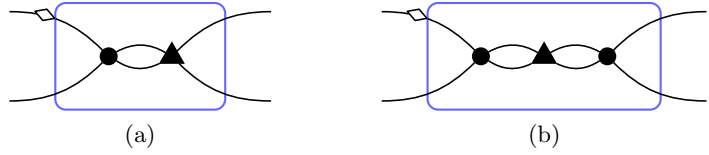

\begin{lemma} \label{mixing-A-2-A-parity}
 Let
 $f = [1, \rho]^{\otimes 4}   \pm [1, -\rho]^{\otimes 4} \in \mathscr{A}$ and
 $g = [1, \alpha]^{\otimes 4} \pm [1, -\alpha]^{\otimes 4} 
\in \mathscr{A}^{\dagger}$.
Then $\PlCSP^2(f, g)$ is \numP-hard.
\end{lemma}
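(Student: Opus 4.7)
The plan is to proceed by case analysis on the four sign combinations for $f$ and $g$, and within each case construct a symmetric arity-4 signature $h$ to which Theorem~\ref{arity-4-dichotomy} can be applied. The crucial structural observation is that both $f$ and $g$ satisfy parity constraints, and since every signature in $\mathcal{EQ}_2$ also satisfies parity, every symmetric arity-4 signature realizable by a planar gadget in $\PlCSP^2(f,g)$ is of the form $[h_0,0,h_2,0,h_4]$ or $[0,h_1,0,h_3,0]$. A direct check shows that for every such symmetric redundant arity-4 signature the compressed matrix $\widetilde{M}$ is singular, so Lemma~\ref{redundant} cannot be used. Hence the argument must produce an $h$ that lies \emph{outside} $\mathscr{P} \cup \widetilde{\mathscr{A}} \cup \widetilde{\mathscr{M}}$ and invoke Theorem~\ref{arity-4-dichotomy}.

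The first step is to catalogue which parity arity-4 signatures are tractable. Writing $f = [1,\rho]^{\otimes 4} \pm [1,-\rho]^{\otimes 4}$ with $\rho^4=1$ gives either $f = 2[1,0,\rho^2,0,1]$ or $f = 2\rho[0,1,0,\rho^2,0]$, and similarly $g$ is either $2[1,0,\alpha^2,0,-1]$ or $2\alpha[0,1,0,\alpha^2,0]$ with $\alpha^4=-1$. Using Lemma~\ref{second-recurrence-relation}, a non-degenerate even-parity arity-4 signature $[h_0,0,h_2,0,h_4]$ with $h_0h_2h_4\neq 0$ lies in $\mathscr{P} \cup \widetilde{\mathscr{A}} \cup \widetilde{\mathscr{M}}$ if and only if $(h_2/h_0,h_4/h_0)$ is one of $(\pm 1,1)$ (from $\mathscr{A}\cap\widehat{\mathscr{M}}$) or $(\pm i,-1)$ (from $\mathscr{A}^\dagger$); an analogous characterization holds in the odd-parity case. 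Thus hardness reduces to exhibiting an $h$ whose nonzero entries violate these very restrictive ratios.

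The second step is the construction. Take the gadget in Figure~\ref{subfig:7-1} with $f$ at the circle and $g$ at the triangle; for the sign combinations where this output is identically zero or is itself of one of the tractable ratios, use instead the longer gadget in Figure~\ref{subfig:7-2} (circle--triangle--circle), or precompose $g$ with $\partial(g)=2[1+\alpha^2,0,\alpha^2-1]$ (which is always nonzero since $\alpha^2=\pm i$) by attaching one copy of $\partial(g)$ to a dangling edge on each side and then symmetrizing. The resulting $h$ has entries that are polynomial expressions in $\rho^2$ and $\alpha^2$, and one verifies in each of the four sign cases that $h_0,h_2,h_4$ are all nonzero while the ratio $h_2/h_0$ depends on $\rho^2\alpha^2 \in \{\pm i\}$ or $\rho^2+\alpha^2$ in a way that is neither $\pm 1$ nor $\pm i$, and $h_4/h_0 \neq \pm 1$. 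Theorem~\ref{arity-4-dichotomy} then yields $\#\mathrm{P}$-hardness of $\PlCSP^2(h)$ and hence of $\PlCSP^2(f,g)$.

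The main obstacle is precisely the parity rigidity: every constructed signature is forced into the narrow slab $[h_0,0,h_2,0,h_4]$ (or its odd-parity dual), where both $\mathscr{A}$ and $\mathscr{A}^\dagger$ have only two one-parameter families, and compressed-matrix arguments are useless. The delicate part is arranging $h_2/h_0$ to simultaneously avoid all four forbidden values $\{\pm 1,\pm i\}$ while keeping $h_0,h_2,h_4$ nonzero; this is where having \emph{two} different gadgets in Figure~\ref{fig:7} is essential, because in certain corner cases (for instance $\rho^2=1,\ \alpha^2=i$) the shorter gadget Figure~\ref{subfig:7-1} produces a $\widehat{\mathscr{M}}^\dagger$-signature and must be replaced by the longer composition of Figure~\ref{subfig:7-2}, which introduces an extra factor of $\rho^2$ and breaks the accidental tractable ratio.
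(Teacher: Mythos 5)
The central step of your plan—construct a symmetric arity-4 signature $h$ from the gadgets in Figure~\ref{fig:7} and apply Theorem~\ref{arity-4-dichotomy}—does not go through, because those gadgets do \emph{not} produce symmetric signatures. Taking $f = 2[1,0,\rho^2,0,1]$ and $g = 2[1,0,\alpha^2,0,-1]$ in the gadget of Figure~\ref{subfig:7-1}, the resulting signature $G$ has $G(0,0,1,1) = \alpha^2+\rho^2$ while $G(1,1,0,0) = \alpha^2-\rho^2$, and $G(0,0,0,0) = 1+\alpha^2\rho^2 \ne \alpha^2\rho^2-1 = G(1,1,1,1)$; the signature matrix is redundant but manifestly not that of a symmetric signature. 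Theorem~\ref{arity-4-dichotomy} is stated only for symmetric arity-4 signatures, so the output of the gadget cannot be fed into it directly. The paper's actual proof recognizes this: it rotates the gadget $90^\circ$ to obtain a signature matrix whose four \emph{corner} entries are symmetric, then applies domain pairing (Lemma~\ref{domain-pairing-4}) to produce a symmetric \emph{binary} signature $h$, and invokes the $\PlCSP$ dichotomy Theorem~\ref{pl-dicho-1}, not the arity-4 dichotomy. Without the rotation and domain-pairing step your proposal has no symmetric $h$ to analyze.

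There is a second error, though it is less fatal: the sentence ``a direct check shows that for every such symmetric redundant arity-4 signature the compressed matrix $\widetilde{M}$ is singular'' is false in the generality you state it. The signature $[1,0,2,0,1]$ satisfies parity and has compressed matrix $\bigl[\begin{smallmatrix}1 & 0 & 2\\ 0 & 2 & 0\\ 2 & 0 & 1\end{smallmatrix}\bigr]$ with determinant $-6\ne 0$. What is true—and what explains why the paper does not use Lemma~\ref{redundant} here—is that the particular redundant (asymmetric) gadget outputs constructed from $f$ and $g$ all happen to have singular compressed matrices, which one verifies case by case using $\alpha^4=-1$ and $\rho^4=1$. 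You would need to carry out that case-by-case verification; the blanket structural claim does not substitute for it and would mislead a reader into thinking the singularity is forced by parity alone.

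Finally, your classification of which even-parity arity-4 signatures $[h_0,0,h_2,0,h_4]$ with $h_0h_2h_4\ne 0$ are tractable—namely $(h_2/h_0,h_4/h_0)\in\{(\pm1,1),(\pm i,-1)\}$—is correct and is a nice observation, but it is doing work that the paper instead delegates to Theorem~\ref{pl-dicho-1} after domain pairing. If you want to pursue the arity-4 route you would first have to explain how to extract a genuinely symmetric arity-4 signature from $f$ and $g$ (e.g.~by a different gadget or by a derivative $\partial_h(\cdot)$ applied to a higher-arity construction), which neither Figure~\ref{subfig:7-1} nor Figure~\ref{subfig:7-2} does on its own. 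As written, the proposal is closer to a sketch that has identified the right obstacle (parity rigidity and the failure of the rank criterion) but not the mechanism the paper uses to escape it.
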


\begin{proof}
There are four cases depending on the combination of the two $\pm$ signs.
 Suppose
 $f = [1, \rho]^{\otimes 4}   + [1, -\rho]^{\otimes 4}$ and
 $g = [1, \alpha]^{\otimes 4} + [1, -\alpha]^{\otimes 4}$.
 Consider the gadget in Figure~\ref{subfig:7-1}.
 We assign $g$ to the circle vertex and $f$ to the triangle vertex.
 Since both $f = 2 [1, 0, \rho^2, 0, 1]$ and $g = 2[1, 0, \alpha^2, 0, -1]$ have even parity,
 the signature of this gadget also has even parity.
 It is also clearly a redundant signature by design.
 Hence there are only five signature entries we need to compute.
 E.g., the entry of Hamming weight~0 is $g_0f_0 + g_2 f_2 = 4 (1 + \alpha^2 \rho^2)$.
 Up to a factor of~$4$,
 the signature of this gadget has signature matrix
 \[
  \begin{bmatrix}
   \alpha^2 \rho^2 + 1 & 0 & 0 & \alpha^2 + \rho^2 \\
   0 & 2 \alpha^2 \rho^2 & 2 \alpha^2 \rho^2 & 0 \\
   0 & 2 \alpha^2 \rho^2 & 2 \alpha^2 \rho^2 & 0 \\
   \alpha^2 - \rho^2 & 0 & 0 & \alpha^2 \rho^2 - 1
  \end{bmatrix},
  \text{ which becomes }
  \begin{bmatrix}
   \alpha^2 \rho^2 + 1 & 0  & 0 & 2 \alpha^2 \rho^2 \\
   0 & \alpha^2 - \rho^2 & 2 \alpha^2 \rho^2 & 0 \\
   0 & 2 \alpha^2 \rho^2 & \alpha^2 + \rho^2 & 0 \\
   2 \alpha^2 \rho^2 & 0 & 0 & \alpha^2 \rho^2 - 1
  \end{bmatrix}
 \]
 after a $90^\circ$ counterclockwise rotation of the gadget.
 (See Figure~\ref{fig:rotate_asymmetric_signature} in Part~I for an illustration of the rotation operation.)
 Taking the four corner entries,
 we define the binary signature
 $h = [\alpha^2 \rho^2 + 1, 2 \alpha^2 \rho^2, \alpha^2 \rho^2 - 1]$.
By domain pairing, $\PlCSP(h) \leq_T \PlCSP^2(f, g)$.
(Domain pairing is the following reduction: 
In an instance of  $\PlCSP(h)$
replace every occurrence of $h$ by a copy of the  $90^\circ$ counterclockwise
 rotated gadget, and replace both edges of $h$
 by two parallel edges each,
and replace every $(=_k)$ in the $\PlCSP(h)$ instance
by $(=_{2k})$ in $\PlCSP^2(f, g)$.
Note that the rotation is necessary to create a \emph{symmetric}
binary signature $h$ in the paired domain.)

Note that $\alpha^2 = \pm i$ and $\rho^2 = \pm 1$,
so $\alpha^2\rho^2  \pm 1$ has norm $\sqrt{2}$, while $2\alpha^2\rho^2$
has norm 2.
Also $\alpha^2\rho^2 +1 \not = \alpha^2\rho^2 -1$.
Hence $h\notin\mathscr{P}\cup \mathscr{A}$ by
Corollary~\ref{binary-necessary} and also $h\notin \widehat{\mathscr{M}}$ by
 Lemma~\ref{binary}.
Thus $\PlCSP(h)$ is $\#{\rm P}$-hard by Theorem~\ref{pl-dicho-1}.
So $\PlCSP^2(f, g)$ is $\#{\rm P}$-hard.

\vspace{.1in}
 Suppose
 $f = [1, \rho]^{\otimes 4}   - [1, -\rho]^{\otimes 4}$ and
 $g = [1, \alpha]^{\otimes 4} - [1, -\alpha]^{\otimes 4}$.
 Consider the same construction.
 Up to a nonzero factor of $4 \alpha \rho$,
 the signature of this gadget has the signature matrix
 \[
  \begin{bmatrix}
   2 & 0 & 0 & 2 \rho^2 \\
   0 & 1 + \alpha^2 \rho^2 & 1 + \alpha^2 \rho^2 & 0 \\
   0 & 1 + \alpha^2 \rho^2 & 1 + \alpha^2 \rho^2 & 0 \\
   2 \alpha^2 & 0 & 0 & 2 \alpha^2 \rho^2
  \end{bmatrix},
  \text{ which becomes }
  \begin{bmatrix}
   2 & 0 & 0 & 1 + \alpha^2 \rho^2 \\
   0 & 2 \alpha^2 & 1 + \alpha^2 \rho^2 & 0 \\
   0 & 1 + \alpha^2 \rho^2 & 2 \rho^2 & 0 \\
   1 + \alpha^2 \rho^2 & 0 & 0 & 2 \alpha^2 \rho^2
  \end{bmatrix}
 \]
 after a $90^\circ$ counterclockwise rotation of the gadget.
 Let $h = [2, 1 + \alpha^2 \rho^2, 2 \alpha^2 \rho^2]$.
 By domain pairing,
 we have $\PlCSP(h) \leq_T \PlCSP^2(f, g)$.
 Note that $1 + \alpha^2 \rho^2 = 1 \pm i$ has norm $\sqrt{2}$
while $2 \alpha^2 \rho^2 \neq 2$ but has norm~$2$.
 Hence $h \notin \mathscr{P} \cup \mathscr{A} \cup \widehat{\mathscr{M}}$
 by Corollary~\ref{binary-necessary} and Lemma~\ref{binary}.
 Thus we are done by Theorem~\ref{pl-dicho-1}.

\vspace{.1in}
 Suppose
 $f = [1, \rho]^{\otimes 4}   - [1, -\rho]^{\otimes 4}$ and
 $g = [1, \alpha]^{\otimes 4} + [1, -\alpha]^{\otimes 4}$.
 Consider the gadget in Figure~\ref{subfig:7-2}.
 We assign $f$ to the circle vertices and $g$ to the triangle vertex.
 Up to a nonzero factor of $16 \alpha^2 \rho^2$,
 the signature of this gadget has the signature matrix
 \[
  \begin{bmatrix}
   2 & 0 & 0 & 2 \rho^2 \\
   0 & \rho^2 & \rho^2 & 0 \\
   0 & \rho^2 & \rho^2 & 0 \\
   2 \rho^2 & 0 & 0 & 2
  \end{bmatrix},
  \qquad
  \text{which becomes}
  \qquad
  \begin{bmatrix}
   2 & 0 & 0 & \rho^2 \\
   0 & 2 \rho^2 & \rho^2 & 0 \\
   0 & \rho^2 & 2 \rho^2 & 0 \\
   \rho^2 & 0 & 0 & 2
  \end{bmatrix}
 \]
 after a $90^\circ$ rotation of the gadget.
 Let $h = [2, \rho^2, 2]$.
 We also have $g^{\times} = 2 [1, \alpha^2]^{\otimes 2}$ 
by domain pairing with $g$ (see Lemma~\ref{domain-pairing-expand}).
 Then $\PlCSP(g^{\times}, h) \leq_T \PlCSP^2(f, g)$.
 Note that $|\rho^2| = 1 \neq 2$,
 so by Lemma~\ref{binary} and Corollary~\ref{binary-necessary},
 $h \in \widehat{\mathscr{M}} \setminus (\mathscr{P} \cup \mathscr{A})$.
Also by Lemma~\ref{binary} and $(\alpha^2)^2 = -1 \not =1$ 
 we have $g^{\times} \notin \widehat{\mathscr{M}}$.
 Thus we are done by Theorem~\ref{pl-dicho-1}.
Note that in this case, the rotation is 
necessary to create a \emph{non-degenerate} binary signature $h$ in
the paired domain.

\vspace{.1in}
 Finally,
 suppose
 $f = [1, \rho]^{\otimes 4}   + [1, -\rho]^{\otimes 4}$ and
 $g = [1, \alpha]^{\otimes 4} - [1, -\alpha]^{\otimes 4}$.
 Consider the gadget in Figure~\ref{subfig:7-2}.
 We assign $g$ to the circle vertices and $f$ to the triangle vertex.
 Up to a nonzero factor of $16 \alpha^2 \rho^2$,
 the signature of this gadget has the signature matrix
 \[
  \begin{bmatrix}
   2 & 0 & 0 & 2 \alpha^2 \\
   0 & \alpha^2 & \alpha^2 & 0 \\
   0 & \alpha^2 & \alpha^2 & 0 \\
   2 \alpha^2 & 0 & 0 & -2
  \end{bmatrix},
  \qquad
  \text{which becomes}
  \qquad
  \begin{bmatrix}
   2 & 0 & 0 & \alpha^2 \\
   0 & \alpha^2 & 2 \alpha^2 & 0 \\
   0 & 2 \alpha^2 & \alpha^2 & 0 \\
   \alpha^2 & 0 & 0 & -2
  \end{bmatrix}
 \]
 after a $90^\circ$ rotation of the gadget.
 Let $h = [2, \alpha^2, -2]$,
 then $\PlCSP(h)\leq_T \PlCSP^2(f, g)$ by domain pairing.
 Since $|\alpha^2| = 1 \neq 2$,
 we have $h \notin \mathscr{P} \cup \mathscr{A}$ 
by Corollary~\ref{binary-necessary} and also
$h  \notin  \cup \widehat{\mathscr{M}}$ by Lemma~\ref{binary}.
 Thus we are done by Theorem~\ref{pl-dicho-1}.
Note that in this case, the rotation is also
necessary to create a \emph{non-degenerate} binary signature $h$ in
the paired domain.
\end{proof}

\begin{remark}
The use of a more complicated construction in the third case is necessary.
Notice that 
$g= [1, \alpha]^{\otimes 4}+[1, -\alpha]^{\otimes 4}
= 2 [1, 0, \alpha^2, 0, -1]$
 has an even parity, while $f=[1, \rho]^{\otimes 4}-[1, -\rho]^{\otimes 4}
= 2 \rho [0, 1, 0, \rho^2, 0]$ has an odd parity.
Then
in any construction  of a signature using $f$ and $g$,
if the number of occurrences  $N_f$ of $f$ is odd (resp. even), 
then the resulting
signature also has an odd (resp. even) parity.
To see this,
let $H$ be an arbitrary $\{f, g\}$-gate with $N_f$  occurrences of $f$.
Suppose  $\sigma$  is a $\{0, 1\}$-assignment 
 to all the edges of $H$, including
internal and external edges, that has a nonzero evaluation on $H$.
  Then each copy of $f$ 
 has an odd  number of
incident edges assigned  to $1$.
Summing these numbers $\pmod 2$ over all copies of $f$ 
we get a number $\equiv N_f \pmod 2$, 
since each of these numbers is $\equiv 1 \pmod 2$.
Similarly each copy of $g$ has an even number of
incident edges assigned  to $1$.
Summing these numbers $\pmod 2$ over all copies of $g$
we get a number  $\equiv 0 \pmod 2$.
On the other hand,
if we add these two sums together we get 
$2 X + Y$ where $X$ is the number of internal edges
and $Y$ is the number of external edges assigned  to $1$ by  $\sigma$.
This is because each internal edge assigned  to $1$
 appears exactly twice in the sum.
Hence this number is $\equiv Y \pmod 2$.
We conclude that $N_f \equiv Y \pmod 2$, the Hamming weight
of $\sigma$ on the external edges. 

 If $N_f$ is odd, from any constructed signature of arity 4,
by rotation and domain pairing we can only get the identically
 zero binary signature.
Thus we must use  $f$ an even number of times.
Using $g$ alone will not get out of $\mathscr{A}^{\dagger}$,
which is a tractable set.
Thus we must use  $f$ at least twice.
Also using  $g$ alone will not get out of $\mathscr{A}$,
another tractable set. Therefore  we must use $g$ at least once.
 Therefore the construction
we give is  the simplest possible.

The same consideration applies for the
construction in the fourth case.
\end{remark}

\vspace{.1in}

The next Lemma deals with the
situation when we have a binary
signature in $\mathscr{A}\setminus\mathscr{P}$
and an arity 4 signature in $\mathscr{A}^{\dagger}\setminus\mathscr{P}$.

\begin{lemma}\label{mixing-A-binary-2-A}
Let
$f=[1, \rho, -\rho^2]$ and
$g=[1, \alpha]^{\otimes 4}+i^r[1, -\alpha]^{\otimes 4}$.
Then $\PlCSP^2(f, g)$ is $\#{\rm P}$-hard.
\end{lemma}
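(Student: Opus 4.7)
The plan is to split by the value of $r\in\{0,1,2,3\}$ (so $i^r\in\{1,i,-1,-i\}$), and in each case produce a pair of non-degenerate parity-satisfying arity-$4$ signatures---one in $\mathscr{A}\setminus\mathscr{P}$ and one in $\mathscr{A}^\dagger\setminus\mathscr{P}$---so that Lemma~\ref{mixing-A-2-A-parity} finishes the hardness. Regardless of $r$, I will first apply Lemma~\ref{mixing-arity-expand} with $k=2$ to $f=[1,\rho,-\rho^2]$ to realize
\[
f' \;=\; [1,\rho]^{\otimes 4}+[\rho,-\rho^2]^{\otimes 4}\;=\;[1,\rho]^{\otimes 4}+\rho^4[1,-\rho]^{\otimes 4}\;=\;[1,\rho]^{\otimes 4}+[1,-\rho]^{\otimes 4},
\]
using $\rho^4=1$. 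This $f'$ is a non-degenerate arity-$4$ signature in $\mathscr{A}\setminus\mathscr{P}$.

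For $r=0$ or $r=2$, the signature $g=[1,\alpha]^{\otimes 4}\pm[1,-\alpha]^{\otimes 4}$ is already of the exact form required by Lemma~\ref{mixing-A-2-A-parity}, and applying that lemma to $(f',g)$ completes the case. For $r=1$ or $r=3$, the signature $g$ no longer satisfies parity, and my plan is to first reduce $g$ to a binary signature in $\mathscr{A}^\dagger\setminus\mathscr{P}$ by a self loop. Computing
\[
\partial(g)\;=\;(1+\alpha^2)\{[1,\alpha]^{\otimes 2}+i^r[1,-\alpha]^{\otimes 2}\}\;=\;(1+\alpha^2)(1+i^r)\left[1,\tfrac{1-i^r}{1+i^r}\alpha,\alpha^2\right],
\]
and using $(1-i^r)/(1+i^r)=\mp i$ for $i^r=\pm i$, together with $1+\alpha^2=1\pm i\neq 0$ and $1+i^r\neq 0$, I obtain, up to a nonzero scalar, the binary signature $h:=[1,\beta,\alpha^2]$ with $\beta=\mp i\alpha$. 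Since $\beta^4=i^4\alpha^4=-1$ and $\alpha^2=-\beta^2$, this is $h=[1,\beta,-\beta^2]$ with $\beta^4=-1$, and hence $h\in\mathscr{A}^\dagger\setminus\mathscr{P}$ by Lemma~\ref{binary}.

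Applying Lemma~\ref{mixing-arity-expand} with $k=2$ to $h$ then yields
\[
h'\;=\;[1,\beta]^{\otimes 4}+[\beta,-\beta^2]^{\otimes 4}\;=\;[1,\beta]^{\otimes 4}+\beta^4[1,-\beta]^{\otimes 4}\;=\;[1,\beta]^{\otimes 4}-[1,-\beta]^{\otimes 4},
\]
a non-degenerate parity-satisfying arity-$4$ signature in $\mathscr{A}^\dagger\setminus\mathscr{P}$, and Lemma~\ref{mixing-A-2-A-parity} invoked on $(f',h')$ gives $\PlCSP^2(f',h')$ is $\#\mathrm{P}$-hard, whence so is $\PlCSP^2(f,g)$. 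The only real point of care is the bookkeeping of signs and scalars in the two applications of Lemma~\ref{mixing-arity-expand} and the evaluation of $\partial(g)$; this reduces cleanly to the identities $\rho^4=1$ and $\beta^4=-1$, so I do not anticipate any genuine obstacle. The strategy is essentially forced, because Lemma~\ref{mixing-A-2-A-parity} is the only prior tool that produces hardness from a pair of parity-satisfying signatures in $\mathscr{A}\setminus\mathscr{P}$ and $\mathscr{A}^\dagger\setminus\mathscr{P}$, and the $\partial$-operation is what allows us to peel off the $\pm i$ factor that destroys parity in the odd-$r$ cases.
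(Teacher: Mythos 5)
Your proof is correct, but it takes a genuinely different route from the paper's. The paper computes $\partial_{[1,\rho,-\rho^2]}(g)=\lambda[1,\alpha]^{\otimes 2}+i^r\mu[1,-\alpha]^{\otimes 2}$ with $\lambda=1-\rho^2\alpha^2+2\rho\alpha$, $\mu=1-\rho^2\alpha^2-2\rho\alpha$, rewrites this as a nonzero multiple of $[1,\tfrac{1-x}{1+x}\alpha,\alpha^2]$ where $x=i^r\mu/\lambda$, shows via Lemma~\ref{pre-4-power} and norm arguments that $x^4\notin\{0,1\}$ (hence $(\tfrac{1-x}{1+x})^4\notin\{0,1\}$), and concludes this binary signature lies outside $\mathscr{P}\cup\widetilde{\mathscr{A}}\cup\widetilde{\mathscr{M}}$, invoking Theorem~\ref{thm:parII:k-reg_homomorphism}$'$ directly. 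This is uniform in $r$, requires only the binary dichotomy, and produces hardness from a single constructed binary signature. Your approach instead case-splits on the parity of $r$, twice invokes Lemma~\ref{mixing-arity-expand} to lift binary signatures to parity-satisfying arity-$4$ signatures $f'=[1,\rho]^{\otimes 4}+[1,-\rho]^{\otimes 4}$ and (for odd $r$) $h'=[1,\beta]^{\otimes 4}-[1,-\beta]^{\otimes 4}$ with $\beta=\mp i\alpha$, and routes the hardness through Lemma~\ref{mixing-A-2-A-parity}. Both are sound; the paper's version is more economical and avoids the case split, while yours explicitly exposes the structural obstruction (a parity-satisfying $\mathscr{A}\setminus\mathscr{P}$ signature next to a parity-satisfying $\mathscr{A}^\dagger\setminus\mathscr{P}$ signature) that Lemma~\ref{mixing-A-2-A-parity} captures. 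Your bookkeeping checks out: $\rho^4=1$ makes $[\rho,-\rho^2]^{\otimes 4}=[1,-\rho]^{\otimes 4}$, $\beta^4=-1$ flips the sign in $h'$, and $\frac{1-i^r}{1+i^r}=\mp i$ for $i^r=\pm i$ is correct.
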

\begin{proof}
By our calculus, we have
$\partial_{[1, \rho, -\rho^2]}(g)
=\lambda [1, \alpha]^{\otimes 2}+i^r \mu [1, -\alpha]^{\otimes 2}$,
where $\lambda = 1-\rho^2\alpha^2+2\rho\alpha$ and $\mu
= 1-\rho^2\alpha^2-2\rho\alpha$.
Note that $1-\rho^2\alpha^2 = 1 \pm i$ has norm  $\sqrt{2}$
and $|2\rho\alpha| = 2$, we have $\lambda \neq 0$.
Let $x = i^r \mu/\lambda$, then
$\partial_{[1, \rho, -\rho^2]}(g) =
 \lambda (1+x) [1, \frac{1-x}{1+x} \alpha, \alpha^2]$.
By norm,
 $(1-\rho^2\alpha^2)^{4}\neq(2\rho\alpha)^4$ and
 $(1-\rho^2\alpha^2)(2\rho\alpha)\neq 0$,
we have $x^4\neq 0, 1$ by Lemma~\ref{pre-4-power}.
By  Lemma~\ref{pre-4-power} again, we have $(\frac{1-x}{1+x})^4\neq 0, 1$.
Thus $[1, \frac{1-x}{1+x}\alpha, \alpha^2]\notin\mathscr{P}\cup\widetilde{\mathscr{A}}\cup\widetilde{\mathscr{M}}$
by Corollary~\ref{binary-necessary}.
This implies that  $\PlCSP^2([1, \frac{1-x}{1+x}\alpha, \alpha^2])$ is \numP-hard
by Theorem~\ref{thm:parII:k-reg_homomorphism}$'$.
Thus $\PlCSP^2(f, g)$ is $\#{\rm P}$-hard.
\end{proof}

\vspace{.1in}

The next lemma is the No-Mixing lemma of $\mathscr{A}$
with the other tractable sets, namely  the statements No-Mixing-$(4,j)$
for $1 \le j \le 3$.
Having already proved  Lemma~\ref{mixing-P}, we can assume that
both $f$ and $g$ are not in $S_5 = \mathscr{P}$.

\begin{lemma}\label{mixing-A}
Let $\{f, g\}\subseteq \left(\bigcup_{k=1}^4 S_k \right)\setminus S_5$ and $\{f, g\}\nsubseteq S_j$ for every $1\leq j\leq 4$.
Assume that $f\in\mathscr{A}$, then $\PlCSP^2(f, g)$ is $\#{\rm P}$-hard.
\end{lemma}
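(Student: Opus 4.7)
The plan is to mimic the structure of Lemma~\ref{mixing-P}: exploit the reverse lexicographic order so that we may already assume (by Lemma~\ref{mixing-P}) that $f,g \notin \mathscr{P}$ and (by the No-Mixing clean-up) that $g \notin \mathscr{A}$, then case-split on which of $S_1 = \widehat{\mathscr{M}}$, $S_2 = \widehat{\mathscr{M}}^\dagger$, $S_3 = \mathscr{A}^\dagger$ contains $g$. Since $f \in \mathscr{A} \setminus \mathscr{P}$, Corollary~\ref{A-2-A-minus-P} forces $f \notin \mathscr{A}^\dagger$, so in each branch we will know $\{f,g\}$ is genuinely split between the chosen tractable set and $\mathscr{A}$.

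For $g \in \widehat{\mathscr{M}}\setminus(\mathscr{P}\cup\widetilde{\mathscr{A}})$ the hypothesis $\{f,g\}\not\subseteq\widehat{\mathscr{M}}$ gives $f\notin\widehat{\mathscr{M}}$, and Lemma~\ref{M-even} applied to $\mathcal{F}=\{f,g\}$ immediately yields \#P-hardness. For $g \in \widehat{\mathscr{M}}^\dagger\setminus(\mathscr{P}\cup\widetilde{\mathscr{A}})$ the goal is to invoke Lemma~\ref{2-M-even}, which requires producing a binary signature other than a multiple of $[1,0,1]$ in $\PlCSP^2(f,g)$. I would harvest such a binary from $g$ itself using the explicit classification in Lemma~\ref{M-2-M-NOT-IN-A-AND-P}: when $g$ is binary it is $\lambda[1,b,-1]$ directly; when $g=[s,ti]^{\otimes 2n}+[t,si]^{\otimes 2n}$ then $\partial^{n-1}(g)$ is proportional to $[1,\ast,-1]$ (or to $[0,1,0]$ in the subcase $s^2+t^2=0$); when $g_k=\lambda(\pm i)^k(2n-2k)$ a $\partial$ down to arity $2$ gives $\lambda'[1,\pm i,-1]$. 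If $g=[s,ti]^{\otimes 2n}-[t,si]^{\otimes 2n}$, the naive $\partial^{n-1}(g)$ collapses to $[1,0,1]$, and my plan is to first apply a single $\partial$ to move to the $+$ form at arity $2n-2$ (exactly the move used in Lemma~\ref{2-M-even-XX}) and then proceed as above.

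For $g \in \mathscr{A}^\dagger\setminus\mathscr{P}$ (and $g\notin \mathscr{A}\cup\widetilde{\mathscr{M}}$ by the clean-up), I would apply Lemma~\ref{mixing-arity-reduction-A} separately to $f$ and $g$, obtaining arity-$4$ signatures $f'=[1,\rho]^{\otimes 4}+i^r[1,-\rho]^{\otimes 4}\in\mathscr{A}\setminus\mathscr{P}$ and $g'=[1,\alpha]^{\otimes 4}+i^s[1,-\alpha]^{\otimes 4}\in\mathscr{A}^\dagger\setminus\mathscr{P}$. When both $r,s\in\{0,2\}$, the parity form puts us exactly in the scope of Lemma~\ref{mixing-A-2-A-parity} and we conclude directly (checking all four sign combinations). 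When $r\in\{1,3\}$ or $s\in\{1,3\}$ (non-parity), I would construct a binary $[1,\rho,-\rho^2]$ (respectively $[1,\alpha,-\alpha^2]$) by an appropriate derivative of $f'$ (or $g'$)---e.g.\ combining $\partial$ and $\partial_{=_4}$ to handle the degenerate $\rho^2=-1$ subcase where $\partial$ vanishes---and then invoke Lemma~\ref{mixing-A-binary-2-A}, which accepts arbitrary $s$.

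The main obstacle will be the binary-extraction step in Case~2, specifically the unlucky configurations (such as $g=[s,ti]^{\otimes 2n}-[t,si]^{\otimes 2n}$ or $f=[1,0,1,0,1]$ when we try to extract from $f$) where the most natural derivatives land exactly on a multiple of $[1,0,1]$. Surmounting it requires choosing the right combination of $\partial$, $\partial_{=_4}$, and possibly a mixed contraction $\partial_f(g)$ to break the symmetry, and then verifying by direct computation that for every admissible parameter tuple $(s,t,\rho,\alpha,r)$ allowed by Lemmas~\ref{M-2-M-NOT-IN-A-AND-P} and \ref{mixing-arity-reduction-A} the resulting binary is never proportional to $[1,0,1]$---this is the step most analogous to (and modeled on) the case analysis inside Lemmas~\ref{2-M-even-XX} and~\ref{2-M-even-XXX}.
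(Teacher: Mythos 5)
Your Case~2 ($g \in \widehat{\mathscr{M}}$) matches the paper exactly, and the overall case split is the right shape, but there are concrete gaps in Cases~1 and~3.

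In Case~3 ($g \in \widehat{\mathscr{M}}^\dagger \setminus(\mathscr{P}\cup\widetilde{\mathscr{A}})$), your plan to ``harvest the binary from $g$ itself'' fails for exactly half the canonical forms, and this is not incidental: Remark~\ref{rmk:E2:1} shows that for $g = [s,ti]^{\otimes 2n}+[t,si]^{\otimes 2n}$ with $2n\equiv 0\pmod 4$, or $g = [s,ti]^{\otimes 2n}-[t,si]^{\otimes 2n}$ with $2n\equiv 2\pmod 4$, or $g_k = \lambda(\pm i)^k(2n-2k)$ with $2n\equiv 2\pmod 4$, \emph{every} binary signature constructible from $g$ alone in $\PlCSP^2$ is a multiple of $[1,0,1]$. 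Your claimed computation $\partial^{n-1}(g) \propto [1,\ast,-1]$ is wrong in the first of these: since $\partial([s,ti]^{\otimes m}\pm[t,si]^{\otimes m}) = (s^2-t^2)\left\{[s,ti]^{\otimes m-2}\mp[t,si]^{\otimes m-2}\right\}$ flips the sign at each step, for $2n\equiv 0\pmod 4$ you land on the ``$-$'' form at arity $2$, which is $(s^2-t^2)[1,0,1]$. Your workaround (apply one $\partial$ first when $g$ is the ``$-$'' form) merely swaps which arity class is blocked. The paper avoids this by extracting the binary from $f$ instead (it always lies in $\mathscr{A}\setminus(\mathscr{P}\cup\widehat{\mathscr{M}}^\dagger)$ here), and even that fails for $f = [1,i]^{\otimes 2n}+i^r[1,-i]^{\otimes 2n}$ with $2n\equiv 0\pmod 4$ (Remark~\ref{rmk:E2:2}), which forces the genuinely hard subcase: reduce to $f' = [1,i]^{\otimes 4}+i^r[1,-i]^{\otimes 4}$, use Lemma~\ref{construct-[1,0,1]-by-(1,i)-(1,-i)} to plant $[1,0,-1]^{\otimes 2}$ on the LHS, pass to a $Z$-transformed $\PlCSP^4$ problem via Lemma~\ref{mixing-P-global-binary} and Corollary~\ref{2-M-double-roots}, and only then split on the form of $g$. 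Your proposal does not anticipate needing any of this.

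In Case~1, two of your subcases are also shaky. For $r$ odd and $\rho^2=-1$ you propose ``combining $\partial$ and $\partial_{=_4}$,'' but after the arity-$4$ reduction $\partial(f')=0$ and $\partial_{=_4}(f')$ is a scalar, so neither helps; the paper instead extracts a binary ($[1,0,-1]$ or $[0,1,0]$, depending on $s$) from $g$ and feeds it back into $f$. For $r$ even, $s$ odd you propose extracting $[1,\alpha,-\alpha^2]$ from $g'$ and invoking Lemma~\ref{mixing-A-binary-2-A}, but that lemma is stated with $[1,\rho,-\rho^2]\in\mathscr{A}$ paired with the arity-$4$ $\mathscr{A}^\dagger$ signature, not the other way around, and the $\mathcal{EQ}_2$-breaking transformation $\trans{1}{0}{0}{i}$ means the swapped version is not free; the paper instead constructs a binary $[1,a\rho,\pm 1]$ with $(a\rho)^4\notin\{0,1\}$ (lying in $\widetilde{\mathscr{M}}\setminus(\mathscr{P}\cup\widetilde{\mathscr{A}})$) and closes via Lemmas~\ref{M-even} and~\ref{2-M-even}.
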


\begin{proof}
By $f\in\mathscr{A}$,
we have $g \notin \mathscr{A}$. Thus,
$g\in (\mathscr{A}^{\dagger} \cup \widehat{\mathscr{M}} \cup 
\widehat{\mathscr{M}}^{\dagger} )
\setminus(\mathscr{P}\cup\mathscr{A})$.
\begin{enumerate}
\item Suppose $g\in\mathscr{A}^{\dagger}\setminus(\mathscr{P}\cup\mathscr{A})$.
Then  a fortiori, $g\in\mathscr{A}^{\dagger}\setminus\mathscr{P}$.
As $f\in\mathscr{A}\setminus\mathscr{P}$,
by Lemma~\ref{mixing-arity-reduction-A}, we have some
$f'\in\mathscr{A}\setminus\mathscr{P}$ and
 $g'\in\mathscr{A}^{\dagger}\setminus\mathscr{P}$,
both of arity 4.
Without loss of generality,
 we will assume the given $f$ and $g$ are of arity 4.
By definition
(see Figure~\ref{fig:venn_diagram:A_Adagger_P}), we can assume that
\[f=[1, \rho]^{\otimes 4}+i^{r}[1, -\rho]^{\otimes 4}~~~~
 \mbox{and}~~~~
 g=[1, \alpha]^{\otimes 4}+i^{s}[1, -\alpha]^{\otimes 4} ~~~~\mbox{where}
~r, s=0, 1, 2, 3.\]
\begin{itemize}
\item If both $r, s \equiv 0 \pmod 2$, then
 $f=[1, \rho]^{\otimes 4}\pm[1, -\rho]^{\otimes 4}$ and
$g=[1, \alpha]^{\otimes 4}\pm [1, -\alpha]^{\otimes 4}$. This is the
case where both $f$ and $g$ satisfy the parity constraint, and it
is proved in Lemma~\ref{mixing-A-2-A-parity}.

\item If  $r \equiv 1 \pmod 2$ then
 $f=[1, \rho]^{\otimes 4}\pm i[1, -\rho]^{\otimes 4}$.
For $\rho^2=1$,
by our calculus we have
\[\partial(f)
= 2 \{ [1, \rho]^{\otimes 2} \pm i[1, -\rho]^{\otimes 2}\}
= 2 (1 \pm i) [1, \mp i \rho, \rho^2]
=2(1\pm i)[1, \rho', -\rho'^2],\]
 where $\rho'=\mp i\rho$, and
 $\rho'^4=1$.
Thus ${\rm Pl-}\#{\rm CSP}^2([1, \rho', -\rho'^2], g)$ is $\#{\rm P}$-hard
 by Lemma~\ref{mixing-A-binary-2-A}.
 So ${\rm Pl-}\#{\rm CSP}^2(f, g)$ is $\#{\rm P}$-hard.

 For $\rho^2=-1$, we cannot use $[1, 0, 1]$ to reduce the arity of $f$,
because $\partial(f) =0$ in this case.
 Instead we construct a suitable binary signature from $g$.
If $s\neq 2$, then we
have $g_0 = 1 + i^s \not =0$ and $g_4 = \alpha^4 + i^s (-\alpha)^4
=-(1 + i^s) = -g_0$,
 and therefore $\partial_{g}(=_6)=g_0[1, 0, -1]$ on the LHS.
 Then we have  $\partial_{[1, 0, -1]}(f)
=2 \{[1, \rho]^{\otimes 2} \pm i[1, -\rho]^{\otimes 2}\}
=2(1\pm i)[1, \mp i\rho, \rho^2]=2(1\pm i)[1, \rho', -\rho'^2]$,
where $\rho'=\mp i\rho$ and $\rho'^4=1$.
 Then we are done by Lemma~\ref{mixing-A-binary-2-A}.
 If $s=2$, then $\partial(g)
=(1+ \alpha^2) \{ [1, \alpha]^{\otimes 2} - [1, -\alpha]^{\otimes 2}\}$,
a nonzero multiple of $[0, 1, 0]$.
%
 Thus we have  $\partial_{[0, 1, 0]}(f)
=2\rho \{[1, \rho]^{\otimes 2} \mp i[1, -\rho]^{\otimes 2}\}
=2\rho (1 \mp i)
[1, \pm i\rho, \rho^2]=2\rho(1\mp i)[1, \rho', -\rho'^2]$,
where $\rho'=\pm i\rho$ and $\rho'^4=1$.
 Then we are done by Lemma~\ref{mixing-A-binary-2-A} again.

\item If  $r \equiv 0 \pmod 2$ and  $s \equiv 1 \pmod 2$,
i.e., $f=[1, \rho]^{\otimes 4}\pm[1, -\rho]^{\otimes 4}$ and
$g=[1, \alpha]^{\otimes 4}\pm i[1, -\alpha]^{\otimes 4}$,
then
we will construct a binary signature
 $h= [1, b, \pm 1]$. Note that $h \in \widetilde{\mathscr{M}}$
by Lemma~\ref{binary}.
 Furthermore, we will ensure that $b^4\neq 0, 1$,
thus $h \notin \mathscr{P}\cup\widetilde{\mathscr{A}}$
by Corollary~\ref{binary-necessary}.
Then we are done by Lemma~\ref{M-even} and Lemma~\ref{2-M-even}.

We have $\partial(g)
=(1 + \alpha^2) \{[1, \alpha]^{\otimes 2} \pm i [1, -\alpha]^{\otimes 2}\}
=(1 + \alpha^2) (1\pm i) [1, \mp i\alpha, \alpha^2]$,
a nonzero multiple  of $[1, \alpha', -\alpha'^2]$,
 where  $\alpha'=\mp i\alpha$ and
 $\alpha'^{4}=-1$.
Moreover, we have $h= \partial_{[1, \alpha', -\alpha'^2]}(f)
= \lambda [1, \rho]^{\otimes 2} \pm \mu [1, -\rho]^{\otimes 2}$,
where $\lambda = 1-\rho^2\alpha'^2+2\rho\alpha'$
and $\mu = 1-\rho^2\alpha'^2-2\rho\alpha'$.
Then $h= \lambda (1 \pm x) [1, a \rho, \rho^2]$,
where $x =  \mu/\lambda$ and $a =
\frac{1\mp x}{1\pm x}$.
Note that $1-\rho^2\alpha'^2 = 1 \pm i$ has norm $\sqrt{2}$
and $|2\rho\alpha'|=2$, thus $\lambda \not =0$
and $(1-\rho^2\alpha'^2)^4 \not = (2\rho\alpha')^4$ by norm,
therefore $x^4 \not = 0,1$ by Lemma~\ref{pre-4-power}.
Then by Lemma~\ref{pre-4-power} again, $a^4\neq 0, 1$,
and so $(a\rho)^4\neq 0, 1$ as well.
As $\lambda \neq 0$, $1 \pm x \neq 0$, $\rho^2 = \pm 1$,
we have a nonzero multiple of $[1, a\rho, \pm 1]$,
our desired binary signature, and
we are done by Lemma~\ref{M-even} and Lemma~\ref{2-M-even}.
%
%
\end{itemize}

In the following we may assume $g \notin \mathscr{A}^{\dagger}$.
\item Suppose
$g\in\widehat{\mathscr{M}}\setminus(\mathscr{P}\cup\mathscr{A})$,
then
$g\in\widehat{\mathscr{M}}\setminus(\mathscr{P}\cup\widetilde{\mathscr{A}})$.
We also have $f \notin \widehat{\mathscr{M}}$, lest $\{f, g\}
\subseteq \widehat{\mathscr{M}}$,
and we are done by Lemma~\ref{M-even}.

\item Suppose $g\in\widehat{\mathscr{M}}^{\dagger}\setminus(\mathscr{P}\cup\mathscr{A})$,
then $g\in\widehat{\mathscr{M}}^{\dagger}\setminus(\mathscr{P}\cup\widetilde{\mathscr{A}})$.
Now
$f\in \mathscr{A}\setminus(\mathscr{P}\cup\widehat{\mathscr{M}}^{\dagger})$.
Note that $[1, 1]^{\otimes 2n}\pm [1, -1]^{\otimes 2n}
%
\in\widehat{\mathscr{M}}^{\dagger}$.
This can be verified as follows: 
Let 
$Z= 
\left[\begin{smallmatrix} 1 & 1 \\
i & -i \end{smallmatrix}\right]$,  then $\widehat{\mathscr{M}}^{\dagger}
= Z \mathscr{M}$, and
$Z^{-1}
=
\frac{1}{2}
\left[\begin{smallmatrix} 1 & -i\\
1 & i \end{smallmatrix}\right]$.
We first verify that
$[ 1,0]^{\otimes 2n}\pm [0,1]^{\otimes 2n} \in 
\widehat{\mathscr{M}}^{\dagger}$, by
\begin{align*}
 \trans{1}{-i}{1}{i}^{\otimes 2n}
 \left\{ 
  \left[\begin{matrix} 1 \\ 0 \end{matrix} \right]^{\otimes 2n}
  \pm 
  \left[\begin{matrix} 0 \\ 1 \end{matrix} \right]^{\otimes 2n}
 \right\}
 &=
 \left[\begin{matrix} 1 \\  1 \end{matrix} \right]^{\otimes 2n}
 \pm (-i)^{2n}
 \left[\begin{matrix} 1 \\ -1 \end{matrix} \right]^{\otimes 2n} \\
 &=
 \left[\begin{matrix} 1 \\  1 \end{matrix} \right]^{\otimes 2n}
 \pm (-1)^{n}
 \left[\begin{matrix} 1 \\ -1 \end{matrix} \right]^{\otimes 2n}
 \in \mathscr{M}.
\end{align*}
Then notice that
\[\left[\begin{matrix} 1 \\1 \end{matrix} \right]^{\otimes 2n}
\pm 
\left[\begin{matrix} 1 \\ -1 \end{matrix} \right]^{\otimes 2n}
=
\left[\begin{matrix} 1 & 1 \\ 1 &-1 \end{matrix} \right]^{\otimes 2n}
\left\{
\left[\begin{matrix} 1 \\0 \end{matrix} \right]^{\otimes 2n}
\pm
\left[\begin{matrix} 0 \\ 1 \end{matrix} \right]^{\otimes 2n}
\right\}
\in
\left[\begin{matrix} 1 & 1 \\ 1 &-1 \end{matrix} \right]
\widehat{\mathscr{M}}^{\dagger}.\]
However
\[\left[\begin{matrix} 1 & 1 \\ 1 &-1 \end{matrix} \right]
\left[\begin{matrix} 1 & 1 \\ i &-i \end{matrix} \right]
=
\left[\begin{matrix} 1+i & 1-i \\ 1-i & 1+i \end{matrix} \right]
=
\left[\begin{matrix} 1 & 1 \\ i &-i \end{matrix} \right]
\left[\begin{matrix} 0 & 1-i \\ 1+i & 0 \end{matrix} \right] 
\]
and
$\left[\begin{smallmatrix} 0 & 1-i \\ 1+i & 0 \end{smallmatrix} \right]
\mathscr{M} = \mathscr{M}$, therefore
$\left[\begin{smallmatrix} 1 & 1 \\ 1 & -1 \end{smallmatrix} \right]
\widehat{\mathscr{M}}^{\dagger}
= \left[\begin{smallmatrix} 1 & 1 \\ 1 & -1 \end{smallmatrix} \right]
Z \mathscr{M}
= Z \left[\begin{smallmatrix} 0 & 1-i \\ 1+i & 0 \end{smallmatrix} \right]
\mathscr{M}
= Z \mathscr{M}
= \widehat{\mathscr{M}}^{\dagger}$.
(Also see Figure~\ref{fig:venn_diagram:M_Mdagger_Atilde_P}).

Since $[1, 1]^{\otimes 2n}\pm [1, -1]^{\otimes 2n} \in \widehat{\mathscr{M}}^{\dagger}$ and $f \notin \widehat{\mathscr{M}}^{\dagger}$,
 $f$ cannot take the form
$[1, 1]^{\otimes 2n}\pm [1, -1]^{\otimes 2n}$.
Then by definition (see
 Figure~\ref{fig:venn_diagram:A_Adagger_P}) $f$ takes the form
\[[1, \rho, -\rho^2],~~\mbox{or}~~
[1, 1]^{\otimes 2n}\pm i[1, -1]^{\otimes 2n},~~\mbox{or}~~
[1, i]^{\otimes 2n}+i^r [1, -i]^{\otimes 2n},~~\mbox{where}~~2n\geq 4.\]
The following three cases are immediately done by Lemma~\ref{2-M-even}:
\begin{itemize}
\item $f=[1, \rho, -\rho^2]$.
\item $f=[1, 1]^{\otimes 2n}\pm i[1, -1]^{\otimes 2n}$ with $2n\geq 4$, then we have
$\partial^{n-1}(f)=2^{n-1}[1\pm i, 1\mp i, 1\pm i]$ which is not $\lambda[1, 0, 1]$.
\item If $f=[1, i]^{\otimes 2n}+i^r[1, -i]^{\otimes 2n}$ with $2n\equiv 2\pmod 4$,
then we have $\partial^{\frac{n-1}{2}}_{=_4}(f)=2^{\frac{n-1}{2}}[1+i^r, (1-i^r)i, -(1+i^r)]$
which is not $\lambda[1, 0, 1]$, no matter what value $r$ takes.
\end{itemize}

The remaining case is that $f=[1, i]^{\otimes 2n}+i^r[1, -i]^{\otimes 2n}$ with $2n\equiv 0\pmod 4$.
In this case, we have
\[\partial^{\frac{n-2}{2}}_{=_4}(f)=2^{\frac{n-2}{2}}\{[1, i]^{\otimes 4}+i^r[1, -i]^{\otimes 4}\}.\]
We will denote by $f' = [1, i]^{\otimes 4}+i^r[1, -i]^{\otimes 4}$.
%
   If $g$ has arity 2, then up to a nonzero scalar, $g=[1, b, -1]$ 
with $b^4\neq 0, 1$ by Lemma~\ref{M-2-M-NOT-IN-A-AND-P},
   and we are done by Lemma~\ref{2-M-even}.
   In the following, assume that $g$ has arity $2m\geq 4$.
   By Lemma~\ref{M-2-M-NOT-IN-A-AND-P},
either $g=[s, ti]^{\otimes 2m}\pm[t, si]^{\otimes 2m}$ with $s^4\neq t^4$ and $st\neq 0$,
   or $g$ has arity $2m$ and $g_k=(\pm i)^k(2m-2k)$.
%
\begin{itemize}
%
 \item
 If $g$ has arity $2m\geq 4$ and $g_k=(\pm i)^k(2m-2k)$ up to a nonzero scalar,
then let $\hat{g}= (Z^{-1})^{\otimes 2m} g$,
where $Z=
\left[\begin{smallmatrix} 1 & 1 \\
i & -i \end{smallmatrix}\right]$. Then
$\hat{g}=[0, 1, 0, \ldots, 0]$ or $\hat{g}=[0, \ldots, 0, 1, 0]$
of arity $2m$.
 By Corollary~\ref{2-M-double-roots}, we have
 \[\PlCSP^2(\hat{g})\leq\PlCSP^2(f', g).\]
 Let $\hat{g}'=\partial^{m-2}(\hat{g})=[0, 1, 0, 0, 0]$ or $[0, 0, 0, 1, 0]$.
Clearly $\hat{g}'$ is non-degenerate.
It also has a second order recurrence of type $\langle 0, 0, 1\rangle$
or $\langle 1, 0, 0 \rangle$. By Lemma~\ref{second-recurrence-relation},
$\hat{g}' \notin \mathscr{P}\cup\widetilde{\mathscr{A}}\cup\widetilde{\mathscr{M}}$.
Then $\operatorname{Pl-}\#{\rm CSP}^2(\hat{g}')$
is $\#$P-hard by Theorem~\ref{arity-4-dichotomy}
 and we are done.

\item If $g=[s, ti]^{\otimes 2m}\pm [t, si]^{\otimes 2m}$,
we have
\[
 g'=\partial^{m-2}(g)=(s^2-t^2)^{m-2} \left\{[s, ti]^{\otimes 4}\pm (-1)^{m-2} [t, si]^{\otimes 4}\right\}
\]
and from $f'$ we get
 $[1, 0, -1]^{\otimes 2}$ on LHS by Lemma~\ref{construct-[1,0,1]-by-(1,i)-(1,-i)}, thus
\[
  \PlHolant([1, 0, -1]^{\otimes 2}\cup{\cal EQ}_2 \mid f', g')\leq\PlCSP^2(f, g).
\]
After a holographic transformation using
${T}=\left[\begin{smallmatrix} 1 & 0 \\
0 & -i \end{smallmatrix}\right]$,
we have
\begin{multline*}
 \PlHolant([1, 0, 1]^{\otimes 2}, [1, 0, -1], [1, 0, 0, 0, 1], \ldots \mid \hat{f'}, \hat{g'}) \\
 \equiv
 \PlHolant([1, 0, -1]^{\otimes 2}\cup{\cal EQ}_2 \mid f', g'),
\end{multline*}
where $\hat{f'}=({T}^{-1})^{\otimes 4}f'=[1, 1]^{\otimes 4}+i^r[1, -1]^{\otimes 4}$ and $\hat{g'}=({T}^{-1})^{\otimes 4}g'$.
Note that $\hat{f'}$ satisfies a second order
recurrence of type $\langle -1, 0, 1\rangle$.
Thus $\hat{f'} \notin \widehat{\mathscr{M}}$ by 
Lemma~\ref{second-recurrence-relation}.
Also note that  $\mathscr{P}$ and $\widetilde{\mathscr{A}}$ are
invariant under $T$, and since $g' \in \widehat{\mathscr{M}}^{\dagger}
\setminus(\mathscr{P}\cup \widetilde{\mathscr{A}})$,
we have $\hat{g'} \in \widehat{\mathscr{M}}
\setminus(\mathscr{P}\cup\widetilde{\mathscr{A}})$.
In the following, we will construct $[1, 0, 1]^{\otimes 2}$ on {\rm RHS}
for
\[\PlHolant([1, 0, 1]^{\otimes 2}, [1, 0, -1], [1, 0, 0, 0, 1], \ldots \mid \hat{f'}, \hat{g'}).\]
Since we have $[1, 0, 1]^{\otimes 2}$ on LHS, we can get
$[\partial(\hat{f'})]^{\otimes 2}=4[1+i^r, 1-i^r, 1+i^r]^{\otimes 2}$
on {\rm RHS}.
\begin{itemize}
\item If $r=0$, then we directly have $[1, 0, 1]^{\otimes 2}$ on {\rm RHS}.
\item If $r=2$, then we have $[0, 1, 0]^{\otimes 2}$ on {\rm RHS}.
Thus we can move $[1, 0, 1]^{\otimes 2}$ on {\rm LHS} to {\rm RHS}.
\item If $r=1$ or $3$, then we have $[1, \pm i, 1]^{\otimes 2}$ on {\rm RHS}.
By connecting two copies of $[1, \pm i, 1]^{\otimes 2}$ by 
$[1, 0, 1]^{\otimes 2}$ of LHS,
we have a nonzero multiple of $[0, 1, 0]^{\otimes 2}$ on {\rm RHS}.
Then we can move $[1, 0, 1]^{\otimes 2}$ on {\rm LHS} to {\rm RHS}.
\end{itemize}
From the above, we have
\begin{align*}
        &\PlHolant([1, 0, 1]^{\otimes 2}, [1, 0, -1], [1, 0, 0, 0, 1], \ldots \mid \hat{f'}, \hat{g'}, [1, 0, 1]^{\otimes 2})\\
 \leq {}&\PlHolant([1, 0, 1]^{\otimes 2}, [1, 0, -1], [1, 0, 0, 0, 1], \ldots \mid \hat{f'}, \hat{g'}).
\end{align*}
Note that we have all of $=_{4k}$ on the {\rm LHS}.
Thus by Lemma~\ref{mixing-P-global-binary},
\[
  \PlCSP^2(\hat{f'}, \hat{g'})\leq 
  \PlHolant([1, 0, 1]^{\otimes 2}, [1, 0, -1], [1, 0, 0, 0, 1], \ldots \mid \hat{f'}, \hat{g'}, [1, 0, 1]^{\otimes 2}).
\]
Recall that
$\hat{g'}\in\widehat{\mathscr{M}}\setminus(\mathscr{P}\cup\widetilde{\mathscr{A}})$
and $\hat{f'} \notin \widehat{\mathscr{M}}$.
 Thus we are done by Lemma~\ref{M-even}.
 \qedhere
%
%
\end{itemize}


%
\end{enumerate}
\end{proof}

\subsection{Mixing with \texorpdfstring{$\mathscr{A}^\dagger$}{Adagger}}

In this subsection,
we prove the No-Mixing lemma for $\mathscr{A}^\dagger$ with other tractable sets,
namely the statements No-Mixing-$(3,j)$, for $1 \le j \le 2$.
Because we have already proved Lemma~\ref{mixing-P} and Lemma~\ref{mixing-A},
the No-Mixing lemmas for $S_5 = \mathscr{P}$ and $S_4 = \mathscr{A}$ respectively,
we only need to consider the mixing of $S_3 = \mathscr{A}^\dagger$ with $S_j$ for $1 \le j \le 2$.
Thus we may assume $f \in \mathscr{A}^\dagger$ and $g \in \widetilde{\mathscr{M}} \setminus \mathscr{A}^\dagger$.
Moreover,
we can assume that
$f,g \notin \mathscr{P} \cup \mathscr{A}$.

\begin{lemma} \label{mixing-2-A}
Let $\{f, g\}\subseteq \left( \bigcup_{k=1}^3 S_k \right)
\setminus (S_4\cup S_5)$ and $\{f, g\}\nsubseteq S_j$ for $1\leq j\leq 3$.
Assume that $f\in\mathscr{A}^\dagger$, then $\PlCSP^2(f, g)$ is $\#{\rm P}$-hard.
%
%
\end{lemma}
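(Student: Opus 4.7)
The plan is to reduce to the situation where $f$ is of arity~$4$ in canonical form, and then split on which of $S_1 = \widehat{\mathscr{M}}$ or $S_2 = \widehat{\mathscr{M}}^\dagger$ the signature $g$ lies in. Since all the preceding No-Mixing lemmas have been proved, we may assume $f \in \mathscr{A}^\dagger \setminus (\mathscr{P} \cup \mathscr{A})$ and $g \in \widetilde{\mathscr{M}} \setminus (\mathscr{P} \cup \widetilde{\mathscr{A}})$, with $g \notin \mathscr{A}^\dagger$ (which is automatic from $g \notin \widetilde{\mathscr{A}}$).

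First I would apply Lemma~\ref{mixing-arity-reduction-A} to replace $f$ by some $f' \in \mathscr{A}^\dagger \setminus \mathscr{P}$ of arity~$4$, which by definition has the canonical form $f' = [1,\alpha]^{\otimes 4} + i^r [1,-\alpha]^{\otimes 4}$ for some $r \in \{0,1,2,3\}$ and $\alpha^4 = -1$. Then there are two cases. If $g \in \widehat{\mathscr{M}} \setminus (\mathscr{P} \cup \widetilde{\mathscr{A}})$, I apply Lemma~\ref{M-even} directly to $\mathcal{F} = \{f',g\}$. The conclusion is that $\PlCSP^2(\mathcal{F})$ is \#P-hard unless $\mathcal{F} \subseteq \widehat{\mathscr{M}}$; but this latter possibility is ruled out because $f'$ satisfies a second order recurrence of type $\langle 1, 0, \pm i \rangle$ (by Lemma~\ref{second-recurrence-relation}, since $f' \in \mathscr{A}^\dagger \setminus \mathscr{P}$), while any non-degenerate signature of arity $\ge 3$ in $\widehat{\mathscr{M}}$ has type $\langle 0,1,0 \rangle$ or $\langle 1,c,1\rangle$, and these recurrence types are incompatible with $\langle 1,0,\pm i\rangle$.

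The main obstacle is the second case, $g \in \widehat{\mathscr{M}}^\dagger \setminus (\mathscr{P} \cup \widetilde{\mathscr{A}})$, because there Lemma~\ref{2-M-even} requires us to supply a nonzero binary helper signature $h \ne \lambda[1,0,1]$. I would produce this $h$ by taking $h = \partial(f') = (1+\alpha^2)\bigl[\,(1+i^r),\ (1-i^r)\alpha,\ (1+i^r)\alpha^2\bigr]$. Since $\alpha^2 = \pm i$, the prefactor $1+\alpha^2 = 1 \pm i$ is nonzero, and a quick check over $r \in \{0,1,2,3\}$ shows that $h$ is always a nonzero binary signature: for $r=0$ we get a nonzero multiple of $[1,0,\alpha^2]$ with $\alpha^2 \ne 1$, for $r=2$ a nonzero multiple of $[0,1,0]$, and for $r = 1,3$ the middle entry $(1 \mp i)\alpha$ is nonzero. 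In every case $h$ is not a multiple of $[1,0,1]$. Applying Lemma~\ref{2-M-even} to $\{f',g,h\}$ then gives either \#P-hardness, or $\{f',g,h\} \subseteq \widehat{\mathscr{M}}^\dagger$; in the latter case $f' \in \widehat{\mathscr{M}}^\dagger$ would force $\{f,g\} \subseteq S_2$, contradicting the hypothesis. So $\PlCSP^2(f,g)$ is \#P-hard, completing the proof.
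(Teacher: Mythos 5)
Your overall strategy is the same as the paper's (split on whether $g$ lies in $\widehat{\mathscr{M}}$ or $\widehat{\mathscr{M}}^\dagger$, invoke Lemma~\ref{M-even} in the first case and Lemma~\ref{2-M-even} with a binary helper in the second), but you take an unnecessary detour through Lemma~\ref{mixing-arity-reduction-A} to force $f$ into arity~$4$. The paper works directly with $f$: it computes $\partial^{n-1}(f)$ for $f$ of arbitrary even arity $2n$ (or uses $f$ itself when $f$ is binary $[1,\alpha,-\alpha^2]$), observes the result is never a multiple of $[1,0,1]$, and then applies Lemma~\ref{M-even} or Lemma~\ref{2-M-even} to the original $\{f,g\}$. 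Working with $\{f,g\}$ rather than $\{f',g\}$ means the exceptional cases $\{f,g\}\subseteq\widehat{\mathscr{M}}$ and $\{f,g\}\subseteq\widehat{\mathscr{M}}^\dagger$ are ruled out \emph{for free} by the hypotheses $\{f,g\}\nsubseteq S_1$ and $\{f,g\}\nsubseteq S_2$, with no further argument required.

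This matters because your last step has a genuine gap. You claim ``$f' \in \widehat{\mathscr{M}}^\dagger$ would force $\{f,g\}\subseteq S_2$, contradicting the hypothesis.'' That inference does not hold: Lemma~\ref{mixing-arity-reduction-A} gives a \emph{Turing reduction} $\PlCSP^2(\{f'\}\cup\mathcal{F})\le_T\PlCSP^2(\{f\}\cup\mathcal{F})$, not any transfer of tractable-class membership from $f'$ back to $f$. To rule out $\{f',g,h\}\subseteq\widehat{\mathscr{M}}^\dagger$ you must argue directly that $f'\notin\widehat{\mathscr{M}}^\dagger$. Fortunately the same recurrence-type argument you used for $\widehat{\mathscr{M}}$ works here: $f'\in\mathscr{A}^\dagger\setminus\mathscr{P}$ is non-degenerate of arity~$4$, hence by Lemma~\ref{second-recurrence-relation} has the unique type $\langle 1,0,\pm i\rangle$, while any non-degenerate $\widehat{\mathscr{M}}^\dagger$-signature of arity $\ge 3$ has type $\langle 0,1,0\rangle$ or $\langle 1,c,-1\rangle$; since $(1,0,\pm i)$ is not a nonzero scalar multiple of either, $f'\notin\widehat{\mathscr{M}}^\dagger$. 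With that correction your proof is sound, though noticeably longer than the paper's.
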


\begin{proof}
Firstly, we have $f\in\mathscr{A}^\dagger\setminus\mathscr{P}$,
thus $f\in\{[1, \alpha, -\alpha^2], [1, \alpha]^{\otimes 2n}+i^r[1, -\alpha]^{\otimes 2n} \mid 2n\geq 4\}$
(see Figure~\ref{fig:venn_diagram:A_Adagger_P}).
Clearly $[1, \alpha, -\alpha^2]$ is not $\lambda[1, 0, 1]$.
If $f=[1, \alpha]^{\otimes 2n}+i^r[1, -\alpha]^{\otimes 2n}$, then we have
$\partial^{n-1}(f)=(1+\alpha^2)^{n-1}\{[1, \alpha]^{\otimes 2}+i^r[1, -\alpha]^{\otimes 2}\}=(1+\alpha^2)^{n-1}[1+i^r, (1-i^r)\alpha, (1+i^r)\alpha^2]$
which is not $\lambda[1, 0, 1]$.
Hence we can always
obtain a nonzero binary signature that is not $\lambda[1, 0, 1]$ from $f$.

Note that $g\in\widetilde{\mathscr{M}}\setminus(\mathscr{P}\cup\widetilde{\mathscr{A}})$.
If $g\in\widehat{\mathscr{M}}\setminus(\mathscr{P}\cup\widetilde{\mathscr{A}})$, we are done by Lemma \ref{M-even}.
For $g\in\widehat{\mathscr{M}}^\dagger\setminus(\mathscr{P}\cup\widetilde{\mathscr{A}})$,
since we  have a nonzero binary signature that is not $\lambda[1, 0, 1]$, we are done by
Lemma \ref{2-M-even}.
\end{proof}

\subsection{Mixing with \texorpdfstring{$\widehat{\mathscr{M}}$}{Mhat}}

In this subsection,
we prove the No-Mixing lemma for $\widehat{\mathscr{M}}$ with other tractable sets.
Because we have already proved Lemma~\ref{mixing-P}, Lemma~\ref{mixing-A},  and Lemma~\ref{mixing-2-A},
the No-Mixing lemmas for $S_5 = \mathscr{P}$, $S_4 = \mathscr{A}$, and $S_3 = \mathscr{A}^\dagger$ respectively,
we only need to consider the No-Mixing of $S_2 = \widehat{\mathscr{M}}$ with $S_1 = \widehat{\mathscr{M}}^\dagger$.

\begin{lemma}\label{mixing-M-hat}
Let $\{f, g\}\subseteq \left(\bigcup_{k=1}^2 S_k \right)
\setminus (S_3\cup S_4\cup S_5)$ and $\{f, g\}\nsubseteq S_j$ for $1\leq j\leq 2$.
Then $\PlCSP^2(f, g)$ is $\#{\rm P}$-hard.
%
\end{lemma}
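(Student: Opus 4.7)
The plan is to observe that this lemma is essentially a direct corollary of Lemma~\ref{M-even}, which was the payoff of the cyclotomic--field argument in Section~\ref{PartII.secE.cyclcomic}. Under the hypotheses, $\{f,g\}$ is disjoint from $S_3\cup S_4\cup S_5 = \mathscr{A}^\dagger\cup\mathscr{A}\cup\mathscr{P}$, so neither $f$ nor $g$ lies in $\mathscr{P}\cup\widetilde{\mathscr{A}}$; and $\{f,g\}\subseteq\widehat{\mathscr{M}}\cup\widehat{\mathscr{M}}^\dagger$ but is contained in neither set alone. Thus, after possibly interchanging names, I may assume $f\in\widehat{\mathscr{M}}\setminus\widehat{\mathscr{M}}^\dagger$ and $g\in\widehat{\mathscr{M}}^\dagger\setminus\widehat{\mathscr{M}}$, and since both are of even arity (the standing assumption of this section), the hypothesis $f\in\widehat{\mathscr{M}}\setminus(\mathscr{P}\cup\widetilde{\mathscr{A}})$ of Lemma~\ref{M-even} is met.

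Applying Lemma~\ref{M-even} with $\mathcal{F}=\{f,g\}$ gives the dichotomy: either $\mathcal{F}\subseteq\widehat{\mathscr{M}}$, or $\PlCSP^2(f,g)$ is $\#$P-hard. The first alternative is ruled out by $g\notin\widehat{\mathscr{M}}$, so $\PlCSP^2(f,g)$ is $\#$P-hard, as required.

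There is no genuine obstacle to overcome here, because all the real work has already been done: Lemma~\ref{[1,a,1]-construction} extracts a binary $[1,a,1]$ with $a^4\notin\{0,1\}$ from $f\in\widehat{\mathscr{M}}\setminus(\mathscr{P}\cup\widetilde{\mathscr{A}})$; Lemma~\ref{[1,a,1]-interpolation}, via the cyclotomic-field argument, upgrades this to $[1,1]^{\otimes 2}$; and Lemma~\ref{mixing-P-global} then simulates $\PlCSP(\mathcal{F})$ by $\PlCSP^2(\mathcal{F})$ so that Theorem~\ref{pl-dicho-1} can be invoked. The asymmetry between $\widehat{\mathscr{M}}$ and $\widehat{\mathscr{M}}^\dagger$ noted in Remark~\ref{rmk:E2:1}---namely, that one cannot in general construct $[1,\omega]^{\otimes 2}$ inside $\PlCSP^2$ when starting from $\widehat{\mathscr{M}}^\dagger$---is precisely the reason the lemma is deduced from the $\widehat{\mathscr{M}}$ side rather than the $\widehat{\mathscr{M}}^\dagger$ side; luckily, the WLOG choice above always lets us sit on the favorable side.
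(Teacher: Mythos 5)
Your proof is correct and takes essentially the same approach as the paper: both locate a signature in $\widehat{\mathscr{M}}\setminus(\mathscr{P}\cup\widetilde{\mathscr{A}})$ and then invoke Lemma~\ref{M-even}, discharging the tractable alternative $\{f,g\}\subseteq\widehat{\mathscr{M}}$ via the hypothesis $\{f,g\}\nsubseteq S_1$. The paper's version is simply terser; it does not bother to spell out the WLOG decomposition into $\widehat{\mathscr{M}}\setminus\widehat{\mathscr{M}}^\dagger$ and $\widehat{\mathscr{M}}^\dagger\setminus\widehat{\mathscr{M}}$, though that is an accurate refinement.
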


\begin{proof}
Either $f$ or $g \in \widehat{\mathscr{M}}$, otherwise 
$\{f, g\}\subseteq \widehat{\mathscr{M}}^\dagger$.
As they do not belong to $S_3\cup S_4\cup S_5 = 
\mathscr{P}\cup\widetilde{\mathscr{A}}$,
we have a signature in 
$\widehat{\mathscr{M}}\setminus(\mathscr{P}\cup\widetilde{\mathscr{A}})$.
Thus we are done by Lemma \ref{M-even}.
\end{proof}

By Lemma \ref{mixing-P}, Lemma \ref{mixing-A}, Lemma \ref{mixing-2-A}
and Lemma \ref{mixing-M-hat},
we have the following No-Mixing theorem for  two signatures with even arities.
\begin{theorem}\label{mixing-theorem-2}
Let $f$ and $g$ be two symmetric signatures of even arity.
If $\{f,g\} \subseteq \bigcup_{k=1}^5 S_k$
and $\{f, g\}\nsubseteq S_j$ for $1\leq j\leq 5$,
then $\PlCSP^2(f, g)$ is $\#{\rm P}$-hard.
\end{theorem}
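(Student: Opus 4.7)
The plan is to derive Theorem~\ref{mixing-theorem-2} as a direct consequence of the four No-Mixing lemmas already established in this section: Lemma~\ref{mixing-P}, Lemma~\ref{mixing-A}, Lemma~\ref{mixing-2-A}, and Lemma~\ref{mixing-M-hat}, which handle the cases when one of the signatures lies in $S_5 = \mathscr{P}$, $S_4 = \mathscr{A}$, $S_3 = \mathscr{A}^\dagger$, or $S_2 = \widehat{\mathscr{M}}$ respectively. Since the hypothesis of the theorem is symmetric in $f$ and $g$, I may swap their names freely throughout.

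First I would invoke the observation highlighted at the start of Section~\ref{PartII.secF.No-Mixing-of-pairs}: let $i$ be the largest index in $\{1,\dots,5\}$ with $S_i \cap \{f,g\} \neq \emptyset$, and after possibly renaming assume $f \in S_i$. By the assumption $\{f,g\}\nsubseteq S_j$ for every $j$, there must exist some $j \neq i$ with $g \in S_j \setminus S_i$ and $f \in S_i \setminus S_j$; by the maximality of $i$, necessarily $j < i$. The case $i = 1$ is vacuous because it would force $\{f,g\} \subseteq S_1$, contradicting the hypothesis. Hence $i \in \{2,3,4,5\}$.

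Next I would do a case split on $i$ and invoke the corresponding lemma. When $i = 5$, Lemma~\ref{mixing-P} applies directly with $f \in \mathscr{P}$. When $i = 4$, the maximality of $i$ forces $f,g \notin S_5 = \mathscr{P}$, so the pair $(f,g)$ satisfies the stronger premise of Lemma~\ref{mixing-A}, namely $\{f,g\} \subseteq (\bigcup_{k=1}^4 S_k)\setminus S_5$ with $f \in \mathscr{A}$. The same pattern handles $i = 3$ by Lemma~\ref{mixing-2-A} (where additionally $f,g \notin S_4 \cup S_5 = \mathscr{P}\cup\mathscr{A}$ by maximality, matching the hypothesis that $\{f,g\} \subseteq (\bigcup_{k=1}^{3} S_k)\setminus(S_4 \cup S_5)$) and $i = 2$ by Lemma~\ref{mixing-M-hat}. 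In every case the conclusion that $\PlCSP^2(f,g)$ is $\#{\rm P}$-hard is immediate.

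There is essentially no technical obstacle at this stage, since all the real work lies in the preceding lemmas; the theorem is just their assembly. The only point that requires care is the bookkeeping verification that the additional exclusions imposed by the maximality of $i$ coincide exactly with the hypotheses of the corresponding lemma. This match is not accidental: it is precisely what motivated proving the No-Mixing statements in reverse lexicographic order of $(i,j)$, as explained at the beginning of the section.
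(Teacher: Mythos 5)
Your proposal is correct and follows essentially the same approach as the paper: the paper's proof of this theorem is a one-line assembly of Lemmas~\ref{mixing-P}, \ref{mixing-A}, \ref{mixing-2-A}, and \ref{mixing-M-hat}, and the maximal-index/case-split bookkeeping you spell out is exactly the logic the paper explains informally at the start of Section~\ref{PartII.secF.No-Mixing-of-pairs} (its discussion of the No-Mixing-$(i,j)$ ordering and the "additional assumption" one may make after earlier statements are proved).
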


%

\section{No-Mixing of Even Arity Signature Set} \label{PartII.secG.No-Mixing-of-sets}

In this section,
we extend Theorem~\ref{mixing-theorem-2},
the No-Mixing theorem for a pair of two sigatures of even arity, 
to Theorem~\ref{mixing-theorem},
the No-Mixing theorem for a set of signatures of even arity.
For convenience,
we explicitly list some signature sets that are be used in the proof of Theorem~\ref{mixing-theorem}.

\begin{lemma} \label{mixing-clarifying-signature-set}
For nonzero even arity signatures,
ignoring a nonzero factor, we have
\begin{enumerate}
\item $\mathscr{A}^\dagger\cap(\mathscr{P}\cup\mathscr{A})$ is
the set 
\[\{ [1, \alpha]^{\otimes 2n},
[1, 0]^{\otimes 2n}, [0, 1]^{\otimes 2n}, [0, 1, 0],
 [1, 0, \ldots, 0, i^r] ~\mid~ n\geq 1,~ 0 \le r \le 3 \}.\]
\item $\widehat{\mathscr{M}}\cap(\mathscr{P}\cup\widetilde{\mathscr{A}})$
is the set 
\[\{[1, \pm 1]^{\otimes 2m}, [0, 1, 0],
[1, \pm i, 1], [1, 0, \ldots, 0, \pm 1], [1, i]^{\otimes 2n}\pm[1, -i]^{\otimes 2n} ~\mid~ m\geq 1,~ n\geq 2\}.\]
\item $\widehat{\mathscr{M}}^\dagger\cap(\mathscr{P}\cup\widetilde{\mathscr{A}})$ is the set 
\[
\{
[1, \pm i]^{\otimes 2m}, [0, 1, 0],
[1, \pm 1, -1],  [1, 0, \ldots, 0, \pm 1], [1, 1]^{\otimes 2n}\pm[1, -1]^{\otimes 2n} ~\mid~  m\geq 1,~ n\geq 2\}.\]
\item ${\displaystyle \bigcap_{3 \le k \le 5}} S_k$ is
the set
\(\{[1, 0]^{\otimes 2n}, [0, 1]^{\otimes 2n}, [0, 1, 0], [1, 0, \ldots, 0, i^r] ~\mid~  n\geq 1, ~ 0 \le r \le 3 \}\).
\item ${\displaystyle \bigcap_{1 \le k \le 5}} S_k
= {\displaystyle \bigcap_{2 \le k \le 5}} S_k$ is the set
\(\{[0, 1, 0], [1, 0, \ldots, 0, \pm 1]\}\).
\end{enumerate}
\end{lemma}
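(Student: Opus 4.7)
The plan is to verify each of the five explicit descriptions by combining the parametric characterizations of the five tractable classes already recorded in this part of the paper with the standard ``stratify-by-arity and by degeneracy'' strategy. The main tools will be Lemma~\ref{binary} (membership of the classes for binary signatures), Lemma~\ref{M-2-M-NOT-IN-A-AND-P} (which precisely lists the non-degenerate elements of $\widehat{\mathscr{M}} \setminus (\mathscr{P} \cup \widetilde{\mathscr{A}})$ and $\widehat{\mathscr{M}}^\dagger \setminus (\mathscr{P} \cup \widetilde{\mathscr{A}})$), Lemma~\ref{arity-hat-M} (the characterization of gen-eqs in $\widehat{\mathscr{M}}$), and Corollaries~\ref{A-2-A-minus-P} and \ref{M-2-M-A-P} (which forbid simultaneous membership of a non-degenerate arity $\geq 3$ signature in $\mathscr{A} \setminus \mathscr{P}$ and $\mathscr{A}^\dagger$, respectively in $\widehat{\mathscr{M}} \setminus (\mathscr{P} \cup \widetilde{\mathscr{A}})$ and $\widehat{\mathscr{M}}^\dagger$).

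For Part~1, I would first apply Corollary~\ref{A-2-A-minus-P} to the non-degenerate arity $\geq 4$ slice: any such element of $\mathscr{A}^\dagger \cap (\mathscr{P} \cup \mathscr{A})$ must in fact lie in $\mathscr{P}$, whereas the non-degenerate arity $\geq 4$ elements of $\mathscr{A}^\dagger$ have the form $[1,\alpha]^{\otimes 2n} + i^r[1,-\alpha]^{\otimes 2n}$ and are never in $\mathscr{P}$ (they fail the parity/type test of Lemma~\ref{widetilde-affine-and-P-parity}). This kills the non-degenerate arity $\geq 4$ contribution; Lemma~\ref{binary} applied to $[a,b]^{\otimes 2}$ then picks out the surviving degenerate shapes $[1,\alpha]^{\otimes 2n}$ (from $a^4 = -b^4 \neq 0$) and $[1,0]^{\otimes 2n}$, $[0,1]^{\otimes 2n}$ (from $ab=0$), and the binary non-degenerate ones are read off directly from Lemma~\ref{binary} as $[0,1,0]$ and $[1,0,i^r]$. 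Parts~2 and~3 proceed analogously and are in fact equivalent under the transformation $\operatorname{diag}(1,i)$ which identifies $\widehat{\mathscr{M}}$ with $\widehat{\mathscr{M}}^\dagger$ while permuting the other sets among themselves. For each of them I would use Lemma~\ref{M-2-M-NOT-IN-A-AND-P} to discard signatures of the shapes $[s,t]^{\otimes 2n} \pm [t,s]^{\otimes 2n}$ with $st \neq 0 \neq s^4 - t^4$ and $f_k = \lambda(\pm 1)^k(n-2k)$ (respectively $f_k = \lambda(\pm i)^k(n-2k)$), then retain only the ``boundary'' instances $st = 0$ or $s^4 = t^4$ which collapse, respectively, to the gen-eqs $[1,0,\ldots,0,\pm 1]$ of Lemma~\ref{arity-hat-M} and to the surviving tensor forms $[1,\pm 1]^{\otimes 2n}$ and $[1,i]^{\otimes 2n} \pm [1,-i]^{\otimes 2n}$ (and their diag$(1,i)$ images for $\widehat{\mathscr{M}}^\dagger$); the binary contributions come directly from Lemma~\ref{binary}.

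Parts~4 and~5 are bookkeeping consequences of the first three. For Part~4 I intersect the Part~1 list with $\mathscr{A}$: by Lemma~\ref{binary} the degenerate item $[1,\alpha]^{\otimes 2n}$ drops out because $\alpha^4 = -1 \neq 1$ violates the $\mathscr{A}$-criterion for $[1,\alpha]^{\otimes 2}$, whereas the gen-eqs $[1,0,i^r]$, the binary $[0,1,0]$, and the one-sided tensor powers $[1,0]^{\otimes 2n}$, $[0,1]^{\otimes 2n}$ all survive since they are affine for all $r$. For Part~5 I further intersect with $S_1 = \widehat{\mathscr{M}}$ and $S_2 = \widehat{\mathscr{M}}^\dagger$: Lemma~\ref{arity-hat-M} says a gen-eq $[a, 0, \ldots, 0, b]$ lies in $\widehat{\mathscr{M}}$ only when $b = \pm a \neq 0$, which forces $i^r = \pm 1$ and simultaneously rules out $[1,0]^{\otimes 2n}$ and $[0,1]^{\otimes 2n}$ for $n \geq 1$; the analogous statement for $\widehat{\mathscr{M}}^\dagger$ is obtained by transporting via $\operatorname{diag}(1,i)$. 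This isolates exactly the two families $[0,1,0]$ and $[1,0,\ldots,0,\pm 1]$. The extra claim $\bigcap_{1 \le k \le 5} S_k = \bigcap_{2 \le k \le 5} S_k$ then reduces to checking that $[0,1,0]$ and $[1,0,\ldots,0,\pm 1]$ all lie in $\widehat{\mathscr{M}}$, which is immediate from Lemma~\ref{arity-hat-M}.

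The main obstacle is the case analysis inside Parts~2 and~3 of the boundary equation $s^4 = t^4$ with $st \neq 0$ in $[s,t]^{\otimes 2n} \pm [t,s]^{\otimes 2n}$: each of the four subcases $s = t$, $s = -t$, $s = it$, $s = -it$ must be rewritten, respectively as $[1,1]^{\otimes 2n}$ (or $0$), $[1,-1]^{\otimes 2n}$ (or $0$), and $\pm$-combinations of $[1,i]^{\otimes 2n}$ and $[1,-i]^{\otimes 2n}$, with careful tracking of the overall $\pm$ sign and scalar to ensure that the list is complete and non-redundant. One must also verify that at $n=1$ the family $[1,i]^{\otimes 2n} \pm [1,-i]^{\otimes 2n}$ collapses to $2[1,0,-1]$ and $2i[0,1,0]$ already enumerated among the binary items, which explains the $n \geq 2$ restriction in the Part~2 list (and symmetrically $n \geq 2$ on $[1,1]^{\otimes 2n} \pm [1,-1]^{\otimes 2n}$ in Part~3). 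Once this bookkeeping is done, all five descriptions follow by direct comparison.
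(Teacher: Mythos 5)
Your plan for Parts~2, 3, 5 (discard via Lemma~\ref{M-2-M-NOT-IN-A-AND-P}, retain the boundary cases $st=0$ or $s^4=t^4$ of the tensor-sum form, read off binaries from Lemma~\ref{binary}, transport via $\operatorname{diag}(1,i)$) is sound and differs from the paper only in that you argue from the parametric description rather than the second-order recurrence type; both routes work.

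However there is a genuine gap in Part~1 that propagates to Part~4. You assert that ``the non-degenerate arity $\geq 4$ elements of $\mathscr{A}^\dagger$ have the form $[1,\alpha]^{\otimes 2n}+i^r[1,-\alpha]^{\otimes 2n}$ \dots\ this kills the non-degenerate arity $\geq 4$ contribution.'' That parametric form only describes $\mathscr{A}^\dagger\setminus\mathscr{P}$, not all of $\mathscr{A}^\dagger$. In particular the non-degenerate gen-eq $[1,0,\ldots,0,i^r]$ of any even arity $2n\geq 4$ lies in $\mathscr{P}\cap\mathscr{A}\cap\mathscr{A}^\dagger$ (since $\operatorname{diag}(1,\alpha^{-1})^{\otimes 2n}[1,0,\ldots,0,i^r]=[1,0,\ldots,0,(\mp i)^n i^r]\in\mathscr{A}$), and the target set in the statement explicitly lists $[1,0,\ldots,0,i^r]$ for every even arity, not just $n=1$. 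Once you reduce to $f\in\mathscr{P}$ via Corollary~\ref{A-2-A-minus-P} you still need to identify $\mathscr{A}^\dagger\cap\mathscr{P}$ at arity $\geq 4$; the paper does this by noting that a non-degenerate $f\in\mathscr{A}^\dagger$ of arity $\geq 4$ has type $\langle 0,1,0\rangle$ or $\langle 1,0,\pm i\rangle$ (Lemma~\ref{second-recurrence-relation}), excluding the latter, and in the former case $f=[1,0,\ldots,0,x]$ with $x^4=1$ forced by even arity. Your Part~4 inherits this omission, writing only the binary gen-eq $[1,0,i^r]$ instead of the full family $[1,0,\ldots,0,i^r]$.
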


\begin{proof}
For all five cases, it is easy to show that the listed signatures
 in the displayed set are indeed members of the respective
stated intersection,
bear in mind that the signatures all have even arity.
E.g., the signature $f = [1, 0, \ldots, 0, i^r]$ is clearly in $\mathscr{P}$
(as well as $\mathscr{A}$), and it has  even arity $2n$, and thus
under the transformation
$T= \left[
\begin{smallmatrix}
1 & 0\\
0 & \alpha
\end{smallmatrix}
\right]$,
$(T^{-1})^{2n} f = [1, 0, \ldots, 0, i^s]
\in \mathscr{A}$, for some $0 \le s \le 3$.
 Thus $f \in \mathscr{A}^\dagger$.

In the following, we prove that if $f$ has even arity and is
in the stated intersection then it is among the listed types.
\begin{enumerate}
\item
\subitem{(a.)}
Firstly, suppose that $f\in\mathscr{A}^\dagger\cap(\mathscr{P}\cup\mathscr{A})$ is degenerate, i.e., $f = [a,b]^{2n}$.
If $f = [1,0]^{2n}$ or $[0,1]^{2n}$ up to a nonzero  scalar,
 then $f$ is among the listed.
Suppose $ab \not =0$. Then up to a nonzero scalar, $f = [1, \omega]^{2n}$,
for some $\omega \neq 0$.
By $f\in\mathscr{A}^\dagger$, we have
$\left[
\begin{smallmatrix}
1 & 0\\
0 & \alpha
\end{smallmatrix}
\right]^{2n} f = [1, \alpha \omega]^{2n} \in \mathscr{A}$.
Thus
 $(\alpha \omega)^4 =1$, i.e., $\omega^4 = -1$.
 So $f$ is among the listed types.

\subitem{(b.)}
If $f\in\mathscr{A}^\dagger\cap(\mathscr{P}\cup\mathscr{A})$ is a non-degenerate binary signature,
 by $f\in\mathscr{A}^{\dagger}$ and Lemma~\ref{binary}, we have
$f=[1, \alpha, -\alpha^2]$, or
 $[0, 1, 0]$, or $[1, 0, \rho]$ up to a scalar, where
$\alpha^4 =-1, \rho^4 =1$.
 Note that $[1, \alpha, -\alpha^2]\notin\mathscr{P}\cup\mathscr{A}$
by Corollary~\ref{binary-necessary}.
 Thus $f=[0, 1, 0]$ or $[1, 0, \rho]$; these are among the listed types.

\subitem{(c.)}
 If $f\in\mathscr{A}^\dagger\cap(\mathscr{P}\cup\mathscr{A})$
  is non-degenerate and and has arity $2n \ge 4$, by
$f\in\mathscr{A}^{\dagger}$ and  Lemma~\ref{second-recurrence-relation},
 $f$ has type $\langle 0, 1, 0\rangle$ or $\langle 1, 0, \pm i\rangle$
 and the second order recurrence relation is unique up to a scalar.
 If $f$ has type $\langle 1, 0, \pm i\rangle$, then $f\notin\mathscr{P}\cup\mathscr{A}$ by Lemma~\ref{second-recurrence-relation}.
 This contradicts that $f\in\mathscr{A}^\dagger\cap(\mathscr{P}\cup\mathscr{A})$.
 If $f$ has type $\langle 0, 1, 0\rangle$, then $f=[1, 0, \ldots, 0, x]$
 with $x\neq 0$ up to a nonzero scalar, because $f$ is non-degenerate.
 Moreover, if $x^4\neq 1$, 
bear in mind that $f$ has even arity,
then $f\notin\mathscr{A}^{\dagger}$ and
this 
 contradicts that $f\in\mathscr{A}^\dagger\cap(\mathscr{P}\cup\mathscr{A})$.
 Hence $x^4=1$ and $f=[1, 0, \ldots, 0, i^r]$, for some $0 \le r \le 3$;
this is among the listed types.

 Summarizing, we proved that if $f\in\mathscr{A}^\dagger\cap(\mathscr{P}\cup\mathscr{A})$  then $f$
 is among the listed types.

\item
\subitem{(a.)}
Suppose $f\in\widehat{\mathscr{M}}\cap(\mathscr{P}\cup\widetilde{\mathscr{A}})$ is a nonzero degenerate signature, i.e.,
$f=[a, b]^{\otimes 2n}$.
By $f \in \widehat{\mathscr{M}}$ we have
$\left[\begin{smallmatrix} 1 & 1 \\
1 & -1 \end{smallmatrix}\right]^{\otimes 2n} f
= [a+b, a-b]^{\otimes 2n} \in \mathscr{M}$,
which must satisty the parity constraints. Thus $a = \pm b$
and $f= [1, \pm 1]^{\otimes 2n}$
up to a nonzero scalar. 

\subitem{(b.)}
If $f\in\widehat{\mathscr{M}}\cap(\mathscr{P}\cup\widetilde{\mathscr{A}})$ is a non-degenerate binary signature, by
 $f\in\widehat{\mathscr{M}}$ and  Lemma~\ref{binary}, we have
$f=[0, 1, 0]$, or $[1, b, 1]$,  or $[1, 0, -1]$ up to a
nonzero scalar.
If  $f=[1, b, 1]$ and $b^4\neq 0, 1$, 
then $f\notin\mathscr{P}\cup\widetilde{\mathscr{A}}$,
by Corollary~\ref{binary-necessary}.
This contradicts that $f\in\widehat{\mathscr{M}}\cap(\mathscr{P}\cup\widetilde{\mathscr{A}})$.
 Thus,  $f=[0, 1, 0]$, $[1,0,1]$,  $[1, i^r, 1]$ or $[1, 0, -1]$,
where $0 \le r \le 3$.
Note that  if $r = 0$ or $2$, then $[1, i^r, 1]=[1, \pm 1, 1]
=[1, \pm 1]^{\otimes 2}$.
Thus all these binary signatures are in the listed types. 

\subitem{(c.)}
 If $f\in\widehat{\mathscr{M}}\cap(\mathscr{P}\cup\widetilde{\mathscr{A}})$
  is non-degenerate and has arity 
$2n \ge 4$, by $f\in\widehat{\mathscr{M}}$ and 
Lemma~\ref{second-recurrence-relation},
 $f$ has type $\langle 0, 1, 0\rangle$ or $\langle 1, c, 1\rangle$,
 and the second order recurrence relation is unique up to a scalar.
 If $f$ has type $\langle 1, c, 1\rangle$ with $c\neq 0$,
 then $f\notin\mathscr{P}\cup\widetilde{\mathscr{A}}$ by Lemma~\ref{second-recurrence-relation}
 and this
contradicts that $f\in\widehat{\mathscr{M}}\cap(\mathscr{P}\cup\widetilde{\mathscr{A}})$.
 If $f$ has type $\langle 1, 0, 1\rangle$, then there exist constants $x$
 and $y$ such that
 $f=x[1, i]^{\otimes 2n}+y[1, -i]^{\otimes 2n}$.
By non-degeneracy, we get $xy\neq 0$,
and by its type $\langle 1, 0, 1\rangle$, $f \not \in \mathscr{P}$
by Lemma~\ref{second-recurrence-relation}.
Thus $f \in \widetilde{\mathscr{A}}$.
In fact by Lemma~\ref{second-recurrence-relation} and its type
$\langle 1, 0, 1\rangle$, $f \not \in 
\mathscr{A}^\dagger\setminus\mathscr{P}$, thus it follows that
$f \in \mathscr{A}\setminus\mathscr{P}$.
Then there are two possiblities:
Either $f = 
\left[\begin{smallmatrix} 1 & 1 \\
1 & -1 \end{smallmatrix}\right]^{\otimes 2n}
\left\{[1, 0]^{\otimes 2n}+i^r [0, 1]^{\otimes 2n} \right\}$,
or
$f = \left[\begin{smallmatrix} 1 & 1 \\
i & -i \end{smallmatrix}\right]^{\otimes 2n}
\left\{[1, 0]^{\otimes 2n}+i^r [0, 1]^{\otimes 2n} \right\}$,
up to a nonzero scalar,
where $0 \le r \le 3$.
By $\left[\begin{smallmatrix} 1 & 1 \\
1 & -1 \end{smallmatrix}\right]^{-1} Z =  \frac{1}{2}Z 
\left[\begin{smallmatrix} 0 & {1-i} \\
{1+i} & 0 \end{smallmatrix}\right]$
the first possiblity quickly reaches a contradiction.
Thus $f=[1, i]^{\otimes 2n} + i^r [1, -i]^{\otimes 2n}$
up to a nonzero scalar, for some $0 \le r \le 3$.
 If $f=[1, i]^{\otimes 2n}\pm i[1, -i]^{\otimes 2n}$, 
then $\left[\begin{smallmatrix} 1 & 1 \\
1 & -1 \end{smallmatrix}\right]^{\otimes 2n} f$
is a nonzero multiple of the form
$[1, i]^{\otimes 2n}\pm i[1, -i]^{\otimes 2n}$, which does not
satisfy parity, and hence not in $\mathscr{M}$.
 So $f$ is not in $\widehat{\mathscr{M}}$.
 Hence $f=[1, i]^{\otimes 2n}\pm [1, -i]^{\otimes 2n}$,
which is among the listed types.

 If $f$ has type $\langle 0, 1, 0\rangle$, then $f=[1, 0, \ldots, 0, x]$ with $x\neq 0$, up to a nonzero scalar.
By $f \in \widehat{\mathscr{M}}$ and Lemma~\ref{arity-hat-M}, we have
$x^2 =1$.
 Thus $f=[1, 0, \ldots, 0, \pm 1]$, which is among the listed types.

Summarizing,  we proved
 that if $f\in\widehat{\mathscr{M}}\cap(\mathscr{P}\cup\widetilde{\mathscr{A}})$,
 then $f$ is among the listed types.

\item Note that $\mathscr{P}\cup\widetilde{\mathscr{A}}$ is unchanged under the transformation by
$\left[
\begin{smallmatrix}
1 & 0\\
0 & i
\end{smallmatrix}
\right]$.
Thus
\[
 \widehat{\mathscr{M}}^{\dagger} \cap (\mathscr{P} \cup \widetilde{\mathscr{A}})
 =
 \trans{1}{0}{0}{i}
 \left\{
  \widehat{\mathscr{M}} \cap (\mathscr{P} \cup \widetilde{\mathscr{A}})
 \right\}.
\]
Then the proof of this case follows 
from the previous case by a transformation using
$\left[
\begin{smallmatrix}
1 & 0\\
0 & i
\end{smallmatrix}
\right]$.

\item
If $f\in\bigcap_{k=3}^5 S_k$, then a fortiori, $f\in\mathscr{A}^\dagger\cap(\mathscr{P}\cup\mathscr{A})$.
This implies that \[f\in\{[1, \alpha]^{\otimes 2n},
[1, 0]^{\otimes 2n}, [0, 1]^{\otimes 2n}, [0, 1, 0],
 [1, 0, \ldots, 0, i^r]~\mid~ n \ge 1,~ 0 \le r \le 3\}.\]
 Note that  $[1, \alpha]^{\otimes 2m}\notin\mathscr{A}$.
 Thus $f=[1, 0]^{\otimes 2n}$, or $[0, 1]^{\otimes 2n}$, or $[0, 1, 0]$,
 or $[1, 0, \ldots, 0, i^r]$. All of these four types are among the listed.

\item
We already have
\[ \left\{[0, 1, 0], [1, 0, \ldots, 0, \pm 1]\right\}
\subseteq \bigcap_{1 \le k \le 5} S_k \subseteq   
 \bigcap_{2 \le k \le 5} S_k.\]

 If $f\in\bigcap_{k=2}^5 S_k$, then $f\in\bigcap_{k=3}^5 S_k$.
This implies that
\[f\in\{
[1, 0]^{\otimes 2n}, [0, 1]^{\otimes 2n}, [0, 1, 0],
 [1, 0, \ldots, 0, i^r]~\mid~ n \ge 1,~ 0 \le r \le 3\}.\]
 Moreover,  if $f = [1, 0]^{\otimes 2n}$,
or $[0, 1]^{\otimes 2n}$ or $[1, 0, \dots, 0, \pm i]$,
then $f \notin\widehat{\mathscr{M}}$, because
$\left[\begin{smallmatrix} 1 & 1 \\
1 & -1 \end{smallmatrix}\right]^{\otimes 2n} f$
does not satisfy parity constraints.
  Hence $f=
 [0, 1, 0]$, or
 $[1, 0, \ldots, 0, \pm 1]$,  and both types are among the listed.
\end{enumerate}
\end{proof}



We state the following simple lemma which allows us
to replace a signature set $\mathcal{F}$ in the proof of 
the No-Mixing Theorem
by a smaller set $\mathcal{F}'$ that subtracts from $\mathcal{F}$
those signatures that belong to all common tractable signature sets.

%

\begin{lemma}\label{mixing-replacing-by}
Let  $\mathcal{F}$ be a set of symmetric signatures such that
for all $1 \le k \le 5$, $\mathcal{F} \not \subseteq S_k$.
Let $\mathcal{F}'= \mathcal{F} \setminus (\bigcap_{k=1}^{5}S_k)$.
Then for all $1 \le k \le 5$, $\mathcal{F}' \not \subseteq S_k$
and $\PlCSP^2(\mathcal{F}')\leq$
$\PlCSP^2(\mathcal{F})$.
\end{lemma}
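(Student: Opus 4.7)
The plan is to verify the two assertions separately, both of which follow almost directly from the set-theoretic definition of $\mathcal{F}'$.

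First I would dispose of the reduction $\PlCSP^2(\mathcal{F}') \le_T \PlCSP^2(\mathcal{F})$. Since $\mathcal{F}' = \mathcal{F} \setminus (\bigcap_{k=1}^5 S_k) \subseteq \mathcal{F}$, any signature grid that is a valid instance of $\PlCSP^2(\mathcal{F}')$ uses only signatures in $\mathcal{F}$ and is therefore already a valid instance of $\PlCSP^2(\mathcal{F})$ with the same Holant value. Hence the reduction is the identity, and no gadgetry is needed.

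The substantive assertion is that $\mathcal{F}' \not\subseteq S_k$ for every $1 \le k \le 5$. I would argue by contradiction: suppose $\mathcal{F}' \subseteq S_k$ for some fixed $k$. By the definition of $\mathcal{F}'$, we have the disjoint decomposition
\[
 \mathcal{F} \;=\; \mathcal{F}' \;\sqcup\; \Bigl(\mathcal{F} \cap \bigcap_{j=1}^5 S_j\Bigr).
\]
Every signature in the second piece lies in $\bigcap_{j=1}^5 S_j \subseteq S_k$, and every signature in $\mathcal{F}'$ lies in $S_k$ by assumption. Combining, $\mathcal{F} \subseteq S_k$, contradicting the hypothesis that $\mathcal{F} \not\subseteq S_k$ for every $k$. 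This contradiction completes the proof.

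There is no real obstacle here; the lemma is a bookkeeping statement whose only content is that the signatures removed from $\mathcal{F}$ to form $\mathcal{F}'$ belong to every $S_k$ and therefore cannot be the cause of $\mathcal{F}$ failing to be contained in any single $S_k$. Its utility in the proof of the full No-Mixing theorem is that it allows one to assume, without loss of generality, that no signature in the set under consideration lies in $\bigcap_{k=1}^5 S_k$, which by part~5 of Lemma~\ref{mixing-clarifying-signature-set} restricts attention to signatures other than (scalar multiples of) $[0,1,0]$ and $[1,0,\ldots,0,\pm 1]$.
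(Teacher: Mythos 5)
Your proof is correct and follows the same contradiction argument as the paper, merely spelling out the decomposition $\mathcal{F} = \mathcal{F}' \sqcup (\mathcal{F} \cap \bigcap_{j=1}^5 S_j)$ that the paper leaves implicit behind the word ``clearly.'' The reduction part is also handled identically, as a trivial consequence of $\mathcal{F}' \subseteq \mathcal{F}$.
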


\begin{proof}
Suppose for some $1 \le k \le 5$, $\mathcal{F}' \subseteq S_k$,
then clearly $\mathcal{F} \subseteq S_k$.
The reduction is trivial since $\mathcal{F}' \subseteq \mathcal{F}$.
\end{proof}

Suppose $\mathcal{F}$ is as given in
Lemma~\ref{mixing-replacing-by},
and $\mathcal{F} \cap(\bigcup_{k=1}^{5}S_k) \not =\emptyset$.
Let $j = \min \{k \mid \mathcal{F} \cap S_k \not = \emptyset, 1 \le k \le 5\}$.
Then $j$ is well defined.
The same proof shows that $\mathcal{F}'=
\mathcal{F} \setminus  (\bigcap_{k=j}^{5}S_k)$
also has the property that $\mathcal{F}'
\not \subseteq S_k$, for $j  \le k \le 5$,
and $\mathcal{F}' \cap S_k = \emptyset$ for $1 \le k < j$.

\begin{corollary}\label{Corollary-to-F11}
Let  $\mathcal{F}$ be a set of symmetric signatures such that
for all $1 \le k \le 5$, $\mathcal{F} \not \subseteq S_k$.
Furthermore suppose 
$\mathcal{F} \cap(\bigcup_{k=1}^{5}S_k) \not =\emptyset$ and
let
$j = \min \{k \mid \mathcal{F} \cap S_k \not = \emptyset, 1 \le k \le 5\}$.
Let $\mathcal{F}'=
\mathcal{F} \setminus  (\bigcap_{k=j}^{5}S_k)$.
Then for all $1 \le k \le 5$, $\mathcal{F}' \not \subseteq S_k$
and $\PlCSP^2(\mathcal{F}')\leq$
$\PlCSP^2(\mathcal{F})$.
\end{corollary}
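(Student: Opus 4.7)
The plan is to adapt the two-line argument for Lemma~\ref{mixing-replacing-by} with a small amount of case analysis based on whether $k \ge j$ or $k < j$. The reduction $\PlCSP^2(\mathcal{F}') \le_T \PlCSP^2(\mathcal{F})$ is immediate from $\mathcal{F}' \subseteq \mathcal{F}$, so the entire content is the non-inclusion statement $\mathcal{F}' \not\subseteq S_k$ for every $1 \le k \le 5$.

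A preliminary observation I would establish first is that $\mathcal{F}'$ is non-empty. Since $\mathcal{F} \not\subseteq S_j$, pick any $f \in \mathcal{F} \setminus S_j$. This $f$ cannot lie in $\bigcap_{\ell=j}^5 S_\ell$ (as that intersection is contained in $S_j$), so $f \in \mathcal{F}'$. With this in hand, the two ranges of $k$ can be treated separately.

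For $j \le k \le 5$, I would argue by contradiction exactly as in Lemma~\ref{mixing-replacing-by}. If $\mathcal{F}' \subseteq S_k$, then writing $\mathcal{F} = \mathcal{F}' \cup \bigl(\mathcal{F} \cap \bigcap_{\ell=j}^5 S_\ell\bigr)$, and using $\bigcap_{\ell=j}^5 S_\ell \subseteq S_k$ (which holds precisely because $k \ge j$), I get $\mathcal{F} \subseteq S_k$, contradicting the hypothesis. For $1 \le k < j$, the minimality of $j$ forces $\mathcal{F} \cap S_k = \emptyset$, hence $\mathcal{F}' \cap S_k = \emptyset$; combined with $\mathcal{F}' \neq \emptyset$ this gives $\mathcal{F}' \not\subseteq S_k$.

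I do not expect any real obstacle: this is a bookkeeping consequence of the fact that $\bigcap_{\ell=j}^5 S_\ell$ is contained in every $S_k$ with $k \ge j$ while being disjoint from $S_k$-membership questions for $k < j$ (via the minimality of $j$). The only subtle point worth flagging in the write-up is the need to verify $\mathcal{F}' \neq \emptyset$, since the case $k < j$ relies on it; this is precisely where the hypothesis $\mathcal{F} \not\subseteq S_j$ (rather than the weaker $\mathcal{F} \cap S_j \neq \emptyset$) gets used.
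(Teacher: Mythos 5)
Your proof is correct and follows essentially the same route as the paper: the reduction is immediate from $\mathcal{F}' \subseteq \mathcal{F}$, the case $k \ge j$ reuses the containment $\bigcap_{\ell=j}^5 S_\ell \subseteq S_k$ exactly as in Lemma~\ref{mixing-replacing-by}, and the case $k < j$ uses minimality of $j$. You are a bit more explicit than the paper in flagging that $\mathcal{F}' \neq \emptyset$ is needed for the $k<j$ case (one could also deduce non-emptiness from the $k=j$ conclusion itself), but the argument is the same in substance.
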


Recall that
$S_1=\widehat{\mathscr{M}}$,
$S_2=\widehat{\mathscr{M}}^\dagger$,
$S_3=\mathscr{A}^\dagger$,  $S_4=\mathscr{A}$ and
$S_5=\mathscr{P}$.

\begin{theorem}\label{mixing-theorem}
Let $\mathcal{F}\subseteq\bigcup_{k=1}^5 S_k$ be a
set of symmetric signatures of  even arities.
If $\mathcal{F} \subseteq S_k$ for some $1\leq k\leq 5$,
then $\PlCSP^2(\mathcal{F})$ is tractable.  Otherwise,
 $\PlCSP^2(\mathcal{F})$ is $\#{\rm P}$-hard.
\end{theorem}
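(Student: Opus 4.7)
The tractable direction is immediate: if $\mathcal{F} \subseteq S_k$ for some $k$, then $\PlCSP^2(\mathcal{F})$ is tractable by the $\mathscr{P}$-/$\mathscr{A}$-/$\mathscr{M}$-transformability associated with $S_k$. For the hard direction, suppose $\mathcal{F} \not\subseteq S_k$ for every $1 \le k \le 5$. My plan is to produce two signatures $f, g \in \mathcal{F}$ satisfying the hypothesis of the pair No-Mixing theorem (Theorem \ref{mixing-theorem-2}), namely $\{f, g\} \not\subseteq S_k$ for every $k$, and then lift \#P-hardness from $\PlCSP^2(f,g)$ to $\PlCSP^2(\mathcal{F})$.

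First I apply Lemma \ref{mixing-replacing-by} (and iteratively Corollary \ref{Corollary-to-F11}) to strip from $\mathcal{F}$ any signature lying in $\bigcap_{k=1}^{5} S_k$; by Lemma \ref{mixing-clarifying-signature-set} this intersection is just $\{[0,1,0], [1,0,\ldots,0,\pm 1]\}$, and the resulting reduced set $\mathcal{F}'$ still satisfies $\mathcal{F}' \not\subseteq S_k$ for each $k$ while $\PlCSP^2(\mathcal{F}') \leq_T \PlCSP^2(\mathcal{F})$. For each $f \in \mathcal{F}'$ let $I(f) = \{k : f \in S_k\} \subsetneq \{1, \ldots, 5\}$; then the hypothesis of Theorem \ref{mixing-theorem-2} for a pair $(f,g)$ is exactly $I(f) \cap I(g) = \emptyset$. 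For each $k$, the assumption furnishes a witness $h_k \in \mathcal{F}' \setminus S_k$.

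The structural corollaries sharply restrict admissible profiles. Corollary \ref{A-2-A-minus-P} forces $\{3,4\} \subseteq I(f) \Rightarrow 5 \in I(f)$, and Corollary \ref{M-2-M-A-P} forces $\{1,2\} \subseteq I(f) \Rightarrow I(f) \cap \{3,4,5\} \ne \emptyset$. Using these constraints, I would first try the natural pairings among the $h_k$'s: for instance $(h_1, h_2)$ already satisfies $\{h_1, h_2\} \not\subseteq S_1$ and $\{h_1, h_2\} \not\subseteq S_2$, so it fails the pair hypothesis only if $\{h_1, h_2\} \subseteq S_k$ for some $k \in \{3,4,5\}$. In each such failure I would pivot by incorporating another witness $h_{k'}$, using the constraints above to rule out most configurations.

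The main obstacle is the residual case where $\{I(f) : f \in \mathcal{F}'\}$ forms an intersecting family of non-full subsets of $\{1,\ldots,5\}$ with empty common intersection, so no pair directly has disjoint profile. I plan to resolve this by appealing to the explicit enumeration in Lemma \ref{mixing-clarifying-signature-set} of signatures in each pairwise intersection $S_i \cap S_j$ of even arity: having $|I(f)| \ge 2$ forces $f$ into one of a small list of concrete types (generalized equalities, degenerate tensor powers, or specific arity-$2$ signatures). A finite case analysis on these types, combined with the derivative calculus $\partial$ together with the gadget tools already established (in particular Lemmas \ref{M-even} and \ref{2-M-even} for the $\widehat{\mathscr{M}}$ and $\widehat{\mathscr{M}}^\dagger$ branches and Lemma \ref{mixing-P-global} for the $\mathscr{P}$ branch), will either exhibit a pair within $\mathcal{F}'$ of disjoint profile or realize from $\mathcal{F}'$ an auxiliary pair satisfying that property, in either case reducing back to Theorem \ref{mixing-theorem-2}.
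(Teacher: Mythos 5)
Your high-level scaffolding — strip $\bigcap_k S_k$ via Lemma~\ref{mixing-replacing-by}, define profiles $I(f)=\{k : f\in S_k\}$, observe that the pair theorem hypothesis is exactly $I(f)\cap I(g)=\emptyset$, and use Corollaries~\ref{A-2-A-minus-P} and~\ref{M-2-M-A-P} to constrain profiles — lines up with what the paper does. But the proposal has two genuine problems in its critical step. First, you characterize Lemmas~\ref{M-even}, \ref{2-M-even}, and \ref{mixing-P-global} as tools that ``realize from $\mathcal{F}'$ an auxiliary pair\ldots in either case reducing back to Theorem~\ref{mixing-theorem-2}.'' That is not how those lemmas work or how the paper uses them. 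Lemma~\ref{mixing-P-global} proves $\#$P-hardness of the \emph{set} by globally simulating $\PlCSP$ and appealing to Theorem~\ref{pl-dicho-1}; Lemmas~\ref{M-even}/\ref{2-M-even} rest on the cyclotomic-field interpolation of $[1,a,1]$. Neither reduces to the pair theorem, and in the key subcases they handle (e.g.\ when $[1,\pm1]^{\otimes 2m}\in\mathcal{F}'\cap\widehat{\mathscr{M}}$, a signature with profile $\{1,4,5\}$) there is in general \emph{no} witness $h\in\mathcal{F}'$ with $I(h)\subseteq\{2,3\}$, so no disjoint pair exists inside $\mathcal{F}'$. The paper's proof applies mixing-P-global to kill this case directly, not by fabricating a pair.

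Second, the residual ``intersecting family with empty common intersection'' case is exactly where all of the real work lies, and you defer it to a ``finite case analysis'' that is not carried out. Verifying that the structural corollaries plus the explicit forms from Lemma~\ref{mixing-clarifying-signature-set} rule out all bad configurations is delicate — e.g.\ one needs to know that after stripping $\bigcap_k S_k$, the pairwise intersection $S_1\cap S_2$ of even-arity signatures is empty, that $\widehat{\mathscr{M}}\cap\mathscr{A}^{\dagger}\setminus\mathscr{P}$ is empty, that $S_4\cap S_5$ of even arity sits entirely inside $\mathscr{A}^{\dagger}\cup\widehat{\mathscr{M}}\cup\widehat{\mathscr{M}}^{\dagger}$, and so on. The paper resolves this by a structured iteration through $S_1,\dots,S_5$ (Corollary~\ref{Corollary-to-F11}): at each stage it strips $\bigcap_{k\ge j}S_k$, pulls out the ``genuinely $\widehat{\mathscr{M}}$'' or ``genuinely degenerate'' members for the set-based lemmas, and only then applies the pair theorem to a carefully identified $(g,h)$ with $I(g)\cap I(h)=\emptyset$. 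Your profile framework is a valid lens, but without carrying out the enumeration the proposal has a hole exactly where the theorem is hardest.
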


\begin{proof}
If $\mathcal{F} \subseteq S_k$ for some $1\leq k\leq 5$,
then tractability follows from the definition of
$\mathscr{P}$-transformability,
$\mathscr{A}$-transformability and $\mathscr{M}$-transformability.

Now suppose  $\mathcal{F}\nsubseteq S_k$ for all $1\leq k\leq 5$.
We first replace $\mathcal{F}$ by
$\mathcal{F}'= \mathcal{F} \setminus (\bigcap_{k=1}^{5}S_k)$.
This also excludes the identically 0 signature.
By Lemma~\ref{mixing-replacing-by}, we still have
$\mathcal{F}'\nsubseteq S_k$ for $1\leq k\leq 5$,
and we only need to prove
$\PlCSP^2(\mathcal{F}')$ is $\#{\rm P}$-hard.

We will treat the tractable sets in the order $S_1, S_2, \ldots, S_5$,
starting with $S_1 = \widehat{\mathscr{M}}$.
%
\begin{enumerate}
\item
Suppose that $\mathcal{F}'\cap S_1\neq \emptyset$.

Let $\mathcal{G}_1=\mathcal{F}'\cap S_1$,
and $\mathcal{H}_1=\mathcal{F}'\setminus S_1$.
Then $\mathcal{G}_1 \neq \emptyset$, and since $\mathcal{F}' \nsubseteq S_1$
we also have $\mathcal{H}_1\neq \emptyset$.
If there exists $g\in\mathcal{G}_1$ such that $g\in\widehat{\mathscr{M}}\setminus(\mathscr{P}\cup\widetilde{\mathscr{A}})$,
then we are done by Lemma~\ref{M-even}.
Otherwise,
 $\mathcal{G}_1\subseteq\widehat{\mathscr{M}}\cap(\mathscr{P}\cup\widetilde{\mathscr{A}})$.
Then by the forms given in Lemma~\ref{mixing-clarifying-signature-set},
ignoring nonzero scalars,
 $\mathcal{G}_1\subseteq
\{[1, \pm 1]^{\otimes 2m}, [1, \pm i, 1],
[1, i]^{\otimes 2n}\pm[1, -i]^{\otimes 2n} \mid m\ge 1, n\ge 2\}$.
Note that we have excluded $\bigcap_{k=1}^{5}S_k$ in $\mathcal{F}'$,
hence also in $\mathcal{G}_1$.
By Lemma~\ref{mixing-clarifying-signature-set},
 $[1, 0, \ldots, 0, \pm 1], [0, 1, 0] \not \in \mathcal{F}'$.

If $[1, \pm 1]^{\otimes 2m} \in \mathcal{G}_1$ for some $m \ge 1$, then we
can construct $\partial^{m-1}([1, \pm 1]^{\otimes 2m})=2^{m-1}[1, \pm 1]^{\otimes 2}$,
and we are done by Lemma~\ref{mixing-P-global}.

Otherwise, by the forms in
\begin{equation}\label{form-two-sig-in-G1}
\mathcal{G}_1\subseteq
\{[1, \pm i, 1],
[1, i]^{\otimes 2n}\pm[1, -i]^{\otimes 2n} \mid n\ge 2\},
\end{equation}
we have
$\mathcal{G}_1\subseteq\mathscr{A}$.
If $\mathcal{H}_1 \subseteq \mathscr{A}$, then we would have
$\mathcal{F}' \subseteq \mathscr{A}$, a contradiction.
Thus $\mathcal{H}_1 \nsubseteq \mathscr{A}$.
Thus there exists $h\in\mathcal{H}_1 \setminus \mathscr{A}$.
By definition of $\mathcal{H}_1$, $h \not \in \widehat{\mathscr{M}}$.
Also $\mathcal{H}_1 \subseteq \bigcup_{k=1}^{5}S_k$, thus
$h\in(\mathscr{P}\cup\mathscr{A}^\dagger\cup\widehat{\mathscr{M}}^\dagger)\setminus(\mathscr{A}\cup\widehat{\mathscr{M}})$.
By the  forms of signatures
in  the nonempty set $\mathcal{G}_1$ in
(\ref{form-two-sig-in-G1})
we have $\mathcal{G}_1 \cap (\mathscr{P}\cup\mathscr{A}^\dagger\cup\widehat{\mathscr{M}}^\dagger)=\emptyset$.
To check this: for the binary $[1, \pm i, 1]$, we apply Lemma~\ref{binary};
for $[1, i]^{\otimes 2n}\pm[1, -i]^{\otimes 2n}$ we use its
second order recurrence of type  $\langle 1,0,1\rangle$  and  then we apply
Lemma~\ref{second-recurrence-relation}.
Thus $\PlCSP^2(\mathcal{F}')$ is $\#{\rm P}$-hard by
Theorem~\ref{mixing-theorem-2}.

\item
We have $\mathcal{F}'\cap S_1=\emptyset$.
We replace $\mathcal{F}'$ by
$\mathcal{F}''=\mathcal{F}'\setminus(\bigcap_{k=2}^5 S_k)$.
By Corollary~\ref{Corollary-to-F11}, we still have 
$\mathcal{F}''\nsubseteq S_k$ for $2\leq k\leq 5$,
 $\mathcal{F}'' \cap S_1=\emptyset$,
and we only need to prove
$\PlCSP^2(\mathcal{F}'')$ is $\#{\rm P}$-hard.

Now suppose that $\mathcal{F}''\cap S_2\neq \emptyset$.

By Lemma~\ref{mixing-clarifying-signature-set},
$[1, 0, \ldots, 0, \pm 1], [0, 1, 0] \not \in \mathcal{F}''$.

Let $\mathcal{G}_2=\mathcal{F}''\cap S_2$
and $\mathcal{H}_2=\mathcal{F}''\setminus S_2$.
Both $\mathcal{G}_2$, $\mathcal{H}_2 \neq \emptyset$ and
by definition $\mathcal{H}_2\cap\widetilde{\mathscr{M}}=\emptyset$.
Thus there exists $h\in\mathcal{H}_2 \setminus \widetilde{\mathscr{M}}$.
If there exists $g\in\mathcal{G}_2$ such that $g\in\widehat{\mathscr{M}}^\dagger\setminus(\mathscr{P}\cup\widetilde{\mathscr{A}})$,
then $\PlCSP^2(f, g)$ is $\#{\rm P}$-hard by Theorem~\ref{mixing-theorem-2}.

Otherwise, $\mathcal{G}_2\subseteq\widehat{\mathscr{M}}^\dagger\cap(\mathscr{P}\cup\widetilde{\mathscr{A}})$.
Then $\mathcal{G}_2\subseteq\{[1, \pm i]^{2m}, [1, \pm 1, -1], [1, 1]^{\otimes 2n}\pm[1, -1]^{\otimes 2n} \mid m\geq 1, n\geq 2\}$
by Lemma~\ref{mixing-clarifying-signature-set}.
By its form
$\mathcal{G}_2\subseteq\mathscr{A}$.
If $\mathcal{H}_2 \subseteq \mathscr{A}$, then we would have
$\mathcal{F}'' \subseteq \mathscr{A}$, a contradiction.
Thus $\mathcal{H}_2 \nsubseteq \mathscr{A}$.
Hence there exists $h'\in\mathcal{H}_2 \setminus \mathscr{A}$.
By definition of $\mathcal{H}_2$, $h' \not \in \widetilde{\mathscr{M}}$.
As $\mathcal{F}'' \subseteq \bigcup_{k=2}^5 S_k$,
$h'\in(\mathscr{P}
\cup\mathscr{A}^\dagger)\setminus(\mathscr{A}\cup\widetilde{\mathscr{M}})$.
If $\mathcal{G}_2$ includes
either $[1, \pm 1, -1]$ or
 $[1, 1]^{\otimes 2n}\pm[1, -1]^{\otimes 2n}$ for some $n \ge 2$,
both  are not in $\mathscr{P}\cup\mathscr{A}^\dagger$.
To see this, we apply Corollary~\ref{binary-necessary} to 
the binary $[1, \pm 1, -1]$. For 
$[1, 1]^{\otimes 2n}\pm[1, -1]^{\otimes 2n}$ with $n \ge 2$,
we note its recurrence type $\langle -1, 0, 1\rangle$ and then
apply Lemma~\ref{second-recurrence-relation}.
 Then $\PlCSP^2(\mathcal{F}'')$ is $\#{\rm P}$-hard by
Theorem~\ref{mixing-theorem-2}.

We are left with the case where
the nonempty set $\mathcal{G}_2\subseteq\{[1, \pm i]^{2m} \mid m \ge 1\}$.
By its form
$\mathcal{G}_2\subseteq\mathscr{P}\cap\mathscr{A}\cap\widehat{\mathscr{M}}^\dagger$ and
$\mathcal{G}_2\cap\mathscr{A}^\dagger=\emptyset$.
If there exists $h''\in \mathcal{H}_2 \setminus(\mathscr{A}\cup\mathscr{P})$,
then by definition of $\mathcal{H}_2$
this $h'' \not \in \widetilde{\mathscr{M}}$ as well,
and we conclude that
 $\PlCSP^2(\mathcal{F}'')$ is $\#{\rm P}$-hard by
Theorem~\ref{mixing-theorem-2}.

So we may assume  $\mathcal{H}_2\subseteq \mathscr{A}\cup\mathscr{P}$.
If $\mathcal{H}_2\subseteq \mathscr{A}$, then we would have
$\mathcal{F}'' \subseteq \mathscr{A}$, a contradiction.
Thus there exists $h''' \in (\mathcal{H}_2 \cap
\mathscr{P}) \setminus \mathscr{A}$.
Considering the forms of signatures in $\mathscr{P} \setminus \mathscr{A}$,
it takes the form $h''' = [a,b]^{\otimes 2n}$ with $a^4 \not = b^4$,
 $ab \not =0$, or $h''' = [1, 0, \ldots, 0, x]$ of arity $2n$,
 with $x^4 \not = 0,1$, for some $n \ge 1$.
Taking $h^{(4)}= \partial^{n-1}(h''')$, we get
a nonzero multiple of either $[a,b]^{\otimes 2}$ or $[1, 0, x]$.
Then taking $\partial^{m-1}_{h^{(4)}}([1, \pm i]^{\otimes 2m})$, for some $m
\ge 1$,
where $[1, \pm i]^{\otimes 2m} \in \mathcal{G}_2$ which is nonempty,
we get   a nonzero multiple of $[1, \pm i]^{\otimes 2}$,
and we are done by Lemma~\ref{mixing-P-global}.

\item
Now we have $\mathcal{F}''\cap S_2=\emptyset$.

We replace $\mathcal{F}''$ by
$\mathcal{F}'''=\mathcal{F}''\setminus(\bigcap_{k=3}^5 S_k)$.
By Lemma~\ref{mixing-replacing-by}, we still have
$\mathcal{F}'''\nsubseteq S_k$ for $3\leq k\leq 5$,
$\mathcal{F}''' \cap (S_1 \cup S_2) =\emptyset$,
and we only need to prove
$\PlCSP^2(\mathcal{F}''')$ is $\#{\rm P}$-hard.

Suppose that $\mathcal{F}'''\cap S_3\neq \emptyset$.

By Lemma~\ref{mixing-clarifying-signature-set},
the following signatures
$[1, 0, \ldots, 0, i^r]$ of arity $2n$, $[0, 1, 0], [1, 0]^{\otimes 2n},
[0, 1]^{\otimes 2n}$ are all out of $\mathcal{F}'''$,
for any  $0 \le r \le 3$ and any $n\geq 1$.

Let $\mathcal{G}_3=\mathcal{F}'''\cap S_3$,
 $\mathcal{H}_3=\mathcal{F}'''\setminus S_3$.
Both $\mathcal{G}_3$, $\mathcal{H}_3\neq \emptyset$.
Thus there exists $h\in\mathcal{H}_3$ such that $h\in(\mathscr{P}\cup\mathscr{A})\setminus(\mathscr{A}^\dagger\cup\widetilde{\mathscr{M}})$.
If there exists $g\in\mathcal{G}_3$ such that $g\in\mathscr{A}^\dagger\setminus\mathscr{P}$,
then by Corollary~\ref{A-2-A-minus-P}, $g \not \in \mathscr{A}$.
Thus $\PlCSP^2(g, h)$ is $\#{\rm P}$-hard by Theorem~\ref{mixing-theorem-2}.

Otherwise, we have $\mathcal{G}_3\subseteq\mathscr{A}^\dagger\cap\mathscr{P}$.
Thus we have
$\mathcal{G}_3\subseteq\{[1, \alpha]^{\otimes 2m} \mid m \ge 1\}$.
Note that by Lemma~\ref{mixing-clarifying-signature-set},
we have excluded 
$[1, 0, \ldots, 0, i^r]$ of arity $2n$, $[0, 1, 0], [1, 0]^{\otimes 2n},
[0, 1]^{\otimes 2n}$ which are all in $\bigcap_{k=3}^5 S_k$.
(See Figure~\ref{fig:venn_diagram:A_Adagger_P}.)

Then we have $\partial^{m-1}([1, \alpha]^{\otimes 2m})=(1+\alpha^2)[1, \alpha]^{\otimes 2}$
and we are done by Lemma~\ref{mixing-P-global}.

\item
Finally we have $\mathcal{F}'''\cap S_3 = \emptyset$.

We have
$\mathcal{F}'''\nsubseteq S_k$ for $4\leq k\leq 5$,
$\mathcal{F}''' \cap (S_1 \cup S_2 \cup S_3) =\emptyset$,
and thus $\mathcal{F}''' \subseteq S_4 \cup S_5$.
Then we are done directly by Theorem~\ref{mixing-theorem-2}.
\qedhere
\end{enumerate}
\end{proof}

\section{Dichotomy Theorem for an Even-Arity Signature}\label{PartII.secH.csp2-dichotomy}

In this section,
we prove the dichotomy theorem for $\PlCSP^2(f)$,
where $f$ has a general even arity $2n$.
If $2n =2$ or $4$,
then it has been proved in Theorem~\ref{thm:parII:k-reg_homomorphism}$'$ and Theorem~\ref{arity-4-dichotomy} respectively.
Thus we will assume $2n \geq 6$.

The next simple lemma is to determine if a symmetric signature satisfies a second order recurrence relation.
In the following proof, we often argue that a signature $f$ does not belong to
$\mathscr{P}\cup\widetilde{\mathscr{A}}\cup\widetilde{\mathscr{M}}$ by
Lemma~\ref{second-recurrence-relation},
and by showing that $f$ does not satisfy any second order recurrence relation.

\begin{lemma} \label{general-second-recurrence}
For  a symmetric signature $f=[f_0, f_1, \ldots, f_n]$,
let $M_f=\left[\begin{smallmatrix} f_0 & f_1 & f_2 \\
f_1 & f_2 & f_3 \\
\vdots & \vdots & \vdots \\
f_{n-2} & f_{n-1} & f_{n}
 \end{smallmatrix}\right]$,
 then $f$ satisfies a second order recurrence relation iff
 $\rank(M_f)\leq 2$.
\end{lemma}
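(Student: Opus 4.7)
\medskip

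\noindent\textbf{Proof Plan.} The plan is to recast the existence of a second order recurrence as the existence of a nontrivial element in the right null space of $M_f$, and then read off the rank condition. The key observation is that the matrix $M_f$ has exactly three columns, while each row encodes three consecutive entries of $f$. So a vector $(a,-b,c)^{\mathrm{T}}\in\mathbb{C}^3$ satisfies $M_f(a,-b,c)^{\mathrm{T}}=\mathbf{0}$ if and only if, for every row index $k$ (ranging over the consecutive triples $(f_k,f_{k+1},f_{k+2})$ that appear in $M_f$), one has $af_k - bf_{k+1} + cf_{k+2} = 0$. This is, by definition, the condition that $f$ has type $\langle a,b,c\rangle$.

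For the forward direction, I would assume that $f$ satisfies a second order recurrence of type $\langle a,b,c\rangle$, with $(a,b,c)\neq(0,0,0)$. Then $(a,-b,c)^{\mathrm{T}}$ is a nonzero vector in the kernel of $M_f$, so the three columns of $M_f$ are linearly dependent, giving $\rank(M_f)\le 2$. For the reverse direction, if $\rank(M_f)\le 2$, then by the rank–nullity theorem applied to $M_f$ acting on $\mathbb{C}^3$, the right null space is nonzero, so there is some nonzero $(u,v,w)^{\mathrm{T}}$ with $M_f(u,v,w)^{\mathrm{T}}=\mathbf{0}$. Setting $a=u$, $b=-v$, $c=w$, these constants are not all zero and $af_k - bf_{k+1} + cf_{k+2}=0$ holds for all relevant $k$, giving the desired recurrence.

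There is essentially no obstacle here; the lemma is a clean linear-algebra reformulation. The only small point to be careful about is the degenerate case $n\le 2$, where $M_f$ has at most one row (or is empty) and the recurrence condition is vacuous, so both sides of the equivalence hold trivially. For $n\ge 3$, $M_f$ has at least two rows, and the argument above is direct. This lemma will subsequently be used as a convenient rank test to decide whether a symmetric signature can possibly lie in $\mathscr{P}\cup\widetilde{\mathscr{A}}\cup\widetilde{\mathscr{M}}$ via Lemma~\ref{second-recurrence-relation}.
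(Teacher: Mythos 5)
Your proposal is correct and matches the paper's proof exactly: both observe that a nontrivial recurrence $\langle a,b,c\rangle$ corresponds precisely to a nonzero vector in the kernel of the $(n-1)\times 3$ matrix $M_f$, and then invoke the rank criterion on a three-column matrix. The extra remark about the vacuous case $n\le 2$ and the explicit bookkeeping of the sign in $(a,-b,c)$ are harmless additions but not substantively different.
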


\begin{proof}
 The signature $f$ satisfies a second order recurrence relation
$af_k+bf_{k+1}+cf_{k+2}=0$ for $0\leq k\leq n-2$
iff
the linear system $M_fX=0$ has a nonzero solution $(a, b, c)^{T}$
iff
$\rank(M_f)\leq 2$.
%
\end{proof}

We often use the following argument to prove hardness:
 Firstly, we prove $f\notin\mathscr{P}\cup\widetilde{\mathscr{A}}\cup\widetilde{\mathscr{M}}$
 using Lemma~\ref{general-second-recurrence}.
 Moreover,
  if we can get
$[1, \omega]^{\otimes 2}$ in $\PlCSP^2(f)$ for some $\omega \neq 0$,
then $\PlCSP^2(f, [1, \omega]^{\otimes 2})$ is $\#$P-hard by  Lemma~\ref{mixing-P-global}.
Or if we can get a signature $g \in
\widehat{\mathscr{M}}\setminus(\mathscr{P}\cup\widetilde{\mathscr{A}})$
in $\PlCSP^2(f)$,
then $\PlCSP^2(f, g)$ is $\#$P-hard by Lemma~\ref{M-even}.

The next three lemmas are some special cases of Theorem~\ref{general-single-dichotomy} which is the main result of this section.
We prove these lemmas separately to facilitate the presentation
of the proof of Theorem~\ref{general-single-dichotomy}.

\begin{lemma} \label{general-[1,a,0,a,0,a,b]}
Suppose $ab\neq 0$ and $f=[1,a,0,-a,0,a,b]$,
then $\PlCSP^2(f)$ is \numP-hard.
\end{lemma}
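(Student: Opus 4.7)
The plan is to show $f$ itself lies outside every tractable family, and then construct either an arity~4 derivative that fails Lemma~\ref{second-recurrence-relation} (so that Theorem~\ref{arity-4-dichotomy} applies) or a witness in $\widehat{\mathscr{M}} \setminus (\mathscr{P} \cup \widetilde{\mathscr{A}})$ (so that Lemma~\ref{M-even} applies). First I note that any recurrence $Af_k - Bf_{k+1} + Cf_{k+2} = 0$ evaluated at $k = 2$ forces $Ba = 0$, hence $B = 0$; at $k = 4$ it then forces $Cb = 0$, hence $C = 0$; and at $k = 0$ it forces $A = 0$. Since $f$ is non-degenerate and has arity $6$, Lemma~\ref{second-recurrence-relation} shows $f \notin \mathscr{P} \cup \widetilde{\mathscr{A}} \cup \widetilde{\mathscr{M}}$; in particular $f \notin \widehat{\mathscr{M}}$. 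The useful preparatory derivatives are $\partial(f) = [1, 0, 0, 0, b]$ and $\partial^{2}(f) = [1, 0, b]$.

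The workhorse will be
\[
g \;:=\; \partial_{[1,0,b]}(f) \;=\; [\,1,\; a(1-b),\; 0,\; -a(1-b),\; b^{2}\,].
\]
Assume $b \neq 1$, so $\alpha := a(1-b) \neq 0$. A direct computation of the recurrence for $g$ yields $A = C = B\alpha$ together with $B\alpha(1 + b^{2}) = 0$. When $b^{2} \neq -1$ this forces $A = B = C = 0$, so $g$ violates every second order recurrence, hence $g \notin \mathscr{P} \cup \widetilde{\mathscr{A}} \cup \widetilde{\mathscr{M}}$, and Theorem~\ref{arity-4-dichotomy} gives the required hardness. When $b = \pm i$, the signature $g = [1, \alpha, 0, -\alpha, -1]$ matches the form $[u, v, 0, -v, -u]$ of Corollary~\ref{mixing-M-arity-4}, so $g \in \widehat{\mathscr{M}}$; but its unique recurrence type $\langle \alpha, 1, \alpha \rangle$ rules out $\mathscr{P} \cup \widetilde{\mathscr{A}}$ by Lemma~\ref{second-recurrence-relation}, and since $f \notin \widehat{\mathscr{M}}$, Lemma~\ref{M-even} applied to $\{f, g\}$ closes the case.

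The main obstacle is the remaining case $b = 1$, where $\partial(f) = {=}_{4}$ is already present on the LHS of $\PlCSP^{2}$ and the binary $\partial^{2}(f) = [1, 0, 1]$ is likewise trivial. I will instead exploit the existing $=_{4} \in \mathcal{EQ}_{2}$ to form $\partial_{=_{4}}(f) = [1, 2a, 1]$. Provided $a^{4} \neq 1/16$, we have $(2a)^{4} \notin \{0, 1\}$, so Lemma~\ref{M-2-M-NOT-IN-A-AND-P} places $[1, 2a, 1]$ in $\widehat{\mathscr{M}} \setminus (\mathscr{P} \cup \widetilde{\mathscr{A}})$ and Lemma~\ref{M-even} again applies, since $f \notin \widehat{\mathscr{M}}$. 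The four residual values $a \in \{\pm 1/2,\, \pm i/2\}$ require one extra round: feeding the new binary back into $f$ gives
\[
h \;:=\; \partial_{[1, 2a, 1]}(f) \;=\; [\,1 + 4a^{2},\; 0,\; -4a^{2},\; 0,\; 4a^{2} + 1\,],
\]
which specializes to $[2, 0, -1, 0, 2]$ when $a^{2} = 1/4$ and to $[0, 0, 1, 0, 0]$ when $a^{2} = -1/4$. A short computation shows that each of these arity~$4$ signatures admits only the trivial second order recurrence, so neither lies in $\mathscr{P} \cup \widetilde{\mathscr{A}} \cup \widetilde{\mathscr{M}}$, and Theorem~\ref{arity-4-dichotomy} dispatches both. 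The genuine subtlety is localized to this boundary stratum $b = 1$ with $a^{4} = 1/16$; every other parameter range is handled by the first derivative $g$, and the only reason a second derivation is needed is that at these special points the natural binary $[1, 2a, 1]$ drops into $\widetilde{\mathscr{A}}$.
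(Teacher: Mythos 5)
Your proof is correct, and it takes a genuinely different route from the paper's. Both arguments start by observing that $f$ satisfies no second-order recurrence and therefore lies outside every tractable class (the paper verifies this via $\rank M_f = 3$, you via three pointwise evaluations of the recurrence). After that the paper centers its case analysis on the binary $\partial_{=_4}(f) = [1, 2a, b]$, splitting on whether that binary is degenerate, whether $b^4 \ne 1$, $b^2 = -1$, or $b^2 = 1$, and invoking in turn Lemma~\ref{mixing-P-global}, Theorem~\ref{thm:parII:k-reg_homomorphism}$'$, the compressed-determinant criterion of Lemma~\ref{4-redundant} (twice), Lemma~\ref{M-even}, and Lemma~\ref{2-M-even}. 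You instead use the other natural binary $\partial^{2}(f) = [1, 0, b]$ as a probe, forming $g = \partial_{[1,0,b]}(f) = [1, a(1-b), 0, -a(1-b), b^{2}]$; a single recurrence computation then shows $g$ has no admissible recurrence type except when $b = 1$ (where the probe degenerates since $\alpha = a(1-b) = 0$) or $b^{2} = -1$ (where $g$ lands in $\widehat{\mathscr{M}}\setminus(\mathscr{P}\cup\widetilde{\mathscr{A}})$ by Corollary~\ref{mixing-M-arity-4}, its type $\langle 1, \alpha^{-1}, 1\rangle$, and Lemma~\ref{M-even}). This compresses the paper's four-way split into essentially one computation, avoids the determinant checks and Lemma~\ref{2-M-even} entirely, and defers the only real case work to the stratum $b=1$, handled via $\partial_{=_4}(f) = [1, 2a, 1]$ plus the two boundary points $a^2 = \pm 1/4$. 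Everything ultimately funnels into Theorem~\ref{arity-4-dichotomy} and Lemma~\ref{M-even}, which are also among the tools the paper uses, so neither approach is strictly more elementary; yours trades determinant arithmetic for recurrence checks and is arguably more uniform.
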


\begin{proof}
Note that  $M_f=\left[\begin{smallmatrix} f_0 & f_1 & f_2 \\
f_1 & f_2 & f_3 \\
f_2 & f_3 & f_4 \\
f_3 & f_4 & f_5 \\
f_{4} & f_{5} & f_{6}
 \end{smallmatrix}\right]=
 \left[\begin{smallmatrix} 1 & a & 0 \\
a & 0 & -a \\
0 & -a & 0 \\
-a & 0 & a \\
0 & a & b
 \end{smallmatrix}\right]$ has rank 3.
 Thus  $f$  does not satisfy any second order recurrence relation by Lemma~\ref{general-second-recurrence}.
So $f\notin\mathscr{P}\cup\widetilde{\mathscr{A}}\cup\widetilde{\mathscr{M}}$
by Lemma~\ref{second-recurrence-relation}.

Moreover, we have $\partial_{=_4}(f)=[1, 2a, b]$.
If $[1, 2a, b]$ is degenerate, then $[1, 2a, b]=[1, 2a]^{\otimes 2}$.
We are done since $\PlCSP^2(f, [1, 2a]^{\otimes 2})$ is $\#$P-hard by Lemma~\ref{mixing-P-global}.
Otherwise,
\begin{itemize}
 \item For $b^4\neq 1$,
 we have $[1, 2a, b] \notin \mathscr{P} \cup \widetilde{\mathscr{A}} \cup \widetilde{\mathscr{M}}$
 by Corollary~\ref{binary-necessary} and Lemma~\ref{binary}.
 Thus $\PlCSP^2([1, 2a, b])$ is \numP-hard by Theorem~\ref{thm:parII:k-reg_homomorphism}$'$ and we are done.

\item  For $b^2=-1$,
 we have $\partial^2_{[1, 2a, b]}(=_6)=[1, 0, -1]$ on the left
 and we have $f'=\partial_{[1, 0, -1]}(f)=[1, 2a, 0, -2a, -b]$.
 Note that $f'$ is redundant and the determinant of its compressed signature matrix
is $4(b-1)a^2\neq 0$.
Thus $\PlCSP^2(f')$ is $\#$P-hard by Lemma~\ref{4-redundant} and we are done.

\item For $b^2=1$,
if $(2a)^4\neq 1$,
then we have $[1, 2a, b] \in \widetilde{\mathscr{M}} \setminus (\mathscr{P} \cup \widetilde{\mathscr{A}})$
by Lemma~\ref{M-2-M-NOT-IN-A-AND-P}. Thus
$\PlCSP^2(f, [1, 2a, b])$ is $\#$P-hard by
 Lemma~\ref{M-even} and Lemma~\ref{2-M-even} and we are done.

Otherwise, we have $(2a)^4=1$. This implies that $(2a)^2=\pm b$.
Since  $[1, 2a, b]$ is non-degenerate,
we have $(2a)^2 \neq b$, thus $(2a)^2=-b$.
 Moreover, we have $f''=\partial_{[1, 2a, b]}(f)=[1+(2a)^2, (1-b)a, -(2a)^2, -(1-b)a, b^2+(2a)^2]$.
Note that $f''=[0, 0, 1, 0, 0]$ for $b=1$ and $f''=[2, \pm 1, -1, \mp 1, 2]$ for $b=-1$.
Both of $[0, 0, 1, 0, 0]$  and $[2, \pm 1, -1, \mp 1, 2]$ are redundant and their compressed signature matrices  are
nonsingular.
Thus $\PlCSP^2(f'')$ is $\#$P-hard by Lemma~\ref{4-redundant} and we are done. \qedhere
\end{itemize}
\end{proof}

The next lemma shows that if $\partial(f)=[1, 0]^{\otimes 2n-2}+t[0, 1]^{\otimes 2n-2}$ with $t\neq 0$, then either
 $f=[1, 0]^{\otimes 2n}+t[0, 1]^{\otimes 2n}$
or {\rm Pl}-$\#{\rm CSP}(f)$ is \numP-hard.
We will use this lemma in Theorem~\ref{general-single-dichotomy} for the cases
where $\partial(f)$ is a non-degenerate
 generalized equality {\sc Gen-Eq}.

\begin{figure}[t]
 \centering
 \begin{tikzpicture}[scale=\scale,transform shape,node distance=\nodeDist,semithick]
   \node[internal] (0)                    {};
   \node[external] (1) [above left  of=0] {};
   \node[external] (2) [below left  of=0] {};
   \node[external] (3) [left        of=1] {};
   \node[external] (4) [left        of=2] {};
   \node[internal] (5) [right       of=0] {};
   \node[external] (6) [above right of=5] {};
   \node[external] (7) [below right of=5] {};
   \node[external] (8) [right       of=6] {};
   \node[external] (9) [right       of=7] {};
  \path (3) edge[in= 135, out=   0,postaction={decorate, decoration={
                                               markings,
                                               mark=at position 0.45 with {\arrow[>=diamond, white] {>}; },
                                               mark=at position 0.45 with {\arrow[>=open diamond]   {>}; } } }] (0)
        (0) edge[out=-135, in=   0]  (4)
             edge[out=  25, in= 155] (5)
             edge[out=  65, in= 115] (5)
             edge[out= -65, in=-115] (5)
             edge[out= -25, in=-155] (5)
         (5) edge[out=  45, in= 180] (8)
             edge[out= -45, in= 180] (9);
   \begin{pgfonlayer}{background}
    \node[draw=\borderColor,thick,rounded corners,fit = (1) (2) (6) (7),inner sep=0pt] {};
   \end{pgfonlayer}
 \end{tikzpicture}
 \caption{Gadget used to obtain a signature whose signature matrix is redundant.
 Both vertices are assigned $f$.}
 \label{fig:8}
\end{figure}
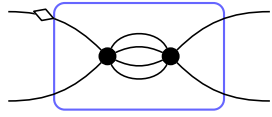

For $f=[a, b]^{\otimes 2n} =
[f_0, f_1, \ldots, f_{2n}]$ we have $f_k = a^{n-k} b^k$.
Then it is easy to see that
$\bar{f}= [a^2, b^2]^{\otimes n}
= [f_0, f_2, \ldots, f_{2n}]$, consisting of even indexed entries of $f$.
This observation also extends to a sum of tensor powers by linearity.
We will use this simple fact in the next lemma.

\begin{lemma} \label{deriviation-is-equality}
Suppose that $(x, y)\neq (0, 0)$ and $f=x[1, i]^{\otimes 2n}+y[1, -i]^{\otimes 2n}+[1, 0]^{\otimes 2n}+t[0, 1]^{\otimes 2n}$,
where $2n\geq 6$ and $t\neq 0$,
then $\PlCSP(f)$ is \numP-hard.
 \end{lemma}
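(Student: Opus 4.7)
The plan is to reduce $\PlCSP^2(f)$ to a $\numP$-hard arity-$4$ instance via iterated form-preserving gadgets. First, I would observe that $\partial(f)=[1,0,\dots,0,t]$ of arity $2n-2$, since $\partial$ kills both tensor powers $[1,\pm i]^{\otimes 2n}$ (as $1+i^2=0$), leaving only the generalized-equality summands with two fewer coordinates. Iterating gives $\partial^{n-1}(f)=[1,0,t]$ and $\partial^{n-2}(f)=[1,0,0,0,t]$, both available as gadgets inside $\PlCSP^2(f)$.

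The key tools are two form-preserving arity-reducers. Using $1+i^4=2$, one computes $\partial_{=_4}(f)=2x[1,i]^{\otimes 2n-4}+2y[1,-i]^{\otimes 2n-4}+[1,0]^{\otimes 2n-4}+t[0,1]^{\otimes 2n-4}$, which sends $(x,y,t)\mapsto(2x,2y,t)$ and drops the arity by $4$. Using $1+ti^2=1-t$, one computes $\partial_{[1,0,t]}(f)=(1-t)\bigl\{x[1,i]^{\otimes 2n-2}+y[1,-i]^{\otimes 2n-2}\bigr\}+[1,0]^{\otimes 2n-2}+t^2[0,1]^{\otimes 2n-2}$, which sends $(x,y,t)\mapsto\bigl((1-t)x,(1-t)y,t^2\bigr)$ and drops the arity by $2$, provided $t\neq 1$. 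Using these, for any $n\geq 3$ I can reduce $f$ to an arity-$4$ signature $f^{(4)}=x^{(4)}[1,i]^{\otimes 4}+y^{(4)}[1,-i]^{\otimes 4}+[1,0]^{\otimes 4}+t^{(4)}[0,1]^{\otimes 4}$ of the same form with $(x^{(4)},y^{(4)})\neq(0,0)$ and $t^{(4)}\neq 0$: for even $n$, iterate $\partial_{=_4}$ alone; for odd $n$, iterate $\partial_{=_4}$ down to arity $6$ and then apply $\partial_{[1,0,t]}$ once.

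I then verify that $f^{(4)}$ satisfies no second order recurrence. The interior entries $x^{(4)}i^k+y^{(4)}(-i)^k$ for $k=1,2,3$ force any such recurrence to be of type $\langle 1,0,1\rangle$, and combining this with the boundary values $f^{(4)}_0=x^{(4)}+y^{(4)}+1$ and $f^{(4)}_4=x^{(4)}+y^{(4)}+t^{(4)}$ yields a single algebraic constraint on $(x^{(4)},y^{(4)},t^{(4)})$ that fails generically. By Lemma~\ref{second-recurrence-relation} this forces $f^{(4)}\notin\mathscr{P}\cup\widetilde{\mathscr{A}}\cup\widetilde{\mathscr{M}}$, so Theorem~\ref{arity-4-dichotomy} yields $\numP$-hardness of $\PlCSP^2(f^{(4)})$ and hence of $\PlCSP^2(f)$.

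The main obstacle will be the exceptional parameters at which the scheme degenerates: when $t=1$ in the odd-$n$ case (where $\partial_{[1,0,t]}$ collapses to the useless $\partial$), and the boundary values of $(x,y,t)$ for which the algebraic constraint above happens to hold and the reduced $f^{(4)}$ lands inside $\widehat{\mathscr{M}}$ or another tractable class. For these parameters I would fall back on the arity-$4$ gadget in Figure~\ref{fig:8}: two copies of $f$ joined by $2n-2$ parallel edges leaving four dangling edges. Its signature $h$ is automatically redundant by construction, and an explicit computation of the compressed signature matrix $\widetilde{M}_h$ (whose ``interior'' contribution is a rank-$2$ matrix proportional to $xy$ plus boundary corrections of orders $1$, $t$, and $t^2$) shows $\det\widetilde{M}_h$ is a nonzero polynomial on the parameter region $(x,y)\neq(0,0)$, $t\neq 0$, so Lemma~\ref{redundant} supplies $\numP$-hardness in these remaining cases.
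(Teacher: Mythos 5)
Your core reduction strategy (iterate form-preserving operators $\partial_{=_4}$ and $\partial_{[1,0,t]}$ down to arity $4$, then invoke Theorem~\ref{arity-4-dichotomy}) is a genuinely different route from the paper, which first splits on whether $a = x+y$ is zero, and for $a\neq 0$ applies the domain-pairing trick (Lemma~\ref{domain-pairing-expand}) to reduce directly to the \emph{arity-$n$} Pl-\#CSP problem $\PlCSP(\bar f)$ with $\bar f = a[1,-1]^{\otimes n}+[1,0]^{\otimes n}+t[0,1]^{\otimes n}$, bypassing arity $4$ entirely. However, your proposal has two substantive gaps.

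First, a small computational error: $\partial_{[1,0,t]}([0,1]^{\otimes 2n}) = t[0,1]^{\otimes 2n-2}$, not $t^2[0,1]^{\otimes 2n-2}$ (the coefficient is just $g_2 = t$, there is no squaring). The map on parameters is $(x,y,t)\mapsto((1-t)x,(1-t)y,t)$. This does not sink the argument since $t\neq 0$ is preserved either way, but it is a sign the bookkeeping needs care. The exceptional value is $t=1$, not anything involving $t^2$.

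Second and more seriously, your handling of the exceptional locus is not a proof. Your arity-$4$ signature $f^{(4)} = [A+1, B, -A, -B, A+T]$ with $A = X+Y$, $B = (X-Y)i$ has $\det M_{f^{(4)}} = -(A^2+B^2)(T+1) - AT$, which vanishes on a nontrivial subvariety of $\{(A,B)\neq(0,0),\ T\neq 0\}$ (for instance at $A=0$, $T=-1$ — a point actually hit by legal $(x,y,t)$); on this locus $f^{(4)}$ satisfies a second-order recurrence, Lemma~\ref{second-recurrence-relation} gives no conclusion, and Theorem~\ref{arity-4-dichotomy} does not immediately apply. Also, for $n$ odd and $t=1$, the operator $\partial_{[1,0,t]} = \partial$ kills the $[1,\pm i]$ parts and you never reach arity $4$ in form-preserving fashion. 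You propose the Figure~\ref{fig:8} gadget as a catch-all fallback and claim $\det\widetilde{M}_h$ is a nonzero polynomial on the region, but ``not identically zero'' does not mean ``nonvanishing,'' and you have not shown that the vanishing locus of $\det\widetilde{M}_h$ is disjoint from the exceptional locus of your primary reduction. The paper's own proof illustrates the danger: precisely at $2n=6$, $t=-1$ the first-choice rank argument fails, and the Figure~\ref{fig:8} gadget is invoked, and it \emph{also} degenerates (when $a+b^2=0$), forcing yet a third argument involving $\partial_{=_4}(f)\in\widehat{\mathscr{M}}^\dagger\setminus(\mathscr{P}\cup\mathscr{A})$. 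Similarly the paper needs Lemma~\ref{general-[1,a,0,a,0,a,b]} for the $a=0$ cases with $t=-1$ or $2n\equiv 2\pmod 4$. Until you compute $\det\widetilde{M}_h$ explicitly, locate its zero set, and show that together with the primary reduction (and whatever further cases emerge) you genuinely exhaust all $(x,y,t)$ with $(x,y)\neq(0,0)$, $t\neq 0$, the argument is a plan rather than a proof.
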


\begin{proof}
Let $a=x+y$, $b=(x-y)i$,
then $(a, b)\neq (0, 0)$.
Note that
\[f=  [a, b, -a, -b, \ldots, \pm b, \mp a] +
[1, 0, \ldots, 0, t]
= [a+1, b, -a, -b, a, \ldots, \pm b, \mp a+t].\]

Since $M_f$ has a rank 3 submatrix
$\left[\begin{smallmatrix} f_0 & f_1 & f_2 \\
f_1 & f_2 & f_3 \\
f_2 & f_3 & f_4 \\
f_{2n-2} & f_{2n-1} & f_{2n}
 \end{smallmatrix}\right]=\left[\begin{smallmatrix} a+1 & b & -a \\
b & -a & -b \\
-a & -b & a \\
\pm a & \pm b & \mp a + t
 \end{smallmatrix}\right]$, $M_f$ has rank 3.
 By Lemma~\ref{general-second-recurrence}, $f$ does not satisfy any second
order recurrence relation.
So $f\notin\mathscr{P}\cup\widetilde{\mathscr{A}}\cup\widetilde{\mathscr{M}}$
by Lemma~\ref{second-recurrence-relation}.

\begin{enumerate}
\item For $a\neq 0$,
let $\bar{f}=[f_0, f_2, \ldots, f_{2n}]$,
then $\bar{f}=a[1, -1]^{\otimes n}+[1, 0]^{\otimes n}+t[0, 1]^{\otimes n}$
and $\PlCSP(\bar{f})\leq\PlCSP^2(f)$ by Lemma~\ref{domain-pairing-expand}.
Note that $\bar{f}=[a+1, -a, a, \ldots, \pm a, \mp a+t]$
has arity $n\geq 3$.
\begin{itemize}
\item For $2n\geq 8$ or [$2n=6$ and $t \not =-1$],
we claim that $\bar{f}\notin\mathscr{P}\cup\mathscr{A}\cup\widehat{\mathscr{M}}$.

For $2n\geq 8$,
Since $M_{\bar{f}}$ has a rank 3 submatrix
$\left[\begin{smallmatrix} \bar{f}_0 & \bar{f}_1 & \bar{f}_2 \\
\bar{f}_1 & \bar{f}_2 & \bar{f}_3 \\
\bar{f}_{n-2} & \bar{f}_{n-1} & \bar{f}_{n}
 \end{smallmatrix}\right]=\left[\begin{smallmatrix} a+1 & -a & a \\
-a & a & -a \\
\mp a & \pm a & \mp a +t
 \end{smallmatrix}\right]$,
 $M_{\bar{f}}$ has rank 3.
 Thus $\bar{f}$ does not satisfy any second order recurrence relation
 by Lemma \ref{general-second-recurrence}.
So $\bar{f}\notin\mathscr{P}\cup\mathscr{A}\cup\widehat{\mathscr{M}}$
by Lemma~\ref{second-recurrence-relation}.

For $2n=6$ and $t\neq -1$,
$M_{\bar{f}}$ is a $2\times 3$ matrix and has rank less than $3$.
So it always satisfies a second order recurrence relation.
But we still show  that $\bar{f}\notin\mathscr{P}\cup\mathscr{A}\cup\widehat{\mathscr{M}}$.

Note that $\bar{f}=[a+1, -a, a, -a+t]$ when $n=3$.
\begin{itemize}
\item $\bar{f}$ is non-degenerate by $(\bar{f}_1)^2\neq \bar{f}_0\bar{f}_2$
and $\bar{f}$ is not {\sc Gen-Eq} since $\bar{f}_1\neq 0$, so $\bar{f}\notin\mathscr{P}$.

\item If $\bar{f}\in\mathscr{A}\setminus\mathscr{P}$, then
$\bar{f}$ has type $\langle 1, 0, \pm 1\rangle$
by Lemma~\ref{second-recurrence-relation}.
By $\bar{f}_0-\bar{f}_2\neq 0$, $\bar{f}$ does not
have type $\langle 1, 0, -1\rangle$.
If $\bar{f}$ has type
$\langle 1, 0, 1\rangle$,
then $\bar{f}_0+\bar{f}_2=0$, $\bar{f}_1+\bar{f}_3=0$.
This implies $t=-1$. It is a contradiction.
Thus $\bar{f}\notin\mathscr{A}\setminus\mathscr{P}$.

\item By $\bar{f}_1=-\bar{f}_2\neq 0$ and Lemma~\ref{arity-hat-M}, if $\bar{f}\in\widehat{\mathscr{M}}$,
then $\bar{f}_0=-\bar{f}_3$. This contradicts that $t\neq -1$.
Thus $\bar{f}\notin\widehat{\mathscr{M}}$.
\end{itemize}

To summarize, $\bar{f}\notin\mathscr{P}\cup\mathscr{A}\cup\widehat{\mathscr{M}}$
for  $2n\geq 8$, or [$2n=6$ and $t\neq -1$].
Thus $\PlCSP(\bar{f})$  is $\#$P-hard by Theorem~\ref{pl-dicho-1}.
So $\PlCSP^2(f)$  is $\#$P-hard.

\item For $2n=6$ and $t=-1$, we have $f=[a+1, b, -a, -b, a, b, -a-1]$.
Firstly, we have $\partial^2(f)=[1, 0, -1]$ and $f'=\partial_{[1, 0, -1]}(f)=[1+2a, 2b, -2a, -2b, 1+2a]$.
The compressed signature matrix of $f'$ is
$\left[\begin{smallmatrix} 1+2a & 2b & -2a \\
2b & -2a & -2b\\
-2a & -2b & 1+2a \end{smallmatrix}\right]$
and its determinant is $-2(4 a^2 + 4 b^2 + a)$.
 If $4 a^2 + 4 b^2 + a \neq 0$,
 then it is nonsingular,
 and we are done by Lemma~\ref{4-redundant}.

 Otherwise we have $4 a^2 + 4 b^2 + a = 0$.
 Consider the gadget in Figure~\ref{fig:8}.
 We assign $f$ to both vertices.
 The signature of this gadget is redundant,
 and its compressed signature matrix is
 \[
  \begin{bmatrix}
   1+2a+8a^2+8b^2 & b & -2a-8a^2-8b^2 \\
   b & 8a^2+8b^2 & -b\\
   -2a-8a^2-8b^2 & -b & 1+2a+8a^2+8b^2
  \end{bmatrix}
  =
  \begin{bmatrix}
   1 & b & 0 \\
   b & -2a & -b\\
   0 & -b & 1
  \end{bmatrix}.
 \]
 If $a + b^2 \neq 0$,
 then this matrix is nonsingular,
 so we are done by Lemma~\ref{4-redundant}.

 Otherwise we have $4 a^2 + 4 b^2 + a = 0$ and $a + b^2 = 0$.
Also we have $a \not =0$.
 By solving these two equations, $a = \frac{3}{4}$ and $b  =\pm \frac{\sqrt{3}}{2} i$.
 Moreover, we have $\partial_{=_4}(f)=[1+2a, 2b, -1-2a]=[\frac{5}{2}, \pm \sqrt{3}i, -\frac{5}{2}]$.
 By Lemma~\ref{M-2-M-NOT-IN-A-AND-P}, $\partial_{=_4}(f)
\in \widehat{\mathscr{M}}^\dagger\setminus(\mathscr{P}\cup\mathscr{A})$.
Recall that $f \notin \widehat{\mathscr{M}}^\dagger$.
Thus $\PlCSP^2(f, [1+2a, 2b, -1-2a])$ is $\#$P-hard by Lemma~\ref{2-M-even} and we are done.
\end{itemize}

\item For $a=0$, then $b\neq 0$ by $(a, b)\neq (0, 0)$.
\begin{itemize}
\item
if $2n\equiv 0\pmod 4$ and $t\neq -1$, then
\[f''=\partial^{\frac{n-2}2}_{=_4}(f)=2^{\frac{n-2}2}x[1, i]^{\otimes 4}+2^{\frac{n-2}2}y[1, -i]^{\otimes 4}+[1, 0]^{\otimes 4}+t[0, 1]^{\otimes 4},\]
i.e., $f''=[1, 2^{\frac{n-2}2}b, 0, -2^{\frac{n-2}2}b, t]$.
Note that $f''$ is redundant and the determinant of its compressed signature matrix
is $-2^{n-2}b^2(t+1)$.
By $t\neq -1$ and $b\neq 0$, the compressed signature matrix
 is nonsingular.
So $\PlCSP^2(f'')$ is \numP-hard by Lemma~\ref{4-redundant}.
Thus $\PlCSP^2(f)$ is \numP-hard.

\item
if $2n\equiv 0\pmod 4$ and $t=-1$, we have $\partial^{n-1}(f)=[1, 0, -1]$ and
\[f'''=\partial^{n-3}_{[1, 0, -1]}(f)=2^{n-3}x[1, i]^{\otimes 6}+2^{n-3}y[1, -i]^{\otimes 6}+[1, 0]^{\otimes 6}+(-1)^{n-2}[0, 1]^{\otimes 6},\]
i.e., $f'''=[1, 2^{n-3}b, 0, -2^{n-3}b, 0, 2^{n-3}b, (-1)^{n-2}]$.
By Lemma~\ref{general-[1,a,0,a,0,a,b]}, $\PlCSP^2(f''')$ is \#P-hard and we are done.

\item if $2n\equiv 2\pmod 4$, we have
\[f^{(4)}=\partial^{\frac{n-3}2}_{=_4}(f)=2^{\frac{n-3}2}x[1, i]^{\otimes 6}+2^{\frac{n-3}2}y[1, -i]^{\otimes 6}+[1, 0]^{\otimes 6}+t[0, 1]^{\otimes 6}.\]
Note that $f^{(4)}=[1, 2^{\frac{n-3}2}b, 0, -2^{\frac{n-3}2}b, 0, 2^{\frac{n-3}2}b, t]$.
By Lemma~\ref{general-[1,a,0,a,0,a,b]}, $\PlCSP^2(f^{(4)})$ is \#P-hard and we are done. \qedhere
\end{itemize}
\end{enumerate}
\end{proof}

We will use the next lemma in the proof of Theorem~\ref{general-single-dichotomy} for
the case that $\partial(f)=[1, i]^{\otimes 2n-2}+i^r[1, -i]^{\otimes 2n-2}$.
In this case, we will transform $\PlCSP^2$ to $\PlCSP^4$
by holographic transformation and gadget construction.
This is why we have to deal with $\PlCSP^4$ problems in the next lemma.

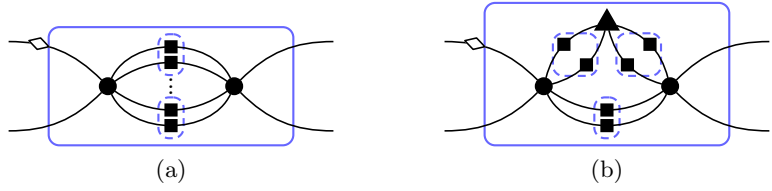
\begin{figure}[t]
 \centering
 \subfloat[]{
  \begin{tikzpicture}[scale=\scale,transform shape,node distance=\nodeDist,semithick]
   \node[internal]  (0)                    {};
   \node[external]  (1) [above left  of=0] {};
   \node[external]  (2) [below left  of=0] {};
   \node[external]  (3) [left        of=1] {};
   \node[external]  (4) [left        of=2] {};
   \node[external]  (5) [right       of=0] {\huge \raisebox{6pt}{$\vdots$}};
   \node[internal]  (6) [right       of=5] {};
   \node[external]  (7) [above right of=6] {};
   \node[external]  (8) [below right of=6] {};
   \node[external]  (9) [right       of=7] {};
   \node[external] (10) [right       of=8] {};
  \path (3) edge[in= 135, out=   0,postaction={decorate, decoration={
                                               markings,
                                               mark=at position 0.4 with {\arrow[>=diamond, white] {>}; },
                                               mark=at position 0.4 with {\arrow[>=open diamond]   {>}; } } }] (0)
        (0) edge[out=-135, in=   0]  (4)
             edge[out=  65, in= 115] node[square] (s1) {} (6)
             edge[out=  35, in= 145] node[square] (s2) {} (6)
             edge[out= -35, in=-145] node[square] (s3) {} (6)
             edge[out= -65, in=-115] node[square] (s4) {} (6)
         (6) edge[out=  45, in= 180]  (9)
             edge[out= -45, in= 180] (10);
   \begin{pgfonlayer}{background}
    \node[draw=\borderColor,thick,rounded corners,densely dashed,fit = (s1) (s2),inner sep=4pt] {};
    \node[draw=\borderColor,thick,rounded corners,densely dashed,fit = (s3) (s4),inner sep=4pt] {};
    \node[draw=\borderColor,thick,rounded corners,fit = (1) (2) (7) (8)] {};
   \end{pgfonlayer}
  \end{tikzpicture}
  \label{subfig:9-1}
 }
 \qquad
 \subfloat[]{
  \begin{tikzpicture}[scale=\scale,transform shape,node distance=\nodeDist,semithick]
   \node[internal]  (0)                    {};
   \node[external]  (1) [above left  of=0] {};
   \node[external]  (2) [below left  of=0] {};
   \node[external]  (3) [left        of=1] {};
   \node[external]  (4) [left        of=2] {};
   \node[external]  (5) [right       of=0] {};
   \node[triangle]  (6) [above       of=5] {};
   \node[internal]  (7) [right       of=5] {};
   \node[external]  (8) [above right of=7] {};
   \node[external]  (9) [below right of=7] {};
   \node[external] (10) [right       of=8] {};
   \node[external] (11) [right       of=9] {};
  \path (3) edge[in= 135, out=   0,postaction={decorate, decoration={
                                               markings,
                                               mark=at position 0.4 with {\arrow[>=diamond, white] {>}; },
                                               mark=at position 0.4 with {\arrow[>=open diamond]   {>}; } } }] (0)
        (0) edge[out=-135, in=   0]  (4)
             edge[bend left]         node[square] (s1) {} (6)
             edge[bend right]        node[square] (s2) {} (6)
             edge[out= -35, in=-145] node[square] (s3) {} (7)
             edge[out= -65, in=-115] node[square] (s4) {} (7)
         (6) edge[bend left]         node[square] (s5) {} (7)
             edge[bend right]        node[square] (s6) {} (7)
         (7) edge[out=  45, in= 180] (10)
             edge[out= -45, in= 180] (11);
   \begin{pgfonlayer}{background}
    \node[draw=\borderColor,thick,rounded corners,densely dashed,fit = (s1) (s2),inner sep=3pt] {};
    \node[draw=\borderColor,thick,rounded corners,densely dashed,fit = (s3) (s4),inner sep=4pt] {};
    \node[draw=\borderColor,thick,rounded corners,densely dashed,fit = (s5) (s6),inner sep=3pt] {};
    \node[draw=\borderColor,thick,rounded corners,fit = (1) (2) (6) (8) (9)] {};
   \end{pgfonlayer}
  \end{tikzpicture}
  \label{subfig:9-2}
 }
 \caption{Two gadgets used to obtain a signature whose signature matrix is redundant.
 The dashed subgadgets are assigned $[1,0,1]^{\otimes 2}$ rotated so that it is equivalent to assigning $[1,0,1]$ to the square vertices.}
 \label{fig:9}
\end{figure}

\begin{lemma} \label{general-x=y=0-[1,i]-[1,-i]}
 Suppose $f = [0, 1, 0, \dotsc, 0, a, 0]$ has arity $2n \geq 6$.
 If $a^4 = 1$,
 then the problem $\PlCSP^4(f,$ $[1, 0, 1]^{\otimes 2}, [1, 0, 1, 0, 1])$ is \numP-hard.
\end{lemma}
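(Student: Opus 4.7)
}

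The plan is to reduce the given $\PlCSP^4$ problem to a $\PlCSP^2$ problem about $f$ alone, and then establish \numP-hardness of that $\PlCSP^2$ problem by exhibiting a gadget signature that lies outside every tractable class. The three ingredients correspond to three steps: (i) show $f$ itself falls outside $\mathscr{P}\cup\widetilde{\mathscr{A}}\cup\widetilde{\mathscr{M}}$; (ii) upgrade the arity-$4$ \#CSP setting to arity-$2$; (iii) construct a provably hard arity-$4$ gadget in the $\PlCSP^2$ setting.

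First I will verify that $f=[0,1,0,\dots,0,a,0]$ satisfies no nontrivial second-order recurrence. The matrix $M_f$ from Lemma~\ref{general-second-recurrence}, formed from consecutive triples $(f_k,f_{k+1},f_{k+2})$, has rank $3$: its first rows give $[0,1,0],[1,0,0],[0,0,0]$ (linearly independent once combined with the analogous rows near the $a$-entry at position $2n-1$). Hence no coefficient triple $(\alpha,\beta,\gamma)\ne0$ annihilates all consecutive triples, so by Lemma~\ref{second-recurrence-relation}, $f\notin\mathscr{P}\cup\widetilde{\mathscr{A}}\cup\widetilde{\mathscr{M}}$. This is the structural fact we will leverage to produce hardness.

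Next I move from $\PlCSP^4$ to $\PlCSP^2$. When $2n\equiv 0\pmod 4$ I apply Lemma~\ref{mixing-P-global-binary} directly, since both $f$ and $[1,0,1,0,1]$ then have arity divisible by $4$, so the $[1,0,1]^{\otimes 2}$ can globally pair up the arity-$\equiv 2$ equalities of $\EQ_2$ to simulate them inside $\EQ_4$; this gives
\[\PlCSP^2\!\bigl(f,[1,0,1,0,1]\bigr)\;\le_T\;\PlCSP^4\!\bigl(f,[1,0,1]^{\otimes 2},[1,0,1,0,1]\bigr).\]
When $2n\equiv 2\pmod 4$ the number of occurrences of $f$ in an instance $\Omega$ of $\PlCSP^2$ may be odd, so the pairing strategy fails on $\Omega$ itself. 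I then mimic the ``disjoint-union'' trick in the proof of Corollary~\ref{2-M-double-roots}: applying the global simulation to $\Omega\uplus\Omega$ (which satisfies the parity requirement) computes the square of the Holant of $\Omega$, which is still \numP-hard to determine and therefore suffices for the reduction.

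Finally, in $\PlCSP^2(f,[1,0,1,0,1])$ I must exhibit hardness. The natural attempts, namely the derivatives $\partial^{k}(f)$ and $\partial_{[1,0,1,0,1]}^{k}(f)$, both preserve the shape $[0,1,0,\dots,0,a,0]$ and eventually land at the arity-$4$ signature $[0,1,0,a,0]$; since $a^4=1$, this arity-$4$ signature unfortunately sits inside $\mathscr{A}$ (for $a=\pm1$) or $\mathscr{A}^\dagger$ (for $a=\pm i$), so Theorem~\ref{arity-4-dichotomy} cannot be invoked on it directly. The hard part of the proof will therefore be to construct a \emph{different} arity-$4$ gadget $g$, built from two copies of $f$ connected through $[1,0,1,0,1]$ in a non-parity-preserving (and hence asymmetric) pattern, which breaks the second-order recurrence structure and hence escapes all of $\mathscr{P}\cup\widetilde{\mathscr{A}}\cup\widetilde{\mathscr{M}}$; the redundancy criterion Lemma~\ref{4-redundant} together with Theorem~\ref{arity-4-dichotomy} should then close out the hardness. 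The main obstacle is verifying that such a gadget simultaneously avoids all five tractable classes for every choice $a\in\{1,-1,i,-i\}$, which will require a case split on $a$ and a careful determinant/redundancy calculation for the compressed signature matrix of the resulting asymmetric arity-$4$ gadget, possibly conjugated by a diagonal $T\in\mathcal{T}_4$ (which preserves the $\PlCSP^4$ setting) to normalize the $a=\pm i$ cases to the real ones.
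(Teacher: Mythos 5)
Your proposal correctly identifies the central difficulty -- that the derivative operators $\partial$ and $\partial_{[1,0,1,0,1]}$ merely shrink $f$ to $[0,1,0,a,0]$, which lies in $\mathscr{A}$ (for $a=\pm 1$) or $\mathscr{A}^\dagger$ (for $a=\pm i$) and so gives no hardness by itself. But there are several gaps in the proposed remedy.

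First, the disjoint-union trick for $2n\equiv 2\pmod 4$ fails here. That device, as used in Corollary~\ref{2-M-double-roots}, relies on the Holant value being a nonnegative integer so that $Z(\Omega)$ is recoverable from $Z(\Omega\uplus\Omega)=Z(\Omega)^2$. For $a\in\{-1,i,-i\}$ the signature $f$ has a negative or non-real entry, so $Z(\Omega)$ need not be a nonnegative real and there is a genuine sign/phase ambiguity; knowing $Z(\Omega)^2$ is strictly weaker than knowing $Z(\Omega)$, so the reduction $\PlCSP^2(f,[1,0,1,0,1])\le_T\PlCSP^4(\dots)$ does not follow. The paper avoids this by constructing the arity-$4$ gadget \emph{inside} the $\PlCSP^4$ setting first (using $[1,0,1]^{\otimes 2}$ to supply $[1,0,1]$'s), so that Lemma~\ref{mixing-P-global-binary} is applied only to the resulting arity-$4$ signature, which always has arity divisible by $4$ regardless of the residue of $2n$.

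Second, a ``non-parity-preserving'' gadget is structurally impossible: $f$, $[1,0,1]$, $[1,0,1,0,1]$, and every equality in $\EQ_2$ or $\EQ_4$ all satisfy parity constraints, so by the parity argument of Remark~\ref{rmk:E2:1} every gate built from them satisfies parity as well. The paper's gadget in Figure~\ref{subfig:9-1} (two copies of $f$ connected by $2n-2$ parallel $[1,0,1]$'s) does satisfy parity; for $a^2=1$ its compressed signature matrix is $\operatorname{diag}(2n-2,\,1+a^2,\,(2n-2)a^2)$, which is nonsingular, and Lemma~\ref{4-redundant} finishes. For $a^2=-1$ that matrix degenerates ($1+a^2=0$), and no parity-respecting gadget of that shape will help; instead the paper splits on $2n\bmod 4$: for $2n\equiv 0$ it uses the No-Mixing Theorem~\ref{mixing-theorem} on $\partial^{n-2}(f)=[0,1,0,\pm i,0]\in\mathscr{A}^\dagger\setminus(\mathscr{P}\cup\mathscr{A}\cup\widetilde{\mathscr{M}})$ against $[1,0,1,0,1]\in\mathscr{A}\setminus\mathscr{A}^\dagger$, and for $2n\equiv 2$ it builds the gadget of Figure~\ref{subfig:9-2} whose compressed matrix is $\operatorname{diag}(2,\,1\mp i,\,\mp 2i)$. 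Your plan does not invoke No-Mixing at all, and there is no visible route to hardness for the $a^2=-1$, $2n\equiv 0\pmod 4$ subcase without it.

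Third, the suggested normalization by $T\in\mathcal{T}_4$ does not preserve the auxiliary signatures: if $T=\trans{1}{0}{0}{\omega}$ with $\omega^2=-1$ then $T^{\otimes 4}[1,0,1,0,1]=[1,0,-1,0,1]$ and $(T^{-1})^{\otimes 4}[1,0,1]^{\otimes 2}=[1,0,-1]^{\otimes 2}$, so the $\PlCSP^4$ problem you obtain is no longer the one the lemma is about. The $a=\pm i$ cases cannot be reduced to the $a=\pm 1$ cases this way, which is exactly why the paper handles $a^2=1$ and $a^2=-1$ by different arguments.
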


\begin{proof}
In $\PlCSP^4(f,$ $[1, 0, 1]^{\otimes 2}, [1, 0, 1, 0, 1])$, we do not have
$=_2$ on the left, so we cannot connect the two edges on the right freely.
 But we do have $[1, 0, 1]^{\otimes 2}$ on the right and $=_4$ on the left,
so we can do a loop to a pair of $=_4$ on the left respectively and we get $[1, 0, 1]^{\otimes 2}$ on the left.

 Suppose $a^2 = 1$.
 Consider the gadget in Figure~\ref{subfig:9-1}.
 We assign $f$ to the circle vertices and $[1, 0, 1]^{\otimes 2}$ to the dashed subgadgets
 rotated so that it is equivalent to assigning $[1,0,1]$ to the square vertices,
where there are $2n-2$ parallel edges connecting
the 2 copies of $f$ with $2n-2$ square vertices.
 The signature of this gadget is redundant,
 and its compressed signature matrix is
 $\left[
  \begin{smallmatrix}
   2n-2 & 0 & 0 \\
   0    & 1+a^2 & 0\\
   0    & 0 & (2n-2)a^2
  \end{smallmatrix}
 \right]$,
 which is nonsingular, by $a^2=1$.
 Thus
 we have
 \[
  \PlCSP^2(f')
  \leq_T
  \PlCSP^4(f', [1, 0, 1]^{\otimes 2})
  \leq_T
  \PlCSP^4(f,[1, 0, 1]^{\otimes 2}, [1, 0, 1, 0, 1]),
 \]
 where the first $\leq_T$ is by Lemma~\ref{mixing-P-global-binary}.
 Then we are done by Lemma~\ref{4-redundant}.

 For $a^2 = -1$,
 the gadget in Figure~\ref{subfig:9-1} cannot work since the compressed signature matrix
 of its resulting signature is $\left[
  \begin{smallmatrix}
   2n-2 & 0 & 0 \\
   0    & 0 & 0\\
   0    & 0 & -2n+2
  \end{smallmatrix}
 \right]$ which is singular.

 We consider two cases.
 \begin{itemize}
  \item Suppose $2n \equiv 0 \pmod 4$.
  Then by Lemma~\ref{mixing-P-global-binary},
  we have
  \begin{equation}\label{general-4-2}
   \PlCSP^2(f, [1, 0, 1, 0, 1])
   \leq_T
   \PlCSP^4(f, [1, 0, 1, 0, 1], [1, 0, 1]^{\otimes 2}).
  \end{equation}
  In $\PlCSP^2(f, [1, 0, 1, 0, 1])$,
  we have $f' = \partial^{n-2}(f) = [0, 1, 0, \pm i, 0]$.
  Note that $f' \in \mathscr{A}^\dagger$
by considering
 $\left[\begin{smallmatrix} 1 & 0 \\
0 & \alpha \end{smallmatrix}\right]^{\otimes 4} f'$,
and also $f' \notin \mathscr{P} \cup \mathscr{A} \cup \widetilde{\mathscr{M}}$
by considering its type  $\langle 1, 0, \pm i\rangle$,
and by Lemma~\ref{second-recurrence-relation}.
  Furthermore we have
$[1,0,1,0,1] \in \mathscr{A}$,
and also $[1,0,1,0,1]  \notin \mathscr{A}^\dagger$ by its type
 $\langle 1, 0, -1 \rangle$,
and by Lemma~\ref{second-recurrence-relation}.
  Thus $\PlCSP^2(f', [1,0,1,0,1])$ is \numP-hard by Theorem~\ref{mixing-theorem} and we are done.

  \item For  $2n \equiv 2 \pmod 4$, we cannot use Lemma~\ref{mixing-P-global-binary}
  to get the reduction in (\ref{general-4-2})
  since Lemma~\ref{mixing-P-global-binary} requires
that  all signatures on the right have arity $\equiv 0\pmod 4$.
  But we have $f' = \partial^{\frac{n-3}2}_{=_4}(f) = [0, 1, 0, 0, 0, \pm i, 0]$
  as well as $\partial_{=_4}(f') = (1\pm i)[0, 1, 0]$.
We may use $[1, 0, 1]^{\otimes 2}$ of the LHS to transport
this $[0, 1, 0]$ from the RHS to the LHS as follows:
Let $f(x_1, y_1, x_2, y_2)$ be the function $[1, 0, 1]^{\otimes 2}$
which is $1$ iff $x_1 = y_1$ and $x_2 = y_2$, and $0$ otherwise.
Then we connect $x_1$ and $x_2$ with the two edges of $[0, 1, 0]$ from the RHS.
This creates $[0, 1, 0]$ on the LHS, with which we can take derivative
of $f'$ from the RHS.
  Then we have $\partial_{[0, 1, 0]}(f') = [1, 0, 0, 0, \pm i]$.
  Consider the gadget in Figure~\ref{subfig:9-2}.
  We assign $f'$ to the circle vertices,
  $[1, 0, 0, 0,\pm i]$ to the triangle vertex,
  and $[1, 0, 1]^{\otimes 2}$ to the dashed subgadgets
  rotated so that it is equivalent to assigning $[1,0,1]$ to the square vertices.
  The signature $f''$ of this gadget is redundant,
  and its compressed signature matrix is
  $\left[
   \begin{smallmatrix}
    2 &       0 &      0 \\
    0 & 1 \mp i &      0 \\
    0 &       0 & \mp 2i
   \end{smallmatrix}
  \right]$,
  which is nonsingular.
  Thus $\PlCSP^2(f'')$ is $\#$P-hard by Lemma~\ref{4-redundant}.
  Moreover,
  we have
  \[
   \PlCSP^4(f'', [1, 0, 1]^{\otimes 2})
   \leq_T
   \PlCSP^4(f, [1, 0, 1, 0, 1], [1, 0, 1]^{\otimes 2})
  \]
  and
  \[
   \PlCSP^2(f'')
   \leq_T
   \PlCSP^4(f'', [1, 0, 1]^{\otimes 2})
  \]
  by Lemma~\ref{mixing-P-global-binary} and we are done.
  Now Lemma~\ref{mixing-P-global-binary} can work since $f''$ has arity 4.
  \qedhere
 \end{itemize}
\end{proof}

Now we are ready to prove the main theorem of this section,
the dichotomy of $\PlCSP^2(f)$, where $f$ has a general even arity $2n$.
We will prove the theorem by induction on the arity $2n$.
The base cases $2n=2$ and $2n=4$ are already done in
Theorem~\ref{thm:parII:k-reg_homomorphism}$'$ and Theorem~\ref{arity-4-dichotomy}, respectively.
We always have $f'=\partial(f)$ in $\PlCSP^2(f)$ which has arity $2n-2$.
If $f'\notin \mathscr{P} \cup \widetilde{\mathscr{A}} \cup \widetilde{\mathscr{M}}$,
then $\PlCSP^2(f')$ is \numP-hard by induction and
$\PlCSP^2(f)$ is
\numP-hard.
Otherwise, for $f'\in \mathscr{P} \cup \widetilde{\mathscr{A}} \cup \widetilde{\mathscr{M}}$,
we can explicitly express $f$ by the integral operator
$\int (f')$.
We will prove the theorem in the following order:\\
(1) $f'\in\mathscr{P}$, (2) $f'\in\mathscr{A}^{\dagger}\setminus\mathscr{P}$,
(3) $f'\in\mathscr{A}\setminus\mathscr{P}$,
(4) $f'\in\widehat{\mathscr{M}}\setminus(\mathscr{P}\cup\widetilde{\mathscr{A}})$,
and (5)
$f'\in\widehat{\mathscr{M}}^{\dagger}\setminus(\mathscr{P}
\cup\widetilde{\mathscr{A}})$.

Note that by Corollary~\ref{M-2-M-A-P},
Case (4) is equivalent to
$f'\in\widehat{\mathscr{M}}\setminus(\mathscr{P}\cup\widetilde{\mathscr{A}}
\cup\widehat{\mathscr{M}}^{\dagger})$,
and Case (5) is equivalent to
$f'\in\widehat{\mathscr{M}}^{\dagger}\setminus(\mathscr{P}
\cup\widetilde{\mathscr{A}}\cup\widehat{\mathscr{M}})$.

In the proof,
to use Theorem~\ref{arity-4-dichotomy},
we often construct arity 4 signatures by our Calculus with binary signatures or $=_4$.

\begin{theorem} \label{general-single-dichotomy}
 Let $f$ be a symmetric signature of even arity $2n$.
If $f \in \mathscr{P} \cup \widetilde{\mathscr{A}}
\cup \widetilde{\mathscr{M}}$,
 then $\PlCSP^2(f)$ is tractable.  Otherwise,
 $\PlCSP^2(f)$ is \numP-hard.
\end{theorem}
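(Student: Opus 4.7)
The plan is to prove Theorem~\ref{general-single-dichotomy} by induction on the arity $2n$, with base cases $2n=2$ and $2n=4$ already established in Theorem~\ref{thm:parII:k-reg_homomorphism}$'$ and Theorem~\ref{arity-4-dichotomy}. The tractable direction is immediate from the definitions of $\mathscr{P}$-, $\mathscr{A}$-, and $\mathscr{M}$-transformability, so I focus on hardness. Fix $2n \ge 6$ and let $f' = \partial(f)$, which has arity $2n-2$. If $f' \notin \mathscr{P} \cup \widetilde{\mathscr{A}} \cup \widetilde{\mathscr{M}}$, the inductive hypothesis gives \numP-hardness of $\PlCSP^2(f')$ and hence of $\PlCSP^2(f)$. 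So the heart of the argument assumes $f' \in \mathscr{P} \cup \widetilde{\mathscr{A}} \cup \widetilde{\mathscr{M}}$, and by Lemma~\ref{general-f-construction} together with Proposition~\ref{prop:explicit_list} I can write $f = \int(f') + x[1,i]^{\otimes 2n} + y[1,-i]^{\otimes 2n}$ for some scalars $x,y$, with $\int(f')$ in explicit form.

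I will then perform case analysis in the five cases listed by the author, in the stated order. In Case~(1), $f' \in \mathscr{P}$, which means $f'$ is either degenerate $[a,b]^{\otimes 2n-2}$ or a generalized equality $[c,0,\dots,0,d]$; when $f'$ is a nondegenerate \textsc{Gen-Eq} of the form $[1,0,\dots,0,t]$ with $t\neq 0$, after parsing the free part I invoke Lemma~\ref{deriviation-is-equality} to conclude hardness (unless $f$ itself lies in a tractable class, which forces $x = y = 0$ and $t^4 = 1$); when $f'$ is degenerate I construct a binary signature and apply Lemma~\ref{mixing-P-global}. In Cases~(2) and~(3), $f' = [1,\gamma]^{\otimes 2n-2} \pm i^r [1,-\gamma]^{\otimes 2n-2}$ for $\gamma^4 = \mp 1$, and I compute $\int(f')$ to determine whether $f$ sits in $\mathscr{A}^\dagger$ or $\mathscr{A}$; when it does not, I either directly extract an arity-$4$ signature violating Theorem~\ref{arity-4-dichotomy} or use $\partial_{=_4}$ and auxiliary constructions to get two signatures that mix across distinct tractable classes, invoking Theorem~\ref{mixing-theorem}. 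Case~(3) with $\rho^2 = -1$ requires special care, since $\partial$ annihilates the $\int(f')$ term; here I fall back to using $\partial_{=_4}$ and ad hoc gadgets, paralleling the odd-arity treatment of Section~\ref{PartII.secC.Odd-arity}.

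For Cases~(4) and~(5), $f'$ lies in $\widetilde{\mathscr{M}}\setminus(\mathscr{P}\cup\widetilde{\mathscr{A}})$, and Lemma~\ref{M-2-M-NOT-IN-A-AND-P} gives the explicit form of $f'$. By Corollary~\ref{M-2-M-A-P}, $f' \notin \widehat{\mathscr{M}}^\dagger$ in Case~(4) and $f' \notin \widehat{\mathscr{M}}$ in Case~(5); constructing $f$ from $\int(f')$ plus $x,y$ terms, I check whether $f$ remains in $\widehat{\mathscr{M}}$ (respectively $\widehat{\mathscr{M}}^\dagger$), in which case it is tractable. If not, I apply Lemma~\ref{M-even} directly in Case~(4), noting that $f'$ itself supplies the required member of $\widehat{\mathscr{M}} \setminus (\mathscr{P} \cup \widetilde{\mathscr{A}})$. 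For Case~(5), Lemma~\ref{2-M-even} requires a non-$\lambda[1,0,1]$ binary signature as assistance, which I manufacture either as $\partial^{n-2}(f')$ or by taking one $\partial_{=_4}$ step depending on the residue of $2n$ modulo $4$; the delicate subcase where $\partial(f') \equiv 0$ reduces to the situation handled by Lemma~\ref{general-x=y=0-[1,i]-[1,-i]} after a global $Z$-transformation and use of Lemma~\ref{construct-[1,0,1]-by-(1,i)-(1,-i)} and Lemma~\ref{mixing-P-global-binary}.

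The main obstacle will be Case~(1) with $f' = [1,0,\dots,0,t]$ and Case~(4)--(5) when $\int(f')$ is a sum whose "$x = y = 0$ residue" is itself in the \emph{other} of $\widehat{\mathscr{M}}, \widehat{\mathscr{M}}^\dagger$: the combined signature $f$ can look tantalizingly close to a tractable signature, and the parity obstructions flagged in Remark~\ref{rmk:E2:1} and Remark~\ref{rmk:E2:2} prevent the direct construction of useful binary signatures like $[1,\omega]^{\otimes 2}$. To overcome this I will rely on the arity-$4$ reductions (using Lemma~\ref{redundant} on the compressed signature matrix) combined with $\PlCSP^4$-style global simulations (Lemma~\ref{mixing-P-global-binary}), together with the auxiliary cyclotomic-field construction of $[1,0,1]^{\otimes 2}$ from Lemma~\ref{construct-[1,0,1]-by-(1,i)-(1,-i)} whenever a residue of the form $[1,i]^{\otimes 4} + i^r[1,-i]^{\otimes 4}$ appears. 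Throughout, the key technical trick is that once a single binary signature that escapes $\lambda[1,0,1]$ is constructed, the work is finished by one of Lemma~\ref{M-even}, Lemma~\ref{2-M-even}, or Theorem~\ref{mixing-theorem-2}; so the entire inductive step reduces to \emph{either} reconstructing $f$ as a member of a tractable set \emph{or} exhibiting one such binary signature, with several residue-modulo-$4$ subcases requiring slightly different gadgetry.
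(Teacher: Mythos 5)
Your plan mirrors the paper's structure almost exactly --- induction on arity, reduction to $f'=\partial(f)\in\mathscr{P}\cup\widetilde{\mathscr{A}}\cup\widetilde{\mathscr{M}}$, the five-way case split, the use of $\int(f')$ via Lemma~\ref{general-f-construction} and Proposition~\ref{prop:explicit_list}, and the citations of Lemma~\ref{deriviation-is-equality}, Lemma~\ref{mixing-P-global}, Lemma~\ref{M-even}, Lemma~\ref{2-M-even}, and Theorem~\ref{mixing-theorem}. Cases~(1)--(4) are in good shape. However, the Case~(5) treatment has a genuine gap.

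You write that in Case~(5) you will ``manufacture a non-$\lambda[1,0,1]$ binary signature'' by taking $\partial^{n-2}(f')$ or one $\partial_{=_4}$ step, with the ``delicate subcase where $\partial(f')\equiv 0$'' handled by Lemma~\ref{general-x=y=0-[1,i]-[1,-i]}. This is wrong on two counts. First, Lemma~\ref{general-x=y=0-[1,i]-[1,-i]} is actually the key tool in the paper's Case~(3) (when $f'=[1,i]^{\otimes 2n-2}+i^r[1,-i]^{\otimes 2n-2}$); it has nothing to do with Case~(5), and for $f'\in\widehat{\mathscr{M}}^\dagger\setminus(\mathscr{P}\cup\widetilde{\mathscr{A}})$ one always has $\partial(f')\neq 0$. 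Second, and more fundamentally, the paper proves (and you yourself acknowledge, via Remark~\ref{rmk:E2:1}) that when $2n\equiv 0\pmod 4$ and $f'_k=(\pm i)^k(2n-2-2k)$, or when $f'=[s,ti]^{\otimes 2n-2}-[t,si]^{\otimes 2n-2}$, every binary signature constructible from $f$ in $\PlCSP^2$ is of the form $\lambda[1,0,1]$, because after the $Z$-transformation a parity count on the bipartite gadget forces this. Your plan to extract an assisting binary signature therefore cannot succeed in these two subcases, and Lemma~\ref{2-M-even} cannot be invoked. The paper's actual argument in the hardest subcase ($f'=[s,ti]^{\otimes 2n-2}-[t,si]^{\otimes 2n-2}$, $2n\equiv 0\pmod 4$) proceeds entirely in the $Z$-transformed picture: it writes $\widehat{f}$ in the explicit form $\lambda[1+x,0,u^2,0,u^4,\dots,u^{2n-2},0,u^{2n}+y]$, builds several arity-$4$ signatures $\widehat{f^{(k)}}$ by repeated derivatives with respect to constructed quaternary signatures $[1,0,v,0,v^2]$, and then shows that the algebraic constraints forced by all of them lying in $\mathscr{M}$ reduce to two polynomial equations in $u$ with $\gcd(p_1(u),p_2(u))=1$, which is a contradiction. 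None of your cited lemmas supplies this; the polynomial-resultant argument is a new ingredient you would need to discover to close Case~(5).
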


\begin{proof}
If $f \in \mathscr{P} \cup \widetilde{\mathscr{A}} \cup \widetilde{\mathscr{M}}$,
then tractability follows from the definition of
$\mathscr{P}$-transformability,
$\mathscr{A}$-transformability,
and $\mathscr{M}$-transformability.
Now suppose $f \notin \mathscr{P} \cup \widetilde{\mathscr{A}} \cup \widetilde{\mathscr{M}}$.
If $2n \in \{2, 4\}$,
then we are done by Theorem~\ref{thm:parII:k-reg_homomorphism}$'$ and Theorem~\ref{arity-4-dichotomy} respectively.

For $2n\geq 6$,
we will prove the theorem by induction on arity $2n$.
If $f'=\partial(f)\notin\mathscr{P}\cup\widetilde{\mathscr{A}}\cup\widetilde{\mathscr{M}}$,
then $\PlCSP^2(f')$ is $\#$P-hard by induction.
Thus $\PlCSP^2(f)$ is $\#$P-hard.
Otherwise, $f'\in\mathscr{P}\cup\widetilde{\mathscr{A}}\cup\widetilde{\mathscr{M}}$.

\begin{enumerate}
\item For $f'\in\mathscr{P}$, we have
$f'\equiv 0$ or $f'=[a, b]^{\otimes 2n-2}$ (where $(a,b) \not = (0,0)$)
 or $f'=[1, 0]^{\otimes 2n-2}+t[0, 1]^{\otimes 2n-2}$ with $t\neq 0$
by definition. Note that $2n -2 \ge 4$.
\begin{enumerate}
\item $f'\equiv 0$. Then $f=x[1, i]^{\otimes 2n}+y[1, -i]^{\otimes 2n}$
by Proposition~\ref{prop:explicit_list}
(the Explicit List for $\int(f')$).

If $x=0$ or $y=0$, then $f\in\mathscr{P}$. If $xy\neq 0$ and $x^4=y^4$,
then $f\in\mathscr{A}$. In the following, assume that $xy\neq 0$ and $x^4\neq y^4$.

\begin{itemize}
\item For $2n\equiv 0\pmod 4$, we have
$f''=\partial^{\frac{n-2}2}_{=_4}(f)=2^{\frac{n-2}2}x[1, i]^{\otimes 4}+2^{\frac{n-2}2}y[1, -i]^{\otimes 4}$.
By $xy\neq 0$, $f''$ is non-degenerate,
and has the unique recurrence type $\langle 1,0, 1\rangle$.
Therefore $f''\notin\mathscr{P}\cup \mathscr{A}^\dagger \cup\widetilde{\mathscr{M}}$
by Lemma~\ref{second-recurrence-relation}. By $x^4\neq y^4$
it is also not in $\mathscr{A}$. Thus
$f''\notin\mathscr{P}\cup\widetilde{\mathscr{A}}\cup\widetilde{\mathscr{M}}$.
Therefore {\rm Pl-}$\#${\rm CSP}$^2(f'')$ is $\#$P-hard  by Theorem~\ref{arity-4-dichotomy}.
So  $\PlCSP^2(f)$ is $\#$P-hard.

\item For $2n\equiv 2\pmod 4$,
we cannot reduce the arity of $f$ to 4 by $=_4$ directly as
in the previous case.
We will construct a binary signature that is not $\lambda[1, 0, 1]$ to reduce the arity of $f$.
Firstly, we have $f'''=\partial^{\frac{n-1}2}_{=_4}(f)
=2^{\frac{n-1}2}x[1, i]^{\otimes 2}+2^{\frac{n-1}2}y[1, -i]^{\otimes 2}=2^{\frac{n-1}2}[a, b, -a]$, where $a=x+y, b=(x-y)i$.
We remark that $[a, b, -a]$ can reduce the arity of $f$, but it involves a case analysis of $a$ and $b$.
Instead  we use $[a, b, -a]$ to construct a simpler binary signature.

Note that $a\neq 0$ by $x^4\neq y^4$. Then we have $\partial_{[a, b, -a]}(=_4)=a[1, 0, -1]$ on the left.
Thus we have $f^{(4)}=\partial^{n-2}_{[1, 0, -1]}(f)=2^{n-2}x[1, i]^{\otimes 4}+2^{n-2}y[1, -i]^{\otimes 4}$.
With the same reason as in the previous case,
 $f^{(4)}\notin\mathscr{P}\cup\widetilde{\mathscr{A}}\cup\widetilde{\mathscr{M}}$ by its type, and by $xy\neq 0, x^4\neq y^4$.
Thus  $\PlCSP^2(f^{(4)})$ is $\#$P-hard  by Theorem~\ref{arity-4-dichotomy}.
So  {\rm Pl}-$\#{\rm CSP}^2(f)$ is $\#$P-hard.
\end{itemize}

\item $f'=[a, b]^{\otimes 2n-2}$ with $ab\neq 0$.
If $a^2+b^2\neq 0$, we have $\partial^{n-2}(f)=(a^2+b^2)^{n-2}[a, b]^{\otimes 2}$ and
we are done by Lemma~\ref{mixing-P-global}.

Suppose $a^2+b^2=0$, i.e., $f'=[1, \pm i]^{\otimes 2n-2}$ up to a scalar.
\begin{itemize}
\item For $2n\equiv 0\pmod 4$,
we have $\partial^{\frac{n-2}2}_{=_4}(f')=2^{\frac{n-2}2}[1, \pm i]^{\otimes 2}$
and are done by Lemma~\ref{mixing-P-global}.

\item For $2n \equiv 2\pmod 4$,
we cannot get $[1, \pm i]^{\otimes 2}$ in $\PlCSP^2(f')$ by Remark~\ref{rmk:E2:2}
(note that the arity of $f'$ is
$2n-2 \equiv  0 \pmod 4$).
To get $[1, \pm i]^{\otimes 2}$,
we need the help of $f$.
By Proposition~\ref{prop:explicit_list}
(the Explicit List for $\int(f')$),
 $f=x[1, i]^{\otimes 2n}+y[1, -i]^{\otimes 2n}+g$,
where $g$ has arity $2n$ and $g_k=\frac{1}{4}(\pm i)^k(2n-2k)$.
If $x=y=0$, then $f\in\widehat{\mathscr{M}}^\dagger$.
Otherwise, let $a=x+y$, $b=(x-y)i$, then $(a, b)\neq (0, 0)$.
We have $\partial^{\frac{n-1}2}_{=_4}(f)=
2^{\frac{n-1}2}x[1, i]^{\otimes 2}+2^{\frac{n-1}2}y[1, -i]^{\otimes 2}+2^{\frac{n-3}2}[1, 0, 1]$,
i.e., $\partial^{\frac{n-1}2}_{=_4}(f)=2^{\frac{n-3}2}[2a+1, 2b, -2a+1]$.

If $a\neq 0$,
then we have $\partial_{[2a+1, 2b, -2a+1]}(=_4)=[2a+1, 0, -2a+1]$ on the left
and $\partial^{n-2}_{[2a+1, 0, -2a+1]}(f')=(4a)^{n-2}[1, \pm i]^{\otimes 2}$.
Then we are done by Lemma~\ref{mixing-P-global}.

If $a=0$, then $b\neq 0$ and we have $[1, 2b, 1]$ and $\partial^{n-2}_{[1, 2b, 1]}(f')=(\pm 4bi)^{n-2}[1, \pm i]^{\otimes 2}$.
Then we are done by Lemma~\ref{mixing-P-global} again.
%
%
%
\end{itemize}

 %
 %
%
%

\item $f'=[1, 0]^{\otimes 2n-2}$.
Then $f=x[1, i]^{\otimes 2n}+y[1, -i]^{\otimes 2n}+[1, 0]^{\otimes 2n}$ by
Proposition~\ref{prop:explicit_list}
(the Explicit List for $\int(f')$).
If $x=y=0$, then $f\in\mathscr{P}$.
In the following, assume that $(x, y)\neq (0, 0)$. Let
$a=x+y$, $b=(x-y)i$, then $(a, b)\neq (0, 0)$.

We have $\partial^{n-1}(f)=[1, 0]^{\otimes 2}$
and $f''=\partial^{n-2}_{[1, 0]^{\otimes 2}}(f)=
x[1, i]^{\otimes 4}+y[1, -i]^{\otimes 4}+[1, 0]^{\otimes 4}$, i.e.,
$f''=[1+a, b, -a, -b, a]$.
Note that $f''$ is redundant.
If $a^2+b^2\neq 0$, then the compressed signature matrix of $f''$ is nonsingular
and we are done by Lemma~\ref{4-redundant}.

Otherwise, we have
 $a=\pm i b$.
We claim that
$f''\notin\mathscr{P}\cup\widetilde{\mathscr{A}}\cup\widetilde{\mathscr{M}}$.
Note that  $ab\neq 0$ by $(a, b)\neq (0, 0)$ and $a=\pm i b$.
If $f''$ is degenerate, then by
$(f''_1)^2=f''_{0}f''_2$, we have
 $-a-a^2=b^2$. This implies that $a=0$. It is a contradiction.
 Moreover, note that $f'' = [1+a, \mp ia, -a, \pm ia, a]$
and has type $\langle 0, 1 ,\pm i\rangle$.
 Since $f''$ is non-degenerate and has arity $\geq 3$, the second
order recurrence relation $\langle 0, 1 ,\pm i\rangle$ is unique up to a scalar.
 Thus $f''\notin\mathscr{P}\cup\widetilde{\mathscr{A}}\cup\widetilde{\mathscr{M}}$ by Lemma~\ref{second-recurrence-relation}.
  So $\PlCSP^2(f'')$ is $\#$P-hard  by Theorem~\ref{arity-4-dichotomy} and we are done.

\item
$f'=[0, 1]^{\otimes 2n-2}$.
The proof follows from the previous case by a holographic transformation using
$\left[\begin{smallmatrix}
0 & 1\\
1 & 0
\end{smallmatrix}\right]$.

\item $f'=[1, 0]^{\otimes 2n-2}+t[0, 1]^{\otimes 2n-2}$ with $t\neq 0$.
Then $f=x[1, i]^{\otimes 2n}+y[1, -i]^{\otimes 2n}+[1, 0]^{\otimes 2n}+t[0, 1]^{\otimes 2n}$ by
Proposition~\ref{prop:explicit_list}
(the Explicit List for $\int(f')$).
If $x=y=0$, then $f\in\mathscr{P}$.
Otherwise, we have $(x, y)\neq (0, 0)$ and we are done by Lemma~\ref{deriviation-is-equality}.
\end{enumerate}

\item For $f'\in\mathscr{A}^\dagger\setminus\mathscr{P}$,
we have $f'=[1, \alpha]^{\otimes 2n-2}+i^r[1, -\alpha]^{\otimes 2n-2}$ by definition (See Figure \ref{fig:venn_diagram:A_Adagger_P}).
Then $f=x[1, i]^{\otimes 2n}+y[1, -i]^{\otimes 2n}+\frac{1}{1+\alpha^2}\{[1, \alpha]^{\otimes 2n}+i^r[1, -\alpha]^{\otimes 2n}\}$
by Proposition~\ref{prop:explicit_list}
(the Explicit List for $\int(f')$).
If $x=y=0$, then $f\in\mathscr{A}^\dagger$.
In the following, assume that $(x, y)\neq (0, 0)$.

Note that $f'$ has type $\langle 1, 0, \pm i\rangle$ up to a scalar.
And this second order recurrence relation is unique up to a scalar.
Thus $f'\in\mathscr{A}^\dagger\setminus(\mathscr{P}\cup\mathscr{A}\cup\widetilde{\mathscr{M}})$
by Lemma \ref{second-recurrence-relation}.
In the following, we complete the proof by
constructing a signature of even arity
 in $(\mathscr{P}\cup\mathscr{A}\cup\widetilde{\mathscr{M}})\setminus\mathscr{A}^\dagger$
and
apply Theorem~\ref{mixing-theorem}, or constructing
 an arity 4 signature that is not in $\mathscr{P}\cup\widetilde{\mathscr{A}}\cup\widetilde{\mathscr{M}}$
 and apply Theorem~\ref{arity-4-dichotomy}.

Firstly, we have $f''=\partial^{n-3}(f')=(1+\alpha^2)^{n-3}\{[1, \alpha]^{\otimes 4}+i^r[1, -\alpha]^{\otimes 4}\}$.
We will discard the nonzero factor that are powers of $1+\alpha^2$.
If $r\neq 2$, we have $\partial(f'')=(1+i^r)[1, \frac{1-i^r}{1+i^r}\alpha, \alpha^2]$ and
we have $\partial_{[1, \frac{1-i^r}{1+i^r}\alpha, \alpha^2]}(=_4)=[1, 0, \alpha^2]$ on the left.
For $r=2$,  $\partial(f'')$ is a nonzero multiple of
$[0, 1, 0]$ and we have $\partial_{[0, 1, 0]}(f'')=2\alpha[1, 0, \alpha^2]$ on the right.
Either way, we can take the derivative (for $[1, 0, \alpha^2]$
in RHS we connect it via $(=_2)$ of LHS to $f$)
\[f'''=\partial^{n-2}_{[1, 0, \alpha^2]}(f)=(1-\alpha^2)^{n-2}\{x[1, i]^{\otimes 4}+y[1, -i]^{\otimes 4}\}.\]
Note that $\partial_{[1, 0, \alpha^2]}([1, \pm \alpha]^{2n})$ is the identically zero signature, since $\alpha^4 =-1$.

If $xy=0$, or [$xy\neq 0$ and $x^4=y^4$], then $f'''\in\mathscr{A}\setminus\mathscr{A}^\dagger$.
So $\PlCSP^2(f', f''')$ is $\#$P-hard by Theorem~\ref{mixing-theorem}.
Thus $\PlCSP^2(f)$ is $\#$P-hard.

Otherwise, $xy\neq 0$ and $x^4\neq y^4$, so $f'''\notin\mathscr{P}\cup\widetilde{\mathscr{A}}\cup\widetilde{\mathscr{M}}$
(by the same reason as before: first by its type  $\langle 1, 0, 1\rangle$
it could only possibly be in $\mathscr{A}$ among the five classes
by Lemma~\ref{second-recurrence-relation}; but
$x^4\neq y^4$ rules that out too).
Thus $\PlCSP^2(f''')$ is $\#$P-hard by Theorem~\ref{arity-4-dichotomy}.
So $\PlCSP^2(f)$ is $\#$P-hard.


\item For $f'\in\mathscr{A}\setminus\mathscr{P}$,
we have
 $f'=[1, \rho]^{\otimes 2n-2}+i^r[1, -\rho]^{\otimes 2n-2}$ by definition (See Figure~\ref{fig:venn_diagram:A_Adagger_P}).
\begin{itemize}
\item If $f'=[1, 1]^{\otimes 2n-2}+i^r[1, -1]^{\otimes 2n-2}$, then
$f=x[1, i]^{\otimes 2n}+y[1, -i]^{\otimes 2n}+\frac{1}2\{[1, 1]^{\otimes 2n}+i^r[1, -1]^{\otimes 2n}\}$
by Proposition~\ref{prop:explicit_list}
(the Explicit List for $\int(f')$).
If $x=y=0$, then $f\in\mathscr{A}$.
In the following, assume that $(x, y)\neq (0, 0)$.

By a holographic transformation using $H=\left[\begin{smallmatrix} 1 & 1 \\
1 & -1 \end{smallmatrix}\right]$, we have
\[
 \PlHolant([1, 0, 1], [1, 0, 1, 0, 1], \ldots \mid \hat{f'}, \hat{f})
 \equiv
 \PlCSP^2(f', f),
\]
where $\hat{f'}=(H^{-1})^{\otimes 2n-2}f'=[1, 0]^{2n-2}+i^r[0, 1]^{2n-2}
= [1, 0, \ldots, 0, i^r]$,
$\hat{f}=(H^{-1})^{\otimes 2n}f=x'[1, -i]^{\otimes 2n}+y'[1, i]^{\otimes 2n}+\frac{1}2\{[1, 0]^{\otimes 2n}+i^r[0, 1]^{\otimes 2n}\}$,
where $x'=\frac{(1+i)^{2n}}{2^{2n}}x$, $y'=\frac{(1-i)^{2n}}{2^{2n}}y$.
Note that $(x', y')\neq (0, 0)$.

Since we have $[1, 0, 1]$ on the left and $[1, 0, \ldots, 0, i^r]$ of arity $2n-2 \ge 4$
on the right
in
\[
 \PlHolant([1, 0, 1], [1, 0, 1, 0, 1], \ldots \mid \hat{f'}, \hat{f}),
\]
we can construct $=_{2k}$ on the right for $k\geq 1$
in the following way:
Firstly, connect four copies of $[1, 0, \ldots, 0, i^r]$ by three copies of
 $[1, 0, 1]$ in a planar fashion, to form an equality
$[1, 0, \ldots, 0, 1]$ of arity $4(2n-2) - 6 =8n -14$.
Then use $4n -9$ copies of $[1, 0, 1]$  to form loops on $(=_{8n -14})$,
and we get $(=_{4})$. From this, and $(=_2) = [1, 0, 1]$ on the left,
we can get all $(=_{2k})$ on the right for $k\geq 1$.
Then by $=_2$ on the left, we can construct all of $=_{2k}$
on the left.
Thus
\[
 \PlCSP^2(\hat{f'}, \hat{f})
 \leq
 \PlHolant([1, 0, 1], [1, 0, 1, 0, 1], \ldots \mid \hat{f'}, \hat{f}).
\]
By Lemma~\ref{deriviation-is-equality}
$\PlCSP^2(\hat{f})$ is \numP-hard.
Thus $\PlCSP^2(f)$ is \numP-hard.

\item If $f'=[1, i]^{\otimes 2n-2}+i^r[1, -i]^{\otimes 2n-2}$, then
$f=x[1, i]^{\otimes 2n}+y[1, -i]^{\otimes 2n}+\tilde{f}$,
 where $\tilde{f}$ has arity $2n$ and
$\tilde{f}_k=\frac{1}{4} \left\{ i^k(2n-2k)+i^r(-i)^k(2n-2k) \right\}$
by Proposition~\ref{prop:explicit_list}
(the Explicit List for $\int(f')$).
Under the holographic transformation by  $Z=\left[\begin{smallmatrix} 1 & 1 \\
i & -i \end{smallmatrix}\right]$,
the expressions are more revealing:
$f = Z^{\otimes 2n} [x, 1, 0, \ldots, 0, i^r, y]$,
and $f' = \partial(f) = Z^{\otimes (2n-2)}[1, 0, \ldots, 0, i^r]$.
However, if we apply the holographic transformation $Z$ to $\PlCSP^2(f,f')$,
we have
\[
 \PlHolant([0, 1, 0], [1, 0, 1, 0, 1], \ldots \mid \widehat{f}, \widehat{f'})
 \equiv
 \PlHolant(\mathcal{EQ}_2 \mid f, f'),
\]
where $\widehat{f}=(Z^{-1})^{\otimes 2n}f=[x, 1, 0, \ldots, 0, i^r, y]$,
and
$\widehat{f'}=(Z^{-1})^{\otimes 4}f=[1, 0, \ldots, 0, i^r]$.
Note that now we do not have $=_2$ on the left
in $\PlHolant([0, 1, 0], [1, 0, 1, 0, 1], \ldots |\widehat{f}, \widehat{f'})$.
This is inconvenient to construct gadget.
So, in the following steps we first try to
 construct $[1, 0, -1]^{\otimes 2}$ on the LHS
of $\PlCSP^2(f)$ to get
$\PlHolant([1, 0, -1]^{\otimes 2} \cup \mathcal{EQ}_2 \mid f)$.
  This will be done with the help of Lemma~\ref{construct-[1,0,1]-by-(1,i)-(1,-i)}.
Then after
the holographic transformation by  $Z$,
we have
$[1, 0, -1]^{\otimes 2} Z^{\otimes 2} = 4 [1, 0, 1]^{\otimes 2}$ on the left.

To apply Lemma~\ref{construct-[1,0,1]-by-(1,i)-(1,-i)}, we
construct $[1, i]^{\otimes 4}+i^s[1, -i]^{\otimes 4}$ in $\PlCSP^2(f, f')$
for some $0 \le s \le 3$ as follows.
\begin{itemize}
\item If $2n\equiv 2\pmod 4$, then we have
$\partial^{\frac{n-3}2}_{=_4}(f')=2^{\frac{n-3}2}\{[1, i]^{\otimes 4}+i^r[1, -i]^{\otimes 4}\}$.
\item If $2n\equiv 0\pmod 4$, then we have
$\partial^{\frac{n-2}2}_{=_4}(f')=2^{\frac{n-2}2}\{[1, i]^{\otimes 2}+i^r[1, -i]^{\otimes 2}\}=
2^{\frac{n-2}2}[1+i^r, (1-i^r)i, -(1+i^r)]$.
This is a nonzero multiple of $[1, \pm 1,-1]$ if $r\neq 2$,
and a nonzero multiple of $[0,1,0]$ if $r=2$.

If $r\neq 2$,
then we have $\partial_{[1,\pm 1,-1]}(=_4)=[1, 0, -1]$ on the left
and
\[\partial^{n-2}_{[1, 0, -1]}(f')=2^{n-2}\{[1, i]^{\otimes 4}+i^r[1, -i]^{\otimes 4}\}.\]

If $r=2$, we have  $\partial^{n-2}_{[0, 1, 0]}(f')=(2i)^{n-2}\{[1, i]^{\otimes 4}+i^r(-1)^{n-2}[1, -i]^{\otimes 4}\}$.
\end{itemize}
Thus we have $f''=[1, i]^{\otimes 4}+ i^s[1, -i]^{\otimes 4}$, for some $0 \le s \le 3$,
in $\PlCSP^2(f, f')$.
Then by Lemma~\ref{construct-[1,0,1]-by-(1,i)-(1,-i)},
we have $[1, 0, -1]^{\otimes 2}$ on the left,
i.e., we have
\[
 \PlHolant(\mathcal{EQ}_2, [1, 0, -1]^{\otimes 2} \mid f, f'')
 \equiv
 \PlCSP^2(f).
\]

By a holographic transformation using $Z=\left[\begin{smallmatrix} 1 & 1 \\
i & -i \end{smallmatrix}\right]$,
we have
\[\PlHolant([1, 0, 1]^{\otimes 2}, [0, 1, 0], [1, 0, 1, 0, 1], \ldots |\widehat{f}, \widehat{f''})
\equiv \PlHolant(\mathcal{EQ}_2, [1, 0, -1]^{\otimes 2}|f, f''),\]
where $\widehat{f}=(Z^{-1})^{\otimes 2n}f=[x, 1, 0, \ldots, 0, i^r, y]$, and
$\widehat{f''}=(Z^{-1})^{\otimes 4}f=[1, 0, 0, 0, i^s]$.

In $\PlHolant([1, 0, 1]^{\otimes 2}, [0, 1, 0], [1, 0, 1, 0, 1], \ldots |\widehat{f}, \widehat{f''})$,
by $[1, 0, 1]^{\otimes 2}$ on the left and
$\widehat{f''}$ on the right, we get $=_4$ on the right
as follows:
Use 4 copies of $\widehat{f''}$, connected together by 3
copies of $[1, 0, 1]^{\otimes 2}$ in a planar way. Each copy of
$[1, 0, 1]^{\otimes 2}$ connects two edges of one copy of $\widehat{f''}$
to another copy of $\widehat{f''}$ in such a way that the effect is
equivalent to connecting them by two copies of $(=_2) = [1, 0, 1]$.
 This way
we get an arity $16 - 12 =4$ signature $(=_4) = [1, 0, 0, 0, (i^s)^4]$.
Moreover, we have $=_{4k}$ for $k\geq 1$ on the right by $[1, 0, 1]^{\otimes 2}$ on the left
 and $=_4$ on the right in a similar way.
Then we can move $\widehat{f}$ to LHS by $[1, 0, 1]^{\otimes 2}$
because $\widehat{f}$ has even arity.
Thus we have
\[\PlHolant([1, 0, 1]^{\otimes 2}, [1, 0, 1, 0, 1],
 \widehat{f}|\mathcal{EQ}_4)\leq\PlHolant([1, 0, 1]^{\otimes 2}, [1, 0, 1, 0, 1], \ldots |\widehat{f}, \widehat{f''}).\]
Note that
\[
 \PlCSP^4(\widehat{f}, [1, 0, 1, 0, 1], [1, 0, 1]^{\otimes 2})
 \equiv
 \PlHolant([1, 0, 1]^{\otimes 2}, [1, 0, 1, 0, 1], \widehat{f} \mid \mathcal{EQ}_4).
\]
We will prove that $\PlCSP^4(\widehat{f}, [1, 0, 1, 0, 1], [1, 0, 1]^{\otimes 2})$ is $\#$P-hard
to complete the proof of this case.

Note that $\left[\begin{smallmatrix} \widehat{f}_0 & \widehat{f}_1 & \widehat{f}_2 \\
\widehat{f}_1 & \widehat{f}_2 & \widehat{f}_3 \\
\widehat{f}_{2n-3} & \widehat{f}_{2n-2} & \widehat{f}_{2n-1}
 \end{smallmatrix}\right]=\left[\begin{smallmatrix} x & 1 & 0 \\
1 & 0 & 0 \\
0 & 0 & i^r
 \end{smallmatrix}\right]$ has rank~$3$.
 Thus $\widehat{f}$ does not satisfy any second order recurrence relation
 by Lemma \ref{general-second-recurrence}.
So $\widehat{f}\notin\mathscr{P}\cup\widetilde{\mathscr{A}}\cup\widetilde{\mathscr{M}}$ by Lemma~\ref{second-recurrence-relation}.

If $(x, y)=(0, 0)$, we are done by Lemma~\ref{general-x=y=0-[1,i]-[1,-i]}.
In the following, assume that
$(x, y)\neq (0, 0)$.

\begin{itemize}
\item If $2n\equiv 0\pmod 4$, then
\begin{equation}\label{general-4-2-1}
\PlCSP^2(\widehat{f})\leq \PlCSP^4(\widehat{f}, [1, 0, 1]^{\otimes 2})
\end{equation}
by Lemma~\ref{mixing-P-global-binary}.

For $\PlCSP^2(\widehat{f})$,
 we have
$\widehat{f'''}=\partial^{n-2}(\widehat{f})=[x, 1, 0, i^r, y]$.
Note that $\widehat{f'''}$ is redundant.
If $(-1)^rx+y\neq 0$, then the
 compressed signature matrix  of $\widehat{f'''}$ is
nonsingular and we are done by Lemma~\ref{4-redundant}.

Otherwise, we have $x=\pm y$, and thus both $x, y  \not =0$.
It is easy to see that $\widehat{f'''}$ does not satisfy the second
order recurrence relations
$\langle 0, 1, 0\rangle$, $\langle 1, 0, \pm 1\rangle$, $\langle 1, 0, \pm i\rangle$.
Thus $\widehat{f'''}\notin\mathscr{P}\cup\widetilde{\mathscr{A}}$ by Lemma~\ref{second-recurrence-relation}.

We consider three possibilities for $\widehat{f'''}$.

{\small $\bullet$}
If $\widehat{f'''}\in\widehat{\mathscr{M}}\setminus(\mathscr{P}\cup\widetilde{\mathscr{A}})$,
then $\PlCSP^2(\widehat{f}, \widehat{f'''})$ is \#P-hard by Lemma~\ref{M-even},
where we have $\widehat{f} \notin \widehat{\mathscr{M}}$
because we have noted earlier that
$\widehat{f}\notin\mathscr{P}\cup\widetilde{\mathscr{A}}\cup\widetilde{\mathscr{M}}$.
Thus $\PlCSP^4(\widehat{f}, [1, 0, 1]^{\otimes 2})$ is \#P-hard
by (\ref{general-4-2-1}) and we are done.

{\small $\bullet$}
If $\widehat{f'''}\in\widehat{\mathscr{M}}^\dagger\setminus(\mathscr{P}\cup\widetilde{\mathscr{A}})$,
then $\widehat{f'''}=[x, 1, 0, 1, -x]$ by Corollary~\ref{mixing-M-arity-4}
(the other form $[u, v, w, v, u]$ with
$(u+w)w=2v^2$  in Corollary~\ref{mixing-M-arity-4}
is impossible because
$w=0$ here and $(u+w)w=2v^2$ would force $v=0$.)
Then  we are done by Lemma~\ref{2-M-even-X},
because $\widehat{f'''}$ plays the role of $g$ in
Lemma~\ref{2-M-even-X}, and $\widehat{f} \notin
\widehat{\mathscr{M}}^\dagger$ by
$\widehat{f}\notin\mathscr{P}\cup\widetilde{\mathscr{A}}\cup\widetilde{\mathscr{M}}$.

{\small $\bullet$}
If $\widehat{f'''}\notin\mathscr{P}\cup\widetilde{\mathscr{A}}\cup\widetilde{\mathscr{M}}$,
then $\PlCSP^2(\widehat{f'''})$ is $\#$P-hard by Theorem~\ref{arity-4-dichotomy} and we are done.

\item For $2n\equiv 2\pmod 4$, we cannot use the reduction in (\ref{general-4-2-1})
since Lemma~\ref{mixing-P-global-binary}
requires that all signatures on the right have arity $\equiv 0\pmod 4$.
We get around this difficulty by
constructing some arity 4 signatures in $\PlCSP^4(\widehat{f})$, and then
use Lemma~\ref{mixing-P-global-binary} for these arity 4 signatures.


Firstly,
we have $\widehat{g}=\partial^{\frac{n-3}2}_{=_4}(\hat{f})=
[x, 1, 0, 0, 0, i^r, y]$.
We also have $\partial_{=_4}( \widehat{g}) = [x, 1+i^r, y]$.
They are both on the right.
Then we have $\partial_{[x, 1+i^r, y]}(=_4)=[x, 0, y]$ on the left.
We also connect $[x, 0, y]$ and $[x, 1+i^r, y]$ and then $[x, 0, y]$
in a chain, to get another binary signature $h=[x^3, (1+i^r)xy, y^3]$ on the left.
This can be verified by
\[\left[\begin{matrix} x &  0 \\
 0 & y
 \end{matrix}\right]
\left[\begin{matrix} x &  1+i^r \\
 1+i^r & y
 \end{matrix}\right]
\left[\begin{matrix} x &  0 \\
 0 & y
 \end{matrix}\right]
=
\left[\begin{matrix} x^3 &  (1+i^r)xy \\
 (1+i^r)xy & y^3
 \end{matrix}\right].\]

From these we produce two arity 4 signatures on the right:
\begin{alignat*}{3}
 & \widehat{g'}  &&= \partial_{[x, 0, y]}(\widehat{g}) &&= [x^2, x, 0, i^ry, y^2]\\
 & \widehat{g''} &&= \partial_{h}(\widehat{g}) &&= [x^4+2(1+i^r)xy, x^3, 0, i^ry^3, y^4+2i^r(1+i^r)xy].
\end{alignat*}
 Thus
\[\PlCSP^4(\widehat{g'}, \widehat{g''}, [1, 0, 1]^{\otimes 2}, [1, 0, 1, 0, 1])\leq
 \PlCSP^4(\widehat{f}, [1, 0, 1]^{\otimes 2}, [1, 0, 1, 0, 1]).\]

Moreover,
 note that all signatures in $\{\widehat{g'}, \widehat{g''}, [1, 0, 1, 0, 1])\}$
 have arity~$4$.
 Then by Lemma~\ref{mixing-P-global-binary}, we have
\[\PlCSP^2(\widehat{g'}, \widehat{g''}, [1, 0, 1, 0, 1])\leq
\PlCSP^4(\widehat{g''}, \widehat{g''}, [1, 0, 1]^{\otimes 2}, [1, 0, 1, 0, 1]).\]
It is easy to see that $\widehat{g'}$ is non-degenerate and
does not satisfy the second order recurrence relations
$\langle 0, 1, 0\rangle$, $\langle 1, 0, \pm 1\rangle$, $\langle 1, 0, \pm i\rangle$,
because $(x,y) \neq (0,0)$.
Thus $\widehat{g'}\notin\mathscr{P}\cup\widetilde{\mathscr{A}}$ by Lemma~\ref{second-recurrence-relation}.
If $\widehat{g'}\notin\widetilde{\mathscr{M}}$,
then $\PlCSP^2(\widehat{g'})$ is $\#$P-hard by Theorem~\ref{arity-4-dichotomy} and we are done.

Otherwise, $\widehat{g'}\in\widehat{\mathscr{M}}\setminus(\mathscr{P}\cup\widetilde{\mathscr{A}})$
or $\widehat{g'}\in\widehat{\mathscr{M}}^\dagger\setminus(\mathscr{P}\cup\widetilde{\mathscr{A}})$.

Note that $[1, 0, 1, 0, 1]$
has type $\langle 1, 0, -1\rangle$
and the second order recurrence relation
is unique up to a scalar.
Thus $[1, 0, 1, 0, 1]\notin\widehat{\mathscr{M}}$
 by Lemma~\ref{second-recurrence-relation}.
If $\widehat{g'}\in\widehat{\mathscr{M}}\setminus(\mathscr{P}\cup\widetilde{\mathscr{A}})$,
then $\PlCSP^2(\widehat{g'}, \widehat{g''}, [1, 0, 1, 0, 1])$ is \#P-hard
 by Lemma~\ref{M-even} and we are done.

Therefore we may assume
$\widehat{g'}\in\widehat{\mathscr{M}}^\dagger\setminus(\mathscr{P}\cup\widetilde{\mathscr{A}}\cup\widehat{\mathscr{M}})$.

By Corollary~\ref{mixing-M-arity-4},
for
$\widehat{g'}\in\widehat{\mathscr{M}}^\dagger\setminus(\mathscr{P}\cup\widetilde{\mathscr{A}}\cup\widehat{\mathscr{M}})$,
it cannot be of the form $[u, v, w, -v, u]$ with $(u-w)w=2v^2$;
for if it were so, then by $w=0$ in this case, we would have $v=0$,
 and this would  imply that $x=i^ry=0$ in $\widehat{g'}$.
It contradicts that $(x, y)\neq (0, 0)$.
So $\widehat{g'}$ must be of the form $[u, v, 0, v, -u]$, i.e.,
$x^2=-y^2$, $x=i^ry$.
Thus we have $x= \epsilon iy$ and $i^r= \epsilon i$,
for some $\epsilon =\pm 1$, and $x \not =0$.
Hence both $x, y \neq 0$ and $1+i^r \neq 0$.
It follows that
$x^3  = - \epsilon i y^3 \not = \epsilon i y^3 = i^r y^3$.

Moreover,
 if
$\widehat{g''}\in\widehat{\mathscr{M}}^\dagger$, it cannot take the form
$[u, v, w, -v, u]$ with $(u-w)w=2v^2$ in Corollary~\ref{mixing-M-arity-4}
because if so then $w=0$ would force $v= 0$ and that would force
both $x =y=0$.
Then $\widehat{g''}$ must be of the form $[u, v, 0, v, -u]$.
But this would force $x^3 = i^r y^3$, a  contradiction.
Thus
 $\widehat{g''}\notin\widehat{\mathscr{M}}^\dagger$.

 If $\widehat{g''}\notin\mathscr{P}\cup\widetilde{\mathscr{A}}\cup\widetilde{\mathscr{M}}$, then
$\PlCSP^2( \widehat{g''})$ is \#P-hard by Theorem~\ref{arity-4-dichotomy} and we are done.
Otherwise,  $\widehat{g''}\in(\mathscr{P}\cup\widetilde{\mathscr{A}}\cup\widehat{\mathscr{M}})\setminus\widehat{\mathscr{M}}^\dagger$,
 $\PlCSP^2(\widehat{g'}, \widehat{g''}, [1, 0, 1, 0, 1])$ is \#P-hard by Lemma~\ref{2-M-even-X}
and we are done.
\end{itemize}

\end{itemize}

\item For $f'\in\widehat{\mathscr{M}}\setminus(\mathscr{P}\cup\widetilde{\mathscr{A}})$,
we are done by Lemma~\ref{M-even}.

 \item\label{last-item-5} For
$f'\in\widehat{\mathscr{M}}^\dagger\setminus(\mathscr{P}\cup\widetilde{\mathscr{A}})$, or equivalently,
 $f'\in\widehat{\mathscr{M}}^\dagger\setminus(\mathscr{P}\cup\widetilde{\mathscr{A}}\cup\widehat{\mathscr{M}})$,
 $f'=[s, ti]^{\otimes 2n-2}\pm[t, si]^{\otimes 2n-2}$, $st\neq 0,$ $s^4\neq t^4$,
or $f'$ has arity $2n-2$ and $f'_k=(\pm i)^k(2n-2-2k)$ by Lemma~\ref{M-2-M-NOT-IN-A-AND-P}.
Note that we are done if we have a nonzero binary signature that is not $\lambda[1, 0, 1]$ by Lemma~\ref{2-M-even}.
Moreover, if we have an arity 4 signature $h$ that is not in $\widehat{\mathscr{M}}^\dagger$ then we are done by the following
argument: if $h\in(\mathscr{P}\cup\widetilde{\mathscr{A}}\cup\widehat{\mathscr{M}})\setminus\widehat{\mathscr{M}}^{\dagger}$,
then $\PlCSP^2(h, f')$ is $\#$P-hard by Theorem~\ref{mixing-theorem} since
$f'\in\widehat{\mathscr{M}}^\dagger\setminus(\mathscr{P}\cup\widetilde{\mathscr{A}}\cup\widehat{\mathscr{M}})$;
if $h\notin\mathscr{P}\cup\widetilde{\mathscr{A}}\cup\widetilde{\mathscr{M}}$,
then $\PlCSP^2(h)$ is $\#$P-hard
by Theorem~\ref{arity-4-dichotomy}.

\begin{itemize}
\item For $f'=[s, ti]^{\otimes 2n-2}+[t, si]^{\otimes 2n-2}$ with $2n\equiv 0\pmod 4$ or
$f'=[s, ti]^{\otimes 2n-2}-[t, si]^{\otimes 2n-2}$ with $2n\equiv 2\pmod 4$,
we have $\partial^{n-1}(f)=(s^2+t^2)(s^2-t^2)^{n-1}[1, \frac{2sti}{s^2+t^2}, -1]\neq \lambda[1, 0, 1]$.

%
%
%
 %
\item For $f'=[s, ti]^{\otimes 2n-2}+[t, si]^{\otimes 2n-2}$ with $2n\equiv 2\pmod 4$,
$f=x[1, i]^{\otimes 2n}+y[1, -i]^{\otimes 2n}+\frac{1}{s^2-t^2}\{[s, ti]^{\otimes 2n}-[t, si]^{\otimes 2n}\}$
by Proposition~\ref{prop:explicit_list}
(the Explicit List for $\int(f')$).
If $x=y=0$, then $f\in\widehat{\mathscr{M}}^\dagger$.
Otherwise, we have
\begin{align*}
 f'''
 =
 \partial^{\frac{n-1}2}_{=_4}(f)
 &=
   2^{\frac{n-1}2} x [1,  i]^{\otimes 2}
 + 2^{\frac{n-1}2} y [1, -i]^{\otimes 2}
 + \frac{(s^4+t^4)^{\frac{n-1}2}}{s^2-t^2} \left\{[s, ti]^{\otimes 2}-[t, si]^{\otimes 2}\right\} \\
 &=
   2^{\frac{n-1}2} x [1,  i]^{\otimes 2}
 + 2^{\frac{n-1}2} y [1, -i]^{\otimes 2}
 + (s^4+t^4)^{\frac{n-1}2} [1, 0, 1]
\end{align*}
Let $a=2^{\frac{n-1}2}(x+y)$, $b=2^{\frac{n-1}2}(x-y)i$ and
 $c=(s^4+t^4)^{\frac{n-1}2}$,
then $f'''=[c+a, b, c-a]$. Note that $(a, b)\neq (0, 0)$.
If $b\neq 0$, it is obvious that $f'''\neq \lambda[1, 0, 1]$.
If $b=0$, then $a\neq 0$. Then  $f'''\neq \lambda[1, 0, 1]$ by $c+a\neq c-a$.

\item For the case that $f'$ has arity $2n-2$ and $f'_k=(\pm i)^k(2n-2-2k)$
with $2n
\equiv 2\pmod 4$,
we have $f''=\partial^{\frac{n-3}2}_{=_4}(f')$ which has arity 4 and $f''_k=2^{\frac{n-3}2}(\pm i)^k(4-2k)$.
Moreover, we have $\partial(f'')=2^{\frac{n+1}2}[1, \pm i, -1]\neq \lambda[1, 0, 1]$.
We remark that it is necessary to use $=_4$ that many times,
since $f$ with two loops by $=_2$ is already identically zero.

 \item For the case that $f'$ has arity $2n-2$ and $f'_k=(\pm i)^k(2n-2-2k)$ with $2n\equiv 0\pmod 4$, we may
consider only  the case where the sign $\pm$ is $+$.
Indeed under $Z= \left[\begin{smallmatrix} 1 & 1 \\
i & -i \end{smallmatrix}\right]$, for the $+$ sign
$f' = Z^{\otimes (2n-2)} [0, 1, 0, \ldots, 0]$
and for the $-$ sign $f' = Z^{\otimes (2n-2)} [0, \ldots, 0, 1, 0]$,
a reversal under the $Z$-transformation.
If we take a holographic transformation by $T = \left[\begin{smallmatrix} 1 & 0\\
0 & -1 \end{smallmatrix}\right]$,
we have
$TZ
=  \left[\begin{smallmatrix} 1 & 1 \\-i & i \end{smallmatrix}\right]
= Z \left[\begin{smallmatrix} 0 & 1\\1 & 0 \end{smallmatrix}\right]$,
and so $(TZ)^{\otimes (2n-2)} [0, \ldots, 0, 1, 0]
= Z^{\otimes (2n-2)} [0, 1, 0, \ldots, 0]$. Meanwhile,
$\mathcal{EQ}_2$ is invariant under $T$.

Thus we consider $f'$ of arity $2n-2$ where $f'_k= i^k(2n-2-2k)$ with $2n\equiv 0\pmod 4$.
Let $\hat{f'}= (Z^{-1})^{\otimes (2n-2)}f' = [0, 1, 0, \ldots, 0]$
 and let $\hat{f} = (Z^{-1})^{\otimes (2n-2)}f$.
 Then we have $(Z^{-1})^{\otimes (2n-2)}(\partial(f))=\partial_{[0, 1, 0]}(\hat{f})$
up to a scalar.
This implies $\partial_{[0, 1, 0]}(\hat{f})=[0, 1, 0, \ldots, 0]$.
Thus there exist constants $x$ and $y$ such that
 $\hat{f}=[x, 0, 1, 0, \ldots, 0, y]$.
 By the holographic transformation using $Z$, we have
\[
 \PlCSP^2(f)
 \equiv
 \PlHolant([0, 1, 0], [1, 0, 1, 0, 1], \ldots \mid \hat{f}).
\]


We remark that,
in $\PlHolant([0, 1, 0], [1, 0, 1, 0, 1], \ldots \mid \hat{f})$,
all signatures have even arities.
And all signatures of arity $2m\equiv 2\pmod 4$
satisfy odd parity and all signatures of arity $2m\equiv 0\mod 4$
satisfy even parity.
Then by the statement of Remark~\ref{rmk:E2:1},
any binary signature constructed in $\PlHolant([0, 1, 0], [1, 0, 1, 0, 1], \ldots \mid \hat{f})$ can only
 be of the form $\lambda[0, 1, 0]$.
This implies that the binary signature constructed in $\PlCSP^2(f)$
can only be of the form $\lambda[1, 0, 1]$ before the $Z$-transformation.
This forces us to construct signatures of arity at least 4 to prove hardness.
%
%

In $\PlHolant([0, 1, 0], [1, 0, 1, 0, 1], \ldots|\hat{f})$,
note that by $2n\equiv 0\mod 4$ we have $2n \ge 8$,
and $\hat{g}=\partial^{\frac{n-2}2}_{[1, 0, 1, 0, 1]}(\hat{f})=[x+\frac{n-2}2\cdot 6, 0, 1, 0, y]$. It has arity 4.
%
If $(x+\frac{n-2}2\cdot 6)y\neq 1$, then $\hat{g}\notin\mathscr{M}$,
because symmetric matchgate signatures must form
 geometric series in alternate terms.
Thus we have $Z^{\otimes 4}(\hat{g})\notin\widehat{\mathscr{M}}^\dagger$
in $\PlCSP^2(f, f')$ and we are done.

 If $(x+\frac{n-2}2\cdot 6)y=1$,
 then $y\neq 0$. Firstly, we have
an arity 8 signature
 \[\widehat{g'}=\partial^{\frac{n-4}2}_{[1, 0, 1, 0, 1]}(\hat{f})
 =[x+\frac{n-4}2\cdot 6, 0, 1, 0, 0, 0, 0, 0,  y]\]
(note that $n \ge 4$ when $2n \equiv 0\mod 4$),
 and we have $\partial^2_{[0, 1, 0]}(\widehat{g'})=[1, 0]^{\otimes 4}$ on the right.
 So we have $[0, 1]^{\otimes 4}$ on the left.
 Moreover, we have $\partial_{[0, 1]^{\otimes 4}}(\widehat{g'})=y[0, 1]^{\otimes 4}$ on the right.
 So we  have $[1, 0]^{\otimes 4}$ on the left.
Then we have $\widehat{g''}=\partial^{\frac{n-2}2}_{[1, 0]^{\otimes 4}}(\hat{f})=[x, 0, 1, 0, 0]$ on the right.
Note that $\widehat{g''}\notin\mathscr{M}$. Thus we have $Z^{\otimes 4}(\widehat{g''})\notin\widehat{\mathscr{M}}^\dagger$
in $\PlCSP^2(f, f')$
 and we are done.

\item
For the last case of Case 5,
 $f'=[s, ti]^{\otimes 2n-2}-[t, si]^{\otimes 2n-2}$ with $2n\equiv 0\pmod 4$,
we let $u=\frac{s-t}{s+t}$, then $u^4\neq 0, 1$ by Lemma~\ref{pre-4-power}.
Let $Z= \left[\begin{smallmatrix} 1 & 1 \\
i & -i \end{smallmatrix}\right]$, then
\begin{align*}
 \widehat{f'}
 &= (Z^{-1})^{\otimes 2n-2}(f') \\
 &= \frac{1}{2^{2n-2}}\left\{[s+t, s-t]^{\otimes 2n-2}-[s+t, t-s]^{\otimes 2n-2}\right\} \\
 &= \frac{(s+t)^{2n-2}}{2^{2n-2}}\left\{[1, u]^{\otimes 2n-2}-[1, -u]^{\otimes 2n-2}\right\} \\
 &= \lambda[0, u^2, 0, u^4, \ldots, u^{2n-2}, 0],
\end{align*}
where $\lambda=\frac{(s+t)^{2n-2}}{2^{2n-3}u}\neq 0$.
Let $(Z^{-1})^{\otimes 2n}f=\widehat{f}$, then $(Z^{-1})^{\otimes 2n-2}(\partial(f))=\partial_{[0, 1, 0]}(\widehat{f})$ up to a scalar.
This implies that $\partial_{[0, 1, 0]}(\widehat{f})=\lambda[0, u^2, 0, u^4, \ldots, u^{2n-2}, 0]$.
Thus there exist constants $x$ and $y$ such that
 $\widehat{f}=(Z^{-1})^{\otimes 2n}f=\lambda[1+x, 0, u^2, 0, u^4, \ldots, u^{2n-2}, 0, u^{2n}+y]$,
where we append the terms $1$ and $u^{2n}$ for future convenience.
(This can be accommodated by naming  different $x$ and $y$.)
If $x=y=0$, then $\widehat{f}\in\mathscr{M}$ and $f\in\widehat{\mathscr{M}}^\dagger$.
In the following, assume that $(x, y)\neq (0, 0)$.
By the holographic transformation using $Z$, we have
\[
 \PlCSP^2(f)
 \equiv
 \PlHolant([0, 1, 0], [1, 0, 1, 0, 1], \ldots \mid \widehat{f}).
\]

By the same argument as the previous case, it is impossible to construct a ``good'' binary signature
in this case.
So we have to construct signatures of arity at least 4 to prove hardness.

We will repeatedly use the following computation in the remainder of this proof:
Let $\bar{g}=\partial_{[1, 0, v, 0, v^2]}(g)$ for some $v$, then
arity($\bar{g}$) $=$ arity$(g)-4$ and $\bar{g}_k=g_{k}+6vg_{k+2}+v^2g_{k+4}$.

We will
complete the proof by
constructing some arity 4 signatures $\widehat{h}$
in the setting after the $Z$-transformation
$\PlHolant([0, 1, 0], [1, 0, 1, 0, 1], \ldots \mid \widehat{f})$ that
cannot all belong to $\mathscr{M}$.
We note that if $\widehat{h} \notin \mathscr{M}$
then $h = Z^{\otimes 4} \widehat{h} \notin \widehat{\mathscr{M}}^\dagger$.
This will imply $\PlCSP^2(h, f')$ is $\#$P-hard
as noted earlier, thus complete the proof of this Case~\ref{last-item-5}.

In $\PlHolant([0, 1, 0], [1, 0, 1, 0, 1], \ldots \mid \widehat{f})$,
we have
 $\partial^{n-2}_{[0, 1, 0]}(\widehat{f})$
which is a nonzero multiple of $[1, 0, u^2, 0, u^4]$.
 Then we have $[u^4, 0, u^2, 0, 1] = u^4[1, 0, u^{-2}, 0, u^{-4}]$ on the left.
Ignoring $\lambda \neq 0$, we write
\[\widehat{f} = [1, 0, u^2, 0, u^4, 0, \ldots, 0, u^{2n}]
+ [x, 0, 0, 0, 0, \ldots, 0, y]\]
which has arity $2n \ge 8$,
and we have
\begin{align*}
 \widehat{f^{(4)}}
 &= \partial^{\frac{n-4}2}_{[1, 0, u^{-2}, 0, u^{-4}]}(\widehat{f}) \\
 &= 8^{\frac{n-4}2} [ 1, 0, u^2, 0, u^4, 0, u^6, 0, u^8] +
                    [x, 0, 0, 0, 0, 0, 0, 0, yu^{-2(n-4)}] \\
 &= [x+8^{\frac{n-4}2}, 0, 8^{\frac{n-4}2}u^2, 0, 8^{\frac{n-4}2}u^4, 0,  8^{\frac{n-4}2}u^6, 0, 8^{\frac{n-4}2}u^8+ yu^{-2(n-4)}].
\end{align*}
Let $x'=\frac{x}{8^{\frac{n-4}2}}$, $y'=\frac{yu^{-2(n-4)}}{8^{\frac{n-4}2}}$,
then $\widehat{f^{(4)}}
 =[x'+1, 0, u^2, 0, u^4, 0,  u^6, 0, u^8+ y']$ up to the scalar $8^{\frac{n-4}2}$.
 Further, we have $\widehat{f^{(5)}}=\partial_{[1, 0, u^{-2}, 0, u^{-4}]}(\widehat{f^{(4)}})=[x'+8, 0, 8u^2, 0, 8u^4+y'u^{-4}]$.
 If $x'=0$ or $y'=0$, then
 $\widehat{f^{(5)}}\notin\mathscr{M}$ by $(x'+8)(8u^4+y'u^{-4})\neq (8u^2)^2$ and we are done.
 So we can assume that $x'y'\neq 0$ in the following.

 In the following,
 if we have the signature $[1, 0, v, 0, v^2]$ with $v \neq 0$ on the left,
 then we have $\partial_{[1, 0, v, 0, v^2]}(\widehat{f^{(4)}}) = [x'+c, 0, cu^2, 0, y'v^2+cu^4]$,
 where $c=1+6u^2v+u^4v^2$.
If $c=0$, then we have $[x', 0, 0, 0, y'v^2]\notin\mathscr{M}$ and we are done.
So in the following, we always suppose that $c=1+6u^2v+u^4v^2\neq 0$.
Moreover, if $(x'+c)(y'v^2+cu^4)\neq(cu^2)^2$, then $[x'+c, 0, cu^2, 0, y'v^2+cu^4]\notin\mathscr{M}$ and we are done.
So we assume that $(x'+c)(y'v^2+cu^4)=(cu^2)^2$. This implies that $x'+c\neq 0$ and $x'y'v^2+(x'u^4+y'v^2)c=0$.
To summerize, in the following if we have $[1, 0, v, 0, v^2]$ with $v\neq 0$ on the left,
then we have
\begin{equation}\label{general-condition}
\begin{split}
&c=1+6u^2v+u^4v^2\neq 0,\\
&x'+c\neq 0,\\
&x'y'v^2+(x'u^4+y'v^2)c=0.
\end{split}
\end{equation}

Firstly, by $\widehat{f^{(5)}}=\partial_{[1, 0, u^{-2}, 0, u^{-4}]}(\widehat{f^{(4)}})=[x'+8, 0, 8u^2, 0, 8u^4+y'u^{-4}]$
and (\ref{general-condition}),
 we have
 \begin{equation}\label{2-M-gen-1-equ}
 \begin{split}
 &x'+8\neq 0,\\
 &x'y'u^{-4}+8(x'u^4+y'u^{-4})=0.
 \end{split}
 \end{equation}

Note that we have $[1, 0, 1, 0, 1]$ on the left, so we have
$\widehat{f^{(6)}}=\partial_{[1, 0, 1, 0, 1]}(\widehat{f^{(4)}})=[x'+c_1, 0, c_1u^2, 0, y'+c_1u^4]$,
where $c_1=1+6u^2+u^4$.
Then by (\ref{general-condition}),
we have $c_1 \not = 0$ and
\begin{equation}\label{2-M-gen-2-equ}
\begin{split}
& x'+c_1\neq 0,\\
&x'y'+(x'u^4+y')c_1=0.
\end{split}
\end{equation}

By (\ref{2-M-gen-1-equ}), (\ref{2-M-gen-2-equ}), and $x'y'\neq 0$, we have
\begin{align*}
 1+(\frac{u^8}{y'}+\frac{1}{x'})8   &= 0,\\
 1+(\frac{u^4}{y'}+\frac{1}{x'})c_1 &= 0.
\end{align*}
Then we have
\begin{alignat*}{3}
 & \frac{1}{x'} &&= \frac{c_1-8u^4}{8c_1(u^4-1)} &&= -\frac{1+7u^2}{8(1+u^2)(1+6u^2+u^4)},\\
 & \frac{1}{y'} &&= \frac{8-c_1}{8c_1(u^8-u^4)}  &&= -\frac{7+u^2}{8u^4(1+u^2)(1+6u^2+u^4)}.
\end{alignat*}
Since $x'\neq 0$,
we have $1+7u^2\neq 0$.

For $\widehat{f^{(5)}}$, $\widehat{f^{(6)}}$, let $v_2=\frac{x'+8}{8u^2}$ and $v_3=\frac{x'+c_1}{c_1u^2}$,
then $v_2\neq 0$, $v_3\neq 0$ by $x'+8\neq 0$ and $x'+c_1\neq 0$,
  and $\widehat{f^{(5)}}=[1, 0, v^{-1}_2 ,0, v^{-2}_2]$, $\widehat{f^{(6)}}=[1, 0, v^{-1}_3 ,0, v^{-2}_3]$
up to the scalars $x'+8$, $x'+c_1$ respectively.
So we have $[1, 0, v_2 ,0, v^2_2]$, $[1, 0, v_3 ,0, v^2_3]$ on the left.
Moreover, let $c_2=1+6u^2v_2+u^4v^2_2$, $c_3=1+6u^2v_3+u^4v^2_3$,
then we have by (\ref{general-condition})
\begin{equation}\label{2-M-gen-3-equ}
\begin{split}
&x'y'v_2^2+(x'u^4+y'v_2^2)c_2=0,\\
&x'y'v_3^2+(x'u^4+y'v_3^2)c_3=0.
\end{split}
\end{equation}
In (\ref{2-M-gen-3-equ}), we have 
\begin{align*}
 c_1 &= 1+6u^2+u^4,\\
 \frac{1}{x'} &= \frac{c_1-8u^4}{8c_1(u^4-1)}=-\frac{7u^2+1}{8(u^2+1)(u^4+6u^2+1)},\\
 \frac{1}{y'} &= \frac{8-c_1}{8c_1(u^8-u^4)}=-\frac{u^2+7}{8u^4(u^2+1)(u^4+6u^2+1)},\\
 v_2 &= \frac{x'+8}{8u^2}=\frac{-7u^2-u^4}{7u^2+1},\\
 c_2 &= 1+6u^2v_2+u^4v_2^2=\frac{u^{12}+14u^{10}+7u^8-300u^6+7u^4+14u^2+1}{(7u^2+1)^2},\\
 v_3 &= \frac{x'+c_1}{c_1u^2}=-\frac{7+u^2}{u^2(1+7u^2)},\\
 c_3 &= 1+6u^2v_3+u^4v_3^2=\frac{8u^4-272u^2+8}{(1+7u^2)^2}.
\end{align*}
Note that all of them are functions of $u$.
Thus
(\ref{2-M-gen-3-equ}) gives two equations of $u$ as following:
\begin{equation}\label{2-M-gen-4-equ}
\begin{aligned}
\frac{8u^4c_1^2(1+u^2)^2 \cdot p_1(u)}{(1+7u^2)^4}  &= 0,\\
\frac{3072u^2(1+u^2)^2c_1 \cdot p_2(u)}{(1+7u^2)^4} &= 0,
\end{aligned}
\end{equation}
where $p_1(u)=u^{12}+14u^{10}-49u^8-700u^6-49u^4+14u^2+1$, $p_2(u)=7u^4+2u^2+7$.
Note that $q_1(u)p_1(u)+q_2(u)p_2(u)=244224$, where
$q_1(u)=-188-315u^2, $ $q_2(u)=34916-9555u^2-32872u^4-2058u^6+644u^8+45u^{10}$, thus $\gcd(p_1(u), p_2(u))=1$.
Then by
$u^4\neq 0, 1$, $c_1\neq 0$,
the two equations in (\ref{2-M-gen-3-equ}) have no common solution in $u$.
This is a contradiction and we finish the proof.
\qedhere
 \end{itemize}
\end{enumerate}
\end{proof}

We hereby finish the proof of Theorem~\ref{general-single-dichotomy},
and hence we complete the proof of the
main theorem of Part II---Theorem~\ref{dichotomy-pl-csp2}
is a straightforward combination of Theorem~\ref{odd-arity-dichotomy},
Theorem~\ref{general-single-dichotomy}
and Theorem~\ref{mixing-theorem}.

\clearpage

\bibliographystyle{plain}
\bibliography{bib}

\newpage

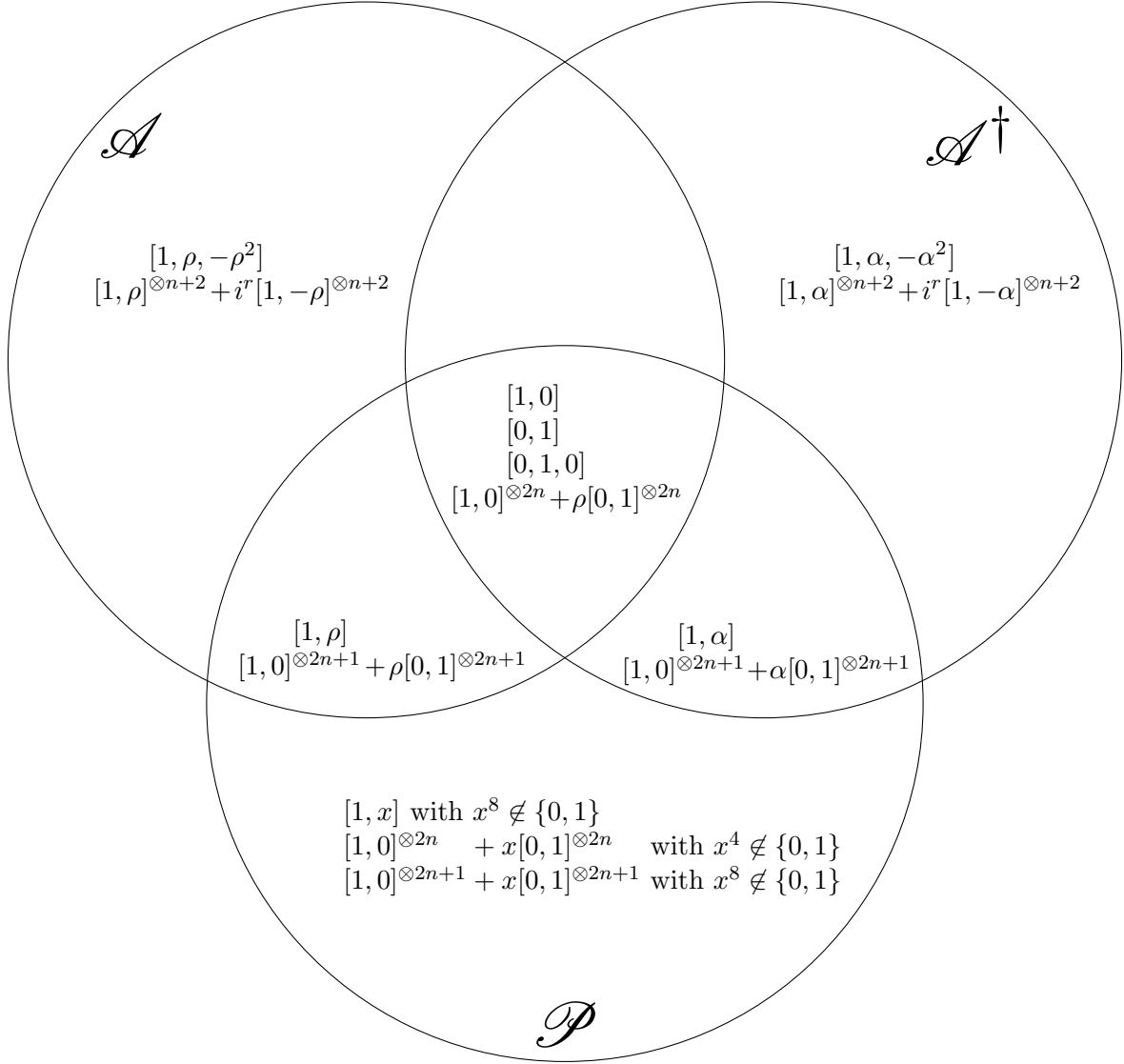
\begin{figure}[p]
 \centering
 \def\radius{5cm}
 \def\nodeDistVD{3.2cm}
 \begin{tikzpicture}
  \draw (150:\nodeDistVD) circle (\radius) node {};
  \draw  (30:\nodeDistVD) circle (\radius) node {};
  \draw (-90:\nodeDistVD) circle (\radius) node {};
  
  \draw node at (142:2.4 * \nodeDistVD) {\Huge $\mathscr{A}$};
  \draw node at  (40:2.3 * \nodeDistVD) {\Huge $\mathscr{A}^{\dagger}$};
  \draw node at (-90:2.4 * \nodeDistVD) {\Huge $\mathscr{P}$};
  
  \draw node[text width=4.2cm] at (135:0.2 * \nodeDistVD)
   {\begin{itemize}[label=]
     \item \qquad $[1,0]$
     \item \qquad $[0,1]$
     \item \qquad $[0,1,0]$
     \item $[1,0]^{\otimes 2n} + \rho [0,1]^{\otimes 2n}$
    \end{itemize}};
  \draw node[text width=5cm] at (314:1.05 * \nodeDistVD)
   {\begin{itemize}[label=]
     \item \qquad $[1,\alpha]$
     \item $[1,0]^{\otimes 2n+1} + \alpha [0,1]^{\otimes 2n+1}$
    \end{itemize}};
  \draw node[text width=5cm] at (218:1.2 * \nodeDistVD)
   {\begin{itemize}[label=]
     \item \qquad $[1,\rho]$
     \item $[1,0]^{\otimes 2n+1} + \rho [0,1]^{\otimes 2n+1}$
    \end{itemize}};
  \draw node[text width=5.1cm] at (150:1.8 * \nodeDistVD)
   {\begin{itemize}[label=]
     \item \qquad $[1, \rho, -\rho^2]$
     \item $[1,\rho]^{\otimes n+2} + i^r [1,-\rho]^{\otimes n+2}$
    \end{itemize}};
  \draw node[text width=5.2cm] at (32:1.7 * \nodeDistVD)
   {\begin{itemize}[label=]
     \item \qquad $[1, \alpha, -\alpha^2]$
     \item $[1,\alpha]^{\otimes n+2} + i^r [1,-\alpha]^{\otimes n+2}$
    \end{itemize}};
  \draw node[text width=8.1cm] at (-90:1.6 * \nodeDistVD)
   {\begin{itemize}[label=]
     \item $[1,x]$ with $x^8 \not\in \{0,1\}$
     \item $[1,0]^{\otimes 2n\phantom{+1}} + x [0,1]^{\otimes 2n\phantom{+1}}$ with $x^4 \not\in \{0,1\}$
     \item $[1,0]^{\otimes 2n+1}           + x [0,1]^{\otimes 2n+1}$           with $x^8 \not\in \{0,1\}$
    \end{itemize}};
 \end{tikzpicture}
 \caption{A Venn diagram of the \#CSP$^2$ tractable sets $\mathscr{A}$, $\mathscr{A}^{\dagger}$, and $\mathscr{P}$.
 Note that $\rho^4 = 1$, $\alpha^4 = -1$, and $n \ge 1$.
 Excluded are tensor products of unary signatures.
 }
 \label{fig:venn_diagram:A_Adagger_P}
\end{figure}

\newpage

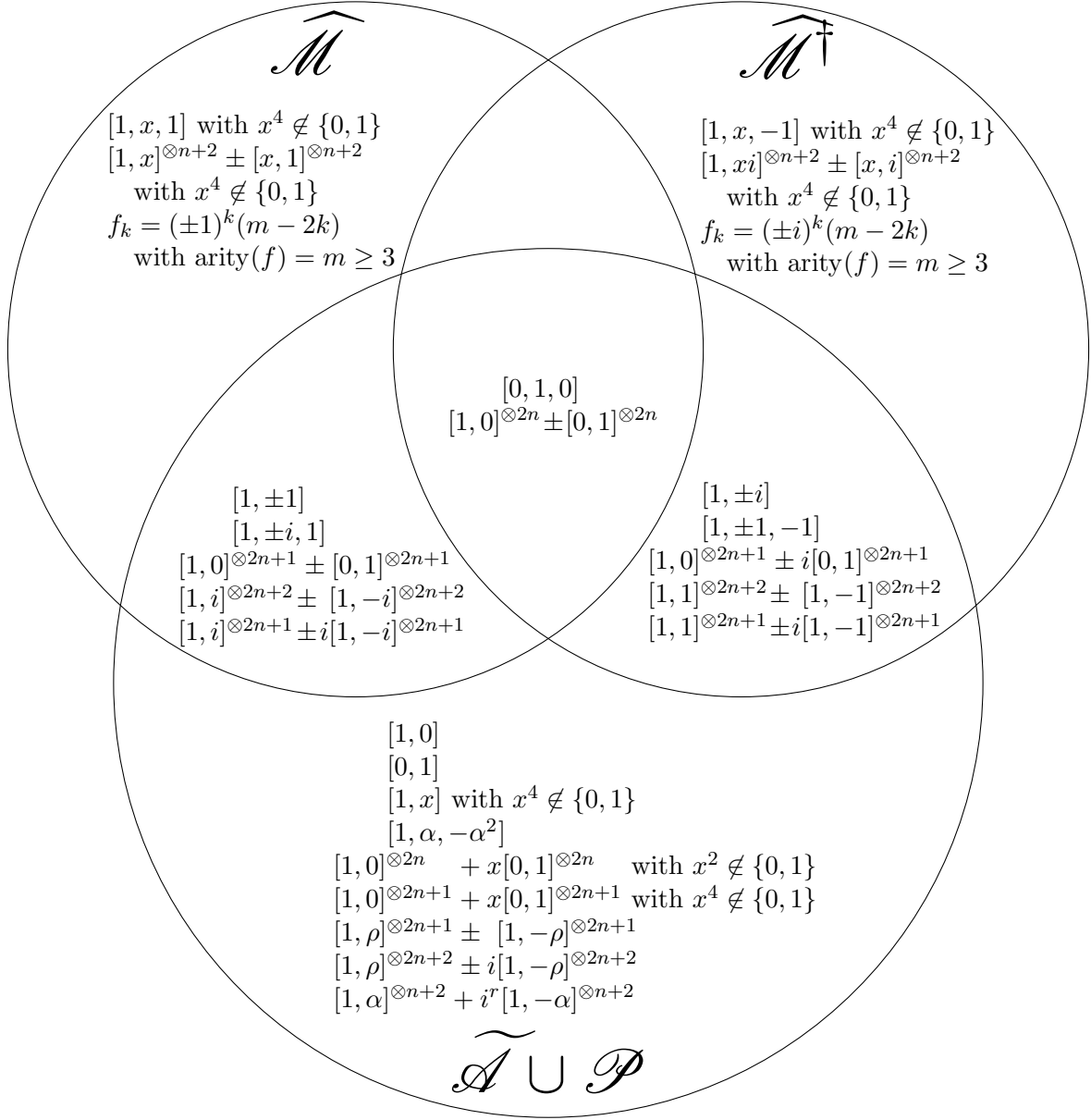
\begin{figure}[p]
 \centering
 \def\radius{5cm}
 \def\nodeDistVD{3.2cm}
 \begin{tikzpicture}
  \draw (150:\nodeDistVD) circle        (\radius) node {};
  \draw  (30:\nodeDistVD) circle        (\radius) node {};
  \draw (-90:\nodeDistVD) circle (1.25 * \radius) node {};
  
  \draw node at (120:2.17 * \nodeDistVD) {\Huge $\widehat{\mathscr{M}}$};
  \draw node at  (60:2.15 * \nodeDistVD) {\Huge $\widehat{\mathscr{M}}^{\dagger}$};
  \draw node at (-90:2.7  * \nodeDistVD) {\Huge $\widetilde{\mathscr{A}} \cup \mathscr{P}$};
  
  \draw node[text width=4.0cm] at (115:0.3 * \nodeDistVD)
   {\begin{itemize}[label=]
     \item \qquad $[0,1,0]$
     \item $[1,0]^{\otimes 2n} \pm [0,1]^{\otimes 2n}$
    \end{itemize}};
  \draw node[text width=5.2cm] at (336:1.05 * \nodeDistVD)
   {\begin{itemize}[label=]
     \item \qquad $[1, \pm i]$
     \item \qquad $[1,\pm 1,-1]$
     \item $[1,0]^{\otimes 2n+1} \pm          i  [0, 1]^{\otimes 2n+1}$
     \item $[1,1]^{\otimes 2n+2} \pm \phantom{i} [1,-1]^{\otimes 2n+2}$
     \item $[1,1]^{\otimes 2n+1} \pm          i  [1,-1]^{\otimes 2n+1}$
    \end{itemize}};
  \draw node[text width=5.1cm] at (201:1.25 * \nodeDistVD)
   {\begin{itemize}[label=]
     \item \qquad $[1, \pm 1]$
     \item \qquad $[1,\pm i,1]$
     \item $[1,0]^{\otimes 2n+1} \pm             [0, 1]^{\otimes 2n+1}$
     \item $[1,i]^{\otimes 2n+2} \pm \phantom{i} [1,-i]^{\otimes 2n+2}$
     \item $[1,i]^{\otimes 2n+1} \pm          i  [1,-i]^{\otimes 2n+1}$
    \end{itemize}};
  \draw node[text width=5.2cm] at (140:1.92 * \nodeDistVD)
   {\begin{itemize}[label=]
     \item $[1,x,1]$ with $x^4 \not\in \{0,1\}$
     \item $[1,x]^{\otimes n+2} \pm [x,1]^{\otimes n+2}$\\\quad with $x^4 \not\in \{0,1\}$
     \item $f_k = (\pm 1)^k (m - 2 k)$\\\quad with $\operatorname{arity}(f) = m \ge 3$
    \end{itemize}};
  \draw node[text width=5.3cm] at (45:1.71 * \nodeDistVD)
   {\begin{itemize}[label=]
     \item $[1,x,-1]$ with $x^4 \not\in \{0,1\}$
     \item $[1,x i]^{\otimes n+2} \pm [x,i]^{\otimes n+2}$\\\quad with $x^4 \not\in \{0,1\}$
     \item $f_k = (\pm i)^k (m - 2 k)$\\\quad with $\operatorname{arity}(f) = m \ge 3$
    \end{itemize}};
  \draw node[text width=8.1cm] at (-90:1.8 * \nodeDistVD)
   {\begin{itemize}[label=]
     \item \qquad $[1,0]$
     \item \qquad $[0,1]$
     \item \qquad $[1,x]$ with $x^4 \not\in \{0,1\}$
     \item \qquad $[1, \alpha, -\alpha^2]$
     \item $[1,0]^{\otimes 2n\phantom{+1}} + x [0,1]^{\otimes 2n\phantom{+1}}$ with $x^2 \not\in \{0,1\}$
     \item $[1,0]^{\otimes 2n+1}           + x [0,1]^{\otimes 2n+1}$           with $x^4 \not\in \{0,1\}$
     \item $[1,\rho]^{\otimes 2n+1} \pm \phantom{i} [1,-\rho]^{\otimes 2n+1}$
     \item $[1,\rho]^{\otimes 2n+2} \pm          i  [1,-\rho]^{\otimes 2n+2}$
     \item $[1,\alpha]^{\otimes n+2} + i^r [1,-\alpha]^{\otimes n+2}$
    \end{itemize}};
 \end{tikzpicture}
 \caption{A Venn diagram of the Pl-\#CSP$^2$ tractable sets $\widehat{\mathscr{M}}$ and $\widehat{\mathscr{M}}^{\dagger}$
 along with the set $\widetilde{\mathscr{A}} \cup \mathscr{P}$ of all tractable \#CSP$^2$ signatures.
 Note that $\rho^4 = 1$, $\alpha^4 = -1$, and $n \ge 1$.
 Excluded are tensor products of unary signatures.
 }
 \label{fig:venn_diagram:M_Mdagger_Atilde_P}
\end{figure}

\end{document}